\patchcmd{\@maketitle}{\newpage}{}{}{} 
\numberwithin{equation}{section}
\theoremstyle{definition}
\newtheorem{definition}{Definition}[section]
\newtheorem{remark}[definition]{Remark}
\theoremstyle{plain}
\newtheorem{theorem}[definition]{Theorem}
\newtheorem{lemma}[definition]{Lemma}
\newtheorem{corollary}[definition]{Corollary}
\newtheorem{prop}[definition]{Proposition}
\newtheorem{assumption}[definition]{Assumption}
\newtheorem{conjecture}[definition]{Conjecture}
\newcommand{\E}{\mathcal{E}}
\newcommand{\g}{\overline{g}}
\newcommand{\I}{\mathbb{I}}
\renewcommand{\L}{\mathcal{L}}
\newcommand{\M}{\overline{M}}
\newcommand{\N}{\mathbb{N}}
\newcommand{\nabbar}{\overline{\nabla}}
\renewcommand{\O}[1]{\mathcal{O}\left(#1\right)}
\newcommand{\R}{\mathbb{R}}
\newcommand{\Ric}{\text{\normalfont{Ric}}}
\newcommand{\Riem}{\text{\normalfont{Riem}}}
\renewcommand{\S}{\mathbb{S}}
\newcommand{\vol}[1]{{\text{\normalfont{vol}}}_{#1}}
\newcommand{\Z}{\mathbb{Z}}
\renewcommand{\epsilon}{\varepsilon}
\newcommand{\phibar}{{\phi}_{FLRW}}
\newcommand{\del}{\partial}
\newcommand{\Lap}{\Delta}
\renewcommand{\div}{\text{\normalfont{div}}}
\newcommand{\curl}{\text{\normalfont{curl}}}
\newcommand{\LG}{\L_{\fg,\gamma}}
\newcommand{\Gamhat}{\hat{\Gamma}}
\newcommand{\nabhat}{\hat{\nabla}}
\newcommand{\Ltilde}{\tilde{\L}}
\newcommand{\curve}{\bm{\alpha}}
\newcommand{\numberthis}{\addtocounter{equation}{1}\tag{\theequation}}
\renewcommand{\theequation}{\arabic{section}.\arabic{equation}}
\newcommand{\RE}{\bm{E}}
\newcommand{\RB}{\bm{B}}
\newcommand{\epsilonLC}{\bm{\epsilon}}
\newcommand{\epsilonnew}{\delta}
\newcommand{\fg}{\bm{g}}
\newcommand{\fk}{\bm{\Sigma}}
\newcommand{\fn}{\bm{n}}
\newcommand{\fN}{\hat{\fn}}
\newcommand{\fX}{\bm{X}}
\newcommand{\fLap}{\Lap_{\fg}}
\newcommand{\fdel}{\widetilde{\del_0}}
\newcommand{\Lie}{\mathcal{L}}
\newcommand{\phim}{\overline{\phi}}
\newcommand{\fE}{\mathbb{E}_{SF}}
\newcommand{\fC}{\mathcal{C}_{SF}}
\newcommand{\fEg}{E_{\text{geom}}}
\newcommand{\change}[1]{#1}
\newcommand{\changefinal}[1]{#1}
\newcommand{\changediss}[1]{#1}
\newcommand{\delete}[1]{}
\newcommand{\deletemath}[1]{}
\title{Cosmic Censorship near FLRW spacetimes with negative spatial curvature}
\author[D.~Fajman, L.~Urban]{David Fajman, Liam Urban}
\address{
\begin{tabular}[h]{l@{\extracolsep{8em}}l} 
David Fajman  & Liam Urban \\
Faculty of Physics & Faculty of Mathematics\\ 
University of Vienna & University of Vienna \\
Boltzmanngasse 5 & Oskar-Morgenstern-Platz 1 \\
1090 Vienna, Austria & 1090 Vienna, Austria\\
david.fajman@ univie.ac.at & liam.urban@ univie.ac.at 
\end{tabular}
}
\keywords{Einstein scalar-field system, stability, blow-up profile, cosmic censorship, Big Bang singularity}
\begin{document}
\maketitle

\begin{abstract}
We consider \change{general initial data }for the Einstein scalar-field system on a closed \delete{orientable }$3$-manifold \change{$(M,\gamma)$ }which is close to data for a Friedman-Lema{\^\i}tre-Robertson-Walker solution with \change{homogeneous }scalar field matter and a negative Einstein metric $\gamma$ as spatial geometry. We prove that \change{the maximal globally hyperbolic development of such initial data in the Einstein scalar-field system is past incomplete in the contracting direction and exhibits stable collapse into a Big Bang curvature singularity}. Under an additional condition on the first positive eigenvalue of \change{$-\Delta_\gamma$ }satisfied, for example, \change{by closed hyperbolic 3-manifolds of small diameter}, we prove that the data evolves to a future complete spacetime in the expanding direction which asymptotes to a vacuum Friedman solution with \change{$(M,\gamma)$ }as the expansion normalized spatial geometry. In particular, the Strong Cosmic Censorship conjecture holds for this class of solutions in the $C^{2}$-sense.
\end{abstract}

\setcounter{tocdepth}{2}

\section{Introduction}\label{sec:intro}

\subsection{Setting and main results}

We consider the Einstein scalar-field system 
\begin{subequations}
\begin{align}
\label{eq:ESF1}\Ric[\g]_{\mu\nu}-\frac12R[\g]\g_{\mu\nu}=&\,8\pi T_{\mu\nu}[\g,\phi]\\
\label{eq:EMT}T_{\mu\nu}=&\,\change{\nabbar_{\mu}\phi}\nabbar_{\nu}\phi-\frac12\g_{\mu\nu}\nabbar^\alpha\phi\nabbar_\alpha\phi\\
\label{eq:ESF2}\square_{\g}\phi=&\,0\,
\end{align}
\end{subequations}
with initial data $(g_0,k_0,\pi_0,\psi_0)$ on a closed\delete{ orientable} 3-manifold $M$ that admits a negative Einstein metric $\gamma$.\footnote{Here and throughout, $\pi_0$ and $\psi_0$ prescribe data for $\nabla\phi\vert_{\Sigma_{t_0}}$  and $\del_0\phi\vert_{\Sigma_{t_0}}$ respectively.} In this paper, we determine the maximal globally hyperbolic development emanating from such initial data given that it is sufficiently close to the initial data of a homogeneous solution with a non-trivial scalar field. \\
In the collapsing direction, we prove a stable Big Bang formation and curvature blow-up result, which requires the presence of a non-trivial scalar field. \change{The results complement those in \cite{Rodnianski2014,Speck2018}, which cover flat and spherical spatial geometry}. In the expanding direction, we prove a nonlinear future stability result of the corresponding vacuum background solution, which is the Milne model, under a mild condition for the first positive eigenvalue of \changefinal{$-\Lap_{\gamma}$ }(see Definition \ref{def:spatial-mf-spectral}). As discussed in more detail \change{in Remark \ref{rem:weeks-and-friends}}, numerical studies (see \cite{Cornish99, Ino01}) show that this condition holds for an analogue of Weeks space, and suggest that this may hold for all \change{closed }hyperbolic $3$-manifolds with sectional curvature $-\frac19$ .\\

Connecting the two regions, we prove the global stability \change{(i.e.,~ past and future stability) }of the spacetime
\begin{subequations}
\begin{equation}\label{eq:intro-ref1}
\left([0,\infty)\times M,-dt^2+a(t)^2\gamma\right),
\end{equation}
given \change{a negative Einstein manifold $(M,\gamma)$ obeying }the aforementioned spectral condition, with
\begin{equation}\label{eq:intro-ref2}
a(0)=0,\ \dot{a}=\sqrt{\frac19+\frac{4\pi}3C^2a^{-4}}
\end{equation}
for some given constant $C>0$, and the scalar field given by 
\begin{equation}\label{eq:intro-ref3}
\del_t\phi=Ca^{-3},\ \nabla\phi=0\,.
\end{equation}
The scale factor consequently exhibits the following asymptotic behaviour:
\begin{equation}\label{eq:intro-ref4}
a(t)\simeq t^{\frac13} \mbox{ as }t\searrow0 \mbox{ and } a(t)\simeq t \mbox{ as }t\nearrow \infty  
\end{equation}
\end{subequations}

\noindent The main result can be split into two parts:

\begin{theorem}[Big Bang stability -- rough version]\label{thm:main-past}
Let $(M,g_0,k_0,\pi_0,\psi_0)$ be initial data for the Einstein scalar-field system that is sufficiently close to $(M,a(t_0)^2\gamma,-\dot{a}(t_0)a(t_0)\gamma,0,Ca(t_0)^{-3})$, where $C>0$ and $(M,\gamma)$ is a closed \delete{orientable }Riemannian 3-manifold with $\Ric[\gamma]=-\frac29\gamma$ \change{(i.e.,~ a closed negative Einstein manifold with scalar curvature $-\frac23$)}.\\

Then, the past maximal globally hyperbolic development $((0,t_0]\times M,\g,\phi)$ of the initial data within the Einstein scalar-field system \eqref{eq:ESF1}-\eqref{eq:ESF2} admits a foliation by CMC hypersurfaces $\Sigma_s=t^{-1}(\{s\})$ \change{with zero shift}. This development remains close to the FLRW solution described in \eqref{eq:intro-ref1}-\eqref{eq:intro-ref3} in the past of the initial data slice $\Sigma_{t_0}$. In particular, the solution exhibits curvature blow-up of order $t^{-4}$ and \change{every causal geodesic becomes incomplete }as $t$ approaches $0$.
\end{theorem}
\begin{theorem}[Global stability]\label{thm:main-full}
Let $(M,g_0,k_0,\pi_0,\psi_0)$ be initial data as in Theorem \ref{thm:main-past}. In addition, we \change{suppose }that the smallest positive eigenvalue of $-\Lap_\gamma$ acting on scalar functions is strictly greater than $\frac19$.\\ Then, the initial data admits a \change{maximal globally hyperbolic development }$((0,\infty)\times M,\g,\phi)$ \change{solving }the Einstein scalar-field system that, in addition to the results of Theorem \ref{thm:main-past}, is future (causally) complete. As $t\nearrow\infty$, the solution is attracted by Milne spacetime in the sense that the expansion normalized variables $(\fg,\bm{k},\nabla\phi,\phi^\prime)$ \change{(see Definition \ref{def:fut-rescaled}) }converge toward $(\gamma,\frac13\gamma,0,0)$. 
\end{theorem}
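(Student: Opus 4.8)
The plan is to exhibit the maximal globally hyperbolic development as a single spacetime carrying a constant-mean-curvature (CMC) foliation $\Sigma_\tau$ whose mean curvature $\tau$ — equivalently the scale factor $a$ via $\tau=-3\dot a/a$ — sweeps out all of $(-\infty,0)$, and to control the two ends of this interval by two complementary perturbative analyses that are consistent on the initial slice $\Sigma_{t_0}$. The past end $\tau\to-\infty$ (i.e.\ $t\searrow0$) is already handled by Theorem~\ref{thm:main-past}, which supplies the CMC foliation, the $t^{-4}$ curvature blow-up and causal incompleteness; so the new content is the expanding region $\tau\nearrow0^-$. There I would first fix a gauge — CMC slicing together with a spatially harmonic spatial gauge, supplemented by a shadow-metric decomposition anchored at $\gamma$ — so that the momentum constraint and the elliptic lapse and shift equations form a well-posed elliptic system whose solution is controlled, to leading order, by the negative Einstein background $\gamma$ and by the matter terms. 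I would then pass to the expansion-normalized variables $(\fg,\bm{k},\nabla\phi,\phi^\prime)$ of Definition~\ref{def:fut-rescaled}, in which the FLRW reference solution \eqref{eq:intro-ref1}--\eqref{eq:intro-ref3} degenerates, as $\tau\nearrow0^-$, to the Milne fixed point $(\gamma,\tfrac13\gamma,0,0)$.

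Next I would construct a corrected energy $E(\tau)$: an $H^s$-type sum of the trace-free part of $\bm{k}$, of $\fg-\gamma$ (split via a slice theorem on the space of negative Einstein metrics into a moduli-tangent piece and a transversal piece), and of the rescaled scalar-field quantities $\nabla\phi$, $\phi^\prime$, with lower-order corrections inserted to cancel the borderline top-order terms in the rescaled evolution system. The heart of the argument is the differential inequality for $E$ in logarithmic time $T$ with $e^{-T}\simeq a^{-1}\simeq-\tau$,
\[
\frac{d}{dT}E \;\le\; -2c\,E \;+\; C\,E^{3/2},\qquad c>0,
\]
which, bootstrapped against the smallness of $E$ on $\Sigma_{t_0}$ (the same closeness hypothesis that feeds the past estimates of Theorem~\ref{thm:main-past}), yields global existence to the future in CMC time together with $E(T)\to0$ exponentially. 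The dissipative constant $c$ comes from three places: the expansion damping in a Bel--Robinson-type energy for the rescaled geometry, exactly as in vacuum Milne stability; the strict stability of the linearized Einstein (Lichnerowicz) operator transversal to the Einstein moduli space, using that $(M,\gamma)$ is negative Einstein and that the moduli-tangent piece evolves only through quadratic errors; and the scalar field.

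The step I expect to be the main obstacle is precisely the scalar field and its back-reaction on the geometry through the constraints and the lapse, and this is where the hypothesis $\lambda_1(-\Lap_\gamma)>\tfrac19$ is used. To linear order the rescaled wave equation for $\phi$ reads $\phi^{\prime\prime}+3(\dot a/a)\phi^\prime-a^{-2}\Lap_\gamma\phi+(\text{geometry couplings})=0$; with $a\simeq t/3$ for large $t$ and the ansatz $\phi=a^{p}u$, $-\Lap_\gamma u=\lambda u$, this gives $p^2+2p+9\lambda=0$, hence $p=-1\pm\sqrt{1-9\lambda}$, so every non-constant mode decays at least as fast as $a^{-1}\simeq e^{-T}$ \emph{iff} $\lambda\ge\tfrac19$; with the normalization $\Ric[\gamma]=-\tfrac29\gamma$ (sectional curvature $-\tfrac19$ in the hyperbolic case) the threshold is $\tfrac19$. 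Under $\lambda_1>\tfrac19$ the slowest non-trivial scalar mode therefore decays no slower than the geometry, so the scalar-field source terms appearing in the momentum constraint, in the lapse and shift equations, and in the energy identity all enter $\tfrac{d}{dT}E$ with the correct sign and strength to be absorbed into the $-2cE$ term; without the hypothesis the estimate degrades to mere boundedness of $E$ and convergence to the Milne fixed point is lost (which Riemannian manifolds meet the condition is discussed in Remark~\ref{rem:weeks-and-friends}).

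Once $E(T)\to0$ is in hand, the convergence $(\fg,\bm{k},\nabla\phi,\phi^\prime)\to(\gamma,\tfrac13\gamma,0,0)$ is immediate; future causal completeness follows from the uniform two-sided bounds on the rescaled lapse, since the affine length of any future-inextendible causal curve dominates $\int^\infty a\,dt=\infty$; and combining this with the $C^2$ curvature blow-up of Theorem~\ref{thm:main-past} toward the past yields the $C^2$ Strong Cosmic Censorship statement. A standard continuation/exhaustion argument — the past region being inextendible by curvature blow-up and the future region being causally complete — shows that the CMC-foliated spacetime built this way is the entire maximal globally hyperbolic development.
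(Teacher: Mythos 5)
Your overall strategy is sound in outline — past control from Theorem~\ref{thm:main-past}, a CMCSH gauge and corrected-energy argument with a decay inequality in logarithmic time for the future, and the spectral condition~$\mu_0(\gamma)>\tfrac19$ entering precisely to make the scalar-field correction coercive (your linearised mode count $p^2+2p+9\lambda=0$ is indeed the right heuristic for the threshold). The future-completeness and SCCC conclusions then follow as you say. But there is a genuine gap in how you launch the bootstrap, and a smaller issue with the gauge transition.

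The gap is that you bootstrap the decay inequality $\tfrac{d}{dT}E\le-2cE+CE^{3/2}$ \emph{from the smallness of $E$ on $\Sigma_{t_0}$}, asserting that this is ``the same closeness hypothesis'' feeding Theorem~\ref{thm:main-past}. That hypothesis says the data is close to the FLRW reference with $\del_t\phi=Ca^{-3}$, $C>0$; it does \emph{not} say it is close to Milne. In the expansion-normalized (Definition~\ref{def:fut-rescaled}) variables one has
$\phi^\prime\sim -\tau^{-1}a^{-3}\del_t\phi\sim -\tau^{-1}a^{-3}C\cdot Ca^{-3}\sim C^2a^{-6}\tau^{-1}$; more simply $\phi^\prime\approx -\tau^{-1}a^{-3}C$, and since $-\tau=3\dot a/a\simeq a^{-3}$ near the reference Big Bang epoch, $\phi^\prime(\tau(t_0))$ is of order $C$, not $\epsilon$. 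So $E$ on the initial slice is $O(1)$ and your differential inequality cannot be closed from there. The paper's fix (Section~\ref{sec:full-stab}) is the essential missing step: evolve forward over a \emph{finite} physical-time interval $[t_0,t_1]$, chosen so that the FLRW quantities $Ca(t_1)^{-3}\tau(t_1)^{-1}$ and $|1+\tau(t_1)a(t_1)|$ are below a tolerance $\chi$ (which is a statement about the Friedman dynamics, not the perturbation), and invoke Cauchy stability (continuous dependence on initial data) to ensure the perturbed solution remains within $K\epsilon$ of the reference across that slab. Only at $\Sigma_{t_1}$ is the data near-Milne in the sense required by the future-stability hypothesis, and only there can the decay estimate begin. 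Without this intermediate ``connecting'' step, the argument as written fails at $T=0$.

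A secondary point: Theorem~\ref{thm:main-past} is established in CMC gauge with \emph{zero} shift, while your future estimates need CMCSH. Passing from one to the other is not automatic; the paper uses the isomorphism property of $Y\mapsto\Lap_\gamma Y-\tfrac29 Y$ and the implicit function theorem (following \cite{FajKr20}) to produce a diffeomorphism close to the identity that puts the data in spatially harmonic gauge while preserving smallness. You also propose a shadow-metric/moduli-space decomposition, which is harmless but unnecessary here: by \cite{Kroen15}, $\mathcal{L}_{\gamma,\gamma}$ has trivial kernel for our $(M,\gamma)$, so there is no nontrivial moduli direction to track.
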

A more detailed statement of Theorem \ref{thm:main-past} is provided in Theorem \ref{thm:main}. The additional spectral condition in Theorem \ref{thm:main-full} is discussed at the end of Subsection \ref{subsec:prev}, and the statement itself is proven in Section \ref{sec:full-stab} to be an extension of the Milne future stability result in Theorem \ref{thm:fut-stab-simple}.


\change{
\subsection{Background material}

We now provide context for the previously discussed setting and the results in Theorems \ref{thm:main-past} - \ref{thm:main-full}:\\

\subsubsection{Initial data to the Einstein scalar-field equations}\label{subsubsec:initial-data}

It is well known that the Einstein equations can, via the 3+1 decomposition, be viewed as an elliptic-hyperbolic system of PDEs (see, for example, \cite{AM03B}). This reduces solving the Einstein equations to two problems: finding admissible Einstein initial data in physical space, and then solving the corresponding initial value problem. Regarding the former, initial data to the Einstein scalar-field system takes the form
\[(M,\mathring{g},\mathring{k},\mathring{\pi},\mathring{\psi}),\]
where $\mathring{g}$ and $\mathring{k}$ are symmetric $(0,2)$-tensors on $M$,  $\mathring{\pi}$ is an exact $(0,1)$-tensor (corresponding to $\nabla\phi$) and $\mathring{\psi}$ is a scalar function (corresponding to the future directed normal derivative $\del_0\phi$ of the scalar field). The initial data must satisfy the Hamiltonian and momentum constraints
\begin{subequations}
\begin{align}
\text{R}[\mathring{g}]+\left({\mathring{k}^{a}}_{\ a}\right)^2-\left({\mathring{k}^{a}}_{\ b}{\mathring{k}^b}_{\ a}\right)=&\,8\pi\left[\lvert\mathring{\psi}\rvert^2+\lvert\mathring{\pi}\rvert_{\mathring{g}}^2\right] \label{eq:init-Hamilton}\,,\\
\div_{\mathring{g}}\mathring{k}=&\,-8\pi\cdot\mathring{\pi}\cdot\mathring{\psi}\,\label{eq:init-momentum}
\end{align}
\end{subequations}
(see \eqref{eq:Hamilton} and \eqref{eq:Momentum}), where the indices of $\mathring{k}$ in the first line are raised with respect to $\mathring{g}$.\\ 
We note that, in our argument, we will additionally assume that our initial data has constant mean curvature so that our gauges can be satisfied initially -- this is enforced on the level of initial data by requiring
\changefinal{\begin{equation*}
\text{tr}_{\mathring{g}}\mathring{k}=-3\frac{\dot{a}(t_0)}{a(t_0)}\,
\end{equation*}
(see \eqref{eq:CMC}).} We will argue in Remark \ref{rem:CMC-hypersurface} why the initial data being near-FLRW allows us to assume the initial hypersurface to be CMC without loss of generality.\\

The results of \cite{FB52, CBGer69} show that there exists an embedding\footnote{We usually ignore the embedding in notation.} $\iota:M\hookrightarrow\iota(M)\subset\M$ and a \changefinal{maximal }solution $(\M,\g,\nabla\phi,\del_0\phi)$ to the Einstein scalar-field equations such that $\iota(M)=\Sigma_{t_0}$ is a Cauchy hypersurface and such that
\[\iota^\ast\g=\mathring{g},\,\iota^\ast k=\mathring{k},\,\iota^\ast\pi=\mathring{\pi},\,\text{and }\iota^\ast \del_0\phi=\mathring{\psi_0}\,.\]
We will perturb around initial \changefinal{data corresponding to data for an FLRW spacetime at time $t=t_0$, i.e.,~
\[(M\cong\Sigma_{t_0},a(t_0)^2\gamma,-\dot{a}(t_0)\,a(t_0)\,\gamma,0,C\,a(t_0)^{-3})\,.\]} Furthermore, the maximal globally hyperbolic development (MGHD) is unique (up to diffeomorphism), and thus we can assume $(\M,\g,\nabla\phi,\del_0\phi)$ to be globally hyperbolic. However, these statements provide little information on the properties of the MGHD in the future and past of the initial data slice.

\subsubsection{Strong Cosmic Censorship}

In their groundbreaking papers on singularity theorems, Hawking \cite{Hawk67} and Penrose \cite{Pen65} established very general criteria for the MGHD of spacetimes to become causally geodesically incomplete. Many spacetimes of physical relevance satisfy these criteria, including the spacetimes considered in this article. While giving us more information on the MGHD than the existence and uniqueness results mentioned above, a key issue in the application of this mathematical result to General Relativity is that no statement is made on how precisely the singularity comes about: In particular, such incompleteness (within a given regularity class) could either mean that the geodesic is inextendible -- which must be caused by the blow-up of some geometric quantity -- or that there exist multiple inequivalent extensions. While the latter behaviour is exhibited even for some cosmological spacetimes (see, for example, the Taub solutions discussed in \cite{ChrIs93}), such behaviour is usually considered to be unphysical since it would imply a breakdown of determinism. The \textit{Strong Cosmic Censorship Conjecture} (SCCC) posits in its most general form that, for generic solutions to the Einstein equations, this incompleteness instead manifests as inextendibility at a given level of regularity (e.g., $C^0, C^2, C^\infty,\dots$).\\

\noindent In certain frameworks in the homogeneous cosmological setting -- i.e. for homogeneous initial data on a closed spatial hypersurface --, it was shown in fundamental works by Chrusciel-Rendall \cite{CR95} and Ringström \cite{Ring09} that the so called Kretschmann scalar $R_{\alpha\beta\gamma\delta}R^{\alpha\beta\gamma\delta}$ is unbounded where incompleteness manifests. Thus, it is the driving force behind geodesic incompleteness in these cases, forcing $C^2$-inextendibility of the MGHD. For the purposes of analyzing cosmologically relevant spacetimes, the SCCC is hence often rephrased as follows:
\begin{conjecture}[Cosmological SCCC](See e.g. \cite[Chapter 17]{Ring09})
For generic initial data, the Kretschmann scalar is unbounded where causal geodesics become incomplete.
\end{conjecture}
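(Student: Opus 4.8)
In its stated generality the conjecture is out of reach, so the object of the proof is the version adapted to this paper: in the open family of near-FLRW data of Theorems~\ref{thm:main-past}--\ref{thm:main-full}, the Kretschmann scalar $R_{\alpha\beta\gamma\delta}R^{\alpha\beta\gamma\delta}$ is unbounded exactly where causal geodesics become incomplete (so that ``generic'' is replaced by ``lying in an open set of data''). Granting the two main theorems, this reduces to a soft argument, and the real work lies in proving those theorems.

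\emph{The soft reduction.} By Theorem~\ref{thm:main-past}, the past MGHD carries a zero-shift CMC foliation $\Sigma_s=t^{-1}(\{s\})$, $s\in(0,t_0]$, with $\tau=-3\dot a/a\to-\infty$ as $t\searrow0$, along which \emph{every} causal geodesic becomes incomplete and $R_{\alpha\beta\gamma\delta}R^{\alpha\beta\gamma\delta}$ blows up of order $t^{-4}$; by Theorem~\ref{thm:main-full}, under the spectral gap hypothesis the future development is causally complete. Hence causal incompleteness of the MGHD occurs only toward $\{t=0\}$, and there the Kretschmann scalar diverges. To conclude $C^2$-inextendibility, suppose the MGHD admitted a $C^2$ extension $\widetilde{\M}$ through a point $p$ at which an incomplete causal geodesic $\curve$ terminates. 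Then $R_{\alpha\beta\gamma\delta}R^{\alpha\beta\gamma\delta}$, being a continuous scalar on $\widetilde{\M}$, would be bounded along $\curve$ on a neighbourhood of $p$ --- contradicting the $t^{-4}$ blow-up. This is the same mechanism used in the homogeneous setting by Chrusciel--Rendall \cite{Chrusciel_1995} and Ringström \cite{Ring09}. Thus the MGHD is $C^2$-inextendible, and the cosmological SCCC holds for this class.

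\emph{What actually needs proving.} The content is Theorems~\ref{thm:main-past} and~\ref{thm:main-full} (precise form: Theorem~\ref{thm:main}), which I would establish as follows. First, fix the CMC-with-zero-shift gauge and pass to expansion-normalized variables $(\fg,\bm{k},\nabla\phi,\phi^\prime)$ (Definition~\ref{def:fut-rescaled}) chosen so that the Big Bang $\{t=0\}$ becomes a regular boundary for the rescaled Einstein--scalar-field system and the FLRW profile an explicit reference state. Then run a hierarchical energy estimate plus continuity/bootstrap argument controlling these variables, and the lapse $N$ from above and below, on all of $(0,t_0]$, yielding $g\simeq a(t)^2\gamma$ with $a(t)\simeq t^{1/3}$; this is what both the incompleteness (a finite-affine-length comparison using $N$ bounded and $\int_0 a<\infty$) and the curvature blow-up rest on. The non-trivial scalar field ($C>0$) is essential throughout: it dominates the matter near the singularity, forces the $t^{1/3}$ rate, and keeps the second fundamental form away from the trace-part degeneracy, which is precisely what makes the leading coefficient of $R_{\alpha\beta\gamma\delta}R^{\alpha\beta\gamma\delta}\simeq c\,\tau^4$ strictly positive. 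In the expanding direction one instead perturbs off Milne and invokes the Milne future stability mechanism (Theorem~\ref{thm:fut-stab-simple}), where the spectral gap $>\tfrac19$ rules out the slow/unstable scalar-field mode.

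\emph{Main obstacle.} The crux is the past bootstrap toward $\tau=-\infty$: controlling the borderline terms of the rescaled evolution equations near the singularity, and in particular the terms generated by the non-vanishing spatial Einstein curvature $\Ric[\gamma]=-\tfrac29\gamma$, which are absent in the spatially flat and spherical analyses of \cite{Rodnianski2014,Speck2018} and must be shown not to destabilize the Kasner-like collapse. A secondary but necessary point is upgrading the formal ``$R_{\alpha\beta\gamma\delta}R^{\alpha\beta\gamma\delta}\sim t^{-4}$'' to a \emph{uniform} lower bound on $M$, i.e.\ checking that the perturbation never cancels the homogeneous leading term --- which again comes down to quantitative smallness estimates against a strictly positive, scalar-field-driven background coefficient.
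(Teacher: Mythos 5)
Your reading is the intended one: the statement is a conjecture, not a theorem of the paper, and the paper's contribution is exactly what you describe --- Theorems~\ref{thm:main-past} and~\ref{thm:main-full} establish, for the open set of near-FLRW data, past incompleteness together with Kretschmann blow-up of rate $a^{-12}\simeq t^{-4}$ and future completeness, after which the $C^2$-inextendibility claim follows by the standard soft argument that a continuous curvature scalar on a $C^2$ extension would be bounded along the incomplete geodesic, contradicting the blow-up. Your identification of the substantive content also matches the paper: the real work is the past bootstrap toward $\tau=-\infty$ (handling the borderline terms coming from $\Ric[\gamma]=-\tfrac29\gamma$, which is what distinguishes this from \cite{Rodnianski2014,Speck2018}), and the uniform positivity of the limiting Kretschmann coefficient is supplied by \eqref{eq:blowup-Kretschmann} together with \eqref{eq:footprint-W}, since $W_{Bang}=\mathcal{O}(\epsilon)$ cannot cancel $\tfrac53(8\pi)^2(\Psi_{Bang}+C)^4\gtrsim C^4$.
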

Theorem \ref{thm:main-past}, in short, shows that this conjecture is rigorously supported in the case of FLRW spacetimes with negative spatial curvature. More precisely, the past asymptotics of such spacetimes, determined by initial data on $\Sigma_{t_0}$ as discussed above, are generic in the following sense: There exists an open neighbourhood of said FLRW data within the set of Einstein scalar-field initial data  such that the solutions past directed causal geodesics become incomplete, and the incompleteness is driven by blow-up of Kretschmann scalar with the same asymptotics as the FLRW solution. The global result in Theorem \ref{thm:main-full} portrays the other side of Cosmic Censorship -- as with the past evolution, near-FLRW data fully determines the future of the spacetime in the sense that the MGHD is future complete, again showing that this feature of FLRW spacetimes with negative spatial sectional curvature is generic.

\subsubsection{FLRW and generalized Kasner spacetimes with scalar fields}\label{subsec:FLRW-Kasner}

On a large scale, the universe is often viewed as spatially homogeneous and isotropic, i.e., no point in space and no direction are distinguishable from any other point and direction \changefinal{(referred to as the \enquote{Cosmological Principle})}. In 1935, it was shown by Robertson and Walker that, under a few very natural additional assumptions, this restricts the class of potential spacetimes to the FLRW class
\[\left(I\times\tilde{M},\,\tilde{g}_{FLRW}=-dt^2+a(t)^2\tilde{\gamma}\right)\,,\]
where $(\tilde{M},\tilde{\gamma})$ is a manifold of constant sectional curvature $\kappa$ and where the scale factor $a$ depends smoothly on $t$. 
This holds before taking the Einstein equations into consideration -- when doing so, the matter model determines how space expands within the cosmological model via $a$. We refer to Lemma \ref{lem:FLRW} for the scalar-field solution for $\kappa=-\frac19$, but note that the scale factor behaves like $t^\frac13$ for scalar-field matter, regardless of spatial geometry, and that the Kretschmann scalar blows up at order $\O{t^{-4}}$ toward the Big Bang ($t\downarrow 0$).\\
Spatially flat FLRW spacetimes are a subclass of the closely related \textit{generalized Kasner spacetimes}, which are still spatially homogeneous but anisotropic in general. For scalar field matter, the spacetime metric is given by
\changefinal{\begin{gather*}
\g_{Kasner}=-dt^2+\sum_{i=1}^Dt^{2p_i}dx^i\otimes dx^i,\quad
\sum_{i=1}^D{p_i}=1,\,\sum_{i=1}^Dp_i^2=1-8\pi A^2,\quad\overline{\phi}_{Kasner}(t)=A\log(t)\,.
\end{gather*}}
The standard Kasner family is obtained by considering the vacuum case ($A=0$), and the spatially flat FLRW spacetime by setting \changefinal{$D=3, p_i=\frac13, A=\sqrt{\frac1{12\pi}}$}. If more than one of the Kasner exponents is non-zero, the generalized Kasner family satisfies the SCCC, also by exhibiting Kretschmann scalar blow-up of order $t^{-4}$as $t\downarrow 0$ (see \cite[(1.8)]{Rodnianski2014}).\\

Kasner spacetimes are of particular relevance to cosmology due to their relationship with the \textit{BKL conjecture}: Heuristically, this conjecture states that the dynamics of cosmological spacetimes \changefinal{near a spacelike singularity }generically exhibit chaotic and highly oscillatory behaviour, often referred to as \enquote{Mixmaster} behaviour. This behaviour is driven by velocity terms within the Einstein equations and is locally comparable to that of (vacuum) Kasner solutions. However, even if the BKL picture is to be believed in general, scalar-field (or, more generally, stiff-fluid) solutions seem to form an exception to it: They have a dampening effect on said oscillations, thus generating Big Bang stability as shown rigorously in \cite{Rodnianski2014, RodSpFou20} for Kasner spacetimes (for more details, see Section \ref{subsec:prev}). This scenario, often referred to as \textit{quiescent cosmology}, was studied in, for example, \cite{BK73,Bar78,AnRen01}. With this in mind, both the aforementioned Kasner results and the results within this article, along with the prior FLRW results \cite{Rodnianski2014,Speck2018}, confirm this quiescent effect of scalar fields in \changefinal{cosmology.}\\\

We note that one can view this as a scalar field ensuring a specific scenario in the very early universe given a class of initial data, namely matching the asymptotic behaviour of the Big Bang singularity. This fits into the recent use of nonlinear scalar fields in string cosmology, where specific choices of field are made to specific behaviours (e.g.,~inflation) in the early universe. For a recent review, we refer to \cite{StringCosm23}.
}\\

\subsection{Relation to previous work}\label{subsec:prev}

Theorem \ref{thm:main-full} is the first theorem about the full global structure of FLRW spacetimes with negatively curved spatial geometry. For such solutions, \changefinal{prior }results exclusively concern future stability, which we further discuss below. Besides \cite{Speck2018} covering the $\mathbb{S}^3$-case, it is the only open set of \change{initial data for cosmological spacetimes }(i.e.,\,\,without symmetry assumptions) with $\Lambda=0$ and in absence of accelerated expansion for which the global (future and past) dynamics are now fully understood.\footnote{\change{For a related future stability result in accelerated expansion, see \cite{Ring08} which considers scalar fields with a non-trivial potential.}}\\

Scalar field matter (and, more generally, \changefinal{matter obeying }semilinear wave equations or fluid matter) and their asymptotic behaviour on fixed cosmological backgrounds have been studied extensively, for example in \cite{AlRen10, Franzen18, Bach19, Ring19, BO24, Ring21linear, Ring21wave, Wang21}. While many of the results, in particular \cite{Ring21linear}, manage to analyze very general classes of equations and spacetime geometries\change{, }including the wave equation on the FLRW backgrounds studied in \cite{Franzen18, FU22}, the methods used are often difficult to apply to the full Einstein scalar-field system. In \cite{Urban22}, we extended the approach of \cite{Franzen18} to be able to deal with various warped product spacetimes, and in particular FLRW spacetimes \change{with negatively curved spatial geometry}, by using the spatial Laplace operator to control high order derivatives. The perturbation-adapted analogue of this strategy is at the basis of the energy method in this paper.\\

\indent We also note that, by the results of \cite{Girao19}, there are non-trivial waves on fixed FLRW backgrounds that converge toward the Big Bang singularity, even if, as demonstrated in \cite{Franzen18,Urban22}, this behaviour is non-generic. Such waves can give rise to convergent asymptotics on cosmological backgrounds as studied in \cite{Ring21linear}. Thus, it will likely be difficult to replace \eqref{eq:intro-ref3} with an arbitrary non-trivial reference wave while keeping past stability intact. However, by restricting to an open neighbourhood near the solution described in \eqref{eq:intro-ref1}-\eqref{eq:intro-ref3}, \changefinal{potential non-generic solutions of this type are excluded. For the more general conditions on initial data that lead to quiescent asymptotics, we refer to \cite{GPR23}, which will be discussed further below.}\\

Theorem \ref{thm:main-past} forms the counterpart to the pioneering works by Rodnianski-Speck \cite{Rodnianski2018,Rodnianski2014} and Speck \cite{Speck2018}, which cover nonlinear Big Bang stability for \change{FLRW spacetimes with }spatial geometry $\mathbb{T}^3$ and $\S^3$ respectively.
\change{These results were extended to Kasner spacetimes in \cite{RodSp22} with $\lvert q_i\rvert<\frac16$, and to the full subcritical regime in \cite{RodSpFou20}, i.e., (generalized) Kasner spacetimes as discussed in Section \ref{subsec:FLRW-Kasner} with $\changefinal{\max_{i,j,k=1,\dots,D}(p_i+p_j-p_k)<1}$. The former necessitates considering $1+D$-dimensional Kasner spacetimes with $D\geq 38$, while the latter result also can be satisfied in $D=3$ for generalized Kasner spacetimes. Recall that this means, in contrast to our setting, that the reference spacetime can be anisotropic, even \changefinal{if the }conditions on Kasner exponents rule out extremely anisotropic regimes. As a result, the analysis therein becomes significantly more involved, especially at top order, since approximately monotonic energy identities as used in our work as well as in \cite{Rodnianski2014,Speck2018} have not been found in these anisotropic settings.\\

\changefinal{We note that the argument in \cite{RodSpFou20} relies on identifying an almost-diagonal structure for the asymptotics of (combined) connection coefficients for an adapted frame that is carried along by Fermi-Walker transport; this is precisely where subcriticality enters. Given that these no longer can vanish in a reference frame adapted to near-hyperbolic spatial geometry, it is a priori unclear whether this structure is sufficiently maintained.\\
The impressive recent preprint \cite{GPR23} by Oude Groeniger, Petersen and Ringström circumvents this issue and uses the equations considered in \cite{RodSpFou20} to establish general conditions for initial data to the Einstein (non-linear) scalar-field equations to give rise to quiescent singularities (see \cite[Theorem 12]{GPR23}). Additionally, they show that a large class of cosmological model solutions to exhibit stable Big Bang formation (see \cite[Theorem 49]{GPR23}). In particular, by only requiring that the mean curvature is sufficiently large compared to the expansion-normalised data, the rescaled connection coefficients can be made to be sufficiently small even if they are non-trivial in the reference. However, this high level of generality comes at the cost of no longer being able to ensure that the expansion-normalized solution variables themselves, in particular the generalized Kasner exponents, remain close to the reference solution, in contrast to our asymptotic results in Theorem \ref{thm:main}.\\}

Furthermore, Beyer and Oliynyk have recently shown in \cite{BeyOl21} that, over $\mathbb{T}^3$, the Big Bang formation can be localized in the sense that data given solely on a ball within the initial hypersurface must also cause stable blow-up on a (smaller) ball on the Big Bang hypersurface. While this result further indicates that blow-up behaviour of near-FLRW spacetimes might be, at least, independent of global geometric properties as it seems to be a localizable, we note that proof of localized stability crucially relies on the flatness of the conformal reference spacetime. To be more precise, the proof relies on extending the local initial data to global data for a Fuchsian system of metric and matter quantities as well as, again, connection coefficients for an adapted, Fermi-Walker transported frame. However, the derivation of the system for the former explicitly seems to use flat spatial geometry to obtain the necessary Fuchsian form. This \changefinal{form }seems to similarly be broken as soon as the connection coefficients are not perturbed around $0$, since this would lead to inhomogeneous error terms of order $t^{-1}$ for the rescaled variables which are stronger than what the method, so far, accounts for.\\

By contrast, in \cite{Rodnianski2014, Speck2018}, the reference frame itself is used in the commutator method to obtain the necessary energy identities at high orders. In all of these works, it hence is a priori unclear how one could extend these methods to the negative spatial Einstein geometry of $(M,\gamma)$. }We provide an alternative approach that, besides establishing the complementary stability result to \cite{Rodnianski2014,Speck2018}, does not rely on any information on the spatial geometry of the reference manifold in its methodology (although it is of course relevant in determining the FLRW reference solution that we are studying). Instead, we rely on differential operators adapted to the evolved spatial metric. 
Hence, we believe that our approach may also prove useful for stability problems in spatially inhomogeneous \change{(and hence also anisotropic) }settings. In light of \cite{RodSp22, RodSpFou20} in particular, the main challenge in achieving this would either be to find approximately monotonic energy identities with our Bel-Robinson approach that have not been observed previously, or to also find ways to circumvent the lack thereof. \\

To obtain Theorem \ref{thm:main-past}, we use the Laplace-Beltrami-operator (acting, respectively, on scalar functions and tensor fields) with respect to the (rescaled) evolved metric as our commutating operator instead of a fixed reference frame. This, in turn, leads us to replacing the wave-like system for metric and second fundamental form exploited in \cite{Rodnianski2014, Speck2018} by an evolutionary system in the second fundamental form and Bel-Robinson variables. \changefinal{The latter technique dates back to the fundamental works by Christodoulou-Klainerman \cite{ChrKl90,ChrKl93}, where it was used to analyse field equations on Minkowski space and then to show global stability of Minkowski space itself. It has also been applied to the future stability of Milne spacetimes in the vacuum Einstein equations by Anderson-Moncrief in \cite{AM03} and, more recently, within the massive Einstein Klein-Gordon system by Wang in \cite{Wang19}. }As far as we are aware, this method has not yet been applied to solutions that are not near-vacuum or in the context of Big Bang singularity formation.\\

\indent Toward the Big Bang, the solutions exhibit asymptotically velocity dominated (AVTD) behaviour in the sense that they behave, to leading order, like solutions to the Einstein scalar-field equations in CMC gauge with zero shift with all terms involving spatial derivatives set to zero (the \enquote{velocity term dominated} (VTD) equations). This behaviour also matches results obtained by studying high regularity solutions (e.g.,\,\cite{AnRen01}), or related works using Fuchsian methods that prescribe a behaviour at the singularity and then develop it locally, often under additional symmetry assumptions (e.g.,\,\cite{DHRW02, CBIM04, IsMon02, FL23}). \change{In particular, this asymptotic behaviour leads to the same types of \enquote{Kasner footprint states} as in \cite{Rodnianski2018,Rodnianski2014}: As one approaches the Big Bang, the rescaled variables converge toward tensor fields on the Big Bang hypersurface that precisely solve the truncated VTD equations. Further,  the distance between the footprints of the FLRW and the perturbed solution are controlled by the initial data. For example, the rescaled Weingarten map $a^3{k^a}_{b}$ converges to ${(K_{Bang})^a}_b$ on the Big Bang hypersurface, which is close to $\frac{\sqrt{4\pi}}3C\I^a_b$, the rescaled FLRW footprint (see \eqref{eq:asymp-K} and \eqref{eq:footprint-K}).}\\

What remains to be considered to obtain Theorem \ref{thm:main-full} is future stability, which we can reduce to future stability of the vacuum solution in the Einstein scalar-field system. This solution, called the Milne spacetime, has been shown to be stable within the set of vacuum solutions -- see \cite{AM11} -- and a range of other Einstein systems -- see, for example\,,  \cite{Wang19,AndFaj20,FajWy21,FOW24,BaFaj20,BraFajKr19} and related work in lower dimensions, e.g. \cite{AMT97, Mon08, Faj17, Faj20, Mondal20}. As such, our contribution to the study of future stability of Milne spacetimes is that we deal with the massless scalar field matter via corrected energy estimates which are inspired by work of Choquet-Bruhat and Moncrief in \cite{CBM01} for vacuum Einstein equations with $U(1)$-symmetry. \change{Out of the works listed above, only \cite{Wang19, FajWy21} deal with scalar field matter at all, namely the massive case. These fields exhibit stronger decay toward the future, making the matter components easier to deal with than in our analysis.}\\
The additional spectral condition is needed to ensure coercivity of the corrected scalar field energy. Numerical work, e.g. \cite{Cornish99, Ino01}, does not suggest that this condition is violated by any \change{closed }3-manifold with constant sectional curvature $\kappa=-\frac19$, and verifies that is is satisfied, for example, by an analogue of Weeks space in which the metric is appropriately scaled to have the required sectional curvature. \changefinal{The latter is also verified by the recent result \cite{BoMaPa25} that, amongst considering more general related settings, sufficiently constrains the spectrum of the Laplacian on Weeks space. }We refer to Remark \ref{rem:weeks-and-friends} where this discussed in more detail.

\subsection{Challenges in the proof}

The contracting and expanding regimes of near-FLRW spacetime are analyzed in two separate and methodologically independent parts. Before providing an overview of both arguments, we summarize the challenges that arise:

\subsubsection{Big Bang stability}

The main difficulties in establishing Big Bang stability are three-fold:\\

Firstly, we have to expect that the solutions are asymptotically velocity term dominated (as argued in Remark \ref{rem:AVTD}, we end up proving that this is the case)\change{, and thus that rescaled variables at best exhibit the same asymptotic behaviour as their counterparts in FLRW spacetime, up to a small perturbation in the asymptotic footprint.  For example, note that, in the reference FLRW spacetime, one has \[{(k_{FLRW})^i}_j=-3\frac{\dot{a}}a\I^{i}_j\approx-\frac1t\I^{i}_j\,.\] At best, the shear ${\hat{k}^i_j}$ of the perturbed solution then behaves like $\frac{\epsilon}t$. In fact, we show that this is the case in \eqref{eq:APSigma}. This implies that the contraction rescaled metric $G_{ij}=a^{-2}g_{ij}$ can only be controlled up to $\O{t^{-c\sqrt{\epsilon}}}$ (see \eqref{eq:APmidG}), since one has $\del_tg_{ij}\approx -2g_{il}{k^l}_j$ and thus
\[\del_t G_{ij}\approx G_{il}{\hat{k}^l}_{\ j}\approx \frac{\epsilon}t\ast G\,.\]}
However, to be able to use the structure of the evolution equations to cancel terms in our energy arguments, we have to work with adapted quantities. For example, we need to use integration by parts with respect to $(\Sigma_t,G_t)$ to cancel high order scalar field terms with help of the (rescaled) wave equation that contains $\Lap_G$, or to obtain elliptic estimates from the lapse equation via the operator $\Lap_G$ or from the adapted div-curl-system for $\Sigma$ arising from the constraint equations. \\

\indent As a result, even the rescaled solution variables \change{will diverge at order $\O{t^{-c\sqrt{\epsilon}}}$} toward the singularity, so we need to track and control their rate of divergence within the bootstrap argument. This significantly complicates dealing with nonlinear terms, where the bootstrap assumptions often cannot be inserted naively. This in turn makes coercivity of the energies more involved to establish \change{(see Lemma \ref{lem:Sobolev-norm-equivalence-improved} and Remark \ref{rem:Sobolev-norm-equivalence-improved}), since this only holds up to curvature errors that also diverge and thus need to be carefully tracked.}\\ 

\change{Secondly, and in contrast to \cite{Rodnianski2014,Speck2018}, replacing the wave structure of the geometric evolution in the Einstein equations with our less geometry dependent Bel-Robinson framework seems to lose regularity at first glance: The \changefinal{energy estimates for the }evolution system for the scalar field energy and the geometric energies can be caricatured as follows\changefinal{: 
\begin{align*}
-\frac{d}{dt}\E^{(L)}(\phi,\cdot)\lesssim&\,\frac{\epsilon^\frac18}t\left[\E^{(L)}(\phi,\cdot)+\E^{(L)}(\Sigma,\cdot)\right]+\dots\\
-\frac{d}{dt}\left[\E^{(L)}(\Sigma,\cdot)+\E^{(L)}(W,\cdot)\right]+\dots\lesssim&\,\frac {\epsilon^\frac18}t\left[\E^{(L)}(\Sigma,\cdot)+\E^{(L)}(W,\cdot)\right]+\frac{\epsilon^{-\frac18}}t\cdot a^{4}\E^{(L+1)}(\phi,\cdot)+\dots\\
\end{align*}
Herein, the superscript refers to the order of derivatives, while $\E^{(L)}(\phi,\cdot), \E^{(L)}(\Sigma,\cdot)$ and $\E^{(L)}(W,\cdot)$ refer to energies for the scalar field, the rescaled tracefree part $\Sigma$ of second fundamental form and the Bel-Robinson variables respectively. }Thus, it seems that we lose derivatives in the scalar field and are not able to close the argument. This is remedied using the div-curl-system in $\Sigma$, see \eqref{eq:comeq-mom-div} and \eqref{eq:comeq-mom-curl}, which yields a weak estimate of the form
\[a^4\E^{(L+1)}(\Sigma,\cdot)\lesssim \E^{(L)}{(\phi,\cdot)}+\E^{(L)}(W,\cdot)+\E^{(L)}(\Sigma,\cdot)+\dots\,.\]
Combining these estimates to improve the bootstrap assumptions then necessitates an intricately constructed total energy to balance these different types of estimates against one another.\\}

Finally, given \eqref{eq:intro-ref3}, the rescaled time derivative of the scalar field is not small and does not become so toward the Big Bang. This leads to various terms within the core linearized evolutionary system of both matter and geometry that, if estimated naively, could lead to exponential blow-up toward the singularity. \change{When such terms occur in the scalar field energy evolution, this can be dealt with along similar lines as in \cite{Rodnianski2014, Speck2018}, but we incur additional large terms in our geometric evolution that only cancel using the explicit form of the Friedman equations, which we highlight in Lemma \ref{lem:en-error-cancellation} and its proof.}

\subsubsection{Future and global stability}

For Milne stability, the canonical Sobolev energies for the scalar field variables\change{, i.e.,~
\[\int_M\lvert\phi^\prime\rvert_{\fg}^2+\lvert\nabla\phi\rvert_{\fg}^2\,\vol{\fg}\]
and higher order analogues, }do not obey useful energy estimates. This can be overcome by adding an indefinite correction term \change{of the type 
\[\int_M\phi^\prime(\phi-\overline{\phi})\vol{\fg}\]
}to the canonical energy\change{, see Definition \ref{def:fut-stab}. This is similar to what was done in \cite{CBM01} in a $2+1$-dimensional setting, as well as similar to the indefinite terms we introduce in our geometric energy to control the wave system in the metric variables, as in previous work on Milne stability in different matter models, including \cite{AndFaj20, FajWy21}. That this corrected energy controls Sobolev norms relies on the aforementioned spectral condition. As a result, and unlike for past stability, the specific spatial geometry is crucial in generating decay from energy estimates, even before considering the geometric evolution.}\\

Moreover, we need to transition from the near-FLRW data used to analyze the contracting regime to data in the expanding regime on a distant enough future hypersurface such that it is near-Milne and the future stability result applies. \change{This requires \changefinal{a gauge }switch from CMC gauge with zero shift to CMCSH gauge, as well as careful control of the solution variables over a finite time interval using continuous dependence on initial data. For the former, close inspection of \cite{FajKr20} gives us a diffeomorphism close to the identity that maps the initial data for the metric to new data satisfying the spatially harmonic gauge condition, thus allowing us to switch gauges without losing proximity to the reference solution. }This is discussed in detail in Section \ref{sec:full-stab}.

\subsection{Proof outline}\label{subsec:intro-pf-outline}\changediss{\phantom{m}\\}
\subsubsection{Big Bang stability}\phantom{m}\\

\textbf{The big picture.} The key argument in our Big Bang stability proof is a hierarchized series of energy estimates that establishes the asymptotic behaviour of solution variables toward the singularity. We rely on a bootstrap argument which establishes that energies $\E^{(L)}$ (see Definition \ref{def:energies}) \change{for the scalar field, the rescaled shear, the Bel-Robinson variables, the lapse and the curvature }at worst only diverge slightly. Here, $0\leq L\leq \change{18}$ denotes the order of \change{derivatives considered}. To this end, we make a bootstrap assumption on the solution \changefinal{norm $\mathcal{C}$ }(see Definition \ref{def:sol-norm}) which controls the distance of \change{these rescaled variables, as well as the metric itself, }to their FLRW counterparts in \change{terms of supremum norms with respect to $G$}, where \change{$G=a^{-2}g$ }is the rescaled \textit{adapted} spatial metric (see Definition \ref{def:rescaled}). We refer to Assumption \ref{ass:bootstrap} and Remark \ref{rem:bs-strategy} for the detailed bootstrap assumptions and improvements\change{, as well as to Lemma \ref{lem:lwp} for the underpinning local well-posedness result. }That this bootstrap argument implies Theorem \ref{thm:main-past} follows from a straightforward adaptation of the arguments in \cite[Theorem 15.1]{Rodnianski2014}.\\

We work with evolution-adapted \change{norms }even though $G(t,x)$ degenerates toward the Big Bang singularity. Indeed, since we need to exploit the structure of the evolutionary equations, it is more convenient to have these adapted quantities controlled by the solution norms $\mathcal{H}$ and $\mathcal{C}$ directly instead of having to perform changes of metric at that point. Once the improved energy estimates are shown, a (time-scaled) coercivity notion (see Lemma \ref{lem:Sobolev-norm-equivalence-improved} and the proof of Corollary \ref{cor:H-imp}) and Sobolev embeddings with respect to the reference metric $\gamma$ then ensure that these improved estimates translate to $\mathcal{H}$ and $\mathcal{C}$. This then closes the bootstrap. To actually achieve this improved energy behaviour, we derive elliptic energy estimates or integral-type estimates that, once suitably combined and scaled, yield the desired improvements by straightforwardly applying the Gronwall lemma. \change{Additionally, note that we assume that the initial data is close to FLRW data not just in $\mathcal{H}$, which contains precisely the norms needed to control $\mathcal{C}$ by Sobolev embedding, but also scaled smallness assumptions at one order higher, contained in the top order semi-norm $\mathcal{H}_{top}$ (see Assumption \ref{ass:init}). This is needed to ensure that the top order energy is small initially, and thus to close the bootstrap.}\\

\textbf{Scale factor \changefinal{$a(t)$}.} The precise structure of the Friedman equations \eqref{eq:Friedman}-\eqref{eq:Friedman2} is crucial not only to control time integral quantities up to the Big Bang hypersurface (see Lemma \ref{lem:scale-factor}), but also to ensure that certain terms in the evolution that would otherwise cause large divergences contribute with favourable sign (see the arguments in Lemma \ref{lem:en-est-SF} as well as Lemma \ref{lem:en-error-cancellation}). It turns out that the sectional curvature entering the Friedman equations actually is not of key importance to large parts of the Big Bang stability analysis\changefinal{: The leading order behaviour }of the scale factor toward the Big Bang singularity is determined via the Friedman equation \eqref{eq:intro-ref2} by the matter term, not the sectional curvature. This indicates that our method might extend to different settings.\\

\textbf{Gauge choice, commutation method and Bel-Robinson variables.} We commute the resulting elliptic-hyperbolic Einstein system with the Laplace-Beltrami operator $\Lap_G$ with respect to the rescaled evolved spatial metric \change{$G(t,x)$ }to obtain higher order energy control. Commuting with this operator has the advantage of leaving many integration-by-parts \change{identities }intact. These are needed to provide specific cancellations, e.g., \change{to cancel $\Lap^{\frac{L}2+1}\phi$-terms arising from the wave equation when computing $\del_t\E^{(L)}(\phi,\cdot)$. We also note that the only feature of the adapted metric we use is that it is close to $\gamma$, and do not use any further information on the geometry, e.g., by choosing a specific reference frame in our commutation method. Further, we employ CMC gauge with zero shift to avoid badly behaved shift terms (see Remark \ref{rem:why-not-CMCSH}).}\\

We still, however, need to deal with the Ricci term in the evolution equation for the second fundamental form. To \change{this end}, we consider the Bel-Robinson variables $E$ and $B$ which are $\Sigma_t$-tangent symmetric tracefree $(0,2)$-tensors and contain all information of the spacetime Weyl tensor $W[\g]$ (see Subsection \ref{subsec:BR}). Suitably projecting the Gauss-Codazzi equations admits additional constraint equations in terms of $E$ and $B$ that allow us to replace the Ricci tensor at the \enquote{cost} of introducing Bel-Robinson energies into the formalism\change{, see \eqref{eq:constr-E} and the rescaled version \eqref{eq:REEqConstrE}}. Further, $E$ and $B$ satisfy a Maxwell-type system (see Lemma \ref{lem:EEqBR}) that can be exploited to obtain energy estimates and, as with the other evolution equations, is well adapted to commutation with $\Lap_G$.\\

\textbf{A priori low order $C_G$-control.}  By applying the bootstrap assumptions on $\mathcal{C}$ to the evolution equations, we can immediately deduce improved low order estimates in $C_G^{l}$ for $l\geq 10$ for the solution variables by inserting them into the respective evolution equations (see Lemma \ref{lem:AP}), as well as via the maximum principle for the lapse (see Lemma \ref{lem:lapse-maxmin}). These usually still diverge slightly, mostly due to the asymptotic behaviour of $G$. However and crucially to our argument, at order $0$, the renormalized time derivative \change{$\Psi$ }of the wave, the rescaled tracefree part \change{$\Sigma$ }of the second fundamental form and the rescaled Bel-Robinson variable $\RE$ are in fact $K\epsilon$-small in $C^0_G$ on the bootstrap interval (see Lemma \ref{lem:APzero}). \change{If these estimates did not hold, it would lead to terms that diverge at order $\O{a^{-3-c\sqrt{\epsilon}}}$ in the differential inequalities, and thus cause exponential energy blow-up of order $\O{e^{a^{-c\sqrt{\epsilon}}}}$ that we could no longer control. This behaviour is closely related to the fact that $\Psi$ and $\Sigma$ converge toward footprint states on the Big Bang hypersurface that remain $K\epsilon$-small (see \eqref{eq:asymp-Psi} and \eqref{eq:asymp-K}), and then pass this convergence on to $\lvert\RE\rvert_G$ (see \eqref{eq:asymp-E}).}\\

\textbf{Energy estimates and hierarchy.} The main part of the analysis is establishing various energy estimates.
\begin{itemize}
\item For the \underline{lapse} (see Section \ref{sec:lapse}), the relevant estimates are direct results of the elliptic lapse equations \change{\eqref{eq:REEqLapse1}-\eqref{eq:REEqLapse2}. The non-lapse terms on the right hand side \changefinal{of \ref{eq:REEqLapse1} }only diverge slightly toward the Big Bang, in contrast to the divergence at \changefinal{order }$a^{-4}$ in \eqref{eq:REEqLapse1}, and thus allows one to show that, at lower derivative order, the lapse converges to $1$. However, since the right hand side of \eqref{eq:REEqLapse2} contains the scalar curvature of $G$, this estimate loses derivatives. On the other hand, \eqref{eq:REEqLapse1} does not lose derivatives, and the elliptic nature in fact allows one to \changefinal{estimate }lapse energies of order $L+2$ by energies in $\Sigma$ and the scalar field of \changefinal{order }$L$. This makes it possible to control the higher order lapse term occurring, for example, in \eqref{eq:REEqSigma}, without losing regularity. Conversely, both of these gains in regularity are at the cost of losing powers of $a$. In short, \eqref{eq:REEqLapse2} is needed to establish the asymptotic behaviour of the lapse, and \eqref{eq:REEqLapse1} to obtain improved energy bounds as a whole.}
\item The core \underline{matter} energy estimate (see Lemma \ref{lem:en-est-SF}) relies on delicate cancellations when computing the time derivative of $\E^{(L)}(\phi,\cdot)$. While we derive this in a fashion that differs from the energy flux method used in \cite{Speck2018}, the necessary cancellations to arrive at Lemma \ref{lem:en-est-SF} are similar. 
\item The (rescaled) tracefree component of the \underline{second fundamental form} $\Sigma$ (see Lemma \ref{lem:en-est-Sigma}) and the (rescaled) \underline{Bel-Robinson variables} $\RE$ and $\RB$ (see Lemma \ref{lem:en-est-BR}) need to be treated simultaneously to deal with the leading curvature term in the evolution of the former by inserting a constraint equation in which $\RE$ occurs as the leading term (see \eqref{eq:comeq-Ham-BR}). However, the matter terms within the evolution of $\RE$ and $\RB$ contain, firstly, terms where we again need very precise estimates to show that they do not contribute large $a^{-3}$-divergences, and, secondly, matter terms that lose one order of derivative.\\
\change{This order of regularity can be regained using }the momentum constraint equation \eqref{eq:comeq-mom-div} and its Bel-Robinson counterpart \eqref{eq:comeq-mom-curl} containing $\RB$, {which leads }to a div-curl-system for $\Sigma$ \change{(see Lemma \ref{lem:en-est-Sigma-top}). This is, again, at the cost of losing powers of $a$.}
\item As a result, the \underline{core Gronwall argument} performed in Proposition \ref{prop:en-bs-imp} combines energies for the matter variables, $\Sigma$ and the Bel-Robinson variables, as well as energies for $\Ric[G]$. \change{In particular, the curvature energies are necessary to handle commutation errors within the energy estimates, and improved bounds on them need to be obtained to apply the coercivity results in Lemma \ref{lem:Sobolev-norm-equivalence-improved} -- else, none of energy improvements would extend to improved Sobolev norm bounds and the bootstrap argument would not close. }\\
As many of the a priori $C_G$-norm estimates add small additional divergences, it is necessary to perform an induction over derivative orders within this mechanism to deal with lower order error terms. Since $\Lap_G$ is elliptic, it is sufficient to perform this for even orders. \change{Along with energies at order $L\in 2\N_0$, the total energy also includes the energy controlling $\Sigma$ as well as the scalar field and curvature energies at order $L+1$, appropriately scaled to account for the degenerate elliptic estimate for $\Sigma$ from Lemma \ref{lem:en-est-Sigma-top}. This remedies the derivative loss in the Bel-Robinson energy and allows one to improve the total energy at each order until reaching $L=18$, at which point the bootstrap argument can be closed.}
\item Note that \textit{the \underline{metric} itself does not enter the core energy mechanism}. In fact, trying to replace control of the Ricci tensor by control of $G$ is likely too imprecise in dealing with high order curvature errors. Instead, control of $G-\gamma$ \change{and }$\Gamma[G]-\Gamhat[\gamma]$ is a consequence of a simple integral energy inequality and the improvements achieved for $\Sigma$ and matter variables (see Lemma \ref{lem:norm-est-G} and Corollary \ref{cor:H-imp}). Since we cannot utilize any additional structure in dealing with the metric, we have to construct our argument carefully to allow for the metric control to be weaker than what one gets for the core variables, while still being sufficiently strong to constitute an improvement and allowing to switch between $H_G$ and $H_\gamma$ (and, respectively, $C_G$ and $C_\gamma$) norms.
\end{itemize}
We also point to Remark \ref{rem:en-est-strat} for a more detailed sketch of how the integral inequalities for the core Gronwall argument are structured and how this leads to the bootstrap improvement for the energies.

\subsubsection{Future stability and connecting the regions}

We follow similar lines as in \cite{AndFaj20, FajWy21} to prove that near-FLRW spacetimes in negative spatial geometry are future stable. \change{Since $\del_t\phi$ decays like $a^{-3}\simeq t^{-3}$ in the reference spacetime, the sectional curvature becomes dominant in the Friedman equations and the scale factor approaches that of Milne spacetime as $t$ approaches $\infty$. Hence, if one moves sufficiently far \changefinal{to the future}, choosing near-FLRW data with a homogeneous scalar field is equivalent to choosing near-vacuum data. Thus, }what we prove first in Section \ref{sec:fut} is future stability of near-Milne spacetimes under the Einstein scalar-field system. Once this is established, we argue in Section \ref{sec:full-stab} how early near-FLRW initial data evolves to data that is sufficiently close to Milne for large enough times, which is essentially a consequence of the scale factor and the (physical) mean curvature approaching that of Milne, up to a multiplicative constant.\\

In terms of dealing with geometric and elliptic estimates, we can essentially carry over the results of \cite{AndFaj20}, as was also done in \cite{FajWy21}, by working in CMCSH gauge and verifying that the matter components are indeed only perturbative terms within the geometric evolution.

This leaves only the scalar field to be examined. Here, we introduce corrective terms to the energies (see Definition \ref{def:fut-stab}) which yield decay estimates for the corrected scalar field energy (see Lemmas \ref{lem:fut-en-est-ESF0} and \ref{lem:fut-en-est-ESF}). That these energies are coercive (see Lemmas \ref{lem:fut-ESF-coercivity} and \ref{lem:fut-Sob-est}) requires the aforementioned lower bound for the first positive eigenvalue of \changefinal{$-\Lap_\gamma$}.

\begin{remark}[Why not use CMCSH gauge to prove Big Bang stability?]\label{rem:why-not-CMCSH}
\change{One might consider applying this gauge to Big Bang stability as well since this is precisely the choice of gauge turning the geometric evolution into a wave-like system in $(g,k)$, which seems simpler than our chosen approach in CMC gauge with zero shift. }In particular, this would also not rely on any choice of reference frame, and keep the wave structure of the geometric evolution intact, unlike when using Bel-Robinson variables. However, the issue with this approach lies in the shift equation, which would take the following form for the rescaled shift vector $X=a^3\tilde{X}$:
\begin{align*}
\Lap_GX^l+\Ric[G]^l_mX^m=&-2(N+1)(G^{-1})^{im}(G^{-1})^{jn}\Sigma_{ij}\left(\Gamma_{mn}^l-\Gamhat_{mn}^l\right) \numberthis\label{eq:REEqShift}\\
&+2(G^{-1})^{im}\nabla_iX^n\left(\Gamma_{mn}^l-\Gamhat_{mn}^l\right)\\
&\,+\langle\text{error terms in lapse and matter}\rangle
\end{align*}
As a result, the first term has to be expected to diverge at the same rate as the metric, i.e., we expect even low order norms of $\tilde{X}$ to behave like $a^{-3-c\sqrt{\epsilon}}$ at best up to small prefactors. However, computing the time derivative of an integral over $\lvert G-\gamma\rvert_G^2$ (or derivatives thereof) becomes the integral over the $(\del_t-\Lie_{\tilde{X}})$-derivative of this quantity, and hence we get explicit terms of the form $\Lie_{\tilde{X}}\gamma$ which always exist at highest order and diverge worse than $t^{-1}$. In short, the fact that the metric cannot be expected to converge to a footprint state leads to leading order terms in the differential energy estimates to carry strongly divergent pre-factors in CMCSH gauge. This obstructs improvements in a tentative bootstrap argument.
\end{remark}

\subsection{Paper outline}\label{subsec:intro-paper-outline}

\begin{itemize}
\item Sections \ref{sec:prelim}-\ref{sec:main-thm} cover the proof of Big Bang stability:
\begin{itemize}
\item In Section \ref{sec:prelim}, we introduce notation and provide the necessary information on the FLRW background solution as well as the equations relevant to the subsequent analysis.
\item Then, in Section \ref{sec:norm-en-bs}, we discuss the solution norms and energies and state the initial data and bootstrap assumptions.
\item In Section \ref{sec:ap}, improved low order $C_G$-norm estimates that follow directly from the bootstrap assumptions are established, along with additional formulas and a priori estimates.
\item Section \ref{sec:lapse} concerns the elliptic estimates for the lapse. 
\item \change{In Section \ref{sec:en-est}, }we discuss the energy and Sobolev norm estimates for all other variables, \change{all of which are integral estimates except for the aforementioned elliptic estimate for $\Sigma$, as well as a norm bound for $\nabla\phi$ that is not needed for the energy improvement.}
\item These are all combined in Section \ref{sec:bs-imp} to improve the bootstrap assumptions -- first for the energies, then for $\mathcal{H}$ and finally $\mathcal{C}$. \item In Section \ref{sec:main-thm}, we show how this bootstrap argument implies the main Big Bang stability result (see Theorem \ref{thm:main}, which is the formal version of Theorem \ref{thm:main-past}).
\end{itemize}
\item Section \ref{sec:fut} contains the proof of near-Milne future stability. 
\item \change{In Section \ref{sec:full-stab}, }we show that this is sufficient for future stability of near-FLRW spacetimes, proving Theorem \ref{thm:main-full}.
\item The appendices (Sections \ref{sec:appendix}-\ref{sec:appendix-fut}) collect various basic formulas and commutator expressions as well as error terms and how these can be estimated.
\end{itemize}

\noindent \textbf{Acknowledgements.} \changefinal{This research was funded in whole or in part by the Austrian Science Fund (FWF) 10.55776/Y963 and 10.55776/P34313. For open access purposes, the author has applied a CC BY public copyright license to any author-accepted manuscript version arising from this submission. }\change{Liam Urban is a recipient of a DOC Fellowship of the Austrian Academy of Sciences at the Faculty of Mathematics at the University of Vienna. }Liam Urban also thanks the German Academic Scholarship Foundation (Studienstiftung des deutschen Volkes) for their scholarship. The authors thank \changefinal{Ian Agol, Klaus Kröncke, Michael Lipnowski, Dalimil Mazac and Roman Prosanov }for their help in seeking out \changefinal{numerical and analytic }evidence for the spectral condition used in Section \ref{sec:fut}, and Michael Eichmair \change{and the anonymous referees }for \changefinal{their detailed, constructive and warm }feedback on a previous version of this manuscript. \changefinal{The authors would like to thank the Erwin Schrödinger International Institute for Mathematics and Physics in Vienna for hosting the authors during the Thematic Programs \enquote{Mathematical Perspectives of Gravitation beyond the Vacuum Regime}, \enquote{Spectral Theory and Mathematical Relativity} and \enquote{Nonlinear Waves and General Relativity} during which research for this work was done and parts of this paper were written.}


\section{Big Bang stability: Preliminaries}\label{sec:prelim}

\subsection{Notation}\label{subsec:notation}

\subsubsection{Foliations}\label{subsubsec:notation-foliation}

\change{On a spacetime manifold $(\M,\g)$, we assume the existence of a spacelike Cauchy hypersurface $\Sigma_{t_0}$ that is diffeomorphic to $M$. As \changefinal{we argue }in Remark \ref{rem:CMC-hypersurface}, we can assume without loss of generality that it has constant mean curvature. We will ultimately show that there exists a time function $t$ such that the past of $\Sigma_{t_0}=t^{-1}(t_0)$ can be foliated by $\Sigma_{s}=t^{-1}(s)$ for $s\in (0,t_0)$, and that where the solution exists, this is at least possible up to some $T\in(0,t_0)$. These constant time surfaces are then also spacelike Cauchy hypersurfaces diffeomorphic to $M$ and CMC. We will use this notation throughout with little comment and often simply view $\Sigma_s$ as $\{s\}\times M$.}

\subsubsection{Metrics}\label{subsubsec:notation-metric}

The spacetime metric $\g$ on $\M$ takes the general form
\[\g=-n^2dt^2+g_{ab}dx^adx^b\]
where $n\equiv n(t,x)$ is the lapse function and $g\vert_{\Sigma_t}\equiv g\vert_{\Sigma_t}(t,x)$ is a Riemannian metric on $\Sigma_t$. We will often simply denote the spatial metric by $g$. Furthermore, we denote the rescaled spatial metric \changefinal{by $G_{ij}=a^{-2}g_{ij}$ (see Definition \ref{def:rescaled}) }and the tensor-field induced by the matrix inverse of $(G_{ij})$ by $G^{-1}$. Similarly, $\det g$ and $\det G$ are also meant as the determinants in the matrix sense. Finally, we define $\vol{g}$ and $\mu_g$ as the volume form and volume element with regard to $g$, and the same for $\gamma$ and $G$.

\subsubsection{Indices and coordinates}\label{subsubsec:notation-indices}

Greek indices $\alpha,\beta,\dots,\mu,\nu,\dots$ run from $0$ to $3$, lowercase latin indices $a,b,\dots$, $i,j,\dots$ from $1$ to $3$. The spatial indices on some coordinate neighbourhood $V\subseteq\M$ are always with regard to the local frame induced by coordinates $(x^1,x^2,x^3)$ on $M$, applied to each $V\cap\Sigma_t$ by the standard embedding where this intersection is non-empty. The index $0$ always denotes components relative to $\del_0=n^{-1}\del_t$, where $\del_t$ is the derivative associate to the time function $t$. The Levi-Civita connections associated to $\g$, respectively $g$ and $G$, are denoted by $\nabbar$, respectively $\nabla$.\footnote{Note that $g$ and $G$ have the same Levi-Civita-connection since, on every hypersurface $\Sigma_t$, they are related by a scalar multiple. }Additionally, for the hyperbolic spatial reference metric $\gamma$ on $M$ (see  Definition \ref{def:spatial-mf}), we write the Levi-Civita connection as $\nabhat$.\\

Whenever we raise or lower Greek (resp. Latin) indices without additional notation, it is with regard to $\g$ (resp. $g$). When we raise indices of a tensor $\mathfrak{T}$ with regard to the rescaled spatial metric $G$, we flag this by writing $\mathfrak{T}^\sharp$. We never raise or lower with respect to $\gamma$. Refer to Section \ref{subsubsec:notation-derivatives} as to how we distinguish taking multiple covariant derivatives from index raising.

\subsubsection{$\Sigma_t$-tangent tensors}\label{subsubsec:notation-tangent}

For any $\Sigma_t$-tangent tensor ${\xi^{\alpha_1\dots\alpha_r}}_{\beta_1\dots\beta_s}$, we write ${{\xi(t)}^{a_1\dots a_r}}_{b_1\dots b_r}$ for the $\g$-orthogonal projection of $\xi$ onto the hypersurface $\Sigma_t$. When clear from context, we will drop the time dependency in notation.

\subsubsection{Sign conventions}\label{subsubsec:notation-sign}

Within this paper, the second fundamental form with regard to $\Sigma_t$ is defined \change{as the $(0,2)$-tensor $k$ given by
\[k(X,Y)=-\g(\nabbar_X\del_0,Y)\,,\]
where $X$ and $Y$ are $\Sigma_t$-tangent vectors}. The Riemann curvature tensor of $\g$ is taken to be
\change{\[\nabbar_\alpha\nabbar_\beta Z_\gamma-\nabbar_\beta \nabbar_{\alpha}Z_\gamma={\Riem[\g]_{\alpha\beta\gamma}}^\delta Z_\delta\]}
for the covariant vector field $(Z_\mu)$, and the analogous convention holds for all other Riemann curvature tensors that appear.

\subsubsection{Constants}\label{subsubsec:notation-constants}

For two nonnegative scalar functions $\zeta_1,\zeta_2$, we write $\zeta_1\lesssim\zeta_2$ \change{if and only if } there exists a constant $K>0$ such that $\zeta_1\leq K\zeta_2$. This implicit constant may depend on information from the FLRW reference solution at the starting point of the evolution (in particular on $\gamma$ and $a(t_0)$, see \change{Definition \ref{def:spatial-mf}}) and combinatorial quantities. We extend this notation to a real function $\zeta_1^\prime$ by
\[\zeta_1^\prime\lesssim\zeta_2 :\Leftrightarrow \max\left(\zeta_1^\prime,0\right)\lesssim\zeta_2\,.\]
Additionally, we write $\zeta_1\simeq\zeta_2$ \change{if and only if } $\zeta_1\lesssim\zeta_2\lesssim\zeta_1$ is satisfied.\\


\subsubsection{Tensor contractions}\label{subsubsec:notation-contract}

We denote by $\epsilonLC_{\alpha\beta\gamma\delta}$ the Levi Civita tensor with regard to $\g$ and define the Levi-Civita tensor on spatial hypersurfaces $\Sigma_t$ by $\epsilonLC[g]_{ijk}=\epsilonLC_{0ijk}$. Notice that this corresponds to the Levi-Civita tensor associated to $g$. Further, $\epsilonLC[G]_{ijk}=a^{-3}\epsilonLC[g]_{ijk}$ is the Levi-Civita tensor with respect to the rescaled metric $G$ (see \eqref{eq:rescalingGK}).

For $\Sigma_t$-tangent $(0,2)$-tensors $A,\tilde{A}$ and vector field $v$, we define the following objects \change{as in \cite[Section A.2]{AM03}}:
\begin{align*}
A\cdot\tilde{A}&=\changefinal{A_{ab}\tilde{A}^{ab}}=\langle A,\tilde{A}\rangle_g\\
(A\odot_g\tilde{A})_{ij}&=\change{A_{ik}{\tilde{A}^k}_{\ j}}\\
(A\wedge \tilde{A})_i&={\epsilonLC_i}^{jp}{A_j}^q\tilde{A}_{qp}\\
(v\wedge A)_{ab}&={\epsilonLC_a}^{cd}v_cA_{db}+{\epsilonLC_b}^{cd}v_cA_{ad}\\
(A\times \tilde{A})_{ij}&={\epsilonLC_{i}}^{ab}{\epsilonLC_j}^{pq}A_{ap}\tilde{A}_{bq}+\frac13(A\cdot\tilde{A})g_{ij}-\frac13(tr_gA\cdot tr_g\tilde{A})g_{ij}\\
(\curl A)_{ij}=(\curl_g A)_{ij}&=\frac12\left[{\epsilonLC_i}^{cd}\nabla_dA_{cj}+{\epsilonLC_j}^{cd}\nabla_dA_{ci}\right]\\
(\div_g A)_i&=\nabla^bA_{ib}
\end{align*}
The operations $\odot_G$, $\langle\cdot,\cdot\rangle_G$ and $\div_G$ are defined analogously, with all indices raised and lowered by $G$ instead of $g$. Finally, for two $(0,1)$-tensors $\pi,\tilde{\pi}$, we denote their symmetrized product by 
\[(\pi\otimes\tilde{\pi})_{ij}=\frac12\left(\pi_i\tilde{\pi}_j+\pi_j\tilde{\pi}_i\right)\,.\]
\change{For pointwise estimates of these quantities, refer to Lemma \ref{lem:BelRobinsonLemmas}.}

\subsubsection{Schematic term notation}\label{subsubsec:schematic-notation}
We will denote as $\mathfrak{T}_1\ast\dots\ast\mathfrak{T}_l$, \change{where $\mathfrak{T}_i$ }are $\Sigma_t$-tangent tensors, any multiple of $(\mathfrak{T}_i)$, with regard to the rescaled adapted spatial metric $G$ or as standard multiplication if no summation over indices occurs between factors. Constant prefactors and contractions with regard to $G$ are also \change{suppressed }in this notation. 

When working with terms where such notation is used, we will estimate these inner products by $\lesssim \prod_{i=1}^l\lvert\mathfrak{T}_i\rvert_G$, making any constant in front irrelevant, and further we can view any contraction with regard to $G$ as a product of the non-contracted tensor $\mathfrak{T}$ with $G$ or $G^{-1}$, and estimate that up to constant by $\lvert G\rvert_G\lvert\mathfrak{T}\rvert_G$, where the first factor is simply $\sqrt{3}$.\\
For similar products with respect to $\gamma$, we denote them by $\ast_{\gamma}$.

\subsubsection{On multiple derivatives of variables}\label{subsubsec:notation-derivatives}

For a scalar function $\zeta$, an $(r,s)$-tensor field $\mathfrak{T}$ and \textit{capitalized} integers $I,J,\change{\ldots}\in\N_0$, we denote by $\nabla^I\zeta$ and $\nabla^I\mathfrak{T}$ the tensors $\nabla_{l_1}\dots\nabla_{l_I}\zeta$ and\linebreak
$\nabla_{l_1}\dots\nabla_{l_I}{\mathfrak{T}^{i_1\dots i_r}}_{j_1\dots j_s}$. We extend this notation to other covariant derivatives analogously. To avoid potential ambiguity with an index raised by \change{$g$}, we will apply the following convention:
\begin{itemize}
\item If a covariant derivative carries an uppercase letter, a formula with more than one symbol or a positive integer in its superscript, this refers to taking a derivative of that order.
\item If a covariant derivative carries a lowercase letter or $0$ in its superscript, this refers to an index.
\end{itemize}
Further, we will only apply this notation where the precise distribution of indices is not important (e.g.\,in schematic notation, see Section \ref{subsubsec:schematic-notation}).

\subsection{FLRW spacetimes and the Friedman equations}\label{subsec:FLRW}

Herein, we collect the properties of the reference FLRW solution to the Einstein scalar-field system in CMC-transported coordinates. Our main focus will lie on the behaviour of the scale factor as determined by the Friedman equations. Before moving on to that, we collect the information on the spatial geometry we will need:

\begin{definition}[Hyperbolic reference geometry]\label{def:spatial-mf}
$(M,\gamma)$ is a three-dimensional, connected, closed, orientable Riemannian manifold with constant sectional curvature $-\frac19$, and hence Ricci tensor $\Ric[\gamma]=-\frac29\gamma_{ij}$ and scalar curvature $R[\gamma]=-\frac23$. 
\end{definition}
\change{\begin{remark}[Orientability is not a restriction]
We assume that $M$ is orientable for the sake of simplicity. If $M$ should be non-orientable, we may pass the initial data to the oriented double cover and solve the problem there. Since the result is equivariant with respect to the double covering map, this then solves the original problem.
\end{remark}}

With this in hand, we can express our classical \change{family of solutions }to the Einstein scalar-field system as follows:

\begin{lemma}[FLRW solutions and Friedman equations]\label{lem:FLRW}
Consider FLRW spacetimes $(\M,{\g}_{FLRW})$ with $\M=(0,\infty)\times M$, where $(M,\gamma)$ is as in Definition \ref{def:spatial-mf} and where
\begin{equation}\label{eq:FLRW-metric}
{\g}_{FLRW}=-dt^2+a(t)^2\gamma
\end{equation}
holds for some $a\in C^\infty((0,\infty))$, with the conventions $a(0)=0$ and $\dot{a}(T)>0$ for some arbitrary $T>0$. Further, choose a (smooth) scalar function $\phibar$ such that
\begin{equation}\label{eq:FLRW-wave}
\del_t\phibar=C\cdot a(t)^{-3},\quad \nabla\phibar=0, \quad \square_{\g_{FLRW}}\phibar=0\,.
\end{equation}
Such a pair $(\g_{FLRW},\phibar)$ solves the Einstein scalar-field system \eqref{eq:ESF1}-\eqref{eq:EMT} on $\M$ if and only if $a$ satisfies \change{the Friedman equation
\begin{equation}\label{eq:Friedman}
\dot{a}=\sqrt{\frac19+\frac{4\pi}3C^2a^{-4}}\,.
\end{equation}
In particular, one has
\begin{equation}\label{eq:Friedman2}
\ddot{a}=-\frac{8\pi}3C^2a^{-5}\,.
\end{equation}}
\end{lemma}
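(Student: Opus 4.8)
The plan is to verify directly that the FLRW ansatz solves the Einstein scalar-field system by computing both sides of \eqref{eq:ESF1}–\eqref{eq:EMT} for the metric \eqref{eq:FLRW-metric} and scalar field \eqref{eq:FLRW-wave}, and reading off the conditions on $a$. First I would record the geometric data of $\g_{FLRW} = -dt^2 + a(t)^2\gamma$: the nonvanishing Christoffel symbols, and then the Ricci tensor components. With $\Ric[\gamma] = -\tfrac29\gamma$ this is a standard computation giving $\Ric[\g_{FLRW}]_{00} = -3\ddot a/a$ and $\Ric[\g_{FLRW}]_{ij} = \left(a\ddot a + 2\dot a^2 + 2\cdot(-\tfrac19)\right)\gamma_{ij} = \left(a\ddot a + 2\dot a^2 - \tfrac29\right)\gamma_{ij}$, with mixed components vanishing; hence the scalar curvature is $R[\g_{FLRW}] = 6\ddot a/a + 6\dot a^2/a^2 - \tfrac{2}{3a^2}$.

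Next I would compute the matter side. Since $\phibar$ depends only on $t$ with $\del_t\phibar = C a^{-3}$, one has $\nabbar^\alpha\phibar\nabbar_\alpha\phibar = -(\del_t\phibar)^2 = -C^2 a^{-6}$, so from \eqref{eq:EMT}, $T_{00} = (\del_t\phibar)^2 - \tfrac12(-1)(-C^2a^{-6}) = \tfrac12 C^2 a^{-6}$ and $T_{ij} = -\tfrac12 a^2\gamma_{ij}\cdot(-C^2 a^{-6}) = \tfrac12 C^2 a^{-4}\gamma_{ij}$, with $T_{0i} = 0$. I would also check the wave equation \eqref{eq:ESF2}: $\square_{\g_{FLRW}}\phibar = -\del_t^2\phibar - 3\tfrac{\dot a}{a}\del_t\phibar$ (using $\sqrt{-\det\g_{FLRW}} \propto a^3$), and indeed $\del_t(Ca^{-3}) + 3\tfrac{\dot a}{a}Ca^{-3} = -3Ca^{-4}\dot a + 3Ca^{-4}\dot a = 0$, so the scalar field equation holds identically for any $a$ — no constraint there.

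Then I would assemble the Einstein equations $\Ric[\g]_{\mu\nu} - \tfrac12 R[\g]\g_{\mu\nu} = 8\pi T_{\mu\nu}$ component by component. The $00$-equation becomes $-3\tfrac{\ddot a}{a} + \tfrac12\left(6\tfrac{\ddot a}{a} + 6\tfrac{\dot a^2}{a^2} - \tfrac{2}{3a^2}\right) = 3\tfrac{\dot a^2}{a^2} - \tfrac{1}{3a^2} = 8\pi\cdot\tfrac12 C^2 a^{-6}$, i.e. $\dot a^2 = \tfrac19 + \tfrac{4\pi}{3}C^2 a^{-4}$, which is exactly \eqref{eq:Friedman}. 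The $ij$-equation, after substituting $R[\g]$ and dividing by $\gamma_{ij}$, reduces to a second relation among $\ddot a, \dot a, a$; I would then show it is equivalent to \eqref{eq:Friedman2} \emph{given} \eqref{eq:Friedman}, e.g. by differentiating \eqref{eq:Friedman} in $t$ to get $2\dot a\ddot a = -\tfrac{16\pi}{3}C^2 a^{-5}\dot a$, hence $\ddot a = -\tfrac{8\pi}{3}C^2 a^{-5}$ as claimed (using $\dot a > 0$). Conversely, if the Einstein equations hold then both the $00$- and $ij$-components hold, so in particular \eqref{eq:Friedman} holds; and \eqref{eq:Friedman2} then follows, establishing the stated equivalence. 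The main (and only real) obstacle is bookkeeping: getting the Ricci and scalar curvature of the warped product correct with the paper's sign conventions (note the curvature convention fixed in Section \ref{subsubsec:notation-sign}), and keeping track of the factors of $a$ when raising/lowering indices with $\g_{FLRW}$ versus $\gamma$. Everything else is routine substitution.
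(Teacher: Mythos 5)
Your proposal is correct and takes essentially the same route as the paper: the paper's own proof simply cites the standard warped-product Ricci computation (O'Neill, p.\,345) and notes that \eqref{eq:Friedman2} follows from differentiating $\dot a^2$, which is exactly the structure of your argument with the bookkeeping written out. Your observation that the $ij$-equation is redundant once the $00$-equation and its time derivative are imposed (effectively the contracted Bianchi identity for FLRW) and that the wave equation holds identically are the correct reasons the ``if and only if'' reduces to the single Friedman constraint.
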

\begin{proof}
The first statement follows from explicitly computing $\Ric[\g]$ as in \cite[p.345]{ONeill83}. That \eqref{eq:Friedman} implies \eqref{eq:Friedman2} follows simply by computing the derivative of $\dot{a}^2$.
\end{proof}

In the subsequent analysis, the following properties of $a$ that follow from \eqref{eq:Friedman} will be crucial for our analysis:

\begin{lemma}[Scale factor analysis]\label{lem:scale-factor}Let $a$ solve \eqref{eq:Friedman} with $a(0)=0$. Then $a$ is analytic on $(0,\infty)$ and extends to a continuous function on $[0,\infty)$ with $a(t)\simeq t^{\frac13}$ being satisfied near $t=0$. Further, for any $p>0$, there exist constants $c>0$ and $K_p>0$, where $c$ is independent of $p$ and $K_p$ depends analytically on $p$, such that, for any $t\in(t,t_0]$, one has
\begin{equation}\label{eq:a-exp-est}
\exp\left(p\int_t^{t_0}a(s)^{-3}\,ds\right)\leq K_pa(t)^{-cp}
\end{equation}
and
\begin{equation}\label{eq:a-integrals}
\int_t^{t_0}a(s)^{-3-p}\,ds\changefinal{\lesssim}\frac1p a(t)^{-p},\quad \int_t^{t_0}a(s)^{-3+p}\,ds\lesssim\frac{1}p\,.
\end{equation}
Moreover, for any $t\in(0,t_0]$ and any $q>0$, \change{there exist constants $c>0$ and $K>0$ which both are independent of $q$ such that one has
\begin{equation}\label{eq:log-est}
\int_t^{t_0}a(s)^{-3}\,ds\leq \frac{K}qa(t)^{-cq}\,.}
\end{equation}
Finally, \eqref{eq:Friedman} also implies
\begin{equation}\label{eq:diff-ineq-Friedman}
\sqrt{\frac{4\pi}3}C a^{-2}\leq \dot{a}
\end{equation}
\end{lemma}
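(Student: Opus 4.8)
\textbf{Proof proposal for Lemma \ref{lem:scale-factor}.}

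The plan is to extract everything from the single ODE \eqref{eq:Friedman}, $\dot a=\sqrt{\tfrac19+\tfrac{4\pi}3C^2a^{-4}}$, together with the boundary condition $a(0)=0$. First I would establish the near-zero asymptotics $a(t)\simeq t^{1/3}$: near $t=0$ the curvature term $\tfrac19$ is negligible compared to $\tfrac{4\pi}3C^2a^{-4}$, so $\dot a\approx \sqrt{\tfrac{4\pi}3}\,C\,a^{-2}$, i.e. $a^2\dot a\approx \sqrt{\tfrac{4\pi}3}\,C$, hence $\tfrac13\tfrac{d}{dt}(a^3)\approx \text{const}$ and $a^3\simeq t$. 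To make this rigorous I would note that $\dot a$ is strictly positive for $t>0$ so $a$ is strictly increasing and invertible near $0$; writing $t$ as a function of $a$ via $\tfrac{dt}{da}=(\tfrac19+\tfrac{4\pi}3C^2a^{-4})^{-1/2}$ and integrating from $0$ gives $t(a)=\int_0^a(\tfrac19+\tfrac{4\pi}3C^2s^{-4})^{-1/2}\,ds = \int_0^a s^2(\tfrac19 s^4+\tfrac{4\pi}3C^2)^{-1/2}\,ds$, an integrand which is continuous and behaves like $\sqrt{\tfrac3{4\pi}}\,C^{-1}s^2$ as $s\to0$; this yields $t\simeq a^3$ and the claimed continuous extension to $[0,\infty)$. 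Analyticity on $(0,\infty)$ is immediate since the right-hand side of \eqref{eq:Friedman} is analytic in $a$ wherever $a>0$ and $a$ stays positive there (Cauchy–Kovalevskaya / standard ODE analyticity).

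Next, the inequality \eqref{eq:diff-ineq-Friedman} is trivial: $\dot a=\sqrt{\tfrac19+\tfrac{4\pi}3C^2a^{-4}}\geq\sqrt{\tfrac{4\pi}3C^2a^{-4}}=\sqrt{\tfrac{4\pi}3}\,C\,a^{-2}$. For the exponential estimate \eqref{eq:a-exp-est}, the key is to bound $\int_t^{t_0}a(s)^{-3}\,ds$. Using \eqref{eq:diff-ineq-Friedman} in the form $a^{-3}\leq \sqrt{\tfrac3{4\pi}}\,C^{-1}\,a^{-1}\dot a = \sqrt{\tfrac3{4\pi}}\,C^{-1}\tfrac{d}{ds}(\log a)$, we get $\int_t^{t_0}a(s)^{-3}\,ds\leq \sqrt{\tfrac3{4\pi}}\,C^{-1}\big(\log a(t_0)-\log a(t)\big)=\sqrt{\tfrac3{4\pi}}\,C^{-1}\log\big(a(t_0)/a(t)\big)$. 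Set $c:=\sqrt{\tfrac3{4\pi}}\,C^{-1}$ (independent of $p$). Then $\exp\!\big(p\int_t^{t_0}a^{-3}\big)\leq \exp\!\big(cp\log(a(t_0)/a(t))\big)=a(t_0)^{cp}a(t)^{-cp}=:K_p\,a(t)^{-cp}$, with $K_p=a(t_0)^{cp}$ depending analytically on $p$. This simultaneously proves \eqref{eq:log-est}: from the bound just displayed, $\int_t^{t_0}a(s)^{-3}\,ds\leq c\log(a(t_0)/a(t))$, and using the elementary inequality $\log x\leq \tfrac1q x^q$ for $x\geq1$, $q>0$ (applied with $x=a(t_0)/a(t)\geq1$ for $t\leq t_0$), gives $\int_t^{t_0}a(s)^{-3}\,ds\leq \tfrac{c}{q}\big(a(t_0)/a(t)\big)^q=\tfrac{K}{q}a(t)^{-cq}$ after renaming constants (here I would, if needed, slightly adjust the constant names so the exponent matches the stated form, absorbing $a(t_0)^q$ into $K$ only if $q$ ranges in a bounded set — or more cleanly keep the exponent as $cq$ with $K$ independent of $q$ as stated, which works since $a(t_0)\geq a(t)$ forces $a(t_0)^q/a(t)^{cq}$... — this is the one spot requiring a little care, see below).

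For \eqref{eq:a-integrals}, I would again change variables or bound directly: $\int_t^{t_0}a(s)^{-3-p}\,ds$. Writing $a^{-3-p}=a^{-p}\cdot a^{-3}\leq a^{-p}\cdot c\,a^{-1}\dot a = \tfrac{c}{?}$ — actually the cleaner route is $a^{-3-p}\leq \sqrt{\tfrac3{4\pi}}\,C^{-1}a^{-3-p}\cdot a^{2}\dot a\cdot\text{(from }\dot a\geq\sqrt{\tfrac{4\pi}3}Ca^{-2})$, hmm; instead use $\tfrac{d}{ds}\big(a^{-p}\big)=-p\,a^{-p-1}\dot a\leq -p\,a^{-p-1}\sqrt{\tfrac{4\pi}3}Ca^{-2}\cdot\text{no}$ — the sign: $a$ increasing so $a^{-p}$ decreasing, $\tfrac{d}{ds}a^{-p}=-pa^{-p-1}\dot a\leq -pa^{-p-1}\cdot\sqrt{\tfrac{4\pi}3}Ca^{-2}$, giving $a^{-p-3}\leq -\tfrac1{p}\sqrt{\tfrac3{4\pi}}C^{-1}\tfrac{d}{ds}a^{-p}$, so $\int_t^{t_0}a^{-3-p}\,ds\leq \tfrac1p\sqrt{\tfrac3{4\pi}}C^{-1}\big(a(t)^{-p}-a(t_0)^{-p}\big)\leq \tfrac{c}{p}a(t)^{-p}$, matching the stated bound up to the harmless constant $c$ (or, if the paper wants exactly $\tfrac1p$, absorb $c$ into the $\lesssim$ as the lemma in fact writes $\leq$; I would present it with the implicit constant or verify $c\leq1$ is not claimed and use $\lesssim$). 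The second half $\int_t^{t_0}a(s)^{-3+p}\,ds\lesssim\tfrac1p$ for $t$ near $0$: here for $p<3$ the integrand is $a^{-3+p}$ with $a^3\simeq s$, so $a^{-3+p}\simeq s^{(-3+p)/3}=s^{-1+p/3}$ which is integrable near $0$ with $\int_0^{t_0}s^{-1+p/3}\,ds=\tfrac3p t_0^{p/3}\lesssim\tfrac1p$; for $p\geq3$ the integrand is bounded near $0$ and the integral is $\lesssim 1\lesssim\tfrac1p\cdot p\lesssim$ — one needs $\lesssim\tfrac1p$, which fails for large $p$ unless... I suspect the intended range is $0<p$ small (as it is always used), or the statement tacitly restricts $p\in(0,3)$; I would state the bound for $p$ in a bounded interval, which is all that is used, and the $\tfrac3p t_0^{p/3}$ computation via the substitution $u=a^3\simeq s$ handles it.

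\emph{Main obstacle.} The only genuinely delicate point is matching the \emph{precise} constant dependence in \eqref{eq:a-exp-est} and \eqref{eq:log-est} (that $c$ is independent of $p$, resp. $q$, and $K$ is independent of $q$) — everything reduces to the clean identity $\int_t^{t_0}a^{-3}\,ds\leq c\log(a(t_0)/a(t))$ with $c=\sqrt{3/(4\pi)}\,C^{-1}$, after which \eqref{eq:log-est} follows from $\log x\le \tfrac1q x^q$ and \eqref{eq:a-exp-est} from exponentiating, so the real work is just organizing these elementary inequalities so the stated uniformity is manifest. The asymptotics $a\simeq t^{1/3}$ via the inverse-function integral $t(a)=\int_0^a s^2(\tfrac19s^4+\tfrac{4\pi}3C^2)^{-1/2}\,ds$ is routine.
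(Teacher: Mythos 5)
Your proof is correct and, modulo minor reorganization, follows the same mechanism the paper uses: the crucial identity is $a^{-3}\simeq\dot a/a$ (which you extract from $\dot a\geq\sqrt{4\pi/3}\,C\,a^{-2}$, whereas the paper phrases it as the change of variables $y=a(s)$, $dy=\dot a\,ds$), and all of \eqref{eq:a-exp-est}--\eqref{eq:log-est} fall out of antiderivative computations once this is in hand. The one substantive difference is that you derive the near-zero asymptotics $a\simeq t^{1/3}$ directly, by inverting the Friedman ODE and estimating $t(a)=\int_0^a s^2(\tfrac19 s^4+\tfrac{4\pi}3 C^2)^{-1/2}ds$; the paper instead cites \cite[Lemma 2.1]{Urban22} for this and for the bound $\int_t^{t_0}a^{-3}\lesssim 1+|\log t|$. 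Your self-contained route is arguably preferable. For \eqref{eq:log-est} you use $\log x\leq\tfrac1q x^q$ applied to $x=a(t_0)/a(t)$, while the paper compares $a^{-3}\leq a^{-3-q}$ under the normalization $a|_{(t,t_0)}\subset(0,1)$ and then invokes the first half of \eqref{eq:a-integrals}; both are elementary and need the same WLOG (small $t_0$, or equivalently $a(t_0)\leq 1$, which the paper makes explicit in a footnote). Your hedges about constant uniformity are on target: the paper's own proof of the first half of \eqref{eq:a-integrals} gives only $\lesssim\tfrac1p a(t)^{-p}$ rather than the stated $\leq\tfrac1p a(t)^{-p}$, and the second half requires the $a(t_0)\leq 1$ normalization to get a $p$-independent implicit constant; both issues are harmless given Remark \ref{rem:scale-factor} (the lemma is only invoked with $p$ a small power of $\epsilon$) and the footnote, and your change-of-variables bound $\tfrac{c}p a(t_0)^p\leq\tfrac{c}p$ in fact handles all $p>0$ uniformly once that normalization is in place, so the case split $p<3$ versus $p\geq3$ you worried about is unnecessary.
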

\begin{remark}\label{rem:scale-factor}
We will use the estimates in Lemma \ref{lem:scale-factor} where $p$ is a positive power of $\epsilon$ up to algebraic constants. Then, we can simply replace $K_p$ in \eqref{eq:a-exp-est} by a uniform constant. 
%
\end{remark}
\begin{proof}
For the first statement, we refer to \cite[Lemma 2.1]{Urban22} with $\gamma=2$. We also collect from there\footnote{\label{fn:t0-small}In \cite[Lemma 2.1]{Urban22}, one at first only has $\int_t^{t_0}a(s)^{-3}\,ds\lesssim \log(t_0)-\log(t)$ for $t_0$ small enough that we can estimate $a(t)$ by $t^\frac13$ up to constant. However, assuming this inequality were to hold up to $\overline{t}>0$ and one had $t_0>\overline{t}$, the contribution $\int_{\overline{t}}^{t_0}a(s)^{-3}\,ds$ only adds a constant that we can absorb into our notation. Similarly, $a(t)\simeq t^\frac13$ can be assumed to hold on $(0,t_0]$ for any fixed $t_0>0$, and we can ignore this technicality in proving the integral formulas in Lemma \ref{lem:scale-factor}.} that, for $t<t_0$, 
\[\changefinal{\int_t^{t_0}a(s)^{-3}\,ds\lesssim 1+\left\lvert\log\left(\frac{t}{t_0}\right)\right\rvert}\]
is satisfied. Hence, there exists some $c^\prime>0$ such that
\[\changefinal{\exp\left(p\int_t^{t_0}a(s)^{-3}\,ds\right)\leq K_p\exp(c^\prime\cdot p\log(t_0))\cdot\exp(-c^\prime\cdot p)\leq K_p^\prime t^{-c^\prime p}}\,.\]
Then \eqref{eq:a-exp-est} follows by applying $a(t)\simeq t^\frac13$. Noting that $a^{-3}\simeq \nicefrac{\dot{a}}a$ holds, one further has
\begin{equation}
\int_t^{t_0}a(s)^{-3-p}\,ds\lesssim\int_{a(t)}^{a(t_0)}y^{-1-p}dy=\frac1p\left(a(t)^{-p}-a(t_0)^{-p}\right)\leq\frac1p a(t)^{-p}\,,
\end{equation}
and the other inequality in \eqref{eq:a-integrals} follows analogously. Finally, \eqref{eq:log-est} follows directly from \eqref{eq:a-integrals} when assuming without loss of generality that $a\lvert_{(t,t_0)}$ only takes values in $(0,1)$.

\end{proof}

\subsection{Solutions to the Einstein scalar-field equations in CMC gauge}\label{subsec:eq}

From here on out, we impose the CMC condition\change{\footnote{\change{Recall that $k$ is negative in our sign convention, see Section \ref{subsubsec:notation-sign}.}}}
\begin{equation}\label{eq:CMC}
{k^{l}}_l(t,\cdot)=\tau(t)=-3\frac{\dot{a}(t)}{a(t)}\,.
\end{equation}
We use \eqref{eq:Friedman} and \eqref{eq:Friedman2} to collect the following formulas for the mean curvature:
\begin{align}
\label{eq:Hubble} \del_t\tau
&=12\pi C^2a^{-6}+\frac13a^{-2}\\
\label{eq:Hubble2} \tau^2&=9\frac{\dot{a}^2}{a^2}=12\pi C^2a^{-6}+a^{-2}\,.
\end{align}
We consequently define the trace-free component $\hat{k}$ of $k$ as
\begin{equation}
\hat{k}_{ij}=k_{ij}-\frac{\tau}3g_{ij}
\end{equation}
and recall that the future directed unit normal to our foliation is written as
\begin{equation}
\del_0=n^{-1}\del_t\,.
\end{equation}

With this, we can express the Einstein scalar-field equations in our gauge as follows:

\begin{prop}[The Einstein scalar-field system in CMC gauge]\label{prop:eq}
A pair $(\g,\phi)$ solves the Einstein scalar-field equations \eqref{eq:ESF1}-\eqref{eq:ESF2} on $I\times M$ in CMC gauge \eqref{eq:CMC} for some interval $I\subseteq (0,t_0]$, where the scale factor satisfies \eqref{eq:Friedman}, if and only if the following equations are satisfied on $I\times M$:\\
The \textbf{metric evolution equations}
\begin{subequations}
\begin{align*}
\del_t g_{ij}=&\,-2nk_{ij}=-2n\hat{k}_{ij}+2n\frac{\dot{a}}ag_{ij}\,, \numberthis\label{eq:EEqg}\\
\numberthis\label{eq:EEqk}\del_t \hat{k}_{ij}
=&\,-\nabla_i\nabla_jn+n\left[\Ric[g]_{ij}-\frac{\dot{a}}a\hat{k}_{ij}-2\hat{k}_{il}\hat{k}^l_j-8\pi\nabla_i\phi\nabla_j\phi\right]\\
&\quad+4\pi C^2a^{-6}(n-1)g_{ij}+\frac19\left(3n-1\right)a^{-2}g_{ij}\,,
\end{align*}
\end{subequations}
the \textbf{Hamiltonian} and \textbf{momentum constraint equations}
\begin{subequations}
\begin{align}
R[g]+\frac23\tau^2-\langle\hat{k},\hat{k}\rangle_g=&8\pi\left[\lvert \del_0\phi\rvert^2+\lvert\nabla\phi\rvert_g^2\right] \label{eq:Hamilton}\,,\\
\nabla^l\hat{k}_{lj}=&-8\pi\nabla_j\phi\cdot \del_0\phi\,, \label{eq:Momentum}
\end{align}
\end{subequations}
the \textbf{lapse equation}
\begin{subequations}
\begin{align*}
\numberthis\label{eq:EEqLapse}\Lap_g n
=&-12\pi C^2a^{-6}-\frac13a^{-2}+n\left[\frac13a^{-2}+4\pi C^2a^{-6}+\langle\hat{k},\hat{k}\rangle_g+8\pi\lvert \del_0\phi\rvert^2\right]\,,
\end{align*}
or equivalently by \eqref{eq:Hamilton}
\begin{equation}\label{eq:EEqLapse2}
\Lap_g n=-12\pi C^2a^{-6}-\frac13a^{-2}+n\left[R[g]-8\pi\lvert\nabla\phi\rvert_g^2+12\pi C^2a^{-6}+a^{-2}\right]\,,
\end{equation}
\end{subequations}
and the \textbf{wave equation}
\begin{equation}\label{eq:wave}
\square_{\g}\phi=-\del_0^2\phi+n^{-1}g^{ij}\nabla_in\nabla_j\phi+\Lap_g\phi+\tau \del_0\phi=0\,.
\end{equation}
\end{prop}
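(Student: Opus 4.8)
The statement to prove is Proposition \ref{prop:eq}, the equivalence between $(\bar g,\phi)$ solving the Einstein scalar-field equations in CMC gauge and the stated system of evolution, constraint, lapse, and wave equations.

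\textbf{Overall approach.} This is a standard $3+1$ (ADM) decomposition computation specialized to the CMC gauge with zero shift and the particular matter model, so the plan is to carry out the decomposition carefully and then substitute the CMC condition \eqref{eq:CMC} together with the Friedman relations \eqref{eq:Hubble}--\eqref{eq:Hubble2}. I would organize the proof around the well-known Gauss--Codazzi--Mainardi and second-variation formulas relating the spacetime curvature of $\bar g$ to the intrinsic curvature of $(\Sigma_t,g)$, the second fundamental form $k$, and the lapse $n$ (recalling the shift vanishes, so $\del_t$ is already the normal direction up to the factor $n$, i.e.\ $\del_0 = n^{-1}\del_t$).

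\textbf{Key steps in order.} First I would recall the kinematic identity $\del_t g_{ij} = -2n k_{ij}$, which is just the definition of $k$ with zero shift; splitting off the trace using $k_{ij} = \hat k_{ij} + \frac{\tau}{3}g_{ij}$ and \eqref{eq:CMC} gives \eqref{eq:EEqg} immediately. Second, I would write down the general evolution equation for $k_{ij}$ (the ADM/Ricci equation), $\del_t k_{ij} = -\nabla_i\nabla_j n + n(\Ric[g]_{ij} + \tau k_{ij} - 2k_{il}k^l{}_j - \bar R_{ij})$ schematically, then feed in the Einstein equations \eqref{eq:ESF1}--\eqref{eq:EMT} to replace the spacetime Ricci components $\bar R_{ij}$, $\bar R_{00}$ by matter terms; for the massless scalar field $T_{\mu\nu} = \nabbar_\mu\phi\nabbar_\nu\phi - \frac12\bar g_{\mu\nu}\nabbar^\alpha\phi\nabbar_\alpha\phi$ one computes $\bar R_{\mu\nu} = 8\pi\nabbar_\mu\phi\nabbar_\nu\phi$ (since the trace term cancels in $D=3+1$), so spatially $\bar R_{ij} = 8\pi\nabla_i\phi\nabla_j\phi$. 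Differentiating $\hat k_{ij} = k_{ij} - \frac{\tau}{3}g_{ij}$ in time, using \eqref{eq:Hubble} for $\del_t\tau$ and \eqref{eq:EEqg} for $\del_t g_{ij}$, and collecting the trace pieces against the $C^2 a^{-6}$ and $a^{-2}$ terms coming from $\del_t\tau$ and $\tau k_{ij}$, yields \eqref{eq:EEqk} after the algebra reorganizes the homogeneous background contributions into the $4\pi C^2 a^{-6}(n-1)g_{ij}$ and $\frac19(3n-1)a^{-2}g_{ij}$ form. Third, the Hamiltonian and momentum constraints \eqref{eq:Hamilton}--\eqref{eq:Momentum} are the Gauss and Codazzi contractions of the Einstein equations: $R[g] + (\mathrm{tr}\,k)^2 - |k|^2 = 16\pi\rho$ and $\div_g(k - (\mathrm{tr}\,k)g) = 8\pi j$, where $\rho = T_{00} = \frac12(|\del_0\phi|^2 + |\nabla\phi|_g^2)$ and $j_i = -T_{0i} = -\del_0\phi\,\nabla_i\phi$; substituting $\mathrm{tr}\,k = \tau$, rewriting $(\mathrm{tr}\,k)^2 - |k|^2 = \frac23\tau^2 - |\hat k|^2$, and using that $\div_g$ of the pure-trace part vanishes because $\tau$ is spatially constant (the CMC condition!) gives exactly \eqref{eq:Hamilton}--\eqref{eq:Momentum}. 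Fourth, the lapse equation: taking the spatial trace of the $k$-evolution equation and using that $\del_t\tau$ is prescribed by \eqref{eq:Hubble} turns the evolution equation into an elliptic equation for $n$; concretely $\del_t(\mathrm{tr}_g k) = \del_t\tau$ must equal $-\Lap_g n + n(|k|^2 + \bar R_{00})$-type expression, and with $\bar R_{00} = 8\pi|\del_0\phi|^2 - $ (the scalar-field contribution, which after using $\square\phi=0$ reduces appropriately) one solves for $\Lap_g n$ to obtain \eqref{eq:EEqLapse}; \eqref{eq:EEqLapse2} then follows by substituting the Hamiltonian constraint \eqref{eq:Hamilton} to trade $\langle\hat k,\hat k\rangle_g + 8\pi|\del_0\phi|^2$ for $R[g] - 8\pi|\nabla\phi|_g^2 + \frac23\tau^2$ and then using \eqref{eq:Hubble2}. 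Finally, the wave equation \eqref{eq:wave} is simply $\square_{\bar g}\phi = 0$ written out in the $3+1$ split with zero shift: $\square_{\bar g}\phi = -\del_0^2\phi + \tau\,\del_0\phi - n^{-1}g^{ij}\nabla_i n\,\nabla_j\phi \cdot(-1) + \Lap_g\phi$; one obtains the stated form by expanding $\square_{\bar g}\phi = \frac{1}{\sqrt{|\bar g|}}\del_\mu(\sqrt{|\bar g|}\,\bar g^{\mu\nu}\del_\nu\phi)$ and recognizing $\del_t(\log\sqrt{\det g}) = n\,\mathrm{tr}\,k = n\tau$.

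\textbf{Main obstacle.} The routine-but-delicate part is the bookkeeping in the $\hat k$ evolution equation \eqref{eq:EEqk}: one must track precisely how the homogeneous FLRW background terms ($C^2 a^{-6}$ and $a^{-2}$ pieces, with their exact numerical coefficients $4\pi$, $\frac19$, $3n-1$) emerge from combining $\del_t\tau$ via \eqref{eq:Hubble}, the $-\frac{\tau}{3}\del_t g_{ij}$ term, and the $n\tau k_{ij}$ term, and verify that everything not multiplied by $n$ cancels against the $t$-dependent constants so that the equation is consistent with the FLRW solution of Lemma \ref{lem:FLRW}. There are no conceptual difficulties, but the sign conventions (the paper's convention $k(X,Y) = -\bar g(\nabbar_X\del_0,Y)$ makes $k$ negative, see Section \ref{subsubsec:notation-sign}) and the factor-of-$n$ placements must be handled with care. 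Given that the paper cites \cite{AM03B, ONeill83} for the background decomposition, I would lean on those references for the general ADM formulas and only present the gauge-specific substitutions in detail.
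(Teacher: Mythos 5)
Your proposal is correct and takes essentially the same approach as the paper: the paper's own proof simply cites \cite[Chapter 2.3]{Rendall08} for the standard $3+1$ decomposition with zero shift and then applies the CMC condition \eqref{eq:CMC} and the Friedman-derived formulas \eqref{eq:Hubble}--\eqref{eq:Hubble2}, which is exactly the computation you lay out in detail. Your sketch correctly identifies the one genuinely delicate step, namely the algebra reorganizing the $\del_t\tau$, $\tau k_{ij}$, and trace contributions into the $4\pi C^2 a^{-6}(n-1)g_{ij}$ and $\frac19(3n-1)a^{-2}g_{ij}$ pieces in \eqref{eq:EEqk}, and the rest (wave equation expansion, Gauss--Codazzi contractions, trace of the $k$-evolution for the lapse) is carried out correctly and consistently with the paper's sign conventions.
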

\begin{proof}
These are standard equations that follow from \cite[Chapter 2.3]{Rendall08} and applying \eqref{eq:CMC}-\eqref{eq:Hubble2}.
\end{proof}

\subsection{Bel-Robinson variables}\label{subsec:BR}
In this subsection, we briefly (re-)establish Bel-Robinson variables and how they behave within the Einstein scalar-field system.\\

\noindent\change{ Recall that the Weyl tensor $W\equiv W[\g]$ is the trace-free component of the spacetime curvature and, in the Einstein scalar-field system, takes the form
\begin{align*}
W_{\alpha\beta\gamma\delta}=&\,\Riem[\g]_{\alpha\beta\gamma\delta}-P[\g]_{\alpha\beta\gamma\delta}\,,\\
P[\g]_{\alpha\beta\gamma\delta}=&\,\frac12\left(\g_{\alpha\gamma}\Ric[\g]_{\beta\delta}-\g_{\beta\gamma}\Ric[\g]_{\alpha\delta}-\g_{\alpha\delta}\Ric[\g]_{\gamma\beta}+\g_{\beta\delta}\Ric[\g]_{\alpha\gamma}\right)-\frac16 R[\g]\left(\g_{\alpha\gamma}\g_{\beta\delta}-\g_{\alpha\delta}\g_{\beta\gamma}\right)\\
=&\,4\pi\left(\g_{\alpha\gamma}\nabbar_{\beta}\phi\nabbar_\delta\phi-\g_{\beta\gamma}\nabbar_\alpha\phi\nabbar_\delta\phi-\g_{\alpha\delta}\nabbar_\beta\phi\nabbar_{\gamma}\phi+\g_{\beta\delta}\nabbar_{\alpha}\phi\nabbar_{\gamma}\phi\right)\\
&\,-\frac{4\pi}3 \left(\nabbar^\rho\phi\nabbar_\rho\phi\right)\left(\g_{\alpha\gamma}\g_{\beta\delta}-\g_{\alpha\delta}\g_{\beta\gamma}\right)\,.
\end{align*}}
We define the dual $W^\ast$ of the Weyl tensor as
\[W^\ast_{\alpha\beta\gamma\delta}=\frac12\epsilonLC_{\alpha\beta\mu\nu}{W^{\mu\nu}}_{\gamma\delta}\,.\]
The electric and magnetic components of the Weyl tensor\change{, referred to as the Bel-Robinson variables from here on, }are now defined as
\[E(W)_{\alpha\beta}=W_{\alpha\mu\beta\nu}\del_0^{\mu}\del_0^{\nu}=W_{\alpha 0\beta 0},\ B(W)_{\alpha\beta}=W^\ast_{\alpha\mu\beta\nu}\del_0^\mu \del_0^\nu=W^{\ast}_{\alpha 0\beta 0}\,.\]
We note that, conversely, the Weyl tensor can be fully reconstructed from $E$ and $B$ since the following identities hold:
\begin{equation}\label{eq:Weyl-reconstruct}
W_{a0c0}=E_{ac},\quad W_{abc0}=-{\epsilonLC_{ab}}^mB_{mc},\quad W_{abcd}=-\epsilonLC_{abi}\epsilonLC_{cdj}E^{ij}
\end{equation}

By the symmetries of the Weyl tensor as a whole, $E$ and $B$ are symmetric and one has $E_{0\beta}=0=B_{0\beta}$. Hence, $E$ and $B$ are symmetric, tracefree $\Sigma_t$-tangent $(0,2)$-tensors which we shall simply denote as $E_{ij}$ and $B_{ij}$.\\
Further, we define
\begin{equation}
J_{\beta\gamma\delta}:=\nabbar^\alpha W_{\alpha\beta\gamma\delta},\quad J^\ast_{\beta\gamma\delta}:=\nabbar^\alpha W^\ast_{\alpha\beta\gamma\delta}\,.
\end{equation}
By applying the Bianchi identity for $\Riem[\g]$, we gain the explicit expression
\begin{equation}
J_{\beta\gamma\delta}=\frac12\left(\nabbar_\gamma \Ric[\g]_{\delta\beta}-\nabbar_\delta \Ric[\g]_{\gamma\beta}\right)-\frac1{12}\left(\g_{\beta\delta}\nabbar_\gamma R[\g]-\g_{\beta\gamma}\nabbar_\delta R[\g]\right)\,.
\end{equation}
Using \eqref{eq:ESF1}, we collect:
\begin{align*}
\change{\numberthis\label{eq:J}J_{i0j}
=}&\change{\,4\pi\Bigr[\nabla_i(\del_0\phi)\nabla_j\phi+{k^l}_i\nabla_l\phi\nabla_j\phi-\del_0\phi\nabla_i\nabla_j\phi\\}
&\change{\,\phantom{4\pi}-k_{ij}(\del_0\phi)^2-n^{-1}\nabla_in\cdot\nabla_j\phi\cdot\del_0\phi\Bigr]-\frac{2\pi}3\Bigr[\del_0(\nabbar^\alpha\phi\nabbar_\alpha\phi)\Bigr]g_{ij}\\}
\numberthis\label{eq:Jast}J^\ast_{i0j}
=&\,4\pi\epsilonLC_{lmj}\left(\nabla^l\nabla_i\phi+{k^l}_i\del_0\phi\right)\nabla^m\phi+\frac{2\pi}3\epsilonLC_{imj}\nabla^m\left(\nabbar^\alpha\phi\nabbar_\alpha\phi\right)
\end{align*}

\change{\noindent Note that expressions containing $\nabbar^\alpha\phi\nabbar_\alpha\phi$ can be ignored throughout our analysis since they are either pure trace or antisymmetric and thus will cancel in inner products with $E$, $B$, $\hat{k}$ and their rescaled analogues.\\}
\noindent The Bel-Robinson variables then behave as follows:

\begin{lemma}[Constraint and evolution equations for Bel-Robinson variables]\label{lem:EEqBR} If $(\g,\phi)$ is a classical solution to the Einstein scalar-field system \eqref{eq:ESF1}-\eqref{eq:EMT} in CMC gauge (see \eqref{eq:CMC}), $E$ and $B$ satisfy the following constraint equations:
\begin{subequations}
\begin{align}
E&\,=\Ric[g]+\frac29\tau^2g+\frac\tau3\hat{k}-\hat{k}\odot_g\hat{k}-4\pi(\nabla\phi\otimes\nabla\phi)-\left(\frac{8\pi}3\left\lvert \del_0\phi\right\rvert^2+\frac{4\pi}3\lvert\nabla\phi\rvert_g^2\right)g\label{eq:constr-E}\\
\label{eq:constr-B} B&\,=-\curl\hat{k}
\end{align}
\end{subequations}
Further, they satisfy the following evolution equations:
\begin{subequations}
\begin{align}
\del_tE_{ij}&=n\curl{B}_{ij}-(\nabla n\wedge B)_{ij}-\frac52n\left(E\times k\right)_{ij}-\frac23n(E\cdot k)g_{ij}-\frac{\tau}2n\cdot E_{ij}-\change{\frac{n}2\left(J_{i0j}+J_{j0i}\right)} \label{eq:EEqE}\\
\del_tB_{ij}&=-n\curl{E}_{ij}+(\nabla n\wedge E)_{ij}-\frac52n\left(B\times k\right)_{ij}-\frac23n(B\cdot k)g_{ij}-\frac{\tau}2n\cdot B_{ij}-\change{\frac{n}2\left(J_{i0j}^\ast+J_{j0i}^\ast\right)}\label{eq:EEqB}
\end{align}
\end{subequations}
\end{lemma}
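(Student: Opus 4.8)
The plan is to recognize both pairs of identities as the Einstein--scalar-field versions of the classical vacuum computations for Bel--Robinson variables (cf.\ \cite{ChrKl90, ChrKl93, AM03}): the two constraint equations are algebraic consequences of the Gauss--Codazzi equations together with the field equations \eqref{eq:ESF1}--\eqref{eq:EMT}, while the two evolution equations are the $3+1$ projections of the twice-contracted Bianchi identities $\nabbar^\alpha W_{\alpha\beta\gamma\delta}=J_{\beta\gamma\delta}$ and $\nabbar^\alpha W^\ast_{\alpha\beta\gamma\delta}=J^\ast_{\beta\gamma\delta}$. The only feature distinguishing our setting from vacuum is that, for the massless scalar field, the trace-reversed stress-energy is $\nabbar_\mu\phi\,\nabbar_\nu\phi$, so that the Einstein equations read $\Ric[\g]_{\mu\nu}=8\pi\,\nabbar_\mu\phi\,\nabbar_\nu\phi$ and $R[\g]=8\pi\,\nabbar^\alpha\phi\,\nabbar_\alpha\phi$; I would therefore organize the computation so that the vacuum terms are read off directly and only the additional $\phi$-dependent contributions need to be tracked.

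For the constraints I would begin from $W=\Riem[\g]-P[\g]$ with $P$ as displayed before the statement. For $E_{ij}=W_{i0j0}$: substitute for $\Riem[\g]_{i0j0}$ using the contracted Gauss equation, which expresses it through $\Ric[g]_{ij}$, terms quadratic in $k$, and $\Ric[\g]_{ij}$; then insert $\Ric[\g]_{\mu\nu}=8\pi\nabbar_\mu\phi\nabbar_\nu\phi$ (so that $\Ric[\g]_{00}=8\pi\lvert\del_0\phi\rvert^2$) and $R[\g]=8\pi\,\nabbar^\alpha\phi\nabbar_\alpha\phi=8\pi(\lvert\nabla\phi\rvert_g^2-\lvert\del_0\phi\rvert^2)$ both here and inside $P_{i0j0}$; finally split $k=\hat k+\tfrac\tau3 g$ using the CMC condition \eqref{eq:CMC} and collect: besides $\Ric[g]$, the terms quadratic in $k$ become $\tfrac29\tau^2 g+\tfrac\tau3\hat k-\hat k\odot_g\hat k$, and the $\phi$-quadratic terms become $-4\pi(\nabla\phi\otimes\nabla\phi)-(\tfrac{8\pi}3\lvert\del_0\phi\rvert^2+\tfrac{4\pi}3\lvert\nabla\phi\rvert_g^2)g$, which together is \eqref{eq:constr-E}. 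For $B_{ij}=W^\ast_{i0j0}$: the $0$-index of the dual turns this into a contraction of $\epsilonLC[g]$ against $W_{abj0}$, and $W_{abj0}=\Riem[\g]_{abj0}-P_{abj0}$; the Codazzi equation gives $\Riem[\g]_{abj0}$ in terms of $\nabla_a k_{bj}-\nabla_b k_{aj}$, while $P_{abj0}$ reduces to a multiple of $\del_0\phi(g_{aj}\nabla_b\phi-g_{bj}\nabla_a\phi)$ via $\Ric[\g]_{a0}=8\pi\del_0\phi\,\nabla_a\phi$. Contracting, the Codazzi part produces $\pm\curl k=\pm\curl\hat k$ (the pure-trace part of $k$ is annihilated by $\curl$ since $\nabla g=0$ and $\tau$ is spatially constant), and the remaining antisymmetric-in-$ij$ piece, together with the $\phi$-term coming from $P$, cancels upon invoking the momentum constraint \eqref{eq:Momentum}, leaving \eqref{eq:constr-B}. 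All signs throughout are fixed by the conventions of Section \ref{subsubsec:notation-sign}.

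For the evolution equations I would project $\nabbar^\alpha W_{\alpha i0j}=J_{i0j}$ by writing $\nabbar^\alpha=g^{\alpha\mu}\nabbar_\mu-\del_0^\alpha\del_0^\mu\nabbar_\mu$ and using the reconstruction identities \eqref{eq:Weyl-reconstruct} to rewrite every Weyl component that appears in terms of $E$ and $B$. The $\del_0$-piece produces $\del_tE_{ij}$ after dividing out the lapse (from $\del_0=n^{-1}\del_t$), together with the connection terms of $\g$ in the normal direction: these bring in the second fundamental form and the acceleration $n^{-1}\nabla n$ and assemble into the vacuum terms $-\tfrac52 n(E\times k)-\tfrac23 n(E\cdot k)g-\tfrac\tau2 nE$; the spatial piece produces $n\curl B-(\nabla n\wedge B)$; and the source gives $-\tfrac n2(J_{i0j}+J_{j0i})$, the symmetrization being forced because $E$ is symmetric whereas the raw Bianchi equation need not be. Here $J_{i0j}$ itself is obtained by differentiating the Einstein equation $\Ric[\g]_{\mu\nu}=8\pi\nabbar_\mu\phi\nabbar_\nu\phi$, commuting derivatives, and eliminating $\del_0^2\phi$ via the wave equation \eqref{eq:wave}, yielding \eqref{eq:J}. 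The identical argument applied to $\nabbar^\alpha W^\ast_{\alpha i0j}=J^\ast_{i0j}$, now with $J^\ast$ as in \eqref{eq:Jast} and the sign change that accompanies passing from $W$ to $W^\ast$, gives \eqref{eq:EEqB}.

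Conceptually there is little here beyond Gauss--Codazzi, the Bianchi identities, and substitution of the field equations; the genuine work is bookkeeping. I expect the main obstacles to be: (i) keeping every sign consistent with the conventions of Section \ref{subsubsec:notation-sign}, in particular the relative signs of the $\tfrac29\tau^2 g$, $\tfrac\tau3\hat k$ and $\hat k\odot_g\hat k$ terms in \eqref{eq:constr-E} and the overall sign in \eqref{eq:constr-B}; (ii) correctly invoking the momentum constraint \eqref{eq:Momentum} so that the antisymmetric and matter pieces in the computation of $W^\ast_{i0j0}$ cancel precisely, leaving $-\curl\hat k$; and (iii) verifying that differentiating the field equations and using the wave equation reproduces exactly the expressions \eqref{eq:J}--\eqref{eq:Jast} for $J$ and $J^\ast$. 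Each of these is routine but sign-sensitive, and a useful cross-check is to set $\phi\equiv 0$ and recover the vacuum formulas of \cite{AM03}.
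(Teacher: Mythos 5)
Your outline follows exactly the approach the paper relies on via its citations to \cite[(3.11a)-(3.11b)]{AM03} for the evolution equations and \cite[(3.63a)-(3.63b)]{Wang19} for the constraints: Gauss--Codazzi plus substitution of the field equations for \eqref{eq:constr-E}--\eqref{eq:constr-B}, and the $3+1$ projection of the contracted Bianchi identities $\nabbar^\alpha W_{\alpha\beta\gamma\delta}=J_{\beta\gamma\delta}$, $\nabbar^\alpha W^\ast_{\alpha\beta\gamma\delta}=J^\ast_{\beta\gamma\delta}$ for \eqref{eq:EEqE}--\eqref{eq:EEqB}. You also correctly anticipated that the $J$-tensors must appear symmetrized on the right-hand side, which is precisely the point the paper flags in a footnote as a minor error in \cite{AM03} and \cite[(7.2.2jk)]{ChrKl93}.
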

\begin{proof}
For \eqref{eq:EEqE}-\eqref{eq:EEqB}, we refer to \cite[(3.11a)-(3.11b)]{AM03}.\change{\footnote{\change{Note that there is a minor error in the statement in \cite{AM03}, where the authors forget to symmetrize the $J$-tensors when applying the symmetrization to (3.14) of that work. This error seems to also occur in \cite[(7.2.2jk)]{ChrKl93}.}} } Equations \eqref{eq:constr-E}-\eqref{eq:constr-B} follow as in \cite[(3.63a)-(3.63b)]{Wang19} from contracting the Gauss-Codazzi constraints.
\end{proof}

\begin{remark}[Initial data for Bel-Robinson variables]\label{rem:init-BR}
Since the Weyl tensor vanishes over FLRW spacetimes, so do $E(W[\g_{FLRW}])$ and $B(W[\g_{FLRW}])$. Furthermore, note that given initial data $(M,\mathring{g},\mathring{k},\mathring{\pi},\mathring{\psi})$ on $\Sigma_{t_0}$ in the sense \change{discussed in Section \ref{subsubsec:initial-data}}, and defining $\hat{\mathring{k}}=\mathring{k}-\frac{\tau}3\mathring{g}$, we can use \eqref{eq:constr-E} and \eqref{eq:constr-B} to define the following $(0,2)$-tensors:
\begin{subequations}
\begin{align}
\label{eq:init-data-E}\mathring{E}=&\,\Ric[\mathring{g}]+\frac29\tau^2\mathring{g}+\frac{\tau}3\hat{\mathring{k}}-\hat{\mathring{k}}\odot_{\mathring{g}}\hat{\mathring{k}}-4\pi\left(\mathring{\pi}\otimes\mathring{\pi}\right)-\left(\frac{8\pi}3\left\lvert\mathring{\psi}\right\rvert_{\mathring{g}}^2+\frac{4\pi}3\left\lvert\mathring{\pi}\right\rvert_{\mathring{g}}^2\right)\mathring{g}\\
\label{eq:init-data-B}\mathring{B}=&\,-\curl_{\mathring{g}}\hat{\mathring{k}}
\end{align}
\end{subequations}
These are easily seen to be symmetric, and the constraints \eqref{eq:init-Hamilton} and \eqref{eq:init-momentum} on the initial data ensure that they are also tracefree. Hence, any choice of initial data for the Cauchy problem immediately also contains a unique choice of initial data for the Bel-Robinson variables that is consistent with solutions to the Einstein scalar-field equations in CMC gauge. 
\end{remark}

\subsection{Rescaled variables and equations}\label{subsec:REEq}

It will be more convenient to work the rescaled and shifted solution variables to measure their distance from the FLRW reference solution. In this subsection, we introduce the renormalized solution variables and restate the Einstein scalar-field system in terms of these variables.

\begin{definition}[Rescaled variables for Big Bang stability]\label{def:rescaled}
We will consider the rescaled variables
\begin{subequations}
\begin{gather}
\label{eq:rescalingGK}
G_{ij}=a^{-2}g_{ij},\quad (G^{-1})^{ij}=a^2g^{ij},\quad \Sigma_{ij}=a\hat{k}_{ij}\\
\label{eq:rescalingL} N=n-1\\
\label{eq:rescalingBR}
\RE_{ij}=a^4\cdot E_{ij},\quad \RB_{ij}=a^4\cdot B_{ij}\\
\label{eq:rescalingMatter}
\Psi=a^3\del_0\phi-C,
\end{gather}
\end{subequations}
\end{definition}
We note that the scaling of $\RB$ in \eqref{eq:rescalingBR} is \textbf{not} the asymptotic rescaling of $B$ -- in fact, we expect $B$ to have (approximate) leading order $a^{-2}$ as one can see in \eqref{eq:APmidB}. However, keeping this scaling parallel to that of $\RE$ makes the structurally very similar evolution equations significantly easier to deal with. We also do not rescale $N$ asymptotically\change{, unlike \cite{Rodnianski2014, Speck2018}, but note that $N$ converges to $0$ at an order slightly below $a^4$ at low orders (see \eqref{eq:BsN} and \eqref{eq:asymp-lapse}).}
\begin{prop}[The rescaled Einstein scalar-field system]\label{prop:REEq}
The Einstein scalar-field system in CMC gauge as in Proposition \ref{prop:eq} are solved by  $(g,\hat{k},n,\nabla\phi,\del_0\phi)$ if the rescaled variables\linebreak $(G,\Sigma,N,\nabla\phi,\Psi,\RE,\RB)$ as in Definition \ref{def:rescaled} solve the following set of equations:\footnote{\change{We refer to Lemma \ref{lem:BelRobinsonLemmas} for the scalings that occur when switching between tensor field operations like $\wedge$ and $\wedge_G$.}} The \textbf{rescaled metric evolution equations}
\begin{subequations}
\begin{align*}
\del_tG_{ij}=&-2(N+1)a^{-3}\Sigma_{ij}+2N\frac{\dot{a}}{a}G_{ij} \numberthis\label{eq:REEqG}\\
\del_t(G^{-1})^{ij}=&\,2(N+1)a^{-3}(\Sigma^\sharp)^{ij}-2N\frac{\dot{a}}{a}(G^{-1})^{ij}\numberthis\label{eq:REEqG-1}\\
\del_t\Sigma_{ij}=&\,-a\nabla_i\nabla_jN+(N+1)\left[a\Ric[G]_{ij}-2a^{-3}(\Sigma\odot_G\Sigma)_{ij}-8\pi a\nabla_i\phi\nabla_j\phi\right]\numberthis\label{eq:REEqSigma}\\
&+4\pi C^2a^{-3}\cdot NG_{ij}+\frac19\left(3N+2\right)aG_{ij}+N\frac{\dot{a}}{a}\Sigma_{ij}\\
\del_t{(\Sigma^\sharp)^a}_b=&\,\tau N{(\Sigma^\sharp)^a}_b-a\nabla^{\sharp a}\nabla_bN+(N+1)a\left[\left(\Ric[G]^\sharp\right)^a_b+\frac 29\I^a_b\right]\,\numberthis\label{eq:REEqSigmaSharp}\\
&-8\pi (N+1)a\nabla^{\sharp a}\phi\nabla_b\phi+N\left(4\pi C^2a^{-3}+\frac19 a\right)\I^a_b\,,
\end{align*}
\end{subequations}
the \textbf{rescaled Hamiltonian and momentum constraints}
\begin{subequations}
\begin{align}
R[G]+\frac23-a^{-4}\langle\Sigma,\Sigma\rangle_G&=8\pi\left[a^{-4}\Psi^2+2Ca^{-4}\Psi+\lvert\nabla\phi\rvert_G^2\right] \label{eq:REEqHam}\\
\nabla^{\sharp m}\Sigma_{ml}&=-8\pi\nabla_l\phi(\Psi+C) \label{eq:REEqMom}
\end{align}
with their Bel-Robinson analogues
\begin{align*}
\RE_{ij}=&\,a^4\left(\Ric[G]_{ij}+\frac29G_{ij}\right)+\frac{\tau}3a^3\Sigma_{ij}-(\Sigma\odot_G\Sigma)_{ij}-4\pi a^4\nabla_i\phi\nabla_j\phi\numberthis\label{eq:REEqConstrE}\\
&\,-\left[\frac{4\pi}3a^4\lvert\nabla\phi\rvert_G^2+\frac{8\pi}3\Psi^2+\frac{16\pi}3C\Psi\right]G_{ij}\\
\RB_{ij}=&\,-\change{a^2\curl_G\Sigma_{ij}}\numberthis\label{eq:REEqConstrB}\,,
\end{align*}
\end{subequations}
the \textbf{rescaled lapse equation}
\begin{subequations}
\begin{equation}\label{eq:REEqLapse1}
\Lap N=\left(12\pi C^2a^{-4}+\change{\frac13}\right)N+(N+1)a^{-4}\left[\langle\Sigma,\Sigma\rangle_G+8\pi\Psi^2+16\pi C\Psi\right]
\end{equation}
or equivalently
\begin{equation}\label{eq:REEqLapse2}
\Lap N=\left(12\pi C^2a^{-4}+\change{\frac13}\right)N+(N+1)\left[R[G]+\frac23-8\pi\lvert\nabla\phi\rvert_G^2\right]\,,
\end{equation}
\end{subequations}
the \textbf{rescaled evolution equations for the Bel-Robinson \changefinal{variables}}
\begin{subequations}
\begin{align*}
\numberthis\label{eq:REEqE}\del_t\RE_{ij}=&\,\left(3-N\right)\frac{\dot{a}}a\RE_{ij}\change{-a^{-1}(\nabla N\wedge_G\RB)_{ij}+(N+1)a^{-1}\curl_G\RB_{ij}}\\
&\,-(N+1)\left[\change{\frac52a^{-3}(\RE\times_G\Sigma)_{ij}}+\frac23a^{-3}\langle\RE,\Sigma\rangle_GG_{ij}\right]\\
&\,+4\pi(N+1) a^{-3}(\Psi+C)^2\Sigma_{ij}-4\pi(N+1) \dot{a}a^3\nabla_i\phi\nabla_j\phi\change{+4\pi a\nabla_{(i}N\nabla_{j)}\phi(\Psi+C)}\\
&\,-4\pi a(N+1)\left[\nabla_i\Psi\nabla_j\phi+\nabla_j\Psi\nabla_i\phi+\change{(\Sigma^\sharp)^l_{(i}\nabla_{j)}\phi}\nabla_l\phi-(\Psi+C)\nabla_i\nabla_j\phi\right]\\
&\,+(N+1)\left[\frac{2\pi}3a^6\del_0\left(a^{-6}(\Psi+C)^2+a^{-2}\lvert\nabla\phi\rvert_G^2\right)+4\pi\frac{\dot{a}}a(\Psi+C)^2\right]G_{ij}\\[0.5em]
\numberthis\label{eq:REEqB}\del_t\RB_{ij}=&\,\frac{\dot{a}}a\left(3-N\right)\RB_{ij}\change{+a^{-1}(\nabla N\wedge_G\RE)_{ij}-(N+1)a^{-1}\curl_G\RE_{ij}}\\
&\,-(N+1)\left[\change{\frac52a^{-3}(\RB\times_G\Sigma)_{ij}+\frac23a^{-3}\langle\RB,\Sigma\rangle_GG_{ij}}\deletemath{+\curl_G\RE_{ij}}\right]\,\\
&\,-4\pi(N+1)\epsilonLC[G]_{lmj}\left(a^3\nabla^{\sharp l}\nabla_{\change{j}}\phi\nabla^{\sharp m}\phi+a^{-1}{(\Sigma^\sharp)^l}_i\nabla^{\sharp m}\phi(\Psi+C)\right)
\end{align*}
\end{subequations}
and the \textbf{rescaled wave equation}
\begin{subequations}
\begin{equation}\label{eq:REEqWave}
\del_t\Psi=a\langle\nabla N,\nabla\phi\rangle_G+a(N+1)\Lap\phi-3\frac{\dot{a}}{a}N(\Psi+C)
\end{equation}
along with the evolution equation
\begin{equation}\label{eq:REEqNablaPhi}
\del_t\nabla\phi=a^{-3}(N+1)\nabla\Psi+a^{-3}(\Psi+C)\nabla N\,.
\end{equation}
\end{subequations}
Finally, we collect the \textbf{rescaled Ricci evolution equation}
\begin{align*}
\numberthis\label{eq:REEqRic}\del_t\Ric[G]_{ab}
=&\,a^{-3}(N+1)(\Lap_G\Sigma_{ab}-\nabla^{\sharp d}\nabla_a\Sigma_{db}-\nabla^{\sharp d}\nabla_b\Sigma_{da})\\
&\,+a^{-3}\nabla^{\sharp d}N(2\nabla_d\Sigma_{ab}-\nabla_a\Sigma_{db}-\nabla_b\Sigma_{da})\\
&\,-a^{-3}\left(\nabla_a N(\div_G\Sigma)_b+\nabla_b(\div_G\Sigma)_a\right)
+\Lap_GN(a^{-3}\Sigma_{ab}+\frac{\tau}3G_{ab})\\
&\,\change{-}a^{-3}\Bigr(\nabla^{\sharp d}\nabla_a N\cdot \Sigma_{db}+\nabla^{\sharp d}\nabla_b N\cdot\Sigma_{da}\Bigr)+\frac\tau3\nabla_a\nabla_bN
\end{align*}
as well as (in a coordinate neighbourhood) the Christoffel evolution equation
\begin{align*}
\del_t\Gamma_{ij}^k[G]=&\,\frac12 (G^{-1})^{kl}\left(\nabla_i(\del_tG_{jl})+\nabla_j(\del_tG_{il})-\nabla_l(\del_tG_{ij})\right) \numberthis\label{eq:REEqChr}\\
=&\,-(N+1)a^{-3}\left[\nabla_i{(\Sigma^\sharp)^k}_j+\nabla_j{(\Sigma^\sharp)^k}_i-\nabla^{\sharp k}\Sigma_{ij}\right]\\
&\,-a^{-3}\left[\nabla_iN{(\Sigma^\sharp)^k}_j+\nabla_jN{(\Sigma^\sharp)^k}_i-\nabla^{\sharp k}N\Sigma_{ij}\right]\\
&\,+\frac{\dot{a}}a\left[\nabla_i N\cdot\I^k_j+\nabla_jN\cdot\I^k_i-\nabla^{\sharp k}N\cdot G_{ij}\right]
\end{align*}
\end{prop}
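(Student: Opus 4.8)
The plan is to verify that the rescaled system listed in Proposition \ref{prop:REEq} is exactly the pushforward of the CMC system of Proposition \ref{prop:eq} under the substitutions of Definition \ref{def:rescaled}, so that a solution of the former yields (by inverting the algebraic substitutions, which are smooth diffeomorphisms of the fibers since $a(t)>0$ on $I$) a solution of the latter. Concretely, given rescaled variables $(G,\Sigma,N,\nabla\phi,\Psi,\RE,\RB)$ solving \eqref{eq:REEqG}--\eqref{eq:REEqNablaPhi}, I set $g_{ij}=a^2G_{ij}$, $\hat k_{ij}=a^{-1}\Sigma_{ij}$, $n=N+1$, $\del_0\phi=a^{-3}(\Psi+C)$, and $k_{ij}=\hat k_{ij}+\frac{\tau}{3}g_{ij}$ with $\tau=-3\dot a/a$, and must check that $(g,k,n,\nabla\phi,\del_0\phi)$ satisfies \eqref{eq:EEqg}--\eqref{eq:wave}. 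Since Proposition \ref{prop:eq} is an ``if and only if'' and the scale factor is assumed to satisfy \eqref{eq:Friedman}, it then follows that the reconstructed $(\g,\phi)$ solves \eqref{eq:ESF1}--\eqref{eq:ESF2}; the Bel-Robinson pair $(\RE,\RB)$ is consistent by construction because \eqref{eq:REEqConstrE}--\eqref{eq:REEqConstrB} are the rescaled forms of the definitional constraints \eqref{eq:constr-E}--\eqref{eq:constr-B} (cf.\ Remark \ref{rem:init-BR}).

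The verification is a sequence of routine but bookkeeping-heavy computations, one equation at a time. For the metric evolution: from $g_{ij}=a^2G_{ij}$ one has $\del_t g_{ij}=2\frac{\dot a}{a}g_{ij}+a^2\del_tG_{ij}$; substituting \eqref{eq:REEqG} and using $a^2\Sigma_{ij}\cdot a^{-3}=a^{-1}\Sigma_{ij}=\hat k_{ij}$ and $n=N+1$ reproduces $\del_t g_{ij}=-2n\hat k_{ij}+2n\frac{\dot a}{a}g_{ij}=-2nk_{ij}$, i.e.\ \eqref{eq:EEqg}. For $\del_t\hat k_{ij}$: differentiate $\hat k_{ij}=a^{-1}\Sigma_{ij}$, so $\del_t\hat k_{ij}=-\frac{\dot a}{a}\hat k_{ij}+a^{-1}\del_t\Sigma_{ij}$; insert \eqref{eq:REEqSigma}, use $\Ric[g]=\Ric[G]$ (conformal factor is constant on each slice, so Ricci is unchanged — this is the footnote remark that $g$ and $G$ share a Levi-Civita connection), $a^{-1}\cdot a\nabla_i\nabla_jN=\nabla_i\nabla_j n$, and collect the $a^{-2}G_{ij}=a^{-4}g_{ij}$ and $a^{-3}\cdot a^{-1}\Sigma\odot\Sigma = a^{-4}g$-type terms; the scalar-curvature and matter pieces are matched using $4\pi C^2 a^{-6}(n-1)g_{ij}$ against $4\pi C^2 a^{-3}NG_{ij}\cdot a^{-1}$ and similarly for the $\frac19$ terms, recovering \eqref{eq:EEqk}. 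The lapse equation \eqref{eq:EEqLapse} follows from $\Lap_g n=a^{-2}\Lap_G N$ (the Laplacian on functions scales by the conformal factor on each slice, and $\Lap N$ in the proposition means $\Lap_G N$) after multiplying \eqref{eq:REEqLapse1} by $a^{-2}$ and using $a^{-2}\langle\Sigma,\Sigma\rangle_G = a^{-2}\cdot a^2\langle\hat k,\hat k\rangle_g=\langle\hat k,\hat k\rangle_g$ together with $a^{-6}(\Psi+C)^2=(\del_0\phi)^2$; \eqref{eq:REEqLapse2}$\leftrightarrow$\eqref{eq:EEqLapse2} is then the Hamiltonian constraint rewrite. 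The wave equation \eqref{eq:wave}: from $\del_0\phi=a^{-3}(\Psi+C)$, one computes $\del_0(\del_0\phi)$ and uses \eqref{eq:REEqWave} plus $\del_t a^{-3}=-3\frac{\dot a}{a}a^{-3}$; the $n^{-1}g^{ij}\nabla_in\nabla_j\phi=a^{-2}(N+1)^{-1}(G^{-1})^{ij}\nabla_iN\nabla_j\phi$ and $\Lap_g\phi=a^{-2}\Lap_G\phi$ and $\tau\del_0\phi$ terms reassemble, using $\del_0=n^{-1}\del_t$ to convert $\del_t\Psi$ back appropriately. Finally, the Hamiltonian and momentum constraints \eqref{eq:Hamilton}--\eqref{eq:Momentum} are obtained from \eqref{eq:REEqHam}--\eqref{eq:REEqMom} by multiplying through by the relevant powers of $a$ and using $R[g]=a^{-2}R[G]$ (scalar curvature scales by the conformal factor on each slice), $\frac23\tau^2=12\pi C^2a^{-6}+\frac23 a^{-2}$ from \eqref{eq:Hubble2}, and $a^{-4}\langle\Sigma,\Sigma\rangle_G\cdot a^2 = a^{-2}\langle\hat k,\hat k\rangle_g\cdot a^2$ — wait, more carefully, $\langle\hat k,\hat k\rangle_g=a^{-2}\langle\Sigma,\Sigma\rangle_g=a^{-2}\cdot a^{4}\cdot a^{-4}\langle\Sigma,\Sigma\rangle_G$, so one tracks the scaling $\langle\hat k,\hat k\rangle_g = a^{-4}\langle\Sigma,\Sigma\rangle_G$; combined with the scalar-field term $8\pi(|\del_0\phi|^2+|\nabla\phi|_g^2)=8\pi(a^{-6}(\Psi+C)^2+a^{-2}|\nabla\phi|_G^2)$ this reproduces \eqref{eq:Hamilton} after multiplying \eqref{eq:REEqHam} by $a^{-2}$. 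The Ricci and Christoffel evolution equations \eqref{eq:REEqRic}--\eqref{eq:REEqChr} are not independent constraints; they are consequences of \eqref{eq:EEqg} (equivalently \eqref{eq:REEqG}) together with the standard formulas for $\del_t\Gamma[g]$ and $\del_t\Ric[g]$ in terms of $\del_t g$, so they require only checking that the known variation formulas, when fed $\del_t G$ from \eqref{eq:REEqG}, produce the stated right-hand sides; the $\tau/3$-trace terms appear because $\del_t G$ has a pure-trace piece $2N\frac{\dot a}{a}G$.

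The main obstacle, such as it is, is purely organizational rather than conceptual: one must be scrupulous about which metric ($g$, $G$, or $\gamma$) is used to raise/lower indices and to define each differential operator in each equation, and about the fact that the operations $\wedge$, $\times$, $\curl$, $\div$, $\odot$ in Proposition \ref{prop:eq} use $g$ while their rescaled counterparts in Proposition \ref{prop:REEq} use $G$ — the conversion factors (powers of $a$) between, e.g., $\curl_g\hat k$ and $\curl_G\Sigma$ must be computed from the definitions in Section \ref{subsubsec:notation-contract} together with $\epsilonLC[G]=a^{-3}\epsilonLC[g]$, and these are precisely the scalings referred to in Lemma \ref{lem:BelRobinsonLemmas}. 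Once these scaling dictionaries are in place, each equation reduces to a one-line algebraic identity, and the footnote's appeal to Lemma \ref{lem:BelRobinsonLemmas} handles the rest; hence the proof of Proposition \ref{prop:REEq} can be stated as: ``Substitute Definition \ref{def:rescaled} into \eqref{eq:EEqg}--\eqref{eq:wave}, use \eqref{eq:Friedman}--\eqref{eq:Hubble2} and the scaling identities of Lemma \ref{lem:BelRobinsonLemmas} for the conformal behaviour of $\Ric$, $R$, $\Lap$ and the tensor operations, and rearrange,'' possibly spelling out the metric evolution and one constraint as representative cases and leaving the rest to the reader.
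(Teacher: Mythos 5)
Your proposal is correct and follows essentially the same route as the paper: the proof there consists precisely of substituting Definition \ref{def:rescaled} into the system of Proposition \ref{prop:eq}, the Bel-Robinson equations of Lemma \ref{lem:EEqBR}, and the standard variation formulas for $\Gamma[G]$ and $\Ric[G]$, tracking powers of $a$ via the scaling identities of Lemma \ref{lem:BelRobinsonLemmas}, and noting that the Bel-Robinson equations are not needed for the stated direction of the implication. One small correction to your bookkeeping: since each index contraction costs a factor $a^{-2}$ and $\hat{k}=a^{-1}\Sigma$, one has $\langle\hat{k},\hat{k}\rangle_g=a^{-6}\langle\Sigma,\Sigma\rangle_G$ (not $a^{-4}$), which is exactly what makes the term $-a^{-4}\langle\Sigma,\Sigma\rangle_G$ in \eqref{eq:REEqHam} turn into $-\langle\hat{k},\hat{k}\rangle_g$ after multiplying through by $a^{-2}$.
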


\begin{proof}
For the first identity in \eqref{eq:REEqChr}, we refer to \cite[Lemma 2.27]{Chow06} and insert the evolution equation \eqref{eq:REEqG}. Otherwise, all equations simply follow by computing the effects of rescaling on the equations from Proposition \ref{prop:eq} (respectively the Ricci evolution equation as in \cite[Chapter 2.3, (2.32)]{Rendall08}) as well as the Bel-Robinson evolution equations \eqref{eq:EEqE}-\eqref{eq:EEqB} and constraint equations \eqref{eq:constr-E}-\eqref{eq:constr-B}. Notice that one already finds a solution to the system in Proposition \eqref{prop:eq} with the rescaled variables excluding \eqref{eq:REEqE}, \eqref{eq:REEqB}, \eqref{eq:REEqConstrE} and \eqref{eq:REEqConstrB}. Conversely, all of the rescaled equations are satisfied by a solution to Proposition \ref{prop:eq} at sufficiently high regularity. Hence, solving the full rescaled system is always sufficient to solve the Einstein system in Proposition \ref{prop:eq} and they are equivalent if the initial data is regular enough to ensure sufficiently high regularity of solutions.\\
\end{proof}

\subsection{Commuted equations}\label{subsec:comm-eq}

We collect Laplace-commuted versions of the equations for the rescaled variables in Proposition \ref{prop:REEq} in this subsection. For the sake of brevity, we will not state all possible commutations for every equation, but restrict ourselves to the ones we actually need within the bootstrap argument. We also refer to Appendix \ref{subsec:commutators} for expressions for commutators of spatial differential operators with each other and with $\del_t$.

The terms written down explicity in Lemma \ref{lem:laplace-commuted-eq} are ones that dominate the evolution behaviour or that are the largest higher order terms, both of which require careful treatment within the bootstrap argument. The error terms are broadly categorized into three groups:
\begin{itemize}
\item \enquote{Borderline} terms are terms that critically contribute to the fact that the energies diverge toward the Big Bang singularity. This almost always takes the form of adding energy terms at the same order as the evolved variable scaled by factors of the type $\epsilon a^{-3}$ or $\epsilon a^{-3-c\sqrt{\epsilon}}$, which causes the energies to slightly diverge since $a^{-3}$ is barely not integrable (see \eqref{eq:a-integrals}).
\item \enquote{Junk} terms are terms that are subcritical in the sense that they lead to integrable error terms, or terms that only contain lower order derivatives of the solution variables.
\item \enquote{Top} order terms (which only appear in \eqref{eq:comeq-RE} and \eqref{eq:comeq-RB}) are terms that are junk terms for low order energies, but become borderline terms at top order.
\end{itemize}
All of these error terms are tracked schematically in Section \ref{subsec:error-terms}. Since we will only need $L^2_G$-bounds on these error terms, which are given in Section \ref{subsec:L2-error-est}, we will treat them as notational \enquote{black boxes} outside of the appendix.

\begin{lemma}[Laplace-commuted rescaled equations]\label{lem:laplace-commuted-eq}
Let $L\in2\N,\,L\geq 2$. With error terms as defined in Appendix \ref{subsec:error-terms}, the system in Proposition \ref{prop:REEq} leads to the following Laplace-commuted equations:\\
The \textbf{Laplace-commuted rescaled evolution equations for the second fundamental form}
\begin{equation}\label{eq:comeq-Sigma}
\del_t\Lap^\frac{L}2\Sigma=-a\nabla^2\Lap^\frac{L}2N+a(N+1)\Lap^\frac{L}2\Ric[G]+\mathfrak{S}_{L,Border}+\mathfrak{S}_{L,Junk}\,,
\end{equation}
the \textbf{Laplace-commuted rescaled momentum constraint equations}
\begin{subequations}
\begin{equation}
\div_G\Lap^\frac{L}2\Sigma=-8\pi (\Psi+C)\left[\nabla\Lap^{\frac{L}2}\phi+\Lap^{\frac{L}2-1}\Ric[G]\ast\nabla\phi\right]+\nabla\Lap^{\frac{L}2-1}\Ric[G]\ast\Sigma+\mathfrak{M}_{L,Junk}\label{eq:comeq-mom-div}
\end{equation}
and
\begin{equation}
\change{\curl_G\Lap^{\frac{L}2}\Sigma=-a^{-2}\Lap^{\frac{L}2}\RB+\epsilonLC[G]\ast\nabla\Lap^{\frac{L}2-1}\Ric[G]\ast\Sigma+\tilde{\mathfrak{M}}_{L,Junk}}\numberthis\,,\label{eq:comeq-mom-curl}
\end{equation}
the \textbf{Laplace-commuted rescaled Hamiltonian constraint equations}
\begin{align*}
\numberthis\label{eq:comeq-Ham}&\,\Lap^{\frac{L}2}R[G]+a^{-4}\sum_{I_1+I_2=L}\nabla^{I_1}\Sigma\ast\nabla^{I_2}\Sigma\\
=&\,16\pi Ca^{-4}\Lap^{\frac{L}2}\Psi+a^{-4}\sum_{I_1+I_2=L}\left[\nabla^{I_1}\Psi\ast\nabla^{I_2}\Psi+\nabla^{I_1+1}\phi\ast\nabla^{I_2+1}\phi\right]
\end{align*}
and
\begin{equation}
\Lap^{\frac{L}2}\Ric[G]=a^{-4}\Lap^\frac{L}2\RE-\frac{\tau}3a^{-1}\Lap^{\frac{L}2}\Sigma+\mathfrak{H}_{L,Border}+\mathfrak{H}_{L,Junk}\,,\label{eq:comeq-Ham-BR}
\end{equation}
\end{subequations}
\begin{subequations}
the \textbf{Laplace-commuted rescaled lapse equations}
\begin{align}
\label{eq:comeq-lapse}\Lap^{\frac{L}2+1}N=&\,\left(12\pi C^2a^{-4}+\frac13\right)\Lap^{\frac{L}2}N+16\pi Ca^{-4}\cdot\Lap^{\frac{L}2}\Psi+\mathfrak{N}_{L,Border}+\mathfrak{N}_{L,Junk}\,,\\
\label{eq:comeq-lapse-odd}\nabla\Lap^{\frac{L}2+1}N=&\,\left(12\pi C^2a^{-4}+\frac13\right)\nabla\Lap^{\frac{L}2}N+16\pi Ca^{-4}\cdot\nabla\Lap^{\frac{L}2}\Psi+\mathfrak{N}_{L+1,Border}+\mathfrak{N}_{L+1,Junk}\,,
\end{align}
\end{subequations}
the \textbf{Laplace-commuted rescaled Bel-Robinson evolution equations}
\begin{subequations}
\begin{align*}
\del_t\Lap^{\frac{L}2}\RE=&\,\frac{\dot{a}}a\left(3-N\right)\Lap^{\frac{L}2}\RE\change{+(N+1)a^{-1}\curl_G\Lap^{\frac{L}2}\RB-a^{-1}\nabla\Lap^\frac{L}2 N\wedge_G \RB}\numberthis\label{eq:comeq-RE}\\
&\,+4\pi C^2a^{-3}(N+1)\Lap^{\frac{L}2}\Sigma\change{+4\pi a(\Psi+C)\nabla\Lap^{\frac{L}2}N\otimes\nabla\phi}\\
&\,+4\pi a(\Psi+C)(N+1)\nabla^2\Lap^\frac{L}2\phi-8\pi a(N+1)\left(\nabla\phi\otimes\nabla\Lap^\frac{L}2\Psi\right)\\
&\,+\mathfrak{E}_{L,Border}+\mathfrak{E}_{L,top}+\mathfrak{E}_{L,Junk}\\
\del_t\Lap^{\frac{L}2}\RB=&\,\frac{\dot{a}}a(3-N)\Lap^{\frac{L}2}\RB\change{-(N+1)a^{-1}\curl_G\Lap^{\frac{L}2}\RE+a^{-1}\nabla\Lap^{\frac{L}2}N\wedge_G\RE} \numberthis\label{eq:comeq-RB}\\
&\,+a^3\epsilonLC[G]\ast\nabla^2\Lap^{\frac{L}2}\phi\ast\nabla\phi+\mathfrak{B}_{L,Border}+\mathfrak{B}_{L,top}+\mathfrak{B}_{L,Junk}\,,
\end{align*}
\end{subequations}
the \textbf{Laplace-commuted rescaled matter evolution equations}
\begin{subequations}
\begin{align}
\del_t\Lap^\frac{L}2\Psi=&\,a\langle\nabla\Lap^\frac{L}2N,\nabla\phi\rangle_G+a(N+1)\Lap^{\frac{L}2+1}\phi-3C\frac{\dot{a}}a\Lap^{\frac{L}2}N+\mathfrak{P}_{L,Border}+\mathfrak{P}_{L,Junk} \label{eq:comeq-Psi-even}\\
\del_t\nabla\Lap^\frac{L}2\phi=&\,a^{-3}(N+1)\nabla\Lap^{\frac{L}2}\Psi+Ca^{-3}\nabla\Lap^\frac{L}2 N+\mathfrak{Q}_{L,Border}+\mathfrak{Q}_{L,Junk} \label{eq:comeq-nablaphi-even}
\end{align}
as well as (also allowing $L=0$ for \eqref{eq:comeq-nablaphi-odd})
\begin{align*}
\del_t\nabla_l\Lap^{\frac{L}2}\Psi=&\,a\nabla_l\nabla^{\sharp j}\Lap^{\frac{L}2}N\nabla_j\phi+a(N+1)\nabla_l\Lap^{\frac{L}2+1}\phi-3C\frac{\dot{a}}a\nabla_l\Lap^{\frac{L}2}N \numberthis\label{eq:comeq-Psi-odd}\\
&\,+\left(\mathfrak{P}_{L+1,Border}\right)_l+\left(\mathfrak{P}_{L+1,Junk}\right)_l \\
\numberthis\label{eq:comeq-nablaphi-odd}\del_t\Lap^{\frac{L}2+1}\phi=&\,a^{-3}(N+1)\Lap^{\frac{L}2+1}\Psi+Ca^{-3}\Lap^{\frac{L}2+1}N +\mathfrak{Q}_{L+1,Border}+\mathfrak{Q}_{L+1,Junk}
\end{align*}
\end{subequations}
and the \textbf{Laplace-commuted rescaled Ricci evolution equations}
\begin{subequations}
\begin{align*}
\numberthis\label{eq:comeq-Ric-even}\del_t\Lap^\frac{L}2\Ric[G]_{ij}=&\,a^{-3}\left(\Lap^{\frac{L}2+1}\Sigma_{ij}-2\nabla^{\sharp m}\nabla_{(i}\Lap^{\frac{L}2}\Sigma_{j)m}\right)\\
&\,-\frac{\dot{a}}a\left(\nabla_{i}\nabla_j\Lap^\frac{L}2N+\Lap^{\frac{L}2+1}N\cdot G_{ij}\right)+\left(\mathfrak{R}_{L,Border}\right)_{ij}+\left(\mathfrak{R}_{L,Junk}\right)_{ij}\\
\numberthis\label{eq:comeq-Ric-odd}\del_t\nabla_k\Lap^\frac{L}2\Ric[G]_{ij}=&\,a^{-3}\left(\nabla_k\Lap^{\frac{L}2+1}\Sigma_{ij}-2\nabla_k\nabla^{\sharp m}\nabla_{(i}\Lap^{\frac{L}2}\Sigma_{j)m}\right)\\
&\,-\frac{\dot{a}}a\left(\nabla_k\nabla_i\nabla_j\Lap^\frac{L}2N+\nabla_k\Lap^{\frac{L}2+1}N\cdot G_{ij}\right)\\
&\,+\left(\mathfrak{R}_{L+1,Border}\right)_{ijk}+\left(\mathfrak{R}_{L+1,Junk}\right)_{ijk}\,.
\end{align*}
\end{subequations}
\end{lemma}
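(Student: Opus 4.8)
The plan is to derive each commuted equation by applying $\Lap^{L/2}$ (or $\nabla\Lap^{L/2}$, or $\nabla\Lap^{(L-1)/2}$ as appropriate) to the corresponding rescaled equation from Proposition \ref{prop:REEq}, then to repeatedly invoke the commutator identities of Appendix \ref{subsec:commutators} to move the Laplace power past the spatial differential operators and the time derivative, collecting everything that is not written out explicitly into the schematic error terms of Appendix \ref{subsec:error-terms}. The explicitly displayed terms are precisely those that either (i) carry a factor that is not integrable against $dt$ near $t=0$ at the same derivative order as the evolved variable (the ``borderline'' principal structure, e.g.\ the $\frac{\dot a}{a}(3-N)\Lap^{L/2}\RE$ term in \eqref{eq:comeq-RE} coming straight from \eqref{eq:REEqE}), or (ii) are genuine top-order terms that must be fed into another equation to regain a derivative — most importantly the $a(N+1)\Lap^{\frac L2+1}\phi$ term in \eqref{eq:comeq-Psi-even}, the $4\pi a(\Psi+C)(N+1)\nabla^2\Lap^{L/2}\phi$ and $-8\pi a(N+1)(\nabla\phi\otimes\nabla\Lap^{L/2}\Psi)$ terms in \eqref{eq:comeq-RE}, and the leading $-a^{-2}\Lap^{L/2}\RB$ in \eqref{eq:comeq-mom-curl} and $a^{-4}\Lap^{L/2}\RE$ in \eqref{eq:comeq-Ham-BR} that tie the Bel--Robinson variables to $\Sigma$ and $\Ric[G]$.

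First I would handle the evolution equations \eqref{eq:comeq-Sigma}, \eqref{eq:comeq-RE}, \eqref{eq:comeq-RB}, \eqref{eq:comeq-Psi-even}--\eqref{eq:comeq-nablaphi-odd}, \eqref{eq:comeq-Ric-even}--\eqref{eq:comeq-Ric-odd}: here the only subtlety beyond bookkeeping is the commutator $[\del_t,\Lap^{L/2}]$, which by \eqref{eq:REEqG}, \eqref{eq:REEqG-1} and \eqref{eq:REEqChr} produces factors of $\del_tG$ and $\del_t\Gamma[G]$, i.e.\ schematically $a^{-3}(N+1)\nabla\Sigma$-type and $\frac{\dot a}{a}N$-type contributions contracted against lower-order derivatives of the evolved tensor; each such term is of junk or borderline type and is absorbed accordingly. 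The one place care is needed is checking that commuting $\Lap^{L/2}$ through $\nabla_i\nabla_j N$ in \eqref{eq:REEqSigma} and through $\curl_G$, $\nabla\cdot\wedge_G$ in \eqref{eq:REEqE}--\eqref{eq:REEqB} leaves the displayed principal terms $-a\nabla^2\Lap^{L/2}N$, $a(N+1)\Lap^{L/2}\Ric[G]$, $(N+1)a^{-1}\curl_G\Lap^{L/2}\RB$, $-a^{-1}\nabla\Lap^{L/2}N\wedge_G\RB$ intact modulo curvature-commutator junk; this is exactly the content of the Ricci-commutator formulas in Appendix \ref{subsec:commutators}, applied with the a priori smallness of $\Ric[G]-\Ric[\gamma]$.

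Next I would treat the constraint equations \eqref{eq:comeq-mom-div}, \eqref{eq:comeq-mom-curl}, \eqref{eq:comeq-Ham}, \eqref{eq:comeq-Ham-BR}, which are purely spatial and hence only require the (easier) commutators of spatial operators among themselves. Applying $\div_G\Lap^{L/2}$ to \eqref{eq:REEqMom} and commuting $\div_G$ past $\Lap^{L/2}$ produces the displayed $-8\pi(\Psi+C)\nabla\Lap^{L/2}\phi$, the curvature-weighted pieces $\nabla\Lap^{\frac L2-1}\Ric[G]\ast\nabla\phi$ and $\nabla\Lap^{\frac L2-1}\Ric[G]\ast\Sigma$ (these arise because $[\div_G,\Lap^{j}]$ applied to a symmetric $2$-tensor is controlled by derivatives of $\Ric[G]$), and the junk remainder $\mathfrak M_{L,Junk}$; \eqref{eq:comeq-mom-curl} is the same computation starting from \eqref{eq:REEqConstrB}; \eqref{eq:comeq-Ham} and \eqref{eq:comeq-Ham-BR} come from $\Lap^{L/2}$ applied to \eqref{eq:REEqHam} and \eqref{eq:REEqConstrE}, using the Leibniz rule to expand $\Lap^{L/2}(\Sigma\ast\Sigma)$, $\Lap^{L/2}(\Psi^2)$, $\Lap^{L/2}(\nabla\phi\ast\nabla\phi)$ into the displayed $I_1+I_2=L$ sums. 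For \eqref{eq:comeq-lapse}--\eqref{eq:comeq-lapse-odd} I would apply $\Lap^{L/2}$ (resp.\ $\nabla\Lap^{L/2}$) to \eqref{eq:REEqLapse1}, using that the coefficient $12\pi C^2a^{-4}+\tfrac13$ is spatially constant so it passes through the Laplacian cleanly, giving the two displayed linear-in-$N$ and linear-in-$\Psi$ terms plus error, the odd case picking up one extra commutator $[\nabla,\Lap^{L/2}]$ absorbed into $\mathfrak N_{L+1,\bullet}$.

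The main obstacle, and where I would spend the bulk of the care, is the consistent classification into Border/Junk/top of every commutator-generated term, since this is what makes the later energy hierarchy close: one must verify, term by term, that each piece is either dressed with a prefactor of the form $\epsilon^{c}a^{-3}$ or $\epsilon^{c}a^{-3-c'\sqrt\epsilon}$ at the same order as the evolved quantity (borderline), or is genuinely one derivative lower (junk), or — only in \eqref{eq:comeq-RE}, \eqref{eq:comeq-RB} — is top-order but enters multiplied by a coefficient like $a$ or $a^{3}$ that makes it junk below top order and borderline only at $L=18$. This is essentially the verification that the schematic error-term definitions of Appendix \ref{subsec:error-terms} are closed under the operations performed; once that is granted, each individual commuted identity is a routine (if lengthy) application of the product rule together with the commutator formulas, and no further idea is required.
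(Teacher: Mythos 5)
Your proposal is correct and follows the same route as the paper: apply $\Lap^{L/2}$ (or the odd-order variant) to each rescaled equation from Proposition~\ref{prop:REEq}, then commute $\Lap^{L/2}$ past $\del_t$, $\nabla$, $\div_G$ and $\curl_G$ using the formulas of Lemma~\ref{lem:comm-space} and Lemma~\ref{lem:com-time}, and absorb the remainders into the schematic Border/Junk/top classes of Appendix~\ref{subsec:error-terms}. The paper's proof is exactly this one-paragraph observation; the term-by-term verification that each commutator contribution lands in the correct class is delegated to the explicit definitions in Section~\ref{subsec:error-terms}, just as you propose.
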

\begin{proof}
The equations \eqref{eq:comeq-Ham},\eqref{eq:comeq-Ham-BR} and \eqref{eq:comeq-lapse} are obtained by simply applying $\Lap^\frac{L}2$ on both sides of \eqref{eq:REEqHam},\eqref{eq:REEqConstrE} and \eqref{eq:REEqLapse1} respectively. For \eqref{eq:comeq-mom-div} and \eqref{eq:comeq-mom-curl}, we additionally use the commutator formulas \eqref{eq:[Lap-l,div]T} and \eqref{eq:[Lap-l,curl]}, while for the evolution equations, we apply the respective commutator of $\del_t$ and spatial derivatives as collected in Lemma \ref{lem:com-time} and commute Laplacians past $\nabla$ and $\curl$ where needed (see the commutators in Lemma \ref{lem:comm-space}). The commutators with $\del_t$ only cause additional borderline and junk terms that do not substantially influence the behaviour, while the spatial commutators often lead to high order curvature terms, for example the Ricci terms in \eqref{eq:comeq-mom-div}, that need to be more carefully tracked.
\end{proof}

\begin{remark}[Simplified junk term notation]\label{rem:notation-parallel}
For junk terms that occur in an inner product with a tracefree symmetric tensor, any terms that are \change{pure trace }will immediately cancel and thus do not need to be taken into consideration for the following estimates, even if they have to be written down in the junk terms. Hence, we will denote with a superscript \enquote{$\parallel$} (for example $\mathfrak{H}_{L,Junk}^\parallel$) on a schematic error term the expressions that arise when dropping all terms of the form $\zeta\cdot G$ for some scalar function $\zeta$ \change{that }occur in this term's definition (see, for example, \eqref{eq:comeq-Ham-junk}).
\end{remark}

\section{Big Bang stability: Norms, energies and bootstrap assumptions}\label{sec:norm-en-bs}

Herein, we state the norms and energies we use to control the solution variables. These will allow us to state our initial data and bootstrap assumptions, and we then provide which improvement we aim to achieve for the latter. Note that we do not provide the coerciveness of our energies immediately (and actually cannot, at least not in a manner useful to our analysis), but will establish Sobolev norm control in the proof of Corollary \ref{cor:H-imp}, the key ingredient being Lemma \ref{lem:Sobolev-norm-equivalence-improved}. Furthermore, we collect a local well-posedness statement from previous work in Section \ref{subsec:lwp}.

\subsection{Norms}\label{subsec:norm}

Recall that $\gamma$ is the hyperbolic spatial reference metric on $M$ introduced in Definition \ref{def:spatial-mf}, which we view as a metric on any foliation hypersurface $\Sigma_t$ (see Section \ref{subsubsec:initial-data}), and $G$ is the rescaled spatial metric arising from the evolution (see Definition \ref{def:rescaled}).

\begin{definition}[Pointwise norms and volume forms] \label{def:pw-stuff}
We denote by $\lvert\cdot\rvert_\gamma$ (resp. $\lvert\cdot\rvert_{G(t,\cdot)}$) the pointwise norm with regard to $\gamma$ (resp. $G(t,\cdot)$). For the sake of simplicity, we define $\lvert\zeta\rvert_\gamma=\lvert\zeta\rvert_{G(t,\cdot)}=\lvert\zeta(t,\cdot)\rvert$ for any scalar function $\zeta$ on $\Sigma_t$. The volume forms on $\Sigma_t$ with respect to $\gamma$ and $G(t,\cdot)$ are written as $\vol{\gamma}$ and $\vol{G(t,\cdot)}$ (or just $\vol{G}$).
\end{definition}

\begin{definition}[$L^2$-norms] Let $\mathfrak{T}$ be a $\Sigma_t$-tangent $(r,s)$-tensor field (for $r,s\geq 0$). Then, we define:
\begin{subequations}
\begin{align}
\|\mathfrak{T}\|_{L^2_\gamma(\Sigma_t)}^2=\|\mathfrak{T}(t,\cdot)\|_{L^2_\gamma(\Sigma_t)}^2&\,:=\int_M \lvert\mathfrak{T}(t,\cdot)\rvert_\gamma^2\,\vol{\gamma},\\
\|\mathfrak{T}\|_{L^2_G(\Sigma_t)}^2=\|\mathfrak{T}(t,\cdot)\|_{L^2_G(\Sigma_t)}^2&\,:=\int_M\lvert\mathfrak{T}(t,\cdot)\rvert_{G(t,\cdot)}^2\,\vol{G(t,\cdot)}\,
\end{align} 
\end{subequations}
\end{definition}

\begin{definition}[Sobolev norms]\label{def:sob-norms} Let $\mathfrak{T}$ be as above and $J\in\N_0$. We define:
\begin{subequations}
\begin{align}
\|\mathfrak{T}\|_{\dot{H}^J_\gamma(\Sigma_t)}^2=\|\mathfrak{T}(t,\cdot)\|_{\dot{H}^J_\gamma}^2=&\,\int_{\Sigma_t}\left\lvert\nabhat^J\mathfrak{T}(t,\cdot)\right\rvert_\gamma^2\,\vol{\gamma}\\
\|\mathfrak{T}\|_{\dot{H}^J_G(\Sigma_t)}^2=\|\mathfrak{T}(t,\cdot)\|_{\dot{H}^J_G}^2=&\,\int_{\Sigma_t}\left\lvert\nabla^J\mathfrak{T}(t,\cdot)\right\rvert_{G(t,\cdot)}^2\,\vol{G(t,\cdot)}\\
\|\mathfrak{T}\|_{H^J_\gamma(\Sigma_t)}^2=\|\mathfrak{T}(t,\cdot)\|_{H^J_\gamma}^2=&\,\sum_{k=0}^J\|\mathfrak{T}\|_{\dot{H}^k_\gamma(\Sigma_t)}^2\\
\|\mathfrak{T}\|_{H^J_G(\Sigma_t)}^2=\|\mathfrak{T}(t,\cdot)\|_{H^J_G}^2=&\,\sum_{k=0}^J\|\mathfrak{T}\|_{\dot{H}^k_G(\Sigma_t)}^2
\end{align}
\end{subequations}
\end{definition}

\begin{definition}[Supremum norms]\label{def:sup-norms} For $\mathfrak{T}$ as above and $J\in\N_0$, we set:
\begin{subequations}
\begin{align}
\|\mathfrak{T}\|_{\dot{C}^J_\gamma(\Sigma_t)}=\|\mathfrak{T}(t,\cdot)\|_{\dot{C}^J_\gamma}=&\,\sup_{p\in \Sigma_t}\left\lvert\nabhat^J\mathfrak{T}(t,\cdot)\right\rvert_\gamma,& \|\mathfrak{T}\|_{C^J_\gamma(\Sigma_t)}=\sum_{k=0}^J\|\mathfrak{T}\|_{\dot{C}^k_\gamma(\Sigma_t)}\\
\|\mathfrak{T}\|_{\dot{C}^J_G(\Sigma_t)}=\|\mathfrak{T}(t,\cdot)\|_{\dot{C}^J_G}=&\,\sup_{p\in \Sigma_t}\left\lvert\nabla^J\mathfrak{T}(t,\cdot)\right\rvert_{G(t,\cdot)},& \|\mathfrak{T}\|_{C^J_G(\Sigma_t)}=\sum_{k=0}^J\|\mathfrak{T}\|_{\dot{C}^k_G(\Sigma_t)}
\end{align}
\end{subequations}
\end{definition}

\begin{remark}[Time dependence is \change{suppressed }in notation]
When the choice of $t$ and $\Sigma_t$ is clear from context, we will often drop time dependences of $G$, $\lvert\cdot\rvert_G$, $\vol{G}$ and $\mathfrak{T}$, suppress the hypersurface $\Sigma_t$ in the Sobolev and supremum norms, and simply write $\int_M$ instead of $\int_{\Sigma_t}$. For example, we write
\[\|\mathfrak{T}\|_{L^2_G}^2=\int_M\lvert\mathfrak{T}\rvert_G^2\,\vol{G}\,.\]
\end{remark}

\begin{definition}[Solution norms]\label{def:sol-norm} We define the following norms to measure the size of near-FLRW solutions:
\begin{subequations}
\begin{align*}
\numberthis\label{eq:def-H}\mathcal{H}=&\,\|\Psi\|_{\change{H^{18}_G}}+\|\nabla\phi\|_{\change{H^{17}_G}}+a^2\|\nabla\phi\|_{\change{\dot{H}^{18}_G}}\\
&\,+\|\Sigma\|_{H^{18}_G}+\|\RE\|_{H^{18}_G}+\|\RB\|_{H^{18}_G}\\
&\,+\|G-\gamma\|_{H^{18}_G}+\|\Ric[G]+\frac29G\|_{H^{16}_G}+a^{-4}\|N\|_{H^{16}_G}+a^{-2}\|N\|_{\dot{H}^{17}_G}+\|N\|_{\dot{H}^{18}_G}\\
\numberthis\label{eq:def-H-top}\change{\mathcal{H}_{top}=}&\,\change{a^2\|\Psi\|_{{\dot{H}^{19}_G}}}\change{+a^4\|\nabla\phi\|_{\change{\dot{H}^{19}_G}}}\change{+a^2\|\Sigma\|_{\dot{H}^{19}_G}}\change{+a^2\|\Ric[G]+\frac29G\|_{\dot{H}^{17}_G}}\,\\
\mathcal{C}=&\,\|\Psi\|_{\change{C^{16}_G}}+\|\nabla\phi\|_{\change{C^{15}_G}}+\|\Sigma\|_{C^{16}_G}+\|\RE\|_{C^{16}_G}+\|\RB\|_{C^{16}_G} \numberthis\label{eq:def-C}\\
&\,+\|G-\gamma\|_{C^{16}_G}
+\|\Ric[G]+\frac29G\|_{C^{14}_G}+a^{-4}\|N\|_{C^{14}_G}+a^{-2}\|N\|_{\dot{C}^{15}_G}+\|N\|_{\dot{C}^{16}_G}\\
\mathcal{C}_\gamma=&\,\|\Psi\|_{\change{C^{16}_\gamma}}+\|\nabla\phi\|_{\change{C^{15}_\gamma}}+\|\Sigma\|_{C^{16}_\gamma}+\|\RE\|_{C^{16}_\gamma}+\|\RB\|_{C^{16}_\gamma} \numberthis\label{eq:def-C-gamma}\\
&\,+\|G-\gamma\|_{C^{16}_\gamma}
+\|\Ric[G]+\frac29G\|_{C^{14}_\gamma}+a^{-4}\|N\|_{C^{14}_\gamma}+a^{-2}\|N\|_{\dot{C}^{15}_\gamma}+\|N\|_{\dot{C}^{16}_\gamma}
\end{align*}
\end{subequations}
\end{definition}

\begin{remark}[Choice of metric and controlling Christoffel symbols]\label{rem:relation-metric-Chr}
We could equivalently also phrase $\mathcal{H}$ in terms of $\gamma$-norms, or predominately use $\mathcal{C}_\gamma$ instead of $\mathcal{C}$, since we include the norms on $G-\gamma$ and $\Ric[G]+\frac29G=(\Ric[G]+\frac29\gamma)+\frac29(G-\gamma)$. We will demonstrate in Lemma \ref{lem:G-gamma-norm-switch} how $H_G$ and $C_\gamma$ norms can be used to control $H_\gamma$ and $C_G$ norms. We will also indicate how the initial data and bootstrap assumptions for $\mathcal{C}_\gamma$ and $\mathcal{C}$ are equivalent in Remarks \ref{rem:init-Cgamma} and \ref{rem:Bs-Cgamma}. The main reason for this is that, by successively replacing local coordinates in the expressions of $\Gamma-\Gamhat$ by $\nabhat$, one has
\begin{equation}\label{eq:Christoffel-norm-handwaving}
\|\Gamma-\Gamhat\|_{C^{l-1}_\gamma\change{(M)}}\lesssim P_l(\|G-\gamma\|_{C^l_\gamma(M)},\|G^{-1}-\gamma^{-1}\|_{C^l_\gamma(M)})\,.
\end{equation}

We choose to work predominately with norms in terms of the rescaled metric since quantities appearing in the Einstein system are naturally contracted by $G$, not $\gamma$, and we commute with differential operators associated with $G$.

\end{remark}

\begin{remark}[Redundancies in the solution norms]\label{rem:redundancy}
The solution norms $\mathcal{H}$, $\mathcal{C}$ and $\mathcal{C}_\gamma$ aren't \enquote{optimal} in the sense that controlling the norms of $\Psi,\,\nabla\phi,\Sigma$ and $G-\gamma$ is entirely sufficient to gain the claimed control (up to constant) on $N$ via the lapse equation, $\RE$ and $\RB$ via to the constraint equations and $\Ric[G]+\frac29G$ via local coordinates. We choose to include all variables in the norms and subsequent assumptions mainly for the sake of convenience.
\end{remark}

\subsection{Energies}\label{subsec:en}

The fundamental objects used to control the solution variables are the energies that take the following form:

\begin{definition}[Energies]\label{def:energies}
Let $l\in\N_0$. We define:
\begin{subequations}
\begin{align*}
\numberthis\changefinal{\E^{(l)}(\phi,t)=}&\,(-1)^l\int_M\Psi\Lap^l\Psi-a^4\phi\Lap^{l+1}\phi\,\vol{G}\\
\label{eq:energydef-ibp}=&\,\begin{cases}
\int_M\lvert\Lap^\frac{l}2\Psi\rvert^2+a^4\lvert\nabla\Lap^\frac{l}2\phi\rvert_G^2\,\vol{G} & l\text{ even}\\
\int_M\lvert\nabla\Lap^\frac{l-1}2\Psi\rvert_G^2+a^4\lvert\Lap^\frac{l+1}2\phi\rvert_G^2\,\vol{G} & l\text{ odd}
\end{cases}\\
\numberthis\E^{(l)}(W,t)=&\,(-1)^l\int_M \langle\RE,\Lap^l \RE\rangle_G +\langle\RB,\Lap^l\RB\rangle_G\,\vol{G}\\
\numberthis\E^{(l)}(\Sigma,t)=&\,(-1)^l\int_M\langle\Sigma,\Lap^l\Sigma\rangle_G\,\vol{G}\\
\numberthis\E^{(l)}(\Ric,\cdot)=&\,(-1)^l\int_M\left\langle \Ric[G]+\frac29G,\Lap^l\left(\Ric[G]+\frac29G\right)\right\rangle_G\,\vol{G}\\
\numberthis\E^{(l)}(N,\cdot)=&\,(-1)^l\int_M\langle N,\Lap^lN\rangle_G\,\vol{G}
\end{align*}
\end{subequations}
Usually, we will use integration by parts to distribute derivatives within the energies as in \eqref{eq:energydef-ibp}. Further, we introduce the notation
\begin{align}
\E^{(\leq l)}&=\sum_{i=0}^l\E^{(i)}
\end{align}
for any of the energies above.
\end{definition}
\noindent For any $l\in\N_0$ and any smooth functions $f_1,f_2\in C^\infty(\R_+)$, we have
\begin{equation}\label{eq:ibp-trick}
f_1\cdot f_2\cdot\E^{(2l+1)}\leq \frac{f_1^2}2\E^{(2l)}+\frac{f_2^2}2\E^{(2l+2)}\,.
\end{equation}
Performing the calculation for $\Sigma$ as an example, we have:
\begin{align*}
\E^{(2l+1)}(\Sigma,\cdot)&\,=-\int_M \langle\Lap^l\Sigma,\Lap^{l+1}\Sigma\rangle_G\,\vol{G}\leq \int_M\lvert\Lap^l\Sigma\rvert_G\lvert\Lap^{l+1}\Sigma\rvert_G\,\vol{G}\leq \sqrt{\E^{(2l)}(\Sigma,\cdot)}\sqrt{\E^{(2l+2)}(\Sigma,\cdot)}
\end{align*}
Now, \eqref{eq:ibp-trick} then follows from the Young inequality. As a consequence, we also have
\begin{equation}\label{eq:drop-odd-en}
\E^{(\leq 2l)}\lesssim \sum_{m=0}^l\E^{(2m)}\,,\,\E^{(\leq 2l+1)}\lesssim\sum_{m=0}^{l+1}\E^{(2m)}\,.
\end{equation}
This allows us to largely restrict our analysis to even order energies, outside of how we close the bootstrap argument at top order.

\subsection{Assumptions on the initial data}\label{subsec:init}

With the necessary solution norms and energies now defined, we can now state what we assume near-FLRW initial data to satisfy:

\begin{assumption}[Near-FLRW initial data]\label{ass:init}
For some small enough $\epsilon\in(0,1)$ and the solution norms $\mathcal{H}\change{, \mathcal{H}_{top}}$ and $\mathcal{C}$ as in Definition \ref{def:sol-norm}, we assume the rescaled initial data to be close to that of the FLRW solution in Lemma \ref{lem:FLRW} in the following sense:
\begin{equation}\label{eq:init-ass}
\mathcal{H}(t_0)+\mathcal{H}_{top}(t_0)+\mathcal{C}(t_0)\lesssim\epsilon^2
\end{equation}
\delete{Additionally [...]} 
\end{assumption}
\change{The assumptions }on $\mathcal{H}\change{+\mathcal{H}_{top}}$ also imply the following:
\begin{align*}
\numberthis\label{eq:init-ass-en}
\E^{(\leq \change{18})}(\phi,t_0)+\E^{(\leq 18)}(\Sigma,t_0)+\E^{(\leq 18)}(W,t_0)+\E^{(\leq 16)}(\Ric,t_0)&\\
+\|\nabla\phi\|_{H^{18}_G}^2+\E^{(\change{18})}(N,t_0)+a(t_0)^{-4}\E^{(\change{17})}(N,t_0)+a(t_0)^{-8}\E^{(\change{\leq 16})}(N,t_0)\\
\change{+a(t_0)^4\E^{(19)}(\phi,t_0)+a(t_0)^4\E^{(19)}(\Sigma,t_0)+a(t_0)^4\E^{(17)}(\Ric,t_0)}
&\,\lesssim\epsilon^4
\end{align*}

\begin{remark}[Initial data size in $\mathcal{C}_\gamma(t_0)$]\label{rem:init-Cgamma}
Notice that by \eqref{eq:Christoffel-norm-handwaving}, \eqref{eq:init-ass} implies \change{that}
\begin{subequations}
\begin{equation}\label{eq:init-ass-Chr-C}
\|\Gamma-\Gamhat\|_{C^{15}_G(\change{\Sigma_{t_0}})}\lesssim\epsilon^4\,,
\end{equation}
and arguing along similar lines and using $L^2-L^\infty$-estimates, also
\begin{equation}
\|\Gamma-\Gamhat\|_{H^{17}_G(\change{\Sigma_{t_0}})}\lesssim\epsilon^4\,.
\end{equation}
\end{subequations}
In particular, since moving from $C_\gamma^l$ to $C_G^l$ only requires control on Christoffel symbols to order $l-1$ for general tensors and $l-2$ for scalar functions, as well as zero order control on $G-\gamma$, it follows from \eqref{eq:init-ass} that
\begin{equation}\label{eq:init-ass-Cgamma}
\mathcal{C}_\gamma(t_0)\lesssim\epsilon^2
\end{equation}
We refer to the proof of Lemma \ref{lem:G-gamma-norm-switch} for a more detailed term analysis and how a similar argument also applies to the Sobolev norms.
\end{remark}

\begin{remark}[Redundancies in the initial data assumptions]
Similar to Remark \ref{rem:redundancy}, one could also reduce the \change{initial data }assumptions in \eqref{eq:init-ass}\change{, especially at top order}. In particular, we highlight that the Bel-Robinson energy can be entirely controlled by the other terms that occur due to the additional scaled $\Sigma$-energy at order \change{$19$}, or vice versa we could drop the latter in favour of the former. This will be reflected in Lemma \ref{lem:en-est-Sigma-top}.
\end{remark}

\begin{remark}[The volume form]\label{eq:rem-vol-form}
Let $\mu_G$ and $\mu_\gamma$ denote the volume elements of $G$ and $\gamma$ respectively. Since the determinant is a smooth map on invertible matrices, the initial data assumptions on $G-\gamma$ also imply
\begin{equation}\label{eq:init-vol-el}
\|\mu_{G}-\mu_{\gamma}\|_{C^0_G(\Sigma_{t_0})}=\|\mu_{G}-\mu_{\gamma}\|_{C^0_\gamma(\Sigma_{t_0})}\lesssim \epsilon^2\,.
\end{equation}
Consequently, we have
\[\|\vol{G}-\vol{\gamma}\|_{C^0_\gamma(\Sigma_{t_0})}=\mu_{\gamma}^{-1}\|\vol{\gamma}\|_{C^0_\gamma(\Sigma_{t_0})}\|\mu_{G(t_0,\cdot)}-\mu_{\gamma}\|_{C^0_\gamma(\Sigma_{t_0})}\lesssim\epsilon^2\]
and, since $\|G^{-1}-\gamma^{-1}\|_{C^0_\gamma(\Sigma_{t_0})}\lesssim\epsilon^2$ also follows by a von Neumann series argument from the initial data assumption on $G-\gamma$,
\begin{equation}\label{eq:init-vol-form}
\|\vol{G}-\vol{\gamma}\|_{C^0_G(\Sigma_{t_0})}\lesssim \epsilon^2\,.
\end{equation}
\end{remark}

\subsection{Local well-posedness and continuation criteria}\label{subsec:lwp}

For everything that follows, we need to establish that the initial data assumptions above also ensure local well-posedness. For the core system, we translate the local well-posedness result for stiff fluids in \cite{Rodnianski2014} to the sub-case of the scalar field system. While statement and proof there are for vanishing spatial sectional curvature and what corresponds to choosing $C=\sqrt{\nicefrac23}$, the arguments for our setting are completely analogous.

\begin{lemma}[Local well-posedness and continuation criteria for the Einstein scalar-field system (Big Bang version), see {\cite[Theorem 14.1]{Rodnianski2014}}]\label{lem:lwp}
Let $N\geq 4$ be an integer and $(M,\mathring{g},\mathring{k},\mathring{\pi},\mathring{\psi})$ be geometric initial data to the Einstein scalar-field system (see \change{Section \ref{subsubsec:initial-data}}) and assume that one has
\[\|\mathring{g}-a(t_0)^2\gamma\|_{H^{N+1}_\gamma(M)}+\|\mathring{k}+\frac{\tau(t_0)}3\cdot a(t_0)^2\gamma\|_{H^N_\gamma(M)}+\|\mathring{\pi}\|_{H^{N}_{\gamma}(M)}+\|\mathring{\psi}-Ca(t_0)^{-3}\|_{H^N_\gamma(M)}<\infty\,\]
as well as, for some sufficiently small $\eta^\prime>0$,
\[\|\mathring{\psi}-Ca(t_0)^{-3}\|_{C^0_\gamma(M)}\leq\eta^\prime\,.\]
Then, the CMC-transported Einstein scalar-field system (respectively the rescaled system) is locally well-posed in the following sense: The initial data $(\mathring{g},\mathring{k},\mathring{\pi},\mathring{\psi})$ launches a unique classical solution $(g,k,n,\nabla\phi,\del_t\phi)$ to \eqref{eq:Hamilton}-\eqref{eq:Momentum}, \eqref{eq:EEqg}-\eqref{eq:EEqk}, \eqref{eq:wave} and \eqref{eq:EEqLapse} on $[t_1,t_0]\times M$ for some $t_1\in(0,t_0)$ that satisfies ${k^l}_l=-3\dot{a}a^{-1}$ and $n>0$, launches a solution to the Einstein scalar-field system and such that the variables enjoy the following regularity:
\begin{align*}
g\in&C^{N-1}_{dt^2+\gamma}([t_1,t_0]\times M)\cap C^0([t_1,t_0],H_{\gamma}^{N+1}(M))\\
k\in&C^{N-2}_{dt^2+\gamma}([t_1,t_0]\times M)\cap C^0([t_1,t_0],H_{\gamma}^{N}(M))\\
\nabla\phi\in&C^{N-2}_{dt^2+\gamma}([t_1,t_0]\times M)\cap C^0([t_1,t_0],H_{\gamma}^{N}(M))\\
\del_t\phi\in&C^{N-2}_{dt^2+\gamma}([t_1,t_0]\times M)\cap C^0([t_1,t_0],H_{\gamma}^{N}(M))\\
n\in&\,C^N_{dt^2+\gamma}([t_1,t_0]\times M)\cap C^0([t_1,t_0],H_{\gamma}^{N+2}(M))
\end{align*}
The rescaled variables $(G,\Sigma,N,\nabla\phi,\Psi)$ enjoy the analogous regularity. If $(\mathfrak{t},t_0]$ is the maximal interval on which the above statements hold, then one either has $\mathfrak{t}=0$ or one of the following blow-up criteria are satisfied:
\begin{enumerate}
\item The smallest eigenvalue of $g(t_m,\cdot)$ converges to 0 for some sequence $(t_m,x_m)\subseteq (\mathfrak{t},t_0]\times M$ with $t_m\downarrow \mathfrak{t}$.
\item $n(t_m,x_m)$ converges to $0$ for some sequence $(t_m,x_m)\subseteq (\mathfrak{t},t_0]\times M$ with $t_m\downarrow \mathfrak{t}$.
\item $\left(\lvert\del_0\phi\rvert^2+\lvert\nabla\phi\rvert_g^2\right)(t_m,x_m)$ converges to $0$ for some sequence $(t_m,x_m)\subseteq (\mathfrak{t},t_0]\times M$ with $t_m\downarrow \mathfrak{t}$.
\item $s\in(\mathfrak{t},t_0]\mapsto\|g\|_{C^2_\gamma(\Sigma_s)}+\|k\|_{C^1_\gamma(\Sigma_s)}+\|n\|_{C^2_\gamma(\Sigma_s)}+\|\del_t\phi\|_{C^1_\gamma(\Sigma_s)}+\|\nabla\phi\|_{C^1_\gamma(\Sigma_s)}$ is unbounded.
\end{enumerate}
\end{lemma}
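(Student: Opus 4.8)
The plan is to reduce the Einstein scalar-field system to the gauge-fixed, elliptic-hyperbolic form recorded in Proposition \ref{prop:eq} and then to run the standard Picard-type iteration for such systems, following \cite[Section 14]{Rodnianski2014} essentially line by line. The only structural differences to the situation treated there are that the reference spatial metric $\gamma$ is a fixed negative Einstein metric rather than flat, that the homogeneous matter constant is an arbitrary $C>0$, and that we carry a massless scalar field in place of a stiff fluid. None of these affects the functional-analytic structure of the argument: $\gamma$ is smooth and fixed, the scalar field is precisely an irrotational stiff fluid with pressure equal to energy density (which is the regime in which \cite{Rodnianski2014} already formulates the matter as a symmetric hyperbolic system), and on any compact subinterval $[t_1,t_0]\subset(0,t_0]$ all powers of $a$ appearing as coefficients in \eqref{eq:EEqg}--\eqref{eq:EEqLapse} are bounded above and below.

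First I would establish unique solvability of the lapse equation \eqref{eq:EEqLapse}: its zeroth-order coefficient $\tfrac13 a^{-2}+4\pi C^2a^{-6}+\langle\hat k,\hat k\rangle_g+8\pi\lvert\del_0\phi\rvert^2$ is strictly positive, so $\Lap_g-(\cdot)$ is invertible on $H^{N+2}_\gamma(M)$ with estimates depending only on the $H^N_\gamma$-norms of $(g,\hat k,\del_0\phi)$ and on the uniform ellipticity of $g$; moreover the maximum principle applied to $\Lap_g n-(\cdot)n=-12\pi C^2a^{-6}-\tfrac13 a^{-2}<0$ forces $n$ to be bounded below by a positive constant. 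With the resulting map $(g,\hat k,\nabla\phi,\del_0\phi)\mapsto n$ in hand, \eqref{eq:EEqg}, \eqref{eq:EEqk} and the first-order reduction of the wave equation \eqref{eq:wave} form a coupled quasilinear hyperbolic system for $(g,\hat k,\nabla\phi,\del_0\phi)$ for which $H^N_\gamma$-energy estimates, together with the two-derivative elliptic gain for $n$, close a Picard iteration on a short interval $[t_1,t_0]$; this gives the asserted existence, uniqueness and regularity. Propagation of the Hamiltonian and momentum constraints \eqref{eq:Hamilton}--\eqref{eq:Momentum} and of the CMC condition \eqref{eq:CMC} is checked in the usual way, so that a solution of the reduced system is a genuine solution of \eqref{eq:ESF1}--\eqref{eq:ESF2}; and the rescaled variables $(G,\Sigma,N,\nabla\phi,\Psi)$ of Definition \ref{def:rescaled} inherit the stated regularity since they arise from $(g,\hat k,n,\nabla\phi,\del_0\phi)$ by multiplication with smooth functions of $t$.

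For the continuation criteria I would argue by contraposition: suppose $\mathfrak t>0$ is the maximal existence time and that none of (1)--(4) holds on $(\mathfrak t,t_0]$. Then $g$ is uniformly elliptic and has bounded $C^2_\gamma$-norm, $n$ is bounded above and below, $\lvert\del_0\phi\rvert^2+\lvert\nabla\phi\rvert_g^2$ is bounded below (and, with $\del_t\phi,\nabla\phi$, above) on $(\mathfrak t,t_0]$. Feeding these bounds into the elliptic estimate for $n$ and into the energy inequalities used in the iteration, a Gronwall argument shows that the full $H^{N+1}_\gamma\times H^N_\gamma\times\dots$-norm of the solution stays bounded as $t\downarrow\mathfrak t$; hence the local theory provides a uniform lower bound on the length of the existence interval launched from data near $\mathfrak t$, so the solution extends to a neighbourhood of $\mathfrak t$, contradicting maximality. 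Therefore $\mathfrak t=0$ unless one of (1)--(4) occurs.

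The main obstacle is bookkeeping rather than conceptual: one must check that every step of the Rodnianski--Speck scheme --- the elliptic theory for the lapse, the constraint propagation, and the hyperbolic energy estimates --- survives the replacement of flat $\gamma$ by a general smooth negative Einstein metric and of the stiff fluid by the scalar field, and that the $a$-dependent coefficients, singular as $a\to 0$, cause no difficulty on compact subintervals of $(0,t_0]$. Since the non-degeneracy in criterion (3) is exactly the condition under which the stiff fluid system of \cite{Rodnianski2014} is symmetric hyperbolic, and the scalar field is a special (irrotational) such solution, no new analytic input beyond that reference is required.
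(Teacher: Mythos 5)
Your proposal is essentially correct and follows the same route the paper takes: reduce to the paper's Proposition \ref{prop:eq}, observe that the only structural changes relative to \cite[Theorem 14.1]{Rodnianski2014} are the reference geometry and the matter model, and then port the Rodnianski--Speck scheme with uniform coefficient bounds on compact subintervals of $(0,t_0]$. The paper's own proof is just a one-line note: it cites the stiff-fluid result and explains that the positivity-of-pressure assumption there is equivalent to the smallness assumption on $\mathring{\psi}-Ca(t_0)^{-3}$ in the lemma, which is exactly the point you identify when tying criterion (3) to the symmetric-hyperbolicity of the fluid system. Your sketch is considerably more detailed than the paper's and offers a slight variant: you treat the scalar field directly through the wave equation \eqref{eq:wave} rather than casting it into stiff-fluid form. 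That actually buys you something --- with the wave equation the matter system is unconditionally hyperbolic, so the degeneracy criterion (3) would not be needed in the direct argument; the paper inherits (3) because it imports the stiff-fluid statement verbatim. Since the lemma is a disjunction, having a redundant criterion costs nothing, so both versions are consistent. One small caution: the uniqueness claim from the Picard iteration deserves a word (it follows from the standard $L^2$-contraction estimate for the difference of two solutions, with the elliptic gain for the lapse absorbing the top-order lapse term), and the constraint propagation deserves at least a pointer (it is the usual wave-equation argument for the Gauss and Codazzi constraints, as in \cite[Chapter 14]{Rodnianski2014}); but neither of these is a gap, just an omission consistent with the level of detail the paper itself provides.
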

\begin{proof}[A note on the proof]
Note that the additional initial data requirement in the stiff-fluid setting that the pressure is strictly positive is covered by the smallness assumption on $\mathring{\psi}-Ca(t_0)^2$, since the pressure corresponds to $\lvert\mathring{\psi}\rvert^2+\lvert\mathring{\pi}\rvert_{\mathring{g}}^2$ and the assumptions on $\del_t\phi$ and $\nabla\phi$ ensure that (after embedding) this quantity behaves like $C^2a(t_0)^{-6}+\O{\eta^\prime}$ at $\Sigma_{t_0}$.
\end{proof}
\begin{corollary}[Local well-posedness for the Bel-Robinson variables]\label{cor:lwp-BR}
Under the assumptions of Lemma \ref{lem:lwp}, the Bel-Robinson variables $E$ and $B$ corresponding to the Lorentzian metric $\g=-n^2dt^2+g$ satisfy the equations \eqref{eq:constr-E}-\eqref{eq:constr-B}, are the unique classical solutions to the evolution equations \eqref{eq:EEqE}-\eqref{eq:EEqB}, and satisfy
\[E,B\in C^{N-3}([t_1,t_0]\times M)\cap C([t_1,t_0], H^{N-1}_\gamma(M))\]
\end{corollary}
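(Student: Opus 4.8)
The plan is to derive everything from the local well-posedness statement of Lemma~\ref{lem:lwp} together with the algebraic definitions of $E$ and $B$ and the constraint/evolution equations from Lemma~\ref{lem:EEqBR}. First I would recall that, given the classical solution $(g,k,n,\nabla\phi,\del_t\phi)$ produced by Lemma~\ref{lem:lwp} on $[t_1,t_0]\times M$, the Lorentzian metric $\g=-n^2dt^2+g$ is a genuine solution of the Einstein scalar-field system, so its Weyl tensor $W[\g]$ is well-defined, and hence $E=E(W[\g])$ and $B=B(W[\g])$ are defined pointwise by $E_{ij}=W_{i0j0}$, $B_{ij}=W^\ast_{i0j0}$ as in Subsection~\ref{subsec:BR}. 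Since $\g$ solves the Einstein equations in CMC gauge, Lemma~\ref{lem:EEqBR} applies verbatim: $E$ and $B$ satisfy the constraint equations \eqref{eq:constr-E}--\eqref{eq:constr-B} and the evolution equations \eqref{eq:EEqE}--\eqref{eq:EEqB}. Uniqueness of $E,B$ as classical solutions of \eqref{eq:EEqE}--\eqref{eq:EEqB} with the initial data $\mathring E,\mathring B$ of Remark~\ref{rem:init-BR} is then immediate, because they are \emph{defined} from the already-unique spacetime $(\M,\g,\phi)$; any other classical solution of the same first-order symmetric-hyperbolic-type system with the same data coincides with it by standard ODE/PDE uniqueness along the CMC foliation (the system \eqref{eq:EEqE}--\eqref{eq:EEqB} is linear in $(E,B)$ with coefficients built from $(g,k,n,\phi)$ and their first derivatives, hence Lipschitz on the relevant function spaces).

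Second, I would establish the regularity claim $E,B\in C^{N-3}([t_1,t_0]\times M)\cap C([t_1,t_0],H^{N-1}_\gamma(M))$ by simple bookkeeping on the right-hand sides of the defining formulas. Reading off \eqref{eq:constr-E}: $E$ is an algebraic expression in $\Ric[g]$, $\tau$, $\hat k$, $\nabla\phi$ and $\del_0\phi$ together with $g$ and $g^{-1}$. From Lemma~\ref{lem:lwp} one has $g\in C^{N-1}_{dt^2+\gamma}\cap C^0([t_1,t_0],H^{N+1}_\gamma)$, so $\Ric[g]$ — which costs two spatial derivatives of $g$ — lies in $C^{N-3}\cap C^0([t_1,t_0],H^{N-1}_\gamma)$; likewise $\hat k=k-\tfrac{\tau}{3}g$ inherits the regularity of $k$, namely $C^{N-2}\cap C^0([t_1,t_0],H^N_\gamma)$, which is at least as good as $C^{N-3}\cap C^0([t_1,t_0],H^{N-1}_\gamma)$, and the same holds for $\nabla\phi$ and $\del_0\phi=n^{-1}\del_t\phi$. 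Since $H^{N-1}_\gamma(M)$ for $N-1\geq 3$ is a Banach algebra on the closed $3$-manifold $M$, products of these quantities stay in $H^{N-1}_\gamma$, and the scalar coefficients ($\tau$, $C$, $a$, $a^{-1}$) are smooth in $t$; hence $E\in C^{N-3}([t_1,t_0]\times M)\cap C([t_1,t_0],H^{N-1}_\gamma(M))$. For $B=-\curl_g\hat k$ from \eqref{eq:constr-B}, $\curl$ costs one spatial derivative of $\hat k$, so $B$ loses one order relative to $k$, landing in $C^{N-3}\cap C^0([t_1,t_0],H^{N-1}_\gamma)$ as well; the worst of the two determines the stated class. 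One should also note that $\mathring E,\mathring B$ are symmetric and tracefree by the Hamiltonian and momentum constraints \eqref{eq:init-Hamilton}--\eqref{eq:init-momentum}, as already observed in Remark~\ref{rem:init-BR}, so the data is consistent.

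The only real subtlety — and the step I would treat most carefully — is the claim that $E,B$ are the \emph{unique} classical solutions of the evolution system \eqref{eq:EEqE}--\eqref{eq:EEqB}: one must make sure the constraints \eqref{eq:constr-E}--\eqref{eq:constr-B} are propagated by the evolution (so that the pair obtained by solving the evolution with the prescribed data does not drift off the constraint surface and actually agrees with the geometrically defined $E,B$). This is classical — it is precisely the content of the Bel-Robinson/Bianchi formalism in \cite{ChrKl93} and \cite{AM03}, where one checks that $\del_t$ of the constraint quantities vanishes modulo the constraints themselves, using the twice-contracted Bianchi identity and the Einstein equations — so I would simply invoke those references rather than redo the computation, remarking that the CMC, zero-shift gauge here is the same one used there. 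With constraint propagation in hand, uniqueness follows from the linear structure of \eqref{eq:EEqE}--\eqref{eq:EEqB} in $(E,B)$ by a Gr\"onwall estimate on $[t_1,t_0]$, and the regularity of the coefficients (read off as above) guarantees the $C^{N-3}$ and $H^{N-1}_\gamma$ persistence for the solution. This completes the proof of Corollary~\ref{cor:lwp-BR}.
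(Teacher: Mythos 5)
Your proof takes essentially the same route as the paper: identify $E,B$ geometrically from $W[\g]$, invoke Lemma~\ref{lem:EEqBR} to establish that they satisfy the constraint and evolution equations, read off the regularity from Lemma~\ref{lem:lwp} through the formulas \eqref{eq:constr-E}--\eqref{eq:constr-B}, and conclude uniqueness by observing that the Maxwell-type evolution system has a unique classical solution with the data $\mathring E,\mathring B$, which must therefore be the geometrically defined pair. Your regularity bookkeeping in the second paragraph is more explicit than the paper's terse ``a direct consequence of Lemma~\ref{lem:lwp} and the computations in Section~\ref{subsec:BR},'' and it is correct: $\Ric[g]$ costs two derivatives of $g$ and $\curl_g\hat k$ costs one derivative of $k$, so both $E$ and $B$ land in $C^{N-3}\cap C^0([t_1,t_0],H^{N-1}_\gamma)$.

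The one place you overcomplicate matters is the third paragraph on constraint propagation, which you present as ``the only real subtlety.'' It is not actually needed. Once you know (i) that the geometrically defined $E,B$ solve \eqref{eq:EEqE}--\eqref{eq:EEqB} with initial data $\mathring E,\mathring B$ (because Lemma~\ref{lem:EEqBR} holds on every time slice of the evolved spacetime), and (ii) that the linear hyperbolic evolution system admits a unique classical solution with that data, then the unique solution \emph{is} $E,B$, and it automatically lies on the constraint surface at all times because the geometric $E,B$ satisfy \eqref{eq:constr-E}--\eqref{eq:constr-B} by Lemma~\ref{lem:EEqBR}. There is no separate worry about ``drifting off the constraint surface'' to dispose of: identifying the unique evolution solution with the geometrically constructed quantities hands you constraint satisfaction for free, without any Bianchi-identity computation. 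So your first two paragraphs already close the argument; the third is a detour — not incorrect, but not load-bearing, and it slightly obscures which step actually carries the uniqueness claim.
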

\begin{proof}
That $E$ and $B$ satisfy the constraint equations, solve the evolution equations and have the stated regularity on the interval of existence is a direct consequence of Lemma \ref{lem:lwp} and the computations in Section \ref{subsec:BR}. Furthermore, with initial data derived from the constraint equations as in Remark \ref{rem:init-BR}, the hyperbolic system \eqref{eq:constr-E}-\eqref{eq:constr-B} launches a unique solution satisfying the regularity above that must then be $(E,B)$.
\end{proof}
\change{For sufficiently regular initial data ($N\geq 21$), it follows that
\[\E^{(\leq 19)}(\phi,\cdot),\E^{(\leq 18)}(W,\cdot),\E^{(\leq 19)}(\Sigma,\cdot),\E^{(\leq 17)}(\Ric,\cdot),\|G-\gamma\|_{H^{18}_G}\in C^{1}([t_1,t_0])\,,\]
and similarly the square of any supremum norm occurring in $\mathcal{C}$ is continuously differentiable on $[t_1,t_0]$. Strictly speaking, we would need to assume this additional regularity on our initial data for the computations in the following sections (especially Section \ref{sec:en-est}) to hold. However, since smooth functions are dense in $H^l(M)$ for any $l\in\N_0$, any bounds on $\mathcal{H}(t)$ and $\mathcal{C}(t)$ that we prove assuming sufficient regularity at $\Sigma_{t_0}$ then immediately extend to data only satisfying the regularity implied by \eqref{eq:init-ass}. }

\noindent \change{Thus, }from here on out, we will assume \change{without loss of \changefinal{generality }that }all energies \change{and squared norms are }continuously differentiable on the domain of existence, and similarly all variables \change{are }continuously differentiable for the lower order $C_G$-norm improvements in Section \ref{subsec:AP}.

\subsection{Bootstrap assumption}\label{subsec:bs}

To keep an overview of the entire bootstrap argument, we state all of the assumptions and comprehensively list how we intend to improve them.

\begin{assumption}[Bootstrap assumption]\label{ass:bootstrap}
Fix some $t_{Boot}\in[0,t_0)$. Further, let $c_0>0$, let $\sigma\in(\epsilon^\frac18,1]$ be suitably small such that $c_0\sigma<1$, and $K_0>0$ a suitable constant. For any $t\in(t_{Boot},t_0]$, we assume 
\begin{equation}\label{eq:BsC}
\mathcal{C}(t)\leq K_0\epsilon a(t)^{-c_0\sigma}\,.
\end{equation}
\delete{as well as the energy inequalities [...]}
\end{assumption}

\begin{remark}
More explicitly, \eqref{eq:BsC} means
\begin{subequations}
\begin{align}
\|\Psi\|_{\change{C^{16}_G}}\leq&\,K_0\epsilon a^{-c_0\sigma} \label{eq:BsPsi}\\
\|\nabla\phi\|_{\change{C^{15}_G}}\leq&\,K_0\epsilon a^{-c_0\sigma} \label{eq:Bsnablaphi}\\
\|\Sigma\|_{C^{16}_G}\leq&\,K_0\epsilon a^{-c_0\sigma} \label{eq:BsSigma}\\
\|\RE\|_{C^{16}_G}\leq&\,K_0\epsilon a^{-c_0\sigma} \label{eq:BsE}\\
\|\RB\|_{C^{16}_G}\leq&\,K_0\epsilon a^{-c_0\sigma} \label{eq:BsB}\\
\|\Ric[G]+\frac29G\|_{C^{14}_G}\leq&\,K_0\epsilon a^{-c_0\sigma} \label{eq:BsRic}\\
\|G-\gamma\|_{C^{16}_G}\leq&\,K_0\epsilon a^{-c_0\sigma} \label{eq:BsG}\\
\|N\|_{C^{14}_G}+a^2\|N\|_{\dot{C}^{15}_G}+a^4\|N\|_{\dot{C}^{16}_G}\leq&\,K_0\epsilon a^{4-c_0\sigma} \label{eq:BsN}\\
\change{\|\Gamma-\Gamhat\|_{C^{15}_G}\leq}&\change{\,K_0\epsilon a^{-c_0\sigma}} \label{eq:BsChr}
\end{align}
\end{subequations}
\end{remark}

\begin{remark}[Bootstrap assumptions with respect to $\gamma$]\label{rem:Bs-Cgamma}
Note again that we could equivalently make the above bootstrap assumptions with respect to $H_\gamma$- and $C_\gamma$-norms: For example, the assumptions \eqref{eq:BsChr} and \eqref{eq:BsG} imply
\begin{align*}
\|\zeta\|_{C^l_\gamma}\lesssim&\,a^{-c\sigma}\|\zeta\|_{C^l_G}+\|\zeta\|_{C^{\lceil\frac{l-1}2\rceil}_\gamma}\epsilon a^{-c\sigma},\quad
\|\mathfrak{T}\|_{C^l_\gamma}\lesssim a^{-c\sigma}\|\mathfrak{T}\|_{C^l_G}+\|\mathfrak{T}\|_{C^{\lceil\frac{l}2\rceil}_\gamma}\epsilon a^{-c\sigma}
\end{align*}
for any smooth function $\zeta\in C^\infty(\Sigma_t)$, any $\Sigma_t$-tangent tensor $\mathfrak{T}$ and a constant $c>0$. This is essentially a direct consequence of \eqref{eq:Christoffel-norm-handwaving}, and we will prove an improved version of this rigorously in Lemma \ref{lem:G-gamma-norm-switch}. Applying this to each norm in $\mathcal{C}$, we get
\begin{equation}\label{eq:BsCgamma}
\mathcal{C}_\gamma\lesssim\epsilon a^{-c\sigma}
\end{equation} 
for some updated constant $c\geq c_0$.
\end{remark}

\begin{remark}[Strategy for the bootstrap improvement]\label{rem:bs-strategy}

Our goal is to improve the $C$-norm estimate to
\begin{equation*}
\mathcal{C}\leq K_1\epsilon^\frac98a^{-c_1\epsilon^\frac18}\,,
\end{equation*}
where $c_1,K_1>0$ are positive constants independent of $\sigma$ and $\epsilon$. Notice how this is actually an improvement if we choose $\sigma$ suitably and then choose $\epsilon$ sufficiently small: Any update between $K_0$ and $K_1$ can be balanced out since we gain at least the additional prefactor $\epsilon^\frac18$ in each estimate, which we can then choose to have been suitably small. Similarly, we improve the power of $a$ if we have $\epsilon^{\frac18}\cdot\sigma^{-1}<\frac{c_0}{c_1}$. If we then retroactively choose $\sigma$ large enough compared to $\epsilon$ but small overall -- for example $\sigma=\epsilon^\frac{1}{16}$ -- and then ensure that $\max\{c_0,c_1\}\epsilon^\frac1{16}<1$ as well as $c_1\epsilon^\frac1{16}<c_0$ are satisfied by choosing $\epsilon$ to have been small enough, we have strictly improved the bootstrap assumptions.
\end{remark}

\begin{remark}[Conventions within the bootstrap argument]
Throughout the rest of the argument, we tacitly assume $t\in(t_{Boot},t_0]$ if not stated otherwise, and we assume $\epsilon$ and $\sigma$ to be sufficiently small. In the proof of Theorem \ref{thm:main}, we will choose $\sigma=\epsilon^\frac1{16}$, but this explicit choice will not be used or needed up to that point. Finally, 
we allow $c\geq c_0$ be a constant that we may update from line to line, and will similarly deal with prefactors by \enquote{$\lesssim$}-notation where the constant may change in each line. These updates will always be independent of $\sigma$ and $\epsilon$, but may depend on $t_0$, and the quantities arising from the FLRW reference solution. Hence, we not only assume $c_0\sigma<1$, but $c\sigma<1$ throughout the argument.
\end{remark}

\section{Big Bang stability: A priori estimates}\label{sec:ap}

In this section, we collect strong low order $C_G$-norm estimates that follow as an immediate consequence from the bootstrap assumptions, starting with key estimates at the base level and followed by weaker, but still improved estimates at higher levels. Finally, we collect a differentiation formula for integrals with respect to $\vol{G}$ as well as a Sobolev estimate that lays the groundwork for energy coercivity. In particular, using the strong $C_G$-norm estimates, said estimate proves that moving between energies and norms at most incurs an error involving lower order energies of the controlled variable and curvature energies, scaled by $a^{-c\sqrt{\epsilon}}$.

\subsection{Strong $C^0_G$-estimates}\label{subsec:APlow}

First, we establish a pointwise bound on the lapse that actually holds irrespective of the bootstrap assumptions:

\begin{lemma}[Maximum principle for the lapse]\label{lem:lapse-maxmin} The lapse remains positive and bounded throughout the evolution:
\begin{equation}
n=N+1\in(0,3]
\end{equation}
\end{lemma}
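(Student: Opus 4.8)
The plan is to read the claim off the elliptic lapse equation \eqref{eq:EEqLapse} by a straightforward application of the maximum principle on the closed hypersurface $\Sigma_t$. First I would rewrite \eqref{eq:EEqLapse} in the form $\Lap_g n = f_0 + n f_1$, where
\[
f_0 := -12\pi C^2a^{-6} - \tfrac13 a^{-2} < 0,
\]
\[
f_1 := \tfrac13 a^{-2} + 4\pi C^2 a^{-6} + \langle\hat{k},\hat{k}\rangle_g + 8\pi\lvert\del_0\phi\rvert^2 \;\ge\; \tfrac13 a^{-2} + 4\pi C^2 a^{-6} \;>\; 0 .
\]
Both $f_0$ and $f_1$ are continuous on the compact manifold $\Sigma_t$, and this decomposition makes manifest that the coefficient multiplying $n$ is strictly positive, since $\langle\hat{k},\hat{k}\rangle_g$ and $8\pi\lvert\del_0\phi\rvert^2$ are nonnegative.

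Positivity of $n$ is already part of the local well-posedness statement in Lemma \ref{lem:lwp}, but it also follows directly: if $n$ took a value $\le 0$, pick a point $p\in\Sigma_t$ at which $n$ is minimal (such a point exists since $\Sigma_t$ is closed). At a minimum $\Lap_g n(p)\ge 0$ with our convention $\Lap_g = g^{ij}\nabla_i\nabla_j$, whereas $\Lap_g n(p) = f_0(p) + n(p)f_1(p) \le f_0(p) < 0$, a contradiction; hence $n>0$ everywhere. For the upper bound, let $q\in\Sigma_t$ be a point at which $n$ is maximal, so $\Lap_g n(q)\le 0$. Then $n(q) f_1(q) \le -f_0(q)$, and dividing by $f_1(q)>0$ gives
\[
\max_{\Sigma_t} n = n(q) \;\le\; \frac{-f_0(q)}{f_1(q)} \;\le\; \frac{12\pi C^2 a^{-6} + \frac13 a^{-2}}{4\pi C^2 a^{-6} + \frac13 a^{-2}} .
\]
Cross-multiplying (both denominators are positive), $n(q)\le 3$ is equivalent to $12\pi C^2a^{-6} + \frac13 a^{-2} \le 12\pi C^2 a^{-6} + a^{-2}$, i.e.\ to $\frac13 a^{-2}\le a^{-2}$, which always holds. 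Thus $n\le 3$ on $\Sigma_t$, and since $t$ was arbitrary within the interval of existence, $n=N+1\in(0,3]$ throughout.

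This argument is entirely routine and I do not expect any genuine obstacle; the only points needing care are bookkeeping: that $\Sigma_t$ is closed so the extrema are attained, the sign of $\Lap_g$ at an extremum under the convention fixed in Section~\ref{subsubsec:notation-sign}, and verifying that the bracket multiplying $n$ in \eqref{eq:EEqLapse} is nonnegative and in fact bounded below by $\frac13 a^{-2} + 4\pi C^2 a^{-6}$, which is what produces the explicit constant $3$. If one prefers to work with the rescaled system, the same computation applies verbatim to the rescaled lapse equation \eqref{eq:REEqLapse1} after undoing the substitution $N = n-1$.
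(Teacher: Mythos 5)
Your proof is correct and follows essentially the same maximum-principle argument the paper uses, only spelling out explicitly the upper-bound step that the paper dismisses as "analogous." (You also get the sign at the minimum right — $\Lap_g n\ge 0$, not $>0$ as the paper slightly misstates — which does not affect the argument.)
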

\begin{proof}
Let $t\in\R_+$ be arbitrary and let $n_{min}$ be the minimum of $n$ over $\Sigma_t$ at $(t,x_{min})$. Then, $(\Lap_g n)(t,x_{min})> 0$ holds. If $n_{min}$ were nonpositive, \eqref{eq:EEqLapse} would lead to the following contradiction:
\[0\geq -12\pi C^2a^{-6}-\frac13a^{-2}+n_{min}\left[\frac13a^{-2}+4\pi C^2a^{-6}+\langle\hat{k},\hat{k}\rangle_g+8\pi\lvert\del_0\phi\rvert^2\right]=\Lap_gn(t,x_{min})>0\]
This shows $n>0$, and the upper bound follows analogously.
\end{proof}

The following estimate will be essential in dealing with borderline terms throughout the bootstrap argument:

\begin{lemma}[Strong $C^0_G$ estimates]\label{lem:APzero}The following estimates hold:
\begin{subequations}
\begin{align}
\|\Psi\|_{C^0_G}\lesssim&\,\epsilon\label{eq:APPsi}\\
\|\Sigma\|_{C^0_G}\lesssim&\,\epsilon\label{eq:APSigma}\\
\|\RE\|_{C^0_G}\lesssim&\,\epsilon\label{eq:APE}
\end{align}
\end{subequations}
\end{lemma}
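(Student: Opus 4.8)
\textbf{Proof strategy for Lemma \ref{lem:APzero}.}

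The plan is to use the Hamiltonian-type constraint equations together with the bootstrap assumptions to upgrade the weak $C^0_G$-bound $\lesssim \epsilon a^{-c_0\sigma}$ coming from Assumption \ref{ass:bootstrap} to a bound that is genuinely $\lesssim \epsilon$, exploiting that the dangerous power $a^{-4}$ multiplies precisely the combinations of variables that are controlled by quantities which are themselves small by the constraints. First I would look at the rescaled Hamiltonian constraint \eqref{eq:REEqHam}, rewritten as
\begin{equation*}
8\pi\left(a^{-4}\Psi^2+2Ca^{-4}\Psi\right)=R[G]+\frac23-a^{-4}\langle\Sigma,\Sigma\rangle_G-8\pi\lvert\nabla\phi\rvert_G^2\,.
\end{equation*}
The left-hand side, to leading order in $\Psi$, is $16\pi C a^{-4}\Psi$, so solving for $\Psi$ gives pointwise
\begin{equation*}
\lvert\Psi\rvert\lesssim a^{4}\left(\left\lvert R[G]+\tfrac23\right\rvert+\lvert\nabla\phi\rvert_G^2\right)+\lvert\Sigma\rvert_G^2+\lvert\Psi\rvert^2\,.
\end{equation*}
Now $\lvert R[G]+\frac23\rvert\lesssim\lvert\Ric[G]+\frac29G\rvert_G\lesssim K_0\epsilon a^{-c_0\sigma}$ by \eqref{eq:BsRic}, and the other terms are quadratic in quantities bounded by $K_0\epsilon a^{-c_0\sigma}$ by \eqref{eq:Bsnablaphi}, \eqref{eq:BsSigma}, \eqref{eq:BsPsi}. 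Since $a^{4}\cdot a^{-c_0\sigma}\lesssim 1$ for $c_0\sigma<1$ and $a$ bounded above, and since $(K_0\epsilon a^{-c_0\sigma})^2=K_0^2\epsilon^2 a^{-2c_0\sigma}\lesssim K_0^2\epsilon^2$ once $2c_0\sigma<1$ (possible after shrinking $\sigma$, or just using that $a^{-2c_0\sigma}\leq a(t)^{-2c_0\sigma}$ is bounded on the bootstrap interval by $c_0\sigma<1$... actually one needs $a^{4-2c_0\sigma}$ which is fine since $4-2c_0\sigma>0$ here; the genuinely quadratic-in-$\Psi$ and $\Sigma$ terms carry no bad power of $a$), we get $\lvert\Psi\rvert\lesssim \epsilon + \epsilon^2\lesssim\epsilon$ after absorbing the $\lvert\Psi\rvert^2\lesssim\epsilon a^{-c_0\sigma}\cdot\lvert\Psi\rvert$ term on the left (for $\epsilon$ small, $a^{-c_0\sigma}$ bounded). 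This proves \eqref{eq:APPsi}.

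For \eqref{eq:APE}, the idea is identical but uses the Bel-Robinson constraint \eqref{eq:REEqConstrE}: solving for $\RE_{ij}$,
\begin{equation*}
\RE_{ij}=a^4\left(\Ric[G]_{ij}+\tfrac29G_{ij}\right)+\frac{\tau}{3}a^3\Sigma_{ij}-(\Sigma\odot_G\Sigma)_{ij}-4\pi a^4\nabla_i\phi\nabla_j\phi-\left[\tfrac{4\pi}3a^4\lvert\nabla\phi\rvert_G^2+\tfrac{8\pi}3\Psi^2+\tfrac{16\pi}3C\Psi\right]G_{ij}\,.
\end{equation*}
The only term that is not manifestly quadratically small or carrying a good power of $a$ is $\frac{\tau}{3}a^3\Sigma_{ij}$ and $\frac{16\pi}{3}C\Psi G_{ij}$; but $\tau a^3\simeq a^3\cdot a^{-3}= \O{1}$ by \eqref{eq:Hubble2} (more precisely $\lvert\tau\rvert\lesssim a^{-3}$ near the Big Bang), so this is $\lesssim \lvert\Sigma\rvert_G$, and we have just shown $\lvert\Psi\rvert\lesssim\epsilon$ in \eqref{eq:APPsi}. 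Hence $\lvert\RE\rvert_G\lesssim a^4\lvert\Ric[G]+\frac29G\rvert_G + \lvert\Sigma\rvert_G + \lvert\Psi\rvert + (\text{quadratic terms})\lesssim \epsilon$, again using the weak bootstrap bounds with the observation that $a^{4-c_0\sigma}$ is bounded. This leaves \eqref{eq:APSigma}, which is the one genuine obstacle: the constraints do not isolate $\Sigma$ in the same way. I would instead derive it from the evolution equation \eqref{eq:REEqSigma} (or its trace-adjusted form), integrating in time from $t_0$ — where $\Sigma$ is $\epsilon^2$-small by Assumption \ref{ass:init} — and using that $\del_t\Sigma$ is, up to the large term $N\frac{\dot a}{a}\Sigma \simeq N a^{-3}\Sigma$ which has a favorable structure (since $N\lesssim \epsilon a^{4-c_0\sigma}$ by \eqref{eq:BsN} makes $Na^{-3}\lesssim\epsilon a^{1-c_0\sigma}$ integrable), controlled by terms of size $\epsilon a^{-c_0\sigma}$ times integrable or bounded factors; the key point is that the coefficient $a^{-3}$ only appears multiplied by $N$ or by $\Sigma\odot_G\Sigma$, both of which supply an extra $\epsilon$, so Grönwall closes with the $\epsilon^2$ initial bound giving $\lVert\Sigma\rVert_{C^0_G}\lesssim\epsilon^2 + \epsilon\cdot(\text{bounded})\lesssim\epsilon$.

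\textbf{Main obstacle.} The delicate step is \eqref{eq:APSigma}: one must verify that no term in the (zeroth-order) evolution of $\Sigma$ produces an $a^{-3}$-divergence with an $\O{1}$ (rather than $\O{\epsilon}$) coefficient — in particular the interplay of the $4\pi C^2 a^{-3}NG$ term, the $\frac19(3N+2)aG$ term and $N\frac{\dot a}{a}\Sigma$ must be handled using the Friedman relations \eqref{eq:Friedman}–\eqref{eq:Friedman2} exactly as flagged in the introduction (cf. Lemma \ref{lem:en-error-cancellation}), since a naive estimate would only give exponential-in-$a^{-c\sqrt\epsilon}$ growth. Once that cancellation structure is in place, the Grönwall argument on the bootstrap interval $(t_{Boot},t_0]$, combined with the scale-factor integral estimates \eqref{eq:a-integrals} and \eqref{eq:a-exp-est} from Lemma \ref{lem:scale-factor}, yields the claim. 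The constraint-based arguments for \eqref{eq:APPsi} and \eqref{eq:APE} are then essentially algebraic and present no real difficulty.
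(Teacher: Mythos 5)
There is a genuine gap in your argument for \eqref{eq:APPsi}, and it invalidates the stated ordering. Your constraint-based approach for $\Psi$ produces the pointwise inequality $\lvert\Psi\rvert\lesssim a^4\bigl(\lvert R[G]+\tfrac23\rvert+\lvert\nabla\phi\rvert_G^2\bigr)+\lvert\Sigma\rvert_G^2+\lvert\Psi\rvert^2$, and you then claim the quadratic terms are $\lesssim\epsilon^2$ ``since $a^{-c_0\sigma}$ is bounded on the bootstrap interval.'' This is false: $a(t)\to 0$ as $t\to 0$, so $a^{-c_0\sigma}\to\infty$, and $c_0\sigma<1$ only controls the rate, not the boundedness. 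Consequently the bootstrap assumption gives $\lvert\Sigma\rvert_G^2\lesssim\epsilon^2 a^{-2c_0\sigma}$, which is not $\lesssim\epsilon$ uniformly on $(t_{Boot},t_0]$. Your argument for \eqref{eq:APPsi} thus requires $\lvert\Sigma\rvert_G\lesssim\epsilon$ (i.e., \eqref{eq:APSigma}) as input, but you prove \eqref{eq:APSigma} last; the stated order (a) $\to$ (c) $\to$ (b) is circular. The same false boundedness claim also undermines your treatment of the $\lvert\Psi\rvert^2$ self-term (which would additionally require a continuity/root-selection argument, not mere absorption). The paper avoids the constraint route for $\Psi$ altogether: it integrates the rescaled wave equation \eqref{eq:REEqWave} for $\del_t\Psi$, whose right-hand side is bounded by $\epsilon a^{1-c\sigma}(1+\lvert\Psi\rvert)$ under the bootstrap assumptions — an integrable forcing since $1-c\sigma>0$ — so Gr\"onwall closes directly with no self-improving quadratic structure required.

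For \eqref{eq:APSigma} you correctly identify that an evolution argument is needed and that the cancellation structure is the delicate point, but your proposal ``\eqref{eq:REEqSigma} (or its trace-adjusted form)'' glosses over the fact that using \eqref{eq:REEqSigma} in $(0,2)$-form fails outright: computing $\del_t\lvert\Sigma\rvert_G^2$ with \eqref{eq:REEqSigma} produces terms $\lvert\del_tG^{-1}\rvert_G\lvert\Sigma\rvert_G^2$ and $a^{-3}\lvert\Sigma\rvert_G^3$ of size $\epsilon a^{-3-c\sigma}\lvert\Sigma\rvert_G^2$, which is not integrable and gives only exponential-in-$a^{-c\sigma}$ growth. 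The paper's proof necessarily uses the mixed $(1,1)$-form equation \eqref{eq:REEqSigmaSharp}, under which the $\del_tG^{-1}\ast\Sigma$ and $a^{-3}(N+1)\Sigma\odot_G\Sigma$ contributions cancel and every remaining $a^{-3}$-term comes multiplied by $N\lesssim\epsilon a^{4-c\sigma}$, yielding the integrable rate $\epsilon a^{1-c\sigma}$. Finally, for \eqref{eq:APE} your constraint-based strategy is aligned with the paper's, except that the paper exploits that $\RE$ is trace-free — taking the inner product with $\RE$ annihilates every term proportional to $G$, including the $\Psi$-dependent ones — so that \eqref{eq:APE} follows from \eqref{eq:APSigma} and the bootstrap bounds alone without first needing \eqref{eq:APPsi}.
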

\begin{proof}
\underline{\eqref{eq:APPsi}:} From \eqref{eq:REEqWave}, we obtain the following using Lemma \ref{lem:lapse-maxmin} for $n$, the bootstrap assumptions \eqref{eq:BsN} and \eqref{eq:Bsnablaphi} and that $\dot{a}\simeq a^{-2}$ by \eqref{eq:Friedman}:
\begin{align*}
\lvert\del_t\Psi\rvert
\lesssim&\,\epsilon a^{5-c\sigma}+\epsilon a^{1-c\sigma}+\epsilon a^{1-c\sigma}\lvert\Psi\rvert+{\epsilon} a^{1-c\sigma}
\end{align*}
After integration, we thus obtain using the initial data assumption \eqref{eq:init-ass}:
\begin{align*}
\lvert\Psi(t)\rvert\lesssim&\,\lvert\Psi(t_0)\rvert+\int_t^{t_0}\epsilon a(s)^{1-c\sigma}\,ds+\int_t^{t_0}{\epsilon} a(s)^{1-c\sigma}\lvert\Psi(s)\rvert\,ds\\
\lesssim&\,\epsilon\left(1+\int_t^{t_0}a(s)^{1-c\sigma}\,ds\right)+\int_t^{t_0}{\epsilon}a(s)^{1-c\sigma}\lvert\Psi(s)\rvert\,ds
\end{align*}
By \eqref{eq:a-integrals}, the integral over $a^{1-c\sigma}$ is bounded since $c\sigma<1$, so the Gronwall lemma now yields \eqref{eq:APPsi}.\\

\underline{\eqref{eq:APSigma}:} Notice that
\begin{equation}\label{eq:deltSigma2}
\del_t\lvert\Sigma\rvert_G^2={(\del_t\Sigma^\sharp)^l}_m{(\Sigma^\sharp)^m}_l+{(\Sigma^\sharp)^l}_m{(\del_t\Sigma^\sharp)^m}_l=2{(\del_t\Sigma^\sharp)^l}_m{(\Sigma^\sharp)^m}_l\leq 2\lvert\del_t\Sigma^\sharp\rvert_G\lvert\Sigma\rvert_G\,.
\end{equation}
Now, we consider \eqref{eq:REEqSigmaSharp} and, using the bootstrap assumptions \eqref{eq:BsN}, \eqref{eq:BsRic} and \eqref{eq:Bsnablaphi}, get:
\begin{align*}
\lvert\del_t\Sigma^\sharp\rvert_G\lesssim&\,\tau\lvert N\rvert\lvert\Sigma^\sharp\rvert_G+\lvert\nabla^\sharp\nabla N\rvert_G a+\lvert N+1\rvert a\left\lvert\Ric[G]^\sharp+\frac29G^\sharp\right\rvert_G+\lvert N+1\rvert a\lvert\nabla^\sharp\phi\nabla\phi\rvert_G\\
&+\sqrt{3}\lvert N\rvert\cdot\left(4\pi C^2a^{-3}+\frac19a\right)\\
\lesssim&\,{\epsilon} a^{1-c\sigma}\lvert\Sigma\rvert_G+{\epsilon}a^{1-c\sigma}
\end{align*}
We can now apply Lemma \ref{lem:weak-ftoc} with $f=\lvert\Sigma\rvert_G^2$, and thus have along with \eqref{eq:init-ass} and \eqref{eq:deltSigma2}:
\[\lvert\Sigma\rvert_G(t)\leq\lvert\Sigma\rvert_G(t_0)+\int_t^{t_0}\lvert\del_t\Sigma^\sharp\rvert_G(s)\,ds\lesssim\epsilon\]

\underline{\eqref{eq:APE}:} Using the constraint equation \eqref{eq:REEqConstrE} and that $\langle G,\RE\rangle_G=\text{tr}_G\RE=0$, one sees
\begin{equation*}
\lvert\RE\rvert_G^2=\left\langle a^4\left(\Ric[G]+\frac29G\right)-\dot{a}a^2\Sigma-\Sigma\odot_G\Sigma-4\pi a^4\nabla\phi\nabla\phi,\RE\right\rangle_G\,.
\end{equation*}
Then, applying the bootstrap assumptions \eqref{eq:BsRic} and \eqref{eq:Bsnablaphi} shows the Ricci and matter terms are bounded by $\epsilon a^{4-c\sigma}\lvert\RE\rvert_G$, and the a priori estimate \eqref{eq:APSigma} along with $\dot{a}a^2\simeq 1$ by \eqref{eq:Friedman} bounds the remaining terms by $\epsilon\lvert\RE\rvert_G$. The statement then follows by dividing by $\lvert\RE\rvert_G$ and taking the supremum.
\end{proof}
\change{Note that, in the proof of \eqref{eq:APSigma}, it was essential that we used
\eqref{eq:REEqSigmaSharp} instead of \eqref{eq:REEqSigma}, since using the latter would incur terms of the type $\lvert\del_tG\rvert_G\lvert\Sigma\rvert_G^2$and $a^{-3}\lvert \Sigma\rvert_G^3$ when computing the time derivative of $\lvert\Sigma\rvert_G^2$, which, at this point, behave like $\epsilon a^{-3-c\sigma}\lvert \Sigma\rvert_G^2$, and thus not yield the sharp estimate (or even an improved estimate) that we will need to control borderline terms.}

\subsection{Strong low order $C_G$-norm estimates}\label{subsec:AP}

Now, we can prove the main supremum norm estimates in this section:

\begin{lemma}[Strong low order $C_G$-norm estimates]\label{lem:AP} The following estimates hold:
\begin{subequations}
\begin{align}
\|\Psi\|_{C^{13}_G}\lesssim&\,\epsilon a^{-c\sqrt{\epsilon}}\,\label{eq:APmidPsi}\\
\|\Sigma\|_{C^{12}_G}\lesssim&\,\epsilon a^{-c\sqrt{\epsilon}}\,\label{eq:APmidSigma}\\
\|G-\gamma\|_{C^{12}_G}\lesssim&\sqrt{\epsilon}a^{-c\sqrt{\epsilon}}\,\label{eq:APmidG}\\
\|G^{-1}-\gamma^{-1}\|_{C^{12}_G}\lesssim&\sqrt{\epsilon}a^{-c\sqrt{\epsilon}}\,\label{eq:APmidG-1}\\
\|\nabla\phi\|_{C^{12}_G}\lesssim&\,\sqrt{\epsilon}a^{-c\sqrt{\epsilon}}\label{eq:APmidphi}\\
\|\Ric[G]+\frac29G\|_{C^{10}_G}\lesssim&\,\sqrt{\epsilon}a^{-c\sqrt{\epsilon}}\label{eq:APmidRic}\\
\|\RB\|_{C^{11}_G}\lesssim&\,\epsilon \change{a^{2-c\sqrt{\epsilon}}}\label{eq:APmidB}\\
\|\RE\|_{C^{12}_G}\lesssim&\,\epsilon a^{-c\sqrt{\epsilon}}\label{eq:APmidE}
\end{align}
\end{subequations}
\end{lemma}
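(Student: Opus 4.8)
The plan is to bootstrap each estimate from the rescaled evolution equations in Proposition \ref{prop:REEq} (and their Laplace-commuted forms, or rather their low-order covariant-derivative analogues) together with the strong $C^0_G$ bounds of Lemma \ref{lem:APzero}, the maximum principle for the lapse (Lemma \ref{lem:lapse-maxmin}), the scale-factor estimates of Lemma \ref{lem:scale-factor}, and the bootstrap assumption \eqref{eq:BsC}. The key point is a \emph{gain of a half-power of $\epsilon$} over the bootstrap assumption, at the cost of worsening the divergence rate from $a^{-c_0\sigma}$ to $a^{-c\sqrt\epsilon}$: wherever \eqref{eq:BsC} supplies a factor $\epsilon a^{-c_0\sigma}$, we instead estimate that factor by $\sqrt\epsilon\cdot(\sqrt\epsilon a^{-c_0\sigma})$ and absorb $\sqrt\epsilon a^{-c_0\sigma}$ into a new (small) constant multiplying $a^{-c\sqrt\epsilon}$ after the Gronwall step, since $\int_t^{t_0}\sqrt\epsilon\, a(s)^{-3}\,ds\lesssim \sqrt\epsilon\,(1+|\log t|)$ and hence $\exp(c\sqrt\epsilon\int_t^{t_0}a^{-3})\lesssim a(t)^{-c\sqrt\epsilon}$ by \eqref{eq:a-exp-est}. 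I would carry out the estimates in the following order, since each step feeds the next.

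First, \eqref{eq:APmidG} and \eqref{eq:APmidG-1}: integrate \eqref{eq:REEqG} (and its inverse \eqref{eq:REEqG-1}) in $t$, differentiating in $\nabhat$ up to order $12$; the right-hand side is $-2(N+1)a^{-3}\Sigma + 2N\frac{\dot a}{a}G$, whose $C^{12}_G$-norm is, by Lemma \ref{lem:lapse-maxmin}, the bootstrap assumptions \eqref{eq:BsSigma}, \eqref{eq:BsN}, \eqref{eq:BsG}, \eqref{eq:BsChr} and $\dot a/a\simeq a^{-3}$ (from \eqref{eq:Friedman}), bounded by $\epsilon a^{-3-c_0\sigma}(1 + \|G-\gamma\|_{C^{12}_G})$ — note the commutators between $\nabla$ and $\nabhat$ produce extra factors controlled by $\|\Gamma-\Gamhat\|_{C^{11}_G}$, which \eqref{eq:BsChr} handles. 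Writing $\epsilon a^{-3-c_0\sigma} = \sqrt\epsilon\cdot(\sqrt\epsilon a^{-c_0\sigma})a^{-3}$ and using \eqref{eq:init-ass}, Gronwall gives $\|G-\gamma\|_{C^{12}_G}\lesssim \sqrt\epsilon\, a^{-c\sqrt\epsilon}$; the von Neumann series bounds the inverse. Then \eqref{eq:APmidRic} follows by expressing $\Ric[G]+\tfrac29 G$ through $G-\gamma$, $G^{-1}-\gamma^{-1}$ and their derivatives up to order $12$ in local coordinates (cf.\ \eqref{eq:Christoffel-norm-handwaving} and $\Ric[\gamma]=-\tfrac29\gamma$), losing two derivatives — hence $C^{10}_G$ — and inheriting the $\sqrt\epsilon a^{-c\sqrt\epsilon}$ rate. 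Next, \eqref{eq:APmidPsi} and \eqref{eq:APmidphi}: for $\Psi$, differentiate \eqref{eq:REEqWave} up to order $13$; the worst term $a(N+1)\Lap\phi$ contributes $a\cdot\|\nabla\phi\|_{C^{14}_G}$-type quantities, but $\|\nabla\phi\|_{C^{13}_G}$ is what \eqref{eq:Bsnablaphi} controls at one order too few — this is the one place needing care, and I would instead close \eqref{eq:APmidPsi} and \eqref{eq:APmidphi} \emph{simultaneously} using \eqref{eq:REEqWave} together with \eqref{eq:REEqNablaPhi}, noting that $\|\nabla\phi\|_{C^{12}_G}$ and $\|\Psi\|_{C^{13}_G}$ form a closed coupled system: $\del_t\nabla\phi$ involves $a^{-3}\nabla\Psi$ and $a^{-3}\Psi\nabla N$, so integrating and using $\int a^{-3-c_0\sigma}\lesssim a^{-c_0\sigma}$ from \eqref{eq:a-integrals} gives $\|\nabla\phi\|_{C^{12}_G}$ bounded in terms of $\sup_s a^{-c\sqrt\epsilon}\|\Psi\|_{C^{13}_G}$, while $\del_t\Psi$ at order $13$ involves $a\Lap^{\,\cdot}\phi$ at order $\le 14$ in $\phi$ i.e.\ order $\le 13$ in $\nabla\phi$ — again one derivative short. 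To fix the half-derivative mismatch cleanly I would instead run $\Psi$ at order $\le 13$ against $\nabla\phi$ at order $\le 13$ and note the factor $a$ in front of $\Lap\phi$ combined with $a^4$-weights is not available here; the honest route is to use the energy/coercivity machinery only for top orders and, at these low orders, accept the Sobolev embedding from $\mathcal H$ — but since $\mathcal H$-improvements come later, the intended argument is the direct one with the coupled ODE system above, exploiting that each time-integration of $a^{-3}$ against a $\sqrt\epsilon$-small coefficient is harmless.

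Continuing: \eqref{eq:APmidSigma} follows from \eqref{eq:REEqSigmaSharp} exactly as in the proof of \eqref{eq:APSigma} but now differentiated up to order $12$, using \eqref{eq:APmidRic}, \eqref{eq:APmidphi}, \eqref{eq:BsN}, \eqref{eq:BsChr} and Lemma \ref{lem:weak-ftoc}; the right-hand side is $\lesssim \sqrt\epsilon\, a^{1-c\sqrt\epsilon}(1+\|\Sigma\|_{C^{12}_G}) + \epsilon a^{-3}\,\|N\|_{C^{12}_G}$ and the second term is $\lesssim \epsilon a^{1-c\sqrt\epsilon}$ by \eqref{eq:BsN}, so Gronwall with $\int a^{1-c\sqrt\epsilon}<\infty$ yields the $\epsilon a^{-c\sqrt\epsilon}$ bound (note we keep the full $\epsilon$, not $\sqrt\epsilon$, because the source is already $\epsilon$-small without needing \eqref{eq:BsSigma} at order $12$ — it is driven by $\sqrt\epsilon$-small coefficients times $\epsilon$-small or bounded data). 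For \eqref{eq:APmidB}, use the constraint \eqref{eq:REEqConstrB}, $\RB = -a^2\curl_G\Sigma$: differentiating up to order $11$ and inserting \eqref{eq:APmidSigma} and \eqref{eq:BsChr} gives $\|\RB\|_{C^{11}_G}\lesssim a^2\,\|\Sigma\|_{C^{12}_G}\lesssim \epsilon\, a^{2-c\sqrt\epsilon}$. Finally \eqref{eq:APmidE} comes from the constraint \eqref{eq:REEqConstrE} exactly as in the proof of \eqref{eq:APE}: $\RE = a^4(\Ric[G]+\tfrac29G) - \dot a a^2\Sigma - \Sigma\odot_G\Sigma - 4\pi a^4\nabla\phi\otimes\nabla\phi - (\ldots)G$, and differentiating up to order $12$ and inserting \eqref{eq:APmidRic}, \eqref{eq:APmidSigma}, \eqref{eq:APmidphi}, \eqref{eq:APPsi} together with $\dot a a^2\simeq 1$ gives $\|\RE\|_{C^{12}_G}\lesssim \epsilon\, a^{-c\sqrt\epsilon}$ (the $a^4(\Ric+\tfrac29G)$ term is harmless and the $\dot a a^2\Sigma$ term carries the $\epsilon a^{-c\sqrt\epsilon}$ rate).

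The main obstacle is the half-derivative/half-power bookkeeping in the matter sector — closing $\|\Psi\|_{C^{13}_G}$ and $\|\nabla\phi\|_{C^{12}_G}$ against each other while the bootstrap assumptions \eqref{eq:BsPsi}, \eqref{eq:Bsnablaphi} only provide control at orders $16$ and $15$, which is plenty, so the real subtlety is making sure every commutator term (between $\nabla$ and $\nabhat$, and between $\Lap_G$ and lower derivatives) is absorbed using \eqref{eq:BsChr} and the already-established \eqref{eq:APmidG}, and that each appearance of $a^{-3}$ in a time integral is paired with a $\sqrt\epsilon$- or $\epsilon$-small coefficient so that \eqref{eq:a-exp-est} upgrades $\exp(c\sqrt\epsilon\int a^{-3})$ to $a^{-c\sqrt\epsilon}$ rather than to an uncontrolled power. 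Everything else is a routine Gronwall argument on the coupled low-order system, performed in the dependency order $G,G^{-1}\to\Ric\to(\Psi,\nabla\phi)\to\Sigma\to\RB\to\RE$.
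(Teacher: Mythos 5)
The overall toolkit you invoke (Lemma \ref{lem:APzero}, Lemma \ref{lem:lapse-maxmin}, the commutator estimates, Gronwall, and the scale-factor estimates of Lemma \ref{lem:scale-factor}) is the right one, but your proposed dependency order $G,G^{-1}\to\Ric\to(\Psi,\nabla\phi)\to\Sigma\to\RB\to\RE$ inverts the crucial step and as a result the argument does not close. The paper does $\Sigma$ \emph{first} (then $\Psi$, then $G,G^{-1}$, etc.), and this is not a cosmetic choice: the time commutators $[\del_t,\nabla^J]$, via \eqref{eq:commutator-aux-tensor}, are driven by $\nabla^{I_\Sigma}\Sigma$ (and $N$), so every quantity's Gronwall source term inherits whatever rate you currently have for $\|\Sigma\|_{C^{J-1}_G}$. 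If you have only the bootstrap bound $\|\Sigma\|_{C^{J-1}_G}\leq K_0\epsilon a^{-c_0\sigma}$, the leading source in the differential inequality for $\|G-\gamma\|_{\dot C^J_G}$ is $\epsilon a^{-3-c_0\sigma}$, and $\int_t^{t_0}\epsilon a^{-3-c_0\sigma}\,ds\lesssim (\epsilon/c_0\sigma)\,a^{-c_0\sigma}$ by \eqref{eq:a-integrals}. Since $\sigma\in(\epsilon^{1/8},1]$ and hence $\sigma>\sqrt\epsilon$, the rate $a^{-c_0\sigma}$ is strictly worse than $a^{-c\sqrt\epsilon}$ and cannot be upgraded afterward. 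Your maneuver of factoring $\epsilon a^{-3-c_0\sigma}=\sqrt\epsilon\cdot(\sqrt\epsilon a^{-c_0\sigma})a^{-3}$ and appealing to \eqref{eq:a-exp-est} does not help: that estimate converts a $\sqrt\epsilon$-small \emph{Gronwall coefficient} on the unknown into an $a^{-c\sqrt\epsilon}$ factor, but here $a^{-c_0\sigma}$ sits in an inhomogeneous source term, so the exponential factor never touches it and the bad rate survives. The reason the paper can prove \eqref{eq:APmidSigma} self-consistently without any of the others is that in the $\Sigma$-evolution \eqref{eq:REEqSigmaSharp} all the source terms that use bootstrap-rate bounds ($\Ric$, $N$, $\nabla\phi$) carry an extra factor of $a$, giving $\epsilon a^{1-c_0\sigma}$ which is integrable and contributes only $O(\epsilon)$; only the commutator involves $\Sigma$ itself, and there the $C^0$ bound \eqref{eq:APSigma} gives the $\epsilon a^{-3}$ Gronwall coefficient while the induction hypothesis supplies the $a^{-c\sqrt\epsilon}$ rate at lower order. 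Once \eqref{eq:APmidSigma} holds, the commutators for every other quantity pick up the improved $\epsilon a^{-c\sqrt\epsilon}$ bound on $\Sigma$, and the $\sqrt\epsilon$-loss in, e.g., \eqref{eq:APmidG} is exactly $\int\epsilon a^{-3-c\sqrt\epsilon}\lesssim\sqrt\epsilon\,a^{-c\sqrt\epsilon}$.

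Two smaller points. First, your remark about ``commutators between $\nabla$ and $\nabhat$ ... controlled by $\|\Gamma-\Gamhat\|_{C^{11}_G}$'' is a misreading: all $C_G$-norms in this lemma are defined with $\nabla$ (the $G$-connection), so no $\nabla$-vs-$\nabhat$ conversion is needed; the commutators $[\del_t,\nabla^J]$ are generated by $\del_t\Gamma[G]$, which by \eqref{eq:REEqChr} produces $\Sigma$- and $N$-terms, not $\Gamma-\Gamhat$. Second, the coupling concern you raise for $\Psi$ and $\nabla\phi$ is not actually an obstruction: in the evolution equation for $\Psi$ at order $J\leq 13$, the $a(N+1)\Lap^{\cdot}\phi$ term involves $\nabla\phi$ at order at most $14$, which is comfortably inside the bootstrap order $15$ of \eqref{eq:Bsnablaphi}, and the resulting $\epsilon a^{1-c_0\sigma}$ source is integrable; no simultaneous coupled system is required.
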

\begin{proof}
Before going into the individual estimates, we collect the following commutator term estimates from the expressions in \eqref{eq:commutator-aux-scalar}-\eqref{eq:commutator-aux-tensor}:
\begin{align}
\|[\del_t,\nabla^J]\zeta\|_{C^0_G}\lesssim&\,a^{-3}\|N+1\|_{C^{J-1}_G}\|\Sigma\|_{C^{J-1}_G}\|\zeta\|_{C^{J-1}_G}+\frac{\dot{a}}a\|N\|_{C^{J-1}_G}\|\zeta\|_{C^{J-1}_G} \label{eq:aux-comm-est-zeta}\\
\|[\del_t,\nabla^J]\mathfrak{T}\|_{C^0_G}\lesssim&\,a^{-3}\|N+1\|_{C^{J}_G}\left(\|\nabla^J\Sigma\|_{C^0_G}\|\mathfrak{T}\|_{C^0_G}+\|\Sigma\|_{C^{J-1}_G}\|\mathfrak{T}\|_{C^{J-1}_G}\right)+\frac{\dot{a}}a\|N\|_{C^J_G}\|\mathfrak{T}\|_{C^{J-1}_G} \label{eq:aux-comm-est-T}
\end{align}
With this in hand, we will prove each estimate by iterating over the derivative order as long as the bootstrap assumptions can be applied. In each step, we use the previously obtained estimates at lower order to control the commutator term (with some additional care for $\mathfrak{T}=\Sigma$ which we need to consider first), while we can use similar arguments to those at order $0$ to control the \enquote{core} of the evolution equations. 

To start out, we apply \eqref{eq:APSigma} on $\Sigma$ and the bootstrap assumption \eqref{eq:BsN} on $N$ to the rescaled evolution equations \eqref{eq:REEqG}-\eqref{eq:REEqG-1} and deduce
\begin{equation}\label{eq:AP-deltG}
\lvert \del_tG^{\pm1}\rvert_G=\lvert \del_t(G^{\pm 1}-\gamma^{\pm1})\rvert_G\lesssim \epsilon a^{-3}+\epsilon a^{1-c\sigma}\lesssim \epsilon a^{-3}\,.
\end{equation}

\underline{\eqref{eq:APmidSigma}:} We assume
\begin{equation}\label{eq:APmidSigma-indhyp}
\|\Sigma\|_{{C}^{J-1}_G}\lesssim \epsilon a^{-c\sqrt{\epsilon}}
\end{equation}
to be satisfied for some $J\in\{1,\dots,12\}$ (For $J=1$, this is true by \eqref{eq:APSigma}). Observe the following:
\[\del_t\lvert\nabla^J\Sigma\rvert_G^2=2\langle\del_t\nabla^J\Sigma,\nabla^J\Sigma\rangle_G+\del_tG^{-1}\ast\nabla^J\Sigma\ast\nabla^J\Sigma\]
Now, we commute \change{\eqref{eq:REEqSigmaSharp} }with $\nabla^J$: As before, $\nabla^J\del_t\Sigma$ is bounded by $\epsilon a^{-c\sigma}$ for any admissible $J$. Hence and using \eqref{eq:AP-deltG},
\[\del_t\lvert\nabla^J\Sigma\rvert_G^2\lesssim \epsilon a^{-3}\lvert\nabla^J\Sigma\rvert_G^2+\left(\epsilon a^{1-c\sigma}+\|[\del_t,\nabla^J]\Sigma\|_{C^0_G}\right)\lvert\nabla^J\Sigma\rvert_G\]
is satisfied. Looking at the commutator term using \eqref{eq:aux-comm-est-T}, we have with \eqref{eq:APSigma} that
\[\|[\del_t,\nabla^J]\Sigma\|_{C^0_G}\lesssim \epsilon a^{-3}\|\Sigma\|_{\dot{C}^J_G}+a^{-3}\cdot\|\Sigma\|_{C^{J-1}_G}^2+\epsilon a^{1-c\sigma}\|\Sigma\|_{C^{J-1}_G}\,.\]
Altogether, we obtain
\[\del_t\lvert\nabla^J\Sigma\rvert_G^2\lesssim\left(\epsilon a^{-3}\|\Sigma\|_{\dot{C}^J_G}+\epsilon a^{-c\sigma}+\epsilon^2a^{-3-c\sqrt{\epsilon}}\right)\lvert\nabla^J\Sigma\rvert_G\,.\]
With Lemma \ref{lem:weak-ftoc} as well as the initial data assumption \eqref{eq:init-ass} and the integral formula \eqref{eq:a-integrals} with $p=c\sqrt{\epsilon}$, this implies
\[\lvert\nabla^J\Sigma\rvert_G(t)\lesssim\,\int_t^{t_0}\epsilon a^{-3}\|\Sigma\|_{\dot{C}^J_G(\Sigma_s)}\,ds+\epsilon \left(1+\sqrt{\epsilon}a^{-c\sqrt{\epsilon}}\right)\]
and consequently, after taking the supremum on the left and applying the Gronwall lemma,
\[\|\Sigma\|_{\dot{C}^J_G(\Sigma_s)}\lesssim \epsilon a^{-c\sqrt{\epsilon}}\,.\]
Combining this with \eqref{eq:APmidSigma-indhyp} proves the statement up to order $J$, and hence shows \eqref{eq:APmidSigma} by iterating the argument up to $J=12$.\\

\underline{\eqref{eq:APmidPsi}:} We again assume that
\begin{equation}\label{eq:APmidPsi-indhyp}
\|\Psi\|_{C^{J-1}_G}\lesssim \epsilon a^{-c\sqrt{\epsilon}}
\end{equation}
holds for $J\in\{1,2,\dots,13\}$. Observe that
\[\lvert\del_t\nabla^J\Psi\rvert_G\lesssim a\|N+1\|_{C^{J+1}_G}\|\nabla\phi\|_{C^{J+1}_G}+\frac{\dot{a}}a\|\nabla N\|_{C^{J}_G}(1+\|\Psi\|_{C^J_G})+\|[\del_t,\nabla^J]\Psi\|_{C^0_G}\]
By \eqref{eq:BsN}, \eqref{eq:Bsnablaphi} and \eqref{eq:BsPsi}, the first two summands can be bounded (up to constant) by $\epsilon a^{1-c\sigma}$. By \eqref{eq:APmidPsi-indhyp}, \eqref{eq:APmidSigma} and \eqref{eq:BsN} and using \eqref{eq:aux-comm-est-zeta}, the commutator term is bounded (up to constant) by $\epsilon^2a^{-3-c\sqrt{\epsilon}}$. Altogether, 
\[\lvert\del_t\nabla^J\Psi\rvert_G\lesssim \epsilon a^{1-c\sigma}+\epsilon^2a^{-3-c\sqrt{\epsilon}}\]
follows. Inserting this and \eqref{eq:AP-deltG} into
\begin{equation*}
\lvert\del_t\left(\lvert\nabla^J\Psi\rvert_G^2\right)\rvert\leq\lvert\del_tG^{-1}\rvert_G\lvert\nabla^J\Psi\rvert_G^2+2\lvert\del_t\nabla^J\Psi\rvert_G\cdot\lvert\nabla^J\Psi\rvert_G
\end{equation*}
implies, with Lemma \ref{lem:weak-ftoc},
\begin{align*}
\lvert\nabla^J\Psi\rvert_G(t)\leq&\,\lvert\nabla^J\Psi\rvert(t_0)+\int_t^{t_0}\left(\frac12\lvert\del_t G^{-1}\rvert\lvert\nabla^J\Psi\rvert_G+\lvert\del_t\nabla^J\Psi\rvert_G\right)(s)\,ds\\
\lesssim&\,\epsilon^2+\int_t^{t_0}\left(\epsilon a(s)^{-3}\lvert\nabla^J\Psi(s,\cdot)\rvert_G+\epsilon a(s)^{1-c\sigma}+\epsilon^2a(s)^{-3-c\sqrt{\epsilon}}\right)\,ds\,.
\end{align*}
We obtain using \eqref{eq:a-integrals}:
\begin{equation*}
\lvert\nabla^J\Psi\rvert_G(t)\lesssim \epsilon a(t)^{-c\sqrt{\epsilon}}+\int_t^{t_0}\epsilon a(s)^{-3}\lvert\nabla^J\Psi(s,\cdot)\rvert_G\,ds
\end{equation*}
The Gronwall lemma, applying \eqref{eq:a-exp-est} and taking the supremum over $\Sigma_t$ then implies $\lvert\nabla^J\Psi\rvert_{\dot{C}^J_G}\lesssim \epsilon a^{-c\sqrt{\epsilon}}$. This proves \eqref{eq:APPsi} by iterating over $J$ and adding up the individual seminorms.\\

\underline{\eqref{eq:APmidG}-\eqref{eq:APmidG-1}:} Note that \eqref{eq:AP-deltG} implies \eqref{eq:APmidG} at order $0$ since one has
\[\lvert\del_t(\lvert G-\gamma\rvert_G)^2\rvert\lesssim\lvert\del_tG^{-1}\rvert_G\lvert G-\gamma\rvert_G^2+\lvert\del_t(G-\gamma)\rvert_G\lvert G-\gamma\rvert_G\lesssim \epsilon a^{-3}(1+\lvert G-\gamma\rvert_G)\lvert G-\gamma\rvert_G\]
which we can apply the Gronwall lemma to after integrating, along with \eqref{eq:log-est} for the error term, as in the proof of \eqref{eq:APmidSigma}.\\
For higher orders, commuting \eqref{eq:REEqG} with $\nabla^J$ and inserting \eqref{eq:APmidSigma} and \eqref{eq:BsN} implies
\[\|\del_t\nabla^J(G-\gamma)\|_{C^0_G}\lesssim \epsilon a^{-3-c\sqrt{\epsilon}}+\epsilon a^{1-c\sigma}+\|[\del_t,\nabla^J](G-\gamma)\|_{C^0_G}\]
with
\[\|[\del_t,\nabla^J](G-\gamma)\|_{C^0_G}\lesssim \left(\epsilon a^{-3-c\sqrt{\epsilon}}+\epsilon a^{1-c\sigma}\right)\|G-\gamma\|_{C^{J-1}_G}\,.\]
Once again doing the same iterative argument over $J\leq 12$ and assuming the estimate to hold up to $J-1$, this altogether becomes
\[\|\del_t\nabla^J(G-\gamma)\|_{C^0_G}\lesssim \epsilon a^{-3-c\sqrt{\epsilon}},\]
implying with \eqref{eq:a-integrals}
\[\|\nabla^J(G-\gamma)\|_{C^0_G}\lesssim \epsilon^2+\epsilon\int_t^{t_0}a(s)^{-3-c\sqrt{\epsilon}}\,ds\lesssim\sqrt{\epsilon}a^{-c\sqrt{\epsilon}}\,.\]
The argument for $G^{-1}-\gamma^{-1}$ is completely analogous.\\

\underline{\eqref{eq:APmidphi}:} We only prove the statement for $C^0_G$, the full estimate extends from there by the same iterative arguments as above. Considering \eqref{eq:REEqNablaPhi}, Lemma \ref{lem:lapse-maxmin}, \eqref{eq:APPsi} and \eqref{eq:Friedman}, we have
\begin{equation*}
\lvert\del_t\nabla\phi\rvert_G\lesssim a^{-3}\left(\lvert\nabla\Psi\rvert_G+\lvert\nabla N\rvert_G\right)
\end{equation*}
and thus, with \eqref{eq:APmidPsi} and the bootstrap assumption \eqref{eq:BsN},
\begin{equation*}
\lvert\del_t\nabla\phi\rvert_G\lesssim \epsilon a^{-3-c\sqrt{\epsilon}}\,.
\end{equation*}
With \eqref{eq:AP-deltG}, this implies
\begin{equation*}
\lvert\del_t\lvert\nabla\phi\rvert_G^2\rvert\lesssim \epsilon a^{-3}\lvert\nabla\phi\rvert_G^2+\epsilon a^{-3-c\sqrt{\epsilon}}\lvert\nabla\phi\rvert_G
\end{equation*}
and the statement follows as usual by applying Lemma \ref{lem:weak-ftoc}, \eqref{eq:log-est} and the Gronwall lemma.\\

\change{\underline{\eqref{eq:APmidRic}:} This follows as in the proof of \eqref{eq:APmidG} using \eqref{eq:REEqRic} and \eqref{eq:REEqG} and their commuted analogues.}

Once again, for $C^0_G$, we have
%
\underline{\eqref{eq:APmidB}:} This is obtained immediately from commuting \eqref{eq:REEqConstrB} with $\nabla^J$ and applying \eqref{eq:APmidSigma}. Notice that the Levi-Civita tensor can be absorbed into the implicit constants since $\lvert\epsilon[G]\rvert_G=\sqrt{6}$ holds (see \eqref{eq:LCS-contr3}).\\

\underline{\eqref{eq:APmidE}:} This follows like in the proof of \eqref{eq:APE} from applying \eqref{eq:BsRic}, \eqref{eq:Bsnablaphi} and \eqref{eq:APmidSigma} to the constraint equation \eqref{eq:REEqConstrE} commuted with $\nabla^J$.

\end{proof}

\subsection{Other useful a priori observations}\label{subsec:AP-misc}

Before moving on to the energy estimates, we collect a differentiation identity and lay the groundwork for energy coercivity:

\begin{lemma}[The volume form and differentiation of integrals]\label{lem:delt-int}
Let $\mu_G=\sqrt{\det G}$ denote the volume element with regard to $G$. It satisfies
\begin{equation}\label{eq:delt-muG}
\del_t\mu_G=\frac12\mu_G(G^{-1})^{ij}\del_tG_{ij}=-N\tau\mu_G\,\,,
\end{equation}
and hence one has
\begin{equation}\label{eq:APvol}
\|\mu_{G}-\mu_{\gamma}\|_{C^0_G}\lesssim\epsilon \,.
\end{equation}
on $(\Sigma_t)_{t\in(t_{Boot},t_0]}$. Further, for any differentiable function $\zeta$, one has
\begin{equation}\label{eq:delt-int}
\del_t\int_M \zeta\vol{G} = \int_M \del_t\zeta\vol{G}-\int_MN\tau \cdot \zeta\vol{G}
\end{equation}
\end{lemma}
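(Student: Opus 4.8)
The plan is to treat the three assertions in order, the first by a direct computation and the other two by integrating the resulting pointwise ODE for $\mu_G$.

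\emph{The identity \eqref{eq:delt-muG}.} I would start from Jacobi's formula for the derivative of a determinant, $\del_t\det G = (\det G)\,(G^{-1})^{ij}\del_tG_{ij}$, which after dividing by $2\sqrt{\det G}$ gives the first equality $\del_t\mu_G = \tfrac12\mu_G(G^{-1})^{ij}\del_tG_{ij}$. Then I would insert the rescaled metric evolution equation \eqref{eq:REEqG}, use that $\Sigma$ is tracefree with respect to $G$ (so that $(G^{-1})^{ij}\Sigma_{ij}=0$) together with $(G^{-1})^{ij}G_{ij}=3$, so that $(G^{-1})^{ij}\del_tG_{ij} = 6N\tfrac{\dot a}a$, and finally rewrite $6\tfrac{\dot a}a = -2\tau$ via the CMC condition \eqref{eq:CMC}. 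This yields $\del_t\mu_G = -N\tau\mu_G$.

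\emph{The estimate \eqref{eq:APvol}.} Since $\mu_\gamma$ does not depend on $t$, the identity just proved reads $\del_s\log\mu_G = -N\tau$ pointwise on $\Sigma_s$, so integrating over $[t,t_0]$ gives $\mu_G(t,x) = \mu_G(t_0,x)\exp\!\big(\int_t^{t_0}(N\tau)(s,x)\,ds\big)$ for each $x\in M$. I would then bound the exponent: by \eqref{eq:Friedman} one has $|\tau| = 3\tfrac{\dot a}{a}\lesssim a^{-3}$, while the bootstrap assumption \eqref{eq:BsN} gives $\|N\|_{C^0_G}\lesssim\epsilon a^{4-c\sigma}$, hence $|N\tau|\lesssim\epsilon a^{1-c\sigma}$ and, by \eqref{eq:a-integrals} with exponent $p=4-c\sigma>3$, $\int_t^{t_0}|N\tau|(s,x)\,ds\lesssim\epsilon$. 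Thus $\exp\!\big(\int_t^{t_0}N\tau\,ds\big) = 1+\mathcal{O}(\epsilon)$ uniformly in $x$. Combining this with the initial bound $\|\mu_{G(t_0,\cdot)}-\mu_\gamma\|_{C^0_\gamma}\lesssim\epsilon^2$ from \eqref{eq:init-vol-el} and $\mu_\gamma\lesssim1$ yields $\|\mu_G-\mu_\gamma\|_{C^0_G}=\sup_{\Sigma_t}|\mu_G-\mu_\gamma|\lesssim\epsilon$.

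\emph{The differentiation formula \eqref{eq:delt-int}.} I would write $\vol{G} = (\mu_G/\mu_\gamma)\vol{\gamma}$ with $\vol{\gamma}$ time-independent, and differentiate under the integral sign (legitimate by the regularity assumed in Section \ref{subsec:lwp} on the closed manifold $M$), so that
\[
\del_t\int_M\zeta\vol{G} = \int_M\del_t\!\left(\zeta\,\frac{\mu_G}{\mu_\gamma}\right)\vol{\gamma} = \int_M\del_t\zeta\,\frac{\mu_G}{\mu_\gamma}\vol{\gamma} + \int_M\zeta\,\frac{\del_t\mu_G}{\mu_\gamma}\vol{\gamma}\,;
\]
the first term is $\int_M\del_t\zeta\vol{G}$, and plugging $\del_t\mu_G = -N\tau\mu_G$ from \eqref{eq:delt-muG} into the second term gives $-\int_M N\tau\zeta\vol{G}$, which is the claimed identity. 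The only step requiring genuine care is \eqref{eq:APvol}: one must use not merely that the weight $a^{1-c\sigma}$ is integrable near $t=0$ (which holds since $c\sigma<1$) but that, after pairing the $a^4$ gain in the bootstrap bound for $N$ against the $a^{-3}$ growth of $\tau$, the time integral of $|N\tau|$ is of size $\mathcal{O}(\epsilon)$ rather than merely finite, so that the factor $1+\mathcal{O}(\epsilon)$ does not destroy the $\epsilon$-smallness inherited from the initial data.
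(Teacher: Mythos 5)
Your proof is correct and follows essentially the same route as the paper: Jacobi's formula plus the tracelessness of $\Sigma$ and the CMC condition for \eqref{eq:delt-muG}, the initial smallness \eqref{eq:init-vol-el} together with the bound $|N\tau|\lesssim\epsilon a^{1-c\sigma}$ and its time-integrability for \eqref{eq:APvol}, and the decomposition $\vol{G}=(\mu_G/\mu_\gamma)\vol{\gamma}$ for \eqref{eq:delt-int}. The only cosmetic difference is that you solve the linear ODE for $\mu_G$ exactly and control the exponential factor, whereas the paper derives an integral inequality for $\mu_G-\mu_\gamma$ and invokes the Gronwall lemma; the two are equivalent here, and your variant is arguably slightly cleaner.
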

\begin{proof}
From \eqref{eq:REEqG}, we obtain $(G^{-1})^{ij}\del_tG_{ij}=-2N\tau$,
and \eqref{eq:delt-muG} follows by
\[\del_t\mu_G
=\frac12\sqrt{\det G}(G^{-1})^{ij}\del_tG_{ij}=-N\tau\mu_G\,.\]
Hence, we have using \eqref{eq:BsN} and the initial data estimate \eqref{eq:init-vol-el} that
\begin{align*}
\lvert\mu_G-\mu_\gamma\rvert(t,\cdot)
\lesssim&\, \epsilon+\int_t^{t_0}\epsilon a(s)^{1-c\sigma}\lvert\mu_G-\mu_\gamma\rvert(s.\cdot)\,ds
\end{align*}
holds, and thus \eqref{eq:APvol} after applying the Gronwall lemma.\\
Finally, we obtain \eqref{eq:delt-int} by writing $\vol{G}=\frac{\mu_G}{\mu_\gamma}\vol{\gamma}$ and inserting \eqref{eq:delt-muG}
\end{proof}

\begin{lemma}[Preliminary Sobolev norm estimates]\label{lem:Sobolev-norm-equivalence-improved} Let $\zeta$ be a scalar function and $\mathfrak{T}$ be a symmetric $\Sigma_t$-tangent $(0,2)$-tensor, and let $l\in\{1,\dots,\change{9}\}$. Then, on $(t_{Boot},t_0]$, the following estimates are satisfied: For $l>5$, one has:
\begin{subequations}
\begin{align}
\|\nabla^2\zeta\|_{L^2_G}^2\lesssim&\,\|\Lap\zeta\|_{L^2_G}^2+a^{-c\sqrt{\epsilon}}\|\nabla\zeta\|_{L^2_G}^2\label{eq:Sobolev-norm-equiv-zetalow}\\
\|\zeta\|_{H^{2l}_G}^2\lesssim&\,\|\Lap^{l} \zeta\|_{L^2_G}^2+a^{-c\sqrt{\epsilon}}\left(\sum_{m=0}^{l-1}\|\Lap^m\zeta\|_{L^2_G}^2+\change{\|\zeta\|_{C_G^{\change{2l-12}}}^2\E^{(\leq 2l-3)}(\Ric,\cdot)}\right) \label{eq:Sobolev-norm-equiv-zeta2l}\\
\sum_{m=1}^{2l+1}\|\zeta\|_{\dot{H}^{m}_G}^2\lesssim&\,\|\nabla\Lap^{l}\zeta\|_{L^2_G}^2+a^{-c\sqrt{\epsilon}}\left(\sum_{m=0}^{l-1}\|\nabla\Lap^m\zeta\|_{L^2_G}^2+\|\nabla\zeta\|_{C_G^{\change{2l-12}}}^2\E^{(\leq2l-2)}(\Ric,\cdot)\right)\label{eq:Sobolev-norm-equiv-zeta2l+1}\\
\changefinal{\sum_{m=1}^{2l}\|\nabla\zeta\|_{\dot{H}^{m}_G}^2\lesssim}&\changefinal{\,\|\nabla\Lap^l\zeta\|_{L^2_G}^2+a^{-c\sqrt{\epsilon}}\left(\sum_{m=0}^{l-1}\|\nabla\Lap^m\zeta\|_{L^2_G}^2+\|\nabla\zeta\|_{C_G^{\change{2l-11}}}^2\E^{(\leq 2l-2)}(\Ric,\cdot)\right)\label{eq:Sobolev-norm-equiv-nablazeta2l}}
\end{align}
\end{subequations}
and
\begin{subequations}
\begin{align}
\|\mathfrak{T}\|_{H^{2l}_G}^2\lesssim&\,\|\Lap^{l} \mathfrak{T}\|_{L^2_G}^2+a^{-c\sqrt{\epsilon}}\left(\sum_{m=0}^{l-1}\|\Lap^m\mathfrak{T}\|_{L^2_G}^2+\|\mathfrak{T}\|_{C_G^{\change{2l-11}}}^2\E^{(\leq 2l-2)}(\Ric,\cdot)\right)\label{eq:Sobolev-norm-equiv-T2l}\\
\sum_{m=1}^{2l+1}\|\mathfrak{T}\|_{\dot{H}^{m}_G}^2\lesssim&\,\|\nabla\Lap^{l}\mathfrak{T}\|_{L^2_G}^2+a^{-c\sqrt{\epsilon}}\left(\sum_{m=0}^{l-1}\|\nabla\Lap^m\mathfrak{T}\|_{L^2_G}^2+\|\mathfrak{T}\|_{C_G^{\change{2l-10}}}^2\E^{(\leq2l-1)}(\Ric,\cdot)\right)\label{eq:Sobolev-norm-equiv-T2l+1}
\end{align}
\end{subequations}
More precisely, the Ricci energy terms can be dropped in all of the above estimates for $l\leq 5$.
\end{lemma}
\begin{remark}\label{rem:Sobolev-norm-equivalence-improved}
We stress that Lemma \ref{lem:Sobolev-norm-equivalence-improved} is crucial for everything that follows in multiple ways:\\

Firstly, the $L^2_G$-norms containing $\zeta$ and $\mathfrak{T}$ on the right hand sides \delete{of all inequalities }above except \eqref{eq:Sobolev-norm-equiv-nablazeta2l} are in precisely the form the energies in Definition \ref{def:energies} take. Hence, this is what will actually yield near-coercivity of \change{our energies since the $C_G$-norms can be controlled by $a^{-c\sqrt{\epsilon}}$ or better using the a priori estimates from Lemma \ref{lem:AP}, as well as \eqref{eq:BsN} for the lapse.} This will be shown more \change{explicitly }as an intermediary step in improving the bootstrap assumptions for $\mathcal{C}$ (see proof of Corollary \ref{cor:H-imp}).\\

Secondly, a downside of using $\Lap$ as the main differential operator to commute with the Einstein scalar-field system is that it creates error terms that we can only bound by Sobolev norms and not directly express as energies. Thus, we need a way to translate this information back to energies to formulate energy inequalities. A lot of this is done \enquote{under the hood} in the error term estimates in subsection \ref{subsec:L2-error-est}.\\

Finally, some top order terms also do not appear in a way that their $L^2$-norm is directly the square root of an energy (see, for example, the term $a\nabla^2\Lap^{\frac{L}2}N$ in \eqref{eq:comeq-Sigma}), and some borderline terms would lead to nonintegrable divergences \delete{at that level of derivatives }if we were to incur additional divergences in estimation (see, for example, the first term in \eqref{eq:comeq-Sigma-border}). Lemma \ref{lem:Sobolev-norm-equivalence-improved} precisely provides a way to relate these terms to energies \delete{without losing precision at top order in terms of powers of $a$}. Additionally, by applying these estimates for terms of the form $\Lap^\frac{L}2\zeta$ and $\Lap^\frac{L}2\mathfrak{T}$, one can avoid high order curvature energies \change{that run the risk of breaking the energy hierarchy.}
\end{remark}
\begin{proof} Since the arguments for all of the inequalities above are very similar, we only prove \eqref{eq:Sobolev-norm-equiv-T2l} in full and then briefly \changefinal{address }the other estimates.\\
Letting $\tilde{\mathfrak{T}}_{i_1\dots i_{2l}k_1k_2}=\nabla_{i_1}\dots\nabla_{i_{2l}}\mathfrak{T}_{k_1k_2}\,$, we compute with the commutator formula \eqref{eq:[Lap,nabla]T} and strong $C_G$-norm estimate \eqref{eq:APmidRic}:
\begin{align*}
\int_M\lvert\nabla^2\tilde{\mathfrak{T}}\rvert_G^2=&\,-\int_M\langle\nabla\tilde{\mathfrak{T}},\Lap\nabla\tilde{\mathfrak{T}}\rangle_G\,\vol{G}\\
=&-\int_M\langle\nabla\tilde{\mathfrak{T}},\nabla\Lap\tilde{\mathfrak{T}}\rangle_G\,\vol{G}+\int_M\nabla\tilde{\mathfrak{T}}\ast\left[\nabla\Ric[G]\ast \tilde{\mathfrak{T}}+\Ric[G]\ast\nabla\tilde{\mathfrak{T}}\right]\,\vol{G}\\
\lesssim& \int_M\lvert\Lap\tilde{\mathfrak{T}}\rvert_G^2\,\vol{G}+\changefinal{\left(1+\left\|\Ric[G]+\frac29G\right\|_{{C}^1_G}\right)}\cdot\left[\int_M\lvert\tilde{\mathfrak{T}}\rvert_G^2\,\vol{G}+\int_M\lvert\nabla\tilde{\mathfrak{T}}\rvert_G^2\,\vol{G}\right]\\
\lesssim&\int_M \lvert\Lap\tilde{\mathfrak{T}}\rvert_G^2+a^{-c\sqrt{\epsilon}}\int_M\lvert\tilde{\mathfrak{T}}\rvert_G^2\,\vol{G}
\end{align*}
In the final step, we used integration by parts to obtain
\[a^{-c\sqrt{\epsilon}}\int_M\lvert\nabla\tilde{\mathfrak{T}}\rvert_G^2\,\vol{G}\leq \int_M\lvert\Lap\tilde{\mathfrak{T}}\rvert_G\cdot a^{-c\sqrt{\epsilon}}\lvert\tilde{\mathfrak{T}}\rvert_G\,\vol{G}\lesssim \int_M\left(\lvert\Lap\tilde{\mathfrak{T}}\rvert_G^2+a^{-2c\sqrt{\epsilon}}\lvert\tilde{\mathfrak{T}}\rvert_G^2\right)\,\vol{G}\]
and updated $c$. This already shows \eqref{eq:Sobolev-norm-equiv-T2l} for $l=1$. Assume now that \eqref{eq:Sobolev-norm-equiv-T2l} holds up to some $l\in\N,\, l\leq 9$ and \textit{any} symmetric $\Sigma_t$-tangent $(0,2)$-tensor field. By applying \change{\eqref{eq:[Lap,nabla2]T}}, we have
\change{\begin{align*}
\Lap\tilde{\mathfrak{T}}
=&\,\nabla^{2l}\Lap\mathfrak{T}+[\Lap,\nabla^2]\nabla^{2l-2}\mathfrak{T}+\dots+\nabla^{2l-2}[\Lap,\nabla^2]\mathfrak{T}\\
=&\,\nabla^{2l}\Lap\mathfrak{T}+\sum_{I_\Ric+I_\mathfrak{T}=2l}\nabla^{I_\Ric}(\Ric[G]+\frac29G)\ast\nabla^{I_\mathfrak{T}}\mathfrak{T}+G\ast\nabla^{2l}\mathfrak{T}\numberthis\label{eq:lap-sob-equiv-crucial}
\end{align*}}
\change{Subsequently}, we have \change{for $l>5$ }using the strong $C_G$-norm estimate \eqref{eq:APmidRic} \change{for any Ricci term of order 10 or lower }that
\begin{align*}
\|\mathfrak{T}\|_{\dot{H}^{2(l+1)}_G}^2=\int_M \lvert\nabla^2\tilde{\mathfrak{T}}\rvert_G^2\,\vol{G}\lesssim \|\Lap\mathfrak{T}\|_{\dot{H}^{2l}_G}^2+\left(1+\sqrt{\epsilon}a^{-c\sqrt{\epsilon}}\right)\|\mathfrak{T}\|_{H^{2l}_G}^2+\|\mathfrak{T}\|_{C_G^{\change{2l-11}}}^2\|\Ric[G]+\frac29G\|_{H^{2l}_G}^2\,,
\end{align*}
\change{and get the same estimate without the final term for $l\leq 5.$ }By assumption, we can estimate $\|\Lap\mathfrak{T}\|_{H^{2l}_G}^2$, $\|\mathfrak{T}\|_{H^{2l}_G}^2$ and \changefinal{$\|\Ric[G]+\frac29G\|_{H^{2l}_G}^2$ }as in \eqref{eq:Sobolev-norm-equiv-T2l}, and get the following \change{for $l>5$} :
\begin{align*}
\int_M\lvert\nabla^{2l+2}\mathfrak{T}\rvert_G^2\,\vol{G}\lesssim&\,\left[\|\Lap^{l}\Lap\mathfrak{T}\|^2_{L^2_G}+a^{-c\sqrt{\epsilon}}\sum_{m=0}^{l-1}\|\Lap^{m+1}\mathfrak{T}\|_{L^2_G}^2+a^{-c\sqrt{\epsilon}}\|\Lap\mathfrak{T}\|_{C_G^{\change{2l-12}}}^2\E^{(\leq 2l-2)}(\Ric,\cdot)\right]\\
&\,+\left[a^{-c\sqrt{\epsilon}}\|\Lap^l\mathfrak{T}\|_{L^2_G}^2+a^{-c\sqrt{\epsilon}}\sum_{m=0}^{l-1}\|\Lap^{m}\mathfrak{T}\|^2_{L^2_G}+a^{-c\sqrt{\epsilon}}\|\mathfrak{T}\|_{C^{\change{2l-12}}_G}^2\E^{(\leq 2l-2)}(\Ric,\cdot)\right]\\
&\,+a^{-c\sqrt{\epsilon}}\|\mathfrak{T}\|_{C_G^{\change{2l-11}}}^2\left[\E^{(2l)}(\Ric,\cdot)+\left(\E^{(\leq 2l-2)}(\Ric,\cdot)+\epsilon a^{-c\sqrt{\epsilon}}\E^{(\leq 2l-2)}(\Ric,\cdot)\right)\right]\\
\lesssim&\,\|\Lap^{l+1}\mathfrak{T}\|_{L^2_G}^2+a^{-c\sqrt{\epsilon}}\left(\sum_{m=0}^l\|\Lap^m\mathfrak{T}\|_{L^2_G}^2+\|\mathfrak{T}\|_{C^{\change{2l-10}_G}}^2\E^{(\leq 2l)}(\Ric,\cdot)\right)
\end{align*}
\change{For $l=5$, we get analogous estimates dropping the Ricci energies in the first two lines, and for $l=4$, the same with all curvature terms dropped.\\}
To prove the statement for $l+1$, it now remains to be shown that $\|\mathfrak{T}\|_{\dot{H}^{2l+1}_G}^2$ can be bounded by the same right hand side (up to constant) as above. By integration by parts, one has
\[\|\nabla^{2l+1}\mathfrak{T}\|_{L^2_G}^2\lesssim \|\nabla^{2l}\mathfrak{T}\|_{L^2_G}^2+\|\Lap\nabla^{2l}\mathfrak{T}\|_{L^2_G}^2,\]
where the latter tensor is precisely $\Lap\tilde{\mathfrak{T}}$ which we just treated, and the former is covered by the induction assumption at order $2l$. So, \eqref{eq:Sobolev-norm-equiv-T2l} now follows for $l+1$, and thus by iteration up to $l=10$.\\

\noindent The proof of \eqref{eq:Sobolev-norm-equiv-T2l+1} is analogous -- we note that since we actually only needed a strong estimate on $\|\Ric[G]+2G\|_{C^9_G}$ for the previous inequality, but \eqref{eq:APmidRic} holds at $C^{10}_G$, this gives enough room to extend the argument in full despite the extra derivative order.\\
For both, note that we only need to estimate the Ricci terms in the $L^2_G$-norm if one cannot apply the a priori estimate \eqref{eq:APmidRic} to all $\nabla^{I_\Ric}\Ric[G]$ that occur in \eqref{eq:lap-sob-equiv-crucial}, and thus we could easily adjust the proof such that the Ricci energy does not occur in any of the proofs as long as $2l-1\leq 10$ is satisfied, so for $l\leq 5$.\\

The estimates \eqref{eq:Sobolev-norm-equiv-zeta2l}-\eqref{eq:Sobolev-norm-equiv-nablazeta2l} are proved identically, the only difference being that one order of curvature less enters in the commutator terms in \eqref{eq:lap-sob-equiv-crucial}, leading to one order less in curvature in total. For \eqref{eq:Sobolev-norm-equiv-zetalow}, we note that we can avoid incurring any $L^2$-norm by carefully repeating the argument we made for $\tilde{\mathfrak{T}}$ using \eqref{eq:[Lap,nabla]SF}:
\begin{align*}
\int_M\lvert\nabla^2\zeta\rvert_G^2\,\vol{G}
=&\,\int_M-\langle\nabla\zeta,\nabla\Lap\zeta\rangle_G\,\vol{G}+\int_M\Ric[G]\ast\nabla\zeta\ast\nabla\zeta\,\vol{G}\\
\lesssim&\,\int_M\lvert\Lap\zeta\rvert_G^2\,\vol{G}+a^{-c\sqrt{\epsilon}}\int_M\lvert\nabla\zeta\rvert_G^2\,\vol{G}
\end{align*}
\end{proof}

\section{Big Bang stability: Elliptic lapse estimates}\label{sec:lapse}

In this section, we study the elliptic structure of the equations \eqref{eq:REEqLapse1}-\eqref{eq:REEqLapse2}, which admit estimates controlling (time-scaled) lapse energies by other energy quantities. To this end, we recast these equations as follows:

\begin{definition}[Elliptic operators] For any (sufficiently regular) scalar function $\zeta$ on $\Sigma_t$, we define the differential operators
\begin{subequations}
\begin{align}
\L \zeta=&\,a^4\Lap \zeta-f\cdot \zeta,&f=\frac13a^4+12\pi C^2+\underbrace{\langle\Sigma,\Sigma\rangle_G+8\pi\Psi^2+16\pi C\Psi}_{=:F}\,, \\
\Ltilde \zeta=&\,a^4\Lap\zeta-\tilde{f}\cdot\zeta,&\quad \tilde{f}=\frac13a^4+12\pi C^2+\underbrace{a^4\left[R[G]+\frac23-8\pi\lvert\nabla\phi\rvert^2_G\right]}_{=\tilde{F}}\,.
\end{align}
\end{subequations}
\end{definition}
\noindent Note that the lapse equations \eqref{eq:REEqLapse1}, respectively \eqref{eq:REEqLapse2}, now read
\begin{equation}\label{eq:lapse-with-op}
\L N=F,\ \text{respectively}\ \Ltilde N=\tilde{F}.\,
\end{equation}
Furthermore, observe that
\begin{subequations}
\begin{align}
\label{eq:[L,Lap]}[\L,\Lap]\zeta=&\,\Lap f\cdot\zeta+2\langle\nabla f,\nabla\zeta\rangle_G=\change{\Lap F}\cdot\zeta+2\langle\nabla F,\nabla\zeta\rangle_G\,,\\
\label{eq:[Ltilde,Lap]}[\Ltilde,\Lap]\zeta=&\,\change{\Lap \tilde{F}}\cdot\zeta+2\langle\nabla \tilde{F},\nabla\zeta\rangle_G\,.
\end{align}
\end{subequations}

\subsection{Elliptic lapse estimates with $\L$}\label{subsec:lapse-L}

We first study the elliptic operator $\L$, which will admit weak lapse energy estimates in terms of scalar field quantities and $\Sigma$, up to curvature errors, that can in particular be utilized at high orders without having to resort to higher derivative levels. Before moving on to the estimates themselves, we collect a couple of inequalities we can deduce from the bootstrap assumptions and strong $C_G$-norm estimates.

\begin{remark}\label{ass:lapse}
There exists a constant $K>0$ such that, for $\epsilon>0$ small enough, the following estimates hold:
\begin{itemize}
\item $F\geq -K\epsilon$, and equivalently $f\geq 12\pi C^2-K\epsilon$. This is ensured by \eqref{eq:APSigma} and \eqref{eq:APPsi}. In particular, we can assume $\epsilon$ to have been small enough such that $f-6\pi C^2$ can be bounded from below by a positive constant that is independent of $\epsilon$ (for example $3\pi C^2$).
\item $\lvert\nabla f\rvert_{G}=\lvert\nabla F\rvert_{G}\leq  K\epsilon a^{-c\sqrt{\epsilon}}$. This is given by \eqref{eq:APmidSigma} and \eqref{eq:APmidPsi}.
\end{itemize}
\end{remark}

\begin{lemma}[Elliptic estimates with $\L$]\label{lem:ell-lapse-1}
Consider scalar functions $\zeta,Z$ on $\Sigma_t$ that satisfy
\begin{equation}\label{eq:ell-L-eq}\L \zeta=Z\,.\end{equation} Then,
\begin{equation}\label{eq:ell-L-est}
a^4\|\Lap \zeta\|_{L^2_G}+a^2\|\nabla \zeta\|_{L^2_G}+\|\zeta\|_{L^2_G}\lesssim \|Z\|_{L^2_G}
\end{equation}
\end{lemma}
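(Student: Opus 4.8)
The statement is a standard elliptic $L^2$-estimate for the Schrödinger-type operator $\L\zeta = a^4\Lap\zeta - f\zeta$, and the natural route is a multiplier argument followed by Gagliardo--Nirenberg-type interpolation. The key structural input is Remark \ref{ass:lapse}: $f$ is bounded below by a \emph{positive} constant independent of $\epsilon$ (say $f \geq 3\pi C^2 > 0$), so the zeroth-order term of $\L$ has a \emph{good sign} as a multiplier. The plan is to test the equation $\L\zeta = Z$ against $\zeta$ and against $\Lap\zeta$ respectively, integrate by parts on $(\Sigma_t, G)$ (using that $M$ is closed, so there are no boundary terms), and extract the three norms on the left-hand side one at a time, absorbing cross terms via Young's inequality.

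\textbf{Step 1: the $\|\zeta\|_{L^2_G}$ and $a^2\|\nabla\zeta\|_{L^2_G}$ bounds.} Pair \eqref{eq:ell-L-eq} with $\zeta$ in $L^2_G$:
\[
\int_M Z\,\zeta \,\vol{G} = \int_M \left(a^4\,\zeta\Lap\zeta - f\,\zeta^2\right)\vol{G} = -a^4\int_M |\nabla\zeta|_G^2\,\vol{G} - \int_M f\,\zeta^2\,\vol{G}.
\]
Since $f \geq 3\pi C^2 > 0$ and $a^4 > 0$, both terms on the right are $\leq 0$, so
\[
a^4\|\nabla\zeta\|_{L^2_G}^2 + 3\pi C^2\|\zeta\|_{L^2_G}^2 \leq \left|\int_M Z\zeta\,\vol{G}\right| \leq \|Z\|_{L^2_G}\|\zeta\|_{L^2_G}.
\]
This immediately gives $\|\zeta\|_{L^2_G} \lesssim \|Z\|_{L^2_G}$, and then $a^4\|\nabla\zeta\|_{L^2_G}^2 \lesssim \|Z\|_{L^2_G}^2$, i.e. $a^2\|\nabla\zeta\|_{L^2_G} \lesssim \|Z\|_{L^2_G}$ (note $a^2 \leq a(t_0)^2 \lesssim 1$, which is consistent with the implicit-constant conventions of Section \ref{subsubsec:notation-constants}).

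\textbf{Step 2: the $a^4\|\Lap\zeta\|_{L^2_G}$ bound.} Rearrange the equation as $a^4\Lap\zeta = Z + f\zeta$ and take $L^2_G$ norms:
\[
a^4\|\Lap\zeta\|_{L^2_G} \leq \|Z\|_{L^2_G} + \|f\zeta\|_{L^2_G} \lesssim \|Z\|_{L^2_G} + \|f\|_{C^0_G}\|\zeta\|_{L^2_G}.
\]
Here $\|f\|_{C^0_G} = \|\tfrac13 a^4 + 12\pi C^2 + F\|_{C^0_G} \lesssim 1$, using $a^4 \lesssim 1$ and $\|F\|_{C^0_G} \lesssim \epsilon$ from \eqref{eq:APSigma} and \eqref{eq:APPsi} (or simply boundedness of $\Sigma,\Psi$). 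Combining with Step 1 yields $a^4\|\Lap\zeta\|_{L^2_G} \lesssim \|Z\|_{L^2_G}$. Adding the three estimates gives \eqref{eq:ell-L-est}.

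\textbf{Where the subtlety lies.} There is essentially no hard analytic obstacle here — the operator is uniformly coercive because $f$ stays bounded away from zero (that is the whole point of the positive lower bound in Remark \ref{ass:lapse}, which in turn rests on $C > 0$, i.e. the presence of a nontrivial scalar field). The only thing to be careful about is the \emph{scaling in $a$}: the estimate claims control of $a^4\|\Lap\zeta\|_{L^2_G}$ and $a^2\|\nabla\zeta\|_{L^2_G}$ rather than the unweighted norms, and this is exactly what the multiplier argument naturally produces — testing against $\zeta$ gives the $a^4$ on the gradient term (hence $a^2$ on the norm), and the rearrangement in Step 2 gives the $a^4$ on the Laplacian. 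One should also double-check that no integration-by-parts boundary terms arise: since each $\Sigma_t \cong M$ is closed (Definition \ref{def:spatial-mf}) and $\zeta$ is sufficiently regular (by Lemma \ref{lem:lwp}), Stokes' theorem applies cleanly. Finally, all implicit constants depend only on $C$, $\gamma$, $a(t_0)$ and are independent of $\sigma,\epsilon$, as required for the bootstrap.
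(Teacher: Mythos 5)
Your proof is correct, and Step 1 is identical to the paper's first integration-by-parts step. Step 2 differs: the paper multiplies the equation by $a^4\Lap\zeta$ and integrates by parts a second time, obtaining $\int_M(a^8\lvert\Lap\zeta\rvert^2 + a^4 f\lvert\nabla\zeta\rvert_G^2)\,\vol{G}$ on the left, and then has to control the cross term $a^4\int_M\zeta\langle\nabla f,\nabla\zeta\rangle_G\,\vol{G}$ via Young's inequality and the gradient bound $\lvert\nabla f\rvert_G\lesssim\epsilon a^{-c\sqrt{\epsilon}}$ from Remark~\ref{ass:lapse}. You instead rearrange $a^4\Lap\zeta = Z + f\zeta$ and take $L^2$ norms directly, using only the $C^0_G$ bound $\|f\|_{C^0_G}\lesssim 1$ (which follows from $a\leq a(t_0)$ and $\|F\|_{C^0_G}\lesssim\epsilon$, a consequence of \eqref{eq:APSigma}, \eqref{eq:APPsi}). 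Your route is more elementary — it avoids the second multiplier identity entirely and does not invoke the $\nabla f$ estimate — while the paper's route gives the additional information that $a^4\int_M f\lvert\nabla\zeta\rvert_G^2\,\vol{G}$ is controlled, which is not needed for the stated conclusion. Both yield \eqref{eq:ell-L-est} with constants uniform in $\epsilon,\sigma$.
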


\begin{proof} The proof follows along the same lines as that of \cite[Lemma 16.5]{Speck2018}: First, we obtain the following by \change{multiplying \eqref{eq:ell-L-eq} with $-\zeta$ and integrating}:
 and using that $f-6\pi C^2$ is bounded from below by a positive constant (see the first point in Remark \ref{ass:lapse}):
\[\int_M (a^4\lvert\nabla \zeta\rvert_G^2+\lvert\zeta\rvert^2)\vol{G}\lesssim \|Z\|^2_{L^2_G}
\]
Next, we \change{multiply }\eqref{eq:ell-L-eq} with $a^4\Lap \zeta$ and obtain
\begin{align*}
\int_{M}\left(a^8\lvert\Lap \zeta\rvert^2+a^4\lvert\nabla \zeta\rvert_G^2f\right)\,\vol{G}
&\leq\int_M\frac12\lvert Z\rvert^2+\frac{a^8}2\lvert\Lap \zeta\rvert^2+\left(\frac12\lvert\zeta\rvert^2+\frac{a^4}2\lvert\nabla \zeta\rvert_G^2\right)a^2\|\nabla f\|_{L^\infty_G}\,\vol{G}
\end{align*}
Using the second point in Remark \ref{ass:lapse} as well as the previous step, we can now conclude
\begin{align*}
\int_{M} \left(a^8\lvert\Lap \zeta\rvert^2+a^4\lvert\nabla \zeta\rvert_G^2\right)\,\vol{G}
&\lesssim (1+\epsilon)\|Z\|_{L^2_G}^2
\end{align*}
and thus the statement after rearranging.
\end{proof}

\begin{corollary}[Intermediary elliptic lapse estimate with $\L$]\label{cor:ell-lapse-L}
The following estimates hold for any $l\in\{0,\dots,\change{10}\}$:
\begin{align*}
&a^4\|\Lap^{l+1} N\|_{L^2_G}+a^2\|\nabla\Lap^l N\|_{L^2_G}+\|\Lap^lN\|_{L^2_G}\\
\lesssim&\,\|\Lap^lF\|_{L^2_G}+\underbrace{\epsilon a^{-c\sqrt{\epsilon}}\|F\|_{H^{2(l-1)}_G}}_{\text{not present for }l=0}+\underbrace{\epsilon^2a^{4-c\sigma}\sqrt{\E^{(\leq 2l-4)}(\Ric,\cdot)}}_{\text{not present for }l\leq 1}\\
\end{align*}
\end{corollary}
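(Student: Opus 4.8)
\textbf{Proof plan for Corollary \ref{cor:ell-lapse-L}.}

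The plan is to iterate Lemma \ref{lem:ell-lapse-1} after commuting the lapse equation $\L N = F$ with powers of $\Lap$. Set $\zeta_l := \Lap^l N$. Using the commutator formula \eqref{eq:[L,Lap]}, one has $\L \Lap^l N = \Lap^l(\L N) - \sum_{\text{terms}}$, where the correction is the telescoping sum of nested commutators $[\L,\Lap]$ applied at intermediate orders; schematically this produces terms of the form $\Lap^{j} F \cdot (\text{lower-order derivatives of } N)$ and $\langle \nabla \Lap^j F, \nabla \Lap^{l-1-j} N\rangle_G$, together with curvature errors coming from commuting $\Lap$ past itself on tensors (which do not arise for the scalar $N$) and, crucially, from the fact that $F$ itself contains $\Sigma$ and $\Psi$ whose high derivatives are not directly controlled. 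So first I would write $\L \zeta_l = \Lap^l F + Z_l$, where $Z_l$ collects all commutator terms, and then apply Lemma \ref{lem:ell-lapse-1} to get
\[
a^4\|\Lap^{l+1}N\|_{L^2_G} + a^2\|\nabla\Lap^l N\|_{L^2_G} + \|\Lap^l N\|_{L^2_G} \lesssim \|\Lap^l F\|_{L^2_G} + \|Z_l\|_{L^2_G}\,.
\]

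Next I would estimate $\|Z_l\|_{L^2_G}$ by induction on $l$. The commutator terms involve factors of the form $\nabla^{I_1}F \ast \nabla^{I_2}N$ with $I_1 + I_2 \le 2l$ and $I_2 \le 2l-1$ (so at least one derivative hits $F$ at reduced order, or $N$ appears at order at most $2l-1$). For the low-order derivatives of $F$ one uses Remark \ref{ass:lapse} (namely $\lvert\nabla F\rvert_G \lesssim \epsilon a^{-c\sqrt\epsilon}$ and its higher-order analogues, which follow from the strong $C_G$-estimates \eqref{eq:APmidSigma}, \eqref{eq:APmidPsi} applied to the definition of $F$) to pull out a factor $\epsilon a^{-c\sqrt\epsilon}$; the remaining $N$-factors are controlled at order $\le 2l-1$ by the inductive hypothesis, or — after absorbing powers of $a$ — by $\|F\|_{H^{2(l-1)}_G}$. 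When the derivative count on $F$ exceeds what Remark \ref{ass:lapse} supplies (i.e.\ at the very top orders, which only occurs once $2l-2 \ge $ the threshold in the $C_G$-estimates, so for $l\ge 2$), one has to express the top derivatives of $F$ through the momentum and Hamiltonian constraints \eqref{eq:REEqMom}, \eqref{eq:REEqHam}; this is where $\nabla\Sigma$ gets traded for curvature and one picks up a term $\epsilon^2 a^{4-c\sigma}\sqrt{\E^{(\le 2l-4)}(\Ric,\cdot)}$ after using Lemma \ref{lem:Sobolev-norm-equivalence-improved} to convert the Sobolev norms of $\Ric[G]+\frac29 G$ into the curvature energy. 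The factor $a^4$ there comes from the scaling in \eqref{eq:REEqConstrE}/\eqref{eq:REEqHam}, and the $\epsilon^2$ and $a^{-c\sigma}$ from one bootstrap-assumption factor on the lapse or on $\Sigma$ and one strong-estimate factor. The constraint $l \le 10$ is exactly what is needed so that $2l-4 \le 16$ and the relevant curvature energies $\E^{(\le 2l-4)}(\Ric,\cdot)$ stay within the range controlled by $\mathcal{H}$ (cf.\ Definition \ref{def:sol-norm}, Lemma \ref{lem:Sobolev-norm-equivalence-improved}).

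Assembling these bounds, $\|Z_l\|_{L^2_G} \lesssim \epsilon a^{-c\sqrt\epsilon}\|F\|_{H^{2(l-1)}_G} + \epsilon^2 a^{4-c\sigma}\sqrt{\E^{(\le 2l-4)}(\Ric,\cdot)}$ for $l \ge 2$, with the first term absent when $l \le 1$ and the second absent when $l \le 1$ (indeed the curvature term only enters once the constraint substitution is forced). Feeding this into the displayed consequence of Lemma \ref{lem:ell-lapse-1} and taking $\epsilon$ small enough to absorb constants yields exactly the claimed estimate. The main obstacle I anticipate is the bookkeeping at top order: verifying carefully that every high derivative of $F = \langle\Sigma,\Sigma\rangle_G + 8\pi\Psi^2 + 16\pi C\Psi$ appearing in $Z_l$ either stays below the threshold of the pointwise a priori estimates in Lemma \ref{lem:AP}, or can be legitimately replaced via the commuted constraint equations \eqref{eq:comeq-mom-div}, \eqref{eq:comeq-Ham} without losing a derivative or incurring a worse power of $a$ than $a^{4-c\sigma}$ — and that the curvature energies produced stay within the $L$-order budget, which is precisely why the statement is capped at $l \le 10$.
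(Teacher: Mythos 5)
The general framework you set up is right — commute $\L$ with $\Lap^l$, apply Lemma \ref{lem:ell-lapse-1} to $\zeta=\Lap^l N$, and treat the commutator errors by induction. You also correctly identify the schematic form of the commutator terms $\nabla^I F\ast\nabla^{2l-I}N$. However, there is a genuine gap in how you propose to handle the high-derivative terms and, in particular, where the curvature energy comes from.

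You claim that when the derivative count on $F$ exceeds what the a priori $C_G$-estimates supply, one should "express the top derivatives of $F$ through the momentum and Hamiltonian constraints \eqref{eq:REEqMom}, \eqref{eq:REEqHam}" and that this is where $\nabla\Sigma$ gets traded for curvature. This is not what happens, and it would not work cleanly if you tried: the momentum constraint only controls $\div_G\Sigma$, not the full tensor $\nabla\Sigma$ that shows up in $\nabla F$, so a naive substitution leaves an uncontrolled piece. In the paper's proof, the constraint equations are never invoked. Instead, the curvature energy term $\epsilon^2 a^{4-c\sigma}\sqrt{\E^{(\leq 2l-4)}(\Ric,\cdot)}$ arises entirely from applying the near-coercivity estimate \eqref{eq:Sobolev-norm-equiv-zeta2l+1} (with $l$ replaced by $l-1$) when converting Sobolev norms of the form $\|\nabla^{2l-I}N\|_{L^2_G}$ (and $\|\nabla^{2l-1}F\|_{L^2_G}$) back into iterated-Laplacian norms — that lemma intrinsically produces Ricci-energy error terms. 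The $\epsilon$ and $a^{4-c\sigma}$ factors on that curvature error come from the $C_G$-bound on $\nabla N$ in \eqref{eq:BsN} (i.e.\ $\|\nabla N\|_{C_G}\lesssim\epsilon a^{4-c\sigma}$), together with the $C_G$-bound on low derivatives of $F$ from \eqref{eq:APmidPsi}, \eqref{eq:APmidSigma} — not from any scaling in \eqref{eq:REEqConstrE} or \eqref{eq:REEqHam}. The cap $l\leq 10$ is the admissibility range of Lemma \ref{lem:Sobolev-norm-equivalence-improved} when applied at order $l-1$, which happens to coincide with keeping $\E^{(\leq 2l-4)}(\Ric,\cdot)$ within the range tracked by $\mathcal H$.

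One further step you omit: after this intermediate estimate you still have a top-order lapse term $\epsilon a^{-c\sqrt\epsilon}\|\nabla\Lap^{l-1}N\|_{L^2_G}$ on the right. This is redistributed via Cauchy--Schwarz into $\epsilon\|\Lap^lN\|_{L^2_G}+\epsilon a^{-2c\sqrt\epsilon}\|\Lap^{l-1}N\|_{L^2_G}$, with the first summand absorbed into the left-hand side (for $\epsilon$ small) and the second controlled by the inductive hypothesis. The same redistribution handles $\|\nabla\Lap^{l-1}F\|_{L^2_G}$. Your write-up should incorporate these two ingredients — Lemma \ref{lem:Sobolev-norm-equivalence-improved} as the source of curvature errors, and the absorption/redistribution of the top-order lapse and $F$ terms — to close the induction.
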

\begin{proof}
We prove the statement by induction over $l\in\N$: For $l=0$, the estimates immediately follow from \eqref{eq:lapse-with-op} and Lemma \ref{lem:ell-lapse-1}. 
Assume the statement to be satisfied up to $l-1$ for some $l\in\N_0,\,l\leq 11$. We get, applying \eqref{eq:[L,Lap]} iteratively,
\begin{align*}
\L\Lap^lN
=&
\sum_{I=1}^{2l-1}\nabla^IF\ast\nabla^{2l-I}N+(N+1)\Lap^lF\,.
\end{align*}
Applying Lemma \ref{lem:ell-lapse-1} to $\zeta=\Lap^l N$ as well as Lemma \ref{lem:lapse-maxmin} yields
\[a^4\|\Lap^{l+1}N\|_{L^2_G}+a^2\|\nabla\Lap^{l}N\|_{L^2_G}+\|\Lap^l N\|_{L^2_G}\lesssim \sum_{I=1}^{2l-1}\|\nabla^IF\ast\nabla^{2l-I}N\|_{L^2_G}+\|\Lap^lF\|_{L^2_G}\]
Hence, using \eqref{eq:Sobolev-norm-equiv-zeta2l+1} (replacing $l$ with $l-1$) and  \eqref{eq:BsN}, \eqref{eq:APmidPsi} and \eqref{eq:APmidSigma} to estimate low order terms, we get
\begin{align*}
&a^4\|\Lap^{l+1}N\|_{L^2_G}+a^2\|\nabla\Lap^{l}N\|_{L^2_G}+\|\Lap^l N\|_{L^2_G}\\ 
\lesssim&\,\|\Lap^lF\|_{L^2_G}+\epsilon a^{-c\sqrt{\epsilon}}\left(\sum_{m=0}^{l-1}\|\nabla\Lap^mN\|_{L^2_G}+\epsilon a^{4-c\sigma}\sqrt{\E^{(\leq 2l-4)}(\Ric,\cdot)}\right)\\
&\,+\epsilon a^{4-c\sigma}\left(\|F\|_{H^{2(l-1)}_G}+\|\nabla\Lap^{l-1}F\|_{L^2_G}+\epsilon a^{-c\sqrt{\epsilon}}\sqrt{\E^{(\leq 2l-4)}(\Ric,\cdot)}\right)
\end{align*}
For the top order lapse term, we can redistribute the divergent prefactor as follows:
\begin{equation*}
\epsilon a^{-c\sqrt{\epsilon}}\|\nabla\Lap^{l-1}N\|_{L^2_G}\lesssim \epsilon \|\Lap^lN\|_{L^2_G}+\epsilon a^{-2c\sqrt{\epsilon}}\|\Lap^{l-1}N\|_{L^2_G}
\end{equation*}
The lower order lapse terms as well as $\|\nabla\Lap^{l-1}F\|_{L^2_G}$ can be estimated similarly, just without having to redistribute the prefactor. Updating $c>0$ and rearranging then yields the statement at order $l$ for suitably small $\epsilon>0$, and thus the entire statement after iteration.
\end{proof}

\begin{corollary}[Lapse energy estimates with $\L$]\label{cor:en-est-lapse}
For any $l\in\{0,\dots,\change{9}\}$, one has
\begin{align*}
&\,a^8\E^{(2(l+1))}(N,\cdot)+a^4\E^{(2l+1)}(N,\cdot)+\E^{(2l)}(N,\cdot)\\
\lesssim&\,\epsilon^2\E^{(2l)}(\Sigma,\cdot)+\E^{(2l)}(\phi,\cdot)+\underbrace{\epsilon^2a^{-c\sqrt{\epsilon}}\left[\E^{(\leq 2(l-1))}(\Sigma,\cdot)+\E^{(\leq 2(l-1))}(\phi,\cdot)\right]}_{\text{not present for }l=0}\\
&\,+\underbrace{\left(\epsilon^4a^{-c\sqrt{\epsilon}}+\epsilon^2a^{8-c\sigma}\right)\E^{(\leq 2l-3)}(\Ric,\cdot)}_{\text{not present for }l\leq 1}
\end{align*}
\end{corollary}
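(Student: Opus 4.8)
The plan is to deduce Corollary~\ref{cor:en-est-lapse} from the $L^2_G$-estimates of Corollary~\ref{cor:ell-lapse-L} by converting the right-hand side quantities $\|\Lap^lF\|_{L^2_G}$ and $\|F\|_{H^{2(l-1)}_G}$ into the energies $\E^{(2l)}(\Sigma,\cdot)$ and $\E^{(2l)}(\phi,\cdot)$. Recall that $F=\langle\Sigma,\Sigma\rangle_G+8\pi\Psi^2+16\pi C\Psi$, so $F$ is quadratic in $\Sigma$ and $\Psi$ together with a term that is \emph{linear} in $\Psi$ with the large constant $16\pi C$. First I would square the conclusion of Corollary~\ref{cor:ell-lapse-L} at the relevant order $l$, so that the left side becomes exactly $a^8\E^{(2(l+1))}(N,\cdot)+a^4\E^{(2l+1)}(N,\cdot)+\E^{(2l)}(N,\cdot)$ up to constants (using the integration-by-parts identities \eqref{eq:energydef-ibp} for the odd-order term), and then estimate each summand on the right.

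The key step is the term analysis of $\Lap^lF$. The linear piece $16\pi C\Lap^l\Psi$ contributes, after taking $\|\cdot\|_{L^2_G}^2$, a constant multiple of $\E^{(2l)}(\phi,\cdot)$ (the even-order energy is $\|\Lap^l\Psi\|_{L^2_G}^2+a^4\|\nabla\Lap^l\phi\|_{L^2_G}^2$, so this is bounded by $\E^{(2l)}(\phi,\cdot)$). For the quadratic pieces $\Lap^l(\langle\Sigma,\Sigma\rangle_G)$ and $\Lap^l(\Psi^2)$, I would expand by the Leibniz rule into $\sum_{I_1+I_2=2l}\nabla^{I_1}\Sigma\ast\nabla^{I_2}\Sigma$ and $\sum_{I_1+I_2=2l}\nabla^{I_1}\Psi\ast\nabla^{I_2}\Psi$ (plus lower-order commutator corrections absorbed into junk), put the top-order factor in $L^2_G$ and the remaining low-order factor in $C^0_G$, then invoke the a priori estimates \eqref{eq:APmidSigma}, \eqref{eq:APmidPsi} of Lemma~\ref{lem:AP} to bound that $C^0_G$-factor by $\epsilon a^{-c\sqrt\epsilon}$. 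This produces $\epsilon^2 a^{-c\sqrt\epsilon}$ times $\|\nabla^{2l}\Sigma\|_{L^2_G}^2$ (resp. the $\Psi$ analogue), which by Lemma~\ref{lem:Sobolev-norm-equivalence-improved} (estimates \eqref{eq:Sobolev-norm-equiv-T2l} for $\Sigma$ and \eqref{eq:Sobolev-norm-equiv-zeta2l} for $\Psi$) is controlled by $\E^{(2l)}(\Sigma,\cdot)$, $\E^{(2l)}(\phi,\cdot)$ plus $a^{-c\sqrt\epsilon}$ times lower-order energies and a curvature term $\|\Sigma\|_{C^{2l-11}_G}^2\E^{(\leq 2l-2)}(\Ric,\cdot)$; the latter, using \eqref{eq:APmidSigma} again, becomes $\epsilon^2a^{-c\sqrt\epsilon}\E^{(\leq 2l-2)}(\Ric,\cdot)$, consistent with the stated curvature error. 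Crucially one must check that the $\epsilon^2$ prefactor survives: the cross term $\epsilon^2 a^{-c\sqrt\epsilon}\E^{(2l)}(\Sigma,\cdot)$ is what gives the clean $\epsilon^2\E^{(2l)}(\Sigma,\cdot)$ in the statement (after noting $a^{-c\sqrt\epsilon}$ at top order can be absorbed since $a\leq a(t_0)$, or more carefully left as is — here I would just note $a\lesssim 1$), while the bare $\E^{(2l)}(\phi,\cdot)$ without a small prefactor comes from the linear $16\pi C\Psi$ term which has no smallness.

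The remaining right-hand side pieces of Corollary~\ref{cor:ell-lapse-L} are handled routinely: the term $\epsilon a^{-c\sqrt\epsilon}\|F\|_{H^{2(l-1)}_G}$, when squared, is bounded by $\epsilon^2 a^{-c\sqrt\epsilon}$ times $\sum_{m\le 2(l-1)}\|\nabla^m F\|_{L^2_G}^2$, which by the same Leibniz expansion and a priori estimates is $\lesssim \epsilon^4 a^{-c\sqrt\epsilon}[\E^{(\leq 2(l-1))}(\Sigma,\cdot)+\E^{(\leq 2(l-1))}(\phi,\cdot)]$ — here the extra two powers of $\epsilon$ from $F$ being quadratic (or the factor $16\pi C$ being constant) mean this is comfortably absorbed into the $l\ge 1$ error bucket shown in the statement; and the term $\epsilon^2 a^{4-c\sigma}\sqrt{\E^{(\leq 2l-4)}(\Ric,\cdot)}$ squares to $\epsilon^4 a^{8-2c\sigma}\E^{(\leq 2l-4)}(\Ric,\cdot)\lesssim \epsilon^2 a^{8-c\sigma}\E^{(\leq 2l-3)}(\Ric,\cdot)$, matching one half of the displayed curvature error. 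Finally I would collect terms, noting that $\E^{(\leq 2(l-1))}\lesssim \sum_m\E^{(2m)}$ by \eqref{eq:drop-odd-en} and that all $a^{-c\sqrt\epsilon}$-weighted lower-order energies merge into the single ``not present for $l=0$'' bracket, and update $c$ as needed. The main obstacle I anticipate is bookkeeping: making sure that the derivative counts stay within the ranges where Lemma~\ref{lem:Sobolev-norm-equivalence-improved} and the a priori estimates of Lemma~\ref{lem:AP} apply (here $l\le 9$ forces top order $2l\le 18$ for $\Sigma,\Psi$ and $2l+2\le 20$ for the lapse, and the $C^0_G$-factors sit at order $\le 2l-10\le 8$, safely covered), and verifying that no term secretly loses the $\epsilon^2$ smallness in front of $\E^{(2l)}(\Sigma,\cdot)$ or introduces a large prefactor in front of a curvature energy beyond what is claimed.
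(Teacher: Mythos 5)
Your overall strategy matches the paper's: start from Corollary~\ref{cor:ell-lapse-L}, expand $\Lap^l F$, and convert the Sobolev norms of $F$ into energies. But there is a genuine gap in how you treat the top-order product terms, and it is not cosmetic.

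For the leading piece of $\Lap^l(\langle\Sigma,\Sigma\rangle_G)$, namely $\langle\Lap^l\Sigma,\Sigma\rangle_G$, the low-order factor is $\Sigma$ at order zero. You invoke \eqref{eq:APmidSigma} from Lemma~\ref{lem:AP}, which only gives $\|\Sigma\|_{C^{12}_G}\lesssim\epsilon a^{-c\sqrt\epsilon}$, producing $\epsilon^2 a^{-c\sqrt\epsilon}\E^{(2l)}(\Sigma,\cdot)$. The paper instead uses the \emph{strong} $C^0_G$ estimate \eqref{eq:APSigma} from Lemma~\ref{lem:APzero}, which gives $\|\Sigma\|_{C^0_G}\lesssim\epsilon$ with no time-divergent factor, and likewise $\|\Psi+C\|_{C^0_G}\lesssim 1$ from \eqref{eq:APPsi}. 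That distinction is precisely what yields the clean $\epsilon^2\E^{(2l)}(\Sigma,\cdot)$ and $\E^{(2l)}(\phi,\cdot)$ in the statement, with the $a^{-c\sqrt\epsilon}$-weighted errors confined to orders $\leq 2(l-1)$. Your proposed repair — ``$a^{-c\sqrt\epsilon}$ at top order can be absorbed since $a\leq a(t_0)$'' — has the inequality the wrong way around: since $a\to 0$ toward the Big Bang, $a^{-c\sqrt\epsilon}\to\infty$, so the factor is large, not small, and cannot be absorbed into the implicit constant (which must be uniform in $t$). This matters downstream: in the core Gronwall hierarchy (e.g., in Lemma~\ref{lem:en-est-Sigma}, where $a^{-3}\sqrt{a^8\E^{(L+2)}(N,\cdot)}\sqrt{\E^{(L)}(\Sigma,\cdot)}$ is estimated), a top-order $a^{-c\sqrt\epsilon}$ would upgrade borderline coefficients from $\epsilon a^{-3}$ to $\epsilon a^{-3-c\sqrt\epsilon/2}$, whose time integral over $(t,t_0]$ yields $a^{-c\sqrt\epsilon/2}$ in the exponent and hence exponential blow-up that the bootstrap cannot close. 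Replacing \eqref{eq:APmidSigma}, \eqref{eq:APmidPsi} with \eqref{eq:APSigma}, \eqref{eq:APPsi} for the zero-order factors, as the paper does, removes the gap; the rest of your proposal (treatment of intermediate-order products via Lemma~\ref{lem:AP} and Lemma~\ref{lem:Sobolev-norm-equivalence-improved}, redistribution via \eqref{eq:ibp-trick}, and handling of the lower-order $\|F\|_{H^{2(l-1)}_G}$ and curvature terms) is in line with the paper.
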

\begin{proof}
Note that, by Corollary \ref{cor:ell-lapse-L}, all that needs to be done is to relate all Sobolev norms of $F$ that occur to the respective energies. Schematically, we have
\[\Lap^{l}F=16\pi(\Lap^l\Psi)(\Psi+C)+2\langle\Lap^l\Sigma,\Sigma\rangle_G+\sum_{I=1}^{2l-1}\left(\nabla^I\Psi\ast\nabla^{2l-I}\Psi+\nabla^I\Sigma\ast\nabla^{2l-I}\Sigma\right)\,.\]
For the first two terms, we can use \eqref{eq:APPsi} and \eqref{eq:APSigma} to bound $\lvert\Sigma\rvert_G$ and $\lvert\Psi+C\rvert$ by $\epsilon$ and $1$ up to constant, respectively. For the remaining terms, we similarly always bound the lower order in $L^\infty_G$ with \eqref{eq:APmidPsi}-\eqref{eq:APmidSigma} and bound the higher order with the energy estimates in Lemma \ref{lem:Sobolev-norm-equivalence-improved}. Further, we can use \eqref{eq:ibp-trick} to redistribute divergent prefactors onto energies of order $l-2$ and lower. 
This already incurs the terms on the right hand side of the claimed estimate, and the lower order norms of $F$ only incur at equivalent or weaker error terms.
\end{proof}

\subsection{Elliptic lapse estimates with $\Ltilde$}\label{subsec:lapse-Ltilde}

While the estimates in the previous subsection are useful at high orders, they are not enough to close the bootstrap assumptions for $N$. This can be achieved by deriving estimates in terms of $\Ltilde$ -- however, due to the explicit presence of Ricci terms in this version of the lapse equation, we use this to bound $N$ at \change{lower }orders. Since the arguments are largely identical to the ones above, we only sketch the proofs.

\begin{remark}\label{rem:L-Ltilde-bridge}
Note that, when replacing $f$ by $\tilde{f}$ and $F$ by $\tilde{F}$ in Remark \ref{ass:lapse}, the same statements hold for a suitable constant $K$. In fact, the bootstrap assumptions on $\Ric[G]$ and $\nabla\phi$ even imply $\|\tilde{F}\|_{C^1_G}\lesssim \epsilon a^{4-c\sigma}$, noting
\[\left\lvert R[G]+\frac23\right\rvert_G\leq\lvert G^{-1}\rvert_G\left\lvert\Ric[G]+\frac29G\right\rvert_G\lesssim \left\lvert\Ric[G]+\frac29G\right\rvert_G\]
and
\[\lvert\nabla R[G]\rvert_G=\left\lvert\nabla\left(R[G]+\frac23\right)\right\rvert_G\lesssim \left\lvert\nabla\left(\Ric[G]+\frac29G\right)\right\rvert_G\,.\]
\end{remark}

\begin{lemma}\label{lem:ell-lapse-Ltilde}
Any scalar functions $\zeta$ and $Z$ such that
\[\Ltilde \zeta=Z\]
holds satisfy the estimate
\begin{align*}
a^4\|\Lap\zeta\|_{L^2_G}+a^2\|\nabla \zeta\|_{L^2_G}+\|\zeta\|_{L^2_G}\lesssim&\,\|Z\|_{L^2_G}\,.
\end{align*}
\end{lemma}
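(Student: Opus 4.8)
The plan is to run the proof of Lemma~\ref{lem:ell-lapse-1} essentially verbatim, with $\L$, $f$, $F$ replaced throughout by $\Ltilde$, $\tilde f$, $\tilde F$, using Remark~\ref{rem:L-Ltilde-bridge} in place of Remark~\ref{ass:lapse}: for $\epsilon$ small enough, $\tilde f$ is bounded below by a positive constant independent of $\epsilon$, and $\lvert\nabla\tilde f\rvert_G=\lvert\nabla\tilde F\rvert_G\lesssim\epsilon a^{4-c\sigma}$.

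First I would multiply $\Ltilde\zeta=Z$ by $-\zeta$, integrate over $(M,\vol{G})$, and integrate by parts in the Laplacian term (pulling out the spatially constant factor $a^4$) to obtain $\int_M(a^4\lvert\nabla\zeta\rvert_G^2+\tilde f\lvert\zeta\rvert^2)\,\vol{G}=-\int_M Z\zeta\,\vol{G}\le\|Z\|_{L^2_G}\|\zeta\|_{L^2_G}$. Since $\tilde f$ is bounded below by a positive constant, Young's inequality absorbs the right-hand side and yields $a^4\|\nabla\zeta\|_{L^2_G}^2+\|\zeta\|_{L^2_G}^2\lesssim\|Z\|_{L^2_G}^2$, which already accounts for the $\|\zeta\|_{L^2_G}$ and $a^2\|\nabla\zeta\|_{L^2_G}$ contributions in the claim.

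Next I would multiply $\Ltilde\zeta=Z$ by $a^4\Lap\zeta$ and integrate; integrating by parts in the $\tilde f\zeta$ term produces the favourably signed term $\int_M a^4\tilde f\lvert\nabla\zeta\rvert_G^2\,\vol{G}$ together with the cross term $\int_M a^4\zeta\langle\nabla\tilde f,\nabla\zeta\rangle_G\,\vol{G}$. I would bound the right-hand side by $\tfrac12\|Z\|_{L^2_G}^2+\tfrac12 a^8\|\Lap\zeta\|_{L^2_G}^2$, absorbing the second term into the left, and bound the cross term by a weighted Young inequality via $a^2\|\nabla\tilde f\|_{C^0_G}(\tfrac12\|\zeta\|_{L^2_G}^2+\tfrac{a^4}2\|\nabla\zeta\|_{L^2_G}^2)\lesssim\epsilon a^{6-c\sigma}(\|\zeta\|_{L^2_G}^2+a^4\|\nabla\zeta\|_{L^2_G}^2)$, which is controlled by the output of the first step (recall that $a^{6-c\sigma}$ is bounded on the existence interval). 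Combining the two steps and taking square roots gives $a^4\|\Lap\zeta\|_{L^2_G}+a^2\|\nabla\zeta\|_{L^2_G}+\|\zeta\|_{L^2_G}\lesssim\|Z\|_{L^2_G}$.

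There is in fact no genuine obstacle: in contrast to $F$, the inhomogeneity $\tilde F$ (and hence $\tilde f$ minus a constant) not only stays small but actually decays toward the Big Bang like $\epsilon a^{4-c\sigma}$, so the $\Ric[G]$- and $\nabla\phi$-dependence entering through $\tilde F$ is more benign than the $\Sigma$- and $\Psi$-dependence treated in Lemma~\ref{lem:ell-lapse-1}; the only thing to verify carefully is the sign and power-of-$a$ bookkeeping in the two integration-by-parts identities above.
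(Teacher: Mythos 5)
Your proof is correct and follows the same route as the paper: the paper's proof of Lemma~\ref{lem:ell-lapse-Ltilde} simply invokes that the proof of Lemma~\ref{lem:ell-lapse-1} carries over verbatim once $f,F$ are replaced by $\tilde f,\tilde F$ and Remark~\ref{ass:lapse} by Remark~\ref{rem:L-Ltilde-bridge}, which is exactly what you do (with the small bonus that $\|\nabla\tilde f\|_{C^0_G}\lesssim\epsilon a^{4-c\sigma}$ makes the cross-term estimate even more comfortable than the $\epsilon a^{-c\sqrt{\epsilon}}$ bound used for $\L$).
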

\begin{proof}
The proof follows identically to Lemma \ref{lem:ell-lapse-1} 
 since all tools relating to $f$ and $F$ used in proving these statements were collected in Remark \ref{ass:lapse}, and these extend to $\Ltilde$ by Remark \ref{rem:L-Ltilde-bridge}.
\end{proof}

\begin{corollary}For $l\in\{0,\dots,\change{8}\}$
\label{cor:en-est-lapse-tilde}
\begin{align*}
a^8\E^{(2(l+1))}(N,\cdot)+a^4\E^{(2l+1)}(N,\cdot)+\E^{(2l)}(N,\cdot)\lesssim&\,a^8\E^{(\leq 2l)}(\Ric,\cdot)+{\epsilon}a^{8-c\sqrt{\epsilon}}\|\nabla\phi\|_{H^{2l}_G}^2
\end{align*}
\end{corollary}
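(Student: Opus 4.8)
The plan is to run the proofs of Corollary~\ref{cor:ell-lapse-L} and Corollary~\ref{cor:en-est-lapse} essentially verbatim, but with the operator $\L$ replaced by $\Ltilde$, the source $F$ replaced by $\tilde{F}$, and Lemma~\ref{lem:ell-lapse-1} replaced by Lemma~\ref{lem:ell-lapse-Ltilde}. By Remark~\ref{rem:L-Ltilde-bridge} every structural input survives this replacement: the lower bound on $\tilde f-6\pi C^2$, the bound $\|\nabla\tilde F\|_{C^0_G}\lesssim\epsilon a^{-c\sqrt\epsilon}$, and in fact the stronger $\|\tilde F\|_{C^1_G}\lesssim\epsilon a^{4-c\sigma}$. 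Concretely, iterating the commutator identity \eqref{eq:[Ltilde,Lap]} gives
\[\Ltilde\Lap^l N=(N+1)\Lap^l\tilde F+\sum_{I=1}^{2l-1}\nabla^I\tilde F\ast\nabla^{2l-I}N\,,\]
and applying Lemma~\ref{lem:ell-lapse-Ltilde} with $\zeta=\Lap^l N$, Lemma~\ref{lem:lapse-maxmin} for $N+1$, Lemma~\ref{lem:Sobolev-norm-equivalence-improved} (with $l$ shifted down by one) together with \eqref{eq:BsN} for the mixed terms, and redistributing divergent prefactors onto strictly lower-order $N$-energies via \eqref{eq:ibp-trick} exactly as in Corollary~\ref{cor:ell-lapse-L}, a finite induction over $l\in\{0,\dots,8\}$ yields
\[a^4\|\Lap^{l+1}N\|_{L^2_G}+a^2\|\nabla\Lap^lN\|_{L^2_G}+\|\Lap^lN\|_{L^2_G}\lesssim\|\Lap^l\tilde F\|_{L^2_G}+\bigl(\text{lower-order errors}\bigr)\,.\]

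It then remains to translate the $\tilde F$-norms into the right-hand side of the claim. Since $\tilde F=a^4\bigl(R[G]+\tfrac23-8\pi|\nabla\phi|_G^2\bigr)$ and $R[G]+\tfrac23=(G^{-1})^{ij}\bigl(\Ric[G]_{ij}+\tfrac29G_{ij}\bigr)$, the Leibniz rule writes $\Lap^l\tilde F$ schematically as $a^4$ times a sum of terms built from factors $\nabla^{I}(\Ric[G]+\tfrac29G)$, $\nabla^{I_1+1}\phi\ast\nabla^{I_2+1}\phi$, and derivatives of $G^{-1}$, with total derivative count $2l$. Bounding every lower-order factor in $C_G$ by $\sqrt\epsilon a^{-c\sqrt\epsilon}$ or better via \eqref{eq:APmidG-1}, \eqref{eq:APmidRic}, \eqref{eq:APmidphi}, the highest-order factor in $L^2_G$, and then applying \eqref{eq:Sobolev-norm-equiv-T2l} to $\Ric[G]+\tfrac29G$ (collapsing the induced lower-order $\Ric$-energies into $\E^{(\le 2l)}(\Ric,\cdot)$), one obtains $\|\Lap^l\tilde F\|_{L^2_G}^2\lesssim a^8\E^{(\le 2l)}(\Ric,\cdot)+\epsilon a^{8-c\sqrt\epsilon}\|\nabla\phi\|_{H^{2l}_G}^2$; the lower-order error terms from the first step contribute only weaker copies of these two quantities. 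Squaring the displayed estimate and identifying $\|\Lap^{l+1}N\|_{L^2_G}^2=\E^{(2(l+1))}(N,\cdot)$, $\|\nabla\Lap^lN\|_{L^2_G}^2=\E^{(2l+1)}(N,\cdot)$ and $\|\Lap^lN\|_{L^2_G}^2=\E^{(2l)}(N,\cdot)$ gives the corollary.

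The only genuinely new point relative to the $\L$-case is that $\tilde F$ carries the scalar curvature of $G$, so one must verify that no $\Ric$-energy above order $2l$ is generated — this is precisely what the $a^{-c\sqrt\epsilon}$-weighted lower-order structure of Lemma~\ref{lem:Sobolev-norm-equivalence-improved} guarantees, and it is also what forces the cap $l\le 8$, since we need $\|\Ric[G]+\tfrac29G\|_{H^{2l}_G}$ (hence $2l\le16$) controlled by $\mathcal H$ together with $C^{10}_G$-control from \eqref{eq:APmidRic}. Matching the matter term exactly to $\epsilon a^{8-c\sqrt\epsilon}\|\nabla\phi\|_{H^{2l}_G}^2$, rather than to a Sobolev norm at a shifted order, uses that $|\nabla\phi|_G^2$ already carries two derivatives, so $\Lap^l|\nabla\phi|_G^2$ never places more than $\|\nabla\phi\|_{H^{2l}_G}$ on its highest-order factor. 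I expect this bookkeeping — keeping the curvature energy at the right order and the powers of $a$ consistent — to be the only real obstacle; everything else is a transcription of the $\L$-arguments, and this is why the authors are content to sketch it.
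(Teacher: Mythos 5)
Your proof is correct and follows exactly the route the paper sketches: commute $\Ltilde$ with $\Lap^l$ via \eqref{eq:[Ltilde,Lap]}, apply Lemma~\ref{lem:ell-lapse-Ltilde} to $\zeta=\Lap^lN$, and control $\Lap^l\tilde F$ using the decomposition into curvature and matter pieces together with \eqref{eq:APmidRic}, \eqref{eq:APmidphi} and Lemma~\ref{lem:Sobolev-norm-equivalence-improved}. One small inaccuracy worth flagging: you list ``derivatives of $G^{-1}$'' among the factors appearing in $\Lap^l\tilde F$, but $\nabla G^{-1}=0$, so $R[G]+\tfrac23=(G^{-1})^{ij}(\Ric[G]+\tfrac29G)_{ij}$ is just a covariantly constant contraction of $\Ric[G]+\tfrac29G$ and contributes no extra terms under $\Lap^l$; this slip is harmless since those terms vanish, but it should not be asserted as a new source of error terms.
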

\begin{proof} As in the proof of Corollary \ref{cor:ell-lapse-L}, this follows by commuting $\Ltilde$ with $\Lap^l$ iteratively and applying \eqref{eq:APmidphi} and \eqref{eq:APmidRic} to bound lower order terms within the nonlinearities.
\end{proof}

\section{Big Bang stability: Energy and norm estimates}\label{sec:en-est}

In this section, we derive energy estimates for matter variables and the geometric quantities as well as Sobolev norm estimates for spatial derivatives of $\phi$ and for metric quantities. To derive all of the inequalities in this section beside the elliptic inequality in Lemma \ref{lem:en-est-Sigma-top} \change{and the bound on $\nabla\phi$ in Lemma \ref{lem:norm-est-nablaphi}}, we will use the same basic strategy. Hence, we give a brief overview on the form our integral inequalities are going to take and how we intend to obtain \change{improved energy bounds }from there:

\begin{remark}[Integral inequalities and the Gronwall argument]\label{rem:en-est-strat}
Let $\mathcal{F}_L$ denote an energy or a squared Sobolev(-type) norm at derivative level $L\in2\N$, for example, $\E^{(L)}(\phi,\cdot)$. To derive an integral inequality for $\mathcal{F}_L$, we will take its time derivative, apply the respective commuted evolution equations in the integrand, estimate the resulting terms and integrate that inequality. Schematically, the resulting integral inequalities for $\mathcal{F}_L$ then take the following form:
\begin{align*}
&\,\mathcal{F}_L(t)+\int_t^{t_0}\langle\text{ultimately nonnegative contributions}\rangle\,ds\\
\lesssim&\,\mathcal{F}_L(t_0)+\int_t^{t_0}\left(\epsilon^\frac18a(s)^{-3}+a(s)^{-1-c\sqrt{\epsilon}}\right)\mathcal{F}_L(s)\,ds\\
&\,+\int_t^{t_0}a(s)^{-3}\langle\text{other energies/squared Sobolev norms at same derivative level}\rangle\,ds\\
&\,+\int_t^{t_0}a(s)^{-3-c\sqrt{\epsilon}}\langle\text{energies/squared Sobolev norms at derivative levels up to }L-2\rangle\,ds
\end{align*}

For some inequalities, we will not be able to derive any \changefinal{beneficial }$\epsilon$-prefactors in the penultimate line. \change{For example, for $\E^{(L)}(\Sigma,\cdot)$, linear lapse terms in the evolution of $\Sigma$ incur a term of the form
\[\int_t^{t_0}a(s)^{-3}\cdot a(s)^4\|\Lap^\frac{L}2N\|_{\dot{H}^2_G}\cdot\sqrt{\E^{(L)}(\Sigma,s)}\,ds\]
on the right hand side, which after applying lapse energy estimates creates $\epsilon^{-\frac18}\E^{(L)}(\phi,\cdot)$ on the right}. However, combining the respective inequalities for the core energy mechanism at each derivative level with appropriate \changefinal{$\epsilon$-weights}, this will then combine to an inequality of the following form for a total energy, which we informally denote by $\mathcal{F}_{\text{total},L}$:

\change{\begin{align*}
\numberthis\label{eq:toy-energy}&\,\mathcal{F}_{\text{total},L}(t)+\int_t^{t_0}\langle\text{nonnegative quantity}\rangle\,ds\\
\lesssim&\,\mathcal{F}_{\text{total},L}(t_0)+\int_t^{t_0}\left(\epsilon^\frac18a(s)^{-3}+a(s)^{-1-c\sqrt{\epsilon}}\right)\mathcal{F}_{\text{total},L}(s)\,ds\\
&\,+\underbrace{\int_t^{t_0}\epsilon^\frac18 a(s)^{-3-c\epsilon^\frac18}\langle\text{already improved terms}\rangle\,ds}_{\text{not present for }L=0}\\
&\,+\epsilon^\frac14\mathcal{F}_{\text{total},L}(t)+\underbrace{\sqrt{\epsilon}\cdot\langle\text{small lower order terms}\rangle(t)}_{\text{not present for }L=0}
\end{align*}
In the mentioned example, \changefinal{multiplying $\E^{(L)}(\Sigma,\cdot)$ with the weight $\epsilon^\frac14$, in turn, mitigates the otherwise offending term to $\epsilon^\frac18a(s)^{-3}\E^{(L)}(\phi,s)$, which can be absorbed into the first line.\\}}

\changefinal{Furthermore, $\epsilon^\frac14\mathcal{F}_{\text{total},L}(t)$ in the penultimate line of (6-1) can be absorbed into the left-hand side after updating the implicit constant in \enquote{$\lesssim$}. Applying the Gronwall lemma (see Lemma \ref{lem:gronwall}) and the initial data assumption (which implies $\mathcal{F}_{\text{total},L}(t_0)\lesssim \epsilon^4$) then yields
\begin{align*}
\mathcal{F}_{\text{total},L}(t)\lesssim&\,\biggr(\epsilon^4+\underbrace{\int_t^{t_0}\epsilon^\frac18 a(s)^{-3-c\epsilon^\frac18}\langle\text{already improved terms}\rangle\,ds+\sqrt{\epsilon}\cdot\langle\text{lower order terms}\rangle(t)}_{\text{not present for }L=0}\biggr)\\
&\,\cdot\exp\left(K\cdot \int_t^{t_0}\epsilon^\frac18a(s)^{-3}+a(s)^{-1-c\sqrt{\epsilon}}\,ds\right)
\end{align*}
for some constant $K>0$. \eqref{eq:log-est} and \eqref{eq:a-integrals}, the exponential factor can be bounded by $a^{-c\epsilon^\frac18}$, up to constant and updating $c>0$.
 Hence}, for $L=0$, this implies $\mathcal{F}_{\text{total},0}\lesssim\change{\epsilon^4}a^{-c\epsilon^\frac18}$, and thus leads to improved bounds for base level energy quantities (see Remark \ref{rem:bs-strategy} for the precise scaling hierarchy that will achieve). 
By iterating this argument for $L>0$, the already improved terms will then be bounded (at worst) by $\change{\epsilon^4}a^{-c\epsilon^\frac18}$ , and \eqref{eq:a-integrals} shows that the first line can be bounded by $\change{\epsilon^4}a^{-c\epsilon^\frac18}$ after updating $c$. \change{This allows us }to bound $\mathcal{F}_{\text{total},L}$ by $\change{\epsilon^4}a^{-c\epsilon^\frac18}$ for any $L$ up to and including top order. \\

\change{Finally, we mention that, to control energies at order $L$, we need to consider scaled energies at order $L+1$ within $\mathcal{F}_{total,L}$ -- this arises since the scalar field occurs at first order in the evolution equations for $\RE$ and $\RB$. We avoid losing derivatives by employing the div-curl-estimate in Lemma \ref{lem:en-est-Sigma-top} at order $L+1$, which allows us to control $a^4\E^{(L+1)}(\Sigma,\cdot)$ by quantities at order $L$. This is precisely what generates the non-integral terms in the schematics above. We note that it is crucial that the scalar field occurs at no worse scaling than $a^{-1}$ in \eqref{eq:comeq-RE}-\eqref{eq:comeq-RB} -- else, moving to these time-scaled estimates at order $L+1$ would lose too many powers of $a$ and lead to exponentially divergent terms after applying the Gronwall argument.}
\end{remark}
Recall that $L^2_G$-norm estimates for error terms arising in the Laplace-commuted equations in Lemma \ref{lem:laplace-commuted-eq} are collected in Section \ref{subsec:L2-error-est}. Low order estimates (in particular estimates for $L=2$) could often be improved if needed by more carefully avoiding curvature error terms, but we refrain from doing so where it is not necessary to keep estimates as unified as possible. \delete{Moreover [...]}

\subsection{Integral and energy estimates for the scalar field}\label{subsec:en-SF}

\subsubsection{Scalar field energy estimates}\label{subsubsec:en-SF}

Over the following two lemmas, we prove the core energy estimates to control the matter variables, which are immediately prepared differently at base, intermediate and top order for the total energy estimates in Section \ref{sec:bs-imp}.

\begin{lemma}\label{lem:en-est-SF}
[\change{Even order scalar field energy estimates}] Let $t\in(t_{Boot},t_0]$. Then, one has
\begin{align*}
\numberthis\label{eq:en-est-SF0}&\,\E^{(0)}(\phi,t)+\int_t^{t_0}\dot{a}(s)a(s)^{3}\E^{(1)}(N,s)+\frac{\dot{a}(s)}{a(s)}\E^{(0)}(N,s)\,ds\\
\lesssim&\,\E^{(0)}(\phi,t_0)+\int_t^{t_0}\epsilon a(s)^{-3}\E^{(0)}(\phi,s)+\epsilon a^{-3}\E^{(\change{0})}(\Sigma,s)\,ds\,.
\end{align*}
Further, for any $L\in 2\N,\,2\leq L\leq \change{18}$, the following estimate is satisfied: 
\begin{align*}
\numberthis\label{eq:en-est-SF}&\,\E^{(L)}(\phi,t)+\int_t^{t_0}\dot{a}(s)a(s)^{3}\E^{(L+1)}(N,s)+\frac{\dot{a}(s)}{a(s)}\E^{(L)}(N,s)\,ds\\
\lesssim&\,\E^{(L)}(\phi,t_0)+\int_t^{t_0}\left(\change{\epsilon}a(s)^{-3}+a(s)^{-1-c\sqrt{\epsilon}}\right)\E^{(L)}(\phi,s)\,ds\\
&\,+\int_t^{t_0}\epsilon a(s)^{-3}\E^{(L)}(\Sigma,s)+\epsilon^\frac32 a(s)^{-3}\E^{(L-2)}(\Ric,s)\,ds\\
&\,+\int_t^{t_0}\sqrt{\epsilon}a(s)^{-3-c\sqrt{\epsilon}}\E^{(\leq L-2)}(\phi,s)+\epsilon a(s)^{-3-c\sqrt{\epsilon}}\E^{(\leq L-2)}(\Sigma,s)\,ds\\
&\underbrace{+\int_t^{t_0}\epsilon^\frac32a(s)^{-3-c\sqrt{\epsilon}}\E^{(\leq L-4)}(\Ric,s)\,ds}_{\text{if }L\geq4}
\end{align*}
\end{lemma}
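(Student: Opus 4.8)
The plan is to run the standard energy-method scheme sketched in Remark~\ref{rem:en-est-strat}: differentiate $\E^{(L)}(\phi,\cdot)$ in time, substitute the Laplace-commuted wave equations \eqref{eq:comeq-Psi-even} and \eqref{eq:comeq-nablaphi-even}, identify the good terms (those producing the lapse-energy integrals on the left), estimate everything else, and integrate. Concretely, using \eqref{eq:energydef-ibp} for even $L$ and the differentiation formula \eqref{eq:delt-int}, one computes
\[
-\frac{d}{dt}\E^{(L)}(\phi,\cdot)=-\int_M 2\langle\del_t\Lap^{\frac{L}2}\Psi,\Lap^{\frac{L}2}\Psi\rangle_G+\del_t\!\left(a^4\right)\lvert\nabla\Lap^{\frac{L}2}\phi\rvert_G^2+a^4\del_t\!\left(\lvert\nabla\Lap^{\frac{L}2}\phi\rvert_G^2\right)\vol{G}+\int_M N\tau(\cdots)\vol{G}\,.
\]
Plugging in \eqref{eq:comeq-Psi-even}, the term $a(N+1)\Lap^{\frac L2+1}\phi$ paired with $\Lap^{\frac L2}\Psi$ is integrated by parts against the $a^4$-weighted $\nabla\phi$-term from $\del_t\nabla\Lap^{\frac L2}\phi$ in \eqref{eq:comeq-nablaphi-even}; the divergences cancel up to commutator/junk terms and a term carrying $\nabla N$, exactly the cancellation attributed to \cite{Speck2018}. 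The $-3C\frac{\dot a}a\Lap^{\frac L2}N$ term paired with $\Lap^{\frac L2}\Psi$, together with the $Ca^{-3}\nabla\Lap^{\frac L2}N$ term in \eqref{eq:comeq-nablaphi-even} paired with $a^4\nabla\Lap^{\frac L2}\phi$, and the $a\langle\nabla\Lap^{\frac L2}N,\nabla\phi\rangle_G$ term, must be reorganized so that, after using the lapse equation \eqref{eq:REEqLapse1} (equivalently the commuted \eqref{eq:comeq-lapse}) to trade $\Lap N$ for $N$ and matter terms, one produces $+\dot a a^3\,\|\nabla\Lap^{\frac L2}N\|_{L^2_G}^2$ and $+\frac{\dot a}a\|\Lap^{\frac L2}N\|_{L^2_G}^2$ with favorable sign — these are the $\E^{(L+1)}(N,\cdot)$ and $\E^{(L)}(N,\cdot)$ integrals on the left. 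This is the delicate step, and it is where the explicit form of the Friedman equations \eqref{eq:Friedman}--\eqref{eq:Friedman2} enters, since the coefficient $12\pi C^2a^{-4}+\frac13$ in \eqref{eq:REEqLapse1} and the $-3C\dot a/a$ and $Ca^{-3}$ prefactors have to conspire; this is precisely Lemma~\ref{lem:en-error-cancellation} and its role should be invoked here.

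Having isolated the good terms, the remaining terms are bounded as follows. The $\del_t(a^4)=4\dot a a^3$ contribution is the principal bad term: $4\dot a a^3\lvert\nabla\Lap^{\frac L2}\phi\rvert_G^2=4\dot a a^{-1}\cdot a^4\lvert\nabla\Lap^{\frac L2}\phi\rvert_G^2$, and since $\dot a\simeq a^2$ by \eqref{eq:Friedman} this is $\lesssim a^{-3}\cdot(\text{that part of }\E^{(L)}(\phi,\cdot))$ — but with a constant, not an $\epsilon$, so it feeds the $a^{-3}\E^{(L)}(\phi,s)$ term; here one must be careful that the Friedman structure actually gives the $\epsilon$-gain claimed in \eqref{eq:en-est-SF}, i.e. that the leading $\dot a a^3$ term combines with a cancelling contribution from \eqref{eq:comeq-nablaphi-even} so that only the asserted $\epsilon a^{-3}\E^{(L)}(\phi,s)$ and $a^{-1-c\sqrt\epsilon}\E^{(L)}(\phi,s)$ survive (this again is a Friedman-equation cancellation, cf. Lemma~\ref{lem:en-error-cancellation}). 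The $a\langle\nabla\Lap^{\frac L2}N,\nabla\phi\rangle_G$-type term, after the lapse estimate Corollary~\ref{cor:en-est-lapse} which bounds $a^4\|\nabla\Lap^{\frac L2}N\|_{L^2_G}^2$ by $\epsilon^2\E^{(L)}(\Sigma,\cdot)+\E^{(L)}(\phi,\cdot)$ plus lower-order, yields the $\epsilon a^{-3}\E^{(L)}(\Sigma,s)$ term together with contributions absorbed into $\E^{(L)}(\phi,s)$ (possibly after a Young split). The matter nonlinearity $16\pi Ca^{-4}\Lap^{\frac L2}\Psi$ inside the lapse equation similarly contributes, via the lapse estimate, to $\E^{(L)}(\phi,s)$. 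The borderline and junk error terms $\mathfrak P_{L,Border},\mathfrak P_{L,Junk},\mathfrak Q_{L,Border},\mathfrak Q_{L,Junk}$ are handled by their $L^2_G$-bounds from Section~\ref{subsec:L2-error-est}, Cauchy-Schwarz, and the coercivity/Sobolev estimates of Lemma~\ref{lem:Sobolev-norm-equivalence-improved}: the borderline ones produce $\epsilon a^{-3}$ (or $\epsilon a^{-3-c\sqrt\epsilon}$) times same-order or lower-order energies, and the curvature errors arising in the commutators $[\Lap,\nabla^2]$ produce the $\epsilon^{3/2}a^{-3}\E^{(L-2)}(\Ric,\cdot)$ and (for $L\ge4$) $\epsilon^{3/2}a^{-3-c\sqrt\epsilon}\E^{(\le L-4)}(\Ric,s)$ terms; the $N\tau(\cdots)$ volume-form term from \eqref{eq:delt-int} is of size $\epsilon\frac{\dot a}a\lesssim \epsilon a^{-3}$ times $\E^{(L)}(\phi,\cdot)$ using \eqref{eq:BsN}.

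For the base case $L=0$ (inequality \eqref{eq:en-est-SF0}) the argument is the same but cleaner: there are no commutator/junk terms beyond the $\nabla N$ and volume-form contributions, no curvature errors, and the a priori estimates \eqref{eq:APPsi}, \eqref{eq:APSigma}, \eqref{eq:APmidPsi}, \eqref{eq:APmidphi} directly give the $\epsilon a^{-3}$ prefactors on $\E^{(0)}(\phi,s)$ and $\epsilon a^{-3}\E^{(0)}(\Sigma,s)$, while the good lapse terms $\dot a a^3\E^{(1)}(N,\cdot)$ and $\frac{\dot a}a\E^{(0)}(N,\cdot)$ appear with the correct sign for the reasons above. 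After assembling all bounds, integrate from $t$ to $t_0$, using $\E^{(L)}(\phi,t_0)$ from the initial data. The main obstacle, to reiterate, is the sign-definiteness bookkeeping: one must verify that after substituting the lapse equation the coefficient of $\|\nabla\Lap^{\frac L2}N\|_{L^2_G}^2$ and of $\|\Lap^{\frac L2}N\|_{L^2_G}^2$ is genuinely a positive multiple of $\dot a a^3$ resp. $\dot a/a$ (not merely bounded), and that the leading $\dot a a^3\lvert\nabla\Lap^{\frac L2}\phi\rvert_G^2$ term only contributes at the claimed $\epsilon a^{-3}$ level rather than $a^{-3}$; both rely on the precise Friedman identities \eqref{eq:Friedman}--\eqref{eq:Friedman2} and the explicit coefficient $12\pi C^2 a^{-4}+\tfrac13$ in \eqref{eq:REEqLapse1}, encapsulated in Lemma~\ref{lem:en-error-cancellation}.
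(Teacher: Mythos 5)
Your proposal follows essentially the same route as the paper's proof: differentiate $\E^{(L)}(\phi,\cdot)$, insert the commuted wave equations \eqref{eq:comeq-Psi-even}--\eqref{eq:comeq-nablaphi-even}, cancel the top-order $\Lap^{\frac L2+1}\phi$-term by integration by parts, generate the lapse energy integrals on the left by an identity based on the commuted lapse equation, exploit a Friedman-based cancellation to turn the $\dot a a^3\lvert\nabla\Lap^{\frac L2}\phi\rvert_G^2$ contribution harmless, and sweep all remaining terms into the $L^2_G$ error-term estimates of Section~\ref{subsec:L2-error-est} combined with Lemma~\ref{lem:Sobolev-norm-equivalence-improved}.

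However, your appeal to Lemma~\ref{lem:en-error-cancellation} is misplaced in two places. That lemma is a \emph{pointwise} sign statement about the specific quadratic form built from $\Lap^{\frac L2}\RE$ and $\Lap^{\frac L2}\Sigma$; it is used in Proposition~\ref{prop:en-bs-imp} to combine the left-hand sides of the $\Sigma$- and Bel-Robinson estimates, and has nothing to do with the lapse or the scalar field. The two cancellations you want here are of a different nature. The first one (which produces the lapse energies with good sign and kills the $6C\frac{\dot a}a\Lap^{\frac L2}N\,\Lap^{\frac L2}\Psi$ cross term) comes from \emph{inserting a zero} of the form $-\frac{3}{8\pi}\dot aa^3\int_M\div_G\bigl(\nabla\Lap^{\frac L2}N\cdot\Lap^{\frac L2}N\bigr)\vol{G}$ and then rewriting $\Lap^{\frac L2+1}N$ via \eqref{eq:comeq-lapse}, so that the coefficient $12\pi C^2a^{-4}+\frac13$ supplies the negative-definite $\lvert\Lap^{\frac L2}N\rvert^2$-bulk. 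The second one (which removes the $4\dot aa^3\lvert\nabla\Lap^{\frac L2}\phi\rvert_G^2$ you correctly flag) is a \emph{Young-with-sharp-constant} estimate of the cross term $-2Ca\langle\nabla\Lap^{\frac L2}N,\nabla\Lap^{\frac L2}\phi\rangle_G$ (which indeed arises from \eqref{eq:comeq-nablaphi-even}), where the pointwise Friedman bound \eqref{eq:diff-ineq-Friedman}, $\sqrt{4\pi/3}\,Ca^{-2}\le\dot a$, is what makes the $4\dot a a^3\lvert\nabla\Lap^{\frac L2}\phi\rvert_G^2$ arising in that split \emph{exactly} cancel against the good $-4\dot a a^3\lvert\nabla\Lap^{\frac L2}\phi\rvert_G^2$ coming from $\del_t(a^4)$ in $-\del_t\E^{(L)}(\phi,\cdot)$. (Your bookkeeping also inverts which of these two is the ``good'' term: $-\del_t(a^4)$ contributes $-4\dot a a^3\lvert\nabla\phi\rvert^2$, which is favorable; it absorbs the unfavorable term generated by the Young split of the $\nabla N$ cross term, not the other way round.) Aside from the wrong cross-reference and this sign subtlety, the remainder of your argument — in particular the treatment of $\mathfrak P_L$, $\mathfrak Q_L$, the curvature errors producing $\E^{(L-2)}(\Ric,\cdot)$ and $\E^{(\le L-4)}(\Ric,\cdot)$, the use of Corollary~\ref{cor:en-est-lapse} for the remaining lapse terms, and the cleaner $L=0$ case — matches the paper's proof.
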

\begin{remark}
This proof relies on two mechanisms: Firstly, we use the structure of the wave equation and integration by parts to cancel the highest order scalar field derivative terms. Getting this cancellation is what necessitates scaling the potential term in the scalar field energy by $a^4$. Secondly, we deal with the highest order lapse terms using the elliptic structure of the (Laplace-commuted) lapse equation -- both in an indirect way by invoking the elliptic energy estimate in Corollary \ref{cor:en-est-lapse} as well as by directly inserting \eqref{eq:comeq-lapse} to cancel some ill-behaved terms. While the framework significantly differs from the scalar field energy estimates \cite{Speck2018}, these two core mechanisms also appear there and play similarly crucial roles.
\end{remark}
\begin{proof}
Since the arguments are essentially the same, we will only write down the proof for $L\geq 2$ in full and make short comments throughout the argument which terms do not occur for $L=0$.\\
We use the evolution equations \eqref{eq:comeq-Psi-even} and \eqref{eq:comeq-nablaphi-even} and Lemma \ref{lem:delt-int} to compute, for $L\geq 2$,
\begin{subequations}
\begin{align*}
-\del_t\E^{(L)}(\phi,\cdot)=\int_M&\,-2\del_t\Lap^{\frac{L}2}\Psi\cdot\Lap^{\frac{L}2}\Psi-2a^4\langle\del_t\nabla\Lap^{\frac{L}2}\phi,\nabla\Lap^{\frac{L}2}\phi\rangle_G-a^4(\del_tG^{-1})^{ij}\nabla_i\Lap^{\frac{L}2}\phi\nabla_j\Lap^{\frac{L}2}\phi\\
&\,-3N\frac{\dot{a}}a\left[\lvert\Lap^{\frac{L}2}\Psi\rvert^2+a^4\lvert\nabla\Lap^\frac{L}2\phi\rvert_G^2\right]-4\frac{\dot{a}}a\cdot a^4\lvert\nabla\Lap^\frac{L}2\phi\rvert_G^2\,\vol{G}\\
\numberthis\label{eq:diff-eq-SF1}=\int_M&\, \left(-2a(N+1)\Lap^{\frac{L}2+1}\phi-2a\langle\nabla\Lap^{\frac{L}2}N,\nabla\phi\rangle_G+6C\frac{\dot{a}}a\Lap^{\frac{L}2}N\right)\cdot(\Lap^{\frac{L}2}\Psi)\\
\numberthis\label{eq:diff-eq-SF2}&\,-2a(N+1)\langle\nabla\Lap^{\frac{L}2}\Psi,\nabla\Lap^{\frac{L}2}\phi\rangle_G-2Ca\langle\nabla\Lap^{\frac{L}2}N,\nabla\Lap^{\frac{L}2}\phi\rangle_G\\
\numberthis\label{eq:diff-eq-SF3}&\,-2\left(\mathfrak{P}_{L,Border}+\mathfrak{P}_{L,Junk}\right)\cdot\Lap^{\frac{L}2}\Psi-2a^4\langle\mathfrak{Q}_{L,Border}+\mathfrak{Q}_{L,Junk},\nabla\Lap^{\frac{L}2}\phi\rangle_G\\
\numberthis\label{eq:diff-eq-SF4}&\,+2(N+1)a\cdot(\Sigma^\sharp)^{ij}\nabla_i\Lap^{\frac{L}2}\phi\nabla_j\Lap^{\frac{L}2}\phi-2N\frac{\dot{a}}a\cdot a^4\lvert\nabla\Lap^{\frac{L}2}\phi\rvert_G^2\\
\numberthis\label{eq:diff-eq-SF5}&\,-3N\frac{\dot{a}}a\left[\lvert\Lap^{\frac{L}2}\Psi\rvert^2+a^4\lvert\nabla\Lap^\frac{L}2\phi\rvert_G^2\right]-4\frac{\dot{a}}a\cdot a^4\lvert\nabla\Lap^\frac{L}2\phi\rvert_G^2\,\vol{G}\,.
\end{align*}
\end{subequations}
Note that, for $L=0$, the equivalent equality holds where the borderline and junk terms are replaced by $-2a^4\Psi\langle\nabla N,\nabla\phi\rangle_G$ (to verify this, insert \eqref{eq:REEqWave} and \eqref{eq:REEqNablaPhi} instead of \eqref{eq:comeq-Psi-even} and \eqref{eq:comeq-nablaphi-even}). We now go through \eqref{eq:diff-eq-SF1}-\eqref{eq:diff-eq-SF5} term by term:\\
After integrating by parts, the first term in \eqref{eq:diff-eq-SF1} reads
\begin{equation}\label{eq:SF-cancellation-trick}
\int_M 2a(N+1)\langle\nabla\Lap^{\frac{L}2}\phi,\nabla\Lap^{\frac{L}2}\Psi\rangle_G+2a\langle\nabla N,\nabla\Lap^{\frac{L}2}\phi\rangle_G\cdot\Lap^{\frac{L}2}\Psi\,\vol{G}\,.
\end{equation}
The first term \textbf{precisely} cancels the first term in \eqref{eq:diff-eq-SF2}, while we can use the bootstrap assumption \eqref{eq:BsN} to estimate the other term in \eqref{eq:SF-cancellation-trick} up to constant by
\begin{equation*}
\epsilon a^{3-c\sigma}\cdot a^2\|\nabla\Lap^{\frac{L}2}\phi\|_{L^2_G}\|\Lap^{\frac{L}2}\Psi\|_{L^2_G}\lesssim\epsilon a^{3-c\sigma}\E^{(L)}(\phi,\cdot)\,.
\end{equation*}
For the second term in \eqref{eq:diff-eq-SF1}, we use \eqref{eq:APmidphi} to estimate $\nabla\phi$ and Corollary \ref{cor:en-est-lapse} at order $L$ to deal with the lapse, getting
\begin{align*}
\left\lvert\int_M 2a\langle\nabla\Lap^{\frac{L}2}N,\nabla\phi\rangle_G\cdot \Lap^{\frac{L}2}\Psi\,\vol{G}\right\rvert\lesssim&\,\sqrt{\epsilon}a^{-1-c\sqrt{\epsilon}}\sqrt{a^4\E^{(L+1)}(N,\cdot)}\sqrt{\E^{(L)}(\phi,\cdot)}\\
\lesssim&\,\epsilon a^{-1}\cdot a^4\E^{(L+1)}(N,\cdot)+a^{-1-c\sqrt{\epsilon}}\E^{(L)}(\phi,\cdot)\\
\lesssim&\,\epsilon^3a^{-1}\E^{(L)}(\Sigma,\cdot)+a^{-1-c\sqrt{\epsilon}}\E^{(L)}(\phi,\cdot)\\
&\,+\epsilon^2a^{-1-c\sqrt{\epsilon}}\E^{(\leq L-2)}(\Sigma,\cdot)+\epsilon^2a^{-1-c\sqrt{\epsilon}}\E^{(\leq L-2)}(\phi,\cdot)\\
&\,+\underbrace{\epsilon^2a^{-1-c\sqrt{\epsilon}}\E^{(\leq L-3)}(\Ric,\cdot)}_{\text{not present for }L=2}\,.
\end{align*}
Repeating this argument for $L=0$, the last two lines do not appear.\\

To deal with the remaining term in \eqref{eq:diff-eq-SF1}, we can insert the following zero on the right hand side of the differential equality, where the equality \eqref{eq:SF-en-lapse-trick} holds due to \eqref{eq:comeq-lapse}:
\begin{align*}
0=&\,-\frac3{8\pi}\dot{a}a^3\int_M\div_G\left(\nabla\Lap^{\frac{L}2} N\cdot \Lap^{\frac{L}2}N\right)\,\vol{G}\\
=&\,-\frac3{8\pi}\dot{a}a^3\int_M \Lap^{{\frac{L}2}+1}N\cdot\Lap^\frac{L}2N+\lvert\nabla\Lap^{\frac{L}2}N\rvert_G^2\,\vol{G}\\
=&\numberthis\label{eq:SF-en-lapse-trick}\,\int_M -\frac3{8\pi}\dot{a}a^3\lvert\nabla\Lap^{\frac{L}2}N\rvert_G^2-\frac{3}{8\pi}\left(\frac13\dot{a}a^3+12\pi C^2\frac{\dot{a}}a\right)\lvert\Lap^{\frac{L}2}N\rvert^2\\
&\quad -6 C\frac{\dot{a}}a\Lap^{\frac{L}2}N\cdot \Lap^{\frac{L}2}\Psi-\frac3{8\pi}\dot{a}a^3\left[\mathfrak{N}_{L,Border}+\mathfrak{N}_{L,Junk}\right]\cdot\Lap^{\frac{L}2}N\,\vol{G}
\end{align*}
Note that the first line has a negative sign, so (after absorbing a few terms into it without changing the sign, see namely lapse quantities in \eqref{eq:en-est-SF-Friedman-ineq} and \eqref{eq:diff-ineq-SF1}), we pull it to the left hand side of the differential inequality. Further, the first term in the second line of \eqref{eq:SF-en-lapse-trick} precisely cancels the third term in \eqref{eq:diff-eq-SF1}. That leaves the borderline and junk terms in \eqref{eq:SF-en-lapse-trick}, for which we use \eqref{eq:L2-Border-N} and \eqref{eq:L2-junk-N} (along with $\dot{a}\simeq a^{-2}$ due to \eqref{eq:Friedman}) to get, for $L\geq 4$,
\begin{align*}
&\,\frac{3}{8\pi}\dot{a}a^3\left\lvert\int_M\left[\mathfrak{N}_{L,Border}+\mathfrak{N}_{L,Junk}\right]\cdot\Lap^{\frac{L}2}N\,\vol{G}\right\rvert\\
\lesssim&\,\epsilon a^{-3}\left[\E^{(L)}(\phi,\cdot)+\E^{(L)}(\Sigma,\cdot)+\E^{(L)}(N,\cdot)\right]\\
&+\,\epsilon a^{-3-c\sqrt{\epsilon}}\left[\E^{(\leq L-2)}(\phi,\cdot)+\E^{(\leq L-2)}(\Sigma,\cdot)+\E^{(\leq L-2)}(N,\cdot)\right]\\
&\,+\underbrace{\epsilon^3 a^{-3}\E^{(\leq L-2)}(\Ric,\cdot)+\epsilon^3 a^{-3-c\sqrt{\epsilon}}\E^{(\leq L-3)}(\Ric,\cdot)}_{\text{not present for }L=2}\,.
\end{align*}
Again, the same estimate holds for $L=0$ with the last two lines dropped.

From \eqref{eq:diff-eq-SF1}-\eqref{eq:diff-eq-SF2}, only the term $-2Ca\langle\nabla\Lap^{\frac{L}2}N,\nabla\Lap^\frac{L}2\phi\rangle_G$ still needs to be handled: Using the inequality \eqref{eq:diff-ineq-Friedman} arising from the Friedman equation, we can estimate this by
\begin{equation}\label{eq:en-est-SF-Friedman-ineq}
\int_M2\sqrt{\frac{3}{4\pi}}\dot{a}a^3\lvert\nabla\Lap^{\frac{L}2}N\rvert_G\lvert\nabla\Lap^{\frac{L}2}\phi\rvert_G\,\vol{G}\leq \int_M4\dot{a}a^3\lvert\nabla\Lap^\frac{L}2\phi\rvert_G^2+\frac{3}{16\pi}{\dot{a}}a^3\lvert\nabla\Lap^{\frac{L}2} N\rvert_G^2\,\vol{G}\,.
\end{equation}
Note that the first term precisely cancels the final term in \eqref{eq:diff-eq-SF5}, while the second term can be absorbed into the first term in \eqref{eq:SF-en-lapse-trick} while preserving that term's sign.\\

To bound the error terms in \eqref{eq:diff-eq-SF3}, we insert the borderline term estimates \eqref{eq:L2-Border-P-even} and \eqref{eq:L2-Border-Q-even} as well as the junk term estimates \eqref{eq:L2-junk-P-even} and \eqref{eq:L2-junk-Q-even}, where \eqref{eq:ibp-trick} is used to estimate odd order by even order energies where needed. Furthermore, observe that we can estimate the $\mathfrak{Q}_L$-terms as
\[\left(a^2\|\mathfrak{Q}_L\|_{L^2_G}\right)\cdot \sqrt{\E^{(L)}(\phi,\cdot)},\]
so all borderline and junk terms arising from it, beside the scalar field \change{energies}, are dominated by terms occurring elsewhere.\\
Finally, all terms that remain, namely \eqref{eq:diff-eq-SF4} and the first term in \eqref{eq:diff-eq-SF5}, can be bounded by $\epsilon a^{-3}\E^{(L)}(\phi,\cdot)$ due to the strong base level estimate \eqref{eq:APSigma} and \eqref{eq:BsN}. In summary, and always only keeping the worst terms for each energy and squared norm, this yields for $L\geq 4$:
\begin{subequations}
\begin{align*}
&-\del_t\E^{(L)}(\phi,\cdot)+\dot{a}a^{3}\E^{(L+1)}(N,\cdot)+\frac{\dot{a}}a\E^{(L)}(N,\cdot)\\
\numberthis\label{eq:diff-ineq-SF1}\lesssim&\,\left(\change{\epsilon}a^{-3}+a^{-1-c\sqrt{\epsilon}}\right)\E^{(L)}(\phi,\cdot)+\left(\epsilon a^{-3}+\sqrt{\epsilon}a^{-1-c\sqrt{\epsilon}}\right)\left(a^4\E^{(L+1)}(N,\cdot)+\E^{(L)}(N,\cdot)\right)\\
\numberthis\label{eq:diff-ineq-SF2}&\,+\epsilon a^{-3}\E^{(L)}(\Sigma,\cdot)+\epsilon^\frac32 a^{-3}\E^{(L-2)}(\Ric,\cdot)+
\change{\epsilon}a^{-3-c\sqrt{\epsilon}}\E^{(\leq L-2)}(\phi,\cdot)
\\
\numberthis\label{eq:diff-ineq-SF3}&\,\change{
+\epsilon a^{-3-c\sqrt{\epsilon}}\E^{(\leq L-2)}(\Sigma,\cdot)+\left[\epsilon a^{-3-c\sqrt{\epsilon}}+\sqrt{\epsilon}a^{-1-c\sqrt{\epsilon}}\right]\E^{(\leq L-2)}(N,\cdot)}\\
\numberthis\label{eq:diff-ineq-SF4}&\,\underbrace{
\change{+\epsilon^\frac32a^{-3-c\sqrt{\epsilon}}\E^{(\leq L-4)}(\Ric,\cdot)}}_{\text{not present for }L=2}
\end{align*}
\end{subequations}
The lapse energies in \eqref{eq:diff-ineq-SF1} can now also be absorbed into those on the left hand side of the inequality by updating the implicit constant in \enquote{$\lesssim$}. We can treat the lower order lapse energies in \change{\eqref{eq:diff-ineq-SF3} }with Corollary \ref{cor:en-est-lapse} and see that the resulting terms are all dominated by terms we already have on the right hand side of the inequality above.\\

\change{Inserting }these estimates and integrating over $(t,t_0]$ then yields \eqref{eq:en-est-SF} for $L\geq 4$, and the statement for $L=2$ is obtained completely analogously.

\noindent As mentioned earlier, \eqref{eq:diff-eq-SF3} is replaced by the following term for $L=0$:
\[\int_M -2\change{a}\Psi\langle\nabla N,\nabla\phi\rangle_G\,\vol{G}\lesssim\epsilon a^{-3}\int_Ma^2\lvert\nabla N\rvert_G\cdot a^2\lvert\nabla\phi\rvert_G\,\vol{G}\lesssim \epsilon \dot{a}a^3 \E^{(1)}(N,\cdot)+\epsilon a^{-3}\E^{(0)}(\phi,\cdot)\,,\]
Here, we applied \eqref{eq:APPsi} and \eqref{eq:Friedman}. Both of these terms can be absorbed into terms that are already present, and \eqref{eq:en-est-SF0} then follows by dealing with terms in $\del_t\E^{(0)}(\phi,\cdot)$ as described and integrating. 
\end{proof}

To close the \change{argument}, we will need a scaled scalar field energy estimate at the \change{odd orders $L+1$}, which is not covered by the previous lemma and we hence establish separately:


\begin{lemma}[Odd order scalar field energy estimate]\label{lem:en-est-SF-top}
For $L\in2\N$, $\change{2\leq L\leq \change{18}$}, we have:
\begin{align*}
\numberthis\label{eq:en-est-SF-top}&\,a(t)^4\E^{(L+1)}(\phi,t)+\int_t^{t_0}\left\{\dot{a}(s)a(s)^7\E^{(L+2)}(N,s)+\dot{a}(s)a(s)^3\E^{(L+1)}(N,s)\right\}\,ds\\
\lesssim&\,a(t_0)^4\E^{(L+1)}(\phi,t_0)+\int_t^{t_0}\left(\epsilon a(s)^{-3}+a(s)^{-1-c\sqrt{\epsilon}}\right)\cdot a(s)^4\E^{(L+1)}(\phi,s)\,ds\\
&\,+\int_t^{t_0}\left\{\epsilon a(s)^{-3}\cdot a(s)^4\E^{(L+1)}(\Sigma,s)+\left(\epsilon a(s)^{-3}+a(s)^{-1-c\sqrt{\epsilon}}\right)\E^{(L)}(\phi,s)+\epsilon a(s)^{-3}\E^{(L)}(\Sigma,s)\right.\\
&\,\phantom{+\int_t^{t_0}}+\epsilon a(s)^{-1-c\sqrt{\epsilon}}\cdot a(s)^4\E^{(L-1)}(\Ric,s)+\left(\epsilon^3a^{-3}+\epsilon a^{-1-c\sqrt{\epsilon}}\right)\E^{(L-2)}(\Ric,s)\\
&\,\phantom{+\int_t^{t_0}}+{\epsilon}a(s)^{-3-c\sqrt{\epsilon}}\left(\E^{(\leq L-2)}(\phi,s)+\E^{(\leq L-2)}(\Sigma,s)\right)\\
&\change{\,\phantom{+\int_t^{t_0}}\underbrace{\left.+\left(\epsilon^3a(s)^{-3-c\sqrt{\epsilon}}+\epsilon^2a(s)^{-1-c\sqrt{\epsilon}}\right)\E^{(\leq L-4)}(\Ric,s)\right\}\,ds}_{\text{not present for }L=2}}
\end{align*}
\change{At order $1$, the analogous estimate holds where the last three lines of \eqref{eq:en-est-SF-top} are dropped.}
\end{lemma}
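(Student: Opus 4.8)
The plan is to mirror the structure of the proof of Lemma \ref{lem:en-est-SF}, but now applied to the rescaled energy $a^4\E^{(L+1)}(\phi,\cdot)$ rather than $\E^{(L)}(\phi,\cdot)$, using the odd-order commuted evolution equations \eqref{eq:comeq-Psi-odd} and \eqref{eq:comeq-nablaphi-odd} in place of \eqref{eq:comeq-Psi-even}, \eqref{eq:comeq-nablaphi-even}. First I would compute $-\del_t\left(a^4\E^{(L+1)}(\phi,\cdot)\right)$. There are two sources of terms: the explicit $\del_t(a^4)=4\dot a a^3$ factor acting on $\E^{(L+1)}(\phi,\cdot)$, and the $\del_t$ hitting the energy integrand itself (via Lemma \ref{lem:delt-int} for the volume form and via \eqref{eq:comeq-Psi-odd}, \eqref{eq:comeq-nablaphi-odd}). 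Recall that for odd $l=L+1$ one has, by \eqref{eq:energydef-ibp},
\[\E^{(L+1)}(\phi,\cdot)=\int_M\lvert\nabla\Lap^{\frac L2}\Psi\rvert_G^2+a^4\lvert\Lap^{\frac{L}2+1}\phi\rvert_G^2\,\vol{G}\,,\]
so the scalar field potential term now carries an $a^8$ weight after multiplying by the outer $a^4$. The $4\dot a a^3\E^{(L+1)}(\phi,\cdot)$ term contributes $4\dot a a^3\|\nabla\Lap^{\frac L2}\Psi\|_{L^2_G}^2$ (bounded by $a^{-1-c\sqrt\epsilon}\cdot a^4\E^{(L+1)}(\phi,\cdot)$ since $\dot a\simeq a^{-2}$, giving the second line of \eqref{eq:en-est-SF-top} after noting the $a^4$-weighted part is already in the desired form) plus a positive $4\dot a a^7\|\Lap^{\frac L2+1}\phi\|_{L^2_G}^2$ term that must be cancelled — exactly as in the even-order proof, this cancellation comes from the Friedman inequality \eqref{eq:diff-ineq-Friedman} applied to a cross term $-2Ca\langle\nabla\Lap^{\frac L2}N,\nabla\Lap^{\frac L2+1}\phi\rangle_G\cdot a^4$.

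Next I would carry out the two core mechanisms. Mechanism one: integration by parts to cancel the top scalar-field derivative. The term $-2a(N+1)\nabla_l\Lap^{\frac L2+1}\phi\cdot\nabla^{\sharp l}\Lap^{\frac L2}\Psi$ coming from plugging \eqref{eq:comeq-Psi-odd} into $\del_t\|\nabla\Lap^{\frac L2}\Psi\|_{L^2_G}^2$ integrates by parts against the $a^4(N+1)\nabla\Lap^{\frac L2}\Psi$-term in $\del_t(a^4\|\Lap^{\frac L2+1}\phi\|_{L^2_G}^2)$ produced by \eqref{eq:comeq-nablaphi-odd}, precisely as in \eqref{eq:SF-cancellation-trick}; the leftover $\nabla N$ piece is controlled by \eqref{eq:BsN} and absorbed into $\epsilon a^{-3-c\sqrt\epsilon}\cdot a^4\E^{(L+1)}(\phi,\cdot)$ (here the outer $a^4$ weight makes this even more comfortable). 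Mechanism two: the remaining top-order lapse term $-3C\frac{\dot a}a a^4\nabla_l\Lap^{\frac L2}N$ contracted against $\nabla^{\sharp l}\Lap^{\frac L2}\Psi$ is handled by adding a zero of the form $-\frac{3}{8\pi}\dot a a^7\int_M\div_G(\nabla\Lap^{\frac L2}N\cdot\Lap^{\frac L2}N)\,\vol{G}$ and invoking the odd-order commuted lapse equation \eqref{eq:comeq-lapse-odd} — this generates the favourably-signed $\dot a a^7\E^{(L+2)}(N,\cdot)$ and $\dot a a^3\E^{(L+1)}(N,\cdot)$ terms on the left (after the overall $a^4$ scaling), and produces $\mathfrak N_{L+1,Border/Junk}$ error terms bounded via \eqref{eq:L2-Border-N}, \eqref{eq:L2-junk-N}. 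One subtlety is that the cross term $-2Ca\cdot a^4\langle\nabla\Lap^{\frac L2}N,\nabla\Lap^{\frac L2+1}\phi\rangle_G$ (needed to kill the $4\dot a a^7$ term via \eqref{eq:diff-ineq-Friedman}) produces an extra $\dot a a^7\|\nabla\Lap^{\frac L2}N\|_{L^2_G}^2$ that gets absorbed into the negative $\dot a a^7$ lapse term, just as in \eqref{eq:en-est-SF-Friedman-ineq}.

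Then I would collect the remaining error terms: all other contributions from \eqref{eq:comeq-Psi-odd}, \eqref{eq:comeq-nablaphi-odd} are lower-order-derivative, junk, or borderline terms $\mathfrak P_{L+1,\cdot}$, $\mathfrak Q_{L+1,\cdot}$ whose $L^2_G$-norms are estimated in Section \ref{subsec:L2-error-est}; the $a^4$ weights combine with the decay of these error estimates to produce exactly the terms listed in \eqref{eq:en-est-SF-top}. In particular the $\epsilon a^{-3}\cdot a^4\E^{(L+1)}(\Sigma,\cdot)$ term arises from the term $2(N+1)a\cdot a^4(\Sigma^\sharp)^{ij}\nabla_i\Lap^{\frac L2+1}\phi\nabla_j\Lap^{\frac L2+1}\phi$ analogous to \eqref{eq:diff-eq-SF4}, using \eqref{eq:APSigma}; the terms at order $L$ (both $\E^{(L)}(\phi,\cdot)$ and $\E^{(L)}(\Sigma,\cdot)$) come from the curvature commutators when moving $\Lap^{\frac L2}$ past $\nabla$ and from the first-order (in $\Sigma$, $\Ric$) pieces of the error terms, which the rescaled momentum constraint \eqref{eq:comeq-mom-div} type structure forces to appear at one order lower after losing an $a^4$; and the $\E^{(L-1)}(\Ric,\cdot)$ and $\E^{(L-2)}(\Ric,\cdot)$ terms come from the $\Ric[G]$-valued commutator errors via \eqref{eq:Sobolev-norm-equiv-nablazeta2l+1} and the lapse energy estimates in Corollary \ref{cor:en-est-lapse}. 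The remaining lower-order lapse energies produced along the way are dominated by Corollary \ref{cor:en-est-lapse} applied at order $\leq L$, exactly as at the end of the even-order proof. Integrating over $(t,t_0]$ and using $\dot a\simeq a^{-2}$ gives \eqref{eq:en-est-SF-top}. For the order-$1$ case one runs the same computation with $L=0$: here \eqref{eq:comeq-nablaphi-odd} and \eqref{eq:comeq-Psi-odd} are used with $L=0$ (the lemma statement explicitly permits $L=0$ for \eqref{eq:comeq-nablaphi-odd}), there are no $\Ric$-commutator errors since there are no $\Lap$'s to commute past curvature, so the last three error lines of \eqref{eq:en-est-SF-top} drop. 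The main obstacle I expect is bookkeeping: keeping track of which powers of $a$ each error term carries after the $a^4$ rescaling and verifying that none of them is worse than $a^{-3-c\sqrt\epsilon}$ times the appropriate lower-order energy (so that the Gronwall scheme of Remark \ref{rem:en-est-strat} still closes) — in particular one must check that the derivative-losing scalar-field terms genuinely land at order $L$ and not $L+1$, which is exactly the point of the $a^4$-weighting and of reading off \eqref{eq:energydef-ibp} carefully.
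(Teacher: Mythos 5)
Your approach is exactly the paper's: mirror the even-order proof with the odd-order commuted evolution equations, insert a divergence-zero to absorb the top-order lapse term via the (odd-order) commuted lapse equation, and use \eqref{eq:diff-ineq-Friedman} to cancel the scalar-field cross term. There is, however, one slip that would prevent the calculation from closing as written. The zero you propose, $-\frac{3}{8\pi}\dot a a^7\int_M\div_G\!\left(\nabla\Lap^{\frac{L}2}N\cdot\Lap^{\frac{L}2}N\right)\vol{G}$, expands to $-\frac{3}{8\pi}\dot a a^7\int_M\!\left[\lvert\nabla\Lap^{\frac{L}2}N\rvert_G^2+\Lap^{\frac{L}2+1}N\cdot\Lap^{\frac{L}2}N\right]\vol{G}$; substituting the \emph{even}-order lapse equation \eqref{eq:comeq-lapse} for $\Lap^{\frac{L}2+1}N$ there produces cross terms of the form $\Lap^{\frac{L}2}N\cdot\Lap^{\frac{L}2}\Psi$. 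That is the wrong cancellation: the borderline term coming from \eqref{eq:comeq-Psi-odd} is $-3C\frac{\dot a}a\nabla_l\Lap^{\frac{L}2}N$ contracted against $\nabla^{\sharp l}\Lap^{\frac{L}2}\Psi$, so you need the gradient-level cross term $\langle\nabla\Lap^{\frac{L}2}N,\nabla\Lap^{\frac{L}2}\Psi\rangle_G$. The correct insertion is $-\frac{3}{8\pi}\dot a a^7\int_M\div_G\!\left(\nabla\Lap^{\frac{L}2}N\cdot\Lap^{\frac{L}2+1}N\right)\vol{G}=-\frac{3}{8\pi}\dot a a^7\int_M\!\left[\lvert\Lap^{\frac{L}2+1}N\rvert^2+\langle\nabla\Lap^{\frac{L}2}N,\nabla\Lap^{\frac{L}2+1}N\rangle_G\right]\vol{G}$. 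Only then does the odd-order lapse equation \eqref{eq:comeq-lapse-odd} (which you already cite as the tool to invoke) actually apply to the $\nabla\Lap^{\frac{L}2+1}N$ that appears, producing the required $-6C\frac{\dot a}a\, a^4\langle\nabla\Lap^{\frac{L}2}N,\nabla\Lap^{\frac{L}2}\Psi\rangle_G$ plus the favourably-signed $\dot a a^7\E^{(L+2)}(N,\cdot)$ and $\dot a a^3\E^{(L+1)}(N,\cdot)$ on the left-hand side. Since you explicitly name \eqref{eq:comeq-lapse-odd} as the equation you invoke, this reads as a transcription slip rather than a conceptual error; with the corrected zero, your proof coincides with the paper's.
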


\begin{proof}
\change{These estimates follow }completely analogously to Lemma \ref{lem:en-est-SF}, with the exception that high order lapse terms can now be estimated at order $L+2$ due to the scalar field energy being scaled by $a^4$. In particular, we note that to deal with the analogous term to \eqref{eq:diff-eq-SF1}, one now inserts the following zero on the right and applies the commuted lapse equation \eqref{eq:comeq-lapse-odd}:
\begin{align*}
0=&\,-\frac{3}{8\pi}\dot{a}a^7\int_M\div_G\left(\nabla\Lap^{\frac{L}2}N\cdot\Lap^{\frac{L}2+1}N\right)\,\vol{G}\\
=&\,\int_M\left\{-\frac{3}{8\pi}\dot{a}a^7\lvert\Lap^{\frac{L}2+1}N\rvert^2-\frac{3}{8\pi}\left(\frac13\dot{a}a^3+12\pi C^2\frac{\dot{a}}a\right)\cdot a^4\lvert\nabla\Lap^{\frac{L}2}N\rvert_G^2\right.\\
&\,\,\phantom{\int_M}\left.-6C\frac{\dot{a}}a\cdot a^4\langle\nabla\Lap^{\frac{L}2}N,\nabla\Lap^{\frac{L}2}\Psi\rangle_G-\frac{3}{8\pi}\dot{a}a^7\langle\mathfrak{N}_{L+1,Border}+\mathfrak{N}_{L+1,Junk},\nabla\Lap^{\frac{L}2}N\rangle_G\right\}\,\vol{G}
\end{align*}
\change{For $L=0$, the argument is again the same as at higher orders with less complicated junk terms. We briefly highlight some specific junk terms: The term analogous to \eqref{eq:diff-eq-SF3} is now estimated as follows using \eqref{eq:Sobolev-norm-equiv-zetalow}:
\begin{align*}
a^4\cdot \int_M-2a\langle\nabla\Psi\nabla N,\nabla^2\phi\rangle_G\lesssim&\,\epsilon \int_M a^\frac12\lvert \nabla N\rvert_G\cdot a^{\frac12-c\sqrt{\epsilon}}\cdot a^4\lvert \nabla\phi\rvert_G\\
\lesssim&\,\epsilon \dot{a}a^3\E^{(1)}(N,\cdot)+\epsilon a^{1-c\sqrt{\epsilon}}\cdot a^4\E^{(\leq 1)}(\phi,\cdot) 
\end{align*}
Further, note that, by the commutator formula \eqref{eq:[del-t,Lap]zeta} and applying \eqref{eq:APmidphi}, one has
\begin{align*}
\left\lvert\int_Ma^8[\del_t,\Lap]\phi\cdot\Lap\phi\,\vol{G}\right\rvert\lesssim&\,\epsilon a^{5-c\sqrt{\epsilon}}\left(\|\nabla\Sigma\|_{L^2_G}+\|\nabla N\|_{L^2_G}\right)\|\Lap\phi\|_{L^2_G}\\
\lesssim&\,\epsilon a^{-1-c\sqrt{\epsilon}}\left(a^4\E^{(1)}(\phi,\cdot)+a^4\E^{(1)}(\Sigma,\cdot)\right)+\epsilon a^{6-c\sigma}\cdot \dot{a}a^3\E^{(1)}(N,\cdot)\,.
\end{align*}
}
\end{proof}

\subsubsection{\change{Sobolev norm estimate for $\nabla\phi$}}\label{subsubsec:int-nabla-phi}

\change{To improve the bootstrap assumptions on $\nabla\phi$, we will need sharper bounds than those on $a^4\|\nabla\phi\|_{H^L}^2$ that will follow from bounds on $\E^{(L)}(\phi,\cdot)$:

\begin{lemma}\label{lem:norm-est-nablaphi}
Let $l\in(t_{Boot},t_0]$. Then, for $l\in\Z_+$, $l\leq 17$, the following estimate holds:
\begin{align*}
\|\nabla\phi\|_{H^l_G(\Sigma_t)}\lesssim (1+\epsilon a(t)^{-c\sqrt{\epsilon}})\|\Sigma\|_{H^{l+1}_G(\Sigma_t)}+\epsilon a(t)^{-c\sqrt{\epsilon}}\|\Psi\|_{H^{l}_G(\Sigma_t)}
\end{align*}
\end{lemma}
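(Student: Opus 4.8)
The estimate controls $\|\nabla\phi\|_{H^l_G}$ by $\Sigma$ at one higher order and $\Psi$ at the same order, up to slightly divergent errors. The natural tool is the momentum constraint in its rescaled form \eqref{eq:REEqMom}, namely $\nabla^{\sharp m}\Sigma_{ml}=-8\pi\nabla_l\phi(\Psi+C)$, which (after dividing by the nowhere-vanishing factor $\Psi+C$, since $\|\Psi\|_{C^0_G}\lesssim\epsilon$ by \eqref{eq:APPsi} so $\Psi+C$ is bounded away from $0$) expresses $\nabla\phi$ algebraically in terms of $\div_G\Sigma$ and $\Psi$. Concretely, $\nabla_l\phi=-\frac{1}{8\pi(\Psi+C)}(\div_G\Sigma)_l$, so at order $0$ one immediately gets $\|\nabla\phi\|_{L^2_G}\lesssim\|\div_G\Sigma\|_{L^2_G}\lesssim\|\Sigma\|_{\dot H^1_G}$, which is the claimed bound with no error term. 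The main work is propagating this to higher derivative orders.

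First I would commute the momentum constraint with $\nabla^j$ ($j\le l$), or equivalently use the Laplace-commuted version \eqref{eq:comeq-mom-div}, to obtain $\nabla^j\div_G\Sigma$ in terms of $(\Psi+C)\nabla^{j+1}\phi$ plus lower-order products of derivatives of $\Psi$, $\Sigma$, $\nabla\phi$, and curvature terms $\nabla^{I}\Ric[G]$ arising from commuting covariant derivatives past the divergence. Solving for $\nabla^{j+1}\phi$ and estimating: the leading term gives $\|\nabla^{j+1}\phi\|_{L^2_G}\lesssim\|\Sigma\|_{\dot H^{j+2}_G}+\ldots$, and the lower-order terms are handled by an induction on $j$, using the a priori $C_G$-bounds from Lemma \ref{lem:AP} — in particular \eqref{eq:APmidphi}, \eqref{eq:APmidSigma}, \eqref{eq:APmidRic}, \eqref{eq:APPsi} — to bound the factors that appear in $C^{\le 12}_G$ (or lower) by $\epsilon a^{-c\sqrt\epsilon}$ or better, so that each product of the form $(\text{low-order }C_G\text{ factor})\cdot(\text{high-order }L^2_G\text{ factor})$ contributes at most $\epsilon a^{-c\sqrt\epsilon}$ times an $H_G$-norm of $\Sigma$, $\Psi$, $\nabla\phi$, or $\Ric[G]$. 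The Ricci terms that appear should be reabsorbed using $\|\Ric[G]+\tfrac29 G\|_{H^{16}_G}\lesssim\mathcal{H}$ (or via local-coordinate control of Christoffel symbols as in Remark \ref{rem:relation-metric-Chr}); since $l\le 17$ and the curvature enters at order $\le l-1\le 16$, this is within range. Collecting all orders $j=0,\dots,l-1$ and summing the seminorms, one obtains $\|\nabla\phi\|_{H^l_G}\lesssim(1+\epsilon a^{-c\sqrt\epsilon})\|\Sigma\|_{H^{l+1}_G}+\epsilon a^{-c\sqrt\epsilon}\|\Psi\|_{H^l_G}+\epsilon a^{-c\sqrt\epsilon}\|\nabla\phi\|_{H^{l-1}_G}$, and the last term is absorbed either by induction on $l$ or by choosing $\epsilon$ small so it can be moved to the left-hand side.

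\textbf{Main obstacle.} The delicate point is that dividing by $\Psi+C$ introduces derivatives of $(\Psi+C)^{-1}$, which by the chain rule are polynomial expressions in $\nabla\Psi,\dots,\nabla^j\Psi$ divided by powers of $\Psi+C$; these must be controlled so that no high-order derivative of $\Psi$ appears with a bad power of $a$ on the leading term. Since $\Psi+C$ is bounded below by a positive constant and $\|\Psi\|_{C^{13}_G}\lesssim\epsilon a^{-c\sqrt\epsilon}$ by \eqref{eq:APmidPsi}, the dangerous high-order $\Psi$-derivatives only ever multiply low-order ($C^{\le 12}_G$-controlled) factors of $\Sigma$, so they come with a harmless $\epsilon a^{-c\sqrt\epsilon}$ and land in the $\|\Psi\|_{H^l_G}$ term of the claimed estimate; one just has to track the combinatorics of the Leibniz/Faà di Bruno expansion carefully to confirm that the leading $\|\Sigma\|_{H^{l+1}_G}$ term carries only the prefactor $(1+\epsilon a^{-c\sqrt\epsilon})$ and not an extra divergent factor. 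The bookkeeping of which terms are genuinely top order versus which are absorbable lower-order "junk" is the same kind of schematic analysis used throughout Section \ref{sec:en-est}, so I would organize it exactly as there, treating the commutator and curvature contributions as black-box error terms bounded in $L^2_G$.
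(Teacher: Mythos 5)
Your starting point is the right one -- rearranging the momentum constraint \eqref{eq:REEqMom} to express $\nabla\phi$ as $-\frac{1}{8\pi(\Psi+C)}\div_G\Sigma$, using $\Psi+C>\frac{C}{2}$ from \eqref{eq:APPsi} -- and your base-level estimate $\|\nabla\phi\|_{L^2_G}\lesssim\|\Sigma\|_{\dot H^1_G}$ is exactly right. You also correctly flag the Fa\`a di Bruno expansion of $\nabla^j(\Psi+C)^{-1}$ as the main bookkeeping to control, and the a priori bounds \eqref{eq:APPsi} and \eqref{eq:APmidPsi} do handle it. The problem is in your step to higher orders: you propose to ``commute the momentum constraint with $\nabla^j$'' and invoke the Laplace-commuted version \eqref{eq:comeq-mom-div}, accepting ``curvature terms $\nabla^I\Ric[G]$ arising from commuting covariant derivatives past the divergence.'' That commutation is not needed, and introducing it produces terms like $\nabla\Lap^{\frac{L}{2}-1}\Ric[G]\ast\Sigma$ (see \eqref{eq:comeq-mom-div}), which would put a curvature energy $\|\Ric[G]+\frac29 G\|_{H^{l-1}_G}$ on the right-hand side. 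That term does not appear in the claimed estimate, so your plan as written proves a strictly weaker statement; bounding the curvature by $\mathcal{H}$ does not repair this, since the lemma is a clean inequality in $\Sigma$ and $\Psi$ alone.

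The fix is to not commute anything: once you write $\nabla_l\phi=-\frac{1}{8\pi(\Psi+C)}(\div_G\Sigma)_l$ as a pointwise identity, you simply apply $\nabla^j$ to both sides and expand by the Leibniz/Fa\`a di Bruno rule. Since $\div_G\Sigma=(G^{-1})^{im}\nabla_i\Sigma_{ml}$ and $\nabla G^{-1}=0$, you have $\nabla^j(\div_G\Sigma)=(G^{-1})^{im}\nabla^j\nabla_i\Sigma_{ml}$, which carries exactly $j+1$ covariant derivatives of $\Sigma$ with no commutator and no curvature. This yields the pointwise bound
\[
\lvert\nabla^j\nabla\phi\rvert_G\lesssim\sum_{I_\Sigma+I_\Psi=j}\lvert\nabla^{I_\Sigma+1}\Sigma\rvert_G\,\lvert\nabla^{I_\Psi}\Psi\rvert_G\,,
\]
(with the $I_\Psi=0$ term contributing the uniform prefactor $\frac{2}{C}$), and integrating over $\Sigma_t$ while distributing the low-order factor into $C^0_G$ (via \eqref{eq:APPsi}, \eqref{eq:APSigma}, \eqref{eq:APmidPsi}, \eqref{eq:APmidSigma}, noting that for $l\leq 17$ one of $I_\Sigma+1$ or $I_\Psi$ always lies within reach of those bounds) gives the stated inequality directly. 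Your ``main obstacle'' paragraph correctly anticipates this accounting; it is the detour through the commuted constraint that needs to be removed.
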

\begin{proof}
By \eqref{eq:APPsi}, $\Psi+C>\frac{C}2$ holds if $\epsilon$ is chosen small enough. Consequently, we can rearrange \eqref{eq:REEqMom} and apply the product rule to obtain
\[\lvert\nabla^l\nabla\phi\rvert_G=\frac{1}{8\pi}\left\lvert\nabla^l\left(\frac{\div_G\Sigma}{\Psi+C}\right)\right\rvert_G\lesssim \sum_{I_\Sigma+I_\Psi=l}\lvert\nabla^{I_\Sigma+1}\Sigma\rvert_G\lvert\nabla^{I_\Psi}\changefinal{(\Psi+C)}\rvert_G\,.\]
The statement then follows by integrating over $\Sigma_t$ and applying \eqref{eq:APPsi} and \eqref{eq:APSigma}.
\end{proof}
}

\subsection{Energy estimates for the Bel-Robinson variables}\label{subsec:en-BR}

In this subsection, we collect the energy estimates for the Bel-Robinson variables:

\begin{lemma}[\change{Bel-Robinson energy estimates}]\label{lem:en-est-BR} Let $t\in(t_{Boot},t_0]$. Then one has
\begin{align*}
\numberthis\label{eq:en-est-BR0}&\,\E^{(0)}(W,t)+\int_t^{t_0}\int_M\left[8\pi C^2a(s)^{-3}(N+1)\langle\Sigma,\RE\rangle_G+6\frac{\dot{a}(s)}{a(s)}\changefinal{(N+1)}\lvert\RE\rvert_G^2\right]\,\vol{G}\,ds\\
\lesssim&\,\E^{(0)}(W,t_0)+\int_t^{t_0}\left({\epsilon}a(s)^{-3}+a(s)^{-1-c\sqrt{\epsilon}}\right)\E^{(0)}(W,s)+\change{\epsilon^{-\frac18}a(s)^{-3}\cdot a(s)^4\E^{(1)}(\phi,s)}\,ds\\
&\,\phantom{\E^{(0)}(W,t_0)}+\int_t^{t_0}a(s)^{-1-c\sqrt{\epsilon}}\E^{(0)}(\phi,s)+\epsilon a(s)^{-3}\E^{(0)}(\Sigma,s)\,ds
\end{align*}
as well as, for $L\in 2\N,\,2\leq L\leq \change{18}$,
\change{\begin{align*}
\numberthis\label{eq:en-est-BR}&\,\E^{(L)}(W,t)+\int_t^{t_0}\int_M\left[8\pi C^2a(s)^{-3}(N+1)\langle\Lap^\frac{L}2\Sigma,\Lap^\frac{L}2\RE\rangle_G+6(N+1)\frac{\dot{a}(s)}{a(s)}\lvert\Lap^\frac{L}2\RE\rvert_G^2\right]\,\vol{G}\,ds\\
&\,\lesssim\E^{(L)}(W,t_0)+\int_t^{t_0}\left(\epsilon^\frac18 a(s)^{-3}+a(s)^{-1-c\sqrt{\epsilon}}\right)\E^{(L)}(W,s)\,ds\\
&\,+\int_t^{t_0}\left\{\epsilon^{-\frac18}a(s)^{-3}\cdot a(s)^4\E^{(L+1)}(\phi,s)+\left(\epsilon^\frac18 a(s)^{-3}+a(s)^{-1}\right)\E^{(L)}(\phi,s)\right.\\
&\,\phantom{\int_t^{t_0}}+\epsilon a(s)^{-3}\E^{(L)}(\Sigma,s)+\epsilon^\frac{7}8a(s)^{-3}\cdot a(s)^4\E^{(L-1)}(\Ric,s)+\epsilon^\frac{31}8 a(s)^{-3}\E^{(\leq L-2)}(\Ric,s)\\
&\,\phantom{\int_t^{t_0}}+\left(\epsilon^\frac{15}8 a(s)^{-3-c\sqrt{\epsilon}}+a(s)^{-1-c\sqrt{\epsilon}}\right)\E^{(\leq L-2)}(\phi,s)\\
&\,\left.\phantom{\int_t^{t_0}}+\epsilon^\frac{15}8 a(s)^{-3-c\sqrt{\epsilon}}\left(\E^{(\leq L-2)}(\Sigma,s)+\E^{(\leq L-2)}(W,s)\right)+\underbrace{\epsilon^\frac{15}8 a(s)^{-3-c\sqrt{\epsilon}}\E^{(\leq L-4)}(\Ric,s)\,ds}_{\text{not present for }L=2}\right\}\,.
\end{align*}}
\end{lemma}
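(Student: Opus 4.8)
\textbf{Proof proposal for Lemma \ref{lem:en-est-BR}.}

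The plan is to follow the general scheme outlined in Remark \ref{rem:en-est-strat}: compute $-\del_t\E^{(L)}(W,\cdot)$ using Lemma \ref{lem:delt-int} together with the Laplace-commuted Bel-Robinson evolution equations \eqref{eq:comeq-RE}-\eqref{eq:comeq-RB}, identify the terms that must be kept explicitly on the left-hand side, estimate the remaining terms, and integrate over $(t,t_0]$. The first step is purely algebraic: differentiating $\E^{(L)}(W,\cdot)=(-1)^L\int_M \langle\RE,\Lap^L\RE\rangle_G+\langle\RB,\Lap^L\RB\rangle_G\,\vol{G}$ and distributing $\Lap^{L/2}$ through both slots via integration by parts, one obtains a bulk term $2\int_M\langle\Lap^{L/2}\RE,\del_t\Lap^{L/2}\RE\rangle_G+\langle\Lap^{L/2}\RB,\del_t\Lap^{L/2}\RB\rangle_G\,\vol{G}$, a volume-form contribution $-N\tau$ times the integrand (which contributes $\simeq \epsilon a^{-3}\E^{(L)}(W,\cdot)$ after \eqref{eq:BsN} and $\tau\simeq a^{-1}$, or rather after using $\tau^2\simeq a^{-6}$ from \eqref{eq:Hubble2}), and commutator terms $[\del_t,\Lap^{L/2}]$ acting on $\RE,\RB$ and $G^{-1}\ast\del_t G\ast\Lap^{L/2}(\cdot)$-type terms, all of which are borderline/junk and estimated via Section \ref{subsec:L2-error-est}.

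The two structurally important cancellations are: (i) the $\curl_G$-terms. From \eqref{eq:comeq-RE} we get $+(N+1)a^{-1}\langle\curl_G\Lap^{L/2}\RB,\Lap^{L/2}\RE\rangle_G$ and from \eqref{eq:comeq-RB} we get $-(N+1)a^{-1}\langle\curl_G\Lap^{L/2}\RE,\Lap^{L/2}\RB\rangle_G$; since $\curl_G$ is (up to lower-order curvature commutators and the $(N+1)$ weight) formally self-adjoint on symmetric tracefree tensors with respect to $\langle\cdot,\cdot\rangle_G\,\vol{G}$, these combine up to an error involving $\nabla N\ast\RE\ast\RB$ and $\Ric[G]\ast\RE\ast\RB$, which is borderline; and (ii) the leading expansion term $3\frac{\dot a}{a}(3-N)$ produces $6\frac{\dot a}{a}\lvert\Lap^{L/2}\RE\rvert_G^2+6\frac{\dot a}{a}\lvert\Lap^{L/2}\RB\rvert_G^2$ with the good (nonnegative, since $\dot a>0$) sign, which is precisely the second bracketed term we keep on the left. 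The $4\pi C^2a^{-3}(N+1)\Lap^{L/2}\Sigma$ term in \eqref{eq:comeq-RE} paired against $\Lap^{L/2}\RE$ gives the first kept-on-the-left term $8\pi C^2a^{-3}(N+1)\langle\Lap^{L/2}\Sigma,\Lap^{L/2}\RE\rangle_G$; this is indefinite but it is exactly the term that will later cancel against the corresponding term in the $\Sigma$-energy estimate (Lemma \ref{lem:en-est-Sigma}) when forming the total energy, so it must be carried along verbatim rather than absorbed.

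The delicate point, and what I expect to be the main obstacle, is the treatment of the matter terms in \eqref{eq:comeq-RE}-\eqref{eq:comeq-RB} that lose one derivative: the term $4\pi a(\Psi+C)(N+1)\nabla^2\Lap^{L/2}\phi$ in \eqref{eq:comeq-RE} and $a^3\epsilonLC[G]\ast\nabla^2\Lap^{L/2}\phi\ast\nabla\phi$ in \eqref{eq:comeq-RB}, together with $-8\pi a(N+1)(\nabla\phi\otimes\nabla\Lap^{L/2}\Psi)$, when paired against $\Lap^{L/2}\RE$ (resp. $\Lap^{L/2}\RB$), produce $\int_M a\cdot\Lap^{L/2}\RE\ast\nabla^2\Lap^{L/2}\phi\,\vol{G}$, i.e. a term at order $L+1$ in the scalar field that is not directly controlled by $\E^{(L)}(\phi,\cdot)$. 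Naively this is $\lesssim a\cdot\sqrt{\E^{(L)}(W,\cdot)}\cdot\|\nabla\Lap^{L/2}\phi\|_{\dot H^1_G}\lesssim a^{-1}\cdot a^4\sqrt{\E^{(L)}(W,\cdot)}\cdot\sqrt{\E^{(L+2)}(\phi,\cdot)}$ after scaling; to avoid the derivative loss one trades a power of $\epsilon$ for a power of $a$ via Young's inequality, writing this as $\lesssim \epsilon^{1/8}a^{-3}\E^{(L)}(W,\cdot)+\epsilon^{-1/8}a^{-3}\cdot a^4\E^{(L+1)}(\phi,\cdot)$ (after using \eqref{eq:ibp-trick} to pass from $\E^{(L+2)}$ to $\E^{(L+1)}$-type scaling, or more precisely recognizing that $\|\nabla^2\Lap^{L/2}\phi\|_{L^2_G}$ pairs into the order-$(L+1)$ scalar field energy $a^4\E^{(L+1)}(\phi,\cdot)$ with the $a^4$ weight). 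This is the origin of the $\epsilon^{-1/8}a^{-3}\cdot a^4\E^{(L+1)}(\phi,s)$ term in the statement, and it is exactly the term that forces the total energy in Section \ref{sec:bs-imp} to include the scaled order-$(L+1)$ scalar field energy and to invoke the degenerate div-curl estimate Lemma \ref{lem:en-est-Sigma-top}; one must check carefully that no worse-than-$a^{-1}$ power of $a$ attaches to the scalar field here, which is guaranteed because $\nabla^2\phi$ enters \eqref{eq:comeq-RE} at scaling $a^{+1}$ and $\RE$ at $a^{-4}$, giving net $a^{-3}$ in unscaled variables. The remaining matter terms — $4\pi(N+1)a^{-3}(\Psi+C)^2\Sigma$, $-4\pi(N+1)\dot a a^3\nabla\phi\nabla\phi$, the $\nabla N\ast\nabla\phi$, $\nabla\Psi\ast\nabla\phi$, $\Sigma\ast\nabla\phi\ast\nabla\phi$, and $\nabla\nabla\phi\ast(\Psi+C)$ pieces, plus the $\del_0(\cdots)G$ trace term which drops against tracefree $\RE$ — are handled by: using \eqref{eq:APPsi} to bound $(\Psi+C)^2\lesssim 1$ and $\Psi+C\lesssim 1$; \eqref{eq:APSigma} for $\lvert\Sigma\rvert_G\lesssim\epsilon$ at lowest order; \eqref{eq:APmidphi}, \eqref{eq:APmidSigma}, \eqref{eq:APmidPsi} for lower-order factors; Lemma \ref{lem:Sobolev-norm-equivalence-improved} to convert higher-order Sobolev norms into energies at the cost of $a^{-c\sqrt\epsilon}$ and curvature energies; the momentum constraint \eqref{eq:comeq-mom-div} to rewrite $\nabla^{L/2+1}\phi$ in terms of $\div_G\Lap^{L/2}\Sigma$ when needed; and the lapse estimates (Corollary \ref{cor:en-est-lapse}) for the $\nabla\Lap^{L/2}N$-terms, the $\dot a a^3\E^{(L+1)}(N,\cdot)$-type gains there being absorbable and nonnegative. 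After assembling, one reads off the claimed right-hand side, keeping only the dominant term for each energy/norm at each derivative level; integrating over $(t,t_0]$ and moving the manifestly nonnegative kept terms to the left gives \eqref{eq:en-est-BR}, with the $L=0$ case following identically upon replacing the commuted equations by \eqref{eq:REEqE}-\eqref{eq:REEqB} directly and noting the borderline/junk error terms collapse to explicit low-order expressions, and the last two lines of the bound disappearing since there are no order-$(\leq L-2)$ or order-$(\leq L-4)$ contributions.
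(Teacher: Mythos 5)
Your argument matches the paper's proof: you identify the kept positive terms, the Young-inequality trade of the derivative-losing $\nabla^2\Lap^{L/2}\phi$-term for $\epsilon^{-1/8}a^{-3}\cdot a^4\E^{(L+1)}(\phi,\cdot)$, and the borderline/junk bookkeeping mechanism. Three small imprecisions worth noting: the curl pair combines via the exact algebraic identity \eqref{eq:div-to-curl} (yielding $\div_G\bigl(\Lap^{L/2}\RE\wedge_G\Lap^{L/2}\RB\bigr)$, so only a $\nabla N$-error arises after integration by parts and no curvature commutators appear, contrary to what you suggest); only the $\RE$ half of the expansion coefficient is pulled to the left, the $-6\frac{\dot a}a\lvert\Lap^{L/2}\RB\rvert_G^2$ piece being merely dropped as nonpositive rather than kept; and the $\epsilon^{7/8}a^{-3}\cdot a^4\E^{(L-1)}(\Ric,\cdot)$ contribution in the statement stems from the top-order commutator errors $\mathfrak{E}_{L,top},\mathfrak{B}_{L,top}$, which your sketch does not explicitly account for.
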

\begin{remark}
We preemptively note that the error terms on the left hand side, once combined with the similar terms on the left hand side in Lemma \ref{lem:en-est-Sigma} and given suitable weights, will turn out to have positive sign, even if they do not have definite sign in isolation.\\

The main idea in deriving this inequality is that we can use the algebraic identity \eqref{eq:div-to-curl} and integration by parts to exploit the Maxwell system that lies at the core of the Bel-Robinson evolution equations. As a result, we avoid having higher order energies of the Bel-Robinson variables on the right hand side of the integral energy inequalities (which would break the bootstrap argument), then only having to deal with scalar field and Ricci energies at the next derivative \change{level. }
\end{remark}
\begin{proof}
We first prove \eqref{eq:en-est-BR}, and then explain how the same ideas lead to the simpler estimate \eqref{eq:en-est-BR0}. To this end, we start out by taking the time derivative of the energy as usual:
\begin{align*}
-\del_t\E^{(L)}(W,\cdot)=&\,\int_M-3N\frac{\dot{a}}a\left[\lvert\Lap^{\frac{L}2}\RE\rvert_G^2+\lvert\Lap^{\frac{L}2}\RB\rvert_G^2\right]-2\left(\langle\del_t\Lap^{\frac{L}2}\RE,\Lap^{\frac{L}2}\RE\rangle_G+\langle\del_t\Lap^{\frac{L}2}\RB,\Lap^{\frac{L}2}\RB\rangle_G\right)\\
&\quad-2(\del_tG^{-1})^{i_1j_1}(G^{-1})^{i_2j_2}\left[\Lap^{\frac{L}2}\RE_{i_1i_2}\Lap^{\frac{L}2}\RE_{j_1j_2}+\Lap^{\frac{L}2}\RB_{i_1i_2}\Lap^{\frac{L}2}\RB_{j_1j_2}\right]\,\vol{G}
\end{align*}
$\RE$ and $\RB$ are symmetric and tracefree, thus symmetrizations become redundant, and any scalar product with a tensor that is pure trace or with an antisymmetric tensor can be dropped.\footnote{Recall the superscript \enquote{$\parallel$} notation for error terms, see Remark \ref{rem:notation-parallel}.} With this in hand, we get, inserting \eqref{eq:comeq-RE} and \eqref{eq:comeq-RB}:

\begin{subequations}
\begin{align}
-\del_t\E^{(L)}(W,\cdot)=&\,\int_M\biggr\{\frac{\dot{a}}a(-6(N+1)+9N)\left(\lvert\Lap^{\frac{L}2}\RE\rvert_G^2+\lvert\Lap^{\frac{L}2}\RB\rvert_G^2\right)\label{eq:en-eq-BR1}\\
&\,\phantom{\int_M}\change{+2(N+1)a^{-1}\left(\langle\curl_G\Lap^{\frac{L}2}\RE,\Lap^{\frac{L}2}\RB\rangle_G-\langle\curl_G\Lap^{\frac{L}2}\RB,\Lap^{\frac{L}2}\RE\rangle_G}\right)\label{eq:en-eq-BR2}\\
&\,\phantom{\int_M}+\change{2a^{-1}\left(\langle\nabla\Lap^\frac{L}2N\wedge_G\RB,\Lap^\frac{L}2\RE\rangle_G-\langle\nabla\Lap^{\frac{L}2}N\wedge_G\RE,\Lap^{\frac{L}2}\RB\rangle_G\right)}\label{eq:en-eq-BR3}\\
&\,\phantom{\int_M}-8\pi C^2a^{-3}(N+1)\langle\Lap^\frac{L}2\Sigma,\Lap^\frac{L}2\RE\rangle_G-8\pi a(\Psi+C)\langle\nabla\Lap^{\frac{L}2}N\nabla\phi,\Lap^\frac{L}2\RE\rangle_G\label{eq:en-eq-BR4}\\
&\,\phantom{\int_M}-8\pi a(\Psi+C)(N+1)\langle\nabla^2\Lap^{\frac{L}2}\phi,\Lap^{\frac{L}2}\RE\rangle_G\label{eq:en-eq-BR5}\\
&\,\phantom{\int_M}+16\pi a(N+1)\langle \nabla\phi\nabla\Lap^{\frac{L}2}\Psi,\Lap^\frac{L}2\RE\rangle_G\label{eq:en-eq-BR6}\\
&\,\phantom{\int_M}+a^3\epsilonLC[G]\ast\nabla\phi\ast\nabla^2\Lap^{\frac{L}2}\phi\ast\Lap^{\frac{L}2}\RB\label{eq:en-eq-BR7}\\
&\,\phantom{\int_M} +(N+1)a^{-3}\Sigma\ast\left(\Lap^{\frac{L}2}\RE\ast\Lap^{\frac{L}2}\RE+\Lap^{\frac{L}2}\RB\ast\Lap^{\frac{L}2}\RB\right)\label{eq:en-eq-BR8}\\
&\,\phantom{\int_M} -2\langle\mathfrak{E}_{L,Border}+\mathfrak{E}_{L,top}+\mathfrak{E}_{L,Junk}^\parallel,\Lap^{\frac{L}2}\RE\rangle_G\label{eq:en-eq-BR9}\\
&\,\phantom{\int_M} -2\langle\mathfrak{B}_{L,Border}+\mathfrak{B}_{L,top}+\mathfrak{B}_{L,Junk}^\parallel,\Lap^{\frac{L}2}\RB\rangle_G\biggr\}\,\vol{G}\label{eq:en-eq-BR10}
\end{align}
\end{subequations}
For \eqref{eq:en-eq-BR1}, we pull $6(N+1){\dot{a}}a^{-1}\lvert\Lap^{\frac{L}2}\RE\rvert_G^2$ to the left. This leaves
\[\int_M-6\frac{\dot{a}}a\lvert\Lap^\frac{L}2\RB\rvert_G^2+3N\frac{\dot{a}}a\lvert\Lap^\frac{L}2\RB\rvert_G^2+9N\frac{\dot{a}}a\lvert\Lap^\frac{L}2\RE\rvert_G^2\,\vol{G},\]
where we can estimate the last two terms up to constant by $\epsilon a^{1-c\sigma}\E^{(L)}(W,\cdot)$ by \eqref{eq:BsN} and can drop the first term since it is nonpositive.\\
Regarding \eqref{eq:en-eq-BR2}, note that \delete{by \eqref{eq:div-to-curl}, }we have
\begin{align*}
\change{a^{-1}\left(\langle\curl_G\Lap^{\frac{L}2}\RE,\Lap^{\frac{L}2}\RB\rangle_G-\right.}&\change{\,\left.\langle\curl_G\Lap^{\frac{L}2}\RB,\Lap^{\frac{L}2}\RE\rangle_G\right)}
\change{=-a^{-1}\div_G\left(\Lap^{\frac{L}2}\RE\wedge_G\Lap^{\frac{L}2}\RB\right).}
\end{align*}
Hence, the absolute value of \eqref{eq:en-eq-BR2}, using \eqref{eq:wedge} for the wedge product and \eqref{eq:BsN}, can be bounded by:
\begin{align*}
\left\lvert\int_M \change{2a^{-1}(N+1)\div_G\left(\Lap^{\frac{L}2}\RE\wedge_G\Lap^{\frac{L}2}\RB}\right)\,\vol{G}\right\rvert=&\left\lvert\int_M\change{2a^{-1}\langle\nabla N,\Lap^{\frac{L}2}\RE\wedge_G\Lap^{\frac{L}2}\RB\rangle_G}\,\vol{G}\right\rvert\\
\lesssim&\int_M a^{-1}\lvert\nabla N\rvert_G\lvert\Lap^{\frac{L}2}\RE\rvert_G\lvert\Lap^{\frac{L}2}\RB\rvert_G\,\vol{G}\\
\lesssim&\,\epsilon a^{3-c\sigma}\E^{(L)}(W,\cdot)
\end{align*}
For \eqref{eq:en-eq-BR3}, we use the pointwise wedge product estimate \eqref{eq:wedge2} and a priori estimates \eqref{eq:APE} and \eqref{eq:APmidB} to bound it as follows:
\begin{align*}
\lvert\eqref{eq:en-eq-BR3}\rvert\leq&\, 2a^{-1}\lvert\nabla\Lap^\frac{L}2N\rvert_G\left(\lvert\RB\rvert_G\cdot\lvert\Lap^\frac{L}2\RE\rvert_G+\lvert\RE\rvert_G\cdot\lvert\Lap^\frac{L}2\RB\rvert_G\right)\\
\lesssim&\,\epsilon a^{-3}\sqrt{a^4\E^{(L+1)}(N,\cdot)}\sqrt{\E^{(L)}(W,\cdot)}\\
\lesssim&\,\epsilon a^{-3}\left(\E^{(L)}(W,\cdot)+a^4\E^{(L+1)}(N,\cdot)\right)
\end{align*}
We pull the first term of \eqref{eq:en-eq-BR4} to the left as well, and estimate the second using the strong $C_G$-norm estimates \eqref{eq:APPsi} and \eqref{eq:APmidphi} by
\[\sqrt{\epsilon}a^{-1-c\sqrt{\epsilon}}\sqrt{a^4\E^{(L+1)}(N,\cdot)}\sqrt{\E^{(L)}(W,\cdot)}\lesssim a^{-1-c\sqrt{\epsilon}}\E^{(L)}(W,\cdot)+\epsilon a^{-1}\cdot a^4\E^{(L+1)}(N,\cdot)\,.\]
Moving on to \eqref{eq:en-eq-BR5}-\eqref{eq:en-eq-BR7}, we see [using \eqref{eq:APPsi}, \eqref{eq:APmidphi}, \eqref{eq:Sobolev-norm-equiv-zetalow} with $\zeta=\Lap^\frac{L}2\phi$ and \eqref{eq:ibp-trick}]:
\begin{align*}
\change{\lvert\eqref{eq:en-eq-BR5}\rvert\lesssim}&\change{\,\left(a^{-3}\sqrt{a^4\E^{(L+1)}(\phi,\cdot)}+a^{-1}\sqrt{\E^{(L)}(\phi,\cdot)}+a^{-1-c\sqrt{\epsilon}}\sqrt{\E^{(L-2)}(\phi,\cdot)}\right)\sqrt{\E^{(L)}(W,\cdot)}\\
\lesssim&\,\left(\epsilon^\frac18a^{-3}+a^{-1-c\sqrt{\epsilon}}\right)\E^{(L)}(W,\cdot)+\epsilon^{-\frac18}a^{-3}\cdot a^4\E^{(L+1)}(\phi,\cdot)+a^{-1}\E^{(\leq L)}(\phi,\cdot)\\[0.5em]}
\left\lvert\eqref{eq:en-eq-BR6}\right\rvert\lesssim&\,\sqrt{\epsilon}a^{1-c\sqrt{\epsilon}}\sqrt{\E^{(L+1)}(\phi,\cdot)}\sqrt{\E^{(L)}(W,\cdot)}\\
\lesssim&\,a^{1-c\sqrt{\epsilon}}\E^{(L)}(W,\cdot)+\epsilon a^{1-c\sqrt{\epsilon}}\E^{(L+1)}(\phi,\cdot)\,,\\[0.5em]
\left\lvert\eqref{eq:en-eq-BR7}\right\rvert\lesssim&\,\sqrt{\epsilon}a^{1-c\sqrt{\epsilon}}\cdot a^2\|\nabla^2\Lap^{\frac{L}2}\phi\|_{L^2_G}\cdot\sqrt{\E^{(L)}(W,\cdot)}\\
\lesssim&\,\sqrt{\epsilon}a^{1-c\sqrt{\epsilon}}\left(\sqrt{\E^{(L+1)}(\phi,\cdot)}+a^{-c\sqrt{\epsilon}}\sqrt{\E^{(L-1)}(\phi,\cdot)}\right)\cdot\sqrt{\E^{(L)}(W,\cdot)}\,\\
\lesssim&\,a^{1-c\sqrt{\epsilon}}\E^{(L)}(W,\cdot)+\epsilon a^{1-c\sqrt{\epsilon}}\left[\E^{(L+1)}(\phi,\cdot)+\E^{(L)}(\phi,\cdot)+\E^{(\leq L-2)}(\phi,\cdot)\right]
\end{align*}
We can estimate \eqref{eq:en-eq-BR8} by $\epsilon a^{-3}\E^{(L)}(W,\cdot)$ as usual, and obtain the following in summary:
\begin{align*}
&\,-\del_t\E^{(L)}(W,\cdot)+8\pi C^2a^{-3}\int_M(N+1)\langle \change{\Lap^\frac{L}2\Sigma,\Lap^\frac{L}2\RE}\rangle_G\,\vol{G}+6\frac{\dot{a}}a\int_M(N+1)\lvert\Lap^\frac{L}2\RE\rvert_G^2\,\vol{G}\\
\lesssim&\,\left(\epsilon a^{-3}+a^{-1-c\sqrt{\epsilon}}\right)\E^{(L)}(W,\cdot)+a^{-1}\E^{(L+1)}(\phi,\cdot)+a^{-1}\E^{(L)}(\phi,\cdot)\\
&\,+\epsilon a^{-3}\cdot a^{4}\E^{(L+1)}(N,\cdot)+a^{-1}\E^{(\leq L-2)}(\phi,\cdot)\\
&\,\left[\|\mathfrak{E}_{L,Border}\|_{L^2_G}+\|\mathfrak{E}_{L,top}\|_{L^2_G}+\|\mathfrak{E}_{L,Junk}^\parallel\|_{L^2_G}\right.\\
&\,\left.+\|\mathfrak{B}_{L,Border}\|_{L^2_G}+\|\mathfrak{B}_{L,top}\|_{L^2_G}+\|\mathfrak{B}_{L,Junk}^\parallel\|_{L^2_G}\right]\sqrt{\E^{(L)}(W,\cdot)}
\end{align*}
We can now apply Corollary \ref{cor:en-est-lapse} for $2l=L$ to estimate the lapse energy in the second line (leading to borderline scalar field energy and $\Sigma$-energy contributions as well as junk terms), and insert the borderline (see \eqref{eq:L2-Border-BR}), top (see \eqref{eq:L2-top-E} and \eqref{eq:L2-top-B}) and junk estimates (see \eqref{eq:L2-junk-BR-par}), dealing with the lapse energies there analogously. \change{In particular, the top order curvature terms arise as follows:
\begin{align*}
\|\mathfrak{E}_{L,top}\|_{L^2_G}\sqrt{\E^{(L)}(W,\cdot)}\lesssim&\,\sqrt{\epsilon} a^{-1-c\sqrt{\epsilon}}\sqrt{a^4\E^{(L-1)}(\Ric,\cdot)}\sqrt{\E^{(L)}(W,\cdot)}\\
\lesssim&\,\epsilon^\frac18a^{-1-c\sqrt{\epsilon}}\E^{(L)}(W,\cdot)+\epsilon^\frac{7}8a^{-1}\cdot a^4\E^{(L-1)}(\Ric,\cdot)\\
\|\mathfrak{B}_{L,top}\|\sqrt{\E^{(L)}(W,\cdot)}\lesssim&\,\epsilon a^{-3}\sqrt{a^4\E^{(L-1)}(\Ric,\cdot)}\sqrt{\E^{(L)}(W,\cdot)}\\
\lesssim&\,\epsilon^\frac18 a^{-3}\E^{(L)}(W,\cdot)+\epsilon^\frac{15}8 a^{-3}\cdot a^4\E^{(L-1)}(\Ric,\cdot)
\end{align*}
Hence, both top order curvature terms can be bounded by $\epsilon^\frac78a^{-3}\cdot a^4\E^{(L-1)}(\Ric,\cdot)$.\\
Integrating the inequality yields \eqref{eq:en-est-BR}.\\}

For \eqref{eq:en-est-BR0}, we get applying \eqref{eq:REEqE} and \eqref{eq:REEqB} and again using that $\RE$ and $\RB$ are symmetric and tracefree:
\begin{align*}
-\del_t\E^{(0)}(W,\cdot)=&\,\int_M\biggr\{\frac{\dot{a}}a(-6(N+1)+9N)\left(\lvert\RE\rvert_G^2+\lvert\RB\rvert_G^2\right)\\
&\,\phantom{\int_M}+2(N+1)\left(\langle\curl\RE,\RB\rangle_G-\langle\curl\RB,\RE\rangle_G\right)\\
&\,\phantom{\int_M}+2\left(\langle\nabla N\wedge \RB,\RE\rangle_G-\langle\nabla N\wedge\RE,\RB\rangle_G\right)\\
&\,\phantom{\int_M}+(N+1)a^{-3}\Sigma\ast\left(\RE\ast\RE+\RB\ast\RB\right)\\
&\,\phantom{\int_M}-8\pi a^{-3}(N+1)(\Psi+C)^2\langle\Sigma,\RE\rangle_G\\
&\,\phantom{\int_M}+\left[\dot{a}a^3\nabla\phi\ast\nabla\phi+a(\Psi+C)\cdot\nabla N\ast\nabla\phi\right]\ast\RE\\
&\,\phantom{\int_M}+a(N+1)\left[\nabla\phi\ast\nabla\Psi+\Sigma\ast\nabla\phi\ast\nabla\phi+(\Psi+C)\nabla^2\phi\right]\ast\RE\\
&\,\phantom{\int_M}+(N+1)\epsilonLC[G]\ast\left(a^3\nabla^2\phi\ast\nabla\phi+a^{-1}(\Psi+C)\Sigma\ast\nabla\phi\right)\ast\RB\biggr\}\,\vol{G}
\end{align*}
The first two lines are treated as in the general case. For the third line, we get $\epsilon a^{3-c\sigma}\E^{(0)}(W,\cdot)$ with \eqref{eq:wedge2} and \eqref{eq:BsN}, while the fourth term is bounded by $\epsilon a^{-3}\E^{(0)}(W,\cdot)$ with \eqref{eq:APSigma}. This leaves the surviving matter terms in the final four lines.\\
We pull $\int_M8\pi a^{-3}(N+1)C^2\langle\Sigma,\RE\rangle_G\vol{G}$ to the left as before. For the remaining terms, we can apply a priori estimates \eqref{eq:APPsi}, \eqref{eq:APmidPsi} and \eqref{eq:APmidphi}, the bootstrap assumption \eqref{eq:BsN} and Lemma \ref{lem:lapse-maxmin} for $N$, which yields the following bound up to constant remaining terms in the last three lines:
\begin{align*}
\sqrt{\E^{(0)}(W,\cdot)}\cdot&\left[a^{-1}\cdot a^2\|\nabla^2\phi\|_{L^2_G}+\sqrt{\epsilon}a^{-1-c\sqrt{\epsilon}}\sqrt{\E^{(0)}(\phi,\cdot)}\right.\\
&\left.+\left(\epsilon a^{-3}+\sqrt{\epsilon}a^{-1-c\sqrt{\epsilon}}\right)\sqrt{\E^{(0)}(\Sigma,\cdot)}\right]
\end{align*}
Applying \eqref{eq:Sobolev-norm-equiv-zetalow} to the scalar field norm and then \eqref{eq:ibp-trick}, this leads to \eqref{eq:en-est-BR0} along with the previous observations.
\end{proof}

\subsection{Energy estimates for the second fundamental form}\label{subsec:en-Sigma}

For the energy estimates for $\Sigma$, we again first derive \change{even order }integral estimates:

\begin{lemma}[Energy estimates for the second fundamental form \change{for even orders}]\label{lem:en-est-Sigma} Let $t\in(t_{Boot},t_0]$. Then, one has:
\begin{align*}
\numberthis\label{eq:en-est-Sigma0}&\,\E^{(0)}(\Sigma,t)+2\int_t^{t_0}\int_M\left[a(s)^{-3}(N+1)\langle\RE,\Sigma\rangle_G+\frac{\dot{a}(s)}{a(s)}(N+1)\lvert\Lap^{\frac{L}2}\Sigma\rvert_G^2\right]\,\vol{G}\,ds\\
\lesssim&\,\E^{(0)}(\Sigma,t_0)+\int_t^{t_0}\epsilon^\frac18a(s)^{-3}\E^{(0)}(\Sigma,s)\,ds+\int_t^{t_0}\epsilon^{-\frac18}a(s)^{-3}\E^{(0)}(\phi,s)\,ds\\
\end{align*}
For $L\in 2\N, L\leq \change{18}$, the following holds:
\begin{align*}
\numberthis\label{eq:en-est-Sigma}&\,\E^{(L)}(\Sigma,t)+2\int_t^{t_0}\int_M\left[a(s)^{-3}(N+1)\langle\Lap^\frac{L}2\RE,\Lap^{\frac{L}2}\Sigma\rangle_G+\frac{\dot{a}(s)}{a(s)}(N+1)\lvert\Lap^{\frac{L}2}\Sigma\rvert_G^2\right]\,\vol{G}\,ds\\
&\,\lesssim\E^{(L)}(\Sigma,t_0)+\int_t^{t_0}\epsilon^\frac18a(s)^{-3}\E^{(L)}(\Sigma,s)\,ds\\
&\,+\int_t^{t_0}\Bigr\{\epsilon^{-\frac18}a(s)^{-3}\E^{(L)}(\phi,s)+\epsilon^\frac{15}8a(s)^{5-c\sigma}\E^{(L-1)}(\Ric,s)+\epsilon^2a(s)^{-3}\E^{(L-2)}(\Ric,s)\\
&\,\phantom{\int_t^{t_0}}+\epsilon^{\frac{15}8}a(s)^{-3-c\sqrt{\epsilon}}\E^{(\leq L-2)}(\Sigma,s)+\left(\epsilon^\frac{15}8a(s)^{-3-c\sqrt{\epsilon}}+\epsilon a(s)^{-1-c\sqrt{\epsilon}}\right)\E^{(\leq L-2)}(\phi,s)\\
&\,\phantom{\int_t^{t_0}}+\underbrace{\epsilon^{\frac{15}8}a(s)^{-3-c\sqrt{\epsilon}}\E^{(\leq L-4)}(\Ric,s)}_{\text{not present for }L=2}\Bigr\}\,ds
\end{align*}
\end{lemma}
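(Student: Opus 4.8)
\textbf{Proof strategy for Lemma \ref{lem:en-est-Sigma}.} The plan is to follow the template laid out in Remark \ref{rem:en-est-strat}: differentiate $\E^{(L)}(\Sigma,\cdot)$ in time, insert the Laplace-commuted evolution equation \eqref{eq:comeq-Sigma}, estimate the resulting terms, and integrate over $(t,t_0]$. Using Lemma \ref{lem:delt-int} together with \eqref{eq:comeq-Sigma}, we write
\begin{align*}
-\del_t\E^{(L)}(\Sigma,\cdot)=\int_M&\,-2\langle\del_t\Lap^{\frac{L}2}\Sigma,\Lap^{\frac{L}2}\Sigma\rangle_G-2(\del_tG^{-1})\ast\Lap^{\frac{L}2}\Sigma\ast\Lap^{\frac{L}2}\Sigma+N\tau\lvert\Lap^{\frac{L}2}\Sigma\rvert_G^2\,\vol{G}\\
=\int_M&\,2a\langle\nabla^2\Lap^{\frac{L}2}N,\Lap^{\frac{L}2}\Sigma\rangle_G-2a(N+1)\langle\Lap^{\frac{L}2}\Ric[G],\Lap^{\frac{L}2}\Sigma\rangle_G\\
&\,-2\langle\mathfrak{S}_{L,Border}+\mathfrak{S}_{L,Junk},\Lap^{\frac{L}2}\Sigma\rangle_G+(\del_tG^{-1}\ast\Sigma\ast\Sigma)\text{-type terms}+N\tau\lvert\Lap^{\frac{L}2}\Sigma\rvert_G^2\,\vol{G}\,.
\end{align*}
The two structurally dangerous terms are the leading curvature term $-2a(N+1)\langle\Lap^{\frac{L}2}\Ric[G],\Lap^{\frac{L}2}\Sigma\rangle_G$ and the high-order lapse term $2a\langle\nabla^2\Lap^{\frac{L}2}N,\Lap^{\frac{L}2}\Sigma\rangle_G$. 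For the former, I would substitute the Bel-Robinson form of the Hamiltonian constraint \eqref{eq:comeq-Ham-BR}, which replaces $\Lap^{\frac{L}2}\Ric[G]$ by $a^{-4}\Lap^{\frac{L}2}\RE-\frac{\tau}3a^{-1}\Lap^{\frac{L}2}\Sigma$ plus borderline and junk terms; the $\RE$-term becomes $-8\pi\int_M(N+1)a^{-3}\langle\Lap^{\frac{L}2}\RE,\Lap^{\frac{L}2}\Sigma\rangle_G\,\vol{G}$ (note the sign works out, after tracking the $a^4\cdot a=a^{-3}$ scaling), which is exactly the cross term we move to the left-hand side so that it can be combined — with the correct relative weight — against the matching term in Lemma \ref{lem:en-est-BR}. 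The $-\frac{\tau}3a^{-1}\Lap^{\frac{L}2}\Sigma$ piece contributes $-\frac23a\cdot\frac{\tau}{3}\cdot(-1)\int_M(N+1)\lvert\Lap^{\frac{L}2}\Sigma\rvert_G^2$; since $\tau=-3\dot a/a<0$, this produces the good term $+\frac{\dot a}a\int_M(N+1)\lvert\Lap^{\frac{L}2}\Sigma\rvert_G^2\,\vol{G}$ appearing on the left of \eqref{eq:en-est-Sigma}. The borderline/junk contributions from \eqref{eq:comeq-Ham-BR} are estimated via \eqref{eq:L2-Border-BR-type bounds} (Section \ref{subsec:L2-error-est}), producing the $\epsilon^2a^{-3}\E^{(L-2)}(\Ric,\cdot)$ and lower-order terms in the statement.

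For the high-order lapse term $2a\langle\nabla^2\Lap^{\frac{L}2}N,\Lap^{\frac{L}2}\Sigma\rangle_G$, we cannot integrate by parts to move derivatives onto $\Sigma$ cheaply (that would cost a derivative we do not have on $\Sigma$), so instead I would estimate directly: $\lvert 2a\langle\nabla^2\Lap^{\frac{L}2}N,\Lap^{\frac{L}2}\Sigma\rangle_G\rvert\lesssim a^{-3}\cdot a^4\lvert\nabla^2\Lap^{\frac{L}2}N\rvert_G\cdot\lvert\Lap^{\frac{L}2}\Sigma\rvert_G$, integrate, Cauchy--Schwarz, and apply the elliptic lapse estimate Corollary \ref{cor:en-est-lapse} at order $L$ to bound $a^8\E^{(L+2)}(N,\cdot)$ — hence $a^4\|\nabla^2\Lap^{\frac{L}2}N\|_{L^2_G}^2$ up to the Sobolev equivalence of Lemma \ref{lem:Sobolev-norm-equivalence-improved} — by $\epsilon^2\E^{(L)}(\Sigma,\cdot)+\E^{(L)}(\phi,\cdot)+\ldots$. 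Splitting the product with a weight $\epsilon^{1/8}$ (i.e.\ $xy\le \tfrac{\epsilon^{1/8}}2 x^2+\tfrac{\epsilon^{-1/8}}2 y^2$) turns this into $\epsilon^{1/8}a^{-3}\E^{(L)}(\Sigma,\cdot)$ (absorbable on the left after the global scaling by $\epsilon^{1/4}$ in the total energy) plus $\epsilon^{-1/8}a^{-3}\E^{(L)}(\phi,\cdot)$, which is exactly the $\epsilon^{-1/8}$ term on the right of \eqref{eq:en-est-Sigma}; the lapse energy estimate also feeds in the Ricci terms at order $\le L-3$ and the lower-order scalar-field and $\Sigma$-energies with $a^{-3-c\sqrt\epsilon}$ weights. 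The remaining terms — the $\del_tG^{-1}\ast\Sigma\ast\Sigma$ contribution and $N\tau\lvert\Lap^{\frac L2}\Sigma\rvert_G^2$ — are controlled by $\epsilon a^{-3}\E^{(L)}(\Sigma,\cdot)$ using \eqref{eq:AP-deltG} (i.e.\ $\lvert\del_tG^{-1}\rvert_G\lesssim\epsilon a^{-3}$), the bootstrap assumption \eqref{eq:BsN} on $N$, and $\tau\simeq a^{-3}$; and the error terms $\mathfrak{S}_{L,Border},\mathfrak{S}_{L,Junk}$ are bounded via the $L^2_G$-estimates in Section \ref{subsec:L2-error-est}, yielding the $\epsilon^{15/8}a^{-3-c\sqrt\epsilon}\E^{(\le L-2)}(\phi,\cdot)$, $\epsilon a^{-1-c\sqrt\epsilon}\E^{(\le L-2)}(\phi,\cdot)$ and curvature-at-lower-order terms.

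For the base case $L=0$, I would instead use the sharp $C^0_G$-estimate \eqref{eq:APSigma} wherever possible: the curvature term is handled by inserting the unrescaled constraint \eqref{eq:REEqConstrE} directly, the cross term $-8\pi\int_M(N+1)a^{-3}\langle\RE,\Sigma\rangle_G\,\vol{G}$ again goes to the left (matching \eqref{eq:en-est-BR0}), and the $-\frac{\tau}3$ piece gives the good term; the matter terms $-8\pi a(N+1)\nabla_i\phi\nabla_j\phi$ and the lapse term $-a\nabla_i\nabla_jN$ produce only $\epsilon^{1/8}a^{-3}\E^{(0)}(\Sigma,\cdot)+\epsilon^{-1/8}a^{-3}\E^{(0)}(\phi,\cdot)$ after Cauchy--Schwarz, Corollary \ref{cor:en-est-lapse} at order $0$, and the weight splitting, so no lower-order or curvature terms appear — this is \eqref{eq:en-est-Sigma0}. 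Finally, after collecting everything and integrating $\int_t^{t_0}$, the $\E^{(L+1)}(N,\cdot)$ and $\E^{(L+2)}(N,\cdot)$ lapse energies that were generated are all bounded back in terms of $\E^{(L)}(\phi,\cdot),\E^{(L)}(\Sigma,\cdot)$ and lower orders by Corollary \ref{cor:en-est-lapse}, so they do not appear in the final statement.

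\textbf{Main obstacle.} The delicate point is the bookkeeping of powers of $a$ and of $\epsilon$ in the high-order lapse term: one must verify that invoking Corollary \ref{cor:en-est-lapse} at order $L$ (which controls $a^8\E^{(L+2)}(N,\cdot)$ and $a^4\E^{(L+1)}(N,\cdot)$, not $\E^{(L+2)}(N,\cdot)$ itself) loses exactly two powers of $a$, so that after Cauchy--Schwarz against $a^{-3}\lvert\Lap^{\frac L2}\Sigma\rvert_G$ one lands on $\epsilon^{-1/8}a^{-3}\E^{(L)}(\phi,\cdot)$ — an $a^{-3}$ weight, the borderline (barely non-integrable) rate — rather than something worse. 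Getting the $\epsilon$-powers to match the advertised $\epsilon^{-1/8}$, $\epsilon^{15/8}$, $\epsilon^2$ hierarchy requires tracking how many factors of $\epsilon$ come from the bootstrap assumptions on $N$ (four powers of $\epsilon^2=\epsilon$ in the various lapse-energy initial-data smallness, via \eqref{eq:init-ass-en}) versus from the a priori estimates in Lemma \ref{lem:AP}, and then choosing the Young-inequality weight to land on the claimed exponents; this is exactly the "intricately constructed total energy" balancing referred to in the introduction, and it is where sign-tracking of the $\RE$-cross-term (so it has the right sign to pair against the one in Lemma \ref{lem:en-est-BR} under the eventual total-energy weights) must be done with care.
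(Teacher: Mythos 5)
Your proposal follows the same strategy as the paper: differentiate $\E^{(L)}(\Sigma,\cdot)$, insert the Laplace-commuted $\Sigma$-evolution \eqref{eq:comeq-Sigma}, substitute the Bel-Robinson constraint \eqref{eq:comeq-Ham-BR} to trade the high-order Ricci term for the $\RE$-cross-term and the good $\tau$-term (both moved to the left), estimate the high-order lapse term via \eqref{eq:Sobolev-norm-equiv-zetalow}, Corollary \ref{cor:en-est-lapse}, and a Young inequality with $\epsilon^{1/8}$-weight, and control the remaining error terms from Section \ref{subsec:L2-error-est}. The reasoning and the emphasis on the sign-tracking of the cross term (to be reconciled later with Lemma \ref{lem:en-error-cancellation}) are correct.

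A few small slips that a referee would flag, none of which affect the validity of the argument: the cross term from \eqref{eq:comeq-Ham-BR} has coefficient $-2(N+1)a^{-3}$, not $-8\pi(N+1)a^{-3}$ — the $8\pi$ is spurious (the $8\pi C^2$ in the statement of Lemma \ref{lem:en-est-BR} only enters after scaling by $4\pi C^2$ in the total energy); the parenthetical "$a^4\cdot a=a^{-3}$" should read "$a\cdot a^{-4}=a^{-3}$"; and the coefficient of the good $\tau$-piece on the left should be $2\frac{\dot a}{a}(N+1)$ (as in the lemma's left-hand side), since $-2a(N+1)\cdot(-\frac{\tau}{3a})=\frac{2\tau}{3}(N+1)=-2\frac{\dot a}{a}(N+1)$, not $\frac{2a\tau}{9}$ as your intermediate factors would give.
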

\begin{remark}
The main hurdle of dealing with the second fundamental form is that a high order curvature term occurs in the evolution equation. It is to precisely this end that the Bel-Robinson variables needed to be introduced, since \eqref{eq:comeq-Ham-BR} is what facilitates controlling said term without having to use $\E^{(L)}(\Ric,\cdot)$ or similar high order metric energies. Again, the resulting leading terms will turn out to have definite sign when combined with the Bel-Robinson energy estimates above.
\end{remark}
\begin{proof}
Here, we omit the proof for the inequality at order zero since is completely analogous in structure to the one for orders 2 and higher and the only differences that arise are that lower order error terms do not occur. \\
Once again, we start out by differentiating $-\E^{(L)}(\Sigma,\cdot)$ and insert \eqref{eq:comeq-Sigma}:
\begin{align*}
-\del_t\E^{(L)}(\Sigma,\cdot)=&\,\int_M -2\left\langle\del_t\Lap^{\frac{L}2}\Sigma,\Lap^{\frac{L}2}\Sigma\right\rangle_G+(\del_tG^{-1})\ast G^{-1}\ast\Lap^{\frac{L}2}\Sigma\ast\Lap^{\frac{L}2}\Sigma-3N\frac{\dot{a}}a\lvert\Lap^{\frac{L}2}\Sigma\rvert_G^2\,\vol{G}\\
=&\,\int_M \left\{2a\langle\nabla^2\Lap^{\frac{L}2} N,\Lap^{\frac{L}2}\Sigma\rangle_G-2a(N+1)\langle\Lap^{\frac{L}2}\Ric[G],\Lap^{\frac{L}2}\Sigma\rangle_G+\right.\\
&\,\phantom{\int_M}+(\del_tG^{-1})\ast G^{-1}\ast\Lap^{\frac{L}2}\Sigma\ast\Lap^{\frac{L}2}\Sigma-3N\frac{\dot{a}}a\lvert\Lap^{\frac{L}2}\Sigma\rvert_G^2\\
&\,\phantom{\int_M}\left.-2\langle\mathfrak{S}_{L,Border},\Lap^{\frac{L}2}\Sigma\rangle_G-2\langle\mathfrak{S}_{L,Junk}^\parallel,\Lap^{\frac{L}2}\Sigma\rangle_G\right\}\,\vol{G}
\end{align*}

For the first term, one can use \eqref{eq:Sobolev-norm-equiv-zetalow}, Corollary \ref{cor:en-est-lapse} at order $L$ and \eqref{eq:ibp-trick} to bound its absolute value by the following:
\begin{align*}
\lesssim&\,a\|\Lap^{\frac{L}2}N\|_{\dot{H}^2_G}\sqrt{\E^{(L)}(\Sigma,\cdot)}\\
\lesssim&\,\left[a^{-3}\sqrt{a^8\E^{(L+2)}(N,\cdot)}+a^{1-c\sqrt{\epsilon}}\sqrt{\E^{(L)}(N,\cdot)}\right]\sqrt{\E^{(L)}(\Sigma,\cdot)}\\
\lesssim&\,\biggr[\epsilon a^{-3}\sqrt{\E^{(L)}(\Sigma,\cdot)}+a^{-3}\E^{(L)}(\phi,\cdot)+\epsilon a^{-3-c\sqrt{\epsilon}}\left(\sqrt{\E^{(\leq L-2)}(\Sigma,\cdot)}+\sqrt{\E^{(\leq L-2)}(\phi,\cdot)}\right)\\
&\,+\underbrace{\left(\epsilon^2a^{-3-c\sqrt{\epsilon}}+\epsilon a^{1-c\sigma}\right)\sqrt{\E^{(\leq L-3)}(\Ric,\cdot)}}_{\text{not present for }L=2}\biggr]\sqrt{\E^{(L)}(\Sigma,\cdot)}\\
\lesssim&\,\left(\epsilon^\frac18 a^{-3}+a^{1-c\sigma}\right)\E^{(L)}(\Sigma,\cdot)+\epsilon^{-\frac18}a^{-3}\E^{(L)}(\phi,\cdot)+\epsilon a^{-3-c\sqrt{\epsilon}}\left[\E^{(\leq L-2)}(\Sigma,\cdot)+\E^{(\leq L-2)}(\phi,\cdot)\right]\\
&\,+\underbrace{\left(\epsilon^{\frac{31}8}a^{-3}+\epsilon^2 a^{1-c\sigma}\right)\E^{(L-2)}(\Ric,\cdot)+\left(\epsilon^\frac{31}8a^{-3-c\sqrt{\epsilon}}+\epsilon^2a^{1-c\sigma}\right)\E^{(\leq L-4)}(\Ric,\cdot)}_{\text{not present for }L=2}
\end{align*}
Next, we replace the high order curvature term as follows, using the commuted rescaled Hamiltonian constraint equation \eqref{eq:comeq-Ham-BR} that $\Lap^\frac{L}2\Sigma$ is tracefree and symmetric:
\begin{align*}
&\,\int_M-2a(N+1)\langle\Lap^{\frac{L}2}\Ric[G],\Lap^{\frac{L}2}\Sigma\rangle_G\,\vol{G}\\
=&\,\int_M-2(N+1)a^{-3}\langle\Lap^{\frac{L}2}\RE,\Lap^{\frac{L}2}\Sigma\rangle_G-2(N+1)\frac{\dot{a}}{a}|\Lap^{\frac{L}2}\Sigma|_G^2+\langle\mathfrak{H}_{L,Border}+\mathfrak{H}_{L,Junk}^\parallel,\Lap^{\frac{L}2}\Sigma\rangle_G\,\vol{G}
\end{align*}
We pull the first two terms to left, only keeping the error terms on the right. After inserting the borderline and junk term estimates for the Hamiltonian constraint equations (\eqref{eq:L2-Border-H} and \eqref{eq:L2-junk-H-par}) and the evolution equation itself (\eqref{eq:L2-Border-S} and \eqref{eq:L2-junk-S}), as well as bounding $\lvert\del_tG^{-1}\rvert\lesssim \epsilon a^{-3}$ and inserting \eqref{eq:BsN} as usual, we obtain \eqref{eq:en-est-Sigma} by integrating.
\end{proof}

\noindent Additionally, we can exploit the structure of the momentum constraint equations to gain an elliptic estimate for $\E^{(\change{L}+1)}(\Sigma,\cdot)$. Crucially, the upper bound only depends on $\Sigma$-, scalar field and Bel-Robinson energies up to order $\change{L}$, and appropriately small and time-scaled curvature contributions up to order $\change{L}-1$. This will allow us to close the argument since we do not need to consider the Bel-Robinson energy at order $\change{L}+1$ to control $\Sigma$ at that order
, which would require higher order scalar field and curvature energies\change{.}

\begin{lemma}[\change{Odd } order energy estimate for the second fundamental form]\label{lem:en-est-Sigma-top} For any $L\in 2\Z_+$, $2\leq L\leq \change{18}$, we have
\begin{align*}
\numberthis\label{eq:en-est-Sigma-top}a^4\E^{(\change{L}+1)}(\Sigma,\cdot)\lesssim&\,\left(a^{4-c\sqrt{\epsilon}}+\epsilon a^{2-c\sqrt{\epsilon}}\right)\E^{(\change{L})}(\Sigma,\cdot)+\E^{(\change{L})}(\phi,\cdot)+\E^{(\change{L})}(W,\cdot)+\epsilon^2a^4\E^{(\change{L}-1)}(\Ric,\cdot)\\
&\,+\change{\epsilon a^{-c\sqrt{\epsilon}}}\E^{(\leq \change{L}-2)}(\phi,\cdot)+a^{2-c\sqrt{\epsilon}}\E^{(\leq \change{L}-2)}(\Sigma,\cdot)+\epsilon a^{2-c\sqrt{\epsilon}}\E^{(\leq \change{L}-2)}(\Ric,\cdot)\,.
\end{align*}
\change{For $L=0$, one analogously has}
\begin{equation}\label{eq:en-est-Sigma-1}
\change{a^4\E^{(1)}(\Sigma,\cdot)\lesssim\,\left(a^{4-c\sqrt{\epsilon}}+\epsilon a^{2-c\sqrt{\epsilon}}\right)\E^{(0)}(\Sigma,\cdot)+\E^{(0)}(\phi,\cdot)+\E^{(0)}(W,\cdot)\,.}
\end{equation}
\end{lemma}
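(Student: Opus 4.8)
The plan is to exploit the elliptic structure of the momentum constraint equations to bound $\E^{(L+1)}(\Sigma,\cdot)$ by energies at one derivative order lower, at the cost of losing powers of $a$. The starting point is the div-curl system for $\Sigma$ obtained from the Laplace-commuted momentum constraints \eqref{eq:comeq-mom-div} and \eqref{eq:comeq-mom-curl}. Since $\Sigma$ is a symmetric tracefree $\Sigma_t$-tangent $(0,2)$-tensor, a standard div-curl estimate on $(\Sigma_t, G)$ gives control of $\|\nabla \Lap^{L/2}\Sigma\|_{L^2_G}^2$ in terms of $\|\div_G \Lap^{L/2}\Sigma\|_{L^2_G}^2$, $\|\curl_G \Lap^{L/2}\Sigma\|_{L^2_G}^2$, $\|\Lap^{L/2}\Sigma\|_{L^2_G}^2$, and a curvature error term coming from the Weitzenböck/Bochner identity, which is controlled via \eqref{eq:BsRic}, \eqref{eq:APmidRic} up to a factor $a^{-c\sqrt\epsilon}$. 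The idea then is to multiply by $a^4$ and feed in the two constraint equations term by term.

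\textbf{Key steps.} First, I would write down the div-curl estimate for $\mathfrak{T} = \Lap^{L/2}\Sigma$ on $(\Sigma_t,G)$, absorbing the curvature error; concretely $\|\nabla\mathfrak{T}\|_{L^2_G}^2 \lesssim \|\div_G\mathfrak{T}\|_{L^2_G}^2 + \|\curl_G\mathfrak{T}\|_{L^2_G}^2 + a^{-c\sqrt\epsilon}\|\mathfrak{T}\|_{L^2_G}^2$, which by Definition \ref{def:energies} reads $\E^{(L+1)}(\Sigma,\cdot) \lesssim \|\div_G\Lap^{L/2}\Sigma\|_{L^2_G}^2 + \|\curl_G\Lap^{L/2}\Sigma\|_{L^2_G}^2 + a^{-c\sqrt\epsilon}\E^{(L)}(\Sigma,\cdot)$. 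Second, for $\|\div_G\Lap^{L/2}\Sigma\|_{L^2_G}^2$ I would insert \eqref{eq:comeq-mom-div}: the leading term $-8\pi(\Psi+C)\nabla\Lap^{L/2}\phi$ contributes $\lesssim \E^{(L)}(\phi,\cdot)$ after using \eqref{eq:APPsi} to bound $(\Psi+C)$; the term $(\Psi+C)\Lap^{L/2-1}\Ric[G]\ast\nabla\phi$ and $\nabla\Lap^{L/2-1}\Ric[G]\ast\Sigma$ are handled with the a priori bounds \eqref{eq:APmidphi}, \eqref{eq:APSigma} and the Sobolev estimates of Lemma \ref{lem:Sobolev-norm-equivalence-improved}, yielding $\epsilon$-small multiples of $\E^{(\leq L-1)}(\Ric,\cdot)$ and lower-order scalar-field energies; the junk term $\mathfrak{M}_{L,Junk}$ is estimated via Section \ref{subsec:L2-error-est}. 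Third, for $\|\curl_G\Lap^{L/2}\Sigma\|_{L^2_G}^2$ I would insert \eqref{eq:comeq-mom-curl}: the leading term $-a^{-2}\Lap^{L/2}\RB$ contributes $\lesssim a^{-4}\E^{(L)}(W,\cdot)$ (hence $a^4$ times this is $\E^{(L)}(W,\cdot)$), while the $\epsilonLC[G]\ast\nabla\Lap^{L/2-1}\Ric[G]\ast\Sigma$ and $\tilde{\mathfrak{M}}_{L,Junk}$ terms are treated as before. Multiplying the assembled estimate by $a^4$, collecting worst terms for each energy, and using that $a\leq a(t_0)$ is bounded to absorb various powers, gives \eqref{eq:en-est-Sigma-top}. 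The case $L=0$ is identical but with all lower-order ($\E^{(\leq L-2)}$ and $\E^{(L-1)}(\Ric,\cdot)$) terms absent, giving \eqref{eq:en-est-Sigma-1}.

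\textbf{Main obstacle.} The delicate point is bookkeeping the powers of $a$ so that no term scales worse than what the total-energy Gronwall argument in Section \ref{sec:bs-imp} can tolerate — in particular verifying that the curvature contributions genuinely come with $a^4$ (from multiplying through) rather than with a negative power that would be non-integrable, and that the $\div_G$-side produces $\E^{(L)}(\phi,\cdot)$ rather than $a^{-4}\E^{(L)}(\phi,\cdot)$. A secondary technical subtlety is that the div-curl estimate for tensors on $(\Sigma_t,G)$ needs the curvature error absorbed uniformly on the bootstrap interval; this is exactly where \eqref{eq:APmidRic} is used and where the $a^{-c\sqrt\epsilon}$ loss is unavoidable. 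Everything else is routine insertion of the a priori estimates from Lemma \ref{lem:AP}, the strong $C^0_G$ bounds from Lemma \ref{lem:APzero}, and the error-term estimates catalogued in the appendix.
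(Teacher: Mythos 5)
Your proposal is correct and follows essentially the same route as the paper: the ``div-curl estimate'' you invoke is precisely the Bochner-type identity \eqref{eq:curl-div-elliptic} from \cite[(A.22)]{AM03} applied to $U=\Lap^{L/2}\Sigma$, after which the paper likewise inserts \eqref{eq:comeq-mom-div} and \eqref{eq:comeq-mom-curl}, absorbs the curvature terms via \eqref{eq:APmidRic}, multiplies by $a^4$, and applies the junk-term estimates \eqref{eq:L2-junk-M}--\eqref{eq:L2-junk-Mtilde}.
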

\begin{proof}\change{We prove the statement for \changefinal{$L\geq2$}, since the proof of \eqref{eq:en-est-Sigma-1} is entirely analogous.}\\
By \cite[(A.22)]{AM03}, since $(\Sigma_t,g)$ is a three-dimensional compact Riemannian manifold for any $t\in(t_{Boot},t_0]$, any tracefree $(0,2)$ tensor $U_{ij}$ on $(\Sigma_t,g)$ satisfies
\begin{equation}\label{eq:curl-div-elliptic}
\int_{\Sigma_t} \lvert\nabla U\rvert_g^2+3\Ric[g]\cdot U\cdot U-\frac{R[g]}2\lvert U\rvert_g^2\,\vol{g}=\int_{\Sigma_t} \lvert\curl U\rvert_g^2+\frac32\lvert\div_g U\rvert_g^2\,\vol{g}\,.
\end{equation}
In particular, for $U=\Lap^{\frac{\change{L}}2}\Sigma$ and after rescaling, this reads:
\begin{align*}
&\,\int_M \left\lvert\nabla\Lap^{\frac{\change{L}}2}\Sigma\right\rvert_G^2+3{\left(\Ric[G]^\sharp\right)^i}_j{\left(\Lap^{\frac{\change{L}}2}\Sigma^\sharp\right)^j}_l{\left(\Lap^{\frac{\change{L}}2}\Sigma^\sharp\right)^l}_i-\frac{R[G]}2\lvert\Lap^\frac{\change{L}}2\Sigma\rvert_G^2\,\vol{G}\\
=&\,\int_M\frac32\lvert\div_G\Lap^\frac{\change{L}}2\Sigma\rvert_G^2+a^2\lvert\curl\Lap^\frac{\change{L}}2\Sigma\rvert_G^2\,\vol{G}
\end{align*}
The last two terms on the left hand side can be estimated by $(1+\sqrt{\epsilon}a^{-c\sqrt{\epsilon}})\E^{(\change{L})}(\Sigma,\cdot)$ in absolute value using the strong $C_G$-norm estimate \eqref{eq:APmidRic}. Thus, inserting the Laplace-commuted rescaled momentum constraint equations \eqref{eq:comeq-mom-div} and \eqref{eq:comeq-mom-curl}, we obtain for a suitable constant $K>0$:
\begin{align*}
&\E^{(\change{L}+1)}(\Sigma,\cdot)-K\left(1+\sqrt{\epsilon}a^{-c\sqrt{\epsilon}}\right)\E^{(\change{L})}(\Sigma,\cdot)\\
\lesssim&\,\int_M\left\{\lvert\Psi+C\rvert^2\left\lvert\nabla\Lap^{\frac{\change{L}}2}\phi\right\rvert_G^2+\lvert\nabla\phi\rvert_G^2\left\lvert\Lap^{\frac{\change{L}}2-1}\Ric[G]\right\rvert_G^2+\lvert\Sigma\rvert_G^2\left\lvert\nabla\Lap^{\frac{\change{L}}2-1}\Ric[G]\right\rvert_G^2+\lvert\mathfrak{M}_{\change{L},Junk}\rvert_G^2\right.\\
&\,\phantom{\int_M}\left.+a^{-4}\lvert\Lap^\frac{\change{L}}2\RB\rvert_G^2+\lvert\Sigma\rvert_G^2\left\lvert\nabla\Lap^{\frac{\change{L}}2-1}\Ric[G]\right\rvert_G^2+\lvert\nabla\Sigma\rvert_G^2\left\lvert\nabla^2\Lap^{\frac{\change{L}}2-2}\Ric[G]\right\rvert_G^2+\left\lvert\tilde{\mathfrak{M}}_{\change{L},Junk}\right\rvert_G^2\,\right\}\vol{G}
\end{align*}
\noindent After rearranging, using the strong $C_G$-norm estimates \eqref{eq:APPsi}, \eqref{eq:APmidphi}, \eqref{eq:APSigma} and \eqref{eq:APmidSigma} and multiplying by $a^4$ on both sides, we get
\begin{align*}
a^4\E^{(\change{L}+1)}(\Sigma,\cdot)\lesssim&\,\left(1+\sqrt{\epsilon}a^{-c\sqrt{\epsilon}}\right)a^4\E^{(\change{L})}(\Sigma,\cdot)+\E^{(\change{L})}(\phi,\cdot)+\E^{(\change{L})}(W,\cdot)+\epsilon^2a^4\E^{(\change{L}-1)}(\Ric,\cdot)\\
&\,+\epsilon^2a^{4-c\sqrt{\epsilon}}\E^{(\change{L}-2)}(\Ric,\cdot)+a^4\|\mathfrak{M}_{\change{L},Junk}\|_{L^2_G}^2+a^4\|\tilde{\mathfrak{M}}_{\change{L},Junk}\|_{L^2_G}^2\,.
\end{align*}
The statement follows inserting the estimates \eqref{eq:L2-junk-M} and \eqref{eq:L2-junk-Mtilde}.
\end{proof}

\subsection{Energy estimates for the curvature}\label{subsec:en-Ric}

To control commutator errors, we will also need some additional estimates on curvature energies. Unlike the other energies, these inequalities do not rely on any delicate structure within the equations and instead just rely on pointwise estimates, the Young inequality and near-coercivity of energies in the sense of Lemma \ref{lem:Sobolev-norm-equivalence-improved}. For the sake of convenience, we phrase these estimates for $\E^{(L-2)}(\Ric,\cdot)$ since this is the order needed when improving behaviour of the total energy at order $L$.

\begin{lemma}[Curvature energy estimates \change{at even orders}]\label{lem:en-est-Ric}
Let $L\in 2\Z$, $4\leq L\leq \change{16}$ and $t\in(t_{Boot},t_0]$. Then, one has
\begin{align*}
\numberthis\label{eq:en-est-Ric}\E^{(L-2)}(\Ric,t)\lesssim&\,\E^{(L-2)}(\Ric,t_0)+\int_t^{t_0}\left(\epsilon^\frac18a(s)^{-3}+a(s)^{8-c\sigma}\right)\E^{(L-2)}(\Ric,s)\,ds\\
&\,+\int_t^{t_0}\Bigr\{\epsilon^{-\frac18}a(s)^{-3}\left(\E^{(L)}(\phi,s)+\E^{(L)}(\Sigma,s)\right)\\
&\,\phantom{\int_t^{t_0}}+\epsilon^{-\frac18}a(s)^{-3-c\sqrt{\epsilon}}\left(\E^{(\leq L-2)}(\phi,s)+\E^{(\leq L-2)}(\Sigma,s)\right)\\
&\,\phantom{+\int_t^{t_0}}+\epsilon^\frac78a(s)^{-3-c\sqrt{\epsilon}}\E^{(\leq L-4)}(\Ric,s)\Bigr\}\,ds\,.
\end{align*}
Additionally,
\begin{align*}
\numberthis\label{eq:en-est-Ric0}\E^{(0)}(\Ric,t)\lesssim&\,\E^{(0)}(\Ric,t_0)+\int_t^{t_0}\epsilon^\frac18a(s)^{-3}\E^{(0)}(\Ric,s)\,ds\\
&\,+\int_t^{t_0}\epsilon^{-\frac18}a(s)^{-3}\left(\E^{(0)}(\phi,s)+\E^{(0)}(\Sigma,s)\right)ds\,.
\end{align*}
\end{lemma}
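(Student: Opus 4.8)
\textbf{Proof plan for Lemma \ref{lem:en-est-Ric}.}

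The plan is to mimic the standard integral-energy argument outlined in Remark \ref{rem:en-est-strat}, but with the simplification that no subtle cancellation is required: the Ricci evolution equation \eqref{eq:comeq-Ric-even} is genuinely lossy (it contains one more derivative of $\Sigma$ and $N$ than of $\Ric[G]$), and we simply accept this loss by allowing $\E^{(L)}(\phi,\cdot)$ and $\E^{(L)}(\Sigma,\cdot)$ on the right-hand side. First I would differentiate $\E^{(L-2)}(\Ric,\cdot)$ in time using Lemma \ref{lem:delt-int}, picking up the volume-form term $-N\tau\int_M\langle\cdots\rangle\vol{G}$ which, by \eqref{eq:BsN} and $\dot a\simeq a^{-2}$, contributes $\lesssim \epsilon a^{1-c\sigma}\E^{(L-2)}(\Ric,\cdot)$, safely absorbable. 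Then insert \eqref{eq:comeq-Ric-even} (commuted to order $L-2$, i.e. with $\frac L2$ replaced by $\frac{L}{2}-1$) into the integrand. The explicit leading terms there are $a^{-3}\Lap^{(L-2)/2+1}\Sigma = a^{-3}\Lap^{L/2}\Sigma$, $a^{-3}\nabla^{\sharp m}\nabla_{(i}\Lap^{(L-2)/2}\Sigma_{j)m}$, and the two lapse terms $\frac{\dot a}{a}\nabla^2\Lap^{(L-2)/2}N$ and $\frac{\dot a}{a}\Lap^{L/2}N\cdot G$; paired against $\Lap^{(L-2)/2}(\Ric[G]+\frac29 G)$ these are all of the form $a^{-3}\times(\text{order-}L\text{ quantity})\times\sqrt{\E^{(L-2)}(\Ric,\cdot)}$.

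The key step is to convert each of these into an energy via Young's inequality, splitting the divergent prefactor $a^{-3}$ as $a^{-3}=\epsilon^{-1/8}a^{-3}\cdot\epsilon^{1/8}$ so that the bad factor lands on the already-larger order-$L$ energies (giving the $\epsilon^{-1/8}a^{-3}\E^{(L)}(\phi,\cdot)$, $\epsilon^{-1/8}a^{-3}\E^{(L)}(\Sigma,\cdot)$ terms) and the favourable factor $\epsilon^{1/8}$ multiplies $a^{-3}\E^{(L-2)}(\Ric,\cdot)$, which is exactly the self-term appearing in the Gronwall coefficient. For the lapse terms I would first apply Lemma \ref{lem:Sobolev-norm-equivalence-improved} to reduce $\|\nabla^2\Lap^{(L-2)/2}N\|_{L^2_G}$ and $\|\Lap^{L/2}N\|_{L^2_G}$ to $\|\Lap^{L/2}N\|_{L^2_G}$-type quantities plus lower-order curvature errors scaled by $a^{-c\sqrt\epsilon}$ (this is where the $\E^{(\leq L-4)}(\Ric,\cdot)$ term on the right originates), and then invoke the lapse energy estimate Corollary \ref{cor:en-est-lapse} (with $2l=L$) together with $\frac{\dot a}{a}a^4\simeq a\simeq a^{8-c\sigma}$ factors, which produces the $a^{8-c\sigma}\E^{(L-2)}(\Ric,\cdot)$ self-term and reduces the lapse to $\E^{(L)}(\phi,\cdot)$, $\E^{(L)}(\Sigma,\cdot)$ and their lower-order analogues. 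The genuine junk and borderline error terms in \eqref{eq:comeq-Ric-even} are handled by the $L^2_G$-bounds in Section \ref{subsec:L2-error-est}, exactly as in the other lemmas; they contribute nothing worse than what is already written. Finally, integrate the resulting pointwise-in-time differential inequality over $(t,t_0]$, using the initial-data bound to produce the $\E^{(L-2)}(\Ric,t_0)$ term, to obtain \eqref{eq:en-est-Ric}; the $L=0$ case \eqref{eq:en-est-Ric0} is identical but with all lower-order (and hence all Sobolev-commutator curvature) error terms absent, since Lemma \ref{lem:Sobolev-norm-equivalence-improved} carries no Ricci energy for $l\le 5$.

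The main obstacle I anticipate is bookkeeping rather than conceptual: one must be careful that converting $\nabla^2\Lap^{(L-2)/2}N$ and $\Lap^{L/2}N$ (which live at derivative order $L$ in $N$, i.e. one above what $\E^{(L-2)}(N,\cdot)$ controls) into energies is done through the \emph{elliptic gain} in Corollary \ref{cor:en-est-lapse} — that corollary bounds $a^8\E^{(2(l+1))}(N,\cdot)$, so order-$L$ lapse quantities are controlled with two extra powers of $a$ to spare, and tracking those powers so the net prefactor genuinely comes out as $a^{8-c\sigma}$ (harmless, since $\int_t^{t_0}a^{8-c\sigma}\,ds$ is bounded) rather than something divergent is the delicate point. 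One also needs $L\le 16$ precisely so that the order-$L$ scalar-field and $\Sigma$ energies that appear on the right are among those being controlled (up to order $18$) elsewhere in the hierarchy, and so that the Sobolev-norm-equivalence lemma applies at the relevant order $l=\frac L2-1\le 7<9$.
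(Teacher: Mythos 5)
Your proposal follows the same approach as the paper's own proof: time-differentiate $\E^{(L-2)}(\Ric,\cdot)$, insert \eqref{eq:comeq-Ric-even} at order $\tfrac{L}{2}-1$, absorb the leading $\Sigma$- and lapse-terms via Young splits of the form $a^{-3}=\epsilon^{1/8}a^{-3}\cdot\epsilon^{-1/8}$ combined with Lemma~\ref{lem:Sobolev-norm-equivalence-improved} and Corollary~\ref{cor:en-est-lapse} at $2l=L$, control the remainder with the $L^2_G$ error-term bounds, and integrate. That matches the paper closely.

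Two small corrections. First, for the $L=0$ case the argument is not quite ``identical'': the energy $\E^{(0)}(\Ric,\cdot)=\int_M\lvert\Ric[G]+\tfrac{2}{9}G\rvert_G^2\vol{G}$ carries the trace-shift $\tfrac{2}{9}G$ explicitly, so on differentiating one picks up extra $\del_tG\ast(\Ric[G]+\tfrac{2}{9}G)$ contributions via \eqref{eq:REEqG}; at orders $L\geq 4$ these are annihilated since $\Lap^{\frac{L}{2}-1}G=0$, but at order $0$ they must be estimated separately (they end up contributing $a^{-3}(\sqrt{\E^{(0)}(\Sigma,\cdot)}+\sqrt{\E^{(0)}(N,\cdot)})\sqrt{\E^{(0)}(\Ric,\cdot)}$, which fits the stated form). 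Second, your explanation of the $a^{8-c\sigma}\E^{(L-2)}(\Ric,\cdot)$ self-term via ``the elliptic gain gives two extra powers of $a$ to spare'' is off: the lapse term you face is $a^{-3}\sqrt{\E^{(L)}(N,\cdot)}\sqrt{\E^{(L-2)}(\Ric,\cdot)}$, and $\E^{(L)}(N,\cdot)$ is controlled \emph{directly} by Corollary~\ref{cor:en-est-lapse} with $2l=L$ (no $a^8$ weight invoked). The $a^{8-c\sigma}$ factor instead originates from the curvature error term $(\epsilon^4a^{-c\sqrt\epsilon}+\epsilon^2a^{8-c\sigma})\E^{(\leq L-3)}(\Ric,\cdot)$ in that corollary; after the $\epsilon^{-1/8}a^{-3}$ splitting one redistributes the resulting odd-order $\E^{(L-3)}(\Ric,\cdot)$ piece onto $\E^{(L-2)}$ and $\E^{(L-4)}$ via \eqref{eq:ibp-trick}, and choosing the split correctly produces the $a^{8-c\sigma}$ on the self-term. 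Also note that $\tfrac{\dot a}{a}a^4\simeq a$ is not equivalent to $a^{8-c\sigma}$ — for $a<1$ one has $a^{8-c\sigma}\lesssim a$, so the ``$\simeq$'' chain as written would weaken rather than produce the needed bound.
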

\begin{proof}
First, we note that
\[\|\div_G^\sharp\nabla\Lap^{\frac{L}2-1}\Sigma\|_{L^2_G}\lesssim \|\nabla^2\Lap^{\frac{L}2-1}\Sigma\|_{L^2_G}\lesssim \|\Lap^{\frac{L}2}\Sigma\|_{L^2_G}+a^{-c\sqrt{\epsilon}}\sqrt{\E^{(L-2)}(\Sigma,\cdot)}\]
holds using the low order version of \eqref{eq:Sobolev-norm-equiv-T2l} with $\mathfrak{T}=\Lap^{\frac{L}2-1}\Sigma$ for $l=2$, and similarly
\[\|\nabla^2\Lap^{\frac{L}2-1}N\|_{L^2_G}\lesssim \|\Lap^{\frac{L}2}N\|_{L^2_G}+a^{-c\sqrt{\epsilon}}\sqrt{\E^{(L-2)}(N,\cdot)}\]
using \eqref{eq:Sobolev-norm-equiv-zeta2l} at order $2$. Now, using $\Lap^{\frac{L}2-1} G=0$ for $L\geq 4$, we continue as usual by applying \eqref{eq:comeq-Ric-even} to the expression below:
\begin{align*}
-\del_t\E^{(L-2)}(\Ric,\cdot)
\lesssim&\,\int_M \Bigr\{a^{-3}\left(\lvert\Lap^{\frac{L}2}\Sigma\rvert_G+\lvert\nabla^2\Lap^{\frac{L}2-1}\Sigma\rvert_G\right)\lvert\Lap^{\frac{L}2-1}\Ric[G]\rvert_G\\
&\,+\frac{\dot{a}}a\left(\lvert\nabla^2\Lap^{\frac{L}2-1}N\rvert_G+\lvert\Lap^{\frac{L}2}N\rvert_G\right)\lvert\Lap^{\frac{L}2-1}\Ric[G]\rvert_G\\
&\,+\left(\lvert\mathfrak{R}_{L-2,Border}\rvert_G+\lvert\mathfrak{R}_{L-2,Junk}\rvert_G\right)\cdot\lvert\Lap^{\frac{L}2-1}\Ric[G]\rvert_G\\
&\,+a^{-3}\Sigma\ast\Lap^{\frac{L}2-1}\Ric[G]\ast\Lap^{\frac{L}2-1}\Ric[G]+N\frac{\dot{a}}a\lvert\Lap^{\frac{L}2-1}\Ric[G]\rvert_G^2\Bigr\}\,\vol{G}
\end{align*}
Due to the estimates above as well as \eqref{eq:APSigma} and \eqref{eq:BsN}, this implies
\begin{align*}
-\del_t\E^{(L-2)}(\Ric,\cdot)\lesssim&\,a^{-3}\left[\sqrt{\E^{(L)}(\Sigma,\cdot)}+\sqrt{\E^{(L)}(N,\cdot)}\right]\sqrt{\E^{(L-2)}(\Ric,\cdot)}\\
&\,+a^{-3-c\sqrt{\epsilon}}\left[\sqrt{\E^{(\leq L-2)}(\Sigma,\cdot)}+\sqrt{\E^{(\leq L-2)}(N,\cdot)}\right]\sqrt{\E^{(L-2)}(\Ric,\cdot)}\\
&\,+\left(\|\mathfrak{R}_{L-2,Border}\|_{L^2_G}+\|\mathfrak{R}_{L-2,Junk}\|_{L^2_G}\right)\sqrt{\E^{(L-2)}(\Ric,\cdot)}\\
&\,+\epsilon a^{-3}\E^{(L-2)}(\Ric,\cdot)\,.
\end{align*}
Using Corollary \ref{cor:en-est-lapse} at order $L$ and distributing terms containing $\E^{(L-3)}(\Ric,\cdot)$ with \eqref{eq:ibp-trick} as usual, we get
\begin{align*}
-\del_t\E^{(L-2)}(\Ric,\cdot)\lesssim&\,\left[\epsilon^{\frac18} a^{-3}+a^{8-c\sigma}\right]\E^{(L-2)}(\Ric,\cdot)+\epsilon^{-\frac18}a^{-3}\left[\E^{(L)}(\Sigma,\cdot)+\E^{(L)}(\phi,\cdot)\right]\\
&\,+\left[\epsilon^\frac{31}8a^{-3-c\sqrt{\epsilon}}+\epsilon^2a^{8-c\sigma}\right]\E^{(\leq L-4)}(\Ric,\cdot)\\
&\,+\epsilon^{-\frac18}a^{-3-c\sqrt{\epsilon}}\left[\E^{(\leq L-2)}(\Sigma,\cdot)+\E^{(\leq L-2)}(\phi,\cdot)\right]\\
&\,+\left(\|\mathfrak{R}_{L-2,Border}\|_{L^2_G}+\|\mathfrak{R}_{L-2,Junk}\|_{L^2_G}\right)\sqrt{\E^{(L-2)}(\Ric,\cdot)}\,.
\end{align*}
Equation \eqref{eq:en-est-Ric} now follows inserting the borderline and junk term estimates \eqref{eq:L2-Border-R-even} and \eqref{eq:L2-junk-R-even} and applying the lapse energy estimates from Corollary \ref{cor:en-est-lapse}.\\
Equation \eqref{eq:en-est-Ric0} follows almost identically by inserting \eqref{eq:REEqRic} instead of \eqref{eq:comeq-Ric-even} as well as \eqref{eq:REEqG} for the additional $\del_tG\ast(\Ric[G]+\nicefrac29G)$-terms. These can be estimated as
\begin{align*}
\lesssim&\,\int_Ma^{-3}\Sigma\ast\left(\Ric[G]+\frac29G\right)+\frac{\dot{a}}a N\cdot G\ast\left(\Ric[G]+\frac29G\right)\,\vol{G}\\
\lesssim&\,a^{-3}\left(\sqrt{\E^{(0)}(\Sigma,\cdot)}+\sqrt{\E^{(0)}(N,\cdot)}\right)\sqrt{\E^{(0)}(\Ric,\cdot)}\,,
\end{align*}
which can be treated as at higher orders.
\end{proof}

\begin{lemma}[\change{Odd }order curvature energy estimate]\label{lem:en-est-Ric-top} For $\change{L}\in2\N$, $4\leq \change{L}\leq \change{18}$ and $t\in(t_{Boot},t_0]$,
\begin{align*}
\numberthis\label{eq:en-est-Ric-top}a(t)^4\E^{(\change{L}-1)}(\Ric,t)\lesssim&\,a(t_0)^4\E^{(\change{L}-1)}(\Ric,t_0)\\
&\,+\int_t^{t_0}\left(\epsilon^\frac18a(s)^{-3}+a(s)^{-1-c\sqrt{\epsilon}}\right)\left(a(s)^4\E^{(\change{L}-1)}(\Ric,s)\right)\,ds\\
&\,+\int_t^{t_0}\epsilon^{-\frac18}a(s)^{-3}\cdot a(s)^4\E^{(\change{L}+1)}(\Sigma,s)\,ds\\
&\,+\int_t^{t_0}\Bigr\{\epsilon^{-\frac18}a(s)^{-3}\E^{(\change{L})}(\phi,s)+\left(\epsilon^\frac{15}8a(s)^{-3}+a(s)^{-1-c\sqrt{\epsilon}}\right)\E^{(\change{L})}(\Sigma,s)\\
&\,\phantom{+\int_t^{t_0}}+\left(\epsilon^\frac{15}8a(s)^{-3-c\sqrt{\epsilon}}+a(s)^{-1-c\sqrt{\epsilon}}\right)\left(\E^{(\leq \change{L}-2)}(\phi,s)+\E^{(\leq \change{L}-2)}(\Sigma,s)\right)\\
&\,\phantom{+\int_t^{t_0}}+\epsilon^\frac{15}8a(s)^{-3}\E^{(\change{L}-2)}(\Ric,\cdot)+\epsilon^\frac{15}8a^{-3-c\sqrt{\epsilon}}\E^{(\leq \change{L}-4)}(\Ric,s)\Bigr\}ds
\end{align*}
\end{lemma}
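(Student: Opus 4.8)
The plan is to follow the same template laid out in Remark \ref{rem:en-est-strat} and already executed for the even order curvature estimate in Lemma \ref{lem:en-est-Ric}: differentiate the scaled energy $a^4\E^{(L-1)}(\Ric,\cdot)$ in time, insert the commuted Ricci evolution equation at odd order \eqref{eq:comeq-Ric-odd}, estimate the resulting terms, and integrate. The scaling by $a^4$ is forced because the odd order commuted equation \eqref{eq:comeq-Ric-odd} contains the term $a^{-3}\nabla_k\Lap^{\frac{L}2+1}\Sigma$, whose $L^2_G$-norm is $\sqrt{\E^{(L+1)}(\Sigma,\cdot)}$ -- one derivative order too high -- and the only way to control $\E^{(L+1)}(\Sigma,\cdot)$ without losing derivatives is the degenerate elliptic estimate \eqref{eq:en-est-Sigma-top}, which produces $a^{-4}$ times the order-$L$ energies. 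Multiplying by $a^4$ precisely absorbs this loss, and the $\del_t(a^4)=4\dot a a^3 = 4\frac{\dot a}{a}a^4$ term it generates just contributes to the first line of the claimed inequality. This is exactly the reason the statement is phrased for the scaled quantity $a^4\E^{(L-1)}(\Ric,\cdot)$ rather than $\E^{(L-1)}(\Ric,\cdot)$ itself.

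Concretely, after applying \eqref{eq:delt-int} and \eqref{eq:comeq-Ric-odd}, I would group the terms as follows. The genuinely top order term $a^{-3}\nabla_k\Lap^{\frac{L}2+1}\Sigma$ pairs with $\nabla\Lap^{\frac{L}2-1}\Ric[G]$ to give, up to the volume form, $a^{-3}\sqrt{\E^{(L+1)}(\Sigma,\cdot)}\sqrt{\E^{(L-1)}(\Ric,\cdot)}$; using Young's inequality with weight $\epsilon^{\pm1/8}$ this becomes $\epsilon^{1/8}a^{-3}\E^{(L-1)}(\Ric,\cdot)+\epsilon^{-1/8}a^{-3}\E^{(L+1)}(\Sigma,\cdot)$, and after multiplying through by $a^4$ the first piece feeds the Gronwall line and the second is exactly the $\int\epsilon^{-1/8}a^{-3}\cdot a^4\E^{(L+1)}(\Sigma,\cdot)$ term in the statement. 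The term $a^{-3}\nabla_k\nabla^{\sharp m}\nabla_{(i}\Lap^{\frac{L}2}\Sigma_{j)m}$ is of order $L+1$ in $\Sigma$ but only needs the $L^2_G$-norm of $\nabla^{L+1}\Lap^{\frac{L}2}\Sigma$, which by \eqref{eq:Sobolev-norm-equiv-T2l+1} applied to $\mathfrak{T}=\Lap^{\frac{L}2-1}\Sigma$ \change{(or rather, the low-order version since $L+1-2(L/2)=1$)} is controlled by $\sqrt{\E^{(L+1)}(\Sigma,\cdot)}$ plus $a^{-c\sqrt\epsilon}$ times lower order $\Sigma$-energies, so it is handled the same way. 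The lapse terms $\frac{\dot a}{a}\nabla_k\nabla_i\nabla_j\Lap^{\frac{L}2}N$ and $\frac{\dot a}{a}\nabla_k\Lap^{\frac{L}2+1}N\cdot G_{ij}$ are dealt with using $\dot a a^{-1}\simeq a^{-3}$, the Sobolev estimates \eqref{eq:Sobolev-norm-equiv-zeta2l+1}, and crucially Corollary \ref{cor:en-est-lapse} at order $L$ to convert lapse energies into scalar field, $\Sigma$ and (lower order) curvature energies -- this is what generates the $\epsilon^{-1/8}a^{-3}\E^{(L)}(\phi,\cdot)$ and the $\epsilon^{15/8}$-weighted $\Sigma$- and lower-order-curvature contributions. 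Finally the borderline and junk terms $\mathfrak{R}_{L+1,Border}$ and $\mathfrak{R}_{L+1,Junk}$ are estimated via their $L^2_G$-bounds from Section \ref{subsec:L2-error-est} (the odd order analogues of \eqref{eq:L2-Border-R-even} and \eqref{eq:L2-junk-R-even}), the quadratic term $a^{-3}\Sigma\ast\nabla\Lap^{\frac{L}2-1}\Ric[G]\ast\nabla\Lap^{\frac{L}2-1}\Ric[G]$ by $\epsilon a^{-3}\E^{(L-1)}(\Ric,\cdot)$ via \eqref{eq:APSigma}, and the $N\frac{\dot a}{a}$-type term by \eqref{eq:BsN}. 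Integrating the resulting differential inequality over $(t,t_0]$ and using \eqref{eq:ibp-trick} to push divergent prefactors down onto order-$(L-2)$ curvature energies then yields \eqref{eq:en-est-Ric-top}.

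The main obstacle -- though it is more bookkeeping than conceptual difficulty -- is tracking the $a$-powers carefully through the Corollary \ref{cor:en-est-lapse} substitution combined with the $a^4$-scaling. The lapse energy estimate produces, at order $L$, terms like $a^8\E^{(L+2)}(N,\cdot)$ on the left which must be related back to scalar field energies; one has to make sure that when these are inserted into an already-$a^4$-scaled inequality, no term ends up with an $a$-power more negative than $-3-c\sqrt\epsilon$ on an order-$L$-or-lower energy, since anything worse than $a^{-3}$ times a non-improved quantity would break the Gronwall argument by \eqref{eq:a-integrals}. The structure of Corollary \ref{cor:en-est-lapse} (which gives $a^8\E^{(2(l+1))}(N,\cdot)+a^4\E^{(2l+1)}(N,\cdot)+\E^{(2l)}(N,\cdot)\lesssim\ldots$) is exactly calibrated so that the highest lapse energy one ever needs is damped by enough powers of $a$, and the same is true of the $\epsilon$-weights: the $\epsilon^{-1/8}$ losses always occur against quantities one order higher that are improved earlier in the induction on $L$, or against $\E^{(L)}(\Sigma,\cdot)$ and $\E^{(L)}(\phi,\cdot)$ which appear on the right with no bad $\epsilon$-power constraint. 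I would not expect any genuinely new mechanism here beyond what already appears in Lemmas \ref{lem:en-est-Sigma}, \ref{lem:en-est-BR} and \ref{lem:en-est-Ric}; the verification is that the particular combination of $a^4$-scaling, the div-curl estimate \eqref{eq:en-est-Sigma-top}, and the lapse estimates closes without derivative loss, which is precisely the point emphasised at the end of Remark \ref{rem:en-est-strat}.
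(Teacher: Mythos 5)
Your proposal matches the paper's own proof, which reduces to Lemma \ref{lem:en-est-Ric} plus a remark on the lapse; you fill in the explicit steps. One flag: \eqref{eq:comeq-Ric-odd} must be applied with $L$ replaced by $L-2$, since the lemma concerns $\E^{(L-1)}(\Ric,\cdot)$, so the top $\Sigma$-term is $a^{-3}\nabla_k\Lap^{L/2}\Sigma$ at derivative order $L+1$, not $a^{-3}\nabla_k\Lap^{L/2+1}\Sigma$ (which would be at order $L+3$, with $L^2_G$-norm $\sqrt{\E^{(L+3)}(\Sigma,\cdot)}$); you correctly conclude with $\E^{(L+1)}(\Sigma,\cdot)$ and Corollary \ref{cor:en-est-lapse} at $l=L/2$, so nothing downstream breaks, but your displayed formulas are two derivative orders above the energies you read off them. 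Also, the paper motivates the $a^4$-weight primarily through the lapse terms (so that the odd commuted equation produces $a^4\E^{(L+1)}(N,\cdot)$, which Corollary \ref{cor:en-est-lapse} controls, rather than the unweighted $\E^{(L+1)}(N,\cdot)$ which would cost an extra $a^{-4}$), whereas you emphasise the $\Sigma$/div--curl mechanism; both are genuine and use the same weight, so this is a difference of emphasis rather than a gap.
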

\begin{proof}
The proof is very similar to that of Lemma \ref{lem:en-est-Ric} since we did not exploit any structure within \eqref{eq:comeq-Ric-even} that does not equally occur in \eqref{eq:comeq-Ric-odd}, and thus we omit the details. As in the proof of Lemma \ref{lem:en-est-SF-top}, we note that the differences within the estimate come from how top order lapse terms are treated: The scaling of the top order energy allows one to estimate $a^4\E^{(\change{L}+1)}(N,\cdot)$ by scalar field energies and $\Sigma$-energies of up to order $\change{L}$ and curvature energies up to order $\change{L}-3$.
\end{proof}

\subsection{Sobolev norm estimates for metric objects}\label{subsec:int-metric}

To close the bootstrap argument, we need to improve the behaviour of metric quantities in addition to the energy formalism, both to capture the intrinsic behaviour of the metric and to relate energies to supremum norms.

\begin{lemma}[Sobolev norm estimates for Christoffel symbols]\label{lem:int-est-Chr} Let $U$ be a coordinate neighbourhood on $M$, viewed as a coordinate neighbourhood on $\Sigma_t$ for $t\in(t_{Boot},t_0]$. For any $l\in\N,\,l\leq \changefinal{17}$, the following Sobolev estimate then holds:
\begin{equation}\label{eq:norm-est-Chr}
\|\Gamma-\Gamhat\|^2_{H^l_G(U)}\lesssim a^{-c\epsilon^\frac18}\left(\epsilon^4+\epsilon^{-\frac14}\sup_{s\in(t,t_0)}\left(\|N\|_{H^{l+1}_G(\Sigma_s)}^2+\|\Sigma\|_{H^{l+1}_G(\Sigma_s)}^2\right)\right)
\end{equation}
\end{lemma}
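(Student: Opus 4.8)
The plan is to set up an integral (energy-type) inequality for $\|\Gamma-\Gamhat\|_{H^l_G(U)}^2$ using the Christoffel evolution equation \eqref{eq:REEqChr}, and then run the same Gronwall-type argument sketched in Remark \ref{rem:en-est-strat}. First I would fix the coordinate neighbourhood $U$ and recall that on each $\Sigma_t$ the difference $\Gamma[G]_{ij}^k - \Gamhat_{ij}^k$ is a genuine $(1,2)$-tensor, so $\|\nabla^m(\Gamma-\Gamhat)\|_{L^2_G(U)}$ makes sense and commuting $\nabla^m$ past \eqref{eq:REEqChr} is legitimate. The right-hand side of \eqref{eq:REEqChr} is schematically $a^{-3}(N+1)\ast\nabla\Sigma + a^{-3}\nabla N\ast\Sigma + \tfrac{\dot a}{a}\nabla N\ast G$, all of which are one derivative above $\Sigma$ and $N$ respectively; after commuting with $\nabla^m$ for $m\le l\le 15$ and using the commutator formulas from Appendix \ref{subsec:commutators} (which only introduce curvature factors controlled by \eqref{eq:APmidRic} and lower-order $\Gamma-\Gamhat$ factors controlled by \eqref{eq:BsChr}), one obtains
\[
\left|\del_t \|\Gamma-\Gamhat\|_{H^l_G(U)}^2\right| \lesssim a^{-3}\Big(\|N\|_{H^{l+1}_G}+\|\Sigma\|_{H^{l+1}_G}\Big)\|\Gamma-\Gamhat\|_{H^l_G(U)} + \big(\text{lower order}\big),
\]
where I also pick up a harmless $|\del_t G^{-1}|_G\lesssim\epsilon a^{-3}$ term times $\|\Gamma-\Gamhat\|_{H^l_G}^2$ from differentiating the volume form and the metric contractions (via Lemma \ref{lem:delt-int} and \eqref{eq:AP-deltG}), plus a $\tfrac{\dot a}{a}\|N\|_{H^{l+1}_G}\lesssim \epsilon a^{-3-c\sqrt\epsilon}$-type contribution from the last term of \eqref{eq:REEqChr} (using $\dot a\simeq a^2$ by \eqref{eq:Friedman}, \eqref{eq:BsN}, and the a priori bounds). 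I would absorb the lower-order terms, which come from the commutators and are of the form $\big(\epsilon a^{-3-c\sqrt\epsilon}+\dots\big)\|\Gamma-\Gamhat\|_{H^{l-1}_G}\|\Gamma-\Gamhat\|_{H^l_G}$, by an induction over $l$ exactly as in Lemma \ref{lem:AP}.

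Next I would apply Lemma \ref{lem:weak-ftoc} (or simply integrate the differential inequality for the square root, as done repeatedly in Section \ref{sec:ap}) to get
\[
\|\Gamma-\Gamhat\|_{H^l_G(U)}(t) \lesssim \|\Gamma-\Gamhat\|_{H^l_G(U)}(t_0) + \int_t^{t_0} a(s)^{-3}\Big(\|N\|_{H^{l+1}_G(\Sigma_s)}+\|\Sigma\|_{H^{l+1}_G(\Sigma_s)}\Big)\,ds + \int_t^{t_0} \epsilon a(s)^{-3}\|\Gamma-\Gamhat\|_{H^l_G(\Sigma_s)}\,ds.
\]
For the initial term I use \eqref{eq:init-ass-Chr-C}/the $H^{17}_G$ initial bound from Remark \ref{rem:init-Cgamma}, which gives $\|\Gamma-\Gamhat\|_{H^{l+1}_G(\Sigma_{t_0})}^2\lesssim\epsilon^4$. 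For the source integral I pull the supremum $\sup_{s\in(t,t_0)}\big(\|N\|_{H^{l+1}_G}+\|\Sigma\|_{H^{l+1}_G}\big)$ out, leaving $\int_t^{t_0} a(s)^{-3}\,ds$, which by \eqref{eq:log-est} is bounded by $\tfrac{K}{q}a(t)^{-cq}$ for any $q>0$; choosing $q\sim\epsilon^{1/8}$ and squaring yields the $a^{-c\epsilon^{1/8}}$ prefactor and an $\epsilon^{-1/4}$-type loss (from $1/q^2$) in front of the supremum, precisely matching the claimed bound. Finally, the self-referential term $\int_t^{t_0}\epsilon a(s)^{-3}\|\Gamma-\Gamhat\|_{H^l_G}$ is handled by the Gronwall lemma together with \eqref{eq:a-exp-est}, which contributes only a further $a^{-c\sqrt\epsilon}$ factor that can be absorbed by updating $c$; squaring the resulting bound gives \eqref{eq:norm-est-Chr}.

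The main obstacle — and the reason the statement is phrased with a supremum of $N$- and $\Sigma$-norms rather than energies — is the genuine derivative loss: \eqref{eq:REEqChr} expresses $\del_t(\Gamma-\Gamhat)$ through $\nabla\Sigma$ and $\nabla N$, so controlling $\Gamma-\Gamhat$ at order $l$ inevitably requires $\Sigma$ and $N$ at order $l+1$. One cannot hide this inside a closed energy hierarchy; instead one must accept a "one-sided" estimate that feeds off quantities already improved elsewhere (via the energy estimates of Section \ref{sec:en-est} and the lapse estimates of Section \ref{sec:lapse}, together with the coercivity Lemma \ref{lem:Sobolev-norm-equivalence-improved}), which is exactly how it will be used in the proof of Corollary \ref{cor:H-imp}. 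A secondary technical point is that \eqref{eq:REEqChr} is a local coordinate identity, so care is needed to re-express the local partial-derivative expressions in terms of $\nabhat$ and covariant $G$-derivatives using \eqref{eq:Christoffel-norm-handwaving}; this is routine but must be done to make the $H^l_G(U)$ norms well defined and the commutator bookkeeping clean.
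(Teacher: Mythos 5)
Your proposal follows the same strategy as the paper: differentiate the Sobolev norm using \eqref{eq:REEqChr} and the commutator formulas, recognize that the source terms in $N$ and $\Sigma$ sit at one order higher, pull their supremum out of the time integral, control $\int_t^{t_0} a(s)^{-3}\,ds$ via \eqref{eq:log-est} with $q\sim\epsilon^{1/8}$ to produce the $\epsilon^{-1/4}a^{-c\epsilon^{1/8}}$ weight, apply Gronwall, and iterate over $l$. The only cosmetic difference is that you run the argument on the unsquared norm via Lemma \ref{lem:weak-ftoc} and square at the end, whereas the paper works with the squared norm and distributes the cross term with a Young inequality (weights $\epsilon^{\pm 1/8}$) at the level of the differential inequality; both bookkeepings yield identical powers. (Minor typo in your text: $\dot a\simeq a^{-2}$, not $a^2$, though you use $\dot a/a\simeq a^{-3}$ correctly.)
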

\begin{proof}
Commuting the evolution equation \eqref{eq:REEqChr} with $\nabla^J$, we get for $J\in\N$, $J\leq 17$:
\begin{align*}
-\del_t\|\Gamma-\Gamhat\|_{\dot{H}^J_G}^2=&\,\int_U\Bigr[(\del_tG^{-1})\ast G^{-1}\ast\dots\ast G^{-1}\ast G\ast\nabla^J(\Gamma-\Gamhat)\ast\nabla^J(\Gamma-\Gamhat)\\
&\,\phantom{\int_M}+(G^{-1})\ast\dots\ast(G^{-1})\ast\del_tG\ast\nabla^J(\Gamma-\Gamhat)\ast\nabla^J(\Gamma-\Gamhat)\\
&\,\phantom{\int_M}+\left(a^{-3}\sum_{I_N+I_\Sigma=J+1}\nabla^{I_N}(N+1)\ast\nabla^{I_\Sigma}\Sigma+\frac{\dot{a}}a\nabla^{J+1}N\right)\ast\nabla^J(\Gamma-\Gamhat)\\
&\,\phantom{\int_M}+2\langle[\del_t,\nabla^J](\Gamma-\Gamhat),\nabla^J(\Gamma-\Gamhat)\rangle_G-3N\frac{\dot{a}}a\left\lvert\nabla^J(\Gamma-\Gamhat)\right\rvert_G^2\Bigr]\,\vol{G}
\end{align*}
We recall that \eqref{eq:APmidG} implies
\begin{equation}\label{eq:APChr}
\|\Gamma-\Gamhat\|_{C^{11}_G}\lesssim\sqrt{\epsilon}a^{-c\sqrt{\epsilon}}
\end{equation}
by \eqref{eq:Christoffel-norm-handwaving}. It follows from inserting this in \eqref{eq:commutator-aux-tensor} along with \eqref{eq:APSigma}, \eqref{eq:APmidSigma} and \eqref{eq:BsN} that
\begin{align*}
\|[\del_t,\nabla^J](\Gamma-\Gamhat)\|_{L^2_G}\lesssim&\, \sqrt{\epsilon}a^{-3-c\sqrt{\epsilon}}\|\Sigma\|_{H^J_G}+\sqrt{\epsilon} a^{-3-c\sqrt{\epsilon}}\|N\|_{H^J_G}+\epsilon a^{-3}\|\Gamma-\Gamhat\|_{H^{J-1}_G}
\end{align*}
is satisfied. Consequently and using the same strong $C_G$-norm bounds along with Lemma \ref{lem:lapse-maxmin}, the differential inequality becomes
\begin{align*}
-\del_t\|\Gamma-\Gamhat\|_{\dot{H}^J_G}^2
\lesssim&\,\epsilon^\frac18 a^{-3}\|\Gamma-\Gamhat\|_{\dot{H}^{J}_G}^2+\epsilon^{-\frac18}a^{-3}\left(\|N\|_{H^{J+1}_G}^2+\|\Sigma\|_{H^{J+1}_G}^2\right)\\
&\,+\epsilon^\frac{7}8a^{-3-c\sqrt{\epsilon}}\|\Sigma\|_{H^{J}_G}^2+\epsilon^\frac78a^{-3-c\sqrt{\epsilon}}\|N\|_{H^J_G}^2\\
&\,+\epsilon^\frac{15}8a^{-3-c\sqrt{\epsilon}}\|\Gamma-\Gamhat\|_{H^{J-1}_G}^2\,.
\end{align*}
Further, we analogously get
\begin{equation*}
-\del_t\|\Gamma-\Gamhat\|_{L^2_G}^2\lesssim\epsilon^\frac18a^{-3}\|\Gamma-\Gamhat\|_{L^2_G}^2+\epsilon^{-\frac18}a^{-3}\left(\|\Sigma\|_{H^1_G}^2+\|N\|_{H^1_G}^2\right)\,
\end{equation*}
and thus, with the Gronwall lemma and \eqref{eq:log-est}, 
\begin{align*}
\|\Gamma-\Gamhat\|_{L^2_G(\Sigma_t)}^2\lesssim&\,a^{-c\epsilon^\frac18}\left(\epsilon^4+\int_t^{t_0}\epsilon^{-\frac18}a(s)^{-3}\left(\|\Sigma\|_{H^1_G(\Sigma_s)}^2+\|N\|^2_{H^1_G(\Sigma_s)}\right)\,ds\right)\\
\lesssim&\,a^{-c\epsilon^{\frac18}}\left(\epsilon^4+\epsilon^{-\frac14}\sup_{s\in(t,t_0)}\left(\|\Sigma\|_{H^1_G(\Sigma_s)}^2+\|N\|^2_{H^1_G(\Sigma_s)}\right)\right)\,.
\end{align*}
This proves \eqref{eq:norm-est-Chr} for $l=0$, and we assume for an iterative argument that the statement has been proved for $l=J-1$. Then, we obtain (estimating the error terms in $\Sigma$ and $N$ by their supremum immediately):
\begin{align*}
-\del_t\|\Gamma-\Gamhat\|_{\dot{H}^J_G}^2\lesssim&\,\epsilon^\frac18a^{-3}\|\Gamma-\Gamhat\|_{\dot{H}^J_G}^2+\epsilon^{-\frac18}a^{-3}\left(\|N\|_{H^J_G}^2+\|\Sigma\|_{H^J_G}^2\right)\\
&\,+\epsilon^{\frac{7}8+4}a^{-3-c\epsilon^\frac18}+\epsilon^\frac{7}8a^{-3-c\epsilon^\frac18}\sup_{s\in(\cdot,t_0)}\left(\|N\|_{H^{J-1}(\Sigma_s)}^2+\|\Sigma\|_{H^{J-1}(\Sigma_s)}^2\right)\,.
\end{align*}
After integrating, applying the Gronwall lemma and dealing with the first line as before, we get
\begin{align*}
\|\Gamma-\Gamhat\|_{\dot{H}^J_G}^2\lesssim&\,\epsilon^4a^{-c\epsilon^\frac18}+\epsilon^{-\frac14}a^{-c\epsilon^\frac18}\sup_{s\in(\cdot,t_0]}\left(\|N\|_{H^J_G(\Sigma_s)}^2+\|\Sigma\|_{H^J_G(\Sigma_s)}^2\right)\\
&\,+\epsilon^{\frac{6}8+4}a^{-c\epsilon^\frac18}+\epsilon^\frac{6}8a^{-c\epsilon^\frac18}\sup_{s\in(\cdot,t_0)}\left(\|N\|_{H^{J-1}(\Sigma_s)}^2+\|\Sigma\|_{H^{J-1}(\Sigma_s)}^2\right)\,,
\end{align*}
where the second line can obviously be absorbed into the first up to constant. Combining this with the assumption yields \eqref{eq:norm-est-Chr} for $l=J$ and thus iteratively for all $l\leq 17$.
\end{proof}

\begin{lemma}[Sobolev norm estimates for the metric]\label{lem:norm-est-G} For any $t\in(t_{Boot},t_0]$ and any $l\in\N,\,l\leq 18$, we have:
\begin{equation}\label{eq:norm-est-G}
\|G-\gamma\|_{H^l_G(\Sigma_t)}^2\lesssim a^{-c\epsilon^\frac18}\left(\epsilon^4+\epsilon^{-\frac14}\sup_{s\in(t,t_0)}\left(\|N\|_{H^{l}_G(\Sigma_s)}^2+\|\Sigma\|_{H^{l}_G(\Sigma_s)}^2\right)\right)
\end{equation}
\end{lemma}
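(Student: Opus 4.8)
\textbf{Proof plan for Lemma~\ref{lem:norm-est-G}.}
The strategy is to mirror the argument of Lemma~\ref{lem:int-est-Chr}, replacing the Christoffel evolution equation \eqref{eq:REEqChr} with the metric evolution equation \eqref{eq:REEqG}. First I would take the time derivative of $\|G-\gamma\|_{\dot H^J_G}^2$ for $0\le J\le l$, writing $\vol{G}$-integrals and differentiating under the integral sign via \eqref{eq:delt-int}. Commuting \eqref{eq:REEqG} with $\nabla^J$ produces: a term from $\del_tG^{-1}$ hitting the index contractions (estimated by $\lvert\del_t G^{-1}\rvert_G\lesssim\epsilon a^{-3}$ from \eqref{eq:AP-deltG}); the genuine forcing term $a^{-3}\sum_{I_N+I_\Sigma=J}\nabla^{I_N}(N+1)\ast\nabla^{I_\Sigma}\Sigma+\frac{\dot a}{a}\nabla^J N$ coming from the right-hand side of \eqref{eq:REEqG}; the commutator $[\del_t,\nabla^J](G-\gamma)$; and the $-3N\frac{\dot a}{a}\lvert\nabla^J(G-\gamma)\rvert_G^2$ term from $\del_t\vol{G}$. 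The commutator is controlled using \eqref{eq:commutator-aux-tensor} together with the strong $C_G$ bounds \eqref{eq:APSigma}, \eqref{eq:APmidSigma}, \eqref{eq:BsN} exactly as in the previous lemma, giving $\|[\del_t,\nabla^J](G-\gamma)\|_{L^2_G}\lesssim \sqrt{\epsilon}a^{-3-c\sqrt{\epsilon}}(\|\Sigma\|_{H^J_G}+\|N\|_{H^J_G})+\epsilon a^{-3}\|G-\gamma\|_{H^{J-1}_G}$ — note one order \emph{fewer} derivatives on $\Sigma$ and $N$ than in Lemma~\ref{lem:int-est-Chr}, which is precisely why the supremum on the right of \eqref{eq:norm-est-G} is at order $l$ rather than $l+1$.

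The resulting differential inequality takes the schematic form
\[
-\del_t\|G-\gamma\|_{\dot H^J_G}^2\lesssim \epsilon^\frac18 a^{-3}\|G-\gamma\|_{\dot H^J_G}^2+\epsilon^{-\frac18}a^{-3}\left(\|N\|_{H^J_G}^2+\|\Sigma\|_{H^J_G}^2\right)+\epsilon^\frac{15}8 a^{-3-c\sqrt{\epsilon}}\|G-\gamma\|_{H^{J-1}_G}^2,
\]
where the $\frac{\dot a}{a}\nabla^J N$ forcing term is absorbed into the $\epsilon^{-1/8}a^{-3}\|N\|_{H^J_G}^2$ bucket using $\dot a\simeq a^{-2}$ (so $\frac{\dot a}{a}\simeq a^{-3}$) and a Young inequality, and the $N\frac{\dot a}{a}$-term from the volume form is absorbed into the first summand via \eqref{eq:BsN}. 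Then I integrate over $(t,t_0]$, use the initial data assumption (which gives $\|G-\gamma\|_{H^{18}_G(\Sigma_{t_0})}^2\lesssim\epsilon^4$ by \eqref{eq:init-ass}), apply the Gronwall lemma (Lemma~\ref{lem:gronwall}), and use \eqref{eq:log-est}/\eqref{eq:a-integrals} to bound the Gronwall exponential $\exp(K\int_t^{t_0}\epsilon^{1/8}a^{-3}\,ds)$ by $a^{-c\epsilon^{1/8}}$ up to constant. The $\epsilon^{-1/8}a^{-3}(\|N\|^2+\|\Sigma\|^2)$ term is handled by pulling out $\sup_{s\in(t,t_0)}$ and using $\int_t^{t_0}a(s)^{-3}\,ds\lesssim a(t)^{-c\epsilon^{1/8}}$ from \eqref{eq:log-est}, producing the $\epsilon^{-1/4}$ factor (one power of $\epsilon^{-1/8}$ from the prefactor, another from the scale-factor integral estimate which contributes $q^{-1}$ with $q\sim\epsilon^{1/8}$). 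This establishes \eqref{eq:norm-est-G} at order $J=0$.

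For the inductive step $J-1\mapsto J$, I estimate the $\|G-\gamma\|_{H^{J-1}_G}^2$ term on the right using the induction hypothesis (so it contributes $\epsilon^{46/8}a^{-c\epsilon^{1/8}}+\epsilon^{6/8}a^{-c\epsilon^{1/8}}\sup(\|N\|_{H^{J-1}}^2+\|\Sigma\|_{H^{J-1}}^2)$, which is dominated by the order-$J$ terms already present) and repeat the Gronwall step, iterating up to $l=18$. The main obstacle — though it is genuinely mild here — is bookkeeping the powers of $\epsilon$ so that the claimed $\epsilon^4$ and $\epsilon^{-1/4}$ coefficients come out correctly, and in particular making sure the derivative count is exactly $l$ (not $l+1$): this works because, unlike the Christoffel evolution equation where $\nabla^J(\Gamma-\Gamhat)$ couples to $\nabla^{J+1}$ of $N$ and $\Sigma$, the metric evolution equation \eqref{eq:REEqG} only couples $\del_t\nabla^J(G-\gamma)$ to $\nabla^J$ of $\Sigma$ and $N$ (the highest-order forcing term is $a^{-3}(N+1)\nabla^J\Sigma$ plus $\frac{\dot a}{a}\nabla^J N$, with no extra derivative), so no loss of regularity occurs. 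One should also note that, strictly, $\gamma$ is a fixed metric but $\nabla=\nabla[G]$, so $\nabla^J(G-\gamma)$ is not literally $\nabla^J G$; however $\nabla\gamma$ is controlled by $\Gamma-\Gamhat$, which by \eqref{eq:APChr} is $\sqrt\epsilon a^{-c\sqrt{\epsilon}}$-small in $C^{11}_G$ and by Lemma~\ref{lem:int-est-Chr} in $H^{15}_G$, so these correction terms are harmless and of the same type as those already estimated.
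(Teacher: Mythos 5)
Your proposal is correct and follows essentially the same route as the paper: compute $-\del_t\|G-\gamma\|_{\dot H^J_G}^2$ via \eqref{eq:REEqG}, control the forcing and commutator terms with the strong $C_G$ bounds, Young-split with $\epsilon^{\pm 1/8}$, integrate, apply Gronwall together with \eqref{eq:log-est} (which is indeed what produces the extra $\epsilon^{-1/8}$ making $\epsilon^{-1/4}$), and iterate over $J$ exactly as in Lemma~\ref{lem:int-est-Chr} — the paper writes out only $l=0$ and simply refers to the Christoffel lemma for the induction. One small misattribution worth flagging: the $[\del_t,\nabla^J]$ commutator estimate from \eqref{eq:commutator-aux-tensor} is at the \emph{same} order ($H^J_G$ on $\Sigma,N$) in both lemmas, so it is not the source of the one-derivative gain; that gain comes solely from the forcing term, as your final paragraph correctly identifies — \eqref{eq:REEqG} involves $\Sigma$ and $N$ undifferentiated while \eqref{eq:REEqChr} involves $\nabla\Sigma$ and $\nabla N$.
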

\begin{proof}
For $l=0$, we compute the following using \eqref{eq:REEqG} and \eqref{eq:REEqG-1}:
\begin{align*}
-\del_t\|G-\gamma\|_{L^2_G}^2=&\,\int_M\Bigr\{-2(\del_tG^{-1})^{i_1j_1}(G^{-1})^{i_2j_2}(G-\gamma)_{i_1i_2}(G-\gamma)_{j_1j_2}-2\langle\del_tG,G-\gamma\rangle_G\\
&\,\phantom{\int_M}-3N\frac{\dot{a}}a\lvert G-\gamma\rvert_G^2\,\Bigr\}\vol{G}\\
=&\,\int_M \Bigr\{(N+1)a^{-3} \left[\Sigma\ast(G-\gamma)+\Sigma\right]\ast(G-\gamma)+N\frac{\dot{a}}a\lvert G-\gamma\rvert_G^2\\
&\,\phantom{\int_M}-4N\frac{\dot{a}}a\langle G,G-\gamma\rangle_G\Bigr\}\vol{G}
\end{align*}
We apply \eqref{eq:APSigma} and \eqref{eq:BsN} and get
\begin{align*}
-\del_t\|G-\gamma\|_{L^2_G}^2\lesssim&\,\epsilon a^{-3}\|G-\gamma\|_{L^2_G}^2+a^{-3}\left(\|\Sigma\|_{L^2_G}+\|N\|_{L^2_G}\right)\|G-\gamma\|_{L^2_G}\\
\lesssim&\,\epsilon^\frac18 a^{-3}\|G-\gamma\|_{L^2_G}^2+\epsilon^{-\frac18}a^{-3}\left(\|\Sigma\|_{L^2_G}^2+\|N\|_{L^2_G}^2\right)\,.
\end{align*}
After integrating and applying the Gronwall lemma (as well as the initial data assumption), we obtain
\begin{align*}
\|G-\gamma\|_{L^2_G(\Sigma_t)}^2\lesssim&\, a^{-c\epsilon^\frac18}\left(\epsilon^4+\epsilon^{-\frac18}\int_t^{t_0}a(s)^{-3}\left(\|\Sigma\|_{L^2_G(\Sigma_s)}^2+\|N\|_{L^2_G(\Sigma_s)}^2\right)\,ds\right)\,.
\end{align*}
The statement for $l=0$ now follows taking the supremum over the norms under the integral and applying \eqref{eq:log-est}. This extends to higher orders via the same iteration argument as in Lemma \ref{lem:int-est-Chr}.\\

\end{proof}

\section{Big Bang stability: Improving the bootstrap assumptions}\label{sec:bs-imp}

In this section, we combine the energy estimates obtained in the last two sections to improve the boostrap assumptions for the energies themselves, and then show how this improves the behaviour of the solution norms. For an outline of the energy improvement arguments that we perform in \change{Section \ref{subsec:bs-imp-core}}, we refer back to Remark \ref{rem:en-est-strat}.

Before carrying out the improvements themselves, we quickly collect an estimate that shows that combining Lemmas \ref{lem:en-est-BR} and \ref{lem:en-est-Sigma} yields sufficient control on the energies themselves:

\begin{lemma}\label{lem:en-error-cancellation} Let $L\in 2\N$. Then, the following estimate is satisfied:
\begin{equation}
16\pi C^2a^{-3}(N+1)\langle\Lap^\frac{L}2\RE,\Lap^{\frac{L}2}\Sigma\rangle_G+8\pi C^2\frac{\dot{a}}a(N+1)\lvert\Lap^\frac{L}2\Sigma\rvert_G^2+6\frac{\dot{a}}a(N+1)\lvert\Lap^\frac{L}2\RE\rvert_G^2\geq 0
\end{equation}
\end{lemma}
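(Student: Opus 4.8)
The plan is to treat the left-hand side as a pointwise quadratic form in the pair of symmetric tracefree tensors $(\Lap^{\frac L2}\RE,\Lap^{\frac L2}\Sigma)$ and show it is nonnegative. First I would factor out the common positive scalar: by Lemma~\ref{lem:lapse-maxmin} one has $N+1=n>0$, and $\frac{\dot a}{a}>0$, $C>0$, $a>0$, so it suffices to show that
\[
16\pi C^2 a^{-3}\langle \Lap^{\frac L2}\RE,\Lap^{\frac L2}\Sigma\rangle_G+8\pi C^2\frac{\dot a}{a}\lvert \Lap^{\frac L2}\Sigma\rvert_G^2+6\frac{\dot a}{a}\lvert \Lap^{\frac L2}\RE\rvert_G^2\geq 0\,.
\]
Writing $x=\lvert \Lap^{\frac L2}\RE\rvert_G$, $y=\lvert \Lap^{\frac L2}\Sigma\rvert_G$ and using Cauchy--Schwarz, $\lvert\langle \Lap^{\frac L2}\RE,\Lap^{\frac L2}\Sigma\rangle_G\rvert\leq xy$, it is enough to check that
\[
6\frac{\dot a}{a}\,x^2-16\pi C^2 a^{-3}\,xy+8\pi C^2\frac{\dot a}{a}\,y^2\geq 0\,,
\]
i.e.\ that the discriminant of this quadratic form is nonpositive: $(8\pi C^2 a^{-3})^2\leq 6\cdot 8\pi C^2\,(\dot a/a)^2$, equivalently $8\pi C^2 a^{-4}\leq 6(\dot a/a)^2=6\,\dot a^2 a^{-2}$, i.e.\ $\tfrac{4\pi}{3}C^2 a^{-2}\leq \dot a^2$. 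This is exactly the squared form of the Friedman inequality \eqref{eq:diff-ineq-Friedman}, $\sqrt{\tfrac{4\pi}{3}}C a^{-2}\leq \dot a$, so the discriminant condition holds and the quadratic form is nonnegative.

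The one subtlety is that the cross term $16\pi C^2 a^{-3}\langle \Lap^{\frac L2}\RE,\Lap^{\frac L2}\Sigma\rangle_G$ need not be negative pointwise; if it is positive at a point the statement is immediate there, and if it is negative we bound its absolute value by $16\pi C^2 a^{-3}\,xy$ and invoke the discriminant computation above. So I would phrase the argument as: at each point of $\Sigma_t$, either the inner product is nonnegative (done, since all other terms are manifestly nonnegative), or it is negative and then the displayed quadratic-form inequality in $x,y$ applies. In the second case one also wants the inequality $6\frac{\dot a}{a}x^2 - 16\pi C^2 a^{-3}xy + 8\pi C^2\frac{\dot a}{a}y^2\ge 0$, which for a quadratic $\alpha x^2 - \beta xy + \gamma y^2$ with $\alpha,\beta,\gamma\ge 0$ follows from $\beta^2\le 4\alpha\gamma$; here $\beta^2=(16\pi C^2 a^{-3})^2=256\pi^2C^4 a^{-6}$ and $4\alpha\gamma=4\cdot 6\frac{\dot a}{a}\cdot 8\pi C^2\frac{\dot a}{a}=192\pi C^2\frac{\dot a^2}{a^2}$, so the needed inequality is $256\pi^2 C^4 a^{-6}\le 192\pi C^2 \dot a^2 a^{-2}$, i.e.\ $\frac{4\pi}{3}C^2 a^{-4}\le \dot a^2 a^{-2}$, i.e.\ $\frac{4\pi}{3}C^2 a^{-2}\le \dot a^2$, which is \eqref{eq:diff-ineq-Friedman} squared.

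I do not expect a genuine obstacle here — the content of the lemma is precisely that the Friedman inequality \eqref{eq:diff-ineq-Friedman} makes this particular combination of energy flux terms (coming from \eqref{eq:en-est-BR} and \eqref{eq:en-est-Sigma}) coercive, and the proof is a one-line discriminant check once one records that $N+1>0$ and that $\Lap^{\frac L2}\RE$, $\Lap^{\frac L2}\Sigma$ are symmetric tracefree so the inner products obey Cauchy--Schwarz. The only care needed is to keep track of the weights $\tfrac{16\pi}{1}$, $\tfrac{8\pi}{1}$, $6$ so that the discriminant inequality matches the squared Friedman inequality exactly, and to handle the sign of the cross term by a trivial case distinction. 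If one wanted to be slightly slicker one could complete the square directly: the left-hand side equals $6\frac{\dot a}{a}\big\lvert \Lap^{\frac L2}\RE + \tfrac{4\pi C^2 a^{-3}}{3\dot a/a}\Lap^{\frac L2}\Sigma\big\rvert_G^2 + \big(8\pi C^2\frac{\dot a}{a} - \tfrac{8\pi C^2 a^{-4}}{3\dot a/a}\big)\lvert \Lap^{\frac L2}\Sigma\rvert_G^2$ up to the sign of the cross term, and the coefficient of $\lvert\Lap^{\frac L2}\Sigma\rvert_G^2$ is nonnegative iff $\dot a^2\ge \tfrac{1}{3}a^{-2}$, which is weaker than and implied by \eqref{eq:diff-ineq-Friedman}; but the discriminant formulation is cleanest to write down.
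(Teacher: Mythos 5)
Your main argument is correct and is essentially the paper's proof: the paper likewise records $N+1>0$, applies Cauchy--Schwarz to the inner product, converts $a^{-3}$ to $\frac{\dot a}{a}$ via the Friedman inequality \eqref{eq:diff-ineq-Friedman}, and closes with Young's inequality, which is exactly your discriminant check $\beta^2\le 4\alpha\gamma$ in disguise. One slip in your optional ``complete the square'' aside: the coefficient of $\lvert\Lap^{L/2}\Sigma\rvert_G^2$ there should be $8\pi C^2\frac{\dot a}{a}-\frac{32\pi^2 C^4 a^{-6}}{3\dot a/a}$, whose nonnegativity is precisely $\dot a^2\ge\frac{4\pi}{3}C^2 a^{-4}$ (the squared Friedman inequality), not the weaker $\dot a^2\ge\frac13 a^{-2}$; the latter does not follow from \eqref{eq:diff-ineq-Friedman}, so the Friedman inequality is used sharply and there is no slack.
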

\begin{proof}
First, we recall that $N+1>0$ holds by Lemma \ref{lem:lapse-maxmin}. Additionally, we can apply \eqref{eq:diff-ineq-Friedman} and the Young inequality and get
\begin{align*}
\left\lvert 16\pi C^2a^{-3}(N+1)\langle\Lap^\frac{L}2\RE,\Lap^{\frac{L}2}\Sigma\rangle_G\right\rvert\leq&\,16\pi C^2\cdot\sqrt{\frac3{4\pi C^2}}\frac{\dot{a}}a\cdot (N+1)\lvert\Lap^{\frac{L}2}\RE\rvert_G\lvert\Lap^{\frac{L}2}\Sigma\rvert_G\\
\leq&\,4(N+1)\frac{\dot{a}}a \cdot\left(\sqrt{3}\cdot\lvert\Lap^\frac{L}2\RE\rvert_G\right)\cdot\left(\sqrt{4\pi C^2}\lvert\Lap^{\frac{L}2}\Sigma\rvert_G\right)\\
\leq
&\,6\frac{\dot{a}}a(N+1)\lvert\Lap^\frac{L}2\RE\rvert_G^2+8\pi C^2\frac{\dot{a}}a(N+1)\lvert\Lap^\frac{L}2\Sigma\rvert_G^2\,.
\end{align*}
\end{proof}
This shows that $\E^{(L)}(W,\cdot)+4\pi C^2\E^{(L)}(\Sigma,\cdot)$ is controlled by the sum of the left hand sides of the inequalities in Lemmas \ref{lem:en-est-BR} and \ref{lem:en-est-Sigma} for $L\in 2\N, 0\leq L\leq \change{18}$. 

\subsection{\change{Improving energy bounds}}\label{subsec:bs-imp-core}

\change{\begin{prop}[Improved energy bounds]\label{prop:en-bs-imp} Under the bootstrap assumptions (see Assumption \ref{ass:bootstrap}) and the initial data assumptions in Assumption \eqref{ass:init}, the following improved estimates hold on $(t_{Boot},t_0]$:
\begin{subequations}
\begin{align}
\E^{(\leq \change{18})}(\phi,\cdot)\lesssim&\,\epsilon^4a^{-c\epsilon^\frac18}\label{eq:en-bs-imp-SF}\\
\E^{(\leq \change{18})}(\Sigma,\cdot)\lesssim&\,\epsilon^\frac{15}4a^{-c\epsilon^\frac18}\label{eq:en-bs-imp-Sigma}\\
\E^{(\leq \change{18})}(W,\cdot)\lesssim&\,\epsilon^\frac{15}4a^{-c\epsilon^\frac18}\label{eq:en-bs-imp-BR}\\
\E^{(\leq 16)}(\Ric,\cdot)\lesssim&\,\epsilon^\frac72a^{-c\epsilon^\frac18}\label{eq:en-bs-imp-Ric}\\
\E^{(\leq 16)}(N,\cdot)+a^4\E^{(17)}(N,\cdot)+a^8\E^{(18)}(N,\cdot)\lesssim&\,\epsilon^\frac72a^{8-c\epsilon^\frac18}\label{eq:en-bs-imp-N}
\end{align}
\end{subequations}
\end{prop}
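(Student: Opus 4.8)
The plan is to prove \eqref{eq:en-bs-imp-SF}--\eqref{eq:en-bs-imp-N} by a single hierarchized Gronwall argument, carried out by induction over the even derivative orders $L\in\{0,2,4,\dots,\change{18}\}$; by \eqref{eq:drop-odd-en} this suffices to control every energy on the left hand sides. At order $L$ I would work with a total energy of the schematic form
\begin{align*}
\mathcal{F}_{L}:=&\,\E^{(L)}(\phi,\cdot)+\epsilon^{1/4}\left(\E^{(L)}(W,\cdot)+4\pi C^2\E^{(L)}(\Sigma,\cdot)\right)+\epsilon^{1/2}\E^{(L-2)}(\Ric,\cdot)\\
&\,+\epsilon^{1/4}a^4\E^{(L+1)}(\phi,\cdot)+\epsilon^{1/4}a^4\E^{(L+1)}(\Sigma,\cdot)+\epsilon^{1/2}a^4\E^{(L-1)}(\Ric,\cdot)\,,
\end{align*}
with the convention that summands of negative order are dropped, so that $\mathcal{F}_0$ carries no curvature pieces. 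The fixed ratio $1:4\pi C^2$ between the Bel-Robinson and $\Sigma$ energies is exactly the one needed to invoke Lemma \ref{lem:en-error-cancellation}; the weight $\epsilon^{1/4}$ on all geometric energies compensates the factor $\epsilon^{-1/8}$ incurred every time a high order lapse energy is traded for matter and $\Sigma$ energies via Corollary \ref{cor:en-est-lapse}; and the scaled order $L+1$ terms are forced into $\mathcal{F}_L$ because the Bel-Robinson evolution \eqref{eq:comeq-RE}--\eqref{eq:comeq-RB} couples to the scalar field at one derivative higher, a loss recovered through the elliptic div-curl estimate Lemma \ref{lem:en-est-Sigma-top} (with $\phi$ at order $L+1$ controlled by Lemma \ref{lem:en-est-SF-top} and $\Ric$ at order $L-1$ by Lemma \ref{lem:en-est-Ric-top}).

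The next step is to assemble the master integral inequality for $\mathcal{F}_L$. Adding \eqref{eq:en-est-SF}, $\epsilon^{1/4}$ times \eqref{eq:en-est-SF-top}, $\epsilon^{1/4}$ times the sum of \eqref{eq:en-est-BR} and $4\pi C^2$ times \eqref{eq:en-est-Sigma}, $\epsilon^{1/4}$ times \eqref{eq:en-est-Sigma-top}, and $\epsilon^{1/2}$ times each of \eqref{eq:en-est-Ric} and \eqref{eq:en-est-Ric-top} (together with their order $0$ analogues \eqref{eq:en-est-SF0}, \eqref{eq:en-est-BR0}, \eqref{eq:en-est-Sigma0}, \eqref{eq:en-est-Sigma-1}, \eqref{eq:en-est-Ric0} when $L=0$), I would use Lemma \ref{lem:en-error-cancellation} to discard the indefinite $\langle\Lap^{L/2}\RE,\Lap^{L/2}\Sigma\rangle_G$-type terms on the various left hand sides, observe that the surviving left hand side contributions are nonnegative (using $\dot a\simeq a^{-2}$ from \eqref{eq:Friedman} and $N+1>0$ from Lemma \ref{lem:lapse-maxmin}) and may be dropped, eliminate every remaining lapse energy through Corollaries \ref{cor:en-est-lapse} and \ref{cor:en-est-lapse-tilde}, and collect the outcome into the form advertised in Remark \ref{rem:en-est-strat}, namely
\begin{align*}
\mathcal{F}_{L}(t)\lesssim&\,\mathcal{F}_{L}(t_0)+\int_t^{t_0}\left(\epsilon^{1/8}a(s)^{-3}+a(s)^{-1-c\sqrt{\epsilon}}\right)\mathcal{F}_{L}(s)\,ds+\epsilon^{1/4}\mathcal{F}_{L}(t)\\
&\,+\int_t^{t_0}\epsilon^{1/8}a(s)^{-3-c\epsilon^{1/8}}\,\langle\text{energies of order }\le L-2\rangle(s)\,ds+\sqrt{\epsilon}\,\langle\text{energies of order }\le L-2\rangle(t)\,,
\end{align*}
the last two terms being absent when $L=0$. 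The routine but lengthy point here is to check that every lapse contribution produced by Corollary \ref{cor:en-est-lapse} and every term in the $L^2_G$ error estimates of Section \ref{subsec:L2-error-est} is dominated by a term already displayed; this is precisely where the choice of weights pays off (the $\epsilon^{-1/8}$ losses become $\epsilon^{1/8}$ once multiplied by $\epsilon^{1/4}$, and the $a^{4-c\sqrt\epsilon}$-type prefactors in Lemma \ref{lem:en-est-Sigma-top} are small toward the Big Bang rather than large since $a\searrow0$), and where one must verify that the curvature energies, which by \eqref{eq:drop-odd-en} reduce to even orders, are available one step ahead wherever they are needed.

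Finally I would close the induction. Absorbing $\epsilon^{1/4}\mathcal{F}_L(t)$ into the left hand side and applying the Gronwall lemma (Lemma \ref{lem:gronwall}), the integrating factor is $\lesssim a(t)^{-c\epsilon^{1/8}}$ by \eqref{eq:log-est} and \eqref{eq:a-integrals} after updating $c$. By the inductive hypothesis all energies of order $\le L-2$ are $\lesssim\epsilon^4a^{-c\epsilon^{1/8}}$, so \eqref{eq:a-integrals} bounds the remaining integral term by $\lesssim\epsilon^{1/8}\epsilon^4a(t)^{-c\epsilon^{1/8}}$ and the non-integral term by $\lesssim\sqrt\epsilon\,\epsilon^4$; together with $\mathcal{F}_L(t_0)\lesssim\epsilon^4$ from \eqref{eq:init-ass-en} this gives $\mathcal{F}_L(t)\lesssim\epsilon^4a(t)^{-c\epsilon^{1/8}}$. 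Reading off the summands yields $\E^{(L)}(\phi,\cdot)\lesssim\epsilon^4a^{-c\epsilon^{1/8}}$, $\E^{(L)}(W,\cdot)+\E^{(L)}(\Sigma,\cdot)\lesssim\epsilon^{15/4}a^{-c\epsilon^{1/8}}$ and $\E^{(L-2)}(\Ric,\cdot)\lesssim\epsilon^{7/2}a^{-c\epsilon^{1/8}}$; summing over $L$ and invoking \eqref{eq:drop-odd-en} gives \eqref{eq:en-bs-imp-SF}--\eqref{eq:en-bs-imp-Ric}, where for the top even order of the curvature energy one additionally uses the local coordinate bound of Remark \ref{rem:redundancy} fed by Lemma \ref{lem:norm-est-G}. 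The bound \eqref{eq:en-bs-imp-N} then follows by inserting the just obtained estimates for $\E^{(L)}(\phi,\cdot)$, $\E^{(L)}(\Sigma,\cdot)$, $\E^{(\le L)}(\Ric,\cdot)$ and the control of $\|\nabla\phi\|_{H^L_G}$ from Lemma \ref{lem:norm-est-nablaphi} into Corollaries \ref{cor:en-est-lapse} and \ref{cor:en-est-lapse-tilde}. The main obstacle I anticipate is precisely the simultaneous balancing of the three competing mechanisms built into $\mathcal{F}_L$ -- the rigid Bel-Robinson/$\Sigma$ ratio demanded by Lemma \ref{lem:en-error-cancellation}, the $\epsilon^{1/4}$ weights forced by lapse elimination (which must keep the effective self-interaction prefactor no more singular than $a^{-3}$, so that the Gronwall factor stays $\lesssim a^{-c\epsilon^{1/8}}$, and with an $\epsilon$-gain of at least $\epsilon^{1/8}$, as required for the improvement of Remark \ref{rem:bs-strategy}), and the scaled order $L+1$ energies forced by the derivative loss in \eqref{eq:comeq-RE}--\eqref{eq:comeq-RB} -- together with verifying that the top-order bookkeeping, where the curvature input at order $16$ is re-routed through the metric estimate, does not accumulate a divergent constant as $L$ increases.
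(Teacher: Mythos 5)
Your overall architecture — induction over even derivative orders, a weighted total energy combining $\phi$, $\Sigma$, $W$ and $\Ric$ energies at orders $L$, $L\pm1$, $L-2$, cancellation of the indefinite $\RE\cdot\Sigma$ cross-terms via Lemma~\ref{lem:en-error-cancellation}, lapse elimination via Corollaries~\ref{cor:en-est-lapse}--\ref{cor:en-est-lapse-tilde}, and a Gronwall closure — matches the paper's proof. The error lies in your weight on the order-$(L+1)$ scalar field energy. You give $a^4\E^{(L+1)}(\phi,\cdot)$ the weight $\epsilon^{1/4}$ in $\mathcal{F}_L$, i.e.\ you scale \eqref{eq:en-est-SF-top} by $\epsilon^{1/4}$ before adding. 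But the Bel-Robinson estimate \eqref{eq:en-est-BR} carries the term $\epsilon^{-1/8}a^{-3}\cdot a^4\E^{(L+1)}(\phi,\cdot)$ on its right hand side; after multiplying \eqref{eq:en-est-BR} by your weight $\epsilon^{1/4}$ on $\E^{(L)}(W,\cdot)$ this becomes $\epsilon^{1/8}a^{-3}\cdot a^4\E^{(L+1)}(\phi,\cdot)$, and since your $\mathcal{F}_L$ only controls $a^4\E^{(L+1)}(\phi,\cdot)\leq\epsilon^{-1/4}\mathcal{F}_L$, the resulting Gronwall coefficient is $\epsilon^{1/8}\cdot\epsilon^{-1/4}a^{-3}=\epsilon^{-1/8}a^{-3}$ — not the $\epsilon^{1/8}a^{-3}$ you claim. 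The integrating factor $\exp(K\epsilon^{-1/8}\int a^{-3})\lesssim a^{-c\epsilon^{-1/8}}$ then diverges far too fast and the bootstrap improvement of Remark~\ref{rem:bs-strategy} fails. The paper instead assigns $a^4\E^{(L+1)}(\phi,\cdot)$ weight $1$ (equivalently, adds \eqref{eq:en-est-SF-top} unscaled), so the coefficient becomes $\epsilon^{1/8}a^{-3}$ as needed.

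A second, related miscalibration: you also put $\epsilon^{1/4}$ on $a^4\E^{(L+1)}(\Sigma,\cdot)$ (i.e.\ scale the div-curl estimate \eqref{eq:en-est-Sigma-top} by $\epsilon^{1/4}$). Since \eqref{eq:en-est-Sigma-top} is a pointwise elliptic inequality with an $O(1)$ prefactor on $\E^{(L)}(W,\cdot)$, scaling by $\epsilon^{1/4}$ leaves a non-integral term $\epsilon^{1/4}K\E^{(L)}(W,\cdot)\leq K\mathcal{F}_L(t)$ (using $\E^{(L)}(W,\cdot)\leq\epsilon^{-1/4}\mathcal{F}_L$) which cannot be absorbed into the left-hand side for an arbitrary implicit constant $K$. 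The paper uses weight $\epsilon^{1/2}$ here, producing $K\epsilon^{1/4}\mathcal{F}_L(t)$, which is absorbable; the companion weight $\epsilon^{3/4}$ on $a^4\E^{(L-1)}(\Ric,\cdot)$ then preserves the ratio $\epsilon^{-1/4}$ to $a^4\E^{(L+1)}(\Sigma,\cdot)$ so that the $\epsilon^{-1/8}a^{-3}\cdot a^4\E^{(L+1)}(\Sigma,\cdot)$ term in Lemma~\ref{lem:en-est-Ric-top} is still benign. Both of your weight choices can be repaired to the paper's, after which the argument closes as you describe.
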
}
\change{
\begin{proof}
We prove this estimate by performing an induction over even energy orders. Starting at order $0$, we first observe that by Lemma \ref{lem:en-error-cancellation}, we can bound the (base level) total energy
\[\E^{(0)}_{total}:=\E^{(0)}(\phi,\cdot)+\epsilon^\frac14\left(\E^{(0)}(W,\cdot)+4\pi C^2\E^{(0)}(\Sigma,\cdot)\right)+a^4\E^{(1)}(\phi,\cdot)+\epsilon^\frac12\E^{(1)}(\Sigma,\cdot)\]
by the sum of the left hand side of \eqref{eq:en-est-SF0}, the left hand side of \eqref{eq:en-est-BR0} weighted by $\epsilon^\frac14$ and the left hand side of \eqref{eq:en-est-Sigma0} weighted by $4\pi C^2\cdot\epsilon^\frac14$, and the left hand sides of \eqref{eq:en-est-SF-top} and $\epsilon^\frac12\cdot$\eqref{eq:en-est-Sigma-top} extended to $L=0$.\footnote{We need to weight $\E^{(0)}(\Sigma,\cdot)$ in the total energy by $K\cdot\epsilon^\frac14$ for some $K>0$ to balance out the $\epsilon^{-\frac18}$-weight from the scalar field energy on the right hand side of \eqref{eq:en-est-Sigma0}. The weight on the Bel-Robinson energy is then needed to obtain the cancellation in Lemma \ref{lem:en-error-cancellation}\change{. The additional weight on $a^4\E^{(1)}(\Sigma,\cdot)$ is needed so that the div-curl-estimates only generates a term of size $\epsilon^\frac14\E^{(L)}_{total}$ that can be absorbed later in the argument.}} Combining said estimates and inserting the initial data assumption from \eqref{eq:init-ass-en}, the following holds in total:
\begin{equation}\label{eq:en-eq-total0}
\E^{(0)}_{total}(t)\lesssim\epsilon^4+\int_t^{t_0}\left(\epsilon^\frac18a(s)^{-3}+a(s)^{-1-c\sqrt{\epsilon}}\right)\E^{(0)}_{total}(s)\,ds
\end{equation}
Applying the Gronwall lemma (see Lemma \ref{lem:gronwall}) to \eqref{eq:en-eq-total0}, we get for some suitable constant $c^\prime>0$:
\begin{equation*}\label{eq:en-bs-imp-base}
\E^{(0)}_{total}(t)\lesssim \epsilon^4\exp\left(c^\prime\int_t^{t_0}\epsilon^\frac18 a(s)^{-3}+a(s)^{-1-c\sqrt{\epsilon}}\,ds\right)\lesssim \epsilon^4a^{-c^\prime\epsilon^\frac18}
\end{equation*}
Now assume that, for $L\in2\N, 2\leq L\leq 18$, we have already shown
\begin{subequations}
\begin{align}
\E^{(\leq L-2)}(\phi,\cdot)+\epsilon^\frac14\left(\E^{(\leq L-2)}(\Sigma,\cdot)+\E^{(\leq L-2)}(W,\cdot)\right)\lesssim&\,\epsilon^4 a^{-c\epsilon^\frac18}\label{eq:ind-ass-core-en}
\end{align}
on $(t_{Boot},t_0]$. Note that \eqref{eq:en-bs-imp-base} means this holds true for $L=2$ after updating $c>0$. Further, if $L\geq 4$ holds, we assume
\begin{align}
\E^{(\leq L-4)}(\Ric,\cdot)\lesssim&\,\epsilon^{\frac72}a^{-c\epsilon^\frac18}\,.\label{eq:ind-ass-Ric}
\end{align}
\end{subequations}
We will show that these assumptions hold at $L=4$ after having shown the induction step for $L=2$.
We define, for $2\leq L\leq 18$,
\begin{align*}
\E^{(L)}_{total}:=&\,\E^{(L)}(\phi,\cdot)+\epsilon^\frac14\left(\E^{(L)}(W,\cdot)+4\pi C^2\E^{(L)}(\Sigma,\cdot)\right)+a^4\E^{(L+1)}(\phi,\cdot)+\epsilon^\frac12a^4\E^{(\changefinal{L+1})}(\Sigma,\cdot)\\
&\,+\epsilon^\frac12\E^{(L-2)}(\Ric,\cdot)+\epsilon^\frac34a^4\E^{(L-1)}(\Ric,\cdot)\,.
\end{align*}
We combine the respective energy estimates with the appropriate scalings\footnote{The weights on all terms beside the curvature energies are necessary for the same reasons as at order $0$. We need to scale the curvature energy at order $L$ by $\epsilon^\frac12$ to account for $\epsilon^{-\frac18}a^{-3}\E^{(L)}(\Sigma,\cdot)$ in \eqref{eq:en-est-Ric}, and the weight on the $L+1$-order curvature energy again needs to be chosen according to that on $\E^{(L+1)}(\Sigma,\cdot)$.}, namely (in the listed order) \eqref{eq:en-est-SF}, \eqref{eq:en-est-BR}, \eqref{eq:en-est-Sigma}, \eqref{eq:en-est-SF-top}, \eqref{eq:en-est-Sigma-top}, \eqref{eq:en-est-Ric} and \eqref{eq:en-est-Ric-top}. Observe that the sum of these scaled left hand sides controls $\E_{total}^{(L)}$ by Lemma \ref{lem:en-error-cancellation}. Combining all of these estimates and inserting the initial data assumption \eqref{eq:init-ass-en}, we get the following estimate:

\begin{subequations}
\begin{align}
\E_{total}^{(L)}(t)\lesssim&\,\epsilon^4+\int_t^{t_0}\left(\epsilon^\frac18a(s)^{-3}+a(s)^{-1-c\sqrt{\epsilon}}\right)\E_{total}^{(L)}(s)\label{eq:en-est-top-total1}\,ds\\
&\,+\int_t^{t_0}\left\{\epsilon^\frac38a(s)^{-3-c\sqrt{\epsilon}}\left[\E^{(\leq L-2)}(\phi,s)+\E^{(\leq L-2)}(\Sigma,s)\right]\right.\label{eq:en-est-top-total2}\\
&\,\phantom{+\int_t^{t_0}\bigr\{}
+\epsilon^\frac{17}8a(s)^{-3-c\sqrt{\epsilon}}\E^{(\leq L-2)}(W,s)+\underbrace{\epsilon^\frac{11}8\E^{(\leq L-4)}(\Ric,s)}_{\text{if }L=4}\bigr\}\,ds
\label{eq:en-est-top-total4}\\
&\,+\epsilon^\frac14\left(a(t)^{4-c\sqrt{\epsilon}}+\epsilon a(t)^{2-c\sqrt{\epsilon}}\right)\cdot \epsilon^\frac12\E^{(L)}(\Sigma,t)+\epsilon^\frac12\E^{(L)}(\phi,t)+\epsilon^\frac14\cdot \epsilon^\frac14\E^{(L)}(W,t)\label{eq:en-est-top-total5}\\
&\,+\epsilon^\frac74\cdot \epsilon^\frac34a^4\E^{(L-1)}(\Ric,t)+\epsilon^\frac52 a^{-c\sqrt{\epsilon}}\E^{(\leq L-2)}(\phi,t)+\epsilon^\frac12 a^{2-c\sqrt{\epsilon}}\E^{(\leq L-2)}(\Sigma,t)\label{eq:en-est-top-total6}\\
&\,+\epsilon^\frac32 a^{2-c\sqrt{\epsilon}}\E^{(\leq L-2)}(\Ric,t)\label{eq:en-est-top-total7}
\end{align}
\end{subequations}

We briefly summarize which inequalities contain the listed error term bounds as explicit terms: The first two terms in \eqref{eq:en-est-top-total2} come from \eqref{eq:en-est-Ric} and the latter from \eqref{eq:en-est-BR}
, those in \eqref{eq:en-est-top-total4} from 
\eqref{eq:en-est-SF} and \eqref{eq:en-est-Ric}, and finally the last three lines are precisely the scaled right hand side of \eqref{eq:en-est-Sigma-top}. Regarding the curvature energies in the various individual energy estimates, any summand with $\E^{(L-3)}(\Ric,\cdot)$ can be split using \eqref{eq:ibp-trick}, the resulting summands containing $\E^{(L-2)}(\Ric,\cdot)$ can always be absorbed into the total energy term in the first line, and anything with $\E^{(\leq L-4)}(\Ric,\cdot)$ is tracked in $\langle\text{Err}\rangle_L$ for $L\geq 4$.\\

\noindent Inserting \eqref{eq:ind-ass-core-en}-\eqref{eq:ind-ass-Ric}, \eqref{eq:en-est-top-total2}-\eqref{eq:en-est-top-total4} can be bounded up to constant by $\epsilon^\frac{33}8a^{-3-c\epsilon^\frac18}$. Here, the error term dominating all others arises from 
\[\epsilon^\frac38a(s)^{-3-c\sqrt{\epsilon}}\E^{(\leq L-2)}(\Sigma,s)\,.\]

Regarding \eqref{eq:en-est-top-total5}-\eqref{eq:en-est-top-total7}, notice that the first four summands can be bounded from above by $\epsilon^\frac14\E_{total}^{(L)}(t)$ up to constant. For the remaining three terms, we can again insert the induction assumptions \eqref{eq:ind-ass-core-en}-\eqref{eq:ind-ass-Ric}, bounding them by $\epsilon^\frac{17}4a(t)^{-c\epsilon^\frac18}$. \\

\noindent In summary and after rearranging, \eqref{eq:en-est-top-total1}-\eqref{eq:en-est-top-total7} becomes for some constant $K>0$:
\begin{align*}
(1-K\epsilon^\frac14)\E_{total}^{(L)}(t)\lesssim&\,\epsilon^4+\int_t^{t_0}\left(\epsilon^\frac18a(s)^{-3}+a(s)^{-1-c\sqrt{\epsilon}}\right)\E_{total}^{(L)}(s)\,ds+\int_t^{t_0}\epsilon^\frac{33}8a(s)^{-3-c\epsilon^\frac18}\,ds\\
&\,+\epsilon^\frac{17}4a(t)^{-c\epsilon^\frac18}\\
\lesssim&\,\epsilon^4a(t)^{-c\epsilon^\frac18}+\int_t^{t_0}\left(\epsilon^\frac18a(s)^{-3}+a(s)^{-1-c\sqrt{\epsilon}}\right)\E_{total}^{(L)}(s)\,ds\,
\end{align*} 
The prefactor on the left hand side is positive for small enough $\epsilon>0$, and the Gronwall lemma then yields
\begin{equation}\label{eq:total-en-imp}
\E^{(L)}_{total}(t)\lesssim \epsilon^4a^{-c\epsilon^\frac18}\,.
\end{equation}
In particular, this directly implies that the induction assumptions \eqref{eq:ind-ass-core-en} and \eqref{eq:ind-ass-Ric}, using \eqref{eq:drop-odd-en} to cover the skipped odd order, hold at order $L$, completing the induction step, and clearly also that \eqref{eq:ind-ass-Ric} holds for $L-2=2$ using \eqref{eq:total-en-imp} at order $2$. This completes the induction argument, proving \eqref{eq:en-bs-imp-SF}-\eqref{eq:en-bs-imp-Ric}.
Finally, applying the obtained improved estimates for $\nabla\phi$ and $\Ric[G]$ to Corollary \ref{cor:en-est-lapse-tilde}, we also get \eqref{eq:en-bs-imp-N}. 
\end{proof}}

\subsection{Improving solution norm control}\label{subsec:bs-imp-norm}

To close the bootstrap argument, it now remains to show that the improved energy bounds also imply improved bounds for $\mathcal{H}$ and $\mathcal{C}$. The former follows almost directly using Lemma \ref{lem:Sobolev-norm-equivalence-improved}:

\begin{corollary}[Improved Sobolev norm bounds]\label{cor:H-imp} On $(t_{Boot},t_0]$, the following estimates hold:
\begin{subequations}
\change{\begin{align}
\mathcal{H}\lesssim&\,\epsilon^\frac74 a^{-c\epsilon^\frac18}\label{eq:H-norm-imp}\\
\|\Sigma\|_{H^{18}_G}^2\lesssim&\,\epsilon^\frac{15}4a^{-c\epsilon^\frac18}\label{eq:H-Sigma-imp}\\
\|N\|_{H^{18}_G}^2\lesssim&\,\epsilon^4a^{-c\epsilon^\frac18}\label{eq:H-lapse-imp}
\end{align}}
\end{subequations}
\end{corollary}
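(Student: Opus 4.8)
The strategy is to feed the improved energy bounds of Proposition \ref{prop:en-bs-imp} into the near-coercivity statement of Lemma \ref{lem:Sobolev-norm-equivalence-improved}, treating each summand of $\mathcal{H}$ from Definition \ref{def:sol-norm} separately. For the scalar $\Psi$ and the symmetric $(0,2)$-tensors $\Sigma,\RE,\RB,\Ric[G]+\frac29G$, the relevant estimate among \eqref{eq:Sobolev-norm-equiv-zeta2l}--\eqref{eq:Sobolev-norm-equiv-T2l+1} bounds the Sobolev norm squared at the order occurring in $\mathcal{H}$ by $\|\Lap^{l}(\cdot)\|_{L^2_G}^2$ -- which is exactly the corresponding energy in Definition \ref{def:energies} -- plus $a^{-c\sqrt\epsilon}$ times strictly lower-order energies of the same variable and a curvature remainder of the type appearing in \eqref{eq:Sobolev-norm-equiv-T2l}. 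Into the first two I would insert \eqref{eq:en-bs-imp-SF}--\eqref{eq:en-bs-imp-Ric}; into the supremum-norm prefactor of the remainder I insert the a priori bounds of Lemma \ref{lem:AP} (so $\|\Psi\|_{C_G}^2,\|\Sigma\|_{C_G}^2,\|\RE\|_{C_G}^2\lesssim\epsilon^2a^{-c\sqrt\epsilon}$, $\|\RB\|_{C_G}^2\lesssim\epsilon^2 a^{4-c\sqrt\epsilon}$, $\|\Ric[G]+\frac29G\|_{C_G}^2\lesssim\epsilon a^{-c\sqrt\epsilon}$); and into the Ricci factor \eqref{eq:en-bs-imp-Ric}. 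Since $\sqrt\epsilon\leq\epsilon^{1/8}$ and $a\in(0,1)$, every $a^{-c\sqrt\epsilon}$ produced is absorbed into $a^{-c\epsilon^{1/8}}$ after enlarging $c$, so the Ricci remainders land at order $\epsilon^{11/2}a^{-c\epsilon^{1/8}}$ or better and are subdominant. This yields $\|\Psi\|_{H^{18}_G}^2\lesssim\epsilon^4 a^{-c\epsilon^{1/8}}$, $\|\Sigma\|_{H^{18}_G}^2,\|\RE\|_{H^{18}_G}^2,\|\RB\|_{H^{18}_G}^2\lesssim\epsilon^{15/4}a^{-c\epsilon^{1/8}}$ (hence \eqref{eq:H-Sigma-imp}) and $\|\Ric[G]+\frac29G\|_{H^{16}_G}^2\lesssim\epsilon^{7/2}a^{-c\epsilon^{1/8}}$.

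For $\nabla\phi$ I would split by order. For the $H^{17}_G$-piece of $\mathcal{H}$, use the momentum-constraint bound of Lemma \ref{lem:norm-est-nablaphi}, $\|\nabla\phi\|_{H^{17}_G}\lesssim(1+\epsilon a^{-c\sqrt\epsilon})\|\Sigma\|_{H^{18}_G}+\epsilon a^{-c\sqrt\epsilon}\|\Psi\|_{H^{17}_G}$, together with the $\Sigma$- and $\Psi$-bounds just obtained; for the top-order piece $a^2\|\nabla\phi\|_{\dot H^{18}_G}$, note that $a^4\|\nabla\phi\|_{\dot H^{18}_G}^2$ is bounded via \eqref{eq:Sobolev-norm-equiv-zeta2l+1} by $a^4\|\nabla\Lap^9\phi\|_{L^2_G}^2\lesssim\E^{(18)}(\phi,\cdot)$ plus lower-order and curvature errors, so again $\lesssim\epsilon^4a^{-c\epsilon^{1/8}}$ by \eqref{eq:en-bs-imp-SF}. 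The metric piece $\|G-\gamma\|_{H^{18}_G}$ is handled directly by Lemma \ref{lem:norm-est-G}, whose right-hand side only involves $\sup_s(\|N\|_{H^{18}_G}^2+\|\Sigma\|_{H^{18}_G}^2)$; since $s\mapsto a(s)^{-c\epsilon^{1/8}}$ is monotone, the supremum may be pulled out and one gets $\|G-\gamma\|_{H^{18}_G}^2\lesssim\epsilon^{7/2}a^{-c\epsilon^{1/8}}$.

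The delicate case is the lapse, because applying Lemma \ref{lem:Sobolev-norm-equivalence-improved} to $N$ through \eqref{eq:en-bs-imp-N} loses a power $a^4$ at top order (that bound passes through $\tilde{\mathcal L}$, which loses a derivative). For the weighted pieces $a^{-4}\|N\|_{H^{16}_G}$ and $a^{-2}\|N\|_{\dot H^{17}_G}$ this is harmless, since \eqref{eq:en-bs-imp-N} is itself time-weighted: $\E^{(\leq 16)}(N,\cdot)\lesssim\epsilon^{7/2}a^{8-c\epsilon^{1/8}}$ and $a^4\E^{(17)}(N,\cdot)\lesssim\epsilon^{7/2}a^{8-c\epsilon^{1/8}}$, so after multiplying by $a^{-4},a^{-2}$ and invoking Lemma \ref{lem:Sobolev-norm-equivalence-improved} at $l=8$ one obtains $a^{-4}\|N\|_{H^{16}_G},a^{-2}\|N\|_{\dot H^{17}_G}\lesssim\epsilon^{7/4}a^{-c\epsilon^{1/8}}$, the surplus powers of $a$ also swallowing the $a^{-c\sigma}$ coming from $\|N\|_{C_G}^2\lesssim\epsilon^2 a^{8-c\sigma}$ in the curvature remainder. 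For the unweighted $\|N\|_{\dot H^{18}_G}$-piece, and for \eqref{eq:H-lapse-imp}, I would instead route top-order lapse control through the non-derivative-losing elliptic estimate of Corollary \ref{cor:en-est-lapse} at $l=9$, which bounds $\E^{(18)}(N,\cdot)$ by $\epsilon^2\E^{(18)}(\Sigma,\cdot)+\E^{(18)}(\phi,\cdot)$ plus subdominant terms, hence by $\epsilon^4 a^{-c\epsilon^{1/8}}$ via \eqref{eq:en-bs-imp-SF}--\eqref{eq:en-bs-imp-Sigma}; Lemma \ref{lem:Sobolev-norm-equivalence-improved} then upgrades this to $\|N\|_{H^{18}_G}^2\lesssim\epsilon^4 a^{-c\epsilon^{1/8}}$. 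Collecting all summands and taking the worst $\epsilon$-power, which is $\epsilon^{7/4}$, gives \eqref{eq:H-norm-imp}. I expect the only genuine difficulty to be bookkeeping: keeping track of every $a^{-c\sqrt\epsilon}$, $a^{-c\sigma}$ and surplus $a$-power so that the final exponent is genuinely $a^{-c\epsilon^{1/8}}$ and no $\epsilon$-power degrades below $\epsilon^{7/4}$, and, for each variable and order, choosing whether to draw its control from the energy estimates of Proposition \ref{prop:en-bs-imp}, the elliptic lapse estimates of Section \ref{sec:lapse}, or the constraint-based bounds of Lemmas \ref{lem:norm-est-nablaphi} and \ref{lem:norm-est-G}.
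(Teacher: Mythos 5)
Your proposal follows the paper's proof essentially step for step: both feed the improved energy bounds of Proposition~\ref{prop:en-bs-imp} together with the a priori supremum bounds of Lemma~\ref{lem:AP} into the near-coercivity statement Lemma~\ref{lem:Sobolev-norm-equivalence-improved}, use Lemma~\ref{lem:norm-est-nablaphi} to recover $\nabla\phi$ from $\Sigma$ and $\Psi$, control the lapse via Corollary~\ref{cor:en-est-lapse} (and, for the $a$-weighted lapse pieces, \eqref{eq:en-bs-imp-N}), and invoke Lemma~\ref{lem:norm-est-G} last for $G-\gamma$ once the $\Sigma$- and $N$-bounds are in hand. Your treatment is if anything slightly more explicit than the paper's on the bookkeeping for the lapse summands of $\mathcal{H}$ and on absorbing the $a^{-c\sqrt\epsilon}$ and $a^{-c\sigma}$ factors into $a^{-c\epsilon^{1/8}}$; the substance of the argument is the same.
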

\begin{proof}
First, we apply the improved energy estimates from \change{Proposition \ref{prop:en-bs-imp} }as well as the strong $C_G$-norm bounds from Lemma \ref{lem:AP} to the near-coercivity estimates in Lemma \ref{lem:Sobolev-norm-equivalence-improved}. With this, we directly obtain the following Sobolev norm estimates (updating $c$):
\change{\begin{align*}
\|\Psi\|_{H_G^{\change{18}}}^2\lesssim&\,\epsilon^4a^{-c\epsilon^\frac18}+\epsilon a^{-c\epsilon^\frac18}\cdot\epsilon^\frac{15}4a^{-c\epsilon^\frac18}\lesssim\epsilon^4a^{-c\epsilon^\frac18}\\
\|\Sigma\|_{H^{18}_G}^2\lesssim&\,\epsilon^\frac{15}4a^{-c\epsilon^\frac18}\\
\|\Ric[G]+\frac29G\|_{H^{16}_G}^2\lesssim&\,\epsilon^\frac72 a^{-c\epsilon^\frac18}\\
\|\RE\|_{H^{18}_G}^2+\|\RB\|_{H^{18}_G}^2\lesssim&\,\epsilon^\frac{15}4a^{-c\epsilon^\frac18}
\end{align*}}
\change{By Lemma \ref{lem:norm-est-nablaphi}, we also have
\changefinal{\begin{equation}\label{eq:nabla-phi-norm-imp}
\|\nabla\phi\|_{H^{17}_G}\lesssim \left(1+\epsilon a^{-c\sqrt{\epsilon}}\right)\|\Sigma\|_{H^{18}_G}+\epsilon\|\Psi\|_{H^{18}_G}\lesssim \epsilon^\frac{15}4a^{-c\epsilon^\frac18}\,.
\end{equation}}}
We take particular care in showing that the improved bound holds for \changefinal{$a^2\|\nabla\phi\|_{\dot{H}^{18}_G}$}: First, note that \eqref{eq:en-bs-imp-Ric} \change{implies }$\E^{(\leq 17)}(\Ric,\cdot)\lesssim\epsilon^\frac72 a^{-c\epsilon^\frac18}$. Applying this along with \eqref{eq:APmidphi} to \changefinal{\eqref{eq:Sobolev-norm-equiv-nablazeta2l}}, as well as \eqref{eq:Sobolev-norm-equiv-zetalow} in the second line and \change{\eqref{eq:en-bs-imp-SF} as well as \eqref{eq:nabla-phi-norm-imp} }in the final step, we obtain:
\changefinal{\begin{align*}
a^4\|\nabla\phi\|_{H^{18}_G}^2\lesssim&\, a^4\|\nabla\Lap^9\phi\|_{L^2_G}^2+a^{4-c\sqrt{\epsilon}}\sum_{m=0}^8\|\nabla\Lap^m\phi\|_{L^2_G}^2+\epsilon a^{4-c\sqrt{\epsilon}}\cdot \E^{(\leq 16)}(\Ric,\cdot)\\
\lesssim&\,a^{4-c\sqrt{\epsilon}}\left(\|\nabla\Lap^{9}\phi\|_{L^2_G}^2+\|\nabla\phi\|_{H^{17}_G}^2\right)+\epsilon^\frac{9}2a^{-c\epsilon^\frac18}\\
\lesssim&\,a^{-c\sqrt{\epsilon}}\cdot \E^{(\leq 18)}(\phi,\cdot)+a^{4-c\sqrt{\epsilon}}\|\nabla\phi\|_{H^{17}_G}^2+\epsilon^\frac{9}2a^{-c\epsilon^\frac18}\\
\lesssim&\,\epsilon^\frac{15}4 a^{-c\epsilon^\frac18}
\end{align*}}
Further, inserting \eqref{eq:en-bs-imp-SF}, \eqref{eq:en-bs-imp-Sigma} and \eqref{eq:en-bs-imp-Ric} into Corollary \ref{cor:en-est-lapse} implies
\begin{align*}
\,a^8\|\Lap^{10} N\|^2_{L^2_G}+a^4\|\nabla\Lap^{9}N\|^2_{L^2_G}+\sum_{m=0}^{9}\|\Lap^mN\|_{L^2_G}^2
\lesssim&\,\epsilon^\frac{11}4a^{-c\epsilon^\frac18}
\end{align*}
and subsequently, applying Lemma \ref{lem:Sobolev-norm-equivalence-improved} as before,
\begin{equation*}
a^8\|N\|^2_{\dot{H}^{20}_G}+a^4\|N\|^2_{\dot{H}^{19}_G}+\|N\|^2_{H^{18}_G}\lesssim\epsilon^4a^{-c\epsilon^\frac18}\,.
\end{equation*}
Finally, having now shown \eqref{eq:H-Sigma-imp} and \eqref{eq:H-lapse-imp}, we can apply these to \eqref{eq:norm-est-G} to get
\begin{equation*}
\|G-\gamma\|^2_{H^{18}_G}\lesssim a^{-c\epsilon^\frac18}\left(\epsilon^4+\epsilon^{-\frac14+\frac{15}4}+\epsilon^{-\frac14+4}\right)\lesssim\epsilon^\frac{7}2a^{-c\epsilon^\frac18}\,,
\end{equation*}
proving \eqref{eq:H-norm-imp}.
\end{proof}

Intuitively, the bounds on $\mathcal{C}$ should now follow from $\mathcal{H}$ by the standard Sobolev embedding. However, since both of these norms are with respect to $G$, the embedding constant may be time dependent. To circumvent this issue, we need to switch between norms with \changefinal{respect }to $G$ and $\gamma$ and then apply the embedding with respect to $C_\gamma$ and $H_\gamma$. The following lemma ensures that we still obtain bootstrap improvements after performing these norm switches:

\begin{lemma}[Moving between norms]\label{lem:G-gamma-norm-switch} Let $l\in\N,\,l\leq \change{18}$, $\zeta$ be a scalar field, $\mathfrak{T}$ be an arbitrary $\Sigma_t$-tangent tensor\delete{ and let $U$ be either a coordinate neighbourhood on $\Sigma_t$ or $\Sigma_t$ itself}. Then, for some multivariate polynomial $P_l$ with $P_l(0,0)=0$, we have
\begin{subequations}
\begin{align}
\|\zeta\|_{C^l_G(U)}\lesssim&\,a^{-c\sqrt{\epsilon}}\|\zeta\|_{C^l_\gamma(\change{M})}+a^{-c\sqrt{\epsilon}}\|\zeta\|_{C_\gamma^{\max\left\{0,\lfloor\frac{l-1}2\rfloor\right\}}(\change{M})}\cdot P_l\left(\|G-\gamma\|_{C^{l-1}_\gamma(\change{M})},\|G^{-1}-\gamma^{-1}\|_{C^{l-1}_\gamma(\change{M})}\right) \label{eq:C-norm-exch-zeta}\\
\|\mathfrak{T}\|_{C^l_G(\change{M})}\lesssim&\,a^{-c\sqrt{\epsilon}}\|\mathfrak{T}\|_{C^l_\gamma(\change{M})}+a^{-c\sqrt{\epsilon}}\|\mathfrak{T}\|_{C_\gamma^{\max\left\{0,\lfloor\frac{l-1}2\rfloor\right\}}(\change{M})}\cdot P_l\left(\|G-\gamma\|_{C^l_\gamma(\change{M})},\|G^{-1}-\gamma^{-1}\|_{C^{l}_\gamma(\change{M})}\right)\label{eq:C-norm-exch-T}
\end{align}
\end{subequations}
as well as the same inequalities with the roles of $G$ and $\gamma$ reversed. For $l\leq 12$, this reduces to:
\begin{subequations}
\begin{align}
a^{c\sqrt{\epsilon}}\|\zeta\|_{C^l_\gamma(\change{M})}\lesssim\|\zeta\|_{C^l_G(\change{M})}\lesssim\,a^{-c\sqrt{\epsilon}}\|\zeta\|_{C^l_\gamma(\change{M})} \label{eq:AP-exch-zeta}\\
a^{c\sqrt{\epsilon}}\|\mathfrak{T}\|_{C^l_\gamma(\change{M})}\lesssim\|\mathfrak{T}\|_{C^l_G(\change{M})}\lesssim\,a^{-c\sqrt{\epsilon}}\|\mathfrak{T}\|_{C^l_\gamma(\change{M})}\,\label{eq:AP-exch-T}
\end{align}
\end{subequations}
Further, one has
\begin{subequations}
\begin{align}
\|\zeta\|^2_{H^l_\gamma(\change{M})}\lesssim&\,a^{-c\sqrt{\epsilon}}\|\zeta\|^2_{H^l_G(\change{M})}+a^{-c\epsilon^\frac18}\|\zeta\|_{C_G^{\lceil \frac{l-1}2\rceil}(M)}^2\left(\epsilon^4+\epsilon^{-\frac14}\sup_{s\in(\cdot,t_0)}\left(\|N\|_{H^{l-1}_G(\Sigma_s)}^2+\|\Sigma\|_{H^{l-1}_G(\Sigma_s)}^2\right)\right)\label{eq:H-norm-exch-zeta}\,,\\
\|\mathfrak{T}\|^2_{H^l_\gamma(\change{M})}\lesssim&\,a^{-c\sqrt{\epsilon}}\|\mathfrak{T}\|^2_{H^l_G(\change{M})}+a^{-c\epsilon^\frac18}\|\mathfrak{T}\|_{C_G^{\lceil \frac{l-1}2\rceil}(M)}^2\left(\epsilon^4+\epsilon^{-\frac14}\sup_{s\in(\cdot,t_0)}\left(\|N\|_{H^{l}_G(\Sigma_s)}^2+\|\Sigma\|_{H^{l}_G(\Sigma_s)}^2\right)\right) \label{eq:H-norm-exch-T}
\end{align}
\end{subequations}
\end{lemma}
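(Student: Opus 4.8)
\textbf{Proof strategy for Lemma \ref{lem:G-gamma-norm-switch}.}

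The plan is to reduce everything to the elementary observation that changing the metric in a pointwise norm only costs powers of the metric and its inverse, and that converting covariant derivatives between $\nabla$ and $\nabhat$ only costs (polynomially many) factors of $\Gamma-\Gamhat$. Since $\Gamma$ and $\Gamhat$ have the same coordinate expression up to the Christoffel symbols of $G$ and $\gamma$, and those are in turn controlled by $G-\gamma$ and $G^{-1}-\gamma^{-1}$ via \eqref{eq:Christoffel-norm-handwaving}, all the combinatorial work is already packaged. Concretely, first I would establish the pointwise comparison $\lvert\mathfrak{T}\rvert_G \lesssim a^{-c\sqrt\epsilon}\lvert\mathfrak{T}\rvert_\gamma$ and its reverse: at a point, $\lvert\mathfrak{T}\rvert_G^2$ is a contraction of $\mathfrak{T}\otimes\mathfrak{T}$ with copies of $G$ and $G^{-1}$, so it is bounded by $(\lvert G\rvert_\gamma \lvert G^{-1}\rvert_\gamma)^{r+s}\lvert\mathfrak{T}\rvert_\gamma^2$ up to a combinatorial constant, and by \eqref{eq:APmidG}-\eqref{eq:APmidG-1} (together with $\gamma^{-1}\pm(G^{-1}-\gamma^{-1})$ being uniformly comparable to $\gamma^{-1}$) these factors are $\lesssim a^{-c\sqrt\epsilon}$; symmetrically for the reverse bound. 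For scalars this is trivial. This gives the $l=0$ cases of all four inequalities.

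Next I would handle derivatives by induction on $l$. Writing $\nabla = \nabhat + (\Gamma-\Gamhat)\ast_\gamma(\cdot)$ schematically, one has $\nabla^l\mathfrak{T} = \sum \nabhat^{j_0}(\Gamma-\Gamhat)\ast_\gamma\cdots\ast_\gamma\nabhat^{j_m}(\Gamma-\Gamhat)\ast_\gamma\nabhat^{j}\mathfrak{T}$ with $j_0+\dots+j_m+j \le l$ and $j\le l$; each factor $\nabhat^{j_i}(\Gamma-\Gamhat)$ with $j_i\le l-1$ is controlled by $\|\Gamma-\Gamhat\|_{C^{l-1}_\gamma}$, which by \eqref{eq:Christoffel-norm-handwaving} is $\lesssim P_l(\|G-\gamma\|_{C^l_\gamma},\|G^{-1}-\gamma^{-1}\|_{C^l_\gamma})$ with $P_l(0,0)=0$; the single high-derivative factor $\nabhat^j\mathfrak{T}$ contributes $\|\mathfrak{T}\|_{C^l_\gamma}$ when $j=l$ (the pure term) and otherwise has $j\le l-1$, so pairing it with at least one $\Gamma-\Gamhat$ factor produces the term $\|\mathfrak{T}\|_{C^{\lfloor (l-1)/2\rfloor}_\gamma}\cdot P_l(\cdots)$ after distributing derivative budget (here $\lfloor(l-1)/2\rfloor$ is the worst case, when the budget splits evenly between the tensor and one connection factor). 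Applying the $l=0$ pointwise comparison once more to convert the resulting $\lvert\cdot\rvert_\gamma$ into $\lvert\cdot\rvert_G$ (costing one more $a^{-c\sqrt\epsilon}$) yields \eqref{eq:C-norm-exch-zeta}-\eqref{eq:C-norm-exch-T}; the scalar case drops one derivative on $\Gamma-\Gamhat$ (hence $C^{l-2}$, and $\lceil(l-1)/2\rceil$ becomes $\lfloor(l-1)/2\rfloor$ since the first covariant derivative of a scalar needs no connection). Reversing the roles of $G$ and $\gamma$ is identical, using the reverse pointwise bound and $\Gamhat-\Gamma = -(\Gamma-\Gamhat)$. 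For $l\le 12$ we may insert the a priori estimates \eqref{eq:APmidG}-\eqref{eq:APmidG-1} (giving $\|G^{\pm1}-\gamma^{\pm1}\|_{C^{12}_\gamma}\lesssim\sqrt\epsilon a^{-c\sqrt\epsilon}$, after first noting these also hold with $\gamma$-norms by the already-proven $l=0$ exchange), so $P_l(\cdots)\lesssim\sqrt\epsilon a^{-c\sqrt\epsilon}\lesssim 1$ and the correction term is absorbed, yielding the two-sided bounds \eqref{eq:AP-exch-zeta}-\eqref{eq:AP-exch-T}.

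Finally, for the Sobolev versions \eqref{eq:H-norm-exch-zeta}-\eqref{eq:H-norm-exch-T}, I would run the same schematic expansion of $\nabhat^l\mathfrak{T}$ in terms of $\nabla^j\mathfrak{T}$ and $\nabla^{j_i}(\Gamma-\Gamhat)$, but now integrate: the pure term gives $\|\mathfrak{T}\|^2_{\dot H^l_G}$ (after converting the volume form, which is comparable by \eqref{eq:APvol} up to $a^{-c\sqrt\epsilon}$, and the pointwise norm as above), and every mixed term carries at least one factor $\nabla^{j_i}(\Gamma-\Gamhat)$; for this factor I would put the low-order ones (all but at most one derivative budget on $\mathfrak{T}$) into $C^0_G$ via \eqref{eq:APChr}-type bounds, leaving either $\|\mathfrak{T}\|_{C^{\lceil(l-1)/2\rceil}_G}\|\Gamma-\Gamhat\|_{\dot H^{\le l}_G}$ or $\|\mathfrak{T}\|_{\dot H^{\le l}_G}\|\Gamma-\Gamhat\|_{C^0_G}$, and the latter is already absorbed into the first term by smallness of $\|\Gamma-\Gamhat\|_{C^0_G}$. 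The former is then estimated using the Christoffel Sobolev bound \eqref{eq:norm-est-Chr} of Lemma \ref{lem:int-est-Chr}, which is exactly $\|\Gamma-\Gamhat\|^2_{H^l_G}\lesssim a^{-c\epsilon^{1/8}}(\epsilon^4 + \epsilon^{-1/4}\sup_s(\|N\|^2_{H^{l+1}_G}+\|\Sigma\|^2_{H^{l+1}_G}))$ — with the $l-1$ appearing in the statement because $l$ derivatives on $\mathfrak{T}$ only produce up to $l-1$ derivatives on the connection factor in the scalar-curvature-free schematic (for tensors one keeps $l$). The main obstacle is bookkeeping: making the derivative-budget split in the schematic product rule precise enough that the middle argument of $P_l$ (respectively the order of the Sobolev norm of $\Gamma-\Gamhat$) comes out exactly as $l-1$ for scalars versus $l$ for tensors, and verifying that the one high-derivative factor is never forced onto two different legs simultaneously so that the "worst case" exponent $\lceil(l-1)/2\rceil$ or $\lfloor(l-1)/2\rfloor$ is actually attained and not exceeded. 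Everything else is a routine iteration over $l$, mirroring the structure of the proofs of Lemma \ref{lem:int-est-Chr} and Lemma \ref{lem:norm-est-G}.
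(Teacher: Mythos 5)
Your strategy — pointwise metric comparison at $l=0$, induction on $l$ via the schematic expansion $\nabla^l=\nabhat^l+\langle\Gamma-\Gamhat\text{ corrections}\rangle$ as in \eqref{eq:nabla-nabhat}, control of the Christoffel terms via \eqref{eq:Christoffel-norm-handwaving}, and Lemma \ref{lem:int-est-Chr} for the Sobolev version — is the same as the paper's. Two imprecisions are worth fixing before writing this up. First, in the $l\le 12$ simplification you claim the a priori bounds \eqref{eq:APmidG}--\eqref{eq:APmidG-1} transfer from $C^{12}_G$ to $C^{12}_\gamma$ ``by the already-proven $l=0$ exchange''; but the $l=0$ exchange only compares pointwise ($C^0$) norms and says nothing about the $\nabla$-versus-$\nabhat$ derivatives hiding inside a $C^{12}$ norm. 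What you actually need is the reversed version of \eqref{eq:C-norm-exch-T} at orders up to $l$, applied to $\mathfrak{T}=G^{\pm1}-\gamma^{\pm1}$, which is exactly why the paper carries \emph{both} directions of the exchange in its strong induction hypothesis and first derives $\|\Gamma-\Gamhat\|_{C^{l-1}_\gamma}\lesssim\sqrt{\epsilon}a^{-c\sqrt{\epsilon}}$ (its \eqref{eq:APmidChr}) from the reversed statement at lower orders before closing each step; as written, your parenthetical is circular. Second, the ``balanced split'' heuristic for the exponent $\lfloor\frac{l-1}{2}\rfloor$ misidentifies the worst case: when the derivative budget falls mostly on $\mathfrak{T}$, the accompanying $\Gamma-\Gamhat$ factor is low-order hence pointwise bounded, so that term is absorbed into the leading $a^{-c\sqrt{\epsilon}}\|\mathfrak{T}\|_{C^l_\gamma}$ piece; the $P_l$-weighted term is only needed when the Christoffel factors carry the higher budget, and it is this dichotomy (not an even split) that yields the stated exponent.
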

\begin{remark}
While we only need the tensorial inequalities for gradient vector fields and $(0,2)$-tensors when applied to norms in $\mathcal{H}$ and $\mathcal{C}$, the proof is simpler when considering tensors of arbitrary rank.
\end{remark}
\begin{proof}
We restrict ourselves to proving the tensorial statements; the scalar field analogues follow analagously except for the fact that, since $\nabla_i\zeta=\nabhat_i\zeta=\del_i\zeta$, error terms caused by Christoffel symbols always enter at one order less. Thus, it remains to show \eqref{eq:C-norm-exch-T}, \eqref{eq:AP-exch-T} and \eqref{eq:H-norm-exch-T} by iterating over derivative order. \\
Starting with the base level estimates, we have if $\mathfrak{T}$ is of rank $(r,s)$:
\begin{align*}
\lvert\mathfrak{T}\rvert_G^2-\lvert\mathfrak{T}\rvert_\gamma^2=&\,\Bigr[G_{i_1j_1}\dots G_{i_rj_r} (G^{-1})^{p_1q_1}\dots(G^{-1})^{p_sq_s}-\gamma_{i_1j_1}\dots \gamma_{i_rj_r} (\gamma^{-1})^{p_1q_1}\dots(\gamma^{-1})^{p_sq_s}\Bigr]\cdot\\
&\,\cdot{\mathfrak{T}^{i_1\dots i_r}}_{p_1\dots p_s}{\mathfrak{T}^{j_1\dots j_r}}_{q_1\dots q_s}
\end{align*}
We successively replace $G^{\pm 1}$ by $(G^{\pm 1}-\gamma^{\pm 1})+\gamma^{\pm 1}$, 
take the $\lvert\cdot\rvert_\gamma$-norm of each factor and use \eqref{eq:APmidG}-\eqref{eq:APmidG-1}. This yields
\begin{equation*}
\left\lvert\lvert\mathfrak{T}\rvert_{G}^2-\lvert\mathfrak{T}\rvert_\gamma^2\right\rvert\lesssim \sqrt{\epsilon}a^{-c\sqrt{\epsilon}}\lvert\mathfrak{T}\rvert_\gamma^2\,,
\end{equation*}
implying 
\eqref{eq:C-norm-exch-T} (and \eqref{eq:AP-exch-T}) for $l=0$ after rearranging and taking supremums suitably.\\
To show \eqref{eq:H-norm-exch-T} at base level, consider
\begin{align*}
&\,\int_{\change{M}} \lvert\mathfrak{T}\rvert_G^2\,\vol{G}-\int_{\change{M}}\lvert\mathfrak{T}\rvert_{\gamma}^2\,\vol{\gamma}\\
=&\,\int_{\change{M}} \left(\lvert\mathfrak{T}\rvert_G^2-\lvert\mathfrak{T}\rvert_\gamma^2\right)\,\vol{G}+\int_{\change{M}}\lvert\mathfrak{T}\rvert_\gamma^2\frac{\mu_G-\mu_\gamma}{\mu_\gamma}\,\vol{\gamma}\,.
\end{align*}
We can control the first summand on the right hand side 
as before, while we have $\lvert\mu_G-\mu_\gamma\rvert\lesssim \epsilon$ by \eqref{eq:APvol}. Hence,
\[(1-K\epsilon)\|\mathfrak{T}\|_{L^2_\gamma}^2\lesssim (1+\sqrt{\epsilon}a^{-c\sqrt{\epsilon}})\|\mathfrak{T}\|_{L^2_G}^2\]
follows for a suitable constant $K>0$, implying the statement for small enough $\epsilon>0$.

Next, we perform the iteration for \eqref{eq:C-norm-exch-T}, assuming the statement and the analogue with $\gamma$ and $G$ reversed to hold \change{up to order $l-1$}. As above, note that
\begin{equation*}
\left\lvert\,\left\lvert\nabla^J\mathfrak{T}\right\rvert_G^2-\left\lvert\nabhat^J\mathfrak{T}\right\rvert_\gamma^2\right\rvert\lesssim\sqrt{\epsilon}a^{-c\sqrt{\epsilon}}\left\lvert\nabhat^J\mathfrak{T}\right\rvert_\gamma^2+\change{(1+\sqrt{\epsilon}a^{-c\sqrt{\epsilon}})}\left\lvert\left\lvert\nabla^J\mathfrak{T}\right\rvert_\gamma^2-\left\lvert\nabhat^J\mathfrak{T}\right\rvert_\gamma^2\right\rvert\,
\end{equation*}
where we can rewrite the second term as
\begin{equation*}
\left\lvert 2\langle\nabhat^J\mathfrak{T}-\nabla^J\mathfrak{T},\nabhat\mathfrak{T}\rangle_\gamma-\lvert\nabla^J\mathfrak{T}-\nabhat^J\mathfrak{T}\rvert_\gamma^2\right\rvert
\end{equation*}
and hence obtain (moving between pointwise norms as before)
\begin{align*}
\left\lvert\,\left\lvert\nabla^J\mathfrak{T}\right\rvert_G^2-\left\lvert\nabhat^J\mathfrak{T}\right\rvert_\gamma^2\right\rvert
\lesssim&\,a^{-c\sqrt{\epsilon}}\left\lvert\nabhat^J\mathfrak{T}\right\rvert_\gamma^2+a^{-c\sqrt{\epsilon}}\left\lvert \nabla^J\mathfrak{T}-\nabhat^J\mathfrak{T}\right\rvert_\gamma^2\,.
\end{align*}
Regarding $\nabla^J\mathfrak{T}-\nabhat^J\mathfrak{T}$, we have the following schematic decomposition \delete{locally on some coordinate neighbourhood $V\subseteq\Sigma_t$}:
\begin{align*}
\numberthis\label{eq:nabla-nabhat}\nabla^J\mathfrak{T}-\nabhat^J\mathfrak{T}=&\,\sum_{I=0}^{J-1}\nabhat^{J-I-1}(\Gamma-\Gamhat)\ast_\gamma\left(\nabla^{I}\mathfrak{T}+\nabhat^{I}\mathfrak{T}\right)\\
&\,+\langle\text{at least cubic nonlinear terms}\rangle\,,
\end{align*} 
Here, $\ast_\gamma$ encodes the analogous schematic product notation with regard to $\gamma$ (see subsection \ref{subsubsec:schematic-notation}). Regarding the Christoffel symbols, notice \eqref{eq:C-norm-exch-T} with roles of $\gamma$ and $G$ reversed holding up to $l-1$ implies that, for any $m\in\{0,\dots,l-1\}$ and some multivariate polynomial $\tilde{P}_m$, we have
\[\|\Gamma-\Gamhat\|_{C_\gamma^{m}(\change{M})}\lesssim a^{-c\sqrt{\epsilon}}\tilde{P}_m(\|\Gamma-\Gamhat\|_{C_G^{m}(\change{M})},\|G-\gamma\|_{C^{\change{m}}_G(\change{M})},\|G^{-1}-\gamma^{-1}\|_{C^{\change{m}}_G(\change{M})})\,.\]
As explained in Remark \ref{rem:relation-metric-Chr}, we can bound $\|\Gamma-\Gamhat\|_{C_G^{m}(\change{M})}$ by a polynomial in $\|G-\gamma\|_{C_G^{m+1}(\change{M})}$. Hence, we can apply \eqref{eq:APmidG} to obtain 
\begin{equation}\label{eq:APmidChr}
\|\Gamma-\Gamhat\|_{C_\gamma^{l-1}(\change{M})}\lesssim \sqrt{\epsilon}a^{-c\sqrt{\epsilon}}\,.
\end{equation}
Moving back to \eqref{eq:nabla-nabhat} and just considering the first line for now, this implies 
\begin{align*}
\numberthis\label{eq:C-norm-exch-T-last-step}\left\lvert\|\mathfrak{T}\|^2_{\dot{C}^l_G(\change{M})}-\|\mathfrak{T}\|^2_{\dot{C}_\gamma^l(\change{M})}\right\rvert\lesssim&\,a^{-c\sqrt{\epsilon}}\left(\|\mathfrak{T}\|^2_{C^l_\gamma(\change{M})}+\sum_{m=0}^{l-1}\|\nabla^{m}\mathfrak{T}\|^2_{C^0_\gamma(\change{M})}\right)\\
&\,+\left(\|\mathfrak{T}\|^2_{C^{\lceil\frac{l-1}2\rceil}_\gamma(\change{M})}+\sum_{m=0}^{\lceil\frac{l-1}2\rceil}\|\nabla^{m}\mathfrak{T}\|^2_{C^0_\gamma(\change{M})}\right)\|\Gamma-\Gamhat\|^2_{C^{l-1}_\gamma(\change{M})}\,\\
&\,+\langle\text{at least cubic nonlinear terms}\rangle\,.
\end{align*}
We can rewrite $\nabla^m\mathfrak{T}$-norms in $C_\gamma$ as ones in $C_G$ up to $a^{-c\sqrt{\epsilon}}$ as before. Then, we can apply the already obtained estimates up to order $l-1$ show that the first two lines of the right hand side can be estimated by the right hand side of \eqref{eq:C-norm-exch-T}. The highly nonlinear terms can be dealt with similarly, closing the induction over admissible $l$. \delete{ on local coordinate [...]} 
The inequality in \eqref{eq:AP-exch-T} immediately follows by applying \eqref{eq:APmidG}-\eqref{eq:APmidG-1} and \eqref{eq:APmidChr}.



Now, assume \eqref{eq:H-norm-exch-T} to be proven up to order $J-1$. By analogous arguments as at order zero, we get, \change{after rearranging},
\change{\[\int_M\lvert\nabhat^J\mathfrak{T}\rvert_\gamma^2\,\vol{\gamma}\lesssim \left\lvert\int_M\left(\lvert\nabla^J\mathfrak{T}\rvert_G^2-\lvert\nabhat^J\mathfrak{T}\rvert_{\gamma}^2\right)\,\vol{G}\right\rvert+\int_M\lvert\nabla^J\mathfrak{T}\rvert_G^2\,\vol{G}\,,\]}
so we only need to concern ourselves with the first summand. Reversing roles of $G$ and $\gamma$ compared to the proof of \eqref{eq:C-norm-exch-T}, we get
\begin{equation*}
\left\lvert\,\left\lvert\nabla^J\mathfrak{T}\right\rvert_G^2-\left\lvert\nabhat^J\mathfrak{T}\right\rvert_\gamma^2\right\rvert\lesssim\sqrt{\epsilon}a^{-c\sqrt{\epsilon}}\left\lvert\nabla^J\mathfrak{T}\right\rvert_G^2+\change{a^{-c\sqrt{\epsilon}}}\left\lvert 2\langle\nabla^J\mathfrak{T}-\nabhat^J\mathfrak{T},\nabla\mathfrak{T}\rangle_G-\lvert\nabla^J\mathfrak{T}-\nabhat^J\mathfrak{T}\rvert_G^2\right\rvert
,
\end{equation*}
and have the following, applying Lemma \ref{lem:int-est-Chr} immediately to estimate $\|\Gamma-\Gamhat\|_{H^{l-1}_G}$:
\begin{align*}
&\,\left\rvert\int_V\left\{2\langle\nabla^l\mathfrak{T}-\nabhat^l\mathfrak{T},\nabla\mathfrak{T}\rangle_G-\lvert\nabla^l\mathfrak{T}-\nabhat^l\mathfrak{T}\rvert_G^2\right\}\vol{G}\right\rvert\\
\lesssim&\,a^{-c\sqrt{\epsilon}}\left(\|\mathfrak{T}\|^2_{H^l_G(\change{M})}+\|\nabhat^{\leq l-1}\mathfrak{T}\|^2_{H^{0}_G(\change{M})}\right)\\
&\,+\left(\|\mathfrak{T}\|^2_{C^{\lceil\frac{l-1}2\rceil}_G(M)}+\|\nabhat^{\leq \lceil\frac{l-1}2\rceil}\mathfrak{T}\|^2_{C^{0}_G(M)}\right)\cdot \\
&\,\quad \cdot a^{-c\epsilon^\frac18}\left(\epsilon^4+\epsilon^{-\frac14}\sup_{s\in(\cdot,t_0)}\left(\|N\|_{H^{l}_G(\Sigma_s)}^2+\|\Sigma\|_{H^{l}_G(\Sigma_s)}^2\right)\right)
\end{align*}
By the same arguments as earlier, we have $\|\nabhat^{\leq l-1}\mathfrak{T}\|_{H^{l-1}_G(\change{M})}\lesssim a^{-c\sqrt{\epsilon}}\|\mathfrak{T}\|_{H^{l-1}_\gamma(\change{M})}$ and can then apply the induction hypothesis. \change{This proves \eqref{eq:H-norm-exch-T}.}
\end{proof}

\begin{corollary}[Improved $C$-norm bounds]\label{cor:C-impr} On $(t_{Boot},t_0]$, the following estimate is satisfied:
\begin{equation}\label{eq:C-impr}
\change{\mathcal{C}+\mathcal{C}_\gamma\lesssim\changefinal{\epsilon^\frac74}a^{-c\epsilon^\frac18}}
\end{equation}
\end{corollary}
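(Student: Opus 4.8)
The plan is to derive $\mathcal{C}_\gamma$ and then $\mathcal{C}$ directly from the improved Sobolev bounds of Corollary \ref{cor:H-imp}, using a Sobolev embedding performed with respect to the \emph{fixed} reference metric $\gamma$ (so that the embedding constant is genuinely time-independent) and the norm-comparison estimates of Lemma \ref{lem:G-gamma-norm-switch} to move between $G$- and $\gamma$-norms. The derivative counts in Definition \ref{def:sol-norm} are tailored precisely for this: each $C^l_\bullet$-norm in $\mathcal{C}$ sits two orders below the matching $H^{l+2}_\bullet$-norm in $\mathcal{H}$, and $H^{l+2}_\gamma(M)\hookrightarrow C^l_\gamma(M)$ on the closed $3$-manifold $M$.

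First I would translate the $H_G$-bounds of Corollary \ref{cor:H-imp} (together with \eqref{eq:H-norm-imp}, \eqref{eq:H-Sigma-imp}, \eqref{eq:H-lapse-imp}) into $H_\gamma$-bounds: for each relevant variable $\mathfrak{T}$ (or scalar $\zeta$) and order $l\le 16$ I apply \eqref{eq:H-norm-exch-T} (resp.\ \eqref{eq:H-norm-exch-zeta}) at order $l+2\le18$. The leading term there is $a^{-c\sqrt\epsilon}\|\mathfrak{T}\|_{H^{l+2}_G}^2$, controlled by Corollary \ref{cor:H-imp}; the remaining term carries a factor $\|\mathfrak{T}\|_{C_G^{\lceil(l+1)/2\rceil}}^2$, which is $\lesssim\epsilon a^{-c\sqrt\epsilon}$ by Lemma \ref{lem:AP}, multiplied by $\epsilon^4+\epsilon^{-1/4}\sup_s(\|N\|_{H^{l+2}_G}^2+\|\Sigma\|_{H^{l+2}_G}^2)$, again bounded via Corollary \ref{cor:H-imp}, so it only contributes a lower-order term. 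Sobolev embedding on $(M,\gamma)$ then turns each such $H^{l+2}_\gamma$-bound into the matching $C^l_\gamma$-bound with a constant depending only on $(M,\gamma)$, which after summing yields the claimed bound on $\mathcal{C}_\gamma$.

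Finally I would pass back from $\gamma$- to $G$-norms using \eqref{eq:C-norm-exch-T}/\eqref{eq:C-norm-exch-zeta} (or \eqref{eq:AP-exch-T}/\eqref{eq:AP-exch-zeta} at orders $\le12$); the polynomial error factor $P_l$, which satisfies $P_l(0,0)=0$, is evaluated at $\|G-\gamma\|_{C^{l-1}_\gamma}$ and $\|G^{-1}-\gamma^{-1}\|_{C^{l-1}_\gamma}$, both $O(\epsilon a^{-c\sqrt\epsilon})$ by \eqref{eq:BsCgamma} and a von Neumann series argument for the inverse, so it merely produces a harmless small prefactor. Summing over all variables gives the stated bound on $\mathcal{C}$. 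Together with Remark \ref{rem:bs-strategy} this would then close the bootstrap argument. I expect the main obstacle to be purely organizational rather than substantive: keeping track of the orders so that every $\gamma$-Sobolev embedding and every norm switch lands inside the ranges for which Lemma \ref{lem:Sobolev-norm-equivalence-improved}, Lemma \ref{lem:G-gamma-norm-switch} and Lemma \ref{lem:AP} have been established, and checking that all polynomial error terms in the metric comparisons remain controlled by the already-improved quantities — everything else is a direct consequence of the estimates proved earlier.
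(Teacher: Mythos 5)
Your proposal takes exactly the route the paper does: first translate the improved $H_G$-bounds of Corollary \ref{cor:H-imp} into $H_\gamma$-bounds via \eqref{eq:H-norm-exch-zeta}--\eqref{eq:H-norm-exch-T}, then Sobolev-embed on $(M,\gamma)$ to control $\mathcal{C}_\gamma$, and finally pass back to $\mathcal{C}$ via \eqref{eq:C-norm-exch-zeta}--\eqref{eq:C-norm-exch-T}.

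The one place you would run into trouble is your treatment of the polynomial error factor $P_l$ in the final step. You propose bounding its arguments $\|G-\gamma\|_{C^l_\gamma}$ and $\|G^{-1}-\gamma^{-1}\|_{C^l_\gamma}$ by $O(\epsilon a^{-c\sqrt\epsilon})$ citing \eqref{eq:BsCgamma}. But \eqref{eq:BsCgamma} gives the rate $a^{-c\sigma}$, not $a^{-c\sqrt\epsilon}$, and since $\sigma$ sits strictly above $\epsilon^{1/8}$ in the bootstrap hierarchy ($\sigma\in(\epsilon^{1/8},1]$), feeding $a^{-c\sigma}$ into the error term leaves you with a bound of the form $\mathcal{C}\lesssim\epsilon^{7/2}a^{-c\epsilon^{1/8}}+\epsilon^{9/2}a^{-c\sigma}$, where the second summand dominates as $a\downarrow 0$. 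The bootstrap mechanism (Remark \ref{rem:bs-strategy}) requires the improved exponent to be $\sigma$-\emph{independent}, so this does not close the argument. Nor can you fall back on the a priori estimates \eqref{eq:APmidG}--\eqref{eq:APmidG-1} for these arguments in general, since those only reach $C^{12}_G$ while $\mathcal{C}$ contains norms up to order $16$. The repair is the one the paper makes: having already established $\mathcal{C}_\gamma\lesssim\epsilon^{7/2}a^{-c\epsilon^{1/8}}$ in the first part of the argument, feed that improved bound back into $P_l$ (since $P_l(0,0)=0$, this gives $\lvert P_l\rvert\lesssim\epsilon^{7/2}a^{-c\epsilon^{1/8}}$), and use Lemma \ref{lem:AP} only for the low-order companion factor $\|\mathfrak{T}\|_{C_\gamma^{\lfloor(l-1)/2\rfloor}}$, which lies within the a priori range. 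With that one-line replacement your argument closes and coincides with the paper's proof.
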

\begin{proof}
We first apply the Sobolev norm estimates in Lemma \ref{lem:G-gamma-norm-switch} \change{to \eqref{eq:H-norm-imp}}, to then control $\mathcal{C}_\gamma$ via the standard Sobolev embedding $H^{l+2}_\gamma(M)\hookrightarrow C^l(M)$, and finally control $\mathcal{C}$ with \eqref{eq:C-norm-exch-zeta}-\eqref{eq:C-norm-exch-T}.\\
Note that by Lemma \ref{lem:AP}, we can control the $C_G$-norm up to order $10$ of every quantity occurring in $\mathcal{H}$ beside the lapse by at worst $\sqrt{\epsilon}a^{-c\sqrt{\epsilon}}$, while the bootstrap assumption already implies better behaviour for the lapse. Thus, we can apply \eqref{eq:H-norm-exch-zeta}-\eqref{eq:H-norm-exch-T} to every norm appearing in $\mathcal{H}$, and obtain by applying \eqref{eq:H-Sigma-imp} and \eqref{eq:H-lapse-imp} in the second line:
\begin{align*}
\mathcal{C}_\gamma^2\lesssim&\,a^{-c\sqrt{\epsilon}}\cdot \mathcal{H}^2+\epsilon a^{-c\sqrt{\epsilon}}\cdot a^{-c\epsilon^\frac18}\left(\epsilon^4+\epsilon^{-\frac14}\sup_{s\in(t,t_0)}\left(\|N\|_{H^{18}_G(\Sigma_s)}^2+\|\Sigma\|_{H^{18}_G(\Sigma_s)}^2\right)\right)\\
\lesssim&\,\change{\epsilon^\frac72a^{-c\epsilon^\frac18}}+\epsilon a^{-c\epsilon^\frac18}\left(\epsilon^4+\change{\epsilon^\frac72}\right)\\
\lesssim&\,\change{\epsilon^\frac72}\cdot a^{-c\epsilon^\frac18}
\end{align*}
In particular, we can update $c$ such that
\[\lvert P(\|G-\gamma\|_{C^{16}_\gamma(\Sigma_t)},\|G-\gamma\|_{C^{16}_\gamma(\Sigma_t)})\rvert\lesssim\change{\epsilon^\frac72} a^{-c\epsilon^\frac18}\]
holds for any multivariate polynomial $P$ that appears when applying \eqref{eq:C-norm-exch-zeta}-\eqref{eq:C-norm-exch-T}. Again using the strong $C_G$-norm estimates from Lemma \ref{lem:AP}, this then implies $\mathcal{C}\lesssim \changefinal{\epsilon^\frac74} a^{-c\epsilon^\frac18}$.
\end{proof}

\section{Big Bang stability: The main theorem}\label{sec:main-thm}

In this section, we provide the proof of the first main result, Theorem \ref{thm:main-past}, which we state in more detail in Theorem \ref{thm:main} below. As in \cite{Rodnianski2014,Speck2018}, most of the work has already been done by establishing the necessary bounds on solution norms.

\begin{remark}[Existence of a CMC hypersurface]\label{rem:CMC-hypersurface}
As mentioned in \change{Section \ref{subsubsec:initial-data}}, it may seem that the generality of the results in Theorem \ref{thm:main} is restricted by taking the initial data on $\Sigma_{t_0}$ to be CMC. 
However, as long as one remains close enough to a constant time hypersurface of the FLRW reference metric (which is CMC), one can locally evolve the perturbed data in harmonic gauge to a nearby hypersurface that is CMC and remains close to the FLRW reference solution. 
\change{To make this a bit more precise, and also since this is a little less involved than the arguments in \cite{Rodnianski2014}, we will briefly sketch how the arguments from \cite[Section 2.5]{FajKr20} extend to our setting. \\%
First, we once again assume without loss of generality that our initial data is sufficiently regular. Note that we can locally evolve our data within harmonic gauge to get a $C^{17}$-regular family of metrics with near-FLRW initial data (for well-posedness, consider the analogue of \cite[Proposition 14.1]{Rodnianski2014}). Consider the Banach manifold $\mathcal{M}^{17}$ formed by the set of $C^{17}$ Lorentz metrics on $I\times\M$ for an open interval $I$ around $t_0$ such that the surfaces of constant time are Riemannian, endowed with the norm
\[\|\tilde{g}\|=\|\tilde{n}^2\|_{C^{17}_{dt^2+\gamma}(I\times M)}+\|\tilde{X}\|_{C^{17}_{dt^2+\gamma}(I\times M)}+\|\tilde{g}_t\|_{C^{17}_{dt^2+\gamma}(I\times M)}\,,\]
where $\tilde{g}\in\mathcal{M}^{17}$ has lapse $\tilde{n}$, shift $\tilde{X}$ and spatial metrics $(\tilde{g}_t)_{t\in I}$. Further, for any $f\in C^{17}(M,I)$, we define the embedding $\iota_f: M\hookrightarrow \M$ by $x\mapsto (f(x),x)$, and subsequently define the smooth map
\begin{gather*}
H_0:\mathcal{D}:=\{(\tilde{g},f)\in \mathcal{M}^{17}\times C^{17}(M,I)\vert \iota_f^\ast\tilde{g}\text{ is Riemannian}\}\longrightarrow C^{16}(M)\\
(\tilde{g},f)\mapsto \text{mean curvature of }(M,\tilde{g}_t)\text{ embedded along }\iota_f\,.
\end{gather*}
One easily checks that $(\g_{FLRW},t_0)$ is a regular point of $H_0$. By the implicit function theorem for Banach manifolds, this means there is a (unique) smooth function $F$ that maps an open neighbourhood of $\g_{FLRW}$ in $\mathcal{M}^{17}$ to an open neighbourbood of the constant function $x\mapsto t_0$ in $C^{17}(M,I)$ such that $H_0(\cdot,F(\cdot))=\tau(t_0)$ holds in that neighbourhood.\\
Thus, we can choose a surface $\Sigma^\prime$ with mean curvature $\tau(t_0)$ near the original $\Sigma_{t_0}$. Furthermore, for small enough $\epsilon>0$, the initial data on $\Sigma^\prime$ remains close to the FLRW initial data in the sense of Assumption \ref{ass:init}, using similar arguments to control Sobolev norms. Thus, we can replace $\Sigma_{t_0}$ by $\Sigma^\prime$ without loss of generality, proving that the CMC assumption \eqref{eq:CMC} is not a true restriction.}
\end{remark}

\begin{theorem}[Stability of Big Bang formation]\label{thm:main}
Let $(M,\mathring{g},\mathring{k},\mathring{\pi},\mathring{\psi})$ be initial data to the Einstein scalar-field system as discussed in Section \ref{subsubsec:initial-data}. Further, let the data be embedded into a time-oriented 4-manifold such that it induces initial data for the rescaled solution variables (see Definition \ref{def:rescaled}) at the initial hypersurface $\Sigma_{t_0}$. We also assume this rescaled initial data is close to that of the FLRW reference solution (see \eqref{eq:FLRW-metric} and \eqref{eq:FLRW-wave}) in the sense that
\begin{equation}\label{eq:ass-init-main-thm}
\mathcal{H}(t_0)\change{+\mathcal{H}_{top}(t_0)}+\mathcal{C}(t_0)\change{\leq}\epsilon^2
\end{equation}
is satisfied (with $\mathcal{H}$ and $\mathcal{C}$ as in Definition \ref{def:sol-norm}).  \delete{along with [...] }
\footnote{Essentially, this translates to smallness in $H_\gamma^{\change{19}}$ and $C_\gamma^{17}$ \delete{and [...]}. 
For $\epsilon=0$, the solution is the FLRW reference solution.}\\

Then, the past maximal globally hyperbolic development $((0,t_0]\times M,\g,\phi)$ of this data within the Einstein scalar-field system \eqref{eq:ESF1}-\eqref{eq:ESF2} in CMC gauge \eqref{eq:CMC} with zero shift is foliated by the CMC hypersurfaces $\Sigma_s=t^{-1}(\{s\})$, and one has
\begin{equation}\label{eq:bs-imp-main-thm}
\mathcal{H}(t)+\mathcal{C}(t)+\mathcal{C}_{\gamma}(t)\lesssim \change{\epsilon^\frac74} a(t)^{-c\epsilon^\frac18}
\end{equation}
for some $c>0$ and any $t\in(0,t_0]$. In particular, this implies the following statements:\\[1em]

\textbf{Asymptotic behaviour of solution variables:} We denote the solution metric as $\g=-n^2dt^2+g$, the second fundamental form (viewed as a $(1,1)$-tensor) with respect to $\Sigma_t$ as $k$ and the volume form with regard to $g$ on $\Sigma_t$ by $\vol{g}$. There exist a smooth function $\Psi_{Bang}\in C_\gamma^{15}(M)$, a $(1,1)$-tensor field $K_{Bang}\in C^{15}_\gamma(M)$ and a volume form $\vol{Bang}\in C^{15}_\gamma(M)$ such that the following estimates hold for any $t\in(0,t_0]$:
\begin{subequations}\label{eq:asymp}
\begin{align}
\label{eq:asymp-lapse}\|n-1\|_{C^l_\gamma(\Sigma_t)}\lesssim&\,\begin{cases} 
\epsilon a(t)^{4-c\epsilon^\frac18} & l\leq 14\\
\epsilon a(t)^{2-c\epsilon^\frac18} & l=15
\end{cases}\\
\label{eq:asymp-vol}\left\|a^{-3}\vol{g}-\vol{Bang}\right\|_{C^{l}_\gamma(\Sigma_t)}\lesssim&\,\begin{cases} 
\epsilon a(t)^{4-c\epsilon^\frac18} & l\leq 14\\
\epsilon a(t)^{2-c\epsilon^\frac18} & l=15
\end{cases}\\
\label{eq:asymp-Psi}\left\|a^3\del_t\phi-(\Psi_{Bang}+C)\right\|_{C^{l}_\gamma(\Sigma_t)}\lesssim&\,\begin{cases} 
\epsilon a(t)^{4-c\epsilon^\frac18} & l\leq 14\\
\epsilon a(t)^{2-c\epsilon^\frac18} & l=15
\end{cases}\\
\label{eq:asymp-phi}\changefinal{\left\|\phi(t,\cdot)-\phi(t_0,\cdot)+\int_t^{t_0}a(s)^{-3}\,ds\cdot(\Psi_{Bang}+C)\right\|_{\dot{C}^l_\gamma(\Sigma_t)}}\lesssim&\,\begin{cases}
\epsilon a(t)^{4-c\epsilon^\frac18} & 1\leq l\leq 14\\
\epsilon a(t)^{2-c\epsilon^\frac18} & l=15
\end{cases}\\
\label{eq:asymp-K}\left\|a^{3}k-K_{Bang}\right\|_{C^{l}_\gamma(\Sigma_t)}\lesssim&\,\begin{cases} 
\epsilon a(t)^{4-c\epsilon^\frac18} & l\leq 14\\
\epsilon a(t)^{2-c\epsilon^\frac18} & l=15\,
\end{cases}
\end{align}
\end{subequations}

\noindent Further, these footprint states satisfy the equations\footnote{These are precisely the (generalized) Kasner relations, see Section \ref{subsec:FLRW-Kasner}.}
\begin{subequations}
\begin{align}
\label{eq:Bang-CMC}{(K_{Bang})^a}_a=&\,-\sqrt{12\pi}C\,,\\
8\pi(\Psi_{Bang}+C)^2+{(K_{Bang})^a}_b{(K_{Bang})^b}_a=&\,12\pi C^2
\label{eq:Bang-Hamil}
\end{align}
\end{subequations}
and remain close to the data of the reference solution in the following sense, where $\I$ denotes the Kronecker symbol:
\begin{subequations}\label{eq:footprint}
\begin{align}
\left\|\vol{\gamma}-\vol{Bang}\right\|_{C^{15}_\gamma(M)}\lesssim&\,\epsilon \label{eq:footprint-vol}\\
\left\|\Psi_{Bang}\right\|_{C^{15}_\gamma(M)}\lesssim&\,\epsilon \label{eq:footprint-Psi}\\
\left\|K_{Bang}+\sqrt{\frac{4\pi}3}C\I\right\|_{C^{15}_\gamma(M)}\lesssim&\,\epsilon\,\label{eq:footprint-K}
\end{align}
\end{subequations}
Additionally, there exists a $(0,2)$-tensor field $M_{Bang}\in C^{15}_\gamma(M)$ satisfying
\begin{equation}\label{eq:footprint-G}
\left\|M_{Bang}-\gamma\right\|_{C^{15}_\gamma(M)}\lesssim\epsilon
\end{equation}
and, with $\odot$ and $\exp$ meant in the matrix product and exponential sense respectively, one has
\begin{equation}
\label{eq:asymp-G}\left\|g\odot\exp\left[\left(\change{-2}\int_t^{t_0}a(s)^{-3}\,ds\right)\cdot K_{Bang}\right]-M_{Bang}\right\|_{C^l_\gamma(\Sigma_t)}\lesssim\begin{cases}
 \epsilon a(t)^{4-c\epsilon^\frac18} & l\leq 14\\
\epsilon a(t)^{2-c\epsilon^\frac18} & l=15\,.
\end{cases}
\end{equation}
Moreover, the Bel-Robinson variables $E$ and $B$ satisfy the estimates
\begin{subequations}
\begin{align}
\label{eq:asymp-E}\|E\|_{C^{16}_\gamma(\Sigma_t)}\lesssim&\,
\epsilon a^{-4-c\epsilon^\frac18}
\\
\label{eq:asymp-B}\|B\|_{C^l_\gamma(\Sigma_t)}\lesssim&\,\begin{cases}
\epsilon a^{-2-c{\epsilon}^\frac18} & l\leq 15\\
\epsilon a^{-4-c\epsilon^\frac18} & l\leq 16\,.
\end{cases}
\end{align}
\end{subequations}

\textbf{Causal disconnectedness:} Let $\curve$ be a past directed causal curve on $(\change{(0,t]}\times M,\g)$ \change{ for $t\leq t_0$ }with domain $[s_1,s_{max})$ such that $\curve(s_1)\in \change{\Sigma_{t}}$ and $s_{max}$ is maximal. Then, there exists a constant $\mathcal{K}>0$ that does not depend on $\curve$ such that one has
\begin{equation}\label{eq:causal-disconn}
L[\curve]=\int_{s_1}^{s_{max}}\sqrt{(\gamma_{ab})_{\curve(s)}\dot{\curve}^a(s)\dot{\curve}^b(s)}\,ds\leq \mathcal{K} a(t)^{2-c\epsilon^\frac18}\,,
\end{equation}
where $\gamma$ is the negative Einstein spatial reference metric on $M$ (see Definition \ref{def:spatial-mf}). \change{Hence, for points $p,q\in\Sigma_t$ with $\text{dist}_\gamma(p,q)>2\mathcal{K}a(t)^{2-c\epsilon^\frac18}$, the causal pasts of $p$ and $q$ cannot intersect.}\\

\textbf{Geodesic incompleteness:} \changefinal{Let $\curve(\mathcal{A})$ be a past directed, affinely parametrized causal geodesic emanating from $\Sigma_{t_0}$, where $\mathcal{A}:(0,t_0]\rightarrow [0,\infty)$ denotes the parameter time that is normalized to $\mathcal{A}(t_0)=0$. }Then,
\begin{equation}\label{eq:geod-incomp}
\mathcal{A}(0)\leq \mathcal{K}_1\cdot \lvert \mathcal{A}^\prime(t_0)\rvert \cdot a(t_0)^{1+K_2\epsilon}\int_0^{t_0}a(s)^{-1-\mathcal{K}_2\epsilon}\,ds<\infty\,,
\end{equation}
holds for suitable constants $\mathcal{K}_1,\mathcal{K}_2>0$ that are independent of $\curve$, and thus any such geodesic crashes into the Big Bang hypersurface in finite affine parameter time.\\

\textbf{Blow-up:} \begin{subequations}
The norm $\lvert k\rvert_g$ behaves toward the Big Bang hypersurface as follows:
\begin{equation}\label{eq:blowup-k}
\left\|a^6\lvert k\rvert_g^2-{(K_{Bang})^i}_j{(K_{Bang})^j}_i\right\|_{C^0_\gamma(\Sigma_t)}\lesssim \epsilon a^{4-c{\epsilon}^\frac18}
\end{equation}
Further, with $W[\g]$ denoting the Weyl curvature and $P[\g]=\Riem[\g]-W[\g]$,
\begin{equation}\label{eq:blowup-P}
\left\|a^{12}P_{\alpha\beta\gamma\delta}P^{\alpha\beta\gamma\delta}-\frac{5}3\cdot(8\pi)^2(\Psi_{Bang}+C)^4\right\|_{C^0_\gamma(M)}\lesssim \epsilon a^{4-c{\epsilon}^\frac18}\,
\end{equation}
is satisfied, whereas there exists a scalar footprint $W_{Bang}\in C^{15}_\gamma(M)$ such that one has
\begin{equation}\label{eq:blowup-W}
\left\|a^{12}W_{\alpha\beta\gamma\delta}W^{\alpha\beta\gamma\delta}-W_{Bang}\right\|_{C^0(M)}\lesssim \epsilon a^{2-c{\epsilon}^\frac18}\,.
\end{equation}
Here, $W_{Bang}$ is a fourth order polynomial in $\hat{K}_{Bang}=K_{Bang}+\sqrt{\frac{4\pi}3}C\I$ and $\Psi_{Bang}$ and satisfies
\begin{equation}\label{eq:footprint-W}
\|W_{Bang}\|_{C^{15}_\gamma(M)}\lesssim\epsilon\,.
\end{equation}
Finally, \change{the scalar curvature $R[\g]$ and the Ricci curvature invariant $\Ric[\g]_{\alpha\beta}\Ric[\g]^{\alpha\beta}$ blow up with the asymptotics
\begin{align}
\label{eq:blowup-scalar} \|a^6R[\g]-8\pi(\Psi_{Bang}+C)^2\|_{C^0(M)}\lesssim&\,\epsilon a^{4-c\epsilon^\frac18}\,,\\
\label{eq:blowup-Ricci}\|a^{12}\Ric[\g]_{\alpha\beta}\Ric[\g]^{\alpha\beta}-(8\pi)^2(\Psi_{Bang}+C)^4\|_{C^0(M)}\lesssim&\,\epsilon a^{4-c\epsilon^\frac18}\,,
\end{align}
and }the Kretschmann scalar $\mathcal{K}=\Riem[\g]_{\alpha\beta\gamma\delta}\Riem[\g]^{\alpha\beta\gamma\delta}$ exhibits stable blow-up in the following sense:
\begin{equation}\label{eq:blowup-Kretschmann}
\change{\left\|a^{12}\mathcal{K}-\frac{5}3\cdot(8\pi)^2(\Psi_{Bang}+C)^4-W_{Bang}\right\|_{C^0(M)}\lesssim \epsilon a^{2-c\epsilon^\frac18}}
\end{equation}
\end{subequations}
\end{theorem}
\begin{remark}[The solution variables exhibit AVTD behaviour]\label{rem:AVTD} The estimates \eqref{eq:asymp-lapse}-\eqref{eq:asymp-K} and \eqref{eq:asymp-G} imply that the solution is asymptotically velocity term dominated (AVTD) in the sense that, toward the Big Bang singularity, they behave at leading order like solutions to the (formal) velocity term dominated equations. These arise by dropping any terms containing spatial derivatives in the decomposed Einstein system, i.e.\,in \eqref{eq:EEqg}, \eqref{eq:EEqk}, \eqref{eq:EEqLapse} and \eqref{eq:wave}.
\end{remark}
\begin{proof}
\change{As argued at the end of Section \ref{subsec:lwp}, we can assume without loss of generality that our initial data is sufficiently regular. Hence, the local existence statement in Lemma \ref{lem:lwp} and the initial data requirements \eqref{eq:ass-init-main-thm} ensure that there exists a local solution to the Einstein scalar-field system on $[t_1,t_0]\times M$ and that the bootstrap assumption (see Assumption \ref{ass:bootstrap}) holds }on $[t_1,t_0]\times M$ with $t_1\in(0,t_0)$ and $\sigma=\epsilon^\frac1{16}.$ 
Let $\mathfrak{t}\in(0,t_0)$ be such that $(\mathfrak{t},t_0]\times M$ is the maximal domain on which the solution variables exist and satisfy the bootstrap assumptions. For contradiction, we now assume that $\mathfrak{t}>0$ were to hold.\\

Due to Corollary \ref{cor:C-impr}, there exist (summarizing all updates) constants $c_1,K_1>0$ such that, for any $t\in(\mathfrak{t},t_0]$,
\begin{equation}\label{eq:bs-imp-total}
\mathcal{C}(t)\leq K_1\changefinal{\epsilon^\frac74}a(t)^{-c_1\epsilon^\frac18}\,
\end{equation}
If $\epsilon$ is small enough such that $K_1\epsilon^\frac18<K_0$ and $c_1\epsilon^\frac1{8}<c_0\sigma$ hold, this is a strict improvement of the bootstrap assumption. Furthermore, argued exactly as in the proof of \cite[Theorem 15.1]{Rodnianski2014}, above improvement ensures none of the blow-up criteria of Lemma \ref{lem:lwp} are satisfied if $\mathfrak{t}>0$ were to hold, essentially as a direct consequence of \eqref{eq:bs-imp-total}. Hence, the solution could be classically extended to a CMC hypersurface $\Sigma_{\mathfrak{t}}$ diffeomorphic to $M$ while satisfying the improved estimates by continuity, and further to an interval $(\mathfrak{t}^\prime,t_0]$ for some $0<\mathfrak{t}^\prime<\mathfrak{t}$ on which the bootstrap assumptions must then be satisfied, also by continuity. This contradicts the maximality of $(\mathfrak{t},t_0]$.\\

Thus, the rescaled solution variables induce a unique solution to the Einstein scalar-field system on $(0,t_0]\times M$ such that \eqref{eq:bs-imp-total} \change{is }satisfied for any $t\in(0,t_0]$. The core estimate \eqref{eq:bs-imp-main-thm} follows since Corollaries \ref{cor:H-imp} and \ref{cor:C-impr} now hold on $(0,t_0]$.\\

From \eqref{eq:bs-imp-main-thm}, the asymptotic behaviour in \eqref{eq:asymp-lapse}-\eqref{eq:asymp-K} and \eqref{eq:asymp-G} is established as in \cite[Theorem 15.1]{Rodnianski2014}, which we briefly outline: First, we note that \eqref{eq:asymp-lapse} follows directly from \eqref{eq:bs-imp-main-thm}. For the remaining estimates, the arguments are similar, so consider for example $\del_t\phi$: By the rescaled wave equation \eqref{eq:REEqWave} and \eqref{eq:asymp-lapse}, we have that
\begin{equation*}
\left\|\del_t\Psi\right\|_{C^l_\gamma(\Sigma_t)}\lesssim\begin{cases}
\epsilon a^{1-c\epsilon^\frac18} & l\leq 14\\
\epsilon a^{-1-c\epsilon^\frac18} & l=15
\end{cases}
\end{equation*}
Hence, for an arbitrary decreasing sequence $(t_m)_{m\in\N},$ on $(0,t_0]$ that converges to zero, we have
\[\|\Psi(t_{m_1},\cdot)-\Psi(t_{m_2},\cdot)\|_{C^l_\gamma(M)}\lesssim\begin{cases}
\epsilon a(t_{m_1})^{4-c\epsilon^\frac18} & l\leq 14\\
\epsilon a(t_{m_1})^{2-c\epsilon^\frac18} & l=15
\end{cases}\]
for any $m_1,m_2\in\N,\,m_1<m_2$ by \eqref{eq:a-integrals}. This shows that $\Psi(t_{m_1},\cdot)$ is a Cauchy sequence in $C_\gamma^{15}(M)$ and hence there exists a limit function ${\Psi}_{Bang}\in C_\gamma^{15}(M)$ that satisfies
\[\|\Psi(t,\cdot)-{\Psi}_{Bang}\|_{C^l_\gamma(M)}\lesssim\begin{cases}
\epsilon a(t)^{4-c\epsilon^\frac18} & l\leq 14\\
\epsilon a(t)^{2-c\epsilon^\frac18} & l=15
\end{cases}\]
for any $t\in(0,t_0]$. Since $\Psi=a^3n^{-1}\del_t\phi-C$ holds by definition, \eqref{eq:asymp-Psi} now follows by examining the Taylor expansion of $n^{-1}-1$ at $0$ using \eqref{eq:asymp-lapse}.\\

\noindent The identity \eqref{eq:Bang-CMC} follows directly from the CMC condition \eqref{eq:CMC}, the asymptotic behaviour \eqref{eq:asymp-K} of $a^3k$ and the Friedman equation \eqref{eq:Friedman}, while \eqref{eq:Bang-Hamil} follows from the asymptotic limit of the Hamiltonian constraint \eqref{eq:Hamilton} with \eqref{eq:Friedman}, \eqref{eq:asymp-lapse}, \eqref{eq:asymp-Psi} and \eqref{eq:asymp-K} as well as \eqref{eq:bs-imp-main-thm} for lower order terms. The asymptotics in \eqref{eq:asymp-G} follows exactly as in \cite[Theorem 15.1]{Rodnianski2014}, and \eqref{eq:footprint-vol}-\eqref{eq:footprint-G} are a direct result of the initial data assumptions and applying the respective asymptotic estimates to $t=t_0$.

\noindent For the first estimate in \eqref{eq:asymp-B}, we apply the momentum constraint \eqref{eq:REEqConstrB} to get
\[\lvert\nabla^J B\rvert_G=\change{a^{-4}\lvert\nabla^J \RB\rvert_G}= \change{a^{-2}\lvert\nabla^J\curl_G\Sigma\rvert_G}\lesssim a^{-2}\lvert\nabla^{J+1}\Sigma\rvert_G\]
and consequently, with Lemma \ref{lem:G-gamma-norm-switch} as well as \eqref{eq:APmidB} and \eqref{eq:bs-imp-main-thm},
\begin{align*}
\|B\|_{C^{15}_\gamma(\Sigma_t)}\lesssim&\,a^{-c\sqrt{\epsilon}}\|B\|_{C^{15}_G(\Sigma_t)}+\epsilon a^{-2-c\sqrt{\epsilon}}\cdot P_{{15}}(\|G-\gamma\|_{C^{15}_\gamma(\Sigma_t)})\\
\lesssim&\,a^{-2-c\sqrt{\epsilon}}\|\Sigma\|_{C^{16}_G(\Sigma_t)}+\epsilon a^{-2-c\sqrt{\epsilon}}\cdot P_{15}(\|G-\gamma\|_{C^{15}_\gamma(\Sigma_t)})\\
\lesssim&\,\epsilon a^{-2-c\epsilon^\frac18}\,.
\end{align*}
The remaining estimates in \eqref{eq:asymp-E} and \eqref{eq:asymp-B} are contained in \eqref{eq:bs-imp-main-thm}. The results \eqref{eq:causal-disconn} and \eqref{eq:geod-incomp} follow as in the proofs of (15.6) and (15.7) in \cite[Theorem 15.1]{Rodnianski2014} from the asymptotic behaviour of the solution variables in \eqref{eq:asymp-lapse}-\eqref{eq:asymp-K} and \eqref{eq:asymp-G}. \change{We briefly sketch the proof of \eqref{eq:geod-incomp}: Consider a geodesic $\curve$ affinely parametrized by $\mathcal{A}$ as in the statement. The geodesic equations then lead to the following estimate for some suitable $\mathcal{K}>0$:
\[\lvert \mathcal{A}^{\prime\prime}\rvert\leq \frac{\dot{a}}a\lvert\mathcal{A}^\prime\rvert+\mathcal{K}\left[\frac{\dot{a}}a\lvert N\rvert+n^{-1}\lvert\del_tN\rvert+n^{-1}\lvert \nabla N\rvert_g+n\lvert\hat{k}\rvert_g\right]\lvert\mathcal{A}^\prime\rvert\,.\]
The leading term is hereby arises from the mean curvature condition. Arguing as with the elliptic estimates in Section \ref{sec:lapse}, one can show that $\lvert\del_tN\rvert\lesssim \epsilon a^{-1-c\epsilon^\frac18}$. Thus, along with the other pointwise bounds on $n$, $g$ and $\hat{k}$, one obtains
\[\lvert \mathcal{A}^{\prime\prime}\rvert\leq \frac{\dot{a}}a\left(1+c\epsilon\right)\lvert\mathcal{A}^\prime\rvert\]
and consequently
\[\lvert\mathcal{A}^\prime(t)\rvert\leq \lvert \mathcal{A}^\prime(t_0)\rvert a(t)^{-1-c\epsilon}\]
by the Gronwall lemma. \eqref{eq:geod-incomp} follows by integrating.\\}

Turning to the blow-up behaviour of geometric invariants, observe \eqref{eq:blowup-k} is a direct consequence of \eqref{eq:asymp-K}. Regarding \eqref{eq:blowup-W}, we first compute
using \eqref{eq:Weyl-reconstruct} and standard algebraic manipulations that
\[a^{12}W_{\alpha\beta\gamma\delta}W^{\alpha\beta\gamma\delta}=a^{12}\left(\change{8}\lvert E\rvert_g^2+8\lvert B\rvert_g^2\right)=\change{8}\lvert\RE\rvert_G^2+8\lvert\RB\rvert_G^2\,.\]
By the rescaled constraint equation \eqref{eq:REEqConstrE}, we have
\begin{equation*}
\RE_{ij}=-\dot{a}a^2\Sigma_{ij}+(\Sigma\odot\Sigma)_{ij}-\left[\frac{8\pi}3\Psi^2+\frac{16\pi}3C\Psi\right]G_{ij}+\O{\epsilon a^{4-c\epsilon^\frac18}}\,,
\end{equation*}
for $t\downarrow 0$. Further, by expanding \eqref{eq:Friedman} around $a=0$, we have 
$\dot{a}a^2=\sqrt{\frac{4\pi}3}C+\O{a^2}$.
Since $\Sigma^\sharp$ and $\Psi$ converge to footprint states $\hat{K}_{Bang}=K_{Bang}+\sqrt{\frac{4\pi C}3}\I$ and $\Psi_{Bang}$ in $C^{15}_\gamma(M)$ respectively, this shows that $8\lvert\RE\rvert_G^2$ converges to some $W_{\text{Bang}}\in C^{15}_\gamma(M)$ that can be expressed as a fourth-order polynomial in $\hat{K}_{\mathrm{Bang}}$ and $\Psi_{Bang}$ and satsifies
\[\left\|\lvert\RE\rvert_G^2-\change{\frac1{8}}W_{Bang}\right\|_{C^0(M)}\lesssim \epsilon a^{2-c\epsilon^\frac18}\]
as well as \eqref{eq:footprint-W}. Due to \eqref{eq:asymp-B}, the $\lvert\RB\rvert_G^2$-term in the Weyl curvature scalar is negligible in comparison, and thus \eqref{eq:blowup-W} immediately follows.\\
Furthermore, 
one has 
\begin{align*}
P_{\alpha\beta\gamma\delta}P^{\alpha\beta\gamma\delta}
=&\,2\Ric[\g]_{\alpha\beta}\Ric[\g]^{\alpha\beta}-\frac29 R[\g]^2\deletemath{-\frac16R[\g]},\,
\end{align*}
and \eqref{eq:blowup-P} is a direct consequence of \eqref{eq:blowup-Ricci} and \eqref{eq:blowup-scalar}, which follow once more with  \eqref{eq:asymp-Psi} and \eqref{eq:asymp-lapse} as well as \eqref{eq:bs-imp-main-thm} for error terms. Finally, \eqref{eq:blowup-Kretschmann} is obtained from \eqref{eq:blowup-P}-\eqref{eq:blowup-W}.
\end{proof}

\section{Future stability}\label{sec:fut}

\noindent The goal of this section is to show the following theorem:

\begin{theorem}[Future stability of Milne spacetime]\label{thm:fut-stab-simple} Let the rescaled initial data $(\fg,\bm{k},\nabla\phi,\phi^\prime)$ on $M$ be sufficiently close to $(\gamma,\frac13\gamma,0,0)$ in $H^5\times H^4\times H^4\times H^4$ on some initial hypersurface $\Sigma_{\tau=\tau_0}$ (see Definition \ref{def:fut-rescaled} and Assumption \ref{ass:fut-init}). Then, its maximal globally hyperbolic development $(\M,\g,\phi)$ within the Einstein scalar-field system in CMCSH gauge is foliated by the CMC Cauchy hypersurfaces $(\Sigma_{\tau})_{\tau\in[\tau_0,0)}$, is future (causally) complete and exhibits the following asymptotic behaviour:
\[(\fg,\bm{k},\phi^\prime,\nabla\phi)(\tau)\longrightarrow(\gamma,\frac13\gamma,0,0)\text{ as }\tau\uparrow 0\]
\end{theorem}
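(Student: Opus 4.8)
The plan is to reduce the future stability of FLRW-with-scalar-field data to the future stability of Milne spacetime within the Einstein scalar-field system, working in CMCSH gauge with the logarithmic time variable $\tau$ (so that $\tau\uparrow0$ corresponds to $t\nearrow\infty$). First I would set up the rescaled (expansion-normalized) variables $(\fg,\bm{k},\nabla\phi,\phi^\prime)$ of Definition \ref{def:fut-rescaled} and record the evolution-constraint system they satisfy; the geometric part is, up to matter source terms, exactly the system analyzed in \cite{AM11,AndFaj20,FajWy21}, so I would import the elliptic estimates for the lapse and shift and the energy structure for $(\fg,\bm{k})$ essentially verbatim, after checking that the scalar-field stress tensor enters only as a perturbative forcing term that decays at least as fast as the geometric perturbations. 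The genuinely new piece is the scalar field, so the bulk of the argument is to construct the corrected energy of Definition \ref{def:fut-stab} --- the canonical Sobolev energy $\int_M |\phi^\prime|_{\fg}^2 + |\nabla\phi|_{\fg}^2\,\vol{\fg}$ plus higher-order analogues, augmented by an indefinite cross term of the form $\int_M \phi^\prime(\phi-\overline{\phi})\,\vol{\fg}$ --- and to prove two things about it: first, that it is coercive, i.e. controls the honest Sobolev norms of $(\phi^\prime,\nabla\phi)$, which is where the spectral hypothesis (the first positive eigenvalue of $-\Lap_\gamma$ exceeds $\tfrac19$) is used to beat the indefinite term via a Poincaré-type inequality (Lemma \ref{lem:fut-ESF-coercivity}, Lemma \ref{lem:fut-Sob-est}); and second, that it obeys a differential inequality forcing exponential decay in $\tau$ (Lemma \ref{lem:fut-en-est-ESF0}, Lemma \ref{lem:fut-en-est-ESF}), the correction term being precisely what cancels the borderline terms that obstruct a decay estimate for the bare energy.

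With the corrected scalar-field energy and the geometric energy in hand, I would run the standard continuity/bootstrap argument: assume smallness of the total energy on a maximal interval $[\tau_0,\tau_*)$, feed it into the elliptic lapse/shift estimates to control the gauge quantities, plug those back into the combined energy inequality, and close the bootstrap to obtain a uniform exponential-in-$\tau$ decay bound, which by the continuation criterion of the local well-posedness theory forces $\tau_*=0$. Global existence up to $\tau=0$ plus the decay $(\fg,\bm{k},\phi^\prime,\nabla\phi)\to(\gamma,\tfrac13\gamma,0,0)$ then follows, and future causal completeness is obtained as in \cite{AM11,AndFaj20} from the control on the lapse and the relation between $\tau$ and proper time along causal curves (the integral $\int^{0}$ of the relevant lapse factor diverges, so affine parameters run to infinity).

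The main obstacle I expect is exactly the interaction between the indefinite correction term and the geometric evolution. In the massive Klein-Gordon treatments of \cite{Wang19,FajWy21} the matter decays strictly faster than the metric perturbation, so the matter back-reaction on the geometry is harmless; for the \emph{massless} field here the decay rates are comparable, and one must check carefully that (i) the cross term $\int_M\phi^\prime(\phi-\overline{\phi})\,\vol{\fg}$, whose time derivative produces terms involving the lapse perturbation and $\Lap_\gamma\phi$, does not destroy the monotonicity, and (ii) the coercivity margin left over after the spectral inequality is strong enough to absorb both the indefinite term and the cross terms coupling $\phi$ to $(\fg,\bm{k})$ in the full energy. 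Quantitatively this is where the spectral gap condition becomes sharp. A secondary, more technical obstacle is propagating the gauge: one must verify that CMCSH gauge remains well-posed along the flow and that the shift estimates do not lose the decay, but this is by now routine and can be cited. Finally, the passage from Theorem \ref{thm:fut-stab-simple} to Theorem \ref{thm:main-full} (connecting the near-FLRW regime to the near-Milne regime) is deferred to Section \ref{sec:full-stab} and relies on the scale factor and physical mean curvature approaching those of Milne, so it is not part of the present proof.
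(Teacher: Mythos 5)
Your proposal matches the paper's proof essentially step for step: you correctly identify the corrected energy $E_{SF}$ with the indefinite cross term $\int_M\phi^\prime(\phi-\overline{\phi})\vol{\fg}$, the role of the spectral gap in the Poincaré-type coercivity estimate (Lemmas \ref{lem:fut-ESF-coercivity}, \ref{lem:fut-Sob-est}), the decay-generating energy inequalities (Lemmas \ref{lem:fut-en-est-ESF0}, \ref{lem:fut-en-est-ESF}), the import of the geometric energy machinery from \cite{AM11,AndFaj20} combined with the elliptic lapse/shift estimates, the bootstrap closure in Proposition \ref{prop:fut-bs-imp}, and the completeness criterion at the end. The only small imprecision is the phrase about the "relevant lapse factor diverging" --- the paper's argument via \cite{CB02} actually rests on the \emph{integrability} of $\lvert\nabla n\rvert_g\lesssim\tau^{\alpha-1}$ and $\lvert k\rvert_g\lesssim\tau^{\alpha+1}$ on $[\tau_0,0)$ (which then implies that proper time along causal geodesics is infinite), but the conclusion and citation are the same, so this is a presentational rather than a substantive difference.
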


\noindent Since the control of geometric perturbations uses the same arguments as in \cite{AndFaj20}, the focus in this section will lie on dealing with the the scalar field. The key idea is controlling decay of the scalar field using an indefinite corrective term on top of the canonical energy (see Definition \ref{def:fut-stab}).

\subsection{Preliminaries}\label{subsec:fut-prelim}

\subsubsection{Notation, gauge and spatial reference geometry}

Within this section, we will decompose the Lorentzian metric as follows:
\begin{subequations}
\begin{equation}\label{eq:fut-metric}
\g=-n^2dt^2+g_{ab}(dx^a+X^a)(dx^b+X^bdt)
\end{equation}
We impose CMCSH gauge (see \cite{AM03}) via
\begin{equation}\label{eq:CMCSH}
t=\tau,\,g^{ij}(\Gamma^{a}_{ij}-\Gamhat^{a}_{ij})=0\,,
\end{equation}
\end{subequations}
where $\Gamhat$ refers to the Christoffel symbols with regards to the spatial reference metric $\gamma$.\\
We extend the notation from the Big Bang stability analysis regarding foliations, derivatives, indices and schematic term notation to this setting (see Section \ref{subsec:notation}). In particular, $\Sigma_{T}$ and $\Sigma_{\tau}$ will refer to spatial hypersurfaces along which the logarithmic time $T$ (see \eqref{eq:fut-time}) and the mean curvature $\tau$ are constant (see \eqref{eq:fut-time} on why these are interchangeable), and we will write for example $\Sigma_{T=0}$ when inserting a specific value to avoid potential ambiguity. We use similar notation for scalar functions and tensors that depend on $T$ or, respectively, $\tau$.\\

For the extent of the future stability analysis, we have to introduce an additional condition for the spatial geometry beyond Definition \ref{def:spatial-mf}:

\begin{definition}[Spectral condition for the Laplacian of the spatial reference manifold]\label{def:spatial-mf-spectral}
Let $\mu_0(\gamma)$ to be the smallest positive eigenvalue of the Laplace operator $-\Lap_\gamma=\changefinal{-(\gamma^{-1})^{ab}\nabhat_a\nabhat_b}$ acting on scalar functions, where $(M,\gamma)$ is as in Definition \ref{def:spatial-mf}. $(M,\gamma)$ additionally is assumed to satisfy
\[\mu_0(\gamma)>\frac19\,.\]
\end{definition}

\begin{remark}[\change{Manifolds that satisfy Definition \ref{def:spatial-mf-spectral}}]\label{rem:weeks-and-friends}

\changefinal{The available literature on spectra of $-\Lap_\gamma$ usually focuses on hyperbolic manifolds with sectional curvature $\kappa=-1$. Thus, one needs to check that $\mu_0$ is strictly greater than $1$ to verify the analogue of Definition \ref{def:spatial-mf-spectral} after rescaling.\\

\noindent Numerical works, e.g., \cite{Cornish99, Ino01}, provide evidence for over 250 compact hyperbolic $3$-manifolds to satisfy this spectral bound, many of which are closed. In particular, both \cite{Cornish99}\footnote{These results have to be interpreted cautiously since the numerical method cannot detect eigenvalues below $1$.} and \cite{Ino01} consider the smallest closed orientable hyperbolic $3$-manifold, the Weeks space m003(-3,1), and compute that it falls under Definition \ref{def:spatial-mf-spectral} with $\mu_0\approx 27,8$ in \cite[Table IV]{Cornish99} and $26\lessapprox\mu_0\lessapprox 27,8$ in \cite[Table 2]{Ino01}. Moreover, as demonstrated in \cite[Figure 6]{Ino01}, many manifolds with small enough diameter $d$ satisfy this condition. In fact, the analytical bound
\[\mu_0\geq\max\left\{\frac{\pi^2}{2d^2}-\frac12,\sqrt{\frac{\pi^4}{d^4}+\frac14}-\frac32,\frac{\pi^2}{d^2}e^{-d}\right\}\]
(see \cite[Theorem 1.1-1.2]{ChZ95} with $L=2$) implies that $\mu_0>10$ holds for Weeks space, which has diameter $d\approx 0,843$ (see \cite[Table V]{Cornish99}). Furthermore, \cite{Ino01} finds no closed hyperbolic manifolds that violate this bound. More recently, the Selberg trace formula has been used in \cite{LinLip22,LinLip24,BoMaPa25} to compute candidates for eigenvalues of $-\Lap_\gamma$ and related operators, based on an optimization approach originating in \cite{BooSt07}. In particular, the calculations visualized in \cite[Figure 3]{BoMaPa25} demonstrate that one must have $\mu_0\geq 27,6$ on the Weeks manifold.\\

We also note that it is conjectured that one at least has $\mu_0\geq 1$ for any arithmetic hyperbolic 3-manifold (see \cite[Conjecture 2.3]{Ber03}). In fact, this is tied to the Ramanujan conjecture for automorphic forms. Finally, one can construct compact manifolds with boundary and with constant sectional curvature $-1$ where $\mu_0$ becomes arbitrarily small, see \cite[Corollary 4.4]{Cal94}.}
\end{remark}

\subsubsection{Rescaled variables and Einstein equations}

We will use the standard rescaling of the solution variables by $\tau$:

\begin{definition}[Rescaled variables for future stability]\label{def:fut-rescaled}
\begin{subequations}
\begin{gather}
\fg_{ij}=\tau^2g_{ij}\,,(\fg^{-1})^{ij}=\tau^{-2}g^{ij}\,,\ \fk_{ij}=\tau \hat{k}_{ij}\label{eq:fut-resc-metric}\\
\fn=\tau^2 n\,,\ \fN=\frac{\fn}3-1\,,\ \fX^a=\tau X^a\label{eq:fut-resc-gauge}
\end{gather}
Furthermore, we introduce the logarithmic time 
\begin{equation}\label{eq:fut-time}
T=-\log\left(\frac{\tau}{\tau_0}\right)\,\Leftrightarrow\,\tau=\tau_0e^{-T}
\end{equation}
which satisfies $\del_T=-\tau\del_\tau$. Toward the future, $\tau$ increases from $\tau_0$ to $0$, and thus $T$ increases from $0$ to $\infty$. We additionally introduce:
\begin{gather}
\fdel=\del_T+\Lie_{\fX}=-\tau(\del_\tau-\Lie_X)\label{eq:fut-del0}\\
\phi^\prime=\fn^{-1}\fdel\phi=n^{-1}(-\tau)^{-1}(\del_\tau-\Lie_X)\phi\label{eq:fut-delphi}
\end{gather}
Moreover, for any scalar function $\zeta$, we denote by $\overline{\zeta}$ the mean integral with respect to $(\Sigma_{T},\fg_T)$.
\end{subequations}
\end{definition}
\noindent For symmetric $(0,2)$-tensors $h$, we define the \change{perturbed Lichnerowicz Laplacian}
\begin{equation}\label{eq:LG}
\LG h_{ab}=-\frac1{\mu_{\fg}}\nabhat_k\left((\fg^{-1})^{kl}\mu_{\fg}\nabhat_lh_{ab}\right)-2\Riem[\gamma]_{akbl}(\fg^{-1})^{kk^\prime}(\fg^{-1})^{ll^\prime}h_{k^\prime l^\prime}\,.
\end{equation}
\change{This operator satisfies
\begin{equation}\label{eq:fut-Ric-ell}
\left(\Ric[\fg]-\Ric[\gamma]\right)_{ij}=\frac12\LG(\fg-\gamma)_{ij}+J_{ij},\quad \|J\|_{H^{l-1}}\lesssim \|\fg-\gamma\|_{H^l}\,,
\end{equation}
see \cite[Pf. of Theorem 3.1]{AM03B}. }Under our conditions for the reference geometry, \cite{Kroen15} implies that the smallest positive eigenvalue of $\mathcal{L}_{\gamma,\gamma}$, denoted by $\lambda_0$, satisfies $\lambda_0\geq\frac19$, and that $\mathcal{L}_{\gamma,\gamma}$ has trivial kernel. The spectral condition in Definition \ref{def:spatial-mf-spectral} is not necessary for this to hold true.\\

\change{We now collect the $(3+1)$-decomposition of the Einstein scalar-field equations in CMCSH gauge with the help of \cite[(2.13)-(2.18)]{AndFaj20}:

%

\begin{lemma}[Rescaled CMCSH equations] The rescaled CMCSH Einstein scalar-field equations take the following form:
\begin{subequations}
The constraint equations
\begin{equation}\label{eq:fut-constr}
R[\fg]-\lvert\fk\rvert_{\fg}^2-\frac23=
8\pi\left[\lvert\phi^\prime\rvert^2+\lvert\nabla\phi\rvert_{\fg}^2\right],\ \div_{\fg}\fk_b=
8\pi\tau^{3}\phi^\prime\nabla_{b}\phi,
\end{equation}
the elliptic lapse and shift equations
\begin{align*}
\left(\fLap-\frac13\right)\fn=&\,\fn\left(\lvert\fk\rvert_{\fg}^2+4\pi\left[\lvert\phi^\prime\rvert^2+\lvert\nabla\phi\rvert_{\fg}^2\right]\right)-1,\label{eq:fut-lapse-eq}\numberthis\\
\fLap \fX^a+(\fg^{-1})^{ab}\Ric[\fg]_{bm}\fX^m=&\,2(\fg^{-1})^{am}(\fg^{-1})^{bn}\nabla_b\fn\cdot\fk_{mn}-(\fg^{-1})^{ab}\nabla_b\fN+8\pi\fn\tau^3\phi^\prime\nabla_b\phi\numberthis\label{eq:fut-shift-eq}\\
&\,-2(\fg^{-1})^{bk}((\fg^{-1})^{cl}\fn\cdot\fk_{bc}-\nabla_b \fX^l)(\Gamma^a_{kl}-\Gamhat^a_{kl})\,,
\end{align*}
the geometric evolution equations
\begin{align*}
\fdel\fg_{ab}=&\,2\fn\fk_{ab}+2\fN\fg_{ab}\,\numberthis\label{eq:fut-eq-g}\,,\\
\fdel(\fg^{-1})^{ab}=&\,-2\fn(\fg^{-1})^{ac}(\fg^{-1})^{bd}\fk_{cd}-2\fN(\fg^{-1})^{ab}\numberthis\label{eq:fut-eq-g-1}\,,\\
\fdel\fk_{ab}=&\,-2\fk_{ab}-\fn\left(\Ric[\fg]_{ab}+\frac29\fg_{ab}\right)+\nabla_a\nabla_b\fn\numberthis\label{eq:fut-eq-Sigma}\\
&\,+2\fn\cdot(\fg^{-1})^{mn}\fk_{am}\fk_{bn}-\frac13\fN\fg_{ab}-\fN\Sigma_{ab}-8\pi \fn\nabla_a\phi\nabla_b\phi
\end{align*}
and the wave equation
\begin{equation}\label{eq:fut-wave}
\fdel\phi^\prime=\langle \nabla\fn,\nabla\phi\rangle_{\fg}+\fn\fLap\phi+(1-\fn)\phi^\prime\,.
\end{equation}
\end{subequations}
\end{lemma}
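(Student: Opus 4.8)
The plan is to derive the rescaled CMCSH system as a direct consequence of the standard $3+1$-decomposition of the Einstein scalar-field equations together with the substitution of the gauge conditions \eqref{eq:CMCSH} and the conformal rescaling of Definition \ref{def:fut-rescaled}. Concretely, I would begin from the decomposed equations in physical CMC variables -- the Hamiltonian and momentum constraints, the lapse equation, the metric and second fundamental form evolution equations and the wave equation, as in \cite[Chapter 2.3]{Rendall08} (the scalar-field specialisation being structurally the same as in Proposition \ref{prop:eq}) -- together with the $\Sigma_t$-harmonic shift equation as stated in \cite[(2.13)-(2.18)]{AndFaj20} and \cite{AM03}. The first observation is that $\fg_{ij}=\tau^2 g_{ij}$ differs from $g_{ij}$ only by a slicewise-constant factor, so the Levi-Civita connections of $g$ and $\fg$ coincide; hence $\Gamma^a_{ij}-\Gamhat^a_{ij}$ and the spatially harmonic gauge condition are unchanged by the rescaling, and one only has to track how $R[\cdot]$, $\Lap$, the norms $|\cdot|_{\fg}$, $\Ric[\cdot]$ and the covariant operators pick up powers of $\tau$.

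Then I would insert the rescaling into each equation in turn. The constraints are purely spatial and first order and transform by homogeneity: $R[\fg]=\tau^{-2}R[g]$, $|\fk|_{\fg}^2$ and $|\nabla\phi|_{\fg}^2$ carry their respective weights, and the CMC identity together with $|k|_g^2=|\hat k|_g^2+\tfrac13\tau^2$ produces the constant $-\tfrac23$ in \eqref{eq:fut-constr}; the matter momentum density $\del_0\phi\,\nabla_a\phi$ is rewritten using $\phi'=\fn^{-1}\fdel\phi$ via \eqref{eq:fut-delphi}, which relates $\del_0\phi$ to $\phi'$ by a power of $\tau$, yielding the weighted term in the momentum constraint. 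For the lapse equation I would rescale the standard CMC equation $\Lap_g n = n\big(|\hat k|_g^2+\tfrac13\tau^2+4\pi(\cdots)\big)-\del_t\tau$ (compare \eqref{eq:EEqLapse}); since $t=\tau$ in the gauge \eqref{eq:CMCSH} one has $\del_t\tau=1$, and $\fn=\tau^2 n$, so after multiplying through by the appropriate power of $\tau$ the $\tfrac13\tau^2$-term becomes the operator shift $\fLap-\tfrac13$ and the $-\del_t\tau$ becomes the inhomogeneity $-1$ in \eqref{eq:fut-lapse-eq}; the shift equation \eqref{eq:fut-shift-eq} follows in the same way from the harmonic shift equation using $\fX=\tau X$. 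For the evolution equations the crucial point is that $\del_T=-\tau\del_\tau$, so applying $\fdel=\del_T+\Lie_{\fX}$ to $\tau^2 g_{ab}$ and $\tau\hat k_{ab}$ generates the explicit damping terms; combined with $\fN=\tfrac{\fn}{3}-1$ these reorganise into the $2\fN\fg_{ab}$ of \eqref{eq:fut-eq-g} and the $-2\fk_{ab}$, $-\tfrac13\fN\fg_{ab}$ of \eqref{eq:fut-eq-Sigma}, while $\nabla_a\nabla_b\fn$, $\fn(\Ric[\fg]+\tfrac29\fg)$, the quadratic $\fk$-term and $8\pi\fn\nabla_a\phi\nabla_b\phi$ come directly from rescaling \eqref{eq:EEqk}. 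Finally, decomposing $\square_{\g}\phi=0$ as $-\del_0^2\phi+\tau\del_0\phi+\Lap_g\phi+n^{-1}\langle\nabla n,\nabla\phi\rangle_g=0$ (cf.\ \eqref{eq:wave}) and substituting $\phi'=\fn^{-1}\fdel\phi$ and \eqref{eq:fut-delphi} gives \eqref{eq:fut-wave}, the term $(1-\fn)\phi'$ again arising from the $\del_T=-\tau\del_\tau$ reparametrisation of $\del_0^2\phi$.

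The computation is routine but bookkeeping-heavy, and I expect the only real obstacle to be keeping track simultaneously of all powers of $\tau$ and of the shift/Lie-derivative contributions: one must be careful that $\fdel$ (and not $\del_T$ alone) appears on the left-hand sides of the evolution equations, that $\nabla\phi$ is left unrescaled while $\phi'$ absorbs a $\tau$-weight through \eqref{eq:fut-del0}-\eqref{eq:fut-delphi}, and that the matter terms in the constraints and shift equation carry exactly the $\tau$-powers dictated by the energy-momentum tensor in \eqref{eq:fut-metric}. Since no new analytic input is required beyond the physical decomposed equations and the harmonic shift equation, I would present this lemma as a verification and omit the explicit algebra.
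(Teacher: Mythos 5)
Your proposal is correct and matches the paper's (implicit) approach: the paper simply collects the CMCSH-decomposed Einstein scalar-field equations from \cite[(2.13)--(2.18)]{AndFaj20} and applies the rescaling of Definition \ref{def:fut-rescaled}, which is exactly the verification you sketch, with the key observations (invariance of the Levi-Civita connection under the slicewise-constant conformal factor, $\del_\tau\tau=1$ in CMCSH gauge, the $\del_T=-\tau\del_\tau$ reparametrisation generating the damping terms) all in place. One minor remark: tracking the signs as you describe actually produces $+\frac23$ on the left side of the rescaled Hamiltonian constraint (check against the Milne background $\fg=\gamma$, $R[\gamma]=-\frac23$, $\fk=0$, $\phi=0$), so the stated $-\frac23$ appears to be a typographical slip in the lemma rather than an error in your computation.
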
}

\subsubsection{Energies and data assumptions}

The proof will rely on the following corrected energy quantities:

\begin{definition}[Energies for future stability]\label{def:fut-stab}
\begin{subequations}
\begin{align*}
\fE^{(l)}=&\,(-1)^l\int_{M}\left[\phi^\prime\fLap^l\phi^\prime-\phi\fLap^{l+1}\phi\right]\vol{\fg}, \qquad
\fC^{(l)}=(-1)^l\int_{M}(\phi-\phim)\fLap^l\phi^\prime\,\vol{\fg}\numberthis\\
E_{SF}=&\,\sum_{m=0}^4 \left(\fE^{(m)}+\frac23\fC^{(m)}\right)\numberthis\\
\fEg=&\,\sum_{m=1}^5\biggr(\frac92\int_{M}\langle\fg-\gamma,\LG^m(\fg-\gamma)\rangle_{\fg}\vol{\fg}+\frac12\int_{M}\langle6\fk,\LG^{m-1}(6\fk)\rangle_{\fg}\vol{\fg}\numberthis\\
&\,\phantom{\sum_{m=1}^l}+c_E\int_{M}\langle 6\fk,\LG^{m-1}(\fg-\gamma)\rangle_{\fg}\vol{\fg}\biggr)
\end{align*}
\end{subequations}
The constant $c_E$ is given by
\begin{equation}\label{eq:fut-corr-const}
c_E=\begin{cases}
1 & \lambda_0>\frac19 \\
9(\lambda_0-\epsilonnew^\prime) & \lambda_0=\frac19\,,
\end{cases}
\end{equation}
where $\epsilonnew^\prime>0$ is chosen to be small enough within the argument.
\end{definition}

The Sobolev norms $H_{\fg}^l$ and $C_{\fg}^l$ are defined analogously to Definitions \ref{def:sob-norms} and \ref{def:sup-norms}, with similar conventions on suppressing time dependence in notation whereever possible. Since norms with respect to $\fg$ and $\gamma$ are equivalent under the bootstrap assumption (and consequently throughout the entire argument), we will simply denote the norms by $H^l$ and $C^l$ throughout unless the specific metric is crucial.

\begin{assumption}[Initial data assumption]\label{ass:fut-init} The initial data on the spatial hypersurface $\Sigma_{T=0}$ is assumed to be small in the following sense:
\begin{align*}
\numberthis\label{eq:fut-init}\change{\|\fg-\gamma\|_{C^3}+\|\fk\|_{C^2}+\|\fN\|_{C^4}+\|\fX\|_{C^4}+\|\phi^\prime\|_{C^2}+\|\nabla\phi\|_{C^2}&\\
\change{+\|\fg-\gamma\|_{H^5}+\|\fk\|_{H^4}+\|\fN\|_{H^6}+\|\fX\|_{H^6}+\|\phi^\prime\|_{H^4}+\|\nabla\phi\|_{H^4}}}&\,\leq\epsilonnew^2
\end{align*}
\end{assumption}

\begin{remark}[Local well-posedness toward the future]\label{rem:fut-lwp}
Under the above initial data assumption, local well-posedness is satisfied by analogizing the arguments for local well-posedness in the vacuum setting (see \change{\cite[Theorem 3.1]{AM03B}}) with the matter coupling added. Since this only consists of adding another wave equation to the hyperbolic system, the argument is structurally unchanged given appropriate smallness assumptions on $\phi^\prime$ and $\nabla\phi$ (where $\phi$ itself does not enter into the Einstein system). \change{As before, we can without loss of generality assume that the initial is sufficiently regular to ensure that }$E_{geom},\,\E^{(l)}_{SF}$ and $\fC^{(l)}$ initially are continuously differentiable (in time) for any $l\leq 4$. \delete{Beside obtaining this existence result, we will only need smallness of initial data of up to one order less to prove our stability result.}
\end{remark}

\begin{assumption}[Bootstrap assumption]\label{ass:fut-bootstrap} On the bootstrap interval $T\in[0,T_{Boot})$, \change{we assume one has}
\begin{align*}
\change{\label{eq:fut-bootstrap}\numberthis\|\fg-\gamma\|_{C^3}+\|\fk\|_{C^2}+\|\fN\|_{C^4}+\|\fX\|_{C^4}+\|\phi^\prime\|_{\change{C^2}}+\|\nabla\phi\|_{C^2}}&\\
\change{+\|\fg-\gamma\|_{H^5}+\|\fk\|_{H^4}+\|\fN\|_{H^6}+\|\fX\|_{H^6}+\|\phi^\prime\|_{H^4}+\|\nabla\phi\|_{H^4}}&\change{\,\leq\epsilonnew e^{-\frac{T}2}\,.}
\end{align*}
\end{assumption}

\noindent We only choose not to use \enquote{$\lesssim$}-notation in the above assumptions for notational convenience in some technical computations. As before, $\epsilonnew$ can be chosen to have been sufficiently small for the following estimates to hold and for the decay estimates we derive from the bootstrap assumptions to be strict improvements. Moreover, note that \eqref{eq:fut-bootstrap} is satisfied since all of the norms are continuous in time (see Remark \ref{rem:fut-lwp})\delete{, and \eqref{eq:fut-bootstrap-phi-mean} is satisfied local-in-time since the spatial hypersurfaces are compact and $\phi$ is continuous.}\\

Before moving on to the energy estimates, we quickly collect the following immediate consequence of the bootstrap assumptions:

\begin{lemma}[Sobolev estimate for the curvature]\label{lem:fut-Ric-est}
The following estimate holds for any $l\in\N_0$:
\begin{subequations}
\begin{equation}\label{eq:fut-Ric-est}
\left\|\Ric[\fg]+\frac29\fg\right\|_{H^l}\lesssim \|\fg-\gamma\|_{H^{l+2}}+\|\fg-\gamma\|_{H^{l+1}}^2
\end{equation}
Under the bootstrap assumptions, this implies
\begin{equation}\label{eq:fut-Ric-bs}
\left\|\Ric[\fg]+\frac29\fg\right\|_{C^1}+\left\|\Ric[\fg]+\frac29\fg\right\|_{H^3}\lesssim \epsilonnew e^{-\frac{T}2}
\end{equation}
\end{subequations}
\end{lemma}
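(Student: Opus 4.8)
\textbf{Proof plan for Lemma \ref{lem:fut-Ric-est}.}

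The plan is to derive \eqref{eq:fut-Ric-est} from the elliptic identity \eqref{eq:fut-Ric-ell} for the perturbed Lichnerowicz Laplacian, and then to specialize it via the bootstrap assumption \eqref{eq:fut-bootstrap}. First I would recall from \eqref{eq:fut-Ric-ell} that
\[
\left(\Ric[\fg]-\Ric[\gamma]\right)_{ij}=\tfrac12\LG(\fg-\gamma)_{ij}+J_{ij},
\qquad \|J\|_{H^{l-1}}\lesssim\|\fg-\gamma\|_{H^l}\,,
\]
and that $\Ric[\gamma]=-\tfrac29\gamma$ by Definition \ref{def:spatial-mf}, so that
$\Ric[\fg]+\tfrac29\fg = \left(\Ric[\fg]-\Ric[\gamma]\right) + \tfrac29(\fg-\gamma)$.
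The term $\tfrac29(\fg-\gamma)$ is trivially bounded by $\|\fg-\gamma\|_{H^l}$ in $H^l$, and the remaining task is to bound $\LG(\fg-\gamma)$ in $H^l$. Since $\LG$ is, schematically, a second-order operator whose top-order part is $-(\fg^{-1})^{kl}\nabhat_k\nabhat_l$ plus a zeroth-order curvature term built from $\Riem[\gamma]$ and contractions with $\fg^{-1}$, we have pointwise (after commuting $\nabhat$ past $\fg^{-1}$ and using the product rule)
\[
\lvert \nabhat^l \LG(\fg-\gamma)\rvert_{\gamma}
\;\lesssim\; \sum_{I+K=l+2}\lvert\nabhat^I(\fg^{-1}-\gamma^{-1})\rvert_\gamma\,\lvert\nabhat^K(\fg-\gamma)\rvert_\gamma
\;+\;\lvert\nabhat^{l+2}(\fg-\gamma)\rvert_\gamma\;+\;(\text{lower-order terms})\,.
\]
Integrating, applying the standard Moser/Gagliardo–Nirenberg product estimates on the closed manifold $(M,\gamma)$, and using that $\|\fg^{-1}-\gamma^{-1}\|_{H^m}\lesssim\|\fg-\gamma\|_{H^m}$ whenever $\fg$ is $C^0$-close to $\gamma$ (a von Neumann series argument together with the algebra property of $H^m$ for $m$ large, or a direct low-norm argument for small $m$), gives
$\|\LG(\fg-\gamma)\|_{H^l}\lesssim \|\fg-\gamma\|_{H^{l+2}}+\|\fg-\gamma\|_{H^{l+1}}^2$, where the quadratic term collects all contributions in which at least one derivative falls on $\fg^{-1}-\gamma^{-1}$ (equivalently on $\fg-\gamma$). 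Combining this with the estimate on $J$ from \eqref{eq:fut-Ric-ell} and with the linear bound on $\tfrac29(\fg-\gamma)$ yields \eqref{eq:fut-Ric-est}.

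For \eqref{eq:fut-Ric-bs} I would simply apply \eqref{eq:fut-Ric-est} with $l=1$ and $l=3$ together with the Sobolev embedding $H^{l+2}\hookrightarrow C^l$ on the $3$-manifold $M$. The bootstrap assumption \eqref{eq:fut-bootstrap} gives $\|\fg-\gamma\|_{H^5}\lesssim\epsilonnew e^{-T/2}$, hence in particular $\|\fg-\gamma\|_{H^3}, \|\fg-\gamma\|_{H^4}\lesssim\epsilonnew e^{-T/2}$, so that
$\|\Ric[\fg]+\tfrac29\fg\|_{H^3}\lesssim \|\fg-\gamma\|_{H^5}+\|\fg-\gamma\|_{H^4}^2\lesssim\epsilonnew e^{-T/2}+\epsilonnew^2 e^{-T}\lesssim\epsilonnew e^{-T/2}$,
and likewise $\|\Ric[\fg]+\tfrac29\fg\|_{H^3}\hookrightarrow C^1$ controls $\|\Ric[\fg]+\tfrac29\fg\|_{C^1}$ by the same bound (here $H^3(M)\hookrightarrow C^1(M)$ since $\dim M=3$). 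The quadratic term is strictly subleading, so it is harmlessly absorbed.

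The only genuinely nontrivial point — and thus the main obstacle — is making precise the claim that $\LG$ applied to $\fg-\gamma$ loses exactly two derivatives with the nonlinearity appearing only at the quadratic level, i.e.\ justifying the schematic product estimate above. Concretely one must: (i) expand $\LG$ in terms of $\nabhat$ and $\fg^{-1}$ (not $\nabla$), so that all geometric background quantities are those of the fixed reference metric $\gamma$; (ii) track that every factor of $\fg^{-1}$ beyond the ``first copy'' contributes $\fg^{-1}-\gamma^{-1}$, which is itself $O(\fg-\gamma)$ and supplies the quadratic gain; and (iii) invoke the multiplication/Moser estimates in the correct range of Sobolev exponents (low-regularity cases $l\le 1$ need the $C^0$-closeness of $\fg$ rather than an algebra property). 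This is routine in the Milne-stability literature — it is exactly the structure used in \cite[Pf.\ of Theorem 3.1]{AM03B} — so I would state it with a reference rather than reproving the Moser estimates from scratch.
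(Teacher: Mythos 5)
Your proposal is correct and takes essentially the same route as the paper's proof: decompose $\Ric[\fg]+\frac29\fg$ using $\Ric[\gamma]=-\frac29\gamma$, invoke the linearization identity \eqref{eq:fut-Ric-ell}, control $\LG(\fg-\gamma)$ as a second-order operator in $\nabhat$ with the quadratic contribution coming from factors of $\fg^{-1}-\gamma^{-1}$, and then insert the bootstrap assumption and the Sobolev embedding $H^3(M)\hookrightarrow C^1(M)$. The paper compresses this into a single line, whereas you spell out the $J$-term and the Moser/product estimates explicitly, which is the right level of care for why the remainder is at worst quadratic in $\|\fg-\gamma\|_{H^{l+1}}$.
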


\begin{proof}
By \change{\eqref{eq:fut-Ric-ell}}, one has
\begin{align*}
\left\|\Ric[\fg]+\frac29\fg\right\|_{H^l}\leq&\,\frac12\|\LG(\fg-\gamma)\|_{H^l}+K\|\fg-\gamma\|_{H^{l+1}}^2\,
\end{align*}
for some suitably large $K>0$ along with the fact that $\LG$ is elliptic. This implies the first inequality, while the latter follows from directly from the bootstrap assumption \eqref{eq:fut-bootstrap} and by applying the standard Sobolev embedding.
\end{proof}

\subsection{Elliptic estimates}\label{subsec:fut-ell-est} We briefly collect the elliptic estimates for lapse and shift:

\begin{lemma}[Elliptic estimates for lapse and shift]\label{lem:fut-ell-est}
Let $l\in\{3,4,5,6\}$. Then, one has $\fn\in(0,3)$ (thus $\fN\in(-1,0)$) and the following estimates hold:
\begin{subequations}
\begin{align*}
\numberthis\label{eq:ell-est-lapse}\|\fN\|_{H^l}\lesssim&\change{\,\epsilonnew e^{-\frac{T}2}\|\fk\|_{H^{l-2}}+\epsilonnew^2e^{-T}\|\fg-\gamma\|_{H^{l-2}}+\epsilonnew e^{-\frac{T}2}\left[\|\phi^\prime\|_{H^{l-2}}+\|\nabla\phi\|_{H^{l-2}}\right]}\\
\numberthis\label{eq:ell-est-shift}\|\fX\|_{H^l}\lesssim&\,\change{\epsilonnew e^{-\frac{T}2}\|\fk\|_{H^{l-2}}+\epsilonnew e^{-\frac{T}2}\|\fg-\gamma\|_{H^{l-1}}+\epsilonnew e^{-\frac{T}2}\left[\|\phi^\prime\|_{H^{l-2}}+\|\nabla\phi\|_{H^{l-2}}\right]}
\end{align*} 
\end{subequations}
\end{lemma}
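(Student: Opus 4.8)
The estimates \eqref{eq:ell-est-lapse}--\eqref{eq:ell-est-shift} are standard elliptic estimates for the CMCSH lapse and shift equations \eqref{eq:fut-lapse-eq}--\eqref{eq:fut-shift-eq}, so the plan is to follow the by-now classical template of \cite[Section 3]{AM03B} and \cite[Lemma 2.3 (or the relevant elliptic lemma)]{AndFaj20}, carefully tracking the matter contributions. First I would establish the pointwise bounds $\fn\in(0,3)$ and $\fN\in(-1,0)$: writing \eqref{eq:fut-lapse-eq} as $\fLap\fn = \frac13\fn + \fn(\lvert\fk\rvert_{\fg}^2 + 4\pi[\lvert\phi^\prime\rvert^2+\lvert\nabla\phi\rvert_{\fg}^2]) - 1$ and applying the maximum principle at an interior max/min of $\fn$ (as in Lemma \ref{lem:lapse-maxmin}) shows that $\fn$ stays trapped in the interval dictated by the zeroth-order coefficients; since the bracketed matter-plus-shear term is $O(\epsilonnew^2 e^{-T})$ under \eqref{eq:fut-bootstrap}, one gets $\fn$ close to $3$ and hence $\fN$ close to $0$, in particular $\fN\in(-1,0)$.

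Next, for the lapse Sobolev estimate I would treat the operator $\fLap - \frac13$ appearing on the left of \eqref{eq:fut-lapse-eq}. The key point is that $-\fLap + \frac13$ is, under the bootstrap assumptions, a small perturbation of $-\Lap_\gamma + \frac13$, which is invertible with bounded inverse on $H^l$ since $\mu_0(\gamma) > \frac19 > 0$ (indeed $-\Lap_\gamma + \frac13$ is bounded below by $\frac13$ on all of $H^l$). Rewriting \eqref{eq:fut-lapse-eq} as $(\fLap - \frac13)\fN = \fn(\lvert\fk\rvert_{\fg}^2 + 4\pi[\lvert\phi^\prime\rvert^2 + \lvert\nabla\phi\rvert_{\fg}^2]) + (\fn - 3)\cdot\frac13\cdot\frac1{?}$ — more precisely, subtracting the background relation so that the right-hand side is genuinely quadratically small — one applies standard elliptic regularity for $\fLap$ (with the difference $\fLap - \Lap_\gamma$ absorbed using $\|\fg-\gamma\|_{H^{l+1}}\lesssim\epsilonnew e^{-T/2}$ and the fact that $l+1\le 7$ is still controlled by $H^4$ plus lower-order data, via the bootstrap hierarchy), to obtain $\|\fN\|_{H^l}\lesssim \|\lvert\fk\rvert_{\fg}^2\|_{H^{l-2}} + \|\lvert\phi^\prime\rvert^2\|_{H^{l-2}} + \|\lvert\nabla\phi\rvert_{\fg}^2\|_{H^{l-2}} + (\text{lower-order metric terms})$. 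Estimating each quadratic term by putting one factor in $C^0$ (or $C^{l-2}$, which is controlled by \eqref{eq:fut-bootstrap}) and the other in $H^{l-2}$ gives the claimed form: $\|\fk\|_{H^{l-2}}$ gets the prefactor $\epsilonnew e^{-T/2}$ from $\|\fk\|_{C^{l-2}}$, the $\phi^\prime$ and $\nabla\phi$ terms likewise get $\epsilonnew e^{-T/2}$, and the metric-difference error from commuting $\fLap$ past $\Lap_\gamma$ produces $\epsilonnew^2 e^{-T}\|\fg-\gamma\|_{H^{l-2}}$.

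For the shift estimate, \eqref{eq:fut-shift-eq} has principal part $\fLap\fX^a + (\fg^{-1})^{ab}\Ric[\fg]_{bm}\fX^m$. Since $\Ric[\fg]$ is close to $\Ric[\gamma] = -\frac29\gamma$, the zeroth-order operator is close to $\Lap_\gamma - \frac29\mathrm{Id}$ acting on vector fields; the relevant fact (used identically in \cite{AM03B, AndFaj20}) is that this operator is invertible on $H^l$ for the reference geometry — this is where one uses that $-\Lap_\gamma$ restricted to vector fields has no spectrum causing an obstruction, equivalently that $\mathcal{L}_{\gamma,\gamma}$ has trivial kernel as stated after \eqref{eq:fut-Ric-ell}. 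Then I would apply elliptic regularity and estimate the right-hand side of \eqref{eq:fut-shift-eq} term by term: the term $2(\fg^{-1})(\fg^{-1})\nabla\fn\cdot\fk$ is bounded by $\|\nabla\fN\|\cdot\|\fk\|$ which, using the already-established lapse estimate, is quadratically small and contributes $\lesssim\epsilonnew e^{-T/2}\|\fk\|_{H^{l-2}}$ plus lower-order pieces; the $\nabla\fN$ term is handled similarly; the matter source $8\pi\fn\tau^3\phi^\prime\nabla\phi$ contributes $\lesssim\epsilonnew e^{-T/2}\|\nabla\phi\|_{H^{l-2}}$ (indeed with an extra $\tau^3$ gain); and the last, quadratic-in-gauge term involving $(\Gamma-\Gamhat)$ is cubically small. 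Collecting terms and noting that the $\|\fg-\gamma\|$ dependence enters at order $H^{l-1}$ (one derivative is lost relative to the lapse because $\Ric[\fg]$ itself costs two derivatives of $\fg-\gamma$, but it appears multiplied by the small factor $\fX$, and the div-curl/commutator structure only needs $l-1$) yields \eqref{eq:ell-est-shift}.

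\textbf{Main obstacle.} The genuinely delicate point is not the matter coupling — which is purely perturbative, entering through the quadratically small terms $\lvert\phi^\prime\rvert^2$, $\lvert\nabla\phi\rvert_{\fg}^2$ and $\tau^3\phi^\prime\nabla\phi$ — but rather the bookkeeping of derivative levels: one must ensure that when $l = 6$ the right-hand sides are controlled at level $H^4$ (for $\fk$, $\phi^\prime$, $\nabla\phi$) and $H^5$ (for $\fg-\gamma$), which is exactly the top order available in the bootstrap assumption \eqref{eq:fut-bootstrap}, and that the elliptic estimates for $\fLap$ in terms of $\Lap_\gamma$ do not secretly require more regularity of $\fg-\gamma$ than is available. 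This is where invertibility of the model operators with the sharp spectral information ($\mu_0(\gamma)>1/9$, trivial kernel of $\mathcal{L}_{\gamma,\gamma}$) is essential, and where I would be most careful in mirroring the structure of \cite[Section 3]{AM03B} rather than reproving it.
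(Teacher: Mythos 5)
Your overall strategy matches the paper's: the pointwise bounds come from the maximum principle applied to \eqref{eq:fut-lapse-eq} exactly as in Lemma \ref{lem:lapse-maxmin}, and the Sobolev estimates come from standard elliptic regularity for \eqref{eq:fut-lapse-eq}--\eqref{eq:fut-shift-eq} followed by product estimates on the source terms using the bootstrap assumptions. Indeed, when rewritten for $\fN$, the lapse equation cleanly becomes $\bigl(\fLap-\tfrac13\bigr)\fN=(\fN+1)\bigl(\lvert\fk\rvert_\fg^2+4\pi[\lvert\phi'\rvert^2+\lvert\nabla\phi\rvert_\fg^2]\bigr)$, which resolves the ``$?$'' in your plan.

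One point of your reasoning is off, though it happens not to affect the final form. You attribute the term $\epsilonnew^2 e^{-T}\|\fg-\gamma\|_{H^{l-2}}$ in \eqref{eq:ell-est-lapse} to the commutator error $\fLap-\Lap_\gamma$. If that commutator were kept as a separate term, its leading contribution would be $(\fg^{-1}-\gamma^{-1})\ast\nabla^2\fN$, which in $H^{l-2}$ is bounded by $\|\fg-\gamma\|_{H^{l-2}}\|\fN\|_{C^2}$, giving a prefactor of order $\epsilonnew e^{-T/2}$ -- \emph{weaker} than the $\epsilonnew^2 e^{-T}$ that is claimed, and large enough to matter. In fact, one should not split off this commutator at all: elliptic regularity is applied directly to $\fLap_\fg-\tfrac13$, with the $\fg^{-1}$-dependence of the estimate controlled uniformly by the bootstrap (at $l=6$ this requires $\fg-\gamma\in H^{5}$, which is exactly what \eqref{eq:fut-bootstrap} provides; your remark ``$l+1\le 7$'' overshoots, since the lapse equation contains no $\Ric[\fg]$). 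The $\|\fg-\gamma\|_{H^{l-2}}$ term in the paper's estimate actually arises from the matter source $\lvert\nabla\phi\rvert_\fg^2=(\fg^{-1})^{ab}\nabla_a\phi\nabla_b\phi$: when all $l-2$ derivatives fall on $\fg^{-1}$, one obtains $\|\fg-\gamma\|_{H^{l-2}}$ multiplied by $\|\nabla\phi\|_{C}^2\lesssim\epsilonnew^2 e^{-T}$, which is precisely the stated prefactor. The same mechanism (derivatives hitting the metric contractions in the quadratic matter sources) governs the $\|\fg-\gamma\|_{H^{l-1}}$ and $\|\fg-\gamma\|_{H^{l-3}}$ contributions in the shift estimate. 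Your treatment of the shift operator's invertibility via the trivial kernel of $\mathcal{L}_{\gamma,\gamma}$, and of the remaining source terms, is correct.
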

\begin{proof}
The pointwise bounds on $\fn$ follow via \eqref{eq:fut-lapse-eq} and the maximum principle as in Lemma \ref{lem:lapse-maxmin}. \change{For the remaining estimates, applying elliptic regularity theory to \eqref{eq:fut-lapse-eq} and \eqref{eq:fut-shift-eq} implies:}
\begin{align*}
\|\fN\|_{H^l}\lesssim&\,\change{\|\fk\|_{C^{\lfloor\frac{l-2}2\rfloor}}\|\fk\|_{H^{l-2}}+\|\nabla\phi\|_{C^2}^2\|\fg-\gamma\|_{H^{l-2}}}\\
&\,\change{+\left[\|\nabla\phi\|_{C^2}\left(1+\|\fg-\gamma\|_{C^2}\right)+\|\phi^\prime\|_{C^2}\right]\left[\|\phi^\prime\|_{H^{l-2}}+\|\nabla\phi\|_{H^{l-2}}\right]}\\
\|\fX\|_{H^l}\lesssim&\,\change{\|\fk\|_{C^{\lfloor\frac{l-2}2\rfloor}}\|\fk\|_{H^{l-2}}+\|\fg-\gamma\|_{H^{l-1}}^2+\|\nabla\phi\|_{C^1}\|\fg-\gamma\|_{H^{l-3}}}\\
&\,\change{+\left[\|\nabla\phi\|_{C^2}^2\left(1+\|\fg-\gamma\|_{C^2}\right)+\|\phi^\prime\|_{C^2}\right]\left[1+\|\fN\|_{C^2}\right]\left[\|\phi^\prime\|_{H^{l-2}}+\|\nabla\phi\|_{H^{l-2}}\right]}
\end{align*}
\change{The statement then follows by inserting \eqref{eq:fut-bootstrap}.}
\end{proof}


\subsection{Scalar field energy estimates}\label{subsec:fut-ESF}

\subsubsection{Near-coercivity of $E_{SF}$}

We will be able to prove a decay estimate via a Gronwall argument only for the corrected energy $E_{SF}$ . Hence, we first need to verify that this energy controls the solution norms, for which we first show that it controls the \enquote{canonical} scalar field energies:

\begin{lemma}[Positivity of corrected scalar field energies]\label{lem:fut-ESF-coercivity}
Let
\[Q=\frac{\sqrt{1+9q}-1}{\sqrt{1+9q}}\ \text{with}\ q=\frac12\left(\mu_0(\gamma)-\frac19\right)\,.\]
Then, for any $l\in\{0,1,2,3,4\}$ and $\epsilonnew>0$ small enough, one has
\begin{equation}\label{eq:fut-ESF-coercivity}
Q\fE^{(l)}\leq \fE^{(l)}+\frac23\fC^{(l)},\quad \text{hence}\quad Q\sum_{m=0}^4\fE^{(l)}\leq E_{SF}
\end{equation}
\end{lemma}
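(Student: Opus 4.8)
The plan is to control the indefinite correction term $\fC^{(l)}$ against the canonical energy $\fE^{(l)}$ via the spectral gap of $-\fLap$ (equivalently, of $-\Lap_\gamma$, since the metrics are uniformly equivalent under the bootstrap assumption), so that $\fE^{(l)} + \frac23\fC^{(l)}$ remains a positive definite quadratic form in $(\phi', \phi - \phim)$. First I would observe that, after integrating by parts (recall $\fE^{(l)}$ is written in the integration-by-parts form $\int_M \lvert\fLap^{l/2}\phi'\rvert^2 + \lvert\nabla\fLap^{l/2}\phi\rvert_{\fg}^2\,\vol{\fg}$ for even $l$, and analogously for odd $l$), we may restrict to estimating the base object $\int_M \lvert\phi'\rvert^2\,\vol{\fg} + \int_M \lvert\nabla\phi\rvert_{\fg}^2\,\vol{\fg}$ against $\int_M (\phi-\phim)\phi'\,\vol{\fg}$, since each derivative level decouples analogously (the commuted quantities $\fLap^{l/2}\phi$ have the same mean-integral structure, and $\overline{\fLap^{l/2}\phi'}$ contributes nothing because $\int_M \fLap^{l/2}\phi'\,\vol{\fg}$ integrates against a constant). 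The key inequality is the Poincaré-type bound
\begin{equation*}
\mu_0(\gamma)\int_M (\phi-\phim)^2\,\vol{\fg} \lesssim \int_M \lvert\nabla\phi\rvert_{\fg}^2\,\vol{\fg}\,,
\end{equation*}
valid up to an error of relative size $\O{\epsilonnew e^{-T/2}}$ coming from replacing $\fg$ by $\gamma$ and from $\phim$ versus the $\gamma$-mean; under the bootstrap assumption this error is negligible and can be absorbed into the small $\epsilonnew$-slack allowed in the statement.

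The heart of the argument is then the following elementary computation: by Cauchy--Schwarz and the Poincaré inequality,
\begin{equation*}
\left\lvert \frac23\fC^{(l)}\right\rvert = \frac23\left\lvert\int_M (\phi-\phim)\phi'\,\vol{\fg}\right\rvert \leq \frac23 \left(\int_M (\phi-\phim)^2\,\vol{\fg}\right)^{1/2}\left(\int_M \lvert\phi'\rvert^2\,\vol{\fg}\right)^{1/2}\,,
\end{equation*}
and writing $a = \int_M \lvert\phi'\rvert^2\,\vol{\fg}$, $b = \int_M \lvert\nabla\phi\rvert_{\fg}^2\,\vol{\fg}$, $c = \int_M(\phi-\phim)^2\,\vol{\fg}$ with $\mu_0 c \leq b$ (modulo the small error), one must show
\begin{equation*}
a + b - \frac23\sqrt{ac} \geq Q(a+b)\,, \quad \text{i.e.}\quad (1-Q)(a+b) \geq \frac23\sqrt{ac}\,.
\end{equation*}
Using $c \leq b/\mu_0$ and the AM--GM bound $\sqrt{ab} \leq \frac12(\lambda a + \lambda^{-1} b)$ optimized over $\lambda$, this reduces to a one-variable optimization whose sharp constant is exactly $Q = 1 - (1+9q)^{-1/2}$ with $q = \frac12(\mu_0(\gamma) - \frac19)$; indeed $Q < 1$ precisely because $\mu_0(\gamma) > \frac19$ (Definition \ref{def:spatial-mf-spectral}), which is where the spectral condition enters decisively. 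I would carry this out by completing the square: $a + b - \frac23\sqrt{ac} \geq a + \mu_0 c - \frac23\sqrt{ac} = (\sqrt a - \frac13\mu_0^{-1/2}\sqrt{\mu_0 c})^2 + (1 - \frac1{9\mu_0})a + (... )$ — more cleanly, bound $\frac23\sqrt{ac} \leq \frac23\mu_0^{-1/2}\sqrt{ab} \leq \frac13\mu_0^{-1/2}(\delta a + \delta^{-1}b)$ and choose $\delta$ so that the residual $a+b - \frac13\mu_0^{-1/2}\delta a - \frac13\mu_0^{-1/2}\delta^{-1}b$ is a uniform multiple of $a+b$; the best multiple is $Q$. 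Summing over $l = 0,\dots,4$ and over the derivative-commuted pieces then gives the second assertion $Q\sum_{m=0}^4 \fE^{(m)} \leq E_{SF}$ directly from the first.

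The main obstacle I anticipate is bookkeeping the discrepancy between the $\fg$-geometry (in which the energies are defined) and the fixed $\gamma$-geometry (in which the Poincaré/spectral inequality is available): one needs $\int_M (\phi-\phim)^2 \vol{\fg} = \int_M (\phi - \overline{\phi}^\gamma)^2\vol{\gamma} + \O{\epsilonnew e^{-T/2}}\|\phi-\phim\|_{L^2}^2$ and similarly for the gradient term, using $\|\fg-\gamma\|_{C^0} \lesssim \epsilonnew e^{-T/2}$ and the corresponding control on $\vol{\fg} - \vol{\gamma}$ and $\phim - \overline{\phi}^\gamma$. Because the statement only claims the inequality "for $\epsilonnew > 0$ small enough," these errors are harmless: they shift the sharp constant from $Q$ to $Q - \O{\epsilonnew}$, and since the target constant $Q$ in the statement is the \emph{clean} spectral constant, one should in fact prove the slightly stronger clean inequality with constant $Q$ on the pure-$\gamma$ model and then note that the $\epsilonnew$-perturbation is absorbed — or, more carefully, one proves $Q' \fE^{(l)} \leq \fE^{(l)} + \frac23\fC^{(l)}$ with $Q' = Q - K\epsilonnew$ and remarks that the argument only ever uses a fixed positive lower bound, so relabeling suffices. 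A secondary technical point is justifying the reduction to the base level for commuted quantities: this uses that $\fLap^{m}$ and $\nabla$ commute with the mean-zero projection up to terms controlled by $\|\fg-\gamma\|_{H^{2m+1}}$, which is again $\O{\epsilonnew e^{-T/2}}$ and absorbable.
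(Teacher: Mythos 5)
Your Cauchy--Schwarz/Poincar\'e/AM--GM computation reaching the constant $Q = 1 - (1+9q)^{-1/2}$ is correct, and the inner algebra (noting $(\frac19+q)^{-1/2} = 3(1+9q)^{-1/2}$, then using $2\sqrt{ab}\le a+b$) is exactly what the paper does, so the quantitative heart of the argument is sound. But the route you have chosen around the $\fg$-vs-$\gamma$ discrepancy is genuinely different from the paper's and, as you yourself flag, it does not prove the lemma with the stated constant: applying the $\gamma$-Poincar\'e inequality and then tracking $\O{\epsilonnew e^{-T/2}}$ errors from norm-switching, volume-form switching, and the mean $\phim$ versus $\overline{\phi}^\gamma$ leaves you with $Q-K\epsilonnew$. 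The paper avoids this entirely by applying the Poincar\'e inequality directly on $(\Sigma_T,\fg_T)$ with respect to $\vol{\fg}$ and the $\fg$-mean $\phim$, so every norm in the argument is already a $\fg$-norm and no switching occurs. The only place $\gamma$ enters is through $\mu_0$: one uses that the first positive eigenvalue depends continuously on the metric, so the bootstrap assumption gives $\mu_0(\fg_T)\ge \mu_0(\gamma) - \frac12(\mu_0(\gamma)-\frac19) = \frac19+q$. This is precisely why $q$ is defined with the factor $\frac12$ — it is slack for the \emph{eigenvalue} perturbation, not for norm-switching — and with that insight the constant $Q$ comes out exactly rather than up to $\O{\epsilonnew}$.

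A secondary point: your justification for reducing to the base level invokes commutation of $\fLap^m$ with the mean-zero projection ``up to terms controlled by $\|\fg-\gamma\|_{H^{2m+1}}$.'' No such errors arise. For $m\ge 1$ one has $\overline{\fLap^m\zeta}=0$ exactly by the divergence theorem (integrate $\fLap(\fLap^{m-1}\zeta)$ over the closed manifold), so the Poincar\'e inequality applies cleanly to $\fLap^m\phi$ and $\fLap^m\phi'$ without any perturbative correction. The paper also treats $l=1$ by integrating $\fC^{(1)}=\int\langle\nabla\phi,\nabla\phi'\rangle_{\fg}\vol{\fg}$ by parts the \emph{other} way to produce $-\int(\phi'-\overline{\phi'})\fLap\phi\,\vol{\fg}$, applying Poincar\'e to $\phi'$ rather than $\phi$, which is tidier than folding $l=1$ into the even-order argument; worth noting since for odd $l$ the roles of the two factors in the energy swap.
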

\begin{proof}
We denote the smallest positive eigenvalue of \changefinal{$-\fLap$ }acting on scalar functions on $\Sigma_T$ by $\mu_0(\fg_T)$. By the bootstrap assumption \eqref{eq:fut-bootstrap} and since $\mu_0$ depends continuously on the metric, we obtain the following for small enough $\epsilonnew>0$:
\begin{equation*}
\mu_0(\fg_T)\geq \mu_0(\gamma)-\frac12\left(\mu_0(\gamma)-\frac19\right)\geq \frac19+q
\end{equation*}
By the Poincaré inequality applied on $(\Sigma_T,\fg_T)$ (see \cite[p.1037]{CBM01}), the above spectral bound implies the following for any $\zeta\in H^1(\Sigma_T)$:
\begin{equation}\label{eq:adapted-poincare}
\|\zeta-\overline{\zeta}\|_{L^2_{\fg}(\Sigma_T)}^2\leq \mu_0(\fg_T)^{-1}\|\nabla\zeta\|_{L^2_{\fg}(\Sigma_T)}^2\leq \left(\frac19+q\right)^{-1}\|\nabla\zeta\|_{L^2_{\fg}(\Sigma_T)}^2
\end{equation}
For $l=0$, this means
\begin{align*}
\fE^{(0)}+\frac23\fC^{(0)}\geq&\,\|\phi^\prime\|_{L^2_{\fg}}^2+\|\nabla\phi\|_{L^2_{\fg}}^2-\frac23\|\phi-\phim\|_{L^2_{\fg}}\|\phi^\prime\|_{L^2_{\fg}}\\
\geq&\,\|\phi^\prime\|_{L^2_{\fg}}^2+\|\nabla\phi\|_{L^2_{\fg}}^2-2\left(1+9q\right)^{-\frac12}\|\nabla\phi\|_{L^2_{\fg}}\|\phi^\prime\|_{L^2_{\fg}}\\
\geq&\, \frac{\sqrt{1+9q}-1}{\sqrt{1+9q}}\fE^{(0)}\,.
\end{align*}
For $l=1$, notice that we can rewrite $\fC^{(1)}$ as 
\[\fC^{(1)}=\int_{M}\langle\nabla\phi,\nabla\phi^\prime\rangle_{\fg}\,\vol{\fg}=\int_{M}\left\langle\nabla\phi,\nabla\left(\phi^\prime-\overline{\phi^\prime}\right)\right\rangle_{\fg}\,\vol{\fg}=-\int_{M}\left(\phi^\prime-\overline{\phi^\prime}\right)\Lap_{\fg}\phi\,\vol{\fg}\,.\]
Hence, applying \eqref{eq:adapted-poincare} to $\zeta=\phi^\prime$ yields
\[\fE^{(1)}+\frac23\fC^{(1)}\geq \fE^{(1)}-2\left(1+9q\right)^{-\frac12}\|\nabla\phi^\prime\|_{L^2_{\fg}}\|\Lap_{\fg}\phi\|_{L^2_{\fg}}\geq \frac{\sqrt{1+9q}-1}{\sqrt{1+9q}}\fE^{(1)}\,.\]
For $l=2,3,4$, notice $\overline{\fLap\phi}=\overline{\fLap\phi^\prime}=\overline{\fLap^2\phi}=0$ holds due to the divergence theorem, hence the argument proceeds as in $l=0,1$.
\end{proof}

\begin{lemma}[Near-coercivity of corrected scalar field energy]\label{lem:fut-Sob-est}
For any differentiable function $\zeta$ and $k\in\{1,2\}$, one has the following under the bootstrap assumptions:
\begin{align*}
\int_M\lvert\nabla^2\zeta\rvert_{\fg}^2\,\vol{\fg}\lesssim&\,\int_M\lvert\fLap\zeta\rvert_{\fg}^2+\lvert\nabla\zeta\rvert_{\fg}^2\,\vol{\fg}\\
\|\zeta\|_{\dot{H}^{2k}}^2\lesssim&\,\|\fLap^{k}\zeta\|_{L^2}^2+\left(\|\zeta\|_{\dot{H}^{2k-1}}^2+\|\zeta\|_{\dot{H}^{2k-2}}^2\right)+\|\nabla\zeta\|_{C^{1}}^2\left\|\Ric[\fg]+\frac29\fg\right\|_{H^{2k-2}}^2\\
\|\nabla\zeta\|_{\dot{H}^{2k}}^2\lesssim&\,\|\nabla\fLap^{k}\zeta\|_{L^2}^2+\left(\|\nabla\zeta\|_{\dot{H}^{2k-1}}^2+\|\nabla\zeta\|_{\dot{H}^{2k-2}}^2\right)+\|\nabla\zeta\|_{C^{2}}^2\left\|\Ric[\fg]+\frac29\fg\right\|_{H^{2k-2}}^2
\end{align*}
Consequently, the following estimate holds:
\begin{align*}
\numberthis\label{eq:fut-coerc}\|\phi^\prime\|_{H^4}^2+\|\nabla\phi\|_{H^4}^2
\lesssim&\,E_{SF}^{(4)}+\left(\|\phi^\prime\|_{C^2}^2+\|\nabla\phi\|_{C^2}^2\right)\left\|\Ric[\fg]+\frac29\fg\right\|_{H^2}^2
\end{align*}
\end{lemma}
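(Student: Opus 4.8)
The plan is to prove Lemma \ref{lem:fut-Sob-est} in three stages: first the three elliptic/Bochner-type inequalities, then their combination with Lemma \ref{lem:fut-ESF-coercivity} to deduce \eqref{eq:fut-coerc}. For the first inequality, I would integrate by parts on the compact manifold $(\Sigma_T,\fg)$: writing $\int_M \lvert\nabla^2\zeta\rvert_{\fg}^2\,\vol{\fg} = -\int_M \langle\nabla\zeta,\Lap_{\fg}\nabla\zeta\rangle_{\fg}\,\vol{\fg}$ and commuting $\Lap_{\fg}$ past $\nabla$, the commutator $[\Lap_{\fg},\nabla]\zeta$ is schematically $\Ric[\fg]\ast\nabla\zeta$. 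Under the bootstrap assumption \eqref{eq:fut-bootstrap} (which via Lemma \ref{lem:fut-Ric-est} controls $\|\Ric[\fg]+\tfrac29\fg\|_{C^1}$, and of course $\|\fg\|_{C^1}$ is bounded), one obtains $\int_M \lvert\nabla^2\zeta\rvert_{\fg}^2 \lesssim \int_M \lvert\fLap\zeta\rvert_{\fg}^2\,\vol{\fg} + \int_M \lvert\nabla\zeta\rvert_{\fg}^2\,\vol{\fg}$, after one more integration by parts to absorb the $\nabla\zeta$-against-$\nabla^2\zeta$ cross term using Young's inequality. This is exactly the structure of the proof of \eqref{eq:Sobolev-norm-equiv-zetalow} in Lemma \ref{lem:Sobolev-norm-equivalence-improved}, just with $\gamma$ replaced by $\fg$ and the degenerate scaling $a^{-c\sqrt\epsilon}$ replaced by a uniform constant.

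For the second and third inequalities I would argue by the same iterative scheme as in the proof of Lemma \ref{lem:Sobolev-norm-equivalence-improved}: denoting $\tilde\zeta = \nabla^{2k-2}\zeta$, one uses the commutator identity $\fLap\tilde\zeta = \nabla^{2k-2}\fLap\zeta + \sum_{I_{\Ric}+I_\zeta = 2k-2}\nabla^{I_{\Ric}}(\Ric[\fg]+\tfrac29\fg)\ast\nabla^{I_\zeta+?}\zeta + \fg\ast\nabla^{2k-2}\zeta$ (the metric-dependent terms coming from commuting Laplacians past gradients), then applies the base inequality to $\tilde\zeta$. The curvature error terms are handled by splitting: low-order factors of $\Ric[\fg]+\tfrac29\fg$ are put in $C^1$ (controlled by the bootstrap assumption), leaving $\|\nabla\zeta\|_{C^1}$ or $\|\nabla\zeta\|_{C^2}$ times $\|\Ric[\fg]+\tfrac29\fg\|_{H^{2k-2}}$; the $\dot H^{2k-1}$ and $\dot H^{2k-2}$ norms of $\zeta$ appear as the lower-order remainder, exactly as in \eqref{eq:Sobolev-norm-equiv-zeta2l}. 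Since we only need $k\in\{1,2\}$ — so at most $\|\Ric[\fg]+\tfrac29\fg\|_{H^2}$ and $C^2$-norms of $\nabla\zeta$, all controlled by \eqref{eq:fut-bootstrap} and \eqref{eq:fut-Ric-bs} — no time-weight degeneracy enters and the constants are uniform.

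Finally, for \eqref{eq:fut-coerc}: apply the second inequality with $\zeta = \phi$ and $k=1,2$ and the third with $\zeta = \phi$ and $k=1,2$ to get $\|\nabla\phi\|_{H^4}^2 \lesssim \|\fLap^2\phi\|_{L^2}^2 + \|\nabla\fLap^2\phi\|_{L^2}^2 + (\text{lower order}) + \|\nabla\phi\|_{C^2}^2\|\Ric[\fg]+\tfrac29\fg\|_{H^2}^2$; similarly for $\phi^\prime$ with the second inequality, giving $\|\phi^\prime\|_{H^4}^2 \lesssim \sum_{m\le 4}\|\fLap^{\lceil m/2\rceil}\phi^\prime\|^2 + \dots$. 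One then iterates downward in derivative order to absorb all the lower-order $\dot H^{j}$ remainders into the leading $\fLap$-power terms plus the curvature remainder. The $L^2$-type quantities $\|\fLap^m\phi^\prime\|^2 + \|\nabla\fLap^m\phi\|^2$ (suitably reorganized via integration by parts over even/odd orders, cf. \eqref{eq:energydef-ibp}) are precisely $\fE^{(2m)}$, and $\|\fLap^m\phi\|^2$-type terms appear with the potential scaling $\tau^2\to 1$ absorbed; summing $m=0,\dots,4$ bounds these by $\sum_{m=0}^4 \fE^{(m)}$, which by Lemma \ref{lem:fut-ESF-coercivity} is $\lesssim Q^{-1}E_{SF}$. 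Collecting, $\|\phi^\prime\|_{H^4}^2 + \|\nabla\phi\|_{H^4}^2 \lesssim E_{SF}^{(4)} + (\|\phi^\prime\|_{C^2}^2 + \|\nabla\phi\|_{C^2}^2)\|\Ric[\fg]+\tfrac29\fg\|_{H^2}^2$.

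The main obstacle is bookkeeping rather than conceptual: one must carefully track how the $\fC^{(l)}$ correction terms, which are not sign-definite, interact with the coercivity step — this is entirely handled by Lemma \ref{lem:fut-ESF-coercivity}, so the only real work is verifying that the $L^2$-type remainders produced by the iterated elliptic estimates reassemble into $\sum_{m\le 4}\fE^{(m)}$ (and not something involving, say, higher powers of $\fLap$ than order $4$ on $\phi$, or a $\|\phi-\phim\|$ term that the energy does not control); one uses that $\overline{\fLap^j\phi}=0$ by the divergence theorem to keep everything expressible through gradients and $\fLap$-powers. I expect the curvature-remainder term to require the mild observation that $\|\Ric[\fg]+\tfrac29\fg\|_{H^2}$ suffices (rather than a higher Sobolev norm), which holds because $k\le 2$ and the highest-order commutator in $\fLap^2$ only produces $\nabla^2$ of the curvature.
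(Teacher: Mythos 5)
Your proposal is correct and takes essentially the same route as the paper: the paper's proof simply cites that the first three inequalities follow as in Lemma~\ref{lem:Sobolev-norm-equivalence-improved} (with the degenerate weight $a^{-c\sqrt\epsilon}$ replaced by a uniform constant, since the bootstrap assumptions give $\|\Ric[\fg]\|_{C^1}\lesssim 1$), and that \eqref{eq:fut-coerc} then follows by applying these to $\zeta=\phi'$ and $\zeta=\phi$ together with Lemma~\ref{lem:fut-ESF-coercivity}. Two small notational quibbles: in the base-case integration by parts $\int_M\lvert\nabla^2\zeta\rvert^2 = -\int_M\langle\nabla\zeta,\nabla\Lap_{\fg}\zeta\rangle + \int_M \Ric[\fg]\ast\nabla\zeta\ast\nabla\zeta$, there is in fact no cross term requiring Young's inequality (the first term becomes $\int_M\lvert\Lap_{\fg}\zeta\rvert^2$ exactly, the Ricci term is bounded directly in $C^0$); and the remark about a ``potential scaling $\tau^2\to 1$'' is slightly off, as the future-stability energies in Definition~\ref{def:fut-stab} carry no such weight to absorb, unlike \eqref{eq:energydef-ibp}.
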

\begin{proof}
The inequalities for $\zeta$ follows from the same arguments as Lemma \ref{lem:Sobolev-norm-equivalence-improved}, except that we have $\|\Ric[\fg]\|_{C^1_{\fg}}\lesssim 1+\epsilonnew\lesssim 1$ by Lemma \ref{lem:fut-Ric-est}.  The final estimate then follows by applying these estimates to $\zeta=\phi^\prime$ and $\zeta=\phi$ and applying Lemma \ref{lem:fut-ESF-coercivity}.
\end{proof}

\subsubsection{Preparations for energy estimates}

Before proving the energy estimate, we need to establish two technical lemmas: First, we collect a formula to differentiate integrals, and then some estimates needed to deal with the mean value of $\phi$ in the base level correction term.
\begin{lemma}[Differentiation of integrals, future stability version] For any diffentiable function $\zeta$, one has
\begin{equation}\label{eq:fut-delt-int}
\del_T\int_M\zeta\vol{\fg}=\int_M\left(\fdel\zeta+3\fN\zeta\right)\,\vol{\fg}\,.
\end{equation}
\end{lemma}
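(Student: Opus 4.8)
The plan is to compute $\del_T\int_M\zeta\,\vol{\fg}$ by moving the time derivative under the integral, keeping track of the fact that $\vol{\fg}$ itself depends on $T$ through $\fg$, and then to rewrite everything in terms of the transported derivative $\fdel=\del_T+\Lie_{\fX}$ so that the shift terms can be disposed of by the divergence theorem. First I would write $\vol{\fg}=\mu_{\fg}\vol{\gamma}$, so that $\del_T\int_M\zeta\,\vol{\fg}=\int_M\big(\del_T\zeta\cdot\mu_{\fg}+\zeta\,\del_T\mu_{\fg}\big)\vol{\gamma}$, using that $\vol{\gamma}$ is $T$-independent. The standard variation-of-volume-element identity gives $\del_T\mu_{\fg}=\tfrac12\mu_{\fg}(\fg^{-1})^{ab}\del_T\fg_{ab}$, and inserting the rescaled metric evolution equation \eqref{eq:fut-eq-g} in the form $\fdel\fg_{ab}=\del_T\fg_{ab}+(\Lie_{\fX}\fg)_{ab}=2\fn\fk_{ab}+2\fN\fg_{ab}$ yields
\begin{equation*}
\del_T\fg_{ab}=2\fn\fk_{ab}+2\fN\fg_{ab}-(\Lie_{\fX}\fg)_{ab}\,.
\end{equation*}

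Next I would contract with $(\fg^{-1})^{ab}$: since $\fk$ is trace-free with respect to $\fg$ (it is the rescaled trace-free part of the second fundamental form), $(\fg^{-1})^{ab}\fk_{ab}=0$, so $\tfrac12(\fg^{-1})^{ab}\del_T\fg_{ab}=3\fN-\tfrac12(\fg^{-1})^{ab}(\Lie_{\fX}\fg)_{ab}=3\fN-\div_{\fg}\fX$, using the identity $(\fg^{-1})^{ab}(\Lie_X\fg)_{ab}=2\div_{\fg}X$. Hence $\del_T\mu_{\fg}=\mu_{\fg}(3\fN-\div_{\fg}\fX)$. Substituting back,
\begin{equation*}
\del_T\int_M\zeta\,\vol{\fg}=\int_M\big(\del_T\zeta+3\fN\zeta-\zeta\div_{\fg}\fX\big)\vol{\fg}\,.
\end{equation*}
Now I would replace $\del_T\zeta$ by $\fdel\zeta-\Lie_{\fX}\zeta=\fdel\zeta-\fX^a\nabla_a\zeta$ (for a scalar $\zeta$, $\Lie_{\fX}\zeta=\fX(\zeta)$), which turns the last two new terms into $-\fX^a\nabla_a\zeta-\zeta\div_{\fg}\fX=-\div_{\fg}(\zeta\fX)$, a total $\fg$-divergence. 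By the divergence theorem on the closed manifold $(M,\fg_T)$ this integrates to zero, leaving exactly $\del_T\int_M\zeta\,\vol{\fg}=\int_M(\fdel\zeta+3\fN\zeta)\,\vol{\fg}$, which is \eqref{eq:fut-delt-int}.

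There is no real obstacle here; the only points requiring a little care are (i) invoking trace-freeness of $\fk$ with respect to $\fg$ so that the $\fk$-term drops out of the volume variation, and (ii) correctly bookkeeping the Lie-derivative terms so that the shift contributions assemble into the single divergence $\div_{\fg}(\zeta\fX)$ that vanishes upon integration. This is the exact analogue of Lemma \ref{lem:delt-int} in the Big Bang setting, with the transported derivative $\fdel$ playing the role of $\del_t$ and the factor $3\fN$ (coming from $\tfrac12(\fg^{-1})^{ab}\cdot 2\fN\fg_{ab}$ in three spatial dimensions) replacing $-N\tau$.
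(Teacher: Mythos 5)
Your proof is correct and takes essentially the same route as the paper: write $\vol{\fg}=\mu_{\fg}\vol{\gamma}$, differentiate the volume element via $\del_T\mu_{\fg}=\frac12\mu_{\fg}(\fg^{-1})^{ab}\del_T\fg_{ab}$, insert \eqref{eq:fut-eq-g} and trace-freeness of $\fk$ to get $3\fN-\div_{\fg}\fX$, then trade $\del_T\zeta$ for $\fdel\zeta-\fX^a\nabla_a\zeta$ and absorb the shift terms into $-\div_{\fg}(\zeta\fX)$, which integrates to zero. Your version is if anything slightly more explicit than the paper's about the cancellation coming from $(\fg^{-1})^{ab}\fk_{ab}=0$.
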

\begin{proof}
\change{As in the proof of \eqref{eq:delt-int}, we obtain
\begin{align*}
\del_T\int_M\zeta\vol{\fg}=&\,\int_M\del_T\zeta+\frac{\del_T\mu_{\fg}}{\mu_{\fg}}\zeta\,\vol{\fg}\\
=&\,\int_M\del_T\zeta+3\fN\zeta-\frac12(\fg^{-1})^{ab}\Lie_{\fX}\fg_{ab}\zeta\,\vol{\fg}\\
=&\,\int_M\del_T\zeta+3\fN\zeta-\div_{\fg}\fX\cdot\zeta\,\vol{\fg}
\end{align*}
The statement now follows by applying Stokes' theorem to the final term and rearranging.}
\end{proof}

\begin{lemma}[Decay estimate for the integrated time derivative]\label{lem:fut-SF-technicality} For any $T>0$, we have
\begin{equation}\label{eq:fut-SF-technicality-1}
\int_{\Sigma_T}\phi^\prime\,\vol{\fg}=\left(\int_{\Sigma_{T=0}}\phi^\prime\,\vol{\fg}\right)\cdot e^{-2T}\,.
\end{equation}
Consequently, the bootstrap assumptions imply
\begin{equation}\label{eq:fut-SF-technicality-2}
\left\lvert\int_{\Sigma_T}\fdel\phim\cdot\phi^\prime\,\vol{\fg}\right\rvert\lesssim\,\epsilonnew^3e^{-\change{\frac52}T}
\end{equation}
for $\epsilonnew>0$ small enough.
\end{lemma}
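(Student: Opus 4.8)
\textbf{Proof plan for Lemma \ref{lem:fut-SF-technicality}.}

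The plan is to first establish \eqref{eq:fut-SF-technicality-1} by computing $\del_T\int_{\Sigma_T}\phi^\prime\,\vol{\fg}$ via the differentiation formula \eqref{eq:fut-delt-int}. Applying \eqref{eq:fut-delt-int} with $\zeta=\phi^\prime$ gives $\del_T\int_M\phi^\prime\,\vol{\fg}=\int_M(\fdel\phi^\prime+3\fN\phi^\prime)\,\vol{\fg}$, and then I would substitute the wave equation \eqref{eq:fut-wave} for $\fdel\phi^\prime$. The terms $\langle\nabla\fn,\nabla\phi\rangle_{\fg}$ and $\fn\fLap\phi$ integrate to zero over the closed manifold $M$ (the first after integration by parts, using that $\fLap\fn$ paired with $\phi$ and $\nabla\fn\cdot\nabla\phi$ combine to a total divergence — more directly, $\int_M\fn\fLap\phi\,\vol{\fg}=-\int_M\langle\nabla\fn,\nabla\phi\rangle_{\fg}\,\vol{\fg}$, so the two terms cancel exactly). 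What remains is $\int_M((1-\fn)\phi^\prime+3\fN\phi^\prime)\,\vol{\fg}$; since $\fN=\frac{\fn}3-1$, we have $1-\fn+3\fN=1-\fn+\fn-3=-2$, so $\del_T\int_M\phi^\prime\,\vol{\fg}=-2\int_M\phi^\prime\,\vol{\fg}$, and \eqref{eq:fut-SF-technicality-1} follows by integrating this linear ODE.

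For \eqref{eq:fut-SF-technicality-2}, I would first observe that $\overline{\fdel\phim}$ relates to $\del_T$ of the mean of $\phi$, but more efficiently: $\fdel\phim$ is a scalar function on $\Sigma_T$, and since $\phim$ is the $\fg_T$-mean of $\phi$, we can write $\int_M\fdel\phim\cdot\phi^\prime\,\vol{\fg}$ and use that $\phim$ is constant in space, so $\fdel\phim$ is also spatially constant (note $\fdel\phim=\del_T\phim+\Lie_{\fX}\phim=\del_T\phim$ since $\phim$ is constant on each slice). Therefore $\int_M\fdel\phim\cdot\phi^\prime\,\vol{\fg}=(\del_T\phim)\int_M\phi^\prime\,\vol{\fg}$. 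By \eqref{eq:fut-SF-technicality-1}, $\left|\int_M\phi^\prime\,\vol{\fg}\right|\lesssim\epsilonnew^2e^{-2T}$ using the initial data assumption \eqref{eq:fut-init} and equivalence of $\vol{\fg}$ and $\vol{\gamma}$. It then remains to bound $|\del_T\phim|$. Using \eqref{eq:fut-delt-int} with $\zeta=\phi$, one has $\del_T\left(\phim\cdot\mathrm{vol}_{\fg}(M)\right)=\int_M(\fn\phi^\prime+3\fN\phi)\,\vol{\fg}$, and combined with $\del_T\mathrm{vol}_{\fg}(M)=\int_M3\fN\,\vol{\fg}$ one extracts $\del_T\phim=\overline{\fn\phi^\prime}+3\overline{\fN\phi}-3\overline{\fN}\,\phim$; via the bootstrap assumptions \eqref{eq:fut-bootstrap} and the pointwise bounds $\fn\in(0,3)$, $\fN\in(-1,0)$ from Lemma \ref{lem:fut-ell-est}, each term is $\lesssim\epsilonnew e^{-T/2}$ (the $\phi$- and $\phim$-dependent terms are controlled using $\|\nabla\phi\|$, since only gradient information on $\phi$ enters the energies — here one must be slightly careful and estimate $|\phi-\phim|$ by the Poincaré inequality \eqref{eq:adapted-poincare} applied to $\zeta=\phi$, which gives $\|\phi-\phim\|_{L^2_{\fg}}\lesssim\|\nabla\phi\|_{L^2_{\fg}}\lesssim\epsilonnew e^{-T/2}$, so that $\overline{\fN\phi}-\overline{\fN}\,\phim=\overline{\fN(\phi-\phim)}$ is $\lesssim\epsilonnew^2e^{-T}$, and the dominant contribution $\overline{\fn\phi^\prime}$ is $\lesssim\epsilonnew e^{-T/2}$). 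Multiplying $|\del_T\phim|\lesssim\epsilonnew e^{-T/2}$ with $\left|\int_M\phi^\prime\,\vol{\fg}\right|\lesssim\epsilonnew^2e^{-2T}$ yields \eqref{eq:fut-SF-technicality-2}.

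The main obstacle I anticipate is the bookkeeping in \eqref{eq:fut-SF-technicality-2}: one must avoid ever needing control on $\phi$ itself (only $\nabla\phi$ and $\phi^\prime$ appear in the energies and bootstrap assumptions), so every appearance of $\phi$ must be routed through $\phi-\phim$ and the Poincaré inequality \eqref{eq:adapted-poincare}, or absorbed into a mean that cancels. Getting the exact exponential rate $e^{-\frac52T}$ — rather than a weaker $e^{-2T}$ — requires noticing that $\del_T\phim$ itself decays like $e^{-\frac12T}$ rather than merely being bounded, which is exactly what the bootstrap assumption provides. Everything else is a routine application of \eqref{eq:fut-delt-int}, the wave equation, and the algebraic identity $1-\fn+3\fN=-2$.
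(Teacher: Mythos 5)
Your proof is correct and follows essentially the same route as the paper. For \eqref{eq:fut-SF-technicality-1}, both proofs use \eqref{eq:fut-delt-int} with $\zeta=\phi^\prime$, substitute the wave equation, observe that $\int_M\left(\langle\nabla\fn,\nabla\phi\rangle_{\fg}+\fn\fLap\phi\right)\vol{\fg}=\int_M\div_{\fg}(\fn\nabla\phi)\vol{\fg}=0$, and reduce via $1-\fn+3\fN=-2$ to the linear ODE. For \eqref{eq:fut-SF-technicality-2}, both compute $\fdel\phim=\del_T\phim$ (the paper uses this implicitly, you spell it out), derive $\fdel\phim=\overline{\fn\phi^\prime}+3\overline{\fN(\phi-\phim)}$, bound it by $\epsilonnew e^{-T/2}$ via the Poincar\'e inequality \eqref{eq:adapted-poincare} and the bootstrap assumption, then multiply with $\lvert\int_M\phi^\prime\vol{\fg}\rvert\lesssim\epsilonnew^2e^{-2T}$ from the first part. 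The only difference is presentational — you are slightly more explicit about why $\fdel\phim$ is spatially constant and can be pulled out of the integral, and about the cancellation of the divergence-form terms — but the mathematical content is identical.
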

\begin{proof}
Using that the integral of $\div_{\fg}(\fn\nabla\phi)$ vanishes, we compute:
\changediss{
\begin{align*}
\del_T\left(\int_M\phi^\prime\,\vol{\fg}\right)=&\,\int_{M}\left(\fdel\phi^\prime+3\fN\phi^\prime\right)\,\vol{\fg}=\int_M\left[(1-\fn)\phi^\prime+(\fn-3)\phi^\prime\right]\,\vol{\fg}\\
=&\,-2\left(\int_M\phi^\prime\,\vol{\fg}\right)
\end{align*}}
Hence, \eqref{eq:fut-SF-technicality-1} precisely describes the solution to this ODE ($f^\prime=-2f$) with prescribed initial value at $T=0$, and the initial data assumption \eqref{eq:fut-init} implies
\[\left\lvert\int_M\phi^\prime\,\vol{\fg}\right\rvert\leq \|\phi^\prime\|_{C^0(\Sigma_{T=0})}\vol{\fg}(\Sigma_{T=0})e^{-2T}\lesssim \epsilonnew^2e^{-2T}\,.\]
Furthermore, one has by \eqref{eq:fut-delt-int} \change{that }
\begin{equation}\label{eq:fut-volume-evol}
\change{\del_T\vol{\fg}\left(\Sigma_{T}\right)=\int_{\Sigma_T}3\fN\vol{\fg}}
\end{equation}
\change{Consequently, one has
\[\fdel\phim=\left[-\frac{\del_T\vol{\fg}\left(\Sigma_{T}\right)}{\vol{\fg}\left(\Sigma_{T}\right)}\cdot\phim+\frac1{\vol{\fg}\left(\Sigma_{T}\right)}\int_M\left(\fdel\phi+3\fN\phi\right)\,\vol{\fg}\right]=\int_M\left(\fn\phi^\prime+3\fN(\phi-\phim)\right)\,\vol{\fg}\,.\]
By applying $\lvert\fn\rvert<3$, the adapted Poincare inequality \eqref{eq:adapted-poincare} and the bootstrap assumptions \eqref{eq:fut-bootstrap}, this implies
\[\left\lvert\fdel\phim\right\rvert\lesssim\|\phi^\prime\|_{L^2}+\|\nabla\phi\|_{L^2}\|\fN\|_{L^2_{\fg}}\lesssim\epsilonnew e^{-\frac{T}2}\,.\]
The bound \eqref{eq:fut-SF-technicality-2} now follows by combining this with \eqref{eq:fut-SF-technicality-1}.}
\end{proof}

\subsubsection{Energy estimates}

Now, we can collect the following estimates for the corrected scalar field energies:

\begin{lemma}[Base level estimate for the corrected scalar field energy]\label{lem:fut-en-est-ESF0} Under the bootstrap assumptions, the following estimate holds \change{for some $K>0$}:
\begin{equation}
\del_T E^{(0)}_{SF}\leq-2E^{(0)}_{SF}+K\delta e^{-\frac{T}2}\sqrt{E^{(0)}_{SF}}\left(\sqrt{E^{(0)}_{SF}}+\|\bm{\Sigma}\|_{L^2}+\|\fg-\gamma\|_{L^2}\right)+K\delta^3e^{-\frac{5T}2}
\end{equation}
\end{lemma}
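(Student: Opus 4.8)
The plan is to compute $\del_T E_{SF}^{(0)}$ directly from the definitions of $\fE^{(0)}$ and $\fC^{(0)}$ using the differentiation formula \eqref{eq:fut-delt-int}, then insert the wave equation \eqref{eq:fut-wave} and the metric evolution equation \eqref{eq:fut-eq-g}, track the cancellations, and estimate the remaining error terms with the elliptic lapse estimate \eqref{eq:ell-est-lapse} and the bootstrap assumptions \eqref{eq:fut-bootstrap}. Writing $E_{SF}^{(0)}=\fE^{(0)}+\tfrac23\fC^{(0)}$, I would first handle $\fE^{(0)}=\int_M(\lvert\phi^\prime\rvert^2+\lvert\nabla\phi\rvert_{\fg}^2)\vol{\fg}$. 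Applying \eqref{eq:fut-delt-int}, the main contributions come from $\fdel\phi^\prime$ via \eqref{eq:fut-wave} and from the $\fdel\fg^{-1}$ term in $\fdel\lvert\nabla\phi\rvert_{\fg}^2$ via \eqref{eq:fut-eq-g-1}, together with the $3\fN$ volume term. Integrating by parts, the $\fn\fLap\phi$ term in $\fdel\phi^\prime$ pairs against $\langle\nabla\fn,\nabla\phi\rangle_{\fg}$ and against part of the $\nabla\phi$-evolution to cancel, exactly as in the scalar-field energy computation in Lemma \ref{lem:en-est-SF}; the leading non-cancelling term is the $(1-\fn)\phi^\prime$ term contracted with $\phi^\prime$, which since $\fn=3(\fN+1)$ gives $\int_M(1-3\fN-3)\lvert\phi^\prime\rvert^2 = \int_M(-2-3\fN)\lvert\phi^\prime\rvert^2$, and similarly the $\nabla\phi$ terms produce a factor behaving like $-2$ to leading order plus $\fN$-corrections. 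All terms carrying an explicit $\fN$ or $\nabla\fN$ factor are controlled by $\epsilonnew e^{-T/2}$ in supremum norm via Lemma \ref{lem:fut-ell-est}, and the terms involving $\fk$ in $\fdel\fg^{-1}$ are controlled via the bootstrap assumption \eqref{eq:fut-bootstrap}; these yield contributions of the schematic form $\epsilonnew e^{-T/2}\sqrt{E_{SF}^{(0)}}\sqrt{\|\fk\|_{L^2}^2}$, and I would absorb $\|\fk\|_{L^2}$ into $\|\Sigma\|_{L^2}$ as used elsewhere in the paper (note $\fk$ and $\Sigma$ denote the same rescaled object in the two halves of the paper — I would write it as $\|\Sigma\|_{L^2}$ to match the statement).

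Next I would handle the correction term $\tfrac23\fC^{(0)}=\tfrac23\int_M(\phi-\phim)\phi^\prime\vol{\fg}$. Differentiating via \eqref{eq:fut-delt-int} produces three groups: $\int_M(\fdel\phi-\fdel\phim)\phi^\prime$, $\int_M(\phi-\phim)\fdel\phi^\prime$, and the $3\fN$ volume term. In the first group, $\fdel\phi=\fn\phi^\prime$ gives $\int_M\fn\lvert\phi^\prime\rvert^2$, whose leading part $3\lvert\phi^\prime\rvert^2$ is precisely what is needed to upgrade the $-2\lvert\phi^\prime\rvert^2$ coming from $\fE^{(0)}$ toward the indefinite-sign structure that makes $E_{SF}^{(0)}$ coercive (cf. the mechanism in Lemma \ref{lem:fut-ESF-coercivity}); the $\fdel\phim$ piece is exactly the quantity estimated in \eqref{eq:fut-SF-technicality-2}, contributing the $\epsilonnew^3 e^{-5T/2}$ term. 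In the second group, inserting \eqref{eq:fut-wave} and integrating the $\fn\fLap\phi$ term by parts against $\nabla(\phi-\phim)=\nabla\phi$ produces $-\int_M\fn\lvert\nabla\phi\rvert_{\fg}^2$ plus a $\langle\nabla\fn,\nabla\phi\rangle(\phi-\phim)$ term; the former again supplies the "$-2\lvert\nabla\phi\rvert^2$" needed to close, and the latter, together with the $\langle\nabla\fn,\nabla\phi\rangle_{\fg}$ term from \eqref{eq:fut-wave} contracted with $(\phi-\phim)$, is an $\fN$-weighted junk term controlled by Lemma \ref{lem:fut-ell-est} after one more application of the Poincaré inequality \eqref{eq:adapted-poincare} to replace $\|\phi-\phim\|_{L^2}$ by $\|\nabla\phi\|_{L^2}$. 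The constant $-2$ in the statement is the sum of the $-2$ (from the ``$1$'' of $1-\fn$ minus the ``$3$'' hidden in the volume and correction terms) applied to the unit-sized leading behaviour; more carefully, collecting all the exactly-computed terms one finds $\del_T E_{SF}^{(0)} = -2E_{SF}^{(0)} + (\text{exact }O(1)\text{ terms summing to give the }-2\text{ and the }+E_{SF}^{(0)}) + (\text{junk})$, and I would arrange the bookkeeping so the net exact contribution is bounded above by $-2 + E_{SF}^{(0)}$.

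The main obstacle I anticipate is the careful bookkeeping of the ``$-2$'' constant and the ``$+E_{SF}^{(0)}$'' term: these arise from a delicate interplay between the $-2\phi^\prime$-type damping in $\fdel\phi^\prime$, the $\fn\lvert\phi^\prime\rvert^2 = 3\lvert\phi^\prime\rvert^2 + 3\fN\lvert\phi^\prime\rvert^2$ term from differentiating the correction, the $-\fn\lvert\nabla\phi\rvert_{\fg}^2$ term, and the $3\fN$ volume factors, and one must check that after all cancellations the surviving $O(1)$ (i.e. non-$\epsilonnew$-small) part is bounded by $-2 + E_{SF}^{(0)}$ rather than, say, $-2 E_{SF}^{(0)}$ with a worse error — the point being that the correction term $\fC^{(0)}$ is only $\sqrt{E_{SF}^{(0)}}$-controlled in size, not $E_{SF}^{(0)}$-controlled, so its time derivative legitimately contributes an $O(1)$-coefficient term that one simply keeps as ``$+ E_{SF}^{(0)}$'' on the right-hand side. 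The residual obstruction is to verify that no $\fN$-weighted term degenerates: all such terms carry $\|\fN\|_{H^l}$ or $\|\nabla\fN\|_{H^l}$, and \eqref{eq:ell-est-lapse} gives these a prefactor $\epsilonnew e^{-T/2}$ times $(\|\fk\|_{H^{l-2}}+\epsilonnew e^{-T/2}\|\fg-\gamma\|_{H^{l-2}}+\|\phi^\prime\|_{H^{l-2}}+\|\nabla\phi\|_{H^{l-2}})$; at base level $l=2$ this is $\epsilonnew e^{-T/2}$ times quantities bounded by $\sqrt{E_{SF}^{(0)}}$, $\|\Sigma\|_{L^2}$, $\|\fg-\gamma\|_{L^2}$, which after a Young-type split folds cleanly into the stated right-hand side $K\epsilonnew e^{-T}\sqrt{E_{SF}^{(0)}}\sqrt{\|\Sigma\|_{L^2}^2+\|\fg-\gamma\|_{L^2}^2}$ (absorbing the $e^{-T/2}\cdot e^{-T/2}$ and noting scalar-field factors can be re-absorbed into $E_{SF}^{(0)}$). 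Everything else is routine product-rule and integration-by-parts manipulation, so I would present the computation compactly, citing \eqref{eq:fut-wave}, \eqref{eq:fut-eq-g}, \eqref{eq:fut-eq-g-1}, \eqref{eq:fut-delt-int}, \eqref{eq:adapted-poincare}, Lemma \ref{lem:fut-ell-est} and Lemma \ref{lem:fut-SF-technicality} at the appropriate points.
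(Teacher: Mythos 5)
Your overall plan — differentiate $\fE^{(0)}$ and $\fC^{(0)}$ separately via \eqref{eq:fut-delt-int}, insert \eqref{eq:fut-wave} and \eqref{eq:fut-eq-g-1}, integrate the $\fn\fLap\phi$ pieces by parts, treat the remaining $\fN$- and $\fk$-weighted terms with Lemma \ref{lem:fut-ell-est} and the bootstrap assumptions, and invoke Lemma \ref{lem:fut-SF-technicality} for the $\fdel\phim$ contribution — is the same as the paper's. But there is a genuine gap in the way you handle the leading-order bookkeeping, and it traces back to reading the statement's typographical error literally. The boxed estimate as printed contains several typos (it uses $\delta$ for $\epsilonnew$, $\E_{SF}$ for $E_{SF}$, and most importantly prints ``$-2+E_{SF}^{(0)}$'' where the content is $-2E_{SF}^{(0)}$). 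A negative \emph{additive constant} is of no use in a Gronwall argument; what Proposition \ref{prop:fut-bs-imp} actually needs — and what the proof delivers — is a damping coefficient multiplying $E_{SF}^{(0)}$. Rather than recognizing this, you construct an elaborate justification for the typo, asserting that the exact $O(1)$ contribution should be ``arranged'' to equal $-2+E_{SF}^{(0)}$; that is not correct, and indeed the sentence in which you try to explain it (``$\del_T E_{SF}^{(0)} = -2E_{SF}^{(0)} + (\text{exact }O(1)\text{ terms summing to give the }-2\text{ and the }+E_{SF}^{(0)}) + \dots$'') is internally inconsistent with the claimed conclusion.

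The concrete arithmetic error feeding this: when computing $\del_T\fE^{(0)}$, the term $\del_T\lvert\phi^\prime\rvert^2 = 2\phi^\prime\,\fdel\phi^\prime$ brings a factor of $2$, so the surviving $\lvert\phi^\prime\rvert^2$-coefficient from $\fE^{(0)}$ is $2(1-\fn)=-4-6\fN$, not $(1-\fn)=-2-3\fN$ as you write. With the correct factor, the cancellation is clean and exact: $\del_T\fE^{(0)}$ gives $-4\lvert\phi^\prime\rvert^2$ at leading order; $\tfrac23\del_T\fC^{(0)}$ gives $\tfrac23\cdot 3\lvert\phi^\prime\rvert^2 - \tfrac23\cdot 3\lvert\nabla\phi\rvert_{\fg}^2 - \tfrac23\cdot 2\fC^{(0)} = 2\lvert\phi^\prime\rvert^2 - 2\lvert\nabla\phi\rvert_{\fg}^2 - \tfrac43\fC^{(0)}$; summing yields exactly
\[
-2\lvert\phi^\prime\rvert^2 - 2\lvert\nabla\phi\rvert_{\fg}^2 - \tfrac43\fC^{(0)} = -2\Bigl(\fE^{(0)} + \tfrac23\fC^{(0)}\Bigr) = -2E_{SF}^{(0)}\,,
\]
plus error terms of the schematic form $K\epsilonnew e^{-T}\sqrt{E_{SF}^{(0)}}\bigl(\|\fk\|_{L^2}+\|\fg-\gamma\|_{L^2}+\sqrt{E_{SF}^{(0)}}\bigr)+K\epsilonnew^3 e^{-5T/2}$. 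With your factor-of-two slip, the $\lvert\phi^\prime\rvert^2$ terms cancel completely ($-2+2=0$) and the damping in $\lvert\phi^\prime\rvert^2$ disappears, leaving a coefficient structure that neither equals $-2E_{SF}^{(0)}$ nor matches the printed ``$-2+E_{SF}^{(0)}$''. The rest of your error-term handling (Poincar\'e for $\|\phi-\phim\|_{L^2}$, the lapse elliptic estimate to trade $\fN$ for $\fk$, $\fg-\gamma$, matter norms) is sound and matches the paper; the gap is entirely in the exact bookkeeping of the leading-order terms.
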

\begin{proof}
We compute, using $[\fdel,\nabla]\phi=0$, $\fdel\phi=\fn\phi^\prime$ and the rescaled wave equation \eqref{eq:fut-wave}:
\begin{align*}
\del_T\fE^{(0)}=&\,\int_M \left[2\fdel\phi^\prime\cdot\phi^\prime+2\left\langle\nabla\phi,\nabla\fdel\phi\right\rangle_{\fg}+\left(\fdel\fg^{-1}\right)^{ab}\nabla_a\phi\nabla_b\phi+3\fN\left(\lvert\phi^\prime\rvert^2+\lvert\nabla\phi\rvert_{\fg}^2\right)\right]\,\vol{\fg}\\
=&\,\int_M \biggr[2\left(\langle\nabla\fn,\nabla\phi\rangle_{\fg}+\fn\fLap\phi+(1-\fn)\phi^\prime\right)\phi^\prime-2(\fn\phi^\prime)\cdot\fLap\phi\\
&\,\,\phantom{\int_M}-2\fn\langle\fk,\nabla\phi\nabla\phi\rangle_{\fg}+3\fN\lvert\phi^\prime\rvert^2+\fN\lvert\nabla\phi\rvert_{\fg}^2\biggr]\,\vol{\fg}
\end{align*}
With $2(1-\fn)=-4-6\fN$, integration by parts and using the bootstrap assumption \eqref{eq:fut-bootstrap} on $C$-norms, we get for some constant $K>0$ that we update from line to line:
\begin{align*}
\del_T\fE^{(0)}\leq&\,\int_M -4\lvert\phi^\prime\rvert_{\fg}^2\,\vol{\fg}+K\left[\|\nabla\phi\|_{C^0}\|\fN\|_{H^1}\sqrt{\fE^{(0)}}+\left(\|\fk\|_{C^0}+\|\fN\|_{C^0}\right)\fE^{(0)}\right]\\
\leq&\int_M-4\lvert\phi^\prime\rvert^2\,\vol{\fg}+K\epsilonnew e^{-\frac{T}2}\left(\change{\sqrt{\fE^{(0)}}}{\|\fN\|_{H^1}}+\fE^{(0)}\right)
\end{align*}
Similarly and using the same evolution equations, we obtain:
\begin{align*}
\del_T\fC^{(0)}=&\,\int_M \left[\fdel\phi\cdot\phi^\prime-\fdel\phim\cdot\phi^\prime+(\phi-\phim)\fdel\phi^\prime+3\fN(\phi-\phim)\phi^\prime\right]\,\vol{\fg}\\
=&\,\int_M\left[3\lvert\phi^\prime\rvert^2+3\fN\lvert\phi^\prime\rvert^2+\left(\phi-\phim\right)\cdot \div_{\fg}\left(\fn\nabla\phi\right)-2\left(\phi-\phim\right)\phi^\prime-\fdel\phim\cdot\phi^\prime\right]\,\vol{\fg}\\
\leq&\,-2\fC^{(0)}+\int_M3\left[\lvert\phi^\prime\rvert^2-\lvert\nabla\phi\rvert_{\fg}^2\right]\,\vol{\fg}+3\|\fN\|_{C^0}\fE^{(0)}-\int_M\left(\fdel\phim\cdot\phi^\prime\right)\,\vol{\fg}
\end{align*}
Applying Lemma \ref{lem:fut-SF-technicality} to the last term, we get:
\begin{align*}
\del_T\fC^{(0)}\leq&-2\fC^{(0)}+K\epsilonnew e^{-\frac{T}2}\fE^{(0)}+\change{K\epsilonnew^3e^{-\frac52T}}
\end{align*}
Combining these two estimates, inserting \eqref{eq:ell-est-lapse} and \eqref{eq:fut-ESF-coercivity} as well as updating $K$ yields:
\begin{align*}
\del_TE_{SF}^{(0)}=&\,\del_T\fE^{(0)}+\frac23\del_T\fC^{(0)}\\
=&\,\int_M \left[\left(-4+\frac23\cdot 3\right)\lvert\phi^\prime\rvert^2-\frac23\cdot 3\lvert\nabla\phi\rvert_{\fg}^2\right]\,\vol{\fg}-2\cdot\frac23\change{\fC^{(0)}}\\
&\,+\change{K\delta e^{-\frac{T}2}\change{\left(\sqrt{\fE^{(0)}}{\|\hat{\bm{n}}\|_{H^1}}+\fE^{(0)}\right)}+K\delta^3 e^{-\frac52T}}\\
\leq&\,-2E_{SF}^{(0)}+K\delta \change{e^{-\frac{T}2}}\sqrt{E^{(0)}_{SF}}\left(\|\Sigma\|_{L^2}+\|\fg-\gamma\|_{L^2}+\sqrt{E^{(0)}_{SF}}\right)+K\delta^3 e^{-\frac52T}
\end{align*}
\end{proof}

\begin{lemma}[Higher order estimates for the corrected scalar field energy]\label{lem:fut-en-est-ESF}
For any $l\in\{1,\dots,4\}$, the following estimate holds:
\begin{align*}
\del_T\left(\fE^{(l)}+\frac23\fC^{(l)}\right)\leq&\,-2\left(\fE^{(l)}+\frac23\fC^{(l)}\right)+K\epsilonnew e^{-\frac{T}2}\left(\sum_{m=0}^l\sqrt{\fE^{(m)}}\right)\cdot\\
&\,\qquad\cdot\change{\left(\|\phi^\prime\|_{H^{l}}+\|\nabla\phi\|_{H^{l}}+\|\Sigma\|_{H^{l}}+\|\fg-\gamma\|_{H^l}
\right)}
\end{align*}
\end{lemma}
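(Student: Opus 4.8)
The plan is to mirror the base-level argument in Lemma \ref{lem:fut-en-est-ESF0}, but commuted with $\fLap^l$, paying attention to the fact that the only quantity whose mean does not vanish is $\phi^\prime$ itself (at order $0$), so all commutator-generated mean terms disappear. First I would compute $\del_T\fE^{(l)}$ using \eqref{eq:fut-delt-int}, the fact that $[\fdel,\nabla]\phi=0$ together with the commutator $[\fdel,\fLap]$ (which contributes, schematically, terms of the form $\fn\fk\ast\nabla^2+\nabla\fn\ast\nabla+\ldots$ acting on lower derivatives of $\phi$ and $\phi^\prime$, all controlled by $\epsilonnew e^{-T/2}$ times Sobolev norms), and the commuted wave equation obtained from \eqref{eq:fut-wave}. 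The leading term from $2\fdel\fLap^l\phi^\prime\cdot\fLap^l\phi^\prime$ is $2(1-\fn)\fLap^l\phi^\prime\cdot\fLap^l\phi^\prime=(-4-6\fN)|\fLap^l\phi^\prime|^2$, exactly as at order $0$; the $2\fn\fLap^{l+1}\phi\cdot\fLap^l\phi^\prime$ piece will be cancelled (after integration by parts, moving one $\nabla$) against the $2\langle\nabla\fLap^l\phi,\nabla\fdel\fLap^l\phi\rangle_{\fg}$ contribution coming from $\fdel\fLap^l\phi=\fn\fLap^l\phi^\prime+[\fdel,\fLap^l]\phi$, up to the harmless $\nabla\fn$-term and the commutator remainder.

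Next I would compute $\del_T\fC^{(l)}$ with $\fC^{(l)}=(-1)^l\int_M(\phi-\phim)\fLap^l\phi^\prime\,\vol{\fg}$. Here the key point, exactly as in the $l=0,1$ cases of Lemma \ref{lem:fut-ESF-coercivity}, is that $\overline{\fLap^l\phi^\prime}=0$ and $\overline{\fLap^{l+1}\phi}=0$ for $l\geq 1$ by the divergence theorem, so $\fdel\phim$ never multiplies a mean-free quantity in a way that needs the technical Lemma \ref{lem:fut-SF-technicality} — in fact for $l\geq 1$ the $\fdel\phim$ term simply does not appear, since $(-1)^l\int_M(-\fdel\phim)\fLap^l\phi^\prime\,\vol{\fg}=(-1)^{l+1}\fdel\phim\int_M\fLap^l\phi^\prime\,\vol{\fg}=0$. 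Using $\fdel(\phi-\phim)=\fn\phi^\prime-\fdel\phim$ and the commuted wave equation for $\fdel\fLap^l\phi^\prime$, integrating by parts to move $\fLap$ onto $(\phi-\phim)$, the leading behaviour is $-2\fC^{(l)}+\text{(definite sign quadratic terms)}+$ errors bounded by $\epsilonnew e^{-T/2}$ times products of energies and Sobolev norms. One subtlety: $\fLap$ here is the spatial Laplacian of $\fg$, and commuting it past $\fdel$ on $\phi^\prime$ generates curvature terms $\Ric[\fg]\ast\nabla\phi^\prime$; these are handled by Lemma \ref{lem:fut-Ric-est}, which gives $\|\Ric[\fg]+\tfrac29\fg\|_{H^{l-1}}\lesssim\|\fg-\gamma\|_{H^{l+1}}\lesssim\epsilonnew e^{-T/2}$, so they fit the stated error budget (note $l\leq 4$ and the geometric energy controls up to $\|\fg-\gamma\|_{H^5}$).

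Then I would add $\del_T\fE^{(l)}+\tfrac23\del_T\fC^{(l)}$. As at order $0$, the $(-4+2)|\fLap^l\phi^\prime|^2$ and $-\tfrac43\cdot(\text{lower})$ pieces combine with $-2\cdot\tfrac23\fC^{(l)}$ and, invoking the near-coercivity $Q\fE^{(l)}\leq\fE^{(l)}+\tfrac23\fC^{(l)}$ from Lemma \ref{lem:fut-ESF-coercivity}, give $-2(\fE^{(l)}+\tfrac23\fC^{(l)})$ plus error. All remaining terms are bilinear, with one factor being a top-order ($\leq l$) derivative of $\phi^\prime$, $\nabla\phi$, $\fk$, $\fg-\gamma$, or $\fN$ (the last controlled via the elliptic estimate \eqref{eq:ell-est-lapse}, which trades $\|\fN\|_{H^l}$ for $\epsilonnew e^{-T/2}$ times $H^{l-2}$-norms of $\fk$, $\fg-\gamma$, $\phi^\prime$, $\nabla\phi$), and the other factor a lower-order one absorbed into $\sum_{m=0}^l\sqrt{\fE^{(m)}}$ or a $C^{\lfloor l/2\rfloor}\lesssim C^2$ norm bounded by $\epsilonnew e^{-T/2}$ via \eqref{eq:fut-bootstrap}. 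Collecting these yields exactly the claimed inequality.

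\textbf{Main obstacle.} The delicate point is bookkeeping the top-order derivative count: after commuting with $\fLap^l$ and integrating by parts, one must ensure every error term has the schematic form (top-order factor at level $\leq l$) $\times$ (small factor $\lesssim\epsilonnew e^{-T/2}$), so that it fits $\epsilonnew e^{-T/2}(\sum_m\sqrt{\fE^{(m)}})(\|\phi^\prime\|_{H^l}+\|\nabla\phi\|_{H^l}+\|\Sigma\|_{H^l}+\|\fg-\gamma\|_{H^l})$ — in particular the shift-commutator terms $\Lie_{\fX}$ hidden inside $\fdel$ and the lapse terms must be reduced using \eqref{eq:ell-est-lapse}–\eqref{eq:ell-est-shift} without a net loss of derivatives, and the cancellation of the $\fn\fLap^{l+1}\phi$ term against the gradient term must be checked to leave only $\nabla\fn$-type remainders. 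This is routine but requires care; it is the analogue of the highest-order cancellation in the Big Bang analysis (cf. the trick around \eqref{eq:SF-cancellation-trick}) and is what forces the $a^4$-type weighting — here trivialized because $\tau$ is bounded — to not be needed toward the future.
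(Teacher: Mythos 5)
Your proposal follows essentially the same route as the paper: commuting the base-level computation with $\fLap^l$, exploiting the cancellation of the top-order $\fn\fLap^{l+1}\phi$ term against the gradient term after integration by parts, noting that the $\fdel\phim$ contribution vanishes for $l\geq 1$ since $\fLap^l\phi^\prime$ has zero mean (so Lemma \ref{lem:fut-SF-technicality} is only needed at $l=0$), and absorbing the commutator, curvature and lapse errors via Lemma \ref{lem:fut-Ric-est} and the elliptic estimates. The only slight imprecision is attributing the $-2(\fE^{(l)}+\frac23\fC^{(l)})$ leading term to the coercivity Lemma \ref{lem:fut-ESF-coercivity}; in fact that term arises from the purely algebraic combination of the $-4$, $+2$ and $-2$ coefficients in front of the top-order quadratic terms and of $\fC^{(l)}$, with coercivity playing no role in the differential estimate itself.
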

\begin{proof}
Starting with $l=2k,\,k\in\{1,2\}$, one calculates:
\begin{subequations}
\begin{align}
\del_T\fE^{(2k)}=&\,\int_M\biggr[2\fLap^k\fdel\phi^\prime\cdot\fLap^k\phi^\prime+2\langle\nabla\fLap^{k}\phi,\nabla\fLap^k\fdel\phi\rangle_{\fg}\label{eq:fESF-1}\\
&\,\phantom{\int_M}+(\fdel {\fg}^{-1})^{ab}\cdot\nabla_a\fLap^k\phi\cdot\nabla_b\fLap^k\phi+3\fN\left(\lvert\Lap^k\phi^\prime\rvert_{\fg}^2+\lvert\nabla\Lap^k\phi\rvert_{\fg}^2\right)\label{eq:fESF-2}\\
&\,\phantom{\int_M}+2[\fdel,\fLap^k]\phi^\prime\cdot\fLap^k\phi^\prime+2\left\langle[\fdel,\nabla\fLap^k]\phi,\nabla\fLap^k\phi\right\rangle_{\fg}\biggr]\,\vol{\fg}\label{eq:fESF-3}
\end{align}
We insert the rescaled wave equation \eqref{eq:fut-wave} and $\fdel\phi=\fn\phi^\prime$ into the right hand side of \eqref{eq:fESF-1} and obtain for some constant $K>0$:
\begin{align*}
\eqref{eq:fESF-1}\leq &\,\int_M \biggr[-4\lvert\fLap^k\phi^\prime\rvert^2 -6\fN\lvert\fLap^k\phi^\prime\rvert^2+\fn\fLap^{k+1}\phi\cdot\fLap^k\phi^\prime\biggr]\,\vol{\fg}\\
&\,+K\|\Lap^k\phi^\prime\|_{L^2}\left(\|\fN\|_{H^{2k+1}}\|\nabla\phi\|_{C^0}+\|\nabla\phi\|_{H^{2k}}\|\fN\|_{C^{2k}}\right)\\
&\,+\int_M\left[-\fn\fLap^{k}\phi^\prime\cdot\fLap^{k+1}\nabla\phi-3\langle\nabla\fN,\nabla\fLap^k\phi\rangle_ {\fg}\cdot\fLap^k\phi^\prime\right]\,\vol{\fg}\\
&\,+K\|\nabla\fLap^k\phi\|_{L^2}\left(\changefinal{\|\fN\|_{H^{2k+1}}\|\phi^\prime\|_{C^0}+\|\fN\|_{C^{2k}}\|\phi^\prime\|_{H^{2k}}}\right)\\
\leq&\,\int_M -4\lvert\fLap^k\phi^\prime\rvert^2\,\vol{\fg}\\
&\,+K\sqrt{\fE^{(2k)}}\cdot\Bigr[\left(\|\nabla\phi\|_{C^0}+\|\phi^\prime\|_{C^0}\right)\cdot\|\fN\|_{H^{2k+1}}+\left(\|\nabla\phi\|_{H^{2k}}+\|\phi^\prime\|_{H^{2k}}\right)\cdot\|\fN\|_{C^{2k}}\Bigr]
\end{align*}
\end{subequations}
For \eqref{eq:fESF-2}, we use \eqref{eq:fut-eq-g-1} and the bootstrap assumption \eqref{eq:fut-bootstrap} to bound it by $K\epsilonnew e^{-\frac{T}2}\fE^{(2k)}$. Regarding \eqref{eq:fESF-3}, the commutator formulas \eqref{eq:[fdel,Lapk]}-\eqref{eq:[fdel,nablaLapk]} imply
\begin{align*}
\|[\fdel,\fLap^k]\phi^\prime\|_{L^2}\lesssim&\,\change{\|\fn\|_{C^{2k-1}}\left(\|\phi^\prime\|_{C^1}\|\fk\|_{\dot{H}^{2k-1}}+\|\fk\|_{C^{2k-2}}\|\phi^\prime\|_{H^{2k}}\right)}\changefinal{+\|\fN\|_{C^{2k-1}}\|\phi^\prime\|_{H^{2k}}}\\
\|[\fdel,\nabla\fLap^k]\phi\|_{L^2}\lesssim&\,\change{\|\fn\|_{C^{2k}}\left(\|\nabla\phi\|_{C^1}\|\fk\|_{{H}^{2k}}+\|\fk\|_{C^{2k-2}}\|\nabla\phi\|_{H^{2k}}\right)}\changefinal{+\|\fN\|_{C^{2k}}\|\nabla\phi\|_{H^{2k}}}
\end{align*}
Summarizing, inserting the $C$-norm bounds from the bootstrap assumption \eqref{eq:fut-bootstrap} and updating $K$, this implies
\begin{align*}
\del_T\fE^{(2k)}\leq&\,\int_M-4\lvert\fLap^k\phi^\prime\rvert^2\,\vol{\fg}+K\epsilonnew e^{-\frac{T}2}\fE^{(2k)}\\
&\,+K\epsilonnew e^{-\frac{T}2}\sqrt{\fE^{(2k)}}\left(\|\phi^\prime\|_{H^{2k}}+\|\nabla\phi\|_{H^{2k}}\right)\\
&\,+K\epsilonnew e^{-\frac{T}2}\sqrt{\fE^{(2k)}}\left(\change{\|\fN\|_{H^{2k+1}}+\|\Sigma\|_{H^{2k}}}
\right)
\end{align*}
Moving on to the corrective term, we compute:
\begin{subequations}
\begin{align*}
\del_T\fC^{(2k)}
=&\,\int_M\biggr[\fLap^k\fdel\phi\cdot\fLap^k\phi^\prime+\fLap^k\phi\cdot\fLap^k\fdel\phi^\prime\numberthis\label{eq:fESFcorr-1}\\
&\,\phantom{\int_M}+3\fN\cdot\fLap^k\phi\cdot\fLap^k\phi^\prime+[\fdel,\fLap^k]\phi\cdot\fLap\phi^\prime+\fLap^k\phi\cdot[\fdel,\fLap^k]\phi^\prime\biggr]\,\vol{\fg}\numberthis\label{eq:fESFcorr-2}
\end{align*}
\end{subequations}
Inserting the evolution equations into the right hand side of \eqref{eq:fESFcorr-1}, we can bound that line by
\begin{align*}
\leq&\,\int_M\left[3\lvert\fLap^k\phi^\prime\rvert^2+3\fN\lvert\fLap^k\phi^\prime\rvert^2\right]\vol{\fg}+K\|\fN\|_{C^{2k}}\|\phi^\prime\|_{H^{2k-1}}\|\Lap^k\phi^\prime\|_{L^2}\\
&+\int_M\left[ -2\fLap^k\phi\cdot\fLap^k\phi^\prime+3\fN\fLap^k\phi\cdot\fLap^k\phi^\prime +3\fLap^k\phi\cdot\fLap^{k+1}\phi+3\fN\fLap^k\phi\cdot\fLap^{k+1}\phi\right]\,\vol{\fg}\\
&\,+K\left[\|\fN\|_{C^{2k}}\left(\|\nabla\phi\|_{H^{2k}}+\|\phi^\prime\|_{H^{2k-1}}\right)+\left(\|\nabla\phi\|_{C^0}+\|\phi^\prime\|_{C^0}\right)\|\fN\|_{H^{2k+1}}\right]\|\fLap^k\phi\|_{L^2}
\end{align*}
Note that, after integrating by parts, the last two terms in the second line can be bounded by
\[\int_M-3\lvert\nabla\fLap\phi\rvert_{\fg}^2\,\vol{\fg}+\|\fN\|_{C^1}(\|\nabla\Lap^k\phi\|_{L^2}+\|\Lap^k\phi\|_{L^2})\|\nabla\Lap^k\phi\|_{L^2}\,.\]
For the terms in \eqref{eq:fESFcorr-2}, notice that the first term can be bounded by $ \epsilonnew e^{-\frac{T}2}\|\nabla\phi\|_{H^{2k-1}}\sqrt{\fE^{(2k)}}$, 
while the commutator terms can be estimated as before, with
\begin{align*}
\|[\fdel,\fLap^k]\phi\|_{L^2}\lesssim&\,\change{\|\nabla\phi\|_{C^{0}}\|\fn\|_{C^0}\|\fk\|_{\dot{H}^{2k-1}}}
\change{+\|\fn\|_{C^{2k}}\|\fk\|_{C^{2k-2}}}
\change{\|\nabla\phi\|_{H^{2k-1}}}
\end{align*}

\noindent Combining all of the above, we get
\begin{align*}
\del_T\fC^{(2k)}\leq&\,-2\fC^{(2k)}+\int_M \left[3\lvert \fLap^k\phi^\prime\rvert-3\lvert\nabla\fLap^k\phi\rvert_{\fg}\right]\,\vol{\fg}
\\
&\,+K\epsilonnew e^{-\frac{T}2}\change{\left[\|\phi^\prime\|_{H^{2k}}+\|\nabla\phi\|_{H^{2k}}+\|\fN\|_{H^{2k+1}}+\|\fk\|_{H^{2k}}
\right]}\\
&\,\phantom{+K}\cdot\left(\sqrt{\fE^{(2k)}}+\sqrt{\fE^{(2k-1)}}\right)
\end{align*}

\noindent Finally, combining both differential estimates 
yields the statement for $l=2k$.
For $l=2k-1,\,k\in\{1,2\}$, the argument is completely analogous and hence omitted.
\end{proof}

\subsection{Geometric variables}\label{subsec:fut-geom-var} We can take the following results from prior literature, where \change{we additionally apply the elliptic estimates in Lemma \ref{lem:fut-ell-est}}:


\begin{lemma}[Coercivity of geometric energies, {\cite[Lemma 7.4]{AM11}}]\label{lem:fut-en-geom-coercivity}
For sufficiently small $\epsilonnew>0$, the following estimate holds:
\begin{equation}\label{eq:fut-en-geom-coerc}
\|\fg-\gamma\|_{H^5}^2+\|\Sigma\|_{H^4}^2\lesssim \fEg
\end{equation}
\end{lemma}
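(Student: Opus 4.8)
The geometric energy $\fEg$ does not see the scalar field, so its coercivity follows exactly as in the vacuum setting of \cite[Lemma 7.4]{AM11} (see also \cite[Lemma 7.2]{AndFaj20}); the plan is to reduce the estimate to an eigenvalue estimate for the unperturbed operator $\L_{\gamma,\gamma}$ and then to use the precise value of $c_E$ fixed in \eqref{eq:fut-corr-const}. Writing $h:=\fg-\gamma$ and $\sigma:=6\fk$, one has $\fEg=\sum_{m=1}^5Q_m$ with
\[
Q_m:=\frac92\int_M\langle h,\LG^mh\rangle_{\fg}\,\vol{\fg}+\frac12\int_M\langle\sigma,\LG^{m-1}\sigma\rangle_{\fg}\,\vol{\fg}+c_E\int_M\langle\sigma,\LG^{m-1}h\rangle_{\fg}\,\vol{\fg}\,.
\]
First I would recall from \cite{Kroen15} that $\L_{\gamma,\gamma}$ is self-adjoint, elliptic of order two on symmetric $(0,2)$-tensors, has trivial kernel and spectrum in $[\lambda_0,\infty)$ with $\lambda_0\geq\frac19$; in particular it is positive definite, so elliptic regularity together with the spectral gap give $\langle h,\L_{\gamma,\gamma}^kh\rangle_{\gamma}\simeq\|h\|_{H^k_\gamma}^2$ for $0\leq k\leq5$. (The stronger bound $\mu_0(\gamma)>\frac19$ of Definition \ref{def:spatial-mf-spectral} is not needed here.)

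For the model form $Q_m^\gamma$ — obtained from $Q_m$ by replacing $\LG$, $\langle\cdot,\cdot\rangle_{\fg}$ and $\vol{\fg}$ by their $\gamma$-counterparts — I would decompose $h$ and $\sigma$ in an $L^2_\gamma$-orthonormal eigenbasis $(e_j)$ of $\L_{\gamma,\gamma}$ with eigenvalues $\lambda_j\geq\lambda_0$, so that $Q_m^\gamma=\sum_j\lambda_j^{m-1}\big(\frac92\lambda_jh_j^2+c_E\sigma_jh_j+\frac12\sigma_j^2\big)$. The inner quadratic form in $(h_j,\sigma_j)$ is positive definite, with a margin uniform in $j$ and controlled by $\lambda_0$, exactly when $c_E^2<9\lambda_j$ for all $j$, i.e.\ when $c_E^2<9\lambda_0$ (equivalently $c_E<3\sqrt{\lambda_0}$). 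This is precisely what \eqref{eq:fut-corr-const} secures: if $\lambda_0>\frac19$ then $c_E=1<3\sqrt{\lambda_0}$, while if $\lambda_0=\frac19$ then $c_E=9(\lambda_0-\epsilonnew^\prime)=1-9\epsilonnew^\prime<1=3\sqrt{\lambda_0}$. Hence $Q_m^\gamma\gtrsim\sum_j\big(\lambda_j^mh_j^2+\lambda_j^{m-1}\sigma_j^2\big)\simeq\|h\|_{H^m_\gamma}^2+\|\sigma\|_{H^{m-1}_\gamma}^2$, and summing over $m=1,\dots,5$ and recalling $\sigma=6\fk$ bounds $\|\fg-\gamma\|_{H^5_\gamma}^2+\|\fk\|_{H^4_\gamma}^2$ from above by $\sum_mQ_m^\gamma$.

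Finally I would compare $Q_m$ with $Q_m^\gamma$. The coefficients of $\LG-\L_{\gamma,\gamma}$ — namely $\fg^{-1}-\gamma^{-1}$, $\mu_{\fg}\mu_\gamma^{-1}-1$ and the lower-order curvature term — are of size $\lesssim\epsilonnew e^{-T/2}$ in the $C^k$- and $H^k$-norms needed, by the bootstrap assumption \eqref{eq:fut-bootstrap} and the embedding $H^5(M)\hookrightarrow C^3(M)$; iterating this $m\leq5$ times, integrating by parts, and using the equivalence of $\fg$- and $\gamma$-norms under \eqref{eq:fut-bootstrap} gives $|Q_m-Q_m^\gamma|\lesssim\epsilonnew e^{-T/2}\big(\|h\|_{H^m}^2+\|\sigma\|_{H^{m-1}}^2\big)$, which for $\epsilonnew$ small enough is absorbed into the coercivity of $\sum_mQ_m^\gamma$ established above; this yields \eqref{eq:fut-en-geom-coerc}. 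I expect the one genuinely delicate point — carried out in detail in \cite{AM11} — to be this last perturbation estimate at top order $m=5$, where $\LG^5$ is a differential operator of order ten and one works at the edge of the $H^5$-regularity available for $\fg$; the fact that $\LG$ is only self-adjoint with respect to $\langle\cdot,\cdot\rangle_{\fg}\vol{\fg}$ up to lower-order terms with $\O{\epsilonnew e^{-T/2}}$ coefficients should cause no loss.
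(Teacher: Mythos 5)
The paper offers no proof of this lemma beyond citing \cite[Lemma 7.4]{AM11}, so there is no in-paper argument to compare against; your sketch is a faithful reconstruction of the cited result. The eigenbasis reduction is correct: for the inner quadratic form $\tfrac92\lambda h^2 + c_E\sigma h + \tfrac12\sigma^2$ to be positive definite with margin uniform in $\lambda\geq\lambda_0$, one needs exactly $c_E^2 < 9\lambda_0$, and both branches of \eqref{eq:fut-corr-const} secure this (for $\lambda_0=\tfrac19$ the coercivity constant degrades like $\epsilonnew^\prime$, but $\epsilonnew^\prime$ is a fixed parameter, so this is harmless). Note that taking $m=5$ alone already gives $\|h\|_{H^5_\gamma}^2+\|\sigma\|_{H^4_\gamma}^2\lesssim Q_5^\gamma$; summing over $m$ is not needed. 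The one place you are right to defer to \cite{AM11} is the perturbation from $\L_{\gamma,\gamma}$ to $\LG$ at order $m=5$: there one is integrating by parts with coefficients $\fg^{-1},\mu_{\fg}$ that are only $H^5$, so the commutator bookkeeping is genuinely at the edge of available regularity, but self-adjointness of $\LG$ with respect to $\langle\cdot,\cdot\rangle_{\fg}\vol{\fg}$ lets one balance the derivatives as $\langle\LG^2h,\LG^3h\rangle$, which closes. Your proof is consistent with how \cite{AM11} establishes this, and hence consistent with the route the paper takes by citation.
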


\begin{lemma}[Geometric energy estimate, {\cite[Lemma 20]{AndFaj20}}]\label{lem:fut-geom-est}
Let $\epsilonnew>0$ be chosen appropriately small, and let
\begin{equation}\label{eq:fut-alpha}
\alpha=\begin{cases}
1 & \lambda_0>\frac19\\
1-3\sqrt{\epsilonnew^\prime} & \lambda_0=\frac19\,,
\end{cases}
\end{equation}
where $\epsilonnew^\prime>0$ is the same as in \eqref{eq:fut-corr-const}, in particular suitably small. Then, there exists some constant $K>0$ such that the following estimate holds:
\begin{align*}
\numberthis\label{eq:fut-geom-est}\del_T\fEg\leq&\,-2\alpha\fEg+K\fEg^\frac32+\change{K\epsilonnew e^{-\frac{T}2}}\sqrt{E_{geom}}\left[\|\phi^\prime\|_{H^4}+\|\nabla\phi\|_{H^4}\right]\\
\end{align*}
\end{lemma}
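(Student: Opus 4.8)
\textbf{Proof proposal for Lemma \ref{lem:fut-geom-est}.} The plan is to follow the by-now standard strategy for Milne-type stability arguments via corrected Bel-Robinson/Lichnerowicz energies, treating the scalar field purely as a perturbative source term in the geometric evolution, exactly as in \cite[Lemma 20]{AndFaj20} and \cite{AM11, FajWy21}. First I would differentiate $\fEg$ in $T$ using the differentiation formula \eqref{eq:fut-delt-int}, insert the rescaled evolution equations \eqref{eq:fut-eq-g}, \eqref{eq:fut-eq-g-1} and \eqref{eq:fut-eq-Sigma}, and commute the operator $\LG$ past $\fdel$ — the commutator $[\fdel,\LG]$ produces only terms quadratic in the perturbation (controlled by $\fEg^{3/2}$ after applying Lemma \ref{lem:fut-en-geom-coercivity} and the bootstrap assumption \eqref{eq:fut-bootstrap}), since the principal part of $\LG$ is built from the reference metric $\gamma$. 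The three cross-terms in the definition of $\fEg$ (the $\langle\fg-\gamma,\LG^m(\fg-\gamma)\rangle$, $\langle 6\fk,\LG^{m-1}(6\fk)\rangle$ and the indefinite $c_E\langle 6\fk,\LG^{m-1}(\fg-\gamma)\rangle$ pieces) are designed so that, upon inserting the evolution equations and integrating by parts, the leading-order linear terms organize into a quadratic form in $(\LG^{m/2}(\fg-\gamma),\LG^{(m-1)/2}(6\fk))$ whose time derivative is bounded above by $-2\alpha$ times the energy, using the spectral bound $\lambda_0\geq\frac19$ for $\LG$ from \cite{Kroen15} together with the explicit choice \eqref{eq:fut-corr-const} of $c_E$ and \eqref{eq:fut-alpha} of $\alpha$ in the borderline case $\lambda_0=\frac19$. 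This part is verbatim the vacuum computation in \cite{AndFaj20}.

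The new ingredient is the matter coupling. The Ricci term $\fn(\Ric[\fg]+\frac29\fg)$ in \eqref{eq:fut-eq-Sigma} is handled, as usual, by writing $\Ric[\fg]-\Ric[\gamma]=\frac12\LG(\fg-\gamma)+J$ via \eqref{eq:fut-Ric-ell} so that the principal part feeds back into the Lichnerowicz energy structure and $J$ contributes only quadratically. The genuinely matter-sourced terms are the $-8\pi\fn\nabla_a\phi\nabla_b\phi$ term in the $\fk$-evolution and the matter contributions hidden in the elliptic lapse and shift equations. For the former, after commuting with $\LG^{m-1}$ and pairing against $\LG^{m-1}(6\fk)$, I would estimate $\|\LG^{m-1}(\nabla\phi\otimes\nabla\phi)\|_{L^2}$ by a product of a low-order $C$-norm of $\nabla\phi$ (bounded by $\epsilonnew e^{-T/2}$ from \eqref{eq:fut-bootstrap}) and the top-order Sobolev norm $\|\nabla\phi\|_{H^4}$, yielding precisely a term of the shape $\epsilonnew e^{-T/2}\sqrt{E_{geom}}\,\|\nabla\phi\|_{H^4}$ after applying \eqref{eq:fut-en-geom-coerc}. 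The lapse and shift appear through $\nabla_a\nabla_b\fn$, $\fN\fg_{ab}$, $\fN\Sigma_{ab}$ in \eqref{eq:fut-eq-Sigma} and the Lie-derivative terms $\Lie_{\fX}$ buried in $\fdel$; here I would invoke the elliptic estimates Lemma \ref{lem:fut-ell-est}, which bound $\|\fN\|_{H^l}$ and $\|\fX\|_{H^l}$ by $\epsilonnew e^{-T/2}$ times a combination of $\|\fk\|_{H^{l-2}}$, $\|\fg-\gamma\|_{H^{l-1}}$ and the matter norms $\|\phi'\|_{H^{l-2}}+\|\nabla\phi\|_{H^{l-2}}$. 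The $\fk$- and $\fg$-contributions from these bounds are reabsorbed into $K\fEg^{3/2}$ (since they come with an extra $\epsilonnew e^{-T/2}$ and the remaining factor is controlled by $\sqrt{E_{geom}}$), while the matter contributions produce exactly the advertised term $K\epsilonnew e^{-T/2}\sqrt{E_{geom}}[\|\phi'\|_{H^4}+\|\nabla\phi\|_{H^4}]$.

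Collecting all contributions gives
\[
\del_T\fEg\leq -2\alpha\fEg + K\fEg^{3/2} + K\epsilonnew e^{-\frac{T}{2}}\sqrt{E_{geom}}\left[\|\phi^\prime\|_{H^4}+\|\nabla\phi\|_{H^4}\right],
\]
which is \eqref{eq:fut-geom-est}. The main obstacle — though it is more bookkeeping than conceptual difficulty — is ensuring that \emph{every} matter-sourced term, including the ones that enter indirectly through the elliptic lapse/shift equations and through the $\fdel$-commutators at the top derivative order $m=5$, can be arranged so that the top-order matter factor appears at order exactly $4$ (matching $\|\phi'\|_{H^4}+\|\nabla\phi\|_{H^4}$) rather than order $5$. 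This is why the elliptic estimates in Lemma \ref{lem:fut-ell-est} are stated with the matter norms appearing two derivative orders below the lapse/shift norm, and why the scalar-field energy hierarchy was set up to close at order $4$; the derivative count has to be tracked carefully but goes through exactly as in the cited vacuum references once the matter terms are slotted in as perturbations.
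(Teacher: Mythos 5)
Your proposal is correct and follows essentially the same route as the paper, which simply cites the vacuum geometric energy estimate from \cite[Lemma 20]{AndFaj20} (itself building on \cite{AM11}) and notes that the matter-sourced perturbations are absorbed by means of the elliptic estimates in Lemma \ref{lem:fut-ell-est}; the paper does not spell out the details beyond that. Your outline correctly identifies the two places where the scalar field enters — the explicit $-8\pi\fn\nabla_a\phi\nabla_b\phi$ term in \eqref{eq:fut-eq-Sigma} and the matter contributions to lapse and shift in Lemma \ref{lem:fut-ell-est} — and correctly tracks the derivative count so that the top-order matter factor appears at order $4$, matching the right-hand side $\|\phi'\|_{H^4}+\|\nabla\phi\|_{H^4}$.
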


\subsection{Closing the bootstrap}\label{subsec:fut-bs-imp}

Now, we can collect our estimates to improve the bootstrap assumptions:

\begin{prop}[Improved bounds for future stability]\label{prop:fut-bs-imp} Let the bootstrap assumption (see Assumption \ref{ass:fut-bootstrap}) be satisfied for $T\in[0,T_{Boot})$ and assume the initial data assumption holds at $T=0$ (see Assumption \ref{ass:fut-init}). For $\epsilonnew>0$ sufficiently small and $\alpha$ as in \eqref{eq:fut-alpha} with $\epsilonnew^\prime>0$ sufficiently small, the following estimates hold: 
\begin{subequations}
\begin{align}
\|\phi^\prime\|_{C^2}+\|\nabla\phi\|_{C^2}+\|\phi^\prime\|_{H^4}+\|\nabla\phi\|_{H^4}\lesssim&\,\epsilonnew^\frac32 \change{e^{-\alpha T}}\label{eq:fut-sf-imp}\\
\|\fg-\gamma\|_{C^3}+\|\Sigma\|_{C^2}+\|\fg-\gamma\|_{H^5}+\|\Sigma\|_{H^4}\lesssim&\,\epsilonnew^\frac32e^{-\alpha T}\label{eq:fut-geom-imp}\\
\|\fN\|_{C^4}+\|\fX\|_{C^4}+\|\fN\|_{H^6}+\|\fX\|_{H^6}\lesssim&\,\epsilonnew^3e^{-2\alpha T}\label{eq:fut-ell-imp}
\end{align}
\delete{Further[...]}
\end{subequations}
\end{prop}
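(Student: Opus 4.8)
The plan is to run the standard bootstrap-improvement argument: assume Assumption~\ref{ass:fut-bootstrap} on $[0,T_{Boot})$, derive strictly improved decay for all variables, and then invoke continuity of the norms (Remark~\ref{rem:fut-lwp}) together with a continuous-induction argument to conclude that $T_{Boot}=\infty$ and the improved estimates hold globally. The improved estimates are \eqref{eq:fut-sf-imp}--\eqref{eq:fut-ell-imp}; the point is that the right-hand sides carry $\epsilonnew^{3/2}$ (resp.\ $\epsilonnew^3$) rather than $\epsilonnew$ (resp.\ $\epsilonnew^2$), and the exponential rates $e^{-\alpha T}$, $e^{-2\alpha T}$ are no worse than the bootstrap rate $e^{-T/2}$ once $\epsilonnew'$ is small (so $\alpha$ close to $1$ and $\alpha>\tfrac12$).

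First I would combine the matter and geometric energy estimates into a single total energy. Set $E_{tot}=E_{SF}+c_\ast\fEg$ for a small absorbing constant $c_\ast>0$. Adding Lemma~\ref{lem:fut-en-est-ESF0}, Lemma~\ref{lem:fut-en-est-ESF} (summed over $l=0,\dots,4$) and Lemma~\ref{lem:fut-geom-est}, and using the coercivity statements Lemma~\ref{lem:fut-ESF-coercivity}, Lemma~\ref{lem:fut-Sob-est}, Lemma~\ref{lem:fut-en-geom-coercivity} together with the curvature bound \eqref{eq:fut-Ric-bs} to replace Sobolev norms by energies, the cross terms of the form $\epsilonnew e^{-T/2}\sqrt{E_{SF}}(\|\phi'\|+\|\nabla\phi\|+\|\Sigma\|+\|\fg-\gamma\|)$ are controlled by $\epsilonnew e^{-T/2}E_{tot}$ via Young's inequality, the cubic term $K\fEg^{3/2}\lesssim \epsilonnew e^{-T/2}\fEg$ by the bootstrap assumption, and the genuinely inhomogeneous term $K\epsilonnew^3 e^{-5T/2}$ from Lemma~\ref{lem:fut-en-est-ESF0} is integrable with rate faster than $2\alpha$. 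This yields a differential inequality of the schematic shape
\begin{equation*}
\del_T E_{tot}\leq -2\alpha E_{tot}+K\epsilonnew e^{-T/2}E_{tot}+K\epsilonnew^3 e^{-5T/2}\,,
\end{equation*}
where the $-2\alpha$ comes from the $-2(\fE^{(l)}+\tfrac23\fC^{(l)})$ terms (with the coercivity constant $Q$ from Lemma~\ref{lem:fut-ESF-coercivity}, using $c_E$ as in \eqref{eq:fut-corr-const}) and the $-2\alpha\fEg$ term, the two rates being reconciled by the definitions \eqref{eq:fut-corr-const}, \eqref{eq:fut-alpha}. A Gronwall argument (Lemma~\ref{lem:gronwall}-type) with the initial bound $E_{tot}(0)\lesssim\epsilonnew^4$ from Assumption~\ref{ass:fut-init} then gives $E_{tot}(T)\lesssim\epsilonnew^4 e^{-2\alpha T}$, since $\int_0^\infty\epsilonnew e^{-T/2}\,dT<\infty$ absorbs into the constant and $e^{-5T/2}$ decays faster than $e^{-2\alpha T}$.

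Next I would unwind this into the stated norm bounds. Coercivity of $E_{SF}$ and $\fEg$ (Lemma~\ref{lem:fut-Sob-est}, Lemma~\ref{lem:fut-en-geom-coercivity}), combined with \eqref{eq:fut-Ric-bs} to handle the curvature error terms, gives $\|\phi'\|_{H^4}^2+\|\nabla\phi\|_{H^4}^2+\|\fg-\gamma\|_{H^5}^2+\|\Sigma\|_{H^4}^2\lesssim\epsilonnew^4 e^{-2\alpha T}$, i.e.\ the Sobolev halves of \eqref{eq:fut-sf-imp} and \eqref{eq:fut-geom-imp}; the $C^k$ halves follow by the standard Sobolev embedding $H^{k+2}\hookrightarrow C^k$ on $M$ (norms with respect to $\fg$ and $\gamma$ being equivalent under the bootstrap assumption). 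Feeding these into the elliptic estimates Lemma~\ref{lem:fut-ell-est} produces $\|\fN\|_{H^6}+\|\fX\|_{H^6}\lesssim\epsilonnew e^{-T/2}\cdot\epsilonnew^{3/2}e^{-\alpha T}+\epsilonnew^2 e^{-T}\cdot\epsilonnew^{3/2}e^{-\alpha T}\lesssim\epsilonnew^3 e^{-2\alpha T}$ (using $\tfrac12+\alpha\geq 2\alpha$), hence \eqref{eq:fut-ell-imp} after the Sobolev embedding, thereby improving every term appearing in \eqref{eq:fut-bootstrap}.

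The main obstacle is the indefiniteness and the two competing decay rates. The correction terms $\fC^{(l)}$ are sign-indefinite, so the conclusion hinges on Lemma~\ref{lem:fut-ESF-coercivity} giving a genuine positive multiple $Q\fE^{(l)}\leq\fE^{(l)}+\tfrac23\fC^{(l)}$, which in turn depends crucially on the spectral condition $\mu_0(\gamma)>\tfrac19$ of Definition~\ref{def:spatial-mf-spectral} (entering through $q>0$); analogously $\fEg$ is only coercive with the correction constant $c_E$ tuned via $\lambda_0\geq\tfrac19$. One must therefore check carefully that the \emph{same} small parameter $\epsilonnew'>0$ can be chosen in \eqref{eq:fut-corr-const} and \eqref{eq:fut-alpha} so that the effective decay exponent $2\alpha$ from the geometric sector and the matter decay rate (which is $2$ at leading order, from $-2E_{SF}^{(0)}$ and the $-2(\fE^{(l)}+\tfrac23\fC^{(l)})$) are jointly dominated, i.e.\ $\alpha\leq 1$ while still $\alpha>\tfrac12$, so that the bootstrap rate $e^{-T/2}$ is strictly improved to $e^{-\alpha T}$. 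The remaining care is purely bookkeeping: tracking that the inhomogeneous source $\epsilonnew^3 e^{-5T/2}$ and the cubic self-interaction do not degrade the rate, and that all $C$-norm factors pulled out of nonlinearities are controlled by the bootstrap assumption so the resulting constants are $\epsilonnew$-independent.
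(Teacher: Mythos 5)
Your approach is essentially the paper's: add the matter energy estimates (Lemma~\ref{lem:fut-en-est-ESF0}, Lemma~\ref{lem:fut-en-est-ESF}) to the geometric one (Lemma~\ref{lem:fut-geom-est}), absorb cross-terms via the bootstrap $C$-norm bounds and (near-)coercivity, derive a single Gronwall inequality at rate $-2\alpha$, and then unwind into Sobolev norms via Lemmas~\ref{lem:fut-Sob-est}, \ref{lem:fut-en-geom-coercivity}, Sobolev embedding, and the elliptic estimates of Lemma~\ref{lem:fut-ell-est}. The paper carries out exactly this, treating $E_{SF}^{(4)}+E_{geom}$ as the total energy and then deducing the geometric norms first, the curvature bound second, and the matter norms third; your single-pass unwinding is a harmless reorganisation.

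One arithmetic slip worth correcting: after Gronwall you claim $E_{tot}(T)\lesssim\epsilonnew^4 e^{-2\alpha T}$, but the inhomogeneous source $K\epsilonnew^3 e^{-\frac52 T}$ contributes a constant of size $\epsilonnew^3$ upon integration, which strictly dominates the $\epsilonnew^4$ coming from the initial data. The achievable bound is therefore $E_{tot}(T)\lesssim\epsilonnew^3 e^{-2\alpha T}$ — matching the paper's conclusion in the analogue of this step. This does not affect the proposition, since $\sqrt{\epsilonnew^3}=\epsilonnew^{3/2}$ is exactly what \eqref{eq:fut-sf-imp}--\eqref{eq:fut-geom-imp} ask for, but your intermediate claim would assert a decay rate $\epsilonnew^2 e^{-\alpha T}$ for the norms that the argument cannot actually deliver. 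Also note that the corrected scalar-field energy already comes with the explicit coefficient $-2$ in Lemmas~\ref{lem:fut-en-est-ESF0}--\ref{lem:fut-en-est-ESF}; the coercivity constant $Q$ from Lemma~\ref{lem:fut-ESF-coercivity} plays no role in that decay rate, only in converting the corrected energy back into the uncorrected one.
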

\begin{proof}\change{In the following, the positive constant $K$ may be updated from line to line.\\
Combining the estimate from Lemma \ref{lem:fut-en-est-ESF0} as well as those from Lemma \ref{lem:fut-en-est-ESF} at each level with Lemma \ref{lem:fut-geom-est} and applying the (near)-coercivity estimates \eqref{eq:fut-coerc} and \eqref{eq:fut-en-geom-coerc} to the right hand sides, we obtain:
\begin{align*}
\del_T\left(E^{(4)}_{SF}+E_{geom}\right)\leq&\, -2E_{SF}+K\epsilonnew e^{-\frac{T}2} \sqrt{E_{SF}}\left(\sqrt{E_{SF}+\epsilonnew^2e^{-T}\left\|\Ric[\fg]+\frac29\fg\right\|_{H^2}^2}+\sqrt{E_{geom}}\right)\\
&\,+K\delta^3 e^{-\frac52T}\\
&\,-2\alpha E_{geom}+KE^\frac32_{geom}+K\epsilonnew e^{-\frac{T}2}\sqrt{E_{geom}}\sqrt{E_{SF}+\epsilonnew^2e^{-T}\left\|\Ric[\fg]+\frac29\fg\right\|_{H^2}^2}
\end{align*}
Applying \eqref{eq:fut-Ric-est} to the curvature norms, as well as \eqref{eq:fut-en-geom-coerc} to the resulting norms on $\fg-\gamma$ and \eqref{eq:fut-bootstrap} (which implies $\sqrt{E_{geom}}\lesssim\epsilonnew e^{-\frac{T}2}$), this becomes
\[\del_T\left(E_{SF}+E_{geom}\right)\leq -2\alpha (E_{SF}^{(4)}+E_{geom})+K\epsilonnew e^{-\frac{T}2}\left(E_{SF}+E_{geom}\right)+K\epsilonnew^3e^{-\frac52T}\,.\]
and consequently, since $\alpha\leq1$,
\[\del_T\left[e^{2\alpha T}\left(E_{SF}+E_{geom}\right)\right]\lesssim \epsilonnew e^{-\frac{T}2}\cdot e^{2\alpha T}\left(E_{SF}+E_{geom}\right)+\epsilonnew^3 e^{-\frac{T}2}\]
The Gronwall lemma, along with the initial data assumption \eqref{eq:fut-init}, now implies
\begin{equation}\label{eq:fut-en-imp}
E_{SF}+E_{geom}\lesssim \epsilonnew^3e^{-2\alpha T}\,.
\end{equation}
}
Lemma \ref{lem:fut-en-geom-coercivity} and the standard Sobolev embedding then imply \eqref{eq:fut-geom-imp}. In particular, this means
\begin{equation}\label{eq:fut-curv-imp}
\left\|\Ric[\fg]+\frac29\fg\right\|_{H^2}\change{\lesssim} \epsilonnew^\frac32e^{-\alpha T}
\end{equation}
due to Lemma \ref{lem:fut-Ric-est}, and for $\epsilonnew^\prime>0$ small enough, inserting \eqref{eq:fut-en-imp} and \eqref{eq:fut-curv-imp} into \eqref{eq:fut-coerc} shows \eqref{eq:fut-sf-imp}. Moreover, \eqref{eq:fut-ell-imp} follows directly from \changefinal{the proof of }Lemma \ref{lem:fut-ell-est} and the already obtained improvements. \delete{Finally [..]}
\end{proof}

\begin{proof}[Proof of Theorem \ref{thm:fut-stab-simple}]

The problem is locally well-posed as outlined in Remark \ref{rem:fut-lwp}. There then is some maximal interval $[0,T_{Boot})$ for the logarithmic time $T$ -- or, equivalently, some maximal time interval $[\tau_0,\tau_{Boot})$ -- on which the solution exists and the bootstrap assumptions (see Assumption \ref{ass:fut-bootstrap}) are satisfied. By the analogous argument to the proof of Theorem \ref{thm:main}, the decay estimates in Proposition \ref{prop:fut-bs-imp} are strictly stronger than the bootstrap assumptions for small enough $\epsilonnew,\epsilonnew^\prime>0$. This implies $T_{Boot}=\infty$ (resp. $\tau_{Boot}=0$) since we could else extend the solution strictly beyond $T_{Boot}$ while also satisfying the bootstrap assumptions. \delete{To be a bit more precise [...]}
This proves the convergence statement in Theorem \ref{thm:fut-stab-simple}.

Finally, the decay estimates imply that $\lvert\nabla n\rvert_{g}$, respectively $\lvert k\rvert_{g}$, are bounded by $\tau^{\alpha-1}$, respectively $\tau^{\alpha+1}$, up to constant on $[\tau_0,\tau)$. Since $\alpha$ is at worst slightly smaller than $1$, both functions are integrable on $[\tau_0,0)$ for suitably small $\epsilonnew^\prime>0$ . By \cite{CB02}, this means the spacetime is future complete.
\end{proof}

\section{Global stability}\label{sec:full-stab}

To prove Theorem \ref{thm:main-full}, what still needs to be shown is that initial data as in Theorem \ref{thm:main-past} develops from $\Sigma_{t_0}$ to some hypersurface $\Sigma_{t_1}\equiv \Sigma_{\tau(t_1)}$ in its future such that the data in $\Sigma_{t_1}$ is near-Milne in the sense of Assumption \ref{ass:fut-init} and in CMCSH gauge. From there, near-Milne stability yields the behaviour in the future of $\Sigma_{\tau(t_1)}$, and hence future stability of near-FLRW spacetimes as in Theorem \ref{thm:main-full}. 

\begin{proof}[Proof of Theorem \ref{thm:main-full}] Within this proof, $t$ will denote the \enquote{physical} time coordinate used throughout the Big Bang stability analysis, while $\tau$ denotes the mean curvature time used within CMCSH gauge.

Consider initial data $(g,k,\nabla\phi,\del_0\phi)$ induced on the CMC hypersurface $\Sigma_{t_0}$ within $\M$ such that the rescaled variables are close to FLRW reference data in the sense of Theorem \ref{thm:main}. Moreover, let $(\mathring{g},\mathring{k},\mathring{\pi},\mathring{\psi})$ be the geometric initial data on $M$ that induce it via the embedding $\iota:M\hookrightarrow \M$.

Notice that
\[P:H^{20}_\gamma(M)\rightarrow H^{18}_\gamma(M),\,Y^i\mapsto \Lap_\gamma Y^i+(\gamma^{-1})^{il}{\Ric[\gamma]}_{lj}Y^j=\Lap_{\gamma} Y^i-\frac29 Y^i\]
is an isomorphism since $\Lap_\gamma$ has no positive eigenvalues. Hence, using \cite[Theorem 2.5, Remark 2.6]{FajKr20}, there is a metric $\mathring{g}^\prime$ isometric to $\mathring{g}$ that remains \changefinal{close in $H^{18}_\gamma(M)$ }to $\gamma$ and satisfies
\[((\mathring{g}^\prime)^{-1})^{ij}\left(\Gamma[\mathring{g}^\prime]^k_{ij}-\Gamhat[\gamma]^{k}_{ij}\right)=0.\]
Let $\theta\in\text{Diff}(M)$ be the diffeomorphism such that $\theta^\ast \mathring{g}=\mathring{g}^\prime$, then the proof of \cite[Theorem 2.5]{FajKr20} implies that $\theta$ can be chosen close to the identity map within $H^{18}(\text{Diff}(M))$, and consequently that $\theta^\ast\mathring{k}=\mathring{k}^\prime,\, \theta^\ast\mathring{\pi}=\mathring{\pi}^\prime$ and $\theta^\ast\mathring{\psi}=\mathring{\psi}^\prime$ remain close to $-\dot{a}(t_0)a(t_0)\gamma,\,0$ and $Ca(t_0)^{-3}$ in $H^{18}_\gamma(M)$. By the same argument as in Remark \ref{rem:CMC-hypersurface}, we can now evolve this data locally and obtain a new initial hypersurface $\Sigma^\prime$ close to $\Sigma_{t_0}$ that is in CMCSH gauge and that $(g,k,\nabla\phi,\del_0\phi)$ is close to the reference data in the sense of Assumption \ref{ass:init}, exchanging the initial time $t_0$ by some close time $t_0^\prime$.\\

Since $\tau$ is strictly increasing, $t\equiv t(\tau)$ exists and we can interchangeably view $a$ as a function in $t$ or $\tau$ with some abuse of notation. The Friedman equation \eqref{eq:Friedman} implies $\del_ta\geq\frac19$ and thus $a(t)\geq\frac19t$ on $(0,\infty)$, as well as 
\[-\tau=3\frac{\dot{a}}a= \frac1a+\langle\text{lower order terms}\rangle\ \text{as}\ t\to \infty\ (\text{resp. }\tau\to 0)\,.\]

We choose $t_1>\max\{1,t_0^\prime\}$ large enough (resp. $\tau(t_1)\equiv\tau_0$ small enough) that the following estimates hold for some small $\chi\in(0,\frac12)$ that depends only on $\epsilonnew$:
\begin{align}
Ca(t_1)^{-3}\tau(t_1)^{-1}\leq&\, \chi \label{eq:connect-time-1}
\\
 -\tau(t_1)\cdot{a(t_1)}\in&\,[1-\chi,1+\chi] \label{eq:connect-time-3}
\end{align}

As the solution \change{is Cauchy stable, i.e.,~ it and }its maximal time of existence depend continuously upon the initial data,\footnote{For the argument for Einstein vacuum in CMCSH gauge, see \cite[Theorem 3.1]{AM03}. As with local existence, the argument in the Einstein scalar-field system is largely identical since the only difference amounts to coupling the hyperbolic parts of the system with a further hyperbolic one.} one can choose $\epsilon>0$ in the analogue of Assumption \ref{ass:init} small enough to ensure the following: The solution exists until $t_1>t_0^\prime$ and $(a^{-2}g,a\hat{k},\nabla\phi,a^{3}\fdel\phi)$ remain $K\epsilon$-close to $(\gamma,0,0,C)$ in $H_\gamma^6\times H_\gamma^5\times H_\gamma^5\times H_\gamma^5$ for some suitable $K>0$ along the slab $\cup_{s\in [t_0^\prime,t_1]}\Sigma_s$. What now remains to be shown is that this implies Assumption \ref{ass:fut-init} in the sense that, if $\epsilon$ is small enough, $\epsilonnew$ can be made as small as necessary for Theorem \ref{thm:fut-stab-simple} to apply.\\

Note that the scalings in Definition \ref{def:fut-rescaled} can be rewritten as
\begin{gather*}
\fg-\gamma=(\tau \cdot a)^2\cdot (a^{-2} g-\gamma)+(\tau^2\cdot a^2-1)\gamma,\quad \fk=\frac{\tau}a (a\hat{k})\,,\\
\phi^\prime=C\left(-\tau^{-1}\cdot a^{-3}\right)+\left(-\tau^{-1}\cdot a^{-3}\right)\cdot (a^3n^{-1}(\del_\tau-\Lie_X)\phi-C)\,.
\end{gather*}
Since \eqref{eq:connect-time-3} implies $\tau\cdot a$ is close to $-1$ at $t_1$, $\|(\tau\cdot a)^2(a^{-2}g-\gamma)\|_{H^6}$ can be bounded by $\frac{\epsilonnew^3}2$ for small enough $\epsilon$. Choosing $\chi<\frac{\epsilonnew^3}2$ then implies $\|\fg-\gamma\|_{H^6(\Sigma_{\tau_0})}<\epsilonnew^3$. That $\|\fk\|_{H^5}$ can be made smaller than $\epsilonnew^3$ for small enough $\epsilon>0$ follows since $\frac\tau{a}$ behaves like $\frac1{a^2}$ up to a constant by \eqref{eq:connect-time-3}.\\
For the normal derivative of the wave, notice that $\lvert C\left(-\tau^{-1}\cdot a^{-3}\right)\rvert $ is bounded by $\chi$ due to \eqref{eq:connect-time-1}, and that $-\tau^{-1}a^{-3}$ is equivalent to $a^{-2}$ by \eqref{eq:connect-time-3}. Hence, we can similarly ensure that $\phi^\prime$ is bounded in $H^5$ by $\epsilonnew^3$. Since $\nabla\phi$ is not changed in either rescaling, and bounds on lapse and shift (up to constant) follow from the elliptic estimates in Lemma \ref{lem:fut-ell-est}, it follows each individual norm in Assumption \ref{ass:fut-init} can be bounded by $\epsilonnew^3$ up to constants that depend only on $\gamma$, and hence the initial data assumption itself can be satisfied for suitably small $\epsilonnew>0$.\\
This proves that we can develop from initial data for the Big Bang stability proof to near-Milne initial data within a CMCSH foliation, and thus we obtain Theorem \ref{thm:main-full} from Theorems \ref{thm:main-past} and \ref{thm:fut-stab-simple}.
\end{proof}

\section{Appendix -- Big Bang Stability}\label{sec:appendix}

\subsection{Basic formulas and estimates}\label{subsec:basic}

\subsubsection{Tools from elementary calculus}

\begin{lemma}[A Gronwall lemma]\label{lem:gronwall} Let $f,\chi,\xi:[a,b]\longrightarrow \R$ be continuous functions such that $\chi\geq 0$, $\xi$ is decreasing and, for any $s\in[a,b]$,
\[f(s)\leq \int_s^b\chi(r)f(r)\,dr+\xi(s)\]
is satisfied. Then, for any $t\in[a,b]$, we have
\[f(t)\leq \xi(t)\exp\left(\int_t^b\chi(r)\,dr\right)\,.\]
\end{lemma}
\begin{proof}
This follows by standard arguments as in \cite[Corollary 2-3]{Dra03}.
\end{proof}

\begin{lemma}[A weak fundamental theorem of calculus for square roots]\label{lem:weak-ftoc}
Let $f:(0,t_0]\longrightarrow\R^+_0$ be a $C^1$-function. Then, we have for any $t\in(0,t_0]$:
\begin{equation}\label{eq:weak-ftoc}
\sqrt{f(t)}\leq \sqrt{f(t_0)}+\int_t^{t_0}\frac{\lvert f^\prime(s)\rvert}{2\sqrt{f(s)}}\,ds
\end{equation}
\end{lemma}
\begin{proof} This follows from a straightforward application of the monotone convergence theorem to $g_n=\sqrt{f+\frac1n}$.
\end{proof}

\subsubsection{Levi-Civita tensor identities}\label{subsubsec:basic-contract}
Herein, we collect some basic identities for the Levi-Civita tensor $\epsilonLC[g]$: Firstly, it satisfies the contraction identities, where $\I^a_b$ denotes the Kronecker-symbol:
\begin{subequations}
\begin{align}
\epsilonLC^{ai_1i_2}\epsilonLC_{aj_1j_2}=&\,\I^{i_1}_{j_1}\I^{i_2}_{j_2}-\I^{i_1}_{j_2}\I^{i_2}_{j_1}\label{eq:LCS-contr1}\\
\epsilonLC^{abi}\epsilonLC_{abj_2}=&\,2\I^{i}_j\label{eq:LCS-contr2}\\
\epsilonLC^{abc}\epsilonLC_{abc}=&\,6\label{eq:LCS-contr3}\\
\nabla\epsilonLC=&\,0\label{eq:LCS-cov}
\end{align}
\end{subequations}
The analogous formulas hold for $\epsilonLC[G]$ when raising indices with regard to $G$ instead of $g$.\\
\noindent For a tracefree and symmetric $\Sigma_t$-tangent $(0,2)$-tensor $\mathfrak{T}$ and a $\Sigma_t$-tangent $(0,2)$-tensor $\mathfrak{A}$, the following simplified identities hold:
\begin{subequations}
\begin{align}
(\mathfrak{T}\times \mathfrak{A})_{ij}&={\epsilonLC_i}^{ab}{\epsilonLC_j}^{pq}\mathfrak{T}_{ap}\mathfrak{A}_{bq}+\frac13(\mathfrak{T}\cdot\mathfrak{A})g_{ij} \label{eq:times-lemma-1}\\
(\mathfrak{T}\times g)_{ij}&=-\mathfrak{T}_{ij} \label{eq:times-lemma-2}\\
(\mathfrak{T}\times k)_{ij}&=-\frac{\tau}3\mathfrak{T}_{ij}+(\mathfrak{T}\times\hat{k})_{ij} \label{eq:times-lemma-3}
\end{align}
Further, note the following formulas (for $\tilde{\mathfrak{T}}$ as $\mathfrak{T}$,  $\tilde{\mathfrak{A}}$ as $\mathfrak{A}$ and any $\Sigma_t$-tangent $(0,1)$-tensor $\xi$) (see \cite[p.30]{AM03}):
\begin{align}
\div_g(\mathfrak{A}\wedge\tilde{\mathfrak{A}})&=-\curl \mathfrak{A}\cdot\tilde{\mathfrak{A}}+\mathfrak{A}\cdot\curl\tilde{\mathfrak{A}} \label{eq:div-to-curl}\\
\mathfrak{A}\cdot(\xi\wedge\tilde{\mathfrak{A}})&=-2\xi\cdot(\mathfrak{A}\wedge\tilde{\mathfrak{A}})\\
\mathfrak{T}\cdot(\mathfrak{A}\times\tilde{\mathfrak{T}})&=(\mathfrak{T}\times \mathfrak{A})\cdot\tilde{\mathfrak{T}}
\end{align}
\end{subequations}

\subsubsection{Estimates on contracted tensors}\label{subsubsec:basic-est}

\begin{lemma}\label{lem:BelRobinsonLemmas}
Let $\mathfrak{S},\mathfrak{T}$ be traceless and symmetric $\Sigma_t$-tangent $(0,2)$-tensors, $\mathfrak{M},\mathfrak{N}$ symmetric $\Sigma_t$-tangent $(0,2)$-tensors and $\xi$ a $\Sigma_t$-tangent $(0,1)$-tensor. We define $G, G^{-1}$ and $\lvert\cdot\rvert_G$ via \eqref{eq:rescalingGK}). Then:
\begin{subequations}
\change{\begin{align}
\lvert \mathfrak{M}\odot_G\mathfrak{N}\rvert_G\leq\lvert \mathfrak{M}\rvert_G\lvert \mathfrak{N}\rvert_G,&\quad \mathfrak{M}\odot_g\mathfrak{N}= a^{-2}\mathfrak{M}\odot_G\mathfrak{N}\label{eq:odot}\\
\lvert \mathfrak{S}\times_G \mathfrak{T}\rvert_G\lesssim\lvert \mathfrak{S}\rvert_G\lvert \mathfrak{T}\rvert_G,&\quad (\mathfrak{S}\times \mathfrak{T})_{ij}=a^{-3}(\mathfrak{S}\times_G \mathfrak{T})_{ij}\label{eq:times}\\
\lvert \mathfrak{S}\wedge_G \mathfrak{T}\rvert_G\leq \lvert \mathfrak{S}\rvert_G\lvert \mathfrak{T}\rvert_G,&\quad \left(\mathfrak{S}\wedge \mathfrak{T}\right)_l= a^{-3}\left(\mathfrak{S}\wedge_G \mathfrak{T}\right)\label{eq:wedge}\\
\lvert \xi\wedge_G \mathfrak{T}\rvert_G\leq\lvert \xi\rvert_G\lvert \mathfrak{T}\rvert_G,&\quad \left(\xi\wedge\mathfrak{T}\right)_{ij}= a^{-1}(\xi\wedge_G\mathfrak{T})_{ij}\label{eq:wedge2}\\
\lvert\curl_G\mathfrak{M}\rvert_G\lesssim \lvert\nabla \mathfrak{M}\rvert_G,&\quad \curl \mathfrak{M}_{ij}= a^{-1}\curl_G\mathfrak{M}_{ij}\label{eq:curl}
\end{align}}
\end{subequations}
\end{lemma}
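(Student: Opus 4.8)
\textbf{Proof strategy for Lemma \ref{lem:BelRobinsonLemmas}.}

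The plan is to verify each of \eqref{eq:odot}--\eqref{eq:curl} in two independent pieces: the pointwise norm inequality with respect to $\lvert\cdot\rvert_G$, and the scaling identity relating the $g$-operation to its $G$-analogue. The pointwise inequalities are standard consequences of Cauchy--Schwarz applied componentwise after raising indices with $G^{-1}$; for instance, for $\mathfrak{M}\odot_G\mathfrak{N}$ one writes $(\mathfrak{M}\odot_G\mathfrak{N})_{ij} = \mathfrak{M}_{ik}(G^{-1})^{kl}\mathfrak{N}_{lj}$ and estimates the contraction by viewing it as a matrix product, so that $\lvert\mathfrak{M}\odot_G\mathfrak{N}\rvert_G \le \lvert\mathfrak{M}\rvert_G\lvert\mathfrak{N}\rvert_G$ follows from submultiplicativity of the Frobenius norm under the $G$-orthonormal frame. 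The products involving the Levi-Civita tensor (the $\times_G$, $\wedge_G$ cases) reduce to the same argument once one uses the contraction identity \eqref{eq:LCS-contr3}, i.e.\ that $\lvert\epsilonLC[G]\rvert_G = \sqrt{6}$ is a fixed constant, which is exactly why these estimates hold up to an implicit constant rather than with constant $1$. For $\curl_G$, one simply bounds $\lvert\curl_G\mathfrak{M}\rvert_G \lesssim \lvert\epsilonLC[G]\rvert_G\lvert\nabla\mathfrak{M}\rvert_G$ after recalling that $\nabla$ is the common Levi-Civita connection of $g$ and $G$, so $\nabla\epsilonLC[G]=0$ by \eqref{eq:LCS-cov} and no derivative falls on the volume tensor.

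For the scaling identities, I would track the powers of $a$ that enter through raising indices and through $\epsilonLC[G]_{ijk} = a^{-3}\epsilonLC[g]_{ijk}$ (established in \eqref{eq:rescalingGK} and the remarks of Section \ref{subsubsec:notation-contract}). Each raised index contributes a factor $(G^{-1})^{ij} = a^2 g^{ij}$, i.e.\ one power $a^2$ relative to the $g$-version; each factor of $\epsilonLC[G]$ replacing $\epsilonLC[g]$ contributes $a^{-3}$. Going through the definitions in Section \ref{subsubsec:notation-contract}: $\odot_g$ has one raised index, so $\mathfrak{M}\odot_g\mathfrak{N} = a^{-2}\,\mathfrak{M}\odot_G\mathfrak{N}$; $\times$ involves two Levi-Civita tensors with four raised indices in the principal term, giving $a^{-6}\cdot a^{8} = a^{2}$ relative scaling of the $g$-term to the $G$-term, hence $(\mathfrak{S}\times\mathfrak{T})_{ij} = a^{-3}(\mathfrak{S}\times_G\mathfrak{T})_{ij}$ once one also checks the lower-order $\tfrac13(\cdot)g_{ij}$ and trace correction terms carry the same scaling (using $g_{ij} = a^2 G_{ij}$ and that the trace term vanishes here by tracelessness); $\wedge$ between two $(0,2)$-tensors has one $\epsilonLC$ with three raised indices so $a^{-3}\cdot a^{6} = a^{3}$, giving the $a^{-3}$ factor; $\xi\wedge\mathfrak{T}$ has one $\epsilonLC$ but only one raised index on $\xi$'s side in the relevant contraction (the structure $\epsilonLC_a{}^{cd}\xi_c\mathfrak{T}_{db}$), netting $a^{-3}\cdot a^{4} = a$, hence the $a^{-1}$; and $\curl$ has one $\epsilonLC$ with one raised index together with $\nabla$ (which is $a$-independent), giving $a^{-3}\cdot a^{2} = a^{-1}$. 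Each of these is a short bookkeeping computation.

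The main obstacle --- such as it is --- is purely organizational: ensuring that in the composite operations ($\times$ and the curl--div case) every piece of the defining formula, including the trace-correction and pure-trace terms in \eqref{eq:times-lemma-1}--\eqref{eq:times-lemma-3}, scales uniformly, so that the single factor of $a$ can be pulled out cleanly. This is why the hypotheses restrict $\mathfrak{S},\mathfrak{T}$ to be traceless and symmetric: the $-\tfrac13(\operatorname{tr}_g \mathfrak{S}\cdot\operatorname{tr}_g\mathfrak{T})g_{ij}$ term in $\times$ then drops out and the remaining $\tfrac13(\mathfrak{S}\cdot\mathfrak{T})g_{ij}$ term, with $\mathfrak{S}\cdot\mathfrak{T} = \langle\mathfrak{S},\mathfrak{T}\rangle_g = a^{-2}\langle\mathfrak{S},\mathfrak{T}\rangle_G$ and $g_{ij} = a^2 G_{ij}$, indeed scales exactly like the principal term. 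Once this consistency check is done, the lemma follows by assembling the pointwise bounds and the scaling identities; no nontrivial analysis is involved.
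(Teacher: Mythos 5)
Your overall strategy --- pointwise Cauchy--Schwarz/submultiplicativity after $G$-contraction, and tracking powers of $a$ through $g^{-1}=a^{-2}G^{-1}$ and $\epsilonLC[g]=a^3\epsilonLC[G]$ --- is exactly the paper's approach, which just notes ${\epsilonLC[g]_i}^{cd}=a^{-1}{\epsilonLC[G]_i}^{\sharp cd}$ (their \eqref{eq:epsilonLC-resc}) and leaves the rest as bookkeeping.

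However, your bookkeeping for $\times$ and $\curl$ is internally inconsistent. For $\times$ you correctly compute the factor $a^{-6}\cdot a^8 = a^2$, so that $\mathfrak{S}\times_G\mathfrak{T}=a^2\,\mathfrak{S}\times\mathfrak{T}$ and hence $\mathfrak{S}\times\mathfrak{T}=a^{-2}\,\mathfrak{S}\times_G\mathfrak{T}$ --- and the $\tfrac13\langle\cdot,\cdot\rangle g_{ij}$ term scales the same way, since $\langle\cdot,\cdot\rangle_g=a^{-4}\langle\cdot,\cdot\rangle_G$ and $g_{ij}=a^2G_{ij}$. Yet you then write $a^{-3}$, which does not follow from your own arithmetic. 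The factor is in fact $a^{-2}$: you can cross-check this against the derivation of \eqref{eq:REEqE} from \eqref{eq:EEqE}, where $E=a^{-4}\RE$, $\hat{k}=a^{-1}\Sigma$ and multiplying by $a^4$ gives $a^4\cdot a^{-5}\cdot a^{-2}=a^{-3}$ in front of $\RE\times_G\Sigma$, consistent with $\times = a^{-2}\times_G$ but not with $\times = a^{-3}\times_G$. (The $a^{-3}$ in \eqref{eq:times} appears to be a typo in the lemma statement; you should have flagged the discrepancy rather than override your computation.) For $\curl$, you write ``one raised index'' and compute $a^{-3}\cdot a^2=a^{-1}$; but ${\epsilonLC_i}^{cd}\nabla_d A_{cj}$ raises \emph{two} indices on $\epsilonLC$, so the factor for $\curl_G$ relative to $\curl$ is $a^{-3}\cdot a^4 = a$. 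That then gives $\curl\mathfrak{M}=a^{-1}\curl_G\mathfrak{M}$ after inverting, exactly as in the other cases. Your answer $a^{-1}$ is numerically right but only because the miscount and the missing reciprocal cancel; as written, your computation would actually assert $\curl_G = a^{-1}\curl$, which is the wrong direction.
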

\begin{proof} The estimates with respect to the unrescaled metric are direct consequences of the contraction identities \eqref{eq:LCS-contr1}-\eqref{eq:LCS-contr3} \change{replacing $g$ with $G$}, and \change{the scalings }follow simply by tracking the effects of the rescaling in Definition \ref{def:rescaled}. In particular, 
\begin{equation}\label{eq:epsilonLC-resc}
{\epsilonLC[g]_{i}}^{cd}=g^{cj}g^{dk}\epsilonLC[g]_{ijk}=\left(a^{-2} (G^{-1})^{cj}\right)\left(a^{-2} (G^{-1})^{dk}\right)a^3\epsilonLC[G]_{ijk}=a^{-1}{\epsilon[G]_{i}}^{\sharp cd}\
\end{equation}
determines the scaling of the Levi-Civita-Symbol.
\end{proof}

\subsection{Commutators}\label{subsec:commutators} Herein, we collect a variety of commutators of spatial derivative operators with each other as well as with time derivatives. While these mostly follow by standard computations, we use the fact that our spatial hypersurfaces are three-dimensional to significantly simplify the spatial commutator formulas, and need to apply the rescaled equations from Proposition \ref{prop:REEq} for the time derivative formulas.

For higher order commutators, we denote by $\mathfrak{J}$ terms within the commutator formula that contribute junk terms at any point where this commutator formula is used. Furthermore, in the following, $\zeta$ denotes a scalar function on $\M$ and $\mathfrak{T}$ denotes a $\Sigma_t$-tangent, symmetric $(0,2)$-tensor, always with sufficient regularity for the equations to make sense. Moreover, recall the schematic $\ast$-notation as introduced in subsection \ref{subsubsec:schematic-notation}.

\begin{corollary}[Schematic first order spatial commutators]\label{lem:comm-space-first}
For $\zeta$ and $\mathfrak{T}$ as above, the following identities hold:
\begin{subequations}
\begin{align}
[\Lap,\nabla]\zeta=&\,\Ric[G]\ast\change{\nabla\zeta}\label{eq:[Lap,nabla]SF}\\
[\Lap,\nabla^2]\zeta=&\,\Ric[G]\ast\nabla^2\zeta+\nabla\Ric[G]\ast\nabla\zeta\label{eq:[Lap,nabla2]SF}\\
[\Lap,\nabla]\mathfrak{T}=&\,\Ric[G]\ast\nabla\mathfrak{T}+\nabla \Ric[G]\ast\mathfrak{T}\label{eq:[Lap,nabla]T}\\
[\Lap,\nabla^2]\mathfrak{T}=&\,\Ric[G]\ast\nabla^2\mathfrak{T}+\nabla\Ric[G]\ast\nabla\mathfrak{T}+\nabla^2\Ric[G]\ast\mathfrak{T}\label{eq:[Lap,nabla2]T}\\
[\Lap,\div_G]\mathfrak{T}=&\,\Ric[G]\ast\nabla\mathfrak{T}+\nabla\Ric[G]\ast\mathfrak{T}\label{eq:[Lap,div]}\\
\change{[\Lap,\curl_G]}=&\,\change{\epsilonLC[G]\ast\left(\Ric[G]\ast\nabla\mathfrak{T}+\nabla\Ric[G]\ast\mathfrak{T}\right)}\label{eq:[Lap,curl]}
\end{align}
\end{subequations}
\end{corollary}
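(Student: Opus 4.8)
All six identities are read off from a single mechanism, so the plan is to set that up once and then apply it mechanically. The mechanism is the Ricci identity for commuting covariant derivatives, combined with the fact that each $(\Sigma_t,G)$ is three-dimensional: its Weyl tensor vanishes, hence $\Riem[G]$ is an algebraic expression in $G$ and $\Ric[G]$ (the usual $3$D decomposition, with $R[G]=(G^{-1})^{ab}\Ric[G]_{ab}$), i.e.\ schematically $\Riem[G]=G\ast\Ric[G]$. Together with $\nabla G=0$, $\nabla G^{-1}=0$ and $\nabla\epsilonLC[G]=0$ (see \eqref{eq:LCS-cov}), this explains both why \emph{only} $\Ric[G]$ (never a bare Riemann tensor) appears and why all $G$- and $\epsilonLC[G]$-contractions can be absorbed into the $\ast$-notation. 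First I would record the Ricci identity in schematic form: for any $\Sigma_t$-tangent tensor field $\mathfrak{U}$ one has $[\nabla_a,\nabla_b]\mathfrak{U}=\Riem[G]\ast\mathfrak{U}=G\ast\Ric[G]\ast\mathfrak{U}$ (the number of Riemann factors equals the rank of $\mathfrak{U}$ but is irrelevant once we pass to $\ast$-notation); in particular $[\nabla_a,\nabla_b]\zeta=0$ for a scalar, so for scalars the first nontrivial commutation occurs on $\nabla\zeta$.

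Next I would treat the scalar cases. For \eqref{eq:[Lap,nabla]SF}, write $\Lap=(G^{-1})^{bc}\nabla_b\nabla_c$, move the (covariantly constant) metric factor outside, and commute $\nabla_b\nabla_c$ past the outer $\nabla_a$ by applying the Ricci identity to the one-form $\nabla\zeta$; after contracting with $(G^{-1})^{bc}$ every surviving term is of the form $\Riem[G]\ast\nabla\zeta=\Ric[G]\ast\nabla\zeta$. Identity \eqref{eq:[Lap,nabla2]SF} is obtained the same way, now commuting past $\nabla^2\zeta$: by the Leibniz rule the curvature produced can carry at most one of the two derivatives coming from $\Lap$ before being contracted, which yields exactly the two terms $\Ric[G]\ast\nabla^2\zeta$ and $\nabla\Ric[G]\ast\nabla\zeta$. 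The tensorial statements \eqref{eq:[Lap,nabla]T}, \eqref{eq:[Lap,nabla2]T} have the identical structure; the only change is that the Ricci identity applied to $\mathfrak{T}$ (or its covariant derivatives) produces one Riemann factor per free index, all of which are schematically $\Ric[G]\ast\mathfrak{T}$, so one lands on $\sum_{I_1+I_2=m}\nabla^{I_1}\Ric[G]\ast\nabla^{I_2}\mathfrak{T}$ with $m=1$ for \eqref{eq:[Lap,nabla]T} and $m=2$ for \eqref{eq:[Lap,nabla2]T}. Finally, $\div_G\mathfrak{T}=(G^{-1})^{bc}\nabla_b\mathfrak{T}_{c\cdot}$ and $\curl_G\mathfrak{T}$ are, up to contraction with $G^{-1}$ and with the covariantly constant tensor $\epsilonLC[G]$, just $\nabla$ acting on $\mathfrak{T}$; since $\Lap$ commutes with these contractions, \eqref{eq:[Lap,div]} follows from \eqref{eq:[Lap,nabla]T}, and \eqref{eq:[Lap,curl]} follows from \eqref{eq:[Lap,nabla]T} with an $\epsilonLC[G]\ast$ carried along unchanged.

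The computation is entirely routine, so there is no deep obstacle; the only point demanding a little care is the bookkeeping of derivative counts. Concretely, one must check that in \eqref{eq:[Lap,nabla2]SF}–\eqref{eq:[Lap,nabla2]T} no curvature term of order higher than $\nabla^2\Ric[G]$ can occur, and that the schematic sums range exactly over $I_1+I_2\in\{1,2\}$ as stated. This is immediate from the fact that $\Lap$ contributes precisely two derivatives, so during the commutation at most two of them can accumulate on any single curvature factor, and the Leibniz rule distributes the remaining ones onto the differentiated object; tracking this accounting is the whole content of reducing the exact (non-schematic) commutator formulas to the form displayed in the corollary.
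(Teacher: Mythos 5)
Your proposal is correct and follows the same route as the paper's own proof: the paper likewise notes that in three spatial dimensions $\Riem[G]$ decomposes algebraically into $G$ and $\Ric[G]$, so that every $\nabla^I\Riem[G]$ contributes a schematic $\nabla^I\Ric[G]\ast G\ast\dots$, and then invokes ``standard commutation formulas'' (the Ricci identity) plus covariant constancy of $G$, $G^{-1}$, $\epsilonLC[G]$ to handle the $\div_G$ and $\curl_G$ cases. You have simply spelled out the derivative bookkeeping the paper leaves implicit; the only minor imprecision is the phrase ``no curvature term of order higher than $\nabla^2\Ric[G]$ can occur'' applied uniformly to \eqref{eq:[Lap,nabla2]SF}--\eqref{eq:[Lap,nabla2]T}, since for the scalar case \eqref{eq:[Lap,nabla2]SF} the highest curvature term is in fact only $\nabla\Ric[G]$ (a harmless overestimate, not an error).
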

\begin{proof}
Since we are working in three spatial dimensions, the following identity holds:
\begin{align*}
\Riem[G]_{ijkl}=&\,G_{ik}\Ric[G]_{jl}-G_{il}\Ric[G]_{jk}+G_{jl}\Ric[G]_{ik}-G_{jk}\Ric[G]_{il}\\
&\,-\frac12 (G^{-1})^{mn}\Ric[G]_{mn}(G_{ik}G_{jl}-G_{il}G_{jk})
\end{align*}
Hence, for any $I\in\N_0$, any $\nabla^I\Riem[G]$-term reduces to a sum of products and contractions of $\nabla^I\Ric[G]$ with various metric tensors that are all suppressed in schematic notation. With this in mind, the above statements are simply direct consequences of standard commutation cormulas and \eqref{eq:epsilonLC-resc}. 
\end{proof}

\begin{lemma}[Higher order spatial commutators]\label{lem:comm-space}
For $l\in\N,\,l\geq 2$, the following formulas hold (and extend to $l=1$ when dropping any term involving $\Lap^{l-2}$):
\begin{subequations}
\begin{align*}
\numberthis\label{eq:[Lap-l,nabla]SF}[\Lap^l,\nabla]\zeta=&\,\Lap^{l-1}\Ric[G]\ast\nabla\zeta+\nabla\Lap^{l-2}\Ric[G]\ast\nabla^2\zeta+\mathfrak{J}([\Lap^l,\nabla]\zeta)\\
\numberthis\label{eq:[Lap-l,nabla2]SF}[\Lap^l,\nabla^2]\zeta=&\,\nabla\Lap^{l-1}\Ric[G]\ast\nabla\zeta+\nabla^2\Lap^{l-2}\Ric[G]\ast\nabla^2\zeta+\mathfrak{J}([\Lap^l,\nabla^2]\zeta)\\
\numberthis\label{eq:[Lap-l,nabla]T}[\Lap^l,\nabla]\mathfrak{T}=&\,\nabla\Lap^{l-1}\Ric[G]\ast \mathfrak{T}+\nabla^2\Lap^{l-2}\Ric[G]\ast\nabla \mathfrak{T}+\mathfrak{J}([\Lap^l,\nabla]\mathfrak{T})\,,\\
\numberthis\label{eq:[Lap-l,nabla2]T}[\Lap^l,\nabla^2]\mathfrak{T}=&\,\nabla^2\Lap^{l-1}\Ric[G]\ast \mathfrak{T}+\nabla^3\Lap^{l-2}\Ric[G]\ast \nabla \mathfrak{T}+\mathfrak{J}([\Lap^l,\nabla^2]\mathfrak{T})\,,\\
\numberthis\label{eq:[Lap-l,div]T}[\Lap^l,\div_G]\mathfrak{T}=&\,\nabla\Lap^{l-1}\Ric[G]\ast \mathfrak{T}+\nabla^2\Lap^{l-2}\Ric[G]\ast \nabla \mathfrak{T}+\mathfrak{J}([\Lap^l,\div_G]\mathfrak{T})\,,\\
\numberthis\label{eq:[Lap-l,curl]}\change{[\Lap^l,\curl_G]\mathfrak{T}=&\change{\,\epsilonLC[G]\ast\left(\nabla\Lap^{l-1}\Ric[G]\ast \mathfrak{T}+\nabla^2\Lap^{l-2}\Ric[G]\ast\nabla \mathfrak{T}\right)+\mathfrak{J}([\Lap^l,\curl_G]\mathfrak{T})}}
\end{align*}
with junk terms, where $\mathcal{I}=I_1+\dots+I_{l-m}$,
{\scriptsize 
\begin{align*}
\mathfrak{J}([\Lap^l,\nabla]\zeta)=&\,\sum_{\substack{I_1+I_\zeta=2(l-1),\\\,I_\zeta\geq 2}}\nabla^{I_1}\Ric[G]\ast\nabla^{I_\zeta+1}\zeta+\sum_{m=0}^{l-2}\sum_{\mathcal{I}+I_{\zeta}=2m}\nabla^{I_1}\Ric[G]\ast\dots\ast\nabla^{I_{l-m}}\Ric[G]\ast\nabla^{I_\zeta+1}\zeta\\
\mathfrak{J}([\Lap^l,\nabla^2]\zeta)=&\,\sum_{\substack{I_1+I_\zeta=2(l-1)+1,\\I_1,I_\zeta\geq 2}}\nabla^{I_1}\Ric[G]\ast\nabla^{I_\zeta+1}\zeta+\sum_{m=0}^{l-2}\sum_{\mathcal{I}+I_\zeta=2m+1}\nabla^{I_1}\Ric[G]\ast\dots\ast\nabla^{I_{l-m}}\Ric[G]\ast\nabla^{I_\zeta+1}\zeta\\
\mathfrak{J}([\Lap^l,\nabla]\mathfrak{T})=&\,\sum_{\substack{I_1+I_\mathfrak{T}=2(l-1)+1,\\I_\mathfrak{T}\geq2}}\nabla^{I_1}\Ric[G]\ast\nabla^{I_\mathfrak{T}}\mathfrak{T}+\sum_{m=0}^{l-2}\sum_{\mathcal{I}+I_\mathfrak{T}=2m+1}\nabla^{I_1}\Ric[G]\ast\dots\ast\nabla^{I_{l-m}}\Ric[G]\ast\nabla^{I_\mathfrak{T}}\mathfrak{T}\\
\mathfrak{J}([\Lap^l,\nabla^2]\mathfrak{T})=&\,\sum_{\substack{I_1+I_\mathfrak{T}=2l,\\I_\mathfrak{T}\geq2}}\nabla^{I_1}\Ric[G]\ast\nabla^{I_\mathfrak{T}}\mathfrak{T}+\sum_{m=0}^{l-2}\sum_{\mathcal{I}+I_\mathfrak{T}=2m+2}\nabla^{I_1}\Ric[G]\ast\dots\ast\nabla^{I_{l-m}}\Ric[G]\ast\nabla^{I_\mathfrak{T}}T\\
\mathfrak{J}([\Lap^l,\div_G]\mathfrak{T})=&\,\sum_{\substack{I_1+I_\mathfrak{T}=2(l-1)+1,\\I_\mathfrak{T}\geq2}}\nabla^{I_1}\Ric[G]\ast\nabla^{I_\mathfrak{T}}\mathfrak{T}+\sum_{m=0}^{l-2}\sum_{\mathcal{I}+I_\mathfrak{T}=2m+1}\nabla^{I_1}\Ric[G]\ast\dots\ast\nabla^{I_{l-m}}\Ric[G]\ast\nabla^{I_\mathfrak{T}}\mathfrak{T}\\
\change{\mathfrak{J}([\Lap^l,\curl_G]\mathfrak{T})}=&\change{\epsilonLC[G]\ast\left[\sum_{\substack{I_1+I_\mathfrak{T}=2(l-1)+1,\\I_\mathfrak{T}\geq2}}\nabla^{I_1}\Ric[G]\ast\nabla^{I_\mathfrak{T}}\mathfrak{T}+\sum_{m=0}^{l-2}\sum_{\substack{\mathcal{I}+\\+I_\mathfrak{T}=2m+1}}\nabla^{I_1}\Ric[G]\ast\dots\ast\nabla^{I_{l-m}}\Ric[G]\ast\nabla^{I_\mathfrak{T}}\mathfrak{T}\right]}\,.
\end{align*}}
\end{subequations}
\end{lemma}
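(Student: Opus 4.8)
The plan is to prove Lemma \ref{lem:comm-space} by induction on $l$, bootstrapping off the first-order commutators in Corollary \ref{lem:comm-space-first}. The base case $l=1$ is exactly Corollary \ref{lem:comm-space-first} (with all $\Lap^{l-2}$-terms understood to be absent, as the statement permits). For the inductive step, I would write $\Lap^{l+1}=\Lap\circ\Lap^l$ and expand each commutator via the standard identity
\begin{equation*}
[\Lap^{l+1},D]=\Lap\,[\Lap^l,D]+[\Lap,D]\,\Lap^l\,,
\end{equation*}
where $D$ stands for whichever spatial operator ($\nabla$, $\nabla^2$, $\div_G$, or $\curl_G$) is under consideration. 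The second term is handled by applying the relevant first-order formula from Corollary \ref{lem:comm-space-first} (e.g.\ \eqref{eq:[Lap,nabla]SF}, \eqref{eq:[Lap,div]}, or \eqref{eq:[Lap,curl]}) with $\zeta$ or $\mathfrak{T}$ replaced by $\Lap^l\zeta$ or $\Lap^l\mathfrak{T}$; this immediately produces a term of the form $\Ric[G]\ast\nabla\Lap^l(\cdot)$ (or with an $\epsilonLC[G]$ factor in the $\curl_G$ case), which after distributing $\nabla$ past one $\Lap$ (using Corollary \ref{lem:comm-space-first} again, absorbing the remainders into junk) gives the leading-order terms $\nabla\Lap^{l}\Ric[G]\ast(\cdot)$ as claimed, plus contributions falling into the junk sums. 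The first term $\Lap\,[\Lap^l,D]$ uses the inductive hypothesis: apply $\Lap$ to the right-hand side of the order-$l$ formula, moving $\Lap$ through the $\ast$-products by the Leibniz rule, which either raises the derivative count on the designated $\Ric[G]$ factor by two — reproducing the order-$(l+1)$ leading terms — or lands on another factor, which produces precisely the new junk terms of the order-$(l+1)$ sums (after re-indexing $m\mapsto m+1$ and shifting the index ranges).

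The key bookkeeping point is to verify that the junk-term sums at order $l+1$ have exactly the stated index ranges. I would track this carefully: the first sum in each $\mathfrak{J}$ (the "single $\Ric$" sum with total weight $2l$ or $2l\pm 1$) arises from the $[\Lap,D]\Lap^l$ branch together with the terms in $\Lap\,\mathfrak{J}$ where $\Lap$ hits the curvature factor; the second sum (the iterated-$\Ric$ sum over $m=0,\dots,l-1$) arises from $\Lap$ acting on the iterated-$\Ric$ sum at order $l$, shifting $m\mapsto m+1$ so the range becomes $m=1,\dots,l-1$, together with the $m=0$ term produced when $\Lap$ hits one of the two $\Ric$ factors in the leading-order order-$l$ term. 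One must also use the three-dimensional Riemann-to-Ricci reduction (as in the proof of Corollary \ref{lem:comm-space-first}) to ensure that every $\nabla^I\Riem[G]$ that formally appears is rewritten as $\nabla^I\Ric[G]$ contracted with metric factors, which are invisible in $\ast$-notation. The $\curl_G$ cases \eqref{eq:[Lap,curl]} and \eqref{eq:[Lap-l,curl]} are identical to the $\div_G$ cases up to an overall $\epsilonLC[G]$ factor, since $\epsilonLC[G]$ is parallel ($\nabla\epsilonLC[G]=0$ by \eqref{eq:LCS-cov}) and so commutes with every $\nabla$ and every $\Lap$; thus these require no extra work beyond carrying the $\epsilonLC[G]$ along.

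The main obstacle I anticipate is purely combinatorial rather than conceptual: confirming that the stated index sets for the junk sums are closed under the inductive step, i.e.\ that applying one more Laplacian to the order-$l$ junk does not generate a term outside the advertised order-$(l+1)$ sums (in particular, that no term with three or more derivatives on the "tensor" slot $\nabla^{I_\mathfrak{T}}\mathfrak{T}$ beyond what is listed, or with the wrong parity of total weight, can appear). This is a matter of carefully applying the Leibniz rule to $\ast$-products and the first-order commutators, re-indexing sums, and checking parity; it involves no genuine analytic difficulty since, throughout the rest of the paper, only $L^2_G$-bounds on these junk terms are needed (collected in Section \ref{subsec:L2-error-est}) and the precise schematic form is only used to read off how many derivatives and curvature factors each summand carries. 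I would present the $[\Lap^l,\nabla]\zeta$ case in full and remark that \eqref{eq:[Lap-l,nabla2]SF}--\eqref{eq:[Lap-l,curl]} follow by the same mechanism, noting only the (trivial) modifications: an extra $\nabla$ for the $\nabla^2$ versions, the $\div_G$ contraction for \eqref{eq:[Lap-l,div]T}, and the parallel $\epsilonLC[G]$ factor for \eqref{eq:[Lap-l,curl]}.
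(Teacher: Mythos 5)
Your proposal is correct and takes essentially the same route as the paper: the paper's own proof is the one-line statement that the formulas follow by iterating the first-order commutators of Corollary \ref{lem:comm-space-first}, and your induction via $[\Lap^{l+1},D]=\Lap[\Lap^l,D]+[\Lap,D]\Lap^l$, carrying the parallel $\epsilonLC[G]$ factor through and using the three-dimensional Riemann-to-Ricci reduction, is precisely that iteration spelled out. The only caveat is a minor misattribution in your narration of which branch supplies the leading terms (the $[\Lap,D]\Lap^l$ branch lands in the junk sum with $I_1=0$, while the true leading contributions come from $\Lap$ hitting the curvature factor in $\Lap[\Lap^l,D]$, which you also correctly identify later), but this does not affect the validity of the argument.
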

\begin{proof}
The formulas follow by applying the formulas from Lemma \ref{lem:comm-space-first} inductively.
\end{proof}

\begin{lemma}[Time derivative commutators]\label{lem:com-time-first} With respect to a solution to the Einstein scalar-field system as in Proposition \ref{prop:eq}, the following commutator formulas hold:
\begin{subequations}
\begin{align*}
\numberthis\label{eq:[del-t,nabla]zeta}[\del_t,\nabla_i]\zeta=&\,0\\
\numberthis\label{eq:[del-t,nabla-sharp]zeta}[\del_t,\nabla^{\sharp i}]\zeta=&\,2(N+1)a^{-3}\Sigma^{\sharp ij}\nabla_j\zeta-2N\frac{\dot{a}}a\nabla^{\sharp i}\zeta\\
\numberthis\label{eq:[del-t,Lap]zeta}[\del_t,\Lap]\zeta=&\,2(N+1)a^{-3}\langle\Sigma,\nabla^2\zeta\rangle_G-2N\frac{\dot{a}}a\Lap\zeta\\
&\,-2(N+1)a^{-3}\langle\div_G\Sigma,\nabla\zeta\rangle_G-2a^{-3}\langle\Sigma,\nabla N\nabla\zeta\rangle_G+\frac{\dot{a}}a\langle\nabla N,\nabla\zeta\rangle_G\\
\numberthis\label{eq:[del-t,nabla]T}
[\del_t,\nabla]\mathfrak{T}=&\,a^{-3}\left((N+1)\,\nabla\Sigma+\Sigma\ast\nabla N\right)\ast\mathfrak{T}+\frac{\dot{a}}a\,\nabla N\ast\mathfrak{T}\\
\numberthis\label{eq:[del-t,Lap]T}[\del_t,\Lap]\mathfrak{T}
=&\,a^{-3}(N+1)\Sigma\ast\nabla^2\mathfrak{T}+\frac{\dot{a}}aN\change{\Lap\mathfrak{T}}+a^{-3}\nabla((N+1)\Sigma)\ast\nabla \mathfrak{T}+\frac{\dot{a}}a\nabla N\ast\nabla \mathfrak{T}\\
&\,+a^{-3}\nabla^2((N+1)\Sigma)\ast \mathfrak{T}-\frac{\dot{a}}a\nabla^2N\ast \mathfrak{T}
\end{align*}
\end{subequations}
\end{lemma}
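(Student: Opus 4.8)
\textbf{Plan of proof for Lemma \ref{lem:com-time-first}.} The strategy is entirely computational: all six identities are consequences of the rescaled evolution equations in Proposition \ref{prop:REEq}, the formula \eqref{eq:delt-muG} for $\del_t\mu_G$, and the fact that $g$ and $G$ share the same Levi-Civita connection (see Section \ref{subsubsec:notation-indices}). First I would record the two base cases from which everything else follows. The identity \eqref{eq:[del-t,nabla]zeta} is immediate, since $\nabla_i\zeta=\del_i\zeta$ for a scalar function and $\del_t$ commutes with $\del_i$. For \eqref{eq:[del-t,nabla-sharp]zeta}, I would write $\nabla^{\sharp i}\zeta=(G^{-1})^{ij}\nabla_j\zeta$ and differentiate, so that $[\del_t,\nabla^{\sharp i}]\zeta=(\del_t(G^{-1})^{ij})\nabla_j\zeta$; inserting \eqref{eq:REEqG-1} and recalling $(\Sigma^\sharp)^{ij}=(G^{-1})^{ik}(G^{-1})^{jl}\Sigma_{kl}$ gives exactly the claimed expression. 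Identity \eqref{eq:[del-t,Lap]zeta} then follows by applying $[\del_t,\nabla^{\sharp i}\nabla_i]$ via the product rule: one contribution is $[\del_t,\nabla^{\sharp i}](\nabla_i\zeta)$, which is \eqref{eq:[del-t,nabla-sharp]zeta} with $\nabla_j\zeta$ replaced by $\nabla_j\nabla_i\zeta$ and contracted, and the other is $\nabla^{\sharp i}[\del_t,\nabla_i]\zeta$, which vanishes by \eqref{eq:[del-t,nabla]zeta} \emph{at the level of the first derivative} but produces the $\div_G\Sigma$ and $\nabla N\nabla\zeta$ terms once one commutes $\del_t$ through the \emph{second} covariant derivative using the Christoffel evolution \eqref{eq:REEqChr}. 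Concretely, $[\del_t,\nabla_i\nabla_j]\zeta=-(\del_t\Gamma_{ij}^k[G])\nabla_k\zeta$, and inserting \eqref{eq:REEqChr} and contracting with $(G^{-1})^{ij}$ yields the remaining three terms of \eqref{eq:[del-t,Lap]zeta}; care is needed to correctly combine the $(N+1)a^{-3}\Sigma$-part of \eqref{eq:REEqChr} (which after contraction gives $-2(N+1)a^{-3}\langle\div_G\Sigma,\nabla\zeta\rangle_G$ once one uses that $\Lap^{\frac{L}2}\Sigma$ is trace-free, here with $L=0$), the $a^{-3}\nabla N\Sigma$-part, and the $\frac{\dot a}a\nabla N$-part.

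For the tensorial commutators I would proceed similarly, now tracking the Christoffel terms acting on each index of $\mathfrak{T}$. The identity \eqref{eq:[del-t,nabla]T} is obtained from $[\del_t,\nabla_i]\mathfrak{T}_{kl}=-(\del_t\Gamma_{ik}^m[G])\mathfrak{T}_{ml}-(\del_t\Gamma_{il}^m[G])\mathfrak{T}_{km}$ by substituting \eqref{eq:REEqChr} and symmetrizing in $(k,l)$, which accounts for the $\nabla_i\Sigma_{(k}^m$, $\nabla_{(k}\Sigma_i^{\sharp m}$, $\nabla^m\Sigma_{i(k}$ structure in the first bracket, the analogous lapse terms in the second, and the $\frac{\dot a}a$-terms in the third line; the only subtlety is bookkeeping the index placement and the fact that \eqref{eq:REEqChr} is already symmetric in its two lower indices. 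Then \eqref{eq:[del-t,Lap]T} follows by applying the product rule to $\Lap\mathfrak{T}=(G^{-1})^{ij}\nabla_i\nabla_j\mathfrak{T}$: the term $(\del_t(G^{-1})^{ij})\nabla_i\nabla_j\mathfrak{T}$ contributes the $a^{-3}(N+1)\Sigma\ast\nabla^2\mathfrak{T}$ and $\frac{\dot a}aN\Lap\mathfrak{T}$ pieces via \eqref{eq:REEqG-1}, while $(G^{-1})^{ij}[\del_t,\nabla_i\nabla_j]\mathfrak{T}$ produces the terms with $\nabla((N+1)\Sigma)$, $\nabla N$, $\nabla^2((N+1)\Sigma)$ and $\nabla^2 N$ once one differentiates the two Christoffel factors in $\nabla_i\nabla_j\mathfrak{T}$ using \eqref{eq:REEqChr} — one Christoffel gets differentiated (giving a $\nabla\Gamma$-type term, schematically $\nabla((N+1)\Sigma)$ or $\nabla^2 N$ after inserting \eqref{eq:REEqChr}), the other is already present and hits a $\nabla\mathfrak{T}$. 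Here I would be content to write the result in the schematic $\ast$-notation of Section \ref{subsubsec:schematic-notation}, exactly as stated, since the precise index contractions are irrelevant for the later $L^2_G$-estimates.

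The only genuine obstacle — really a matter of care rather than difficulty — is the trace-term bookkeeping in \eqref{eq:[del-t,Lap]zeta} and the symmetrization structure in \eqref{eq:[del-t,nabla]T}: because \eqref{eq:REEqChr} has the schematic form $\nabla\Sigma + \nabla N\,\Sigma + \frac{\dot a}a\nabla N\, G$, contracting two of its indices (as happens in the Laplacian commutators) forces one to use $\mathrm{tr}_G\Sigma=0$ to eliminate what would otherwise be an extra $\frac{\dot a}a\Lap\zeta\cdot G$-contraction and to correctly land on the $\div_G\Sigma$ term; getting the numerical coefficients ($-2$ versus $+1$, and the $\frac{\dot a}a$ in front of $\langle\nabla N,\nabla\zeta\rangle_G$) right requires expanding \eqref{eq:REEqChr} without schematic shorthand. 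I would carry out that one contraction explicitly and leave the remaining identities in schematic form, noting that they all reduce mechanically to Proposition \ref{prop:REEq}, \eqref{eq:REEqChr}, \eqref{eq:REEqG-1} and \eqref{eq:delt-muG}, with no curvature terms entering because $[\del_t,\nabla]$ on a tensor is governed by $\del_t\Gamma$ rather than $\Riem$. This mirrors the derivation of the spatial commutators in Corollary \ref{lem:comm-space-first} and Lemma \ref{lem:comm-space}, and the higher-order analogue \eqref{eq:commutator-aux-scalar}--\eqref{eq:commutator-aux-tensor} used in Lemma \ref{lem:AP} is then obtained by iterating these base formulas in the same way.
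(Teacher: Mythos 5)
Your approach matches the paper's: write the operators in local coordinates, apply the product rule, and insert the evolution equations \eqref{eq:REEqG-1} and \eqref{eq:REEqChr}. The base cases \eqref{eq:[del-t,nabla]zeta}--\eqref{eq:[del-t,nabla-sharp]zeta} and the tensorial identities \eqref{eq:[del-t,nabla]T}--\eqref{eq:[del-t,Lap]T} are handled exactly as the paper indicates.

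One sentence in your treatment of \eqref{eq:[del-t,Lap]zeta} is internally inconsistent and would need repair in a written-up proof. You split $[\del_t,\Lap]\zeta$ into $[\del_t,\nabla^{\sharp i}](\nabla_i\zeta)$ and $\nabla^{\sharp i}[\del_t,\nabla_i]\zeta$, then describe the first summand as \enquote{\eqref{eq:[del-t,nabla-sharp]zeta} with $\nabla_j\zeta$ replaced by $\nabla_j\nabla_i\zeta$} and claim the second one \enquote{vanishes \ldots but produces} the remaining terms. Neither claim is right as stated: \eqref{eq:[del-t,nabla-sharp]zeta} is derived using $[\del_t,\nabla_j]\zeta=0$, which is false for the covector $\nabla_i\zeta$, so applying it verbatim would miss the Christoffel contribution; and $\nabla^{\sharp i}[\del_t,\nabla_i]\zeta$ really is identically zero by \eqref{eq:[del-t,nabla]zeta}, so it cannot also produce $\div_G\Sigma$ and $\nabla N\nabla\zeta$ terms. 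The correct bookkeeping --- which your \enquote{Concretely} sentence in fact adopts and which does prove the lemma --- is
\[
[\del_t,\Lap]\zeta = (\del_t(G^{-1})^{ij})\nabla_i\nabla_j\zeta + (G^{-1})^{ij}[\del_t,\nabla_i\nabla_j]\zeta,\qquad [\del_t,\nabla_i\nabla_j]\zeta = -(\del_t\Gamma^k_{ij}[G])\nabla_k\zeta\,:
\]
the first summand gives the leading line via \eqref{eq:REEqG-1}, and contracting \eqref{eq:REEqChr} with $(G^{-1})^{ij}$ in the second summand, using $\text{tr}_G\Sigma=0$, gives the remaining three terms. With this one correction your plan is complete and mirrors the paper's proof.
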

\begin{proof}
Equation \eqref{eq:[del-t,nabla]zeta} is simply that coordinate derivatives commute, and \eqref{eq:[del-t,nabla-sharp]zeta} follows by applying \eqref{eq:REEqG-1} and the product rule.\\
For the \change{commutators \eqref{eq:[del-t,Lap]zeta}, }\eqref{eq:[del-t,nabla]T} and \eqref{eq:[del-t,Lap]T}, we write out the covariant derivatives in local coordinates, apply the product rule, and then the evolution equations \eqref{eq:REEqG-1} and \eqref{eq:REEqChr} for the inverse metric and Christoffel symbols. \delete{For \eqref{eq:[del-t,Lap]zeta} [...]}
\end{proof}

\begin{lemma}[High order time derivative commutators]\label{lem:com-time}
For $l\in\N,\,l\geq 2$, the time derivative commutators take the form
\begin{subequations}
\begin{align*}
\numberthis\label{eq:[del-t,Lap-l]zeta}[\del_t,\Lap^l]\zeta=&\,2a^{-3}(N+1)\langle\Sigma,\nabla^2\Lap^{l-1}\zeta\rangle_G\change{+a^{-3}\nabla\Sigma\ast\nabla^3\Lap^{l-2}\zeta}\\
&\change{\,-2(N+1)a^{-3}\langle\div_G\Lap^{l-1}\Sigma,\nabla\zeta\rangle_G+(N+1)a^{-3}\nabla^{2l-3}\Ric\ast\Sigma\ast\nabla\zeta+\mathfrak{J}([\del_t,\Lap^l]\zeta),}\\
\numberthis\label{eq:[del-t,nabla-Lap-l]zeta}[\del_t,\nabla\Lap^l]\zeta=&\,2a^{-3}(N+1)\langle\Sigma,\nabla^3\Lap^{l-1}\zeta\rangle_G+a^{-3}(N+1)\nabla\Sigma\ast\nabla^{2l}\zeta\\
&\,-2(N+1)a^{-3}\langle \nabla\div_G\Lap^{l-1}\Sigma,\nabla\zeta\rangle_G\\
&\,+\frac{\dot{a}}a\langle\nabla^2\Lap^{l-1}N,\nabla\zeta\rangle_G+(N+1)a^{-3}\nabla^{2l-2}\Ric[G]\ast\Sigma\ast\nabla\zeta\\
&\,+\mathfrak{J}([\del_t,\nabla\Lap^l]\zeta)\\
\numberthis\label{eq:[del-t,Lap-l]T}[\del_t,\Lap^l]\mathfrak{T}=&\,a^{-3}\left(\Sigma\ast\nabla^2\Lap^{l-1}\mathfrak{T}+\nabla\Sigma\ast\nabla^3\Lap^{l-2}\mathfrak{T}+\nabla \mathfrak{T}\ast\nabla\Lap^{l-1}\Sigma+\mathfrak{T}\ast\Lap^{l}\Sigma\right)\\
&\,+a^{-3}\left((N+1)\Sigma\ast \mathfrak{T}\ast\nabla^2\Lap^{l-2}\Ric[G]+\nabla((N+1)\Sigma\ast \mathfrak{T})\ast\nabla^{2l-3}\Ric[G]\right)\\
&\,+\frac{\dot{a}}a\Lap^lN\cdot \mathfrak{T}+\frac{\dot{a}}a\nabla\Lap^{l-1}N\ast\nabla \mathfrak{T}+\mathfrak{J}([\del_t,\Lap^l])\mathfrak{T}\,,\\
\numberthis\label{eq:[del-t,nabla-Lap-l]T}[\del_t,\nabla\Lap^l]\mathfrak{T}=&\,a^{-3}\nabla\Sigma\ast\Lap^l\mathfrak{T}+a^{-3}(N+1)\Sigma\ast\nabla^3\Lap^{l-1}\mathfrak{T}+a^{-3}(N+1)\mathfrak{T}\ast\nabla\Lap^l\Sigma\\
&+\frac{\dot{a}}a\nabla\Lap^lN\ast \mathfrak{T}+\frac{\dot{a}}a\nabla^2\Lap^{l-1}N\ast\nabla \mathfrak{T}+a^{-3}(N+1)\Sigma\ast\nabla^3\Lap^{l-2}\Ric[G]\ast \mathfrak{T}\\
&\,+\mathfrak{J}([\del_t,\nabla\Lap^l]\mathfrak{T})\,,
\end{align*}
where the junk terms are, where $\mathcal{I}=\sum_{i=1}^{l-m-1}I_i$,
{\scriptsize
\change{\begin{align*}
\numberthis\label{eq:[del-t,Lap]zeta-junk}\mathfrak{J}([\del_t,\Lap^l]\zeta)=&\,a^{-3}\sum_{\substack{I_N+I_\Sigma+I_\zeta=2(l-1)\\I_\zeta\leq2(l-2)}}\nabla^{I_N}(N+1)\ast\nabla^{I_\Sigma}\Sigma\ast\nabla^{I_\zeta+2}\zeta\\
&\,+\frac{\dot{a}}a\sum_{\substack{I_N+I_\zeta=2l\\ I_\zeta\geq 2}}\nabla^{I_N}N\ast\nabla^{I_\zeta}\zeta\\
&\,+a^{-3}\sum_{m=0}^{l-2}\sum_{\substack{I_N+I_\Sigma+I_\zeta+\mathcal{I}=2m}}\nabla^{I_N}(N+1)\ast\nabla^{I_\Sigma}\Sigma\ast\nabla^{I_1}\Ric[G]\ast\dots\ast\nabla^{I_{l-m-1}}\Ric[G]\ast\nabla^{I_\zeta+2}\zeta\\
&\,+a^{-3}\sum_{m=0}^{l-2}\sum_{\substack{I_N+I_\Sigma+I_\zeta+\mathcal{I}=2m\\I_1\neq 2l-4}}\nabla^{I_N}(N+1)\ast\nabla^{I_\Sigma}\Sigma\ast\nabla^{I_1+1}\Ric[G]\ast\dots\ast\nabla^{I_{l-m-1}}\Ric[G]\ast\nabla^{I_\zeta+1}\zeta\\
&\,+\frac{\dot{a}}a\sum_{m=0}^{l-1}\sum_{\substack{I_N+\mathcal{I}+I_\zeta=2m-1\\\,I_\zeta\neq2(l-1)}}\nabla^{I_N}N\ast\nabla^{I_1}\Ric[G]\ast\dots\ast\nabla^{I_{l-m-1}}\Ric[G]\ast\nabla^{I_\zeta+1}\zeta\\
\numberthis\label{eq:[del-t,nabla-Lap-l]zeta-junk}\mathfrak{J}([\del_t,\nabla\Lap^l]\zeta)=&\,\frac{\dot{a}}a\sum_{\substack{I_N+I_\zeta=2l\\\,I_\zeta\neq 0}}\nabla^{I_N}N\ast\nabla^{I_\zeta+1}\zeta+a^{-3}\sum_{\substack{I_N+I_\Sigma+I_\zeta=2(l-1)+1\\(I_\Sigma, I_\zeta)\neq(0,2(l-1)+1),(1,2(l-1))}}\nabla^{I_N}(N+1)\ast\nabla^{I_\Sigma}\Sigma\ast\nabla^{I_\zeta+1}\zeta\\
&\,+a^{-3}\sum_{m=0}^{l-2}\sum_{\substack{I_N+I_\Sigma+I_\zeta+\mathcal{I}=2m+1}}\nabla^{I_N}(N+1)\ast\nabla^{I_\Sigma}\Sigma\ast\nabla^{I_1}\Ric[G]\ast\dots\ast\nabla^{I_{l-m-1}}\Ric[G]\ast\nabla^{I_\zeta+2}\zeta\\
&\,+a^{-3}\sum_{m=0}^{l-2}\sum_{\substack{I_N+I_\Sigma+I_\zeta+\mathcal{I}=2m+1\\I_1\neq 2l-3}}\nabla^{I_N}(N+1)\ast\nabla^{I_\Sigma}\Sigma\ast\nabla^{I_1+1}\Ric[G]\ast\dots\ast\nabla^{I_{l-m-1}}\Ric[G]\ast\nabla^{I_\zeta+1}\zeta\\
&\,+\frac{\dot{a}}a\sum_{m=0}^{l-1}\sum_{\substack{I_N+\mathcal{I}+I_\zeta=2m\\\,I_\zeta\neq2(l-1)}}\nabla^{I_N}N\ast\nabla^{I_1}\Ric[G]\ast\dots\ast\nabla^{I_{l-m-1}}\Ric[G]\ast\nabla^{I_\zeta+1}\zeta\\}
\numberthis\label{eq:[del-t,Lap-l]T-junk}\mathfrak{J}([\del_t,\Lap^l])\mathfrak{T}=&\,a^{-3}\sum_{I_N+I_\Sigma+I_\mathfrak{T}=2l}\nabla^{I_N}N\ast\nabla^{I_\Sigma}\Sigma\ast\nabla^{I_\mathfrak{T}}\mathfrak{T}+a^{-3}\sum_{\substack{I_\Sigma+I_\mathfrak{T}=2l\\I_\Sigma,I_\mathfrak{T}\geq 2}}\nabla^{I_\Sigma}\Sigma\ast\nabla^{I_\mathfrak{T}}\mathfrak{T}\\
&\,+a^{-3}\sum_{m=0}^{l-1}\sum_{\substack{I_N+I_\Sigma+I_\mathfrak{T}+\mathcal{I}=2m\\ I_1<2l-3}}\nabla^{I_N}(N+1)\ast\nabla^{I_\Sigma}\Sigma\ast\nabla^{I_1}\Ric[G]\ast\dots\ast\nabla^{I_{l-m-1}}\Ric[G]\ast\nabla^{I_\mathfrak{T}}\mathfrak{T}\\
&\,+\frac{\dot{a}}a\sum_{I_N+I_\mathfrak{T}=2l,\,I_\mathfrak{T}\geq 2}\nabla^{I_N}N\ast\nabla^{I_\mathfrak{T}}\mathfrak{T}\\
\numberthis\label{eq:[del-t,nabla-Lap-l]T-junk}\mathfrak{J}([\del_t,\nabla\Lap^l]\mathfrak{T})=&\,a^{-3}\Sigma\ast\nabla N\ast\Lap^l\Ric[G]+a^{-3}N\ast\nabla\Sigma\ast\Lap^l\mathfrak{T}+\frac{\dot{a}}a\nabla N\ast\Lap^l\mathfrak{T}+\nabla\mathfrak{J}([\del_t,\Lap^l]\mathfrak{T})
\end{align*}}
\end{subequations}
\noindent We can extend the formulas to $l=1$ by dropping any term which would contain negative powers of $\Lap$ or a multiindex of negative order.
\end{lemma}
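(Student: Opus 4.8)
The statement to be proved is Lemma~\ref{lem:com-time}, the catalogue of high-order time-derivative commutators.

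My plan is to derive each of the four families of identities by iterating the first-order commutators from Lemma~\ref{lem:com-time-first}, tracking carefully which terms survive at leading order and relegating everything else to the schematic junk buckets $\mathfrak{J}(\cdot)$. The starting point for \eqref{eq:[del-t,Lap-l]zeta} is to write $[\del_t,\Lap^l]\zeta = \sum_{j=0}^{l-1}\Lap^{j}[\del_t,\Lap]\Lap^{l-1-j}\zeta$, which is the standard telescoping identity for a commutator of a derivation with an $l$-th power. Into each summand I substitute the first-order formula \eqref{eq:[del-t,Lap]zeta}, then commute the remaining $\Lap^{j}$ past the spatial derivatives using the schematic spatial commutators from Lemma~\ref{lem:comm-space} (which introduce factors $\nabla^{I}\Ric[G]$). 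The leading terms of \eqref{eq:[del-t,Lap]zeta} are $2(N+1)a^{-3}\langle\Sigma,\nabla^2\zeta\rangle_G$, $-2N\frac{\dot a}{a}\Lap\zeta$ and $-2(N+1)a^{-3}\langle\div_G\Sigma,\nabla\zeta\rangle_G$; the first of these, with $\Lap^{l-1}$ landing on $\zeta$, produces the term $2a^{-3}(N+1)\langle\Sigma,\nabla^2\Lap^{l-1}\zeta\rangle_G$, the next derivative distribution gives $a^{-3}\nabla\Sigma\ast\nabla^3\Lap^{l-2}\zeta$, the divergence term survives as $-2(N+1)a^{-3}\langle\div_G\Lap^{l-1}\Sigma,\nabla\zeta\rangle_G$, while the single highest-weight curvature contribution $(N+1)a^{-3}\nabla^{2l-3}\Ric\ast\Sigma\ast\nabla\zeta$ comes from commuting $\Lap^{l-1}$ entirely past $\nabla\zeta$ in the divergence term. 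Everything with lower derivative order on $\zeta$, more than one curvature factor, or a $\frac{\dot a}{a}N$-type prefactor paired with at least two derivatives on $\zeta$ gets collected into $\mathfrak{J}([\del_t,\Lap^l]\zeta)$, and one checks the displayed multi-index sums \eqref{eq:[del-t,Lap]zeta-junk} exactly reproduce the output of this bookkeeping. The formulas \eqref{eq:[del-t,nabla-Lap-l]zeta}, \eqref{eq:[del-t,Lap-l]T}, \eqref{eq:[del-t,nabla-Lap-l]T} follow by the same mechanism: for the gradient-commuted versions one additionally commutes a single $\nabla$ to the outside using \eqref{eq:[del-t,nabla]zeta}--\eqref{eq:[del-t,nabla-sharp]zeta} and \eqref{eq:[Lap,nabla]SF}--\eqref{eq:[Lap,nabla]T} (which is why \eqref{eq:[del-t,nabla-Lap-l]T-junk} contains $\nabla\mathfrak{J}([\del_t,\Lap^l]\mathfrak{T})$ as a summand), and for the tensorial versions one replaces \eqref{eq:[del-t,Lap]zeta} by \eqref{eq:[del-t,nabla]T}--\eqref{eq:[del-t,Lap]T}, noting that the extra ``lower order Christoffel'' terms there only ever feed junk terms or, when hit by $\Lap^{l-1}$, contribute the single surviving term $a^{-3}(N+1)\mathfrak{T}\ast\nabla\Lap^l\Sigma$ in \eqref{eq:[del-t,Lap-l]T}.

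The only genuinely delicate bookkeeping issue — and I expect this to be the main obstacle — is correctly identifying the borderline curvature terms of weight $2l-3$ (for the scalar identities) and $2l-2$ in \eqref{eq:[del-t,nabla-Lap-l]zeta}, i.e.\ keeping track of \emph{which} single power of $\Lap$ on $\Ric[G]$ is one order too high to be junk and must therefore be displayed explicitly. This is exactly the content of the side-conditions $I_1 \neq 2l-4$, $I_1 \neq 2l-3$, $I_\zeta \neq 2(l-1)$, $I_\mathfrak{T} \neq 2l$, etc.\ appearing in \eqref{eq:[del-t,Lap]zeta-junk}--\eqref{eq:[del-t,nabla-Lap-l]T-junk}. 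The way to handle it cleanly is to organize the induction on $l$: assume the formula for $\Lap^{l-1}$ (and $\nabla\Lap^{l-1}$), apply $[\del_t,\Lap]\circ\Lap^{l-1} = \Lap\circ[\del_t,\Lap^{l-1}] + [\del_t,\Lap]\circ\Lap^{l-1} - \Lap\circ[\del_t,\Lap^{l-1}]$ — more precisely use $[\del_t,\Lap^l] = [\del_t,\Lap]\Lap^{l-1} + \Lap[\del_t,\Lap^{l-1}]$ — and verify that the first term contributes the displayed leading terms while $\Lap$ applied to the inductive leading terms shifts weights by exactly $2$ and hence pushes the old ``borderline'' curvature term into the new one, with all other contributions absorbable into the junk sums. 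The compactness of the argument is then a matter of confirming that $\Lap$ of each displayed term and of each junk term again lands in the prescribed shape for order $l$; since the junk sums are closed under ``add a curvature factor of any order and redistribute two derivatives'', this closure is straightforward once stated.

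Finally, the ``extend to $l=1$'' claim is immediate: for $l=1$ every sum over $m$ from $0$ to $l-2$ is empty, every term carrying $\Lap^{l-2}$ is dropped, and what remains is precisely Lemma~\ref{lem:com-time-first}. Since none of these computations exploit smallness or the bootstrap assumptions — they are pure consequences of the rescaled evolution equations \eqref{eq:REEqG-1}, \eqref{eq:REEqChr} and the three-dimensional curvature decomposition used in Lemma~\ref{lem:comm-space-first} — the proof is entirely algebraic, and I would simply write ``The formulas follow by applying Lemma~\ref{lem:com-time-first} and the spatial commutators of Lemma~\ref{lem:comm-space} inductively,'' with the weight-tracking remark above as the only substantive comment.
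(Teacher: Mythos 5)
Your proposal is correct and matches the paper's (one-line) proof, which likewise just iterates the first-order commutators of Lemma~\ref{lem:com-time-first}. You make explicit what the paper leaves implicit -- that the curvature junk terms arise from additionally invoking Lemma~\ref{lem:comm-space} to push leftover powers of $\Lap$ past spatial derivatives -- and you correctly identify the borderline curvature weight (the excluded $I_1$-values in the junk sums) as the only nontrivial bookkeeping step.
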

\begin{proof}
This follows by iteratively applying the commutators in Lemma \ref{lem:com-time-first}.
\end{proof}


While all of the above commutators will be essential for the mainline argument, the a priori estimates require the following commutators:

\begin{lemma}[Auxiliary commutators]\label{lem:aux-comm} Let $J\in\N$. Then, we have:
\begin{subequations}
\begin{align*}
\numberthis\label{eq:commutator-aux-scalar}[\del_t,\nabla^J]\zeta=&\,a^{-3}\sum_{I_N+I_\Sigma+I_\zeta=l-1,\,I_\zeta<J-1}\nabla^{I_N}(N+1)\ast\nabla^{I_\Sigma}\Sigma\ast\nabla^{I_\zeta+1}\zeta\\
&\,+\frac{\dot{a}}a\sum_{I_N+I_\zeta=J-1,I_N>0}\nabla^{I_N}N\ast\nabla^{I_\zeta+1}\zeta\\
\numberthis\label{eq:commutator-aux-tensor}[\del_t,\nabla^J]\mathfrak{T}=&\,a^{-3}\sum_{I_N+I_\Sigma+I_{\mathfrak{T}}=J,\,I_\mathfrak{T}< J}\nabla^{I_N}(N+1)\ast\nabla^{I_\Sigma}\Sigma\ast\nabla^{I_\mathfrak{T}}\mathfrak{T}\\
&\,+\frac{\dot{a}}a\sum_{I_N+I_\mathfrak{T}=J,I_N>0}\nabla^{I_N}N\ast\nabla^{I_\mathfrak{T}}\mathfrak{T}\\
\end{align*}
\end{subequations}
\end{lemma}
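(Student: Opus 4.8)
The plan is to prove the auxiliary commutator formulas \eqref{eq:commutator-aux-scalar} and \eqref{eq:commutator-aux-tensor} by induction on the derivative order $J$, using the first-order time-derivative commutators from Lemma \ref{lem:com-time-first} as the base case and as the elementary step that gets iterated. The schematic $\ast$-notation (see Section \ref{subsubsec:schematic-notation}) means that I only need to track which variables appear, how many derivatives land on each factor, and which powers of $a$ and $\frac{\dot a}{a}$ occur; constant prefactors, contractions with $G$ and $G^{-1}$, and index placement are all suppressed, which makes the bookkeeping tractable.

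First I would establish the base case $J=1$. For a scalar function $\zeta$, \eqref{eq:[del-t,nabla]zeta} says $[\del_t,\nabla_i]\zeta=0$, so the right-hand side of \eqref{eq:commutator-aux-scalar} at $J=1$ is empty (the condition $I_\zeta<J-1=0$ cannot be met, and $I_N>0$ together with $I_N+I_\zeta=0$ is also impossible); this matches. For a $(0,2)$-tensor $\mathfrak T$, \eqref{eq:[del-t,nabla]T} is precisely of the claimed schematic form at $J=1$: the first two groups of terms are $a^{-3}(N+1)\ast\nabla\Sigma\ast\mathfrak T$ and $a^{-3}\nabla N\ast\Sigma\ast\mathfrak T$, which fall under $\nabla^{I_N}(N+1)\ast\nabla^{I_\Sigma}\Sigma\ast\nabla^{I_\mathfrak T}\mathfrak T$ with $I_N+I_\Sigma+I_\mathfrak T=1$ and $I_\mathfrak T=0<1$, and the last group is $\frac{\dot a}{a}\nabla N\ast\mathfrak T$, which is $\frac{\dot a}{a}\nabla^{I_N}N\ast\nabla^{I_\mathfrak T}\mathfrak T$ with $I_N=1>0$.

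For the inductive step I would write $\nabla^J = \nabla\,\nabla^{J-1}$ and expand
\[
[\del_t,\nabla^J]\mathfrak T = \nabla\bigl([\del_t,\nabla^{J-1}]\mathfrak T\bigr) + [\del_t,\nabla]\bigl(\nabla^{J-1}\mathfrak T\bigr).
\]
Into the first summand I insert the induction hypothesis and distribute the extra $\nabla$ across the schematic product by the Leibniz rule; crucially $\nabla$ commutes with $\del_t G^{-1}$-type factors only up to terms already of the allowed form, and since $\nabla$ acting on $a^{-3}$ or $\frac{\dot a}{a}$ is zero (these are functions of $t$ alone), each resulting term raises the total derivative count by exactly one and keeps $I_\mathfrak T\le J-1<J$. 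Into the second summand I insert \eqref{eq:[del-t,nabla]T} applied to the tensor $\nabla^{J-1}\mathfrak T$: this produces $a^{-3}(N+1)\ast\nabla(\nabla^{J-1}\mathfrak T)\ast\nabla^{J-1}\mathfrak T$-type terms — wait, more carefully, terms of the form $a^{-3}\nabla\bigl((N+1)\Sigma\bigr)\ast\nabla^{J-1}\mathfrak T$ and $\frac{\dot a}{a}\nabla N\ast\nabla^{J-1}\mathfrak T$ (the $\mathfrak T$ here being $\nabla^{J-1}$ of the original), all of which again have total order $J$ and $I_\mathfrak T=J-1<J$. Collecting, every term lies in the two claimed sums. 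The scalar case is identical but one order lower, using $[\del_t,\nabla]\zeta=0$ so that only the second summand $[\del_t,\nabla](\nabla^{J-1}\zeta)$ contributes nontrivially — and here one must use \eqref{eq:[del-t,nabla]T} with the $(0,1)$-tensor $\nabla^{J-1}\zeta$ in place of a $(0,2)$-tensor, which has the same schematic shape — producing the bound $I_\zeta<J-1$ and the constraint $I_N>0$ in the $\frac{\dot a}{a}$-group exactly because the undifferentiated $\nabla^{J}\zeta$ term is the genuine (non-commutator) leading term and is absent.

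The only mildly delicate point — and the step I expect to be the main obstacle — is verifying the sharp index restrictions $I_\zeta<J-1$ (scalar) and $I_\mathfrak T<J$ (tensor), i.e.\ that no term with all derivatives piled onto the transported object survives in the commutator. For the scalar case this is the statement that the top-order term $\nabla^J(\del_t\zeta)$ is genuinely cancelled, which is immediate from $[\del_t,\nabla]\zeta=0$ propagated through the induction; for the tensor case one must check that in \eqref{eq:[del-t,nabla]T} every term carries at least one derivative of $N$ or of $\Sigma$, never zero, so that after $J$ iterations at least one derivative is always "spent" on the coefficients $N$, $\Sigma$ (or on $a^{-3}$, $\frac{\dot a}{a}$, giving a $\frac{\dot a}{a}$-term with $I_N\ge 1$). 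Inspecting \eqref{eq:[del-t,nabla]T} confirms this: the $a^{-3}$-group always contains a factor $\nabla\Sigma$, $\nabla(\cdots)$, or $\nabla N$, and the $\frac{\dot a}{a}$-group always contains $\nabla N$. Everything else is routine Leibniz-rule bookkeeping in schematic notation, so I would state the base case, one representative line of the inductive expansion, and note that the junk-term structure is inherited verbatim from Lemma \ref{lem:com-time-first}, then declare the remaining combinatorics standard.
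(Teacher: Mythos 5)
Your proposal is correct and follows essentially the same route as the paper: induct on $J$ using the decomposition $[\del_t,\nabla^J]=[\del_t,\nabla]\nabla^{J-1}+\nabla[\del_t,\nabla^{J-1}]$, with the base case read off from \eqref{eq:[del-t,nabla]zeta} and \eqref{eq:[del-t,nabla]T}. The paper abbreviates the first summand as $(\del_t\Gamma[G])\ast\nabla^{J-1}\zeta$ and then points to \eqref{eq:REEqChr}, which is exactly the observation you phrase as ``apply the schematic form of \eqref{eq:[del-t,nabla]T} to the $(0,s)$-tensor $\nabla^{J-1}\zeta$''; the two are interchangeable since Lemma \ref{lem:com-time-first} was itself obtained from \eqref{eq:REEqChr}. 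Your check that every term in $\del_t\Gamma$ spends at least one derivative on $N$ or $\Sigma$ — which is what produces $I_\zeta<J-1$, $I_{\mathfrak T}<J$, and $I_N>0$ in the $\frac{\dot a}{a}$-group — is precisely the point that justifies the index restrictions.
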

\begin{proof}
For $J=1$, this has already been shown in \eqref{eq:[del-t,nabla]zeta} and \eqref{eq:[del-t,nabla]T}. For higher orders, the formulas follow from a straightforward induction argument using that, in local coordinates, we schematically have
\begin{align*}
[\del_t,\nabla^J]\zeta=[\del_t,\nabla]\nabla^{J-1}\zeta+\nabla[\del_t,\nabla^{J-1}]\zeta=(\del_t\Gamma[G])\ast\nabla^{J-1}\zeta+\nabla[\del_t,\nabla^{J-1}]\zeta
\end{align*}
and analogously replacing $\zeta$ with $\mathfrak{T}$. 
\end{proof}

\subsection{Borderline and junk terms}\label{subsec:error-terms}

\begin{definition}[Error terms]\label{def:error-terms} Let $L\in2\N,\,L\geq 2$. Then, the error terms in the Laplace-commuted equations stated in Lemma \ref{lem:laplace-commuted-eq} take the following form:\\

\noindent For the constraint equations, we have
{\scriptsize \begin{subequations}
\begin{align*}
\numberthis\label{eq:comeq-mom-div-junk}\mathfrak{M}_{L,Junk}=&\,-8\pi(\Psi+C)\nabla\Lap^{\frac{L}2-2}\Ric[G]\ast\nabla^2\phi+\nabla^{L-2}\Ric[G]\ast\nabla\Sigma+\underbrace{\nabla^{L-3}\Ric[G]\ast\nabla^2\Sigma}_{\text{if }L\neq 2}\\
&\,+\sum_{I_\Psi+I_\phi=L,\,I_\Psi\neq 0}\nabla^{I_\Psi}\Psi\ast\nabla^{I_\phi+1}\phi+8\pi(\Psi+C)\mathfrak{J}([\Lap^{\frac{L}2},\nabla]\phi)-\mathfrak{J}([\Lap^{\frac{L}2},\div_G]\Sigma)\\
\numberthis\label{eq:comeq-mom-curl-junk}\change{\tilde{\mathfrak{M}}_{L,Junk}=}&\change{\,\underbrace{-\epsilonLC[G]\ast\nabla^{L-3}\Ric[G]\ast\nabla\Sigma}_{\text{if }L\neq 2}-\mathfrak{J}([\Lap^\frac{L}2,\curl_G]\Sigma)}\\
\mathfrak{H}_{L,Border}=&\,a^{-4}\left[\Sigma\ast\Lap^{\frac{L}2}\Sigma+\nabla\Sigma\ast\nabla^{L-1}\Sigma\right] \numberthis\label{eq:comeq-Ham-border}\\
\mathfrak{H}_{L,Junk}=&\,\sum_{I_1+I_2=L}\nabla^{I_1+1}\phi\ast\nabla^{I_2+1}\phi+a^{-4}\sum_{I_1+I_2=L,I_i\geq2}\nabla^{I_1}\Sigma\ast\nabla^{I_2}\Sigma \numberthis\label{eq:comeq-Ham-junk}\\
&\,+\Lap^\frac{L}2\left[\frac{4\pi}3\lvert\nabla\phi\rvert_G^2+\frac{8\pi}3a^{-4}\Psi^2+\frac{16\pi}3Ca^{-4}\Psi\right]\cdot G
\end{align*}
\end{subequations}}
\noindent The lapse equation error terms are
\begin{subequations}
{\scriptsize \begin{align*}
\mathfrak{N}_{L,Border}=&\,a^{-4}(N+1)\left(\Sigma\ast\Lap^{\frac{L}2}\Sigma+\nabla\Sigma\ast\nabla^{L-1}\Sigma+\Psi\ast\Lap^{\frac{L}2}\Psi+\nabla\Psi\ast\nabla^{L-1}\Psi\right) \numberthis \label{eq:comeq-lapse-border}\\
&\,+a^{-4}\left[\lvert\Sigma\rvert_G^2+\Psi^2+\Psi\right]\ast\Lap^{\frac{L}2}N+a^{-4}\nabla\left[\lvert\Sigma\rvert_G^2+\Psi^2+\Psi\right]\ast\nabla^{L-1}N\,\\
\mathfrak{N}_{L,Junk}=&\,a^{-4}\sum_{\substack{I_N+I_1+I_2=L;\\I_N\leq L-2;\,I_N>0\text{ or }I_1\leq I_2\leq L-2}}\nabla^{I_N}(N+1)\ast\left(\nabla^{I_1}\Sigma\ast\nabla^{I_2}\Sigma+\nabla^{I_1}\Psi\ast\nabla^{I_2}\Psi\right) \numberthis\label{eq:comeq-lapse-junk}\\
&\,+a^{-4}N\ast\Lap^{\frac{L}2}\Psi+a^{-4}\sum_{I_N+I_\Psi=L;\,I_\Psi\geq 2,\,I_N\geq 1}\nabla^{I_N}N\ast\nabla^{I_\Psi}\Psi\,,
\end{align*}}
as well as
{\scriptsize \begin{align*}
\numberthis \label{eq:comeq-lapse-border-odd}\mathfrak{N}_{L+1,Border}=&\,a^{-4}(N+1)\Bigr(\Sigma\ast\nabla\Lap^{\frac{L}2}\Sigma+\nabla\Sigma\ast\nabla^{L}\Sigma+\nabla^2\Sigma\ast\nabla^{L-1}\Sigma+\Psi\ast\nabla\Lap^{\frac{L}2}\Psi\\
&\,+\nabla\Psi\ast\nabla^{L}\Psi+\nabla^2\Psi\ast\nabla^{L-1}\Psi\Bigr)+a^{-4}\left[\lvert\Sigma\rvert_G^2+\Psi^2+\Psi\right]\ast\nabla\Lap^{\frac{L}2}N\\
&\,+\nabla\Psi\ast\nabla^{L}N+\nabla^2\Psi\ast\nabla^{L-1}\Psi\,\\
\mathfrak{N}_{L+1,Junk}=&\,a^{-4}\sum_{\substack{I_N+I_1+I_2=L;\\I_N<L+1;\,I_N>0\text{ or }I_1\geq I_2>2}}\nabla^{I_N}(N+1)\ast\left(\nabla^{I_1}\Sigma\ast\nabla^{I_2}\Sigma+\nabla^{I_1}\Psi\ast\nabla^{I_2}\Psi\right) \numberthis\label{eq:comeq-lapse-junk-odd}\\
&\,+a^{-4}N\ast\nabla\Lap^{\frac{L}2}\Psi+a^{-4}\sum_{I_N+I_\Psi=L+1;\,I_N,I_\Psi>2}\nabla^{I_N}N\ast\nabla^{I_\Psi}\Psi
\end{align*}}
\end{subequations}
whereas the scalar field error terms read
\begin{subequations}
{\scriptsize \begin{align*}
\numberthis\label{eq:comeq-Psi-even-border}\mathfrak{P}_{L,Border}=&\,-3\Psi\frac{\dot{a}}a\Lap^{\frac{L}2}N+\frac{\dot{a}}a\nabla\Psi\ast\nabla^{L-1}N +2a^{-3}(N+1)\langle\Sigma,\nabla^2\Lap^{\frac{L}2-1}\Psi\rangle_G+2a^{-3}(N+1)\nabla^{L-3}\Ric\ast\Sigma\ast\nabla\Psi\\
&\,\change{-2a^{-3}(N+1)\langle\div_G\Lap^{\frac{L}2-1}\Sigma,\nabla\Psi\rangle_G+a^{-3}(N+1)\nabla\Sigma\ast\nabla^3\Lap^{\frac{L}2-2}\Psi}\\
\mathfrak{P}_{L,Junk}=&\,\frac{\dot{a}}a\sum_{I_N+I_\Psi=L,\,I_\Psi\geq 2}\nabla^{I_N}N\ast\nabla^{I_\Psi}\Psi+a\sum_{I_N+I_\phi=L+1,\,I_N,I_\phi\neq 0}\nabla^{I_N}N\ast\nabla^{I_\phi+1}\phi+\mathfrak{J}([\del_t,\Lap^\frac{L}2]\Psi)\numberthis\label{eq:comeq-Psi-even-junk}\\
\numberthis\label{eq:comeq-Q-even-border}\mathfrak{Q}_{L,Border}=&\,a^{-3}\Psi\nabla\Lap^{\frac{L}2}N+a^{-3}(N+1)\Sigma\ast\nabla^3\Lap^{\frac{L}2-1}\phi\change{+a^{-3}(N+1)\nabla^{L}\phi\ast\nabla \Sigma}
\\
\mathfrak{Q}_{L,Junk}=&\,a^{-3}\sum_{I_N+I_\Psi=L+1,\,I_N,I_\Psi\neq 0}\nabla^{I_N}N\ast\nabla^{I_\Psi}\Psi+a^{-3}\nabla\Lap^{\frac{L}2-1}N\ast\nabla^2\phi\ast\Sigma\numberthis\label{eq:comeq-Q-even-junk}\\
&\,\change{+a^{-3}(N+1)\nabla^2\Lap^{\frac{L}2-1}\Sigma\ast\nabla\phi}+(N+1)a^{-3}\nabla\Lap^{\frac{L}2-1}\Ric[G]\ast\Sigma\ast\nabla\phi\\
&\,+a^{-3}\nabla^{L-2}\Ric[G]\ast\left((N+1)\ast\Sigma\ast\nabla\phi\right)+\frac{\dot{a}}a\langle\nabla^2\Lap^{\frac{L}2-1}N,\nabla\phi\rangle_G
\change{+\mathfrak{J}([\del_t,\nabla\Lap^{\frac{L}2}]\phi)}
\end{align*}}
and
{\scriptsize \begin{align*}
\mathfrak{P}_{L+1,Border}=&\,-3\Psi\frac{\dot{a}}a\nabla\Lap^{\frac{L}2}N+\frac{\dot{a}}a\nabla\Psi\ast\nabla^2\Lap^{\frac{L}2-1}N +2a^{-3}\langle\Sigma,\nabla^3\Lap^{\frac{L}2-1}\Psi\rangle_G\numberthis\label{eq:comeq-Psi-odd-border}\\
&\,+a^{-3}(N+1)\nabla\Sigma\ast\nabla^L\Psi+2a^{-3}\nabla^{L-2}\Ric\ast\Sigma\ast\nabla\Psi\\
\change{&\,+a^{-3}(N+1)\nabla^2\Lap^{\frac{L}2-1}\Sigma\ast\nabla\Psi}\\
\mathfrak{P}_{L+1,Junk}=&\,\frac{\dot{a}}a\sum_{I_N+I_\Psi=L+1,\,I_\Psi\geq 2}\nabla^{I_N}N\ast\nabla^{I_\Psi}\Psi+a\sum_{I_N+I_\phi=L+2,\,I_N,I_\phi\neq 0}\nabla^{I_N}N\ast\nabla^{I_\phi+1}\phi \numberthis\label{eq:comeq-Psi-odd-junk}\\
&\,
\change{+\mathfrak{J}([\del_t,\nabla\Lap^\frac{L}2]\Psi)}\\
\mathfrak{Q}_{L+1,Border}=&\,a^{-3}\Psi\Lap^{\frac{L}2+1}N+a^{-3}\nabla\Psi\ast\nabla\Lap^{\frac{L}2}N+a^{-3}(N+1)\Sigma\ast\nabla^2\Lap^{\frac{L}2}\phi+\frac{\dot{a}}a\nabla\Lap^\frac{L}2N\ast\nabla\phi\numberthis\label{eq:comeq-Q-odd-border}\\
&\change{+a^{-3}(N+1)\nabla\Sigma\ast\nabla^2\Lap^{\frac{L}2-1}\phi}\\
\mathfrak{Q}_{L+1,Junk}=&\,a^{-3}\sum_{I_N+I_\Psi=L+2,\,2\leq I_\Psi\leq L+1}\nabla^{I_N}N\ast\nabla^{I_\Psi}\Psi+a^{-3}(N+1)\nabla^{L-2}\Ric[G]\ast\Sigma\ast\nabla\phi\\
&\,
\change{+a^{-3}(N+1)\nabla^2\Lap^{\frac{L}2-1}\Sigma\ast\nabla\phi+\mathfrak{J}([\del_t,\Lap^{\frac{L}2+1}]\phi)}\numberthis\label{eq:comeq-Q-odd-junk}
\end{align*}}
as well as
{\scriptsize \begin{align*}
\numberthis\label{eq:comeq-Q-1-border}\mathfrak{Q}_{1,Border}=&\,a^{-3}\Psi\Lap N+a^{-3}(N+1)\Sigma\ast\nabla^2\phi\deletemath{+a^{-3}(N+1)(\Psi+C)\nabla\phi\ast\nabla\phi}\\
\numberthis\label{eq:comeq-Q-1-junk}\mathfrak{Q}_{1,Junk}=&\,a^{-3}\nabla\Psi\ast\nabla N
\change{+a^{-3}(N+1)\nabla\Sigma\ast\nabla\phi+\mathfrak{J}([\del_t,\Lap]\phi)}\,.
\end{align*}}
\end{subequations}
The commuted rescaled evolution equation for $\Sigma$ has the error terms
{\scriptsize \begin{subequations}
\begin{align*}
\numberthis\label{eq:comeq-Sigma-border}\mathfrak{S}_{L,Border}=&\,a^{-3}(N+1)\left(\Sigma\ast\nabla^2\Lap^{\frac{L}2-1}\Sigma+\nabla\Sigma\ast\nabla^3\Lap^{\frac{L}2-2}\Sigma\right)+a^{-3}\left(\Lap^{\frac{L}2}N\cdot(\Sigma\ast\Sigma)+\nabla\Lap^{\frac{L}2-1}N\ast\nabla\Sigma\ast\Sigma\right)\\
&\,+a^{-3}(N+1)\Sigma\ast\Sigma\ast\nabla^2\Lap^{\frac{L}2-2}\Ric[G]+\frac{\dot{a}}a\Lap^{\frac{L}2}N\ast\Sigma+\frac{\dot{a}}a\nabla\Lap^{\frac{L}2-1}N\ast\nabla\Sigma\\
&\,+\underbrace{a^{-3}[(N+1)\nabla\Sigma\ast\Sigma+\nabla N\ast\Sigma\ast\Sigma]\ast\nabla^{L-3}\Ric[G]}_{\text{not present for }L=2}\\
\numberthis\label{eq:comeq-Sigma-junk}\mathfrak{S}_{L,Junk}=&\,-a[\Lap^\frac{L}2,\nabla^2]N+a\sum_{I_N+I_\Ric=L,I_N\neq 0}\nabla^{I_N} N\ast\nabla^{I_\Ric}\Ric[G]+\frac{\dot{a}}a\sum_{I_N+I_\Sigma=L}\nabla^{I_N}N\ast\nabla^{I_\Sigma}\Sigma\\
&\,+a^{-3}\sum_{I_1+I_2=L,\,I_i>0}\nabla^{I_1}\Sigma\ast\nabla^{I_2}\Sigma+a^{-3}\sum_{I_N+I_{1}+I_{2}=L,\,I_N<L}\nabla^{I_N}N\ast\nabla^{I_1}\Sigma\ast\nabla^{I_2}\Sigma\\
&\,+a\sum_{I_N+I_1+I_2=L}\nabla^{I_N}(N+1)\ast\nabla^{I_1+1}\phi\ast\nabla^{I_2+1}\phi+\left(4\pi C^2a^{-3}+\frac13a\right)\Lap^\frac{L}2N\cdot G+\mathfrak{J}([\del_t,\Lap^{\frac{L}2}]\Sigma)
\end{align*}
\end{subequations}}
while the commuted Ricci tensor evolution equations have error terms, where $\mathcal{I}=\sum_{i=1}^{\nicefrac{L}2-m+1}I_i$,
{\scriptsize
\begin{subequations}
\begin{align*}
\mathfrak{R}_{L,Border}=&\,a^{-3}\left[\nabla^{L+2}N\cdot\Sigma+\nabla^{L+1}N\ast\nabla\Sigma+\Sigma\ast\nabla^2\Lap^{\frac{L}2-1}\Ric[G]+\nabla\Sigma\ast\nabla^{L-1}\Ric[G]\right] \numberthis\label{eq:comeq-Ric-even-border}\\
\mathfrak{R}_{L+1,Border}=&\,a^{-3}\left[\nabla^{L+3}N\cdot\Sigma+\nabla^{L+2}N\ast\nabla\Sigma+\Sigma\ast\nabla^3\Lap^{\frac{L}2-1}\Ric[G]+\nabla\Sigma\ast\nabla^L\Ric[G]\right] \numberthis\label{eq:comeq-Ric-odd-border}\\
\mathfrak{R}_{L,Junk}=&\,a^{-3}\sum_{I_N+I_\Sigma=L+2,\,I_\Sigma\geq 2}\nabla^{I_N}N\ast\nabla^{I_\Sigma}\Sigma \numberthis\label{eq:comeq-Ric-even-junk}\\
&+a^{-3}\sum_{\substack{I_N+I_\Sigma+I_\Ric=L\\ (I_\Sigma,I_\Ric)\neq (0,L),(1,L-1)}}\nabla^{I_N}(N+1)\ast\nabla^{I_\Sigma}\Sigma\ast\nabla^{I_\Ric}\Ric[G]\\
&\,+a^{-3}\sum_{m=0}^{\frac{L}2-1}\sum_{{I_N+I_\Sigma+\mathcal{I}=2m}}\nabla^{I_N}(N+1)\ast\nabla^{I_\Sigma}\Sigma\ast\nabla^{I_1}\Ric[G]\ast\dots\ast\nabla^{I_{\frac{L}2-m+1}}\Ric[G]\\
&\,+\frac{\dot{a}}a\left([\Lap^\frac{L}2,\nabla^2]N+\Lap^\frac{L}2N\ast\Ric[G]+\nabla^{L-1}N\ast\nabla\Ric[G]\right)+\mathfrak{J}([\del_t,\Lap^{\frac{L}2}]\Ric[G])\\
\mathfrak{R}_{L+1,Junk}=&\,a^{-3}\sum_{I_N+I_\Sigma=L+3,\,I_\Sigma\geq 2}\nabla^{I_N}N\ast\nabla^{I_\Sigma}\Sigma \numberthis\label{eq:comeq-Ric-odd-junk}\\
&+a^{-3}\sum_{\substack{I_N+I_\Sigma+I_\Ric=L\\ (I_\Sigma,I_\Ric)\neq (0,L+1),(1,L)}}\nabla^{I_N}(N+1)\ast\nabla^{I_\Sigma}\Sigma\ast\nabla^{I_\Ric}\Ric[G]\\
&\,+a^{-3}\sum_{m=0}^{\frac{L}2-1}\sum_{{I_N+I_\Sigma+\mathcal{I}=2m+1}}\nabla^{I_N}(N+1)\ast\nabla^{I_\Sigma}\Sigma\ast\nabla^{I_1}\Ric[G]\ast\dots\nabla^{I_{\frac{L}2-m+1}}\Ric[G]\\
&\,+\frac{\dot{a}}a\left(\nabla[\Lap^\frac{L}2,\nabla^2]N+\nabla\Lap^{\frac{L}2}N\ast\Ric[G]+\nabla^2\Lap^{\frac{L}2-1}N\ast\nabla\Ric[G]\right)+\mathfrak{J}([\del_t,\nabla\Lap^\frac{L}2]\Ric[G])\\
\end{align*}
\end{subequations}}

Finally, the Bel-Robinson evolution error terms are
{\scriptsize
\begin{subequations}
\begin{align*}
\mathfrak{E}_{L,Border}=&\,\frac{\tau}3\left(\Lap^{\frac{L}2}N\cdot\RE+\nabla^{L-1}N\ast\nabla\RE\right)-a^{-1}\left(\Lap^\frac{L}2\RE\times\Sigma+\RE\times\Lap^{\frac{L}2}\Sigma\right)\numberthis\label{eq:comeq-RE-border}\\
&\,+a^{-3}\epsilonLC[G]\ast\epsilonLC[G]\ast\left(\nabla^{L-1}\RE\ast\nabla\Sigma+\nabla\RE\ast\nabla^{L-1}\Sigma\right)\\
&\,+a^{-3}\Lap^{\frac{L}2}N\cdot(\RE\ast\Sigma)+a^{-3}\nabla\Lap^{\frac{L}2-1}N\ast[\nabla\RE\ast\Sigma+\RE\ast\nabla\Sigma]\\
&\,+a^{-3}\left(\Sigma\ast\nabla^2\Lap^{\frac{L}2-1}\RE+\nabla\Sigma\ast\nabla^3\Lap^{\frac{L}2-2}\RE+\nabla\RE\ast\nabla\Lap^{\frac{L}2-1}\Sigma+\RE\ast\Lap^{\frac{L}2}\Sigma\right)\\
&\,+a^{-3}\left[(N+1)\Sigma\ast\RE\ast\nabla^2\Lap^{\frac{L}2-2}\Ric[G]+\underbrace{\nabla\left((N+1)\ast\Sigma\ast\RE\right)\ast\nabla^{L-3}\Ric[G]}_{\text{if }L\neq 2}\right]\\
&\,+4\pi a^{-3}(\Psi+C)^2\Lap^\frac{L}2N\cdot\Sigma\,+4\pi a^{-3}\nabla^{L-1}N\ast\left[(\Psi+C)^2\nabla\Sigma+2(\Psi+C)\ast\nabla\Psi\ast\Sigma\right]\\
&\,+4\pi a^{-3}(N+1)\left[(\Psi^2+2C\Psi)\Lap^\frac{L}2\Sigma+2(\Psi+C)\Lap^\frac{L}2\Psi\cdot\Sigma\right]\\
&\,+4\pi a^{-3}\nabla^{L-1}\Sigma\ast\left[(\Psi+C)^2\nabla N+2(N+1)(\Psi+C)\nabla\Psi\right]\\
&\,+4\pi a^{-3}(\Psi+C)\nabla^{L-1}\Psi\ast\left[(N+1)\nabla\Sigma+\nabla N\ast\Sigma\right]\\
\mathfrak{E}_{L,top}=&\,a^{-1}(N+1)\epsilonLC[G]\ast \RB\ast\nabla\Lap^{\frac{L}2-1}\Ric[G]+a(N+1)(\Psi+C)\nabla\Lap^{\frac{L}2-1}\Ric[G]\ast\nabla\phi \numberthis\label{eq:comeq-RE-top}\\
\numberthis\label{eq:comeq-RE-junk}\mathfrak{E}_{L,Junk}=&\,\frac{\dot{a}}a\sum_{I_N+I_{\RE}=L,\,I_{N}\leq L-2}\nabla^{I_N}N\ast\nabla^{I_{\RE}}\RE\\
&\,+a^{-1}\epsilonLC[G]\ast\left[\sum_{I_N+I_{\RB}=L+1,\,I_{N},I_{\RB}\leq L}\nabla^{I_N}N\ast\nabla^{I_{\RB}}\RB+(N+1)\nabla^2\Lap^{\frac{L}2-2}\Ric[G]\ast\nabla\RB\right] \\
&\,+a^{-3}\epsilonLC[G]\ast\epsilonLC[G]\ast\sum_{\substack{I_N+I_{\RE}+I_\Sigma=L,\\ I_N\leq L-2;\,I_N>0\text{ or }I_{\RE},I_{\Sigma}\geq 2}}\nabla^{I_N}N\ast\nabla^{I_{\RE}}\RE\ast\nabla^{I_\Sigma}\Sigma\\
&\,+a\sum_{\substack{I_N+I_\Psi+I_\phi=L+1\\ I_N,I_\Psi,I_\phi\neq L+1}}\nabla^{I_N}(N+1)\ast\nabla^{I_\Psi}(\Psi+C)\ast\nabla^{I_\phi+1}\phi\\
&\,+a^{-3}\sum_{\substack{I_N+I_\Sigma+I_1+I_2=L \\ I_N,I_\Sigma,I_i\leq L-2}}\nabla^{I_N}(N+1)\ast\nabla^{I_\Sigma}\Sigma\ast\nabla^{I_1}(\Psi+C)\ast\nabla^{I_2}(\Psi+C)\\
&\,+\dot{a}a^3\sum_{I_N+I_1+I_2=L}\nabla^{I_N}(N+1)\ast\nabla^{I_1+1}\phi\ast\nabla^{I_2+1}\phi\\
&\,+a\sum_{I_N+I_\Sigma+I_1+I_2=L}\nabla^{I_N}(N+1)\ast\nabla^{I_\Sigma}\Sigma\ast\nabla^{I_1+1}\phi\ast\nabla^{I_2+1}\phi\\
&\,+a^{-1}\epsilon[G]\ast\RB\ast[\Lap^\frac{L}2,\nabla]N+4\pi a(N+1)(\Psi+C)\left[\nabla^{L-2}\Ric[G]\ast\nabla^2\phi+\mathfrak{J}([\Lap^\frac{L}2,\nabla^2]\phi)\right]\\
&\,+a\left\{(\Psi+C)[\Lap^\frac{L}2,\nabla]N+(N+1)[\Lap^\frac{L}2,\nabla]\Psi\right\}\ast\nabla\phi+\change{(N+1)a^{-1}\mathfrak{J}([\Lap^\frac{L}2,\curl_G]\RB)}+\mathfrak{J}([\del_t,\Lap^{\frac{L}2}]\RE)\\
&\,+\Lap^\frac{L}2\left[a^{-3}(N+1)\RE\ast\Sigma+\frac{2\pi}3a^6(N+1)\left(\del_0\left(a^{-6}(\Psi+C)^2+a^{-2}\lvert\nabla\phi\rvert_G^2\right)+4\pi\frac{\dot{a}}a(\Psi+C)^2\right)\right]\cdot G\\[2em]
\mathfrak{B}_{L,Border}=&\,\frac{\tau}3\left(\Lap^{\frac{L}2}N\cdot\RB+\nabla^{L-1}N\ast\nabla\RB\right)-a^{-1}\left(\Lap^{\frac{L}2}\RB\times\Sigma+\RB\ast\Lap^{\frac{L}2}\Sigma\right)\numberthis\label{eq:comeq-RB-border}\\
&\,+a^{-3}\epsilonLC[G]\ast\epsilonLC[G]\ast\left(\nabla^{L-1}\RB\ast\nabla\Sigma+\nabla\RB\ast\nabla^{L-1}\Sigma\right)\\
&\,+a^{-3}\Lap^{\frac{L}2}N\cdot(\RB\ast\Sigma)+a^{-3}\nabla\Lap^{\frac{L}2-1}N\cdot\left[\nabla\RB\ast\Sigma+\RB\ast\nabla\Sigma\right]\\
&\,+a^{-3}\left(\Sigma\ast\nabla^2\Lap^{\frac{L}2-1}\RB+\nabla\Sigma\ast\nabla^3\Lap^{\frac{L}2-2}\RB+\nabla\RB\ast\nabla\Lap^{\frac{L}2-1}\Sigma+\RB\ast\Lap^{\frac{L}2}\Sigma\right)\\
&\,+a^{-3}\left[(N+1)\Sigma\ast\RB\ast\nabla^{L-2}\Ric[G]+\underbrace{\nabla\left((N+1)\ast\Sigma\ast\RB\right)\ast\nabla^{L-3}\Ric[G]}_{\text{if }L\neq 2}\right]\\
&\,+a^{-1}(N+1)(\Psi+C)\cdot\epsilonLC[G]\ast\nabla\Lap^\frac{L}2\phi\ast\Sigma+a^{-1}\epsilonLC[G]\ast\nabla^2\nabla^L\phi\ast\nabla\left((N+1)(\Psi+C)\Sigma\right)\\
\mathfrak{B}_{L,top}=&\,a^3(N+1)\epsilonLC[G]\ast\nabla\Lap^{\frac{L}2-1}\Ric[G]\ast\nabla\phi\ast\nabla\phi+a^{-1}\epsilonLC[G]\ast\RE\ast\nabla\Lap^{\frac{L}2-1}\Ric[G] \numberthis\label{eq:comeq-RB-top}\\
\mathfrak{B}_{L,Junk}=&\,\frac{\dot{a}}a\sum_{I_N+I_{\RB}=L,I_{N}\leq L-2}\nabla^{I_N}N\ast\nabla^{I_{\RB}}\RB \numberthis\label{eq:comeq-RB-junk}\\
&\,+a^{-1}\epsilonLC[G]\ast\left[\sum_{I_N+I_{\RE}=L+1,I_{N},I_{\RE}\leq L}\nabla^{I_N}N\ast\nabla^{I_{\RE}}\RE+\nabla^2\Lap^{\frac{L}2-2}\Ric[G]\ast\nabla\RE\right] \\
&\,+a^{-3}\epsilonLC[G]\ast\epsilonLC[G]\ast\sum_{\substack{I_N+I_{\RE}+I_\Sigma=L,\\ I_N\leq L-2;\,I_N>0\text{ or }I_{\RB},I_{\Sigma}\geq 2}}\nabla^{I_N}(N+1)\ast\nabla^{I_{\RB}}\RB\ast\nabla^{I_\Sigma}\Sigma\\
&\,+a^3\epsilonLC[G]\ast\sum_{\substack{I_N+I_{1}+I_{2}=L,\\ \,I_N>0\text{ or }I_2< L}}\nabla^{I_N}(N+1)\ast\nabla^{I_{1}+1}\phi\ast\nabla^{I_{2}+2}\phi\\
&\,+a^{-1}\epsilonLC[G]\ast\sum_{\substack{I_N+I_\Psi+I_\phi+I_\Sigma=L\\I_\phi\leq L-2}}\nabla^{I_N}(N+1)\ast\nabla^{I_\Psi}(\Psi+C)\ast\nabla^{I_\phi+1}\phi\ast\nabla^{I_\Sigma}\Sigma\\
&\,+a^{-1}\epsilon[G]\ast\RE\ast[\Lap^{\frac{L}2},\nabla]N\\
&\,+a^{-1}(N+1)(\Psi+C)\cdot\epsilonLC[G]\ast\Sigma\ast\left[\Lap^\frac{L}2,\nabla\right]\phi\\
&\,+a^3(N+1)\epsilonLC[G]\ast\nabla\phi\ast\left(\nabla^2\Lap^{\frac{L}2-2}\Ric[G]\ast\nabla^2\phi+\mathfrak{J}([\Lap^{\frac{L}2},\nabla]\phi)\right)\\
&\,\change{-a^{-1}(N+1)\mathfrak{J}([\Lap^\frac{L}2,\curl_G]\RE)}+\mathfrak{J}([\del_t,\Lap^{\frac{L}2}]\RB)+\Lap^\frac{L}2\left[a^{-3}(N+1)\RB\ast\Sigma\right]\cdot G\\
&\,+\Lap^{\frac{L}2}\left[4\pi a^2\nabla^{\sharp m}\phi(\Psi+C)+\frac{2\pi}3a^5\Lap^{\frac{L}2}\nabla^{\sharp m}\left(a^{-6}(\Psi+C)^2+a^{-2}\lvert\nabla\phi\rvert_G^2\right)\right]\epsilonLC[G]_{(\cdot)m(\cdot)}
\end{align*}
\end{subequations}}
\end{definition}

\subsection{$L^2_G$ error term estimates}\label{subsec:L2-error-est}

In this subsection, we collect how the error terms can be controlled in terms of energies as well as homogeneous Sobolev norms of $\phi$. We do not claim that these estimates are optimal -- in particular, we note that at low order (like $L=2$), many of the curvature errors that appear in the estimates below could be avoided entirely: These arise as a result of applying the general estimates in Lemma \ref{lem:Sobolev-norm-equivalence-improved} where the Ricci tensor does not naturally occur in the respective equations, and can be avoided at low orders by applying \eqref{eq:APmidRic} on all curvature terms that occur. \\
Instead of optimality, we try to keep both notation and form of the error term estimates as simple as possible and the energy estimates between base and top level as unified as possible. In particular, we track the \enquote{worst} curvature energy occurring at high orders for all estimates below, even if these terms are added in artificially for low orders.

\begin{lemma}[Estimates for borderline error terms] Let $L\in 2\Z_+,\,L\leq 20$. Then, the following estimates hold:
\begin{subequations}
\begin{align*}
\numberthis\label{eq:L2-Border-H}\|\mathfrak{H}_{L,Border}\|_{L^2_G}\lesssim&\,\epsilon a^{-4}\sqrt{\E^{(L)}(\Sigma,\cdot)}+\epsilon a^{-4-c\sqrt{\epsilon}}\sqrt{\E^{(\leq L-2)}(\Sigma,\cdot)}+\epsilon^2a^{-4-c\sqrt{\epsilon}}\sqrt{\E^{(\leq L-2)}(\Ric,\cdot)}\\
\numberthis\label{eq:L2-Border-N}\|\mathfrak{N}_{L,Border}\|_{L^2_G}\lesssim&\,\epsilon a^{-4}\left[\sqrt{\E^{(L)}(\phi,\cdot)}+\sqrt{\E^{(L)}(\Sigma,\cdot)}\right]+\epsilon a^{-4}\sqrt{\E^{(L)}(N,\cdot)}\\
&\,+\epsilon a^{-4-c\sqrt{\epsilon}}\left[\sqrt{\E^{(\leq L-2)}(\phi,\cdot)}+\sqrt{\E^{(\leq L-2)}(\Sigma,\cdot)}+\sqrt{\E^{(\leq L-2)}(N,\cdot)}\right]\\
&\,+\underbrace{\epsilon^2 a^{-4}\sqrt{\E^{(\leq L-2)}(\Ric,\cdot)}+\epsilon a^{-4-c\sqrt{\epsilon}}\sqrt{\E^{(\leq L-4)}(\Ric,\cdot)}}_{\text{not present for }L=2}\\
\numberthis\label{eq:L2-Border-N-odd}\|\mathfrak{N}_{L+1,Border}\|_{L^2_G}\lesssim&\,\epsilon a^{-6}\left[\sqrt{a^4\E^{(L+1)}(\phi,\cdot)}+\sqrt{a^4\E^{(L+1)}(\Sigma,\cdot)}\right]+\epsilon a^{-6}\sqrt{a^4\E^{(L+1)}(N,\cdot)}\\
&\,+\epsilon a^{-4}\left[\sqrt{\E^{(L)}(\phi,\cdot)}+\sqrt{\E^{(L)}(\Sigma,\cdot)}+\sqrt{\E^{(L)}(N,\cdot)}\right]\\
&\,+\epsilon a^{-4-c\sqrt{\epsilon}}\left[\sqrt{\E^{(\leq L-2)}(\phi,\cdot)}+\sqrt{\E^{(\leq L-2)}(\Sigma,\cdot)}+\sqrt{\E^{(\leq L-2)}(N,\cdot)}\right]\\
&\,+\epsilon^2a^{-4-c\sqrt{\epsilon}}\sqrt{\E^{(\leq L-2)}(\Ric,\cdot)}\\
\numberthis\label{eq:L2-Border-P-even}\|\mathfrak{P}_{L,Border}\|_{L^2_G}\lesssim&\,
\change{\epsilon}a^{-3}\sqrt{\E^{(L)}(\phi,\cdot)}+\epsilon a^{-3}\sqrt{\E^{(L)}(N,\cdot)}+\epsilon a^{-3-c\sqrt\epsilon}\sqrt{\E^{(\leq L-2)}(N,\cdot)}\\
&\,+
\change{\epsilon}a^{-3-c\sqrt{\epsilon}}\sqrt{\E^{(\leq L-2)}(\phi,\cdot)}
\change{+\epsilon a^{-3}\sqrt{\E^{(L)}(\Sigma,\cdot)}+\epsilon a^{-3-c\sqrt{\epsilon}}\sqrt{\E^{(L-2)}(\Sigma,\cdot)}}\\
&\,+\underbrace{\epsilon^2 a^{-3-c\sqrt{\epsilon}}\sqrt{\E^{(\leq L-3)}(\Ric,\cdot)}}_{\text{not present for }L=2}\\
\numberthis\label{eq:L2-Border-Q-even}\|\mathfrak{Q}_{L,Border}\|_{L^2_G}\lesssim&\,\epsilon a^{-3}\sqrt{\E^{(L+1)}(N,\cdot)}
+
\change{\epsilon}a^{-3}\sqrt{a^{-4}\E^{(L)}(\phi,\cdot)}\\
&\,+
\change{\epsilon}a^{-3-c\sqrt{\epsilon}}\sqrt{a^{-4}\E^{(\leq L-2)}(\phi,\cdot)}\\
\numberthis\label{eq:L2-Border-P-odd}\|\mathfrak{P}_{L+1,Border}\|_{L^2_G}\lesssim&\,
\change{\epsilon}a^{-3}\sqrt{\E^{(L+1)}(\phi,\cdot)}+\epsilon a^{-3}\sqrt{\E^{(L+1)}(N,\cdot)}+\epsilon a^{-3-c\sqrt\epsilon}\sqrt{\E^{(\leq L-1)}(N,\cdot)}\\
&\,+\change{\epsilon}a^{-3-c\sqrt{\epsilon}}\sqrt{\E^{(\leq L-1)}(\phi,\cdot)}+\change{\epsilon a^{-3}\sqrt{\E^{(L+1)}(\Sigma,\cdot)}+\epsilon a^{-3-c\sqrt{\epsilon}}\sqrt{\E^{(L-1)}(\Sigma,\cdot)}}\\
&\,+\epsilon^2 a^{-3-c\sqrt{\epsilon}}\sqrt{\E^{(\leq L-2)}(\Ric,\cdot)}\\
\numberthis\label{eq:L2-Border-Q-odd}\|\mathfrak{Q}_{L+1,Border}\|_{L^2_G}\lesssim&\,\epsilon a^{-3}\sqrt{\E^{(L+2)}(N,\cdot)}+\epsilon a^{-3-c\sqrt{\epsilon}}\sqrt{\E^{(L+1)}(N,\cdot)}\\
&\,+\change{\epsilon} a^{-3}\sqrt{a^{-4}\E^{(L+1)}(\phi,\cdot)}+\change{\epsilon}a^{-3-c\sqrt{\epsilon}}\sqrt{a^{-4}\E^{(\leq L-1)}(\phi,\cdot)}\\
\numberthis\label{eq:L2-Border-S}\|\mathfrak{S}_{L,Border}\|_{L^2_G}\lesssim&\,\epsilon a^{-3}\sqrt{\E^{(L)}(\Sigma,\cdot)}+\epsilon a^{-3}\sqrt{\E^{(L)}(N,\cdot)}+\epsilon^2a^{-3}\sqrt{\E^{(L-2)}(\Ric,\cdot)}\\
&\,+\epsilon a^{-3-c\sqrt{\epsilon}}\sqrt{\E^{(\leq L-2)}(\Sigma,\cdot)}+\epsilon a^{-3-c\sqrt{\epsilon}}\sqrt{\E^{(\leq L-2)}(N,\cdot)}\\
&\,+\underbrace{\epsilon^2a^{-3-c\sqrt{\epsilon}}\sqrt{\E^{(\leq L-4)}(\Ric,\cdot)}}_{\text{not present for }L=2}\\
\numberthis\label{eq:L2-Border-R-even}\|\mathfrak{R}_{L,Border}\|_{L^2_G}\lesssim&\,\epsilon a^{-3}\sqrt{\E^{(L+2)}(N,\cdot)}+\epsilon a^{-3-c\sqrt{\epsilon}}\sqrt{\E^{(\leq L)}(N,\cdot)}\\
&\,+\epsilon a^{-3}\sqrt{\E^{(L)}(\Ric,\cdot)}+\epsilon a^{-3-c\sqrt{\epsilon}}\sqrt{\E^{(\leq L-2)}(\Ric,\cdot)}\\
\numberthis\label{eq:L2-Border-R-odd}\|\mathfrak{R}_{L+1,Border}\|_{L^2_G}\lesssim&\,\epsilon a^{-3}\sqrt{\E^{(L+3)}(N,\cdot)}+\epsilon a^{-3-c\sqrt{\epsilon}}\sqrt{\E^{(\leq L+1)}(N,\cdot)}\\
&\,+\epsilon a^{-3}\sqrt{\E^{(L+1)}(\Ric,\cdot)}+\epsilon a^{-3-c\sqrt{\epsilon}}\sqrt{\E^{(\leq L-1)}(\Ric,\cdot)}
\end{align*}
\vspace{-1em}
\begin{align*}
\numberthis\label{eq:L2-Border-BR}&\,\|\mathfrak{E}_{L,Border}\|_{L^2_G}+\|\mathfrak{B}_{L,Border}\|_{L^2_G}\\
\lesssim&\,\epsilon a^{-3}\sqrt{\E^{(L)}(\phi,\cdot)}+\epsilon a^{-3-c\sqrt{\epsilon}}\sqrt{\E^{(\leq L-2)}(\phi,\cdot)}\\
&\,+\epsilon a^{-3}\left(\sqrt{\E^{(L)}(N,\cdot)}+\sqrt{\E^{(L)}(\Sigma,\cdot)}\right)+\epsilon a^{-3}\sqrt{\E^{(L)}(W,\cdot)}\\
&\,+\epsilon a^{-3-c\sqrt{\epsilon}}\left(\sqrt{\E^{(\leq L-2)}(N,\cdot)}+\sqrt{\E^{(\leq L-2)}(\Sigma,\cdot)}+\sqrt{\E^{(\leq L-2)}(W,\cdot)}\right)\\
&\,+\epsilon^2a^{-3}\sqrt{\E^{(L-2)}(\Ric,\cdot)}+\underbrace{\epsilon^2a^{-3-c\sqrt{\epsilon}}\sqrt{\E^{(\leq L-4)}(\Ric,\cdot)}}_{\text{not present for }L=2}
\end{align*}
\delete{Further, we have the alternative estimates [...]}
\end{subequations}
\end{lemma}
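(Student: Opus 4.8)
The plan is to treat each borderline error term from Definition~\ref{def:error-terms} separately, but with a uniform recipe: read off its schematic decomposition, bound it pointwise by products of $|\cdot|_G$-norms of the constituent factors (so that all $G$-contractions and constant prefactors are harmless, as in Section~\ref{subsubsec:schematic-notation}), place every \emph{low-order} factor in $C^0_G$ or a higher $C_G$-seminorm, and leave the single remaining \emph{high-order} factor in $L^2_G$, which is then converted to an energy via the near-coercivity estimates of Lemma~\ref{lem:Sobolev-norm-equivalence-improved}. First I would fix the pointwise inputs for the low-order factors. The \emph{sharp} bounds $\|\Sigma\|_{C^0_G},\|\Psi\|_{C^0_G},\|\RE\|_{C^0_G}\lesssim\epsilon$ from Lemma~\ref{lem:APzero} are crucial for the genuinely borderline pieces, e.g.\ the $\Sigma\ast\Lap^{\frac L2}\Sigma$-term in \eqref{eq:comeq-Ham-border}, the $\Psi\ast\Lap^{\frac L2}\Psi$-term in \eqref{eq:comeq-lapse-border}, and the $(\Psi+C)^2\Lap^{\frac L2}\Sigma$- and $\RE\times\Lap^{\frac L2}\Sigma$-pieces in \eqref{eq:comeq-RE-border}; for all higher $C_G$-seminorms appearing in these terms, Lemma~\ref{lem:AP} provides $\lesssim\sqrt\epsilon\,a^{-c\sqrt\epsilon}$, and the lapse obeys the stronger bootstrap bound \eqref{eq:BsN}. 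Combined with $\dot a\simeq a^{-2}$, $\tau\simeq a^{-1}$ and the Friedman identities \eqref{eq:Friedman}--\eqref{eq:Friedman2}, these reproduce exactly the $a$-powers $a^{-3}$, $a^{-4}$, resp.\ the scaled $a^{-6}\sqrt{a^4(\cdot)}$ for the odd-order lapse terms, claimed in each estimate.

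The middle step is the passage to energies. For a factor $\nabla^{I}\mathfrak{T}$ with $I\le L$: when $I=L$, one bounds $\|\nabla^L\mathfrak{T}\|_{L^2_G}$ via \eqref{eq:Sobolev-norm-equiv-T2l}--\eqref{eq:Sobolev-norm-equiv-T2l+1} (or the scalar analogues \eqref{eq:Sobolev-norm-equiv-zeta2l}--\eqref{eq:Sobolev-norm-equiv-nablazeta2l+1}) by $\sqrt{\E^{(L)}(\mathfrak{T},\cdot)}$ plus an $a^{-c\sqrt\epsilon}$-weighted sum of lower-order energies of $\mathfrak{T}$ together with a curvature remainder $\|\mathfrak{T}\|_{C_G^{\cdot}}\sqrt{\E^{(\le L-2)}(\Ric,\cdot)}$; the $C_G$-norm of $\mathfrak{T}$ here is $\lesssim\sqrt\epsilon\,a^{-c\sqrt\epsilon}$ by Lemma~\ref{lem:AP}, which is precisely why curvature energies enter these bounds multiplied by a spare power of $\epsilon$ (or $\epsilon^2$, once an already-present low-order prefactor is folded in) and an $a^{-c\sqrt\epsilon}$, and only at order at most $L-2$. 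When $I<L$ one uses the same estimate at the lower order, losing only an $a^{-c\sqrt\epsilon}$ and landing in the $\E^{(\le L-2)}$-terms. Odd-order energies are eliminated via \eqref{eq:ibp-trick}--\eqref{eq:drop-odd-en}, and any divergent prefactor $a^{-c\sqrt\epsilon}$ in front of a top-order energy is redistributed onto strictly lower orders exactly as in the proof of Corollary~\ref{cor:ell-lapse-L}. For the Bel-Robinson borderline estimate \eqref{eq:L2-Border-BR} one additionally uses \eqref{eq:REEqConstrE}/\eqref{eq:comeq-Ham-BR} to trade a high-order Ricci factor for $a^{-4}\RE$ plus $a^{-1}\Sigma$, so that no curvature energy above order $L-2$ survives. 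Finally, the commutator ``junk'' pieces $\mathfrak{J}(\cdots)$ embedded in the borderline terms (via \eqref{eq:comeq-Psi-even-border}, \eqref{eq:comeq-RE-border}, \eqref{eq:comeq-RB-border}, etc.) are handled by inserting the explicit expansions of Lemma~\ref{lem:comm-space} and Lemma~\ref{lem:com-time}: each such summand carries a factor of $\Sigma$, $N$ or $\Ric[G]+\frac29 G$ at low order (hence a small $C_G$-bound) and at most $L$ derivatives on the remaining factor, so it is dominated by terms already listed.

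The main obstacle is bookkeeping rather than conceptual: one must verify, term by term and for each of the roughly a dozen borderline tensors, that the accumulated $a$-powers never exceed the advertised $a^{-3}$, $a^{-4}$, resp.\ $a^{-6}$ at the top-order energy and $a^{-3-c\sqrt\epsilon}$ (resp.\ $a^{-4-c\sqrt\epsilon}$) at lower orders, and that curvature contributions always appear with an extra $\epsilon$-power and at order $\le L-2$ --- this is exactly what makes the subsequent Gronwall hierarchy in Proposition~\ref{prop:en-bs-imp} close. The genuinely delicate cases are the scalar-field terms $\mathfrak{P}_{L,Border},\mathfrak{Q}_{L,Border}$ and the Bel-Robinson terms $\mathfrak{E}_{L,Border},\mathfrak{B}_{L,Border}$, where pieces of the type $\div_G\Lap^{\frac L2-1}\Sigma$ and $\nabla^2\Lap^{\frac L2-1}\RE$ sit one derivative above the ``natural'' order of $\Sigma$, resp.\ $\RE$; these must be reabsorbed using \eqref{eq:Sobolev-norm-equiv-T2l} \emph{at order $L$} (not $L+2$), together with the sharp $C^0_G$-bounds of Lemma~\ref{lem:APzero} on the accompanying low-order factor, so as not to lose a factor of $a^{-c\sqrt\epsilon}$ at leading order.
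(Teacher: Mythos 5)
Your proposal follows exactly the paper's recipe: $L^2_G$--$L^\infty_G$ Hölder on the schematic products, sharp $C^0_G$ control of $\Sigma,\Psi,\RE$ from Lemma~\ref{lem:APzero} and $C^l_G$ control from Lemma~\ref{lem:AP} (plus \eqref{eq:BsN} for the lapse) on low-order factors, near-coercivity via Lemma~\ref{lem:Sobolev-norm-equivalence-improved} to translate the remaining high-order $L^2_G$-norms to energies with $\epsilon a^{-c\sqrt\epsilon}$-weighted curvature errors, \eqref{eq:ibp-trick} for odd orders, and $\dot a\simeq a^{-2}$ from \eqref{eq:Friedman}. One small correction: your suggestion to use \eqref{eq:REEqConstrE}/\eqref{eq:comeq-Ham-BR} to trade a high-order Ricci factor inside $\mathfrak{E}_{L,Border}$, $\mathfrak{B}_{L,Border}$ is not needed (nor done) here --- those borderline terms already contain Ricci only at orders $L-2$ and $L-3$ by construction; the constraint substitution is used instead in the energy estimate for $\Sigma$ (Lemma~\ref{lem:en-est-Sigma}).
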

\begin{proof}
All of these estimates follow from applying $L^2_G$-$L^\infty_G$-type Hölder estimates to the individual nonlinear terms. The lower order terms are either controlled by the zero order estimates in subsection \ref{subsec:APlow} or the a priori estimates in Lemma \ref{lem:AP}. Furthermore, we apply Lemma \ref{lem:Sobolev-norm-equivalence-improved}, along with again Lemma \ref{lem:AP}, to translate $L^2_G$-norms into energies up to additional curvature energy terms. For the sake of simplicity, we always estimate $\frac{\dot{a}}a$ by $a^{-3}$ up to constant (see \eqref{eq:Friedman}), and liberally apply \eqref{eq:ibp-trick} to deal with odd order energies and to distribute $a^{-c\sqrt{\epsilon}}$ factors to lower orders while updating $c>0$ wherever this is convenient.
\end{proof}

\begin{lemma}[Estimates for top order error terms]
\begin{subequations}
\begin{align*}
\numberthis\label{eq:L2-top-E}\|\mathfrak{E}_{L,top}\|_{L^2_G}\lesssim&\,\sqrt{\epsilon} a^{1-c\sqrt{\epsilon}}\sqrt{\E^{(L-1)}(\Ric,\cdot)}=\sqrt{\epsilon}a^{-1-c\sqrt{\epsilon}}\sqrt{a^4\E^{(L-1)}(\Ric,\cdot)}\\
\numberthis\label{eq:L2-top-B}\|\mathfrak{B}_{L,top}\|_{L^2_G}\lesssim&\,\epsilon a^{-1}\sqrt{\E^{(L-1)}(\Ric,\cdot)}=\change{\epsilon a^{-3}\sqrt{a^4\E^{(L-1)}(\Ric,\cdot)}}
\end{align*}
\end{subequations}
\end{lemma}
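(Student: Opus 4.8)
The plan is to bound the top order error terms $\mathfrak{E}_{L,top}$ and $\mathfrak{B}_{L,top}$ from their explicit definitions in \eqref{eq:comeq-RE-top} and \eqref{eq:comeq-RB-top} by a straightforward $L^2_G$--$L^\infty_G$ Hölder splitting, placing the high order curvature factor in $L^2_G$ and everything else in $L^\infty_G$ via the a priori supremum estimates from Lemma \ref{lem:AP}. Recall
\[
\mathfrak{E}_{L,top}=a^{-1}(N+1)\epsilonLC[G]\ast \RB\ast\nabla\Lap^{\frac{L}2-1}\Ric[G]+a(N+1)(\Psi+C)\nabla\Lap^{\frac{L}2-1}\Ric[G]\ast\nabla\phi\,,
\]
so both summands are (lapse or matter factor) $\ast$ (one high order factor $\nabla\Lap^{\frac{L}2-1}\Ric[G]$). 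For the first summand I would use $\|N+1\|_{C^0_G}\lesssim 1$ by Lemma \ref{lem:lapse-maxmin}, $\lvert\epsilonLC[G]\rvert_G=\sqrt{6}$ by \eqref{eq:LCS-contr3}, and $\|\RB\|_{C^0_G}\lesssim \epsilon a^{2-c\sqrt{\epsilon}}$ by \eqref{eq:APmidB}; this yields a bound $\lesssim \epsilon a^{1-c\sqrt{\epsilon}}\|\nabla\Lap^{\frac{L}2-1}\Ric[G]\|_{L^2_G}$. For the second summand I would use $\|\Psi+C\|_{C^0_G}\lesssim 1$ (from \eqref{eq:APPsi}) and $\|\nabla\phi\|_{C^0_G}\lesssim\sqrt{\epsilon}a^{-c\sqrt{\epsilon}}$ by \eqref{eq:APmidphi}, giving $\lesssim \sqrt{\epsilon}a^{1-c\sqrt{\epsilon}}\|\nabla\Lap^{\frac{L}2-1}\Ric[G]\|_{L^2_G}$, which dominates the first. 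For $\mathfrak{B}_{L,top}$, from \eqref{eq:comeq-RB-top} the two summands carry factors $a^3\ast\nabla\phi\ast\nabla\phi$ and $a^{-1}\ast\RE$ respectively; using \eqref{eq:APmidphi} twice on the former gives $\lesssim \epsilon a^{3-c\sqrt{\epsilon}}\|\nabla\Lap^{\frac{L}2-1}\Ric[G]\|_{L^2_G}$, and using \eqref{eq:APE} on the latter gives $\lesssim \epsilon a^{-1}\|\nabla\Lap^{\frac{L}2-1}\Ric[G]\|_{L^2_G}$, so the $a^{-1}$ term dominates.

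The remaining step is to convert $\|\nabla\Lap^{\frac{L}2-1}\Ric[G]\|_{L^2_G}$ into the energy $\sqrt{\E^{(L-1)}(\Ric,\cdot)}$. By definition $\E^{(L-1)}(\Ric,\cdot)$, being an odd order energy, controls $\|\Ric[G]+\frac29G\|_{\dot H^{L-1}_G}^2$ after integration by parts (compare \eqref{eq:energydef-ibp}); since $\nabla\Lap^{\frac{L}2-1}$ is a differential operator of order $L-1$ applied to $\Ric[G]$, and $\nabla\Lap^{\frac{L}2-1}G=0$, we have $\nabla\Lap^{\frac{L}2-1}\Ric[G]=\nabla\Lap^{\frac{L}2-1}(\Ric[G]+\frac29 G)$, and its $L^2_G$ norm is bounded by $\sqrt{\E^{(L-1)}(\Ric,\cdot)}$ up to constant (using, if one wants to be careful about commuting $\nabla$ past $\Lap$, Lemma \ref{lem:Sobolev-norm-equivalence-improved} together with \eqref{eq:drop-odd-en}, which introduces only lower-order curvature energies absorbed into the same bound after updating $c$). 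Finally I would rewrite the prefactors in the claimed form: $\sqrt{\epsilon}a^{1-c\sqrt{\epsilon}}=\sqrt{\epsilon}a^{-1-c\sqrt{\epsilon}}\cdot a^{2}$ and since $a^4\E^{(L-1)}(\Ric,\cdot)$ appears, $\sqrt{\epsilon}a^{1-c\sqrt{\epsilon}}\sqrt{\E^{(L-1)}(\Ric,\cdot)}=\sqrt{\epsilon}a^{-1-c\sqrt{\epsilon}}\sqrt{a^4\E^{(L-1)}(\Ric,\cdot)}$, and likewise $\epsilon a^{-1}\sqrt{\E^{(L-1)}(\Ric,\cdot)}=\epsilon a^{-3}\sqrt{a^4\E^{(L-1)}(\Ric,\cdot)}$, giving exactly \eqref{eq:L2-top-E} and \eqref{eq:L2-top-B}.

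There is essentially no hard part here: these are the ``top'' terms precisely because they are harmless (junk) at low order and only become borderline at top order, where the point is simply that the divergence in $a$ is mild enough. The only mild subtlety I would be careful about is the bookkeeping of commutators between $\nabla$ and $\Lap^{\frac{L}2-1}$ when identifying $\|\nabla\Lap^{\frac{L}2-1}\Ric[G]\|_{L^2_G}$ with $\sqrt{\E^{(L-1)}(\Ric,\cdot)}$: the commutator formulas in Lemma \ref{lem:comm-space} generate extra curvature products of the form $\nabla^{I}\Ric[G]\ast\cdots$, but all of these are of total order $\le L-2$ and, under the bootstrap and a priori bounds, are strictly subleading and can be folded into the stated right hand side after updating the implicit constant $c$. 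One should also note that the estimate is stated only schematically, so the matching of the $\epsilonLC[G]$ contraction (contributing a harmless $\sqrt{6}$) and of contractions with $G,G^{-1}$ (each contributing $\sqrt{3}$, cf. Section \ref{subsubsec:schematic-notation}) is absorbed into $\lesssim$; nothing further is needed. Hence the proof is short: expand \eqref{eq:comeq-RE-top}--\eqref{eq:comeq-RB-top}, apply \eqref{eq:APmidB}, \eqref{eq:APE}, \eqref{eq:APmidphi}, \eqref{eq:APPsi} and Lemma \ref{lem:lapse-maxmin} to the low order factors, recognise the surviving high order factor as $\sqrt{\E^{(L-1)}(\Ric,\cdot)}$, and rewrite the powers of $a$.
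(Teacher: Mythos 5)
Your argument is correct and is exactly the paper's approach: expand \eqref{eq:comeq-RE-top}--\eqref{eq:comeq-RB-top}, bound the low-order factors in $C^0_G$ via \eqref{eq:APE}, \eqref{eq:APmidB}, \eqref{eq:APmidphi}, \eqref{eq:APPsi} and Lemma \ref{lem:lapse-maxmin}, and identify the remaining $\|\nabla\Lap^{\frac{L}2-1}\Ric[G]\|_{L^2_G}$ with $\sqrt{\E^{(L-1)}(\Ric,\cdot)}$. One minor over-complication: since $L-1$ is odd and $\nabla\Lap^{\frac{L}{2}-1}G=0$, integration by parts in the definition of $\E^{(L-1)}(\Ric,\cdot)$ gives $\|\nabla\Lap^{\frac{L}{2}-1}(\Ric[G]+\tfrac29 G)\|_{L^2_G}^2=\E^{(L-1)}(\Ric,\cdot)$ as an exact identity, so no invocation of Lemma \ref{lem:Sobolev-norm-equivalence-improved} or commutator bookkeeping is required at this step.
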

\begin{proof}
This follows directly using \eqref{eq:APE} and \eqref{eq:APmidB} for the Bel-Robinson terms as well as \eqref{eq:APmidphi}.
\end{proof}

\begin{lemma}[Junk terms] Recalling $\parallel$-notation from Remark \ref{rem:notation-parallel}, the following hold:
\begin{subequations}
\begin{align*}
\numberthis\label{eq:L2-junk-M}\|\mathfrak{M}_{L,Junk}\|_{L^2_G}\lesssim
&\,\epsilon a^{-2-c\sqrt{\epsilon}}\sqrt{\E^{(\leq L-1)}(\phi,\cdot)}+a^{-2-c\sqrt{\epsilon}}\sqrt{\E^{(\leq L-2)}(\phi,\cdot)}
\\
&\,+a^{-c\sqrt{\epsilon}}\sqrt{\E^{(\leq L-1)}(\Sigma,\cdot)}+\sqrt{\epsilon}a^{-c\sqrt{\epsilon}}\sqrt{\E^{(\leq L-2)}(\Ric,\cdot)}\\
\numberthis\label{eq:L2-junk-Mtilde}\|\tilde{\mathfrak{M}}_{L,Junk}\|_{L^2_G}\lesssim&\,\epsilon a^{-1-c\sqrt{\epsilon}}\sqrt{\E^{(\leq L-2)}(\Ric,\cdot)}+a^{-1-c\sqrt{\epsilon}}\sqrt{\E^{(\leq L-1)}(\Sigma,\cdot)}\\[1em]
\numberthis\label{eq:L2-junk-H-par}\|\mathfrak{H}_{L,Junk}^\parallel\|_{L^2_G}\lesssim&\,\epsilon a^{-4-c\sqrt{\epsilon}}\sqrt{\E^{(\leq L-2)}(\Sigma,\cdot)}+\sqrt{\epsilon}a^{-2-c\sqrt{\epsilon}}\sqrt{\E^{(\leq L)}(\phi,\cdot)}\\
&\,+\epsilon a^{-c\sqrt{\epsilon}}\sqrt{\E^{(\leq L-2)}(\Ric,\cdot)}\\
\numberthis\label{eq:L2-junk-N}\|\mathfrak{N}_{L,Junk}\|_{L^2_G}\lesssim&\,\epsilon a^{-4-c\sqrt{\epsilon}}\sqrt{\E^{(\leq L-2)}(N,\cdot)}+\epsilon a^{-c\sigma}\sqrt{\E^{(L)}(\phi,\cdot)}\\
&\,+\epsilon a^{-4-c\sqrt{\epsilon}}\left[\sqrt{\E^{(\leq L-2)}(\phi,\cdot)}+\sqrt{\E^{(\leq L-2)}(\Sigma,\cdot)}\right]\\
&\,+\underbrace{\epsilon a^{-4}\sqrt{\E^{(\leq L-2)}(\Ric,\cdot)}+\epsilon a^{-4-c\sqrt{\epsilon}}\sqrt{\E^{(\leq L-4)}(\Ric,\cdot)}}_{\text{not present for }L=2}\\
\numberthis\label{eq:L2-junk-N-odd}\|\mathfrak{N}_{L+1,Junk}\|_{L^2_G}\lesssim&\,\epsilon a^{-4-c\sqrt{\epsilon}}\sqrt{\E^{(\leq L-1)}(N,\cdot)}+\epsilon a^{-c\sigma}\left(\sqrt{\E^{(L+1)}(\phi,\cdot)}+\sqrt{\E^{(L+1)}(\Sigma,\cdot)}\right)\\
&\,+\epsilon a^{-4-c\sqrt{\epsilon}}\left[\sqrt{\E^{(\leq L-1)}(\phi,\cdot)}+\sqrt{\E^{(\leq L-1)}(\Sigma,\cdot)}\right]\\
&\,+\epsilon^2a^{-4}\sqrt{\E^{(\leq L-1)}(\Ric,\cdot)}+\underbrace{\epsilon^2a^{-4-c\sqrt{\epsilon}}\sqrt{\E^{(\leq L-3)}(\Ric,\cdot)}}_{\text{not present for }L=2}\\
\numberthis\label{eq:L2-junk-P-even}\|\mathfrak{P}_{L,Junk}\|_{L^2_G}\lesssim&\,\epsilon a^{1-c\sigma}\sqrt{\E^{(L)}(\phi,\cdot)}+\epsilon a^{-3-c\sqrt{\epsilon}}\sqrt{\E^{(\leq L-2)}(\phi,\cdot)}\\
&\,+\epsilon a^{-3-c\sqrt{\epsilon}}\sqrt{\E^{(\leq L-2)}(\Sigma,\cdot)}+\sqrt{\epsilon}a^{1-c\sqrt{\epsilon}}\sqrt{\E^{(L)}(N,\cdot)}\\
&\,+\left[\epsilon a^{-3-c\sqrt{\epsilon}}+\sqrt{\epsilon}a^{-1-c\sqrt{\epsilon}}\right]\sqrt{\E^{(\leq L-2)}(N,\cdot)}\\
&\,+\underbrace{\epsilon^2a^{1-c\sigma}\sqrt{\E^{(\leq L-2)}(\Ric,\cdot)}+\epsilon^2 a^{-3-c\sqrt{\epsilon}}\sqrt{\E^{(\leq L-3)}(\Ric,\cdot)}
}_{\text{not present for }L=2}\\
\numberthis\label{eq:L2-junk-P-odd}\|\mathfrak{P}_{L+1,Junk}\|_{L^2_G}\lesssim&\,\epsilon a^{1-c\sigma}\sqrt{\E^{(L+1)}(\phi,\cdot)}+\epsilon a^{-3-c\sqrt{\epsilon}}\sqrt{\E^{(\leq L-1)}(\phi,\cdot)}\deletemath{+\epsilon a^{-3-c\sqrt{\epsilon}}\|\nabla\phi\|_{H^{L-2}_G}}\\
&\,+\epsilon a^{-3-c\sqrt{\epsilon}}\sqrt{\E^{(\leq L-1)}(\Sigma,\cdot)}+\sqrt{\epsilon}a^{1-c\sqrt{\epsilon}}\sqrt{\E^{(L+1)}(N,\cdot)}\\
&\,+\left[\epsilon a^{-3-c\sqrt{\epsilon}}+\sqrt{\epsilon}a^{-1-c\sqrt{\epsilon}}\right]\sqrt{\E^{(\leq L-1)}(N,\cdot)}\\
&\,+\epsilon^2a^{-1-c\sigma}\sqrt{a^4\E^{(L-1)}(\Ric,\cdot)}\\
&\,+\left(\epsilon^2a^{-1-c\sigma}+\epsilon^2 a^{-3-c\sqrt{\epsilon}}\right)\sqrt{\E^{(\leq L-2)}(\Ric,\cdot)}\\
\numberthis\label{eq:L2-junk-Q-even}\|\mathfrak{Q}_{L,Junk}\|_{L^2_G}\lesssim&\,
\change{\epsilon a^{-1-c\sqrt{\epsilon}}\sqrt{a^{-4}\E^{(L)}(\phi,\cdot)}+\epsilon a^{-3-c\sqrt{\epsilon}}\sqrt{\E^{(\leq L-2)}(\phi,\cdot)}}\\
&\,+\change{\sqrt{\epsilon}a^{-3-c\sqrt{\epsilon}}\sqrt{\E^{(\leq L)}(\Sigma,\cdot)}}+\sqrt{\epsilon}a^{-3-c\sqrt{\epsilon}}\sqrt{\E^{(\leq L)}(N,\cdot)}\\
&\,+\underbrace{\epsilon a^{-3-c\sqrt{\epsilon}}\sqrt{\E^{(\leq L-2)}(\Ric,\cdot)}}_{\text{not present for }L=2}\\
\numberthis\label{eq:L2-junk-Q-1}\|\mathfrak{Q}_{1,Junk}\|\lesssim&\,\epsilon a^{-3-c\sqrt{\epsilon}}\sqrt{\E^{(1)}(N,\cdot)}+\epsilon^\frac32a^{-3-c\sqrt{\epsilon}}\|\nabla\phi\|_{L^2_G}+\change{\sqrt{\epsilon}a^{-3-c\sqrt{\epsilon}}\sqrt{\E^{(\leq 1)}(\Sigma,\cdot)}}\\
\numberthis\label{eq:L2-junk-Q-odd}\|\mathfrak{Q}_{L+1,Junk}\|_{L^2_G}\lesssim&\,
\change{\epsilon a^{-1-c\sqrt{\epsilon}}\sqrt{a^{-4}\E^{(L+1)}(\phi,\cdot)}+\epsilon a^{-3-c\sqrt{\epsilon}}\E^{(\leq L-1)}(\phi,\cdot)}\\
&\,\change{+\sqrt{\epsilon}a^{-3-c\sqrt{\epsilon}}\sqrt{\E^{(\leq L+1)}(\Sigma,\cdot)}}+\sqrt{\epsilon} a^{-3-c\sqrt{\epsilon}}\sqrt{\E^{(\leq L+1)}(N,\cdot)}\\
&\,+\epsilon a^{-3-c\sqrt{\epsilon}}\sqrt{\E^{(\leq L-1)}(\Ric,\cdot)}\\
\numberthis\label{eq:L2-junk-S}\|\mathfrak{S}_{L,Junk}^\parallel\|_{L^2_G}\lesssim&\,\epsilon a^{1-c\sigma}\sqrt{\E^{(L)}(\Sigma,\cdot)}+\epsilon a^{-3-c\sqrt{\epsilon}}\sqrt{\E^{(\leq L-2)}(\Sigma,\cdot)}+\sqrt{\epsilon}a^{-1-c\sqrt{\epsilon}}\sqrt{\E^{(L)}(\phi,\cdot)}\\
&\,+\left(\epsilon a^{-3}+a^{1-c\sqrt{\epsilon}}\right)\sqrt{\E^{(\leq L)}(N,\cdot)}+\epsilon a^{5-c\sigma}\sqrt{\E^{(\leq L-1)}(\Ric,\cdot)}\\
&\,+\underbrace{\epsilon a^{-3}\sqrt{\E^{(L-2)}(\Ric,\cdot)}+\epsilon a^{-3-c\sqrt{\epsilon}}\sqrt{\E^{(\leq L-4)}(\Ric,\cdot)}}_{\text{not present for }L=2}\\
\numberthis\label{eq:L2-junk-R-even}\|\mathfrak{R}_{L,Junk}\|_{L^2_G}\lesssim&\,\epsilon^2a^{1-c\sigma}\sqrt{\E^{(\leq L-1)}(\Ric,\cdot)}+\epsilon a^{-3-c\sqrt{\epsilon}}\sqrt{\E^{(\leq L-2)}(\Ric,\cdot)}\\
&\,+\epsilon a^{1-c\sigma}\sqrt{\E^{(\leq L+2)}(\Sigma,\cdot)}+a^{-3-c\sqrt{\epsilon}}\sqrt{\E^{(\leq L)}(\Sigma,\cdot)}\\
&\,+a^{-3-c\sqrt{\epsilon}}\sqrt{\E^{(\leq L)}(N,\cdot)}\\
\numberthis\label{eq:L2-junk-R-odd}\|\mathfrak{R}_{L+1,Junk}\|_{L^2_G}\lesssim&\,\epsilon^2a^{1-c\sigma}\sqrt{\E^{(\leq L)}(\Ric,\cdot)}+\epsilon a^{-3-c\sqrt{\epsilon}}\sqrt{\E^{(\leq L-1)}(\Ric,\cdot)}\\
&\,+\epsilon a^{1-c\sigma}\sqrt{\E^{(\leq L+3)}(\Sigma,\cdot)}+a^{-3-c\sqrt{\epsilon}}\sqrt{\E^{(\leq L+1)}(\Sigma,\cdot)}\\
&\,+a^{-3-c\sqrt{\epsilon}}\sqrt{\E^{(\leq L+1)}(N,\cdot)}
\end{align*}
\vspace{-1em}
\begin{align*}
\numberthis\label{eq:L2-junk-BR-par}\|\mathfrak{E}_{L,Junk}^\parallel\|_{L^2_G}+\|\mathfrak{B}_{L,Junk}^\parallel\|_{L^2_G}\lesssim&\,\epsilon a^{-1-c\sqrt{\epsilon}}\sqrt{\E^{(\leq L)}(W,\cdot)}+\epsilon a^{-3-c\sqrt{\epsilon}}\sqrt{\E^{(\leq L-2)}(W,\cdot)}\\
&\,+\epsilon a^{-1-c\sqrt{\epsilon}}\sqrt{\E^{(L)}(\phi,\cdot)}+\left(\epsilon a^{-3-c\sqrt{\epsilon}}+a^{-1-c\sqrt{\epsilon}}\right)\sqrt{\E^{(\leq L-2)}(\phi,\cdot)}\\
&\,+\sqrt{\epsilon} a^{1-c\sqrt{\epsilon}}\sqrt{\E^{(\leq L)}(N,\cdot)}+\epsilon a^{-3-c\sqrt{\epsilon}}\sqrt{\E^{(\leq L-2)}(N,\cdot)}\\
&\,+\epsilon a^{-1-c\sigma}\sqrt{\E^{(L)}(\Sigma,\cdot)}+\epsilon a^{-3-c\sqrt{\epsilon}}\sqrt{\E^{(\leq L-2)}(\Sigma,\cdot)}\\
&\,+\epsilon a^{-3}\sqrt{\E^{(L-2)}(\Ric,\cdot)}+\epsilon a^{-3-c\sqrt{\epsilon}}\underbrace{\sqrt{\E^{(\leq L-4)}(\Ric,\cdot)}}_{\text{not present for }L=2}
\end{align*}
\end{subequations}
\end{lemma}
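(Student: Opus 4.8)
The plan is to prove each of the listed estimates exactly as in the proof of the borderline estimates of the previous lemma: take the explicit schematic form of the error term from Definition \ref{def:error-terms}, apply an $L^2_G$--$L^\infty_G$ Hölder splitting to each summand, and then convert the surviving Sobolev norm into an energy. Concretely, in a schematic product $\nabla^{I_1}\mathfrak{T}_1\ast\dots\ast\nabla^{I_k}\mathfrak{T}_k$ with $I_1+\dots+I_k$ at most the ambient order ($L$, $L+1$, $L+2$ or $L+3$), I would place the factor of highest differential order in $L^2_G$ and all remaining factors in $L^\infty_G$; since every low-order factor then has order well within the range covered by Lemma \ref{lem:AP}, it is controlled by a strong $C_G$-norm bound (and by the sharp base-level estimates of Lemma \ref{lem:APzero} for $\Sigma$, $\Psi$, $\RE$), the lapse factors are controlled by the bootstrap assumption \eqref{eq:BsN}, and $\RB$ by \eqref{eq:APmidB}. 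Factors of $\dot a/a$ are replaced by $a^{-3}$ up to constant via \eqref{eq:Friedman}, and all occurrences of $\epsilonLC[G]$, $\odot_G$, $\times_G$, $\wedge_G$ and $\curl_G$ are absorbed into constants (or one extra derivative, for $\curl_G$) using the pointwise bounds of Lemma \ref{lem:BelRobinsonLemmas}.

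Next I would turn the $L^2_G$-Sobolev norm of the high-order factor into an energy. When that factor is $\nabla^I\zeta$ with $\zeta\in\{\Psi,\phi,N\}$ or $\nabla^I\mathfrak{T}$ with $\mathfrak{T}\in\{\Sigma,\RE,\RB,\Ric[G]+\tfrac29 G\}$, Lemma \ref{lem:Sobolev-norm-equivalence-improved} bounds $\|\nabla^I(\cdot)\|_{L^2_G}^2$ by $\E^{(I)}(\cdot,\cdot)$ plus $a^{-c\sqrt\epsilon}$ times a sum of strictly lower-order energies and a curvature-energy term carrying a $C_G$-norm prefactor; that prefactor is again $\lesssim a^{-c\sqrt\epsilon}$ (or better) by Lemma \ref{lem:AP}, which is precisely the source of the $\sqrt\epsilon\,a^{-c\sqrt\epsilon}\sqrt{\E^{(\le\cdot)}(\Ric,\cdot)}$-type tails on the right-hand sides. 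I would then use \eqref{eq:ibp-trick} and \eqref{eq:drop-odd-en} to trade odd-order energies for pairs of even-order ones and to redistribute divergent $a^{-c\sqrt\epsilon}$ prefactors down to strictly lower orders (updating $c>0$ at each step), and repeatedly fold lower-order contributions into the $\E^{(\le\cdot)}$ notation. The commutator junk pieces $\mathfrak{J}(\,\cdot\,)$ are expanded using Lemma \ref{lem:comm-space}, Lemma \ref{lem:com-time} and Lemma \ref{lem:aux-comm}; each summand is again a schematic product of the above form with total order not exceeding the ambient order, so it feeds straight back into the same machinery. For the $\parallel$-variants ($\mathfrak{H}^\parallel$, $\mathfrak{S}^\parallel$, $\mathfrak{E}^\parallel$, $\mathfrak{B}^\parallel$) the pure-trace terms $\zeta\cdot G$ are discarded at the outset following Remark \ref{rem:notation-parallel}, which is what removes the corresponding $\|\nabla^I\zeta\|$-contributions.

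A few terms need slight extra care. In $\mathfrak{M}_{L,Junk}$, $\tilde{\mathfrak{M}}_{L,Junk}$ and the $\mathfrak{Q}_{\ast,Junk}$ the spatial derivatives of $\phi$ appear without the $a^4$-weight built into $\E^{(\cdot)}(\phi,\cdot)$, so the powers of $a^{-2}$ must be tracked explicitly and the top-order factor $\nabla^{I+1}\phi$ handled via \eqref{eq:Sobolev-norm-equiv-nablazeta2l+1} (at low order the bare bound \eqref{eq:APmidphi} suffices; where $\nabla\phi$ needs to be traded for $\Sigma$ one invokes Lemma \ref{lem:norm-est-nablaphi}). In $\mathfrak{S}_{L,Junk}^\parallel$ and $\mathfrak{R}_{\ast,Junk}$ the block $a\,[\Lap^{L/2},\nabla^2]N$ is not literally an energy integrand, so after expanding the commutator with \eqref{eq:[Lap-l,nabla2]SF} one applies Lemma \ref{lem:Sobolev-norm-equivalence-improved} to $\zeta=\Lap^{L/2}N$ to express $\|a\nabla^2\Lap^{L/2}N\|_{L^2_G}$ through $a\|\Lap^{L/2+1}N\|_{L^2_G}$ and lower-order lapse energies (which in the statement above are left explicit, so no appeal to Corollary \ref{cor:en-est-lapse} is yet needed). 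The main obstacle is purely organizational: there is no single hard inequality, but one must verify for every one of the dozens of schematic summands that the claimed power of $a$, the claimed power of $\epsilon$, and the claimed energy order are simultaneously respected — in particular that no curvature energy is produced above order $L-1$, which would break the hierarchy, and that each extra derivative costs exactly the advertised $a^{-c\sqrt\epsilon}$ and no more. The cleanest route, and the one I would present, is to first isolate a single ``schematic product lemma'' (a product of at most $k$ factors drawn from the listed variables, with total derivative order $\le m$, is bounded by the appropriate combination of energies of order $\le m$ plus lower-order curvature energies scaled by $a^{-c\sqrt\epsilon}$) and then read off each of the estimates \eqref{eq:L2-junk-M}--\eqref{eq:L2-junk-BR-par} as an immediate corollary.
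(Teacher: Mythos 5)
Your proposal is correct and takes essentially the same approach as the paper, which also proves this lemma by term-by-term Hölder splitting, the a priori and bootstrap $C_G$-bounds for low-order factors, Lemma \ref{lem:Sobolev-norm-equivalence-improved} to convert the surviving $L^2_G$-norms into energies, and \eqref{eq:ibp-trick} to redistribute. The only nuance the paper adds that you omit is that in \eqref{eq:L2-junk-BR-par} one prefers the bootstrap bound on $\|\nabla\phi\|_{C_G}$ over \eqref{eq:APmidphi} to win an extra factor of $\sqrt{\epsilon}$ — a refinement, not a gap.
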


\begin{proof}
Once again, this follows by applying the a priori estimates from subsection \ref{subsec:APlow} and Lemma \ref{lem:AP}, as well as the bootstrap assumption \eqref{eq:BsN} for the lapse, to deal with the lower order terms in the nonlinearities, and then applying Lemma \ref{lem:Sobolev-norm-equivalence-improved} as well as \eqref{eq:ibp-trick} wherever this is necessary. Further, especially in \eqref{eq:L2-junk-BR-par}, it is often more convenient to use the bootstrap assumption for $\|\nabla\phi\|_{C_G}$ instead of the a priori estimate \eqref{eq:APmidphi} to gain higher powers of $\epsilon$ in prefactors. \\
Recognizing that every low order curvature term can be estimated up to constant by $a^{-c\sqrt{\epsilon}}$ at worst (see \eqref{eq:APmidRic}), we also note that any of the highly nonlinear curvature terms in $\mathfrak{J}$-expressions turn out to be negligible after updating $c$ compared to Ricci energies arising from applying Lemma \ref{lem:Sobolev-norm-equivalence-improved} or compared to junk terms in which $\Ric[G]$ is tracked explicitly. 
\end{proof}

\section{Appendix -- Future stability}\label{sec:appendix-fut}

Here, we collect the commutators in CMCSH gauge necessary to study the commuted scalar-field equations:

\begin{lemma}[Commutator formulas for future stabilty]\label{lem:fut-comm-formula} Let $\zeta$ be a scalar function on $\Sigma_T$. Then, the following formulas hold:
\begin{align*}
[\fdel,\nabla]\zeta=&\,0\\
[\fdel,\fLap]\zeta=&\,(\fdel(\fg^{-1})^{ab})\nabla_a\nabla_b\zeta-2(\fg^{-1})^{ab}\left(\div_{\fg}(\fn\fk)_a-2\nabla_a\fn\right)\nabla_b\zeta\deletemath{+\fX^k[\nabla_k,\fLap]\zeta}
\end{align*}
Schematically, for $k\in\N$, this implies
\begin{subequations}
\change{\begin{align*}
\numberthis\label{eq:[fdel,Lapk]}[\fdel,\fLap^k]\zeta=&\,\sum_{I_{\fn}+I_{\fk}+I_{\zeta}=2k-1}\nabla^{I_{\fn}}\fn\ast_{\fg}\nabla^{I_{\fk}}\fk\ast_{\fg}\nabla^{I_{\zeta}+1}\zeta\changefinal{\,+\sum_{I_{\fN}+I_{\zeta}=2k-1}\nabla^{I_{\fN}}\fN\ast_{\fg}\nabla^{I_\zeta+1}\zeta}\\
\numberthis\label{eq:[fdel,nablaLapk]}[\fdel,\nabla\fLap^k]\zeta=&\,\sum_{I_{\fn}+I_{\fk}+I_{\zeta}=2k}\nabla^{I_{\fn}}\fn\ast_{\fg}\nabla^{I_{\fk}}\fk\ast_{\fg}\nabla^{I_{\zeta}+1}\zeta\changefinal{\,+\sum_{I_{\fN}+I_{\zeta}=2k}\nabla^{I_{\fN}}\fN\ast_{\fg}\nabla^{I_\zeta+1}\zeta}
\end{align*}}
\end{subequations}
\end{lemma}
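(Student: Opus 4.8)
The plan is to establish the two first-order commutators $[\fdel,\nabla]\zeta$ and $[\fdel,\fLap]\zeta$ directly from the definition $\fdel=\del_T+\Lie_{\fX}$ together with the CMCSH evolution equations, and then to obtain the schematic higher-order identities \eqref{eq:[fdel,Lapk]}--\eqref{eq:[fdel,nablaLapk]} by iteration, in exactly the manner of the proof of Lemma \ref{lem:com-time}. For $[\fdel,\nabla]\zeta=0$: since $\zeta$ is a scalar, $\nabla_i\zeta=(d\zeta)_i$, and Cartan's formula gives $\Lie_{\fX}(d\zeta)=d(\iota_{\fX}d\zeta)=d(\Lie_{\fX}\zeta)$, while $\del_T$ commutes with $d$ on the fixed manifold $M$; hence $\fdel(d\zeta)=d(\del_T\zeta+\Lie_{\fX}\zeta)=\nabla(\fdel\zeta)$. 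This is the verbatim analogue of \eqref{eq:[del-t,nabla]zeta}.

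For the Laplacian commutator I would write $\fLap\zeta=(\fg^{-1})^{ab}\nabla_a\nabla_b\zeta$ in covariant form ($\nabla$ being the Levi-Civita connection of $\fg$) and apply the Leibniz rule for $\fdel$, producing a term carrying $\fdel(\fg^{-1})^{ab}$ --- which \eqref{eq:fut-eq-g-1} evaluates --- and a term $(\fg^{-1})^{ab}\fdel(\nabla_a\nabla_b\zeta)$. For the latter I would invoke the tensorial identity $\fdel(\nabla_a\omega_b)=\nabla_a(\fdel\omega_b)-(\fdel\Gamma)^c_{ab}\,\omega_c$, valid for any $1$-form $\omega$, with
\[(\fdel\Gamma)^c_{ab}=\tfrac12(\fg^{-1})^{cd}\big(\nabla_a\fdel\fg_{bd}+\nabla_b\fdel\fg_{ad}-\nabla_d\fdel\fg_{ab}\big),\]
which is the first line of \eqref{eq:REEqChr} with $\del_t$ replaced by $\fdel$; it holds because $\del_T\Gamma$ and $\Lie_{\fX}\Gamma$ are each given by this formula (with $\del_T\fg$, resp.\ $\Lie_{\fX}\fg$), both being genuine tensors. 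Applying this with $\omega=d\zeta$ and using $[\fdel,\nabla]\zeta=0$ yields $(\fg^{-1})^{ab}\fdel(\nabla_a\nabla_b\zeta)=\fLap(\fdel\zeta)-(\fg^{-1})^{ab}(\fdel\Gamma)^c_{ab}\nabla_c\zeta$. It then remains to contract $(\fg^{-1})^{ab}(\fdel\Gamma)^c_{ab}$; using $\nabla\fg^{-1}=0$ this equals $(\fg^{-1})^{cd}\big[(\fg^{-1})^{ab}\nabla_a\fdel\fg_{bd}-\tfrac12\nabla_d\big((\fg^{-1})^{ab}\fdel\fg_{ab}\big)\big]$, and inserting the metric evolution equation \eqref{eq:fut-eq-g}, $\fdel\fg_{ab}=2\fn\fk_{ab}+2\fN\fg_{ab}$, together with the constancy of $\mathrm{tr}_{\fg}\fk$, collapses these contractions onto the $\div_{\fg}(\fn\fk)$- and $\nabla\fn$-terms in the asserted formula.

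For \eqref{eq:[fdel,Lapk]}--\eqref{eq:[fdel,nablaLapk]} I would induct on $k$ via $[\fdel,\fLap^{k+1}]\zeta=[\fdel,\fLap](\fLap^k\zeta)+\fLap\big([\fdel,\fLap^k]\zeta\big)$. The first summand is the first-order formula applied to $\fLap^k\zeta$, with $\fdel(\fg^{-1})^{ab}$ and $\div_{\fg}(\fn\fk)$, $\nabla\fN$ re-expressed (via \eqref{eq:fut-eq-g-1} and $\fN=\tfrac{\fn}3-1$) as schematic products of $\fn$, $\fk$ and their covariant derivatives. The second summand is obtained by distributing $\fLap=(\fg^{-1})^{ab}\nabla_a\nabla_b$ over the inductive schematic expression via the Leibniz rule and $\nabla\fg=\nabla\fg^{-1}=0$; since every coefficient function only picks up further $\nabla$'s, a degree count shows that the total derivative order of the output is $2(k+1)-1$, split among $\fn$, $\fk$ and the (at least first) derivative of $\zeta$ --- exactly the claimed sum. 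The $\nabla\fLap^k$ version follows by applying one more covariant derivative.

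The step I expect to need the most care is confirming that --- in contrast with the Big Bang commutators in Lemma \ref{lem:com-time} --- no explicit spatial-curvature (``junk'') terms appear in the schematic identities. This works precisely because $\fLap$ acting on the scalar $\zeta$ never has to be commuted past a $\nabla$ falling on $\zeta$: every covariant derivative acts on an honest tensor, and the only reorderings are inside products of coefficient functions, where $\nabla\fg^{-1}=0$. The remaining genuine bookkeeping is to verify that both $\fdel\fg_{ab}$ and $\fdel(\fg^{-1})^{ab}$ reduce, via \eqref{eq:fut-eq-g}--\eqref{eq:fut-eq-g-1} and constancy of $\mathrm{tr}_{\fg}\fk$, to pure $\fn$--$\fk$ products, and to track the traces carefully so that the first-order coefficient emerges with the stated constants.
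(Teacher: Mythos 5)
Your overall plan is the right one and essentially the one the paper has in mind (the authors give only a one-line pointer to the Big Bang analogues). The derivation of $[\fdel,\nabla]\zeta=0$ via Cartan's formula and $\del_T\circ d=d\circ\del_T$ is clean, and using the tensorial variation identity
\[
\fdel(\nabla_a\omega_b)=\nabla_a(\fdel\omega_b)-(\fdel\Gamma)^c_{ab}\,\omega_c,\qquad (\fdel\Gamma)^c_{ab}=\tfrac12(\fg^{-1})^{cd}\bigl(\nabla_a\fdel\fg_{bd}+\nabla_b\fdel\fg_{ad}-\nabla_d\fdel\fg_{ab}\bigr),
\]
is exactly how one should produce the first-order formula; since $\del_T$ and $\Lie_{\fX}$ each have this representation, it does extend to $\fdel$. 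Your diagnosis of why no curvature ``junk'' arises is also correct in substance, though a cleaner way to say it is: because the future-stability energy estimates only need $H^m$-control of the commutator (cf.\ the bound in the proof of Lemma~\ref{lem:fut-en-est-ESF}), you never need to reorganize higher derivatives of $\zeta$ into the exact form $\nabla^I\fLap^m\zeta$ or $\div_{\fg}\fLap^m(\cdot)$ — it is that reorganization, used in Lemma~\ref{lem:com-time} to expose the structure required by the coercivity arguments, which generates the $\Ric[G]$ terms there. Once you forgo it, the only derivative bookkeeping is the Leibniz rule on products of honest tensors, and the degree count $I_{\fn}+I_{\fk}+(I_{\zeta}+1)=2k$ propagates through the induction exactly as you describe.

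One caveat you should be alert to when you ``track the traces carefully'': carrying out the contraction $(\fg^{-1})^{ab}(\fdel\Gamma)^c_{ab}$ with $\fdel\fg_{ab}=2\fn\fk_{ab}+2\fN\fg_{ab}$ and $\mathrm{tr}_{\fg}\fk=0$ gives
\[
(\fg^{-1})^{ab}(\fdel\Gamma)^c_{ab}=2\,\div_{\fg}(\fn\fk)^c-\nabla^c\fN=2\,\div_{\fg}(\fn\fk)^c-\tfrac13\nabla^c\fn\,,
\]
so the lower-order part of $[\fdel,\fLap]\zeta$ comes out as $-2\div_{\fg}(\fn\fk)^b\nabla_b\zeta+\tfrac13\nabla^b\fn\,\nabla_b\zeta$, whereas the lemma as stated carries the constant $-2(\dots-2\nabla_a\fn)\dots$, i.e.\ $+4\nabla^b\fn$. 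That looks like a typo in the paper (immaterial for the subsequent schematic estimate, since both versions have the same derivative structure), but your writeup should either reproduce the computed constant or flag the discrepancy rather than assert agreement with the printed one. Similarly, the $\fN$-proportional contributions (from $\fdel\fg^{-1}$ and $\nabla\fN$) are genuinely $\fk$-free, so they do not literally match the pattern $\nabla^{I_{\fn}}\fn\ast_{\fg}\nabla^{I_{\fk}}\fk\ast_{\fg}\nabla^{I_{\zeta}+1}\zeta$; the stated schematic should be read as allowing the $\fk$-factor to be trivially absent. Neither point affects the validity of your argument.
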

\begin{proof}
This follows from straightfoward computations, similar to Lemma \ref{lem:com-time-first} for the low order commutators and to Lemma \ref{lem:com-time} for higher orders.
\end{proof}


\bibliographystyle{alpha}

\bibliography{bibliography}

\newcommand{\etalchar}[1]{$^{#1}$}
\begin{thebibliography}{DHRW02}

\bibitem[AF20]{AndFaj20}
Lars Andersson and David Fajman.
\newblock Nonlinear stability of the {M}ilne model with matter.
\newblock {\em Comm. Math. Phys.}, 378(1):261--298, 2020.

\bibitem[AFF19]{Franzen18}
Artur Alho, Grigorios Fournodavlos, and Anne~T. Franzen.
\newblock The wave equation near flat
  {F}riedmann-{L}ema\^{\i}tre-{R}obertson-{W}alker and {K}asner {B}ig {B}ang
  singularities.
\newblock {\em J. Hyperbolic Differ. Equ.}, 16(2):379--400, 2019.

\bibitem[AM03]{AM03B}
Lars Andersson and Vincent Moncrief.
\newblock Elliptic-hyperbolic systems and the {E}instein equations.
\newblock {\em Ann. Henri Poincar\'{e}}, 4(1):1--34, 2003.

\bibitem[AM04]{AM03}
Lars Andersson and Vincent Moncrief.
\newblock Future complete vacuum spacetimes.
\newblock In Piotr~T. Chru{\'{s}}ciel and Helmut Friedrich, editors, {\em The
  Einstein Equations and the Large Scale Behavior of Gravitational Fields},
  pages 299--330, Basel, 2004. Birkh{\"a}user Basel.

\bibitem[AM11]{AM11}
Lars Andersson and Vincent Moncrief.
\newblock Einstein spaces as attractors for the {E}instein flow.
\newblock {\em J. Differential Geom.}, 89(1):1--47, 2011.

\bibitem[AMT97]{AMT97}
Lars Andersson, Vincent Moncrief, and Anthony~J. Tromba.
\newblock On the global evolution problem in {$2+1$} gravity.
\newblock {\em J. Geom. Phys.}, 23(3-4):191--205, 1997.

\bibitem[AR01]{AnRen01}
Lars Andersson and Alan~D. Rendall.
\newblock Quiescent cosmological singularities.
\newblock {\em Comm. Math. Phys.}, 218(3):479--511, 2001.

\bibitem[AR10]{AlRen10}
Paul~T. Allen and Alan~D. Rendall.
\newblock Asymptotics of linearized cosmological perturbations.
\newblock {\em J. Hyperbolic Differ. Equ.}, 7(2):255--277, 2010.

\bibitem[Bac19]{Bach19}
Alain Bachelot.
\newblock Wave asymptotics at a cosmological time-singularity: Classical and
  quantum scalar fields.
\newblock {\em Comm. Math. Phys.}, 369(3):973--1020, Feb 2019.

\bibitem[Bar78]{Bar78}
John~D. Barrow.
\newblock Quiescent cosmology.
\newblock {\em Nature}, 272:211--215, 1978.

\bibitem[Ber03]{Ber03}
Nicolas Bergeron.
\newblock Lefschetz properties for arithmetic real and complex hyperbolic
  manifolds.
\newblock {\em Int. Math. Res. Not.}, (20):1089--1122, 2003.

\bibitem[BF22]{BaFaj20}
Hamed Barzegar and David Fajman.
\newblock Stable cosmologies with collisionless charged matter.
\newblock {\em J. Hyperbolic Differ. Equ.}, 19(4):587--634, 2022.

\bibitem[BFK19]{BraFajKr19}
Volker Branding, David Fajman, and Klaus Kr\"{o}ncke.
\newblock Stable cosmological {K}aluza-{K}lein spacetimes.
\newblock {\em Comm. Math. Phys.}, 368(3):1087--1120, 2019.

\bibitem[BK73]{BK73}
VA~Belinskiǐ and IM~Khalatnikov.
\newblock Effect of scalar and vector fields on the nature of the cosmological
  singularity.
\newblock {\em Soviet Journal of Experimental and Theoretical Physics}, 36:591,
  1973.

\bibitem[BMP25]{BoMaPa25}
James Bonifacio, Dalimil Maz\'a{\v c}, and Sridip Pal.
\newblock Spectral {B}ounds on {H}yperbolic 3-{M}anifolds: {A}ssociativity and
  the {T}race {F}ormula.
\newblock {\em Comm. Math. Phys.}, 406(3):Paper No. 51, 2025.

\bibitem[BO24a]{BeyOl21}
Florian Beyer and Todd~A. Oliynyk.
\newblock Localized big bang stability for the {E}instein-scalar field
  equations.
\newblock {\em Arch. Ration. Mech. Anal.}, 248(1):Paper No. 3, 104, 2024.

\bibitem[BO24b]{BO24}
Florian Beyer and Todd~A. Oliynyk.
\newblock Relativistic perfect fluids near {K}asner singularities.
\newblock {\em Comm. Anal. Geom.}, 32(6):1701--1794, 2024.

\bibitem[BS07]{BooSt07}
Andrew~R. Booker and Andreas Str\"{o}mbergsson.
\newblock Numerical computations with the trace formula and the {S}elberg
  eigenvalue conjecture.
\newblock {\em J. Reine Angew. Math.}, 607:113--161, 2007.

\bibitem[Cal94]{Cal94}
Patrick~James Callahan.
\newblock {\em Spectral geometry of hyperbolic 3-manifolds}.
\newblock ProQuest LLC, Ann Arbor, MI, 1994.
\newblock Thesis (Ph.D.)--University of Illinois at Urbana-Champaign.

\bibitem[CBC02]{CB02}
Yvonne Choquet-Bruhat and Spiros Cotsakis.
\newblock Global hyperbolicity and completeness.
\newblock {\em J. Geom. Phys.}, 43(4):345--350, 2002.

\bibitem[CBG69]{CBGer69}
Yvonne Choquet-Bruhat and Robert~Paul Geroch.
\newblock Global aspects of the {C}auchy problem in general relativity.
\newblock {\em Communications in Mathematical Physics}, 14:329--335, 1969.

\bibitem[CBIM04]{CBIM04}
Y.~Choquet-Bruhat, J.~Isenberg, and V.~Moncrief.
\newblock Topologically general {U}(1) symmetric vacuum space-times with {AVTD}
  behavior.
\newblock {\em Nuovo Cimento Soc. Ital. Fis. B}, 119(7-9):625--638, 2004.

\bibitem[CBM01]{CBM01}
Y.~Choquet-Bruhat and V.~Moncrief.
\newblock Future global in time {E}insteinian spacetimes with {$\rm U(1)$}
  isometry group.
\newblock {\em Ann. Henri Poincar\'{e}}, 2(6):1007--1064, 2001.

\bibitem[CCM{\etalchar{+}}24]{StringCosm23}
Michele Cicoli, Joseph~P. Conlon, Anshuman Maharana, Susha Parameswaran,
  Fernando Quevedo, and Ivonne Zavala.
\newblock String cosmology: From the early universe to today.
\newblock {\em Physics Reports}, 1059:1--155, 2024.
\newblock String Cosmology: from the Early Universe to Today.

\bibitem[CI93]{ChrIs93}
Piotr~T. Chru\'{s}ciel and James Isenberg.
\newblock Nonisometric vacuum extensions of vacuum maximal globally hyperbolic
  spacetimes.
\newblock {\em Phys. Rev. D (3)}, 48(4):1616--1628, 1993.

\bibitem[CK90]{ChrKl90}
Demetrios Christodoulou and Sergiu Klainerman.
\newblock Asymptotic properties of linear field equations in minkowski space.
\newblock {\em Comm. Pure Appl. Math.}, 43:137--199, 1990.

\bibitem[CK93]{ChrKl93}
Demetrios Christodoulou and Sergiu Klainerman.
\newblock {\em The global nonlinear stability of the {M}inkowski space},
  volume~41 of {\em Princeton Mathematical Series}.
\newblock Princeton University Press, Princeton, NJ, 1993.

\bibitem[CLN06]{Chow06}
B.~Chow, P.~Lu, and L.~Ni.
\newblock {\em Hamilton's Ricci Flow}.
\newblock Graduate studies in mathematics. American Mathematical
  Society/Science Press, 2006.

\bibitem[CR95]{CR95}
Piotr~T. Chrusciel and Alan~D. Rendall.
\newblock Strong {C}osmic {C}ensorship in {V}acuum {S}pace-{T}imes with
  {C}ompact, {L}ocally {H}omogeneous {C}auchy {S}urfaces.
\newblock {\em Annals of Physics}, 242(2):349–385, Sep 1995.

\bibitem[CS99]{Cornish99}
Neil~J. Cornish and David~N. Spergel.
\newblock On the eigenmodes of compact hyperbolic 3-manifolds, 1999.
\newblock arXiv:9906017v1.

\bibitem[CZ95]{ChZ95}
Xu~Cheng and Detang Zhou.
\newblock {First eigenvalue estimate on Riemannian manifolds}.
\newblock {\em Hokk. Math. J.}, 24(3):453 -- 472, 1995.

\bibitem[DHRW02]{DHRW02}
Thibault Damour, Marc Henneaux, Alan~D. Rendall, and Marsha Weaver.
\newblock Kasner-{L}ike {B}ehaviour for {S}ubcritical {E}instein-{M}atter
  {S}ystems.
\newblock {\em Ann. Henri Poincar{\'e}}, 3:1049--1111, 2002.

\bibitem[Dra03]{Dra03}
Sever~Silvestru Dragomir.
\newblock {\em Some {G}ronwall type inequalities and applications}.
\newblock Nova Science Publishers, Inc., Hauppauge, NY, 2003.

\bibitem[Faj17]{Faj17}
David Fajman.
\newblock The nonvacuum {E}instein flow on surfaces of negative curvature and
  nonlinear stability.
\newblock {\em Comm. Math. Phys.}, 353(2):905--961, 2017.

\bibitem[Faj20]{Faj20}
David Fajman.
\newblock Future attractors in {$2+1$} dimensional {$\Lambda$} gravity.
\newblock {\em Phys. Rev. Lett.}, 125(12):121102, 6, 2020.

\bibitem[FB52]{FB52}
Yvonne Four{\`e}s-Bruhat.
\newblock Th{\'e}or{\`e}me d'existence pour certains syst{\`e}mes
  d'{\'e}quations aux d{\'e}riv{\'e}es partielles non lin{\'e}aires.
\newblock {\em Acta Mathematica}, 88:141--225, 1952.

\bibitem[FK20]{FajKr20}
David Fajman and Klaus Kr\"{o}ncke.
\newblock Stable fixed points of the {E}instein flow with positive cosmological
  constant.
\newblock {\em Comm. Anal. Geom.}, 28(7):1533--1576, 2020.

\bibitem[FL23]{FL23}
Grigorios Fournodavlos and Jonathan Luk.
\newblock Asymptotically {K}asner-like singularities.
\newblock {\em Amer. J. Math.}, 145(4):1183--1272, 2023.

\bibitem[FOW24]{FOW24}
David Fajman, Maximilian Ofner, and Zoe Wyatt.
\newblock Slowly expanding stable dust spacetimes.
\newblock {\em Arch. Ration. Mech. Anal.}, 248(5):Paper No. 83, 66, 2024.

\bibitem[FRS23]{RodSpFou20}
Grigorios Fournodavlos, Igor Rodnianski, and Jared Speck.
\newblock Stable big bang formation for {E}instein's equations: the complete
  sub-critical regime.
\newblock {\em J. Amer. Math. Soc.}, 36(3):827--916, 2023.

\bibitem[FU22a]{FU22}
David Fajman and Liam Urban.
\newblock Blow-up of waves on singular spacetimes with generic spatial metrics.
\newblock {\em Lett. Math. Phys.}, 112(2):Paper No. 42, 31, 2022.

\bibitem[FU22b]{Urban22}
David Fajman and Liam Urban.
\newblock Blow-up of waves on singular spacetimes with generic spatial metrics.
\newblock {\em Lett. Math. Phys.}, 112(2):Paper No. 42, 31, 2022.

\bibitem[FW21]{FajWy21}
David Fajman and Zoe Wyatt.
\newblock Attractors of the {E}instein-{K}lein-{G}ordon system.
\newblock {\em Comm. Part. Diff. Eq.}, 46(1):1--30, 2021.

\bibitem[GaNS19]{Girao19}
Pedro~M. Gir\~{a}o, Jos\'{e} Nat\'{a}rio, and Jorge~Drumond Silva.
\newblock Solutions of the wave equation bounded at the big bang.
\newblock {\em Classical Quantum Gravity}, 36(7):075016, 10, 2019.

\bibitem[Haw67]{Hawk67}
S.~W. Hawking.
\newblock The occurrence of singularities in cosmology. iii. {C}ausality and
  {S}ingularities.
\newblock {\em Proceedings of the Royal Society of London. Series A,
  Mathematical and Physical Sciences}, 300(1461):187--201, 1967.

\bibitem[IM02]{IsMon02}
James Isenberg and Vincent Moncrief.
\newblock Asymptotic behaviour in polarized and half-polarized {U{$(1)$}}
  symmetric vacuum spacetimes.
\newblock {\em Classical Quantum Gravity}, 19(21):5361--5386, 2002.

\bibitem[Ino01]{Ino01}
Kaiki~Taro Inoue.
\newblock Numerical study of length spectra and low-lying eigenvalue spectra of
  compact hyperbolic 3-manifolds.
\newblock {\em Classical Quantum Gravity}, 18(4):629--652, 2001.

\bibitem[Kr{\"o}15]{Kroen15}
Klaus Kr{\"o}ncke.
\newblock On the stability of {E}instein manifolds.
\newblock {\em Ann. Global Anal. Geom.}, 47(1):81--98, 2015.

\bibitem[LL22]{LinLip22}
Francesco Lin and Michael Lipnowski.
\newblock The {S}eiberg-{W}itten equations and the length spectrum of
  hyperbolic three-manifolds.
\newblock {\em J. Amer. Math. Soc.}, 35(1):233--293, 2022.

\bibitem[LL24]{LinLip24}
Francesco Lin and Michael Lipnowski.
\newblock Closed geodesics and {F}r{\o}yshov invariants of hyperbolic
  three-manifolds.
\newblock {\em Journal of the European Mathematical Society}, 2024.
\newblock published online first.

\bibitem[Mon08]{Mon08}
Vincent Moncrief.
\newblock Relativistic {T}eichm\"{u}ller theory---a {H}amilton-{J}acobi
  approach to {$2+1$}-dimensional {E}instein gravity.
\newblock In {\em Surveys in differential geometry. {V}ol. {XII}. {G}eometric
  flows}, volume~12 of {\em Surv. Differ. Geom.}, pages 203--249. Int. Press,
  Somerville, MA, 2008.

\bibitem[Mon23]{Mondal20}
Puskar Mondal.
\newblock Big-bang limit of {$2+1$} gravity and {T}hurston boundary of
  {T}eichm\"{u}ller space.
\newblock {\em J. Math. Phys.}, 64(11):Paper No. 112501, 29, 2023.

\bibitem[OGPR23]{GPR23}
Hans Oude~Groeniger, Oliver Petersen, and Hans Ringstr{\"o}m.
\newblock Formation of quiescent big bang singularities, 2023.
\newblock arXiv:2309.11370v1.

\bibitem[O'N83]{ONeill83}
Barrett O'Neill.
\newblock {\em Semi-Riemannian geometry with applications to relativity}.
\newblock Academic Press, 1983.

\bibitem[Pen65]{Pen65}
Roger Penrose.
\newblock Gravitational collapse and space-time singularities.
\newblock {\em Phys. Rev. Lett.}, 14:57--59, Jan 1965.

\bibitem[Ren08]{Rendall08}
Alan~D. Rendall.
\newblock {\em Partial differential equations in general relativity}, volume~16
  of {\em Oxford Graduate Texts in Mathematics}.
\newblock Oxford University Press, Oxford, 2008.

\bibitem[Rin08]{Ring08}
Hans Ringstr\"{o}m.
\newblock Future stability of the {E}instein-non-linear scalar field system.
\newblock {\em Invent. Math.}, 173(1):123--208, 2008.

\bibitem[Rin09]{Ring09}
Hans Ringstr\"{o}m.
\newblock {\em The {C}auchy problem in general relativity}.
\newblock ESI Lectures in Mathematics and Physics. European Mathematical
  Society (EMS), Z\"{u}rich, 2009.

\bibitem[Rin19]{Ring19}
Hans Ringstr\"{o}m.
\newblock A unified approach to the {K}lein-{G}ordon equation on {B}ianchi
  backgrounds.
\newblock {\em Comm. Math. Phys.}, 372(2):599--656, 2019.

\bibitem[Rin20]{Ring21linear}
Hans Ringstr\"{o}m.
\newblock Linear systems of wave equations on cosmological backgrounds with
  convergent asymptotics.
\newblock {\em Ast\'{e}risque}, (420):xi + 510, 2020.

\bibitem[Rin21]{Ring21wave}
Hans Ringstr{\"o}m.
\newblock Wave equations on silent big bang backgrounds, 2021.
\newblock arXiv:2101.04939v2.

\bibitem[RS18a]{Rodnianski2018}
Igor Rodnianski and Jared Speck.
\newblock A regime of linear stability for the {E}instein-scalar field system
  with applications to nonlinear big bang formation.
\newblock {\em Ann. of Math. (2)}, 187(1):65--156, 2018.

\bibitem[RS18b]{Rodnianski2014}
Igor Rodnianski and Jared Speck.
\newblock Stable big bang formation in near-{FLRW} solutions to the
  {E}instein-scalar field and {E}instein-stiff fluid systems.
\newblock {\em Selecta Math. (N.S.)}, 24(5):4293--4459, 2018.

\bibitem[RS22]{RodSp22}
Igor Rodnianski and Jared Speck.
\newblock On the nature of {H}awking's incompleteness for the {E}instein-vacuum
  equations: the regime of moderately spatially anisotropic initial data.
\newblock {\em J. Eur. Math. Soc. (JEMS)}, 24(1):167--263, 2022.

\bibitem[Spe18]{Speck2018}
Jared Speck.
\newblock The maximal development of near-{FLRW} data for the
  {E}instein-{S}calar {F}ield system with spatial topology $\mathbb{S}^3$.
\newblock {\em Comm. Math. Phys.}, 364(3):879–979, Oct 2018.

\bibitem[Wan19]{Wang19}
Jinhua Wang.
\newblock Future stability of the {$1+3$} {M}ilne model for the
  {E}instein-{K}lein-{G}ordon system.
\newblock {\em Classical Quantum Gravity}, 36(22):225010, 65, 2019.

\bibitem[Wan21]{Wang21}
Jinhua Wang.
\newblock Nonlinear wave equation in a cosmological {K}aluza {K}lein spacetime.
\newblock {\em J. Math. Phys.}, 62(6):Paper No. 062504, 15, 2021.

\end{thebibliography}

\end{document}